\documentclass[11pt,twoside,a4paper,openany]{memoir}
\maxsecnumdepth{subparagraph}
\setsecnumdepth{subsubsection}

\usepackage[T1]{fontenc}
\usepackage{eucal}
\usepackage{microtype}
\usepackage{csquotes}
\usepackage{amsthm}
\usepackage{mathtools}
\usepackage{amssymb}
\usepackage{bm}
\usepackage{cancel}

\usepackage{tikz}
\usetikzlibrary{cd}
\usetikzlibrary{calc,patterns,angles,quotes,positioning,fit,backgrounds,automata}

\usepackage[backend=biber,style=alphabetic,sorting=nyt,maxbibnames=10]{biblatex}
\bibstyle{apsrev}
\usepackage{enumitem}
\setlist[enumerate,1]{label={\roman*)}}
\usepackage[unicode=true, pdfusetitle, colorlinks=true, allcolors=blue, bookmarks=true]{hyperref}
\usepackage[left=2cm,right=2cm,top=2cm,bottom=2cm,headheight=15pt]{geometry}
\usepackage{pdfpages}
\usepackage{afterpage}
\usepackage[export]{adjustbox}
\usepackage{longtable}   
\usepackage{booktabs}   
\usepackage{array}
\usepackage{pdflscape}

\usepackage{stmaryrd}
\SetSymbolFont{stmry}{bold}{U}{stmry}{m}{n}
\usepackage{lmodern}
\usepackage[scaled=0.95]{zi4}

\usepackage{silence}
\theoremstyle{plain}
\newtheorem{theorem}{Theorem}[section]
\newtheorem*{theorem*}{Theorem}
\newtheorem{lemma}[theorem]{Lemma}
\newtheorem*{lemma*}{Lemma}
\newtheorem{proposition}[theorem]{Proposition}
\newtheorem*{proposition*}{Proposition}
\newtheorem{corollary}[theorem]{Corollary}
\newtheorem*{corollary*}{Corollary}
\theoremstyle{definition}
\newtheorem{definition}{Definition}[section]
\newtheorem*{definition*}{Definition}

\newtheorem*{example*}{Example}
\theoremstyle{remark}
\newtheorem{remark}{Remark}[section]
\newtheorem*{remark*}{Remark}

\newtheorem*{conjecture*}{Conjecture}

\newtheorem*{problem*}{Problem}

\newcommand*{\NN}{\mathbb{N}}
\newcommand*{\ZZ}{\mathbb{Z}}

\newcommand*{\RR}{\mathbb{R}}
\newcommand*{\CC}{\mathbb{C}}
\newcommand*{\A}{\mathbb{A}}
\newcommand*{\F}{\mathbb{F}}

\newcommand*{\tX}{\tilde{X}}
\newcommand*{\tY}{\tilde{Y}}
\newcommand*{\tZ}{\tilde{Z}}

\newcommand*{\dd}{\mathrm{d}}
\newcommand*{\hh}{\mathbf{h}}

\newcommand*{\parder}[2]{\frac{\partial#1}{\partial #2}}
\newcommand*{\der}[2]{\frac{\dd #1}{\dd #2}}
\DeclareMathOperator{\Leg}{Leg}

\newcommand{\leg}{\operatorname{leg}}
\newcommand{\Ad}{\operatorname{Ad}}
\newcommand{\ad}{\operatorname{ad}}
\newcommand{\Aut}{\operatorname{Aut}}

\newcommand{\Gau}{\operatorname{Gau}}
\newcommand{\gau}{\mathfrak{gau}}
\newcommand{\Diff}{\operatorname{Diff}}
\newcommand{\Conj}{\operatorname{Conj}}

\let\emphoriginal\emph

\let\emph\textbf
\let\vec\mathbf
\mathtoolsset{showonlyrefs}

\bibliography{biblio}

\AtEveryBibitem{
  \ifentrytype{online}{}{
  \ifentrytype{misc}{}{
   \ifentrytype{unpublished}{}{
    \clearfield{url}
  }}}
}

\DeclareSourcemap{
  \maps[datatype=bibtex]{
    \map[overwrite=true]{
      \step[fieldset=urldate, null]
      \step[fieldset=language, null]
      \step[fieldset=address, null]
     \step[fieldset=addendum, null]
     \step[fieldset=pagetotal, null]
     \step[fieldset=location, null]
    }
  }
}

\title{Yang--Mills Theory in the Multisymplectic Framework\\
\Large Master's Thesis\\
\large Master in Theoretical Physics\\
Autonomous University of Madrid and\\Institute of Theoretical Physics (CSIC-UAM)}

\author{Author: David Villa-Blanco
\\Supervisor: Manuel de León (ICMAT-CSIC)
\\With the assistance of: Rubén Izquierdo-López (ICMAT-CSIC)}
\date{\today}

\begin{document}
\pagenumbering{Alph}
\frontmatter
\maketitle
\thispagestyle{empty}
\newpage
\maxtocdepth{subsubsection}
\settocdepth{subsubsection}
\tableofcontents
\chapter*{Abstract}
\addcontentsline{toc}{chapter}{Abstract}
This Master’s thesis develops and applies the multisymplectic field-theory framework to
(pure) Yang–Mills theory. Building from the geometric foundations of jet bundles
and the Poincaré--Cartan forms, the work constructs the Lagrangian multisymplectic formalism
for first-order field theories, recovering the usual Euler--Lagrange field equations and
extending the Poisson brackets of symplectic mechanics to a full graded Lie algebra. It then establishes the Legendre
correspondence with the multisymplectic Hamiltonian formalism (in which Hamilton equations
are retrieved) including the case of singular Lagrangians (where a constraint algorithm
is reproduced). Breaking Lorentz covariance via a Cauchy surface slicing, the thesis then relates the
covariant multisymplectic picture to an instantaneous presymplectic (infinite-dimensional)
formulation in which the instantaneous De Donder equations and presymplectic
brackets are obtained in a compatible way. A comprehensive treatment of
symmetries then leads to multisymplectic and instantaneous momentum maps generalizations and
to several versions of Noether’s theorems. 

Specializing the formalism to
principal-bundle gauge theory, the thesis identifies the multi\-symplectic configuration space with the bundle
of connections, gauge symmetries with symmetries in the multi\-symplectic language and builds
the Yang--Mills Lagrangian. Lastly, among many computations, the standard tensorial Yang--Mills
equations and the Bianchi identity in local trivializations are retrieved, Yang--Mills conserved
currents and charges are obtained and the graded algebra is derived for a special class
of Hamiltonian forms.

\chapter*{Resumen}
\addcontentsline{toc}{chapter}{Resumen}
Este trabajo de fin de máster desarrolla y aplica el formalismo multisimpléctico de 
teorías de campos a la teoría de Yang--Mills (pura). A partir de los fundamentos
geométricos de los fibrados de jets y de las formas de Poincaré--Cartan, el trabajo
construye el formalismo lagrangiano multisimpléctico para teorías de campos de primer
orden, recuperando las ecuaciones de campo
de Euler--Lagrange habituales y extendiendo los corchetes de Poisson de la mecánica simpléctica
a un álgebra de Lie graduada completa. A continuación, establece la correspondencia de Legendre con el formalismo hamiltoniano
multisimpléctico -en el que se recuperan las ecuaciones hamiltonianas-,
incluyendo el caso de lagrangianos singulares (donde se reproduce un algoritmo de ligaduras).

Al romper la covarianza de Lorentz mediante un foliado por superficies de Cauchy, la tesis
relaciona la descripción multisimpléctica covariante con una formulación presimpléctica
instantánea (de dimensión infinita) en la que se obtienen las ecuaciones instantáneas de
De Donder y los corchetes instantáneos presimplécticos de manera compatible. Un tratamiento
exhaustivo de las simetrías conduce a una generalización de las aplicaciones momento en los marcos
multisimpléctico e instantáneo y a diferentes versiones de los teoremas de Noether.

Al especializar el formalismo a la teoría de gauge en fibrados principales,
la tesis identifica el espacio de configuración multisimpléctico con el fibrado de conexiones,
las simetrías de gauge con simetrías en el formalismo multisimpléctico, y construye el lagrangiano
de Yang--Mills. Por último, entre numerosos cálculos, se recuperan las ecuaciones tensoriales
habituales de Yang--Mills y la identidad de Bianchi en trivializaciones locales, se obtienen
las corrientes y cargas conservadas de Yang--Mills y se deriva el álgebra graduada para una
clase especial de formas hamiltonianas.

\let\cleardoublepage\clearpage
\mainmatter
\chapter{Introduction}\label{chap:introduction}
There is no doubt that \textbf{symplectic geometry} plays
a fundamental role in modern classical mechanics.
The phase space, identified with the cotangent bundle of the configuration manifold,
is naturally endowed with a symplectic structure and symmetries are canonically
described as Lie group actions that preserve the symplectic form (also called symplectic 
actions) \cite{pelayoHamiltonianSymplecticSymmetries2016}; Hamilton equations are integral curves
of specific vector fields on this symplectic manifold while many
standard results in classical mechanics (Noether's theorems and
conserved quantities, reduction by symmetries, Hamilton-Jacobi theory,
Liouville and Arnold-Liouville theorems...) or essential tools like
Poisson brackets are obtained in a canonical (coordinate-free) and elegant way
\cite{leonGradedPoissonGraded2025,leonDescriptionClassicalField2025}.
Fundamentally for the motivation of this thesis, symplectic mechanics has not only
recovered already known results in a geometric (rather than analytic) way, but has
also given \textbf{new insights} like coisotropic reduction, coisotropic regularization,
study of stability, geometric integrators, symplectic reduction, 
generalization of Noether theorems
\cite{izquierdo-lopezZooCoisotropicEmbeddings2025} and momentum maps,
differential-topological methods in the study of the phase portrait
\cite{abrahamFoundationsMechanics2008}, KAM theory (see 
\cite{fejoz13IntroductionKAM2016} for applications to the spatial three
body problem and celestial mechanics in general), Darboux theorem
\cite{Darboux1882}, Lagrangian submanifolds (see for example 
\cite{weinsteinSymplecticGeometry1981}), Gromov's non-squeezing theorem 
\cite{gromovPseudoHolomorphicCurves1985}, the highly interesting
-from the theoretical physics point of view- procedure of geometric
quantization (which could maybe serve as the natural bridge between the 
geometric view of gravity \cite{einsteinGrundlageAllgemeinenRelativitatstheorie1916}
and quantum mechanics) among many others.

Nobody either doubts about the success of \textbf{field theory} in modern theoretical
physics. The need for fields is clear once we note some crucial physical restrictions
like causality.
First, the laws of classical mechanics are instantaneous (even in the symplectic framework):
for example, if our Sun disappears right now, Newton 
would strongly argue all the planets stop immediately following their orbits
around it.
Clearly this would completely break causality and special relativity 
\cite{einsteinZurAllgemeinenMolekularen2005}. Hence, the need for local
laws of nature is clear, and indeed we have examples in the history of physics of such 
generalizations, like Maxwell (and Faraday) and Einstein did with the theories of
Coulomb and Newton respectively. In these field theories the interactions are
mediated by the respective \textit{gauge} fields, which evolve locally, i.e., field equations at a point
must only depend on the behaviour of the fields and their derivatives
at that given point (at least in first order theories).
Second, taking into account the quantum nature of physics at high energies (equivalently,
short scales) and noticing the existence of particle annihilation, particle-antiparticle
pairs..., relativistic quantum mechanics in which the number of particles is fixed is not
enough. Even the fact that a neutrino produced at a supernova $6.4$ $Mpc$ away 
\cite{guettaLowHighEnergy2023} is exactly identical to another one produced at some proton-proton
collision at CERN's LHC, hints at the existence of a `primordial'
field permeating the whole spacetime (in this case the neutrino's not yet fully understood
field). The theory resolving this second problem is quantum field theory (QFT) -in which
particles are treated as excitations of fields that extend throughout spacetime-
and even though its mathematical framework is still not fully rigorously deployed
(and famously though; note the Clay Mathematics Institute's Millennium Prize Problem about
the Yang--Mills existence and the mass gap. An attempt to find the essential features of
QFT to account in the axiomatization can be seen in \cite{wilczekQuantumFieldTheory1999}),
its astonishing degree of accuracy in its predictions makes undeniable the need for a rigorous
and geometric framework which could lead field theory to similar generalizations and insights
as the ones symplectic geometry has given to classical mechanics in the 1970s-1980s.

Lastly, undeniable also is the fact that multisymplectic field theory has not (yet)
permeated theoretical physics' literature at the same level as symplectic mechanics has. 
With this work hence \textbf{we aim towards a familiarization of theoretical physicists with the 
multisymplectic framework} -expecting to see an efflorescence of papers built around it- showing
its (relative) simplicity, elegance and enormously broad range of applications (covering all Lie
group theory, higher-order theories in a natural generalization (c.f 
\cite{leonGeneralizedClassicalMechanics1985})...) as its main `selling points'. We apply
the framework to (pure) Yang--Mills theory in particular due to the importance of this
theory in modern particle physics.

Regarding the structure of the thesis, we start Chapter \ref{chap:multisymplectic} by
defining the multisymplectic formalism for field theories as a
\textbf{generalization of the symplectic formalism for classical mechanics}. 
With this in mind, the canonical first jet bundle $J^1\pi$ is a generalization of the usual
tangent bundle -in which we recover a geometric equivalence to variational calculus and with it
Euler--Lagrange field equations- while the dual first jet bundle $(J^1\pi)^*$ defined in the
Hamiltonian setting, is a generalization of the cotangent bundle. In this last bundle -and upon
the natural generalization of the definition of a symplectic form in symplectic mechanics-
we find a canonical multisymplectic form upon which are found Hamilton equations.
For regular Lagrangians these bundles are locally diffeomorphic through the Legendre map and 
their main equations are proven to be equivalent.

After this, in Chapter \ref{chap:instantaneous_formalism}, we break Lorentz covariance in order to tackle
the \textbf{initial value problem}. To do this we introduce Cauchy surfaces, changing descriptions
from finite-dimensional (multisymplectic) to infinite-dimensional (instantaneous, or covariant)
presymplectic through integration along a Cauchy surface, which allow us to make a \textbf{time-space
decomposition} upon slicing.

Chapter \ref{chap:symmetries_and_Noether_theorems} deals with the multisymplectic formulation
of general symmetries and Noether's theorems for the Lagrangian, Hamiltonian and instantaneous
settings, finding both conserved currents and charges for the underneath physical theory. We also
introduce the multisymplectic and instantaneous momentum map as a generalization of the momentum map
that arises in symplectic mechanics in the context of gauge symmetries.

In Chapter \ref{chap:gauge_theory} we introduce principal fiber bundle gauge theory and explicitly
fine-tune our field configuration space $Y$ in a way such that there is an identification among
multisymplectic fields and geometric connections of the gauge theory (which are indeed our gauge fields).
Gauge symmetries are then promoted naturally to the multisymplectic symmetries of Chapter
\ref{chap:symmetries_and_Noether_theorems}.

Lastly, in Chapter \ref{chap:multisymplectic_yang_mills_computations} we
introduce multisymplectic Yang--Mills field theory through the usual Yang--Mills Lagrangian
and perform some computations once a local trivialization is chosen; we recover the
actual tensorial equations used by theoretical physicist on a daily basis such as
the Yang--Mills equation, the Bianchi identity... and as an original contribution not found
in the literature, we compute the Lie algebra of
the multisymplectic momentum map and show the graded Lie algebra structure of a special
class of Hamiltonian forms that resemble the gauge vector potential and field strength
for Yang--Mills theory. 
We close this last chapter with some words on further work not covered on this
thesis.

\chapter{Multisymplectic framework}\label{chap:multisymplectic}
Let us start with some preliminary definitions
\begin{definition}[Multisymplectic forms and multisymplectic manifold]
\label{def:multisymplectic_forms_and_manifold}
A \textbf{multi\-symplectic form} $\Omega$ on a manifold $M$ is a closed (i.e. $\dd
\Omega=0$) $k-$form with $k>1$ having the following non-degeneracy property
\[
	\iota_v\Omega=0\iff v=0,\quad\forall v\in T_xM,\,\forall x\in M.
\]
A \textbf{multisymplectic manifold} is a manifold equipped with a multisymplectic
form.

Note that for $k=2$ we recover the definition of symplectic forms and symplectic manifolds.
A closed $2-$form (without the non-degeneracy condition) is called \textbf{presymplectic}.
\end{definition}

To `apply' this generic definition we first need to build our manifold
framework, and in order to meet the main objective of this work
(and not only), i.e. Yang--Mills, this framework has to be suitable
and specifically designed for \textbf{first-order theories}. Namely, in
the case of the Lagrangian description, theories in which the
Lagrangian depends only up to the first derivatives of our field
$L\left(x^\mu,\phi^i,\partial_\mu\phi^i\right)$. The canonical
framework will be a set of fiber bundles (such that the fields
configuration and `velocities' are properly defined for each point
of spacetime) which are formed by the spacetime $X$, the
configuration space $Y$ (which form the configuration bundle
$\pi:Y\to X$) and the first-order bundle $Z\equiv J^1\pi$, which
we define in the next section.

\section{Lagrangian setting}\label{sect:lagrangian_setting}

Let us start defining our working spaces. First, for the `background' space 
$X$, we consider an orientable $(n+1)-$dimensional manifold (potentially with 
boundary $\partial X$) in which we consider a set of coordinates denoted
by $(x^\mu)$ and define a volume form $\eta:=\dd^{n+1}x=\dd x^0\wedge
\dd x^1\wedge\dots\wedge\dd x^n$. 
Then, for our field configuration space $Y$, we consider an 
$(n+1+m)-$dimensional manifold such that $\pi_{YX}\equiv\pi:Y\to X$ is a 
fiber bundle and so we can consider in this setting the fiberwise coordinates 
$(x^\mu,y^i)$ on $Y$ satisfying $\pi_{YX}(x^\mu,y^i)=(x^\mu)$ ($\mu\in\{0,\dots,n\}$), 
$i\in\{1,\dots,m\}$. As a useful notation, we introduce 

\[
	\dd^n x_\mu:=\iota_{\partial/\partial x^\mu}\dd^{n+1}x=(-1)^\mu \dd x^0\wedge\dots
	\wedge\widehat{\dd x^\mu}\wedge\dots \dd x^n,
\]
\[
	\dd^{n-1}x_{\mu\nu}:=
	\iota_{\partial/\partial x^\nu}\iota_{\partial/\partial x^\mu}\dd^{n+1}x\dots
\] 

We denote as $\Gamma(\pi_{YX}):=\{\phi:X\to Y\,:\, \pi_{YX}\circ\phi=
\mathrm{id}_X\}$ the space of sections $\phi$ of $\pi_{YX}$ and
consider for each point $x\in X$ the equivalence relation between two sections
$\phi,\psi\in\Gamma(\pi_{YX})$ defined by
\[
	\phi\sim\psi \iff \phi(x)=\psi(x),\, d_x\phi=d_x\psi,
\]
where $d_x\phi$ denotes the pushforward (or tangent) map on the base point $x$ for the
section $\phi$, which in this case is nothing more than ($\phi^i\equiv
y^i\circ\phi$)
\[
\begin{aligned}
d_x\phi \colon T_x X &\longrightarrow T_{\phi(x)} Y \\
v\equiv v^\mu(x) \frac{\partial}{\partial x^\mu}\bigg|_x &\longmapsto d_x\phi(v)=
v^\mu(x)\left(\frac{\partial}{\partial x^\mu}\bigg|_{\phi(x)}+
\partial_\mu\phi^i(x) \frac{\partial}{\partial y^i}\bigg|_{\phi(x)} \right).
\end{aligned}
\]
This pushforward map will be indistinctly denoted by $d_x\phi$, $T_x\phi$ and
$\phi_{*,x}$. 

With this equivalence relation we consider our last (for the moment) important
space definition, which generalizes the velocity space (not phase space) in the
usual symplectic framework of classical mechanics; this is the bundle which
encodes field's first (order) derivatives. It is not very difficult to
show that the quotient $\Gamma(\pi_{YX})/\sim$ -whose elements will be denoted
by $j^1_x\phi$ for $\phi\in\Gamma(\pi)$, i.e. the equivalence class of $\phi$ at
$x$ made of every other section $\psi$ satisfying that $d_x\phi$ and $d_x\psi$
are exactly the same map- can be endowed with a manifold
structure, so that we obtain the following
\begin{definition}[First order jet bundle]
	Using the equivalence relation defined above, the manifold 
	$Z\equiv J^1\pi:=\Gamma(\pi_{YX})/\sim\,\equiv\{j^1_x\phi\,|\, x\in X,
	\phi\in\Gamma(\pi)\}$ is called the 
	\textbf{first-order jet bundle}. See \cite{saundersGeometryJetBundles1989}
	for a complete analysis of jet bundles.
\end{definition}
$Z$ fibers canonically over $X$ and $Y$ considering the natural
projections $\pi_{ZX}$ and $\pi_{ZY}$ taking a point $j^1_x\phi\in Z$ to the
base point $x$ and the field value $\phi(x)$ respectively; hence, we can 
consider on $Z$ the coordinates $(x^\mu, y^i, z^i_\mu)$ acting as
\begin{align*}
x^\mu(j^1_x\phi)&=x^\mu(x),\\
y^i(j^1_x\phi)&=\phi^i(x),\\
z^i_\mu(j^1_x\phi)&=\partial_\mu\phi^i(x).
\end{align*} 
Using this coordinate chart, it is immediate that $\dim Z = \left(n+1+m+
(n+1)m\right)$

If needed, to implement \textbf{boundary conditions}, denoting 
$\partial Y\equiv\pi^{-1}(\partial X)$ and $\partial Z\equiv\pi_{ZX}
^{-1}(\partial X)$, we consider a subbundle $B\subseteq\partial Z$ of 
$\partial Z\to\partial X$ (i.e. $B\to\partial X$ is a fiber bundle) and, for 
whichever field equations we obtain, our solutions $\psi$ are
required to also satisfy the condition $j^1\psi(\partial X)\subset
B$, where $j^1\phi\in\Gamma(\pi_{ZX})$ are canonically defined as
$j^1\psi(x)=j^1_x\psi$.

A diagrammatic summary of the spaces involved up now is shown in Figure 
\ref{fig:lagrangian_setting}

\begin{figure}[htb]
  \centering
  \includegraphics[width=.7\textwidth]
  {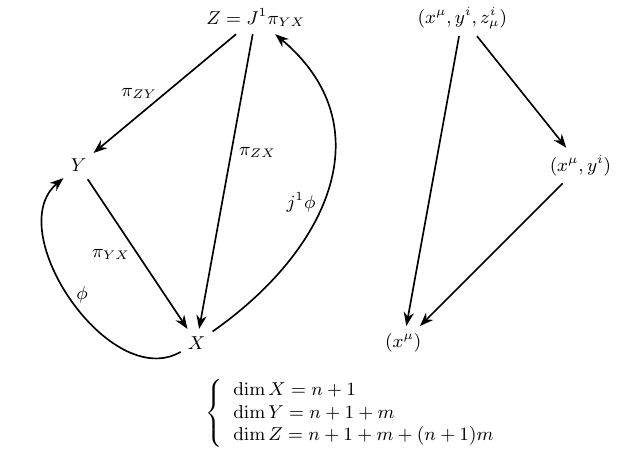}
  \caption{Involved fibrations in the Lagrangian setting.}
  \label{fig:lagrangian_setting}
\end{figure}

Although we stopped at the first-order jet bundle, we could have continued to
higher-order jets, for example the $\left(n+1+m+(n+1)m+\binom{n+2}{2}m\right)$-
dimensional second order jet bundle $J^2\pi$ with coordinates $(x^\mu,y^i,
z^i_\mu,z^i_{\mu\nu})$, where $z^i_{\mu\nu}(j^2_x\phi)=\frac{\partial^2
\phi^i}{\partial x^\mu\partial x^\nu}\bigg|_x$. This higher-order jets would
allow us to cover also higher-order theories where the Lagrangian could depend
on derivatives of order higher than one, i.e., $L\left(x^\mu,
\phi^i,\partial_\mu\phi^i,\partial_\mu\partial_\nu\phi^i,\dots\right)$.

Even though these generalizations will not be used as the manifold where our
Lagrangian will be defined, these higher-order
bundles allow us to define a total derivative
\[
	\der{}{x^\mu}=\frac{\partial}{\partial x^\mu}+z^i_\mu\frac{\partial}
	{\partial y^i}+z^i_{\mu\nu}\frac{\partial}{\partial z^i_\nu}+\dots
\]
which will be important in our Lagrangian setting.

\subsection{The vertical endomorphism \texorpdfstring{$S_\eta$}{S-eta}}
\label{subsect:the_vertical_endomorphism}

In order to fully complete the generalization of the symplectic structure to
field theories, we need to introduce the canonical structures appearing in these
brand new jet fiber bundles. In particular, in this subsection we define the
so-called vertical endomorphism.

Let us start considering two sections 
$\phi,\psi$ such that $\phi(x)=\psi(x)$ but having not equivalent jet prolongations;
by definition of these sections, both for $\phi$ and $\psi$ we have that 
$\dd\pi_{\phi(x)}(d_x\phi(v))=v$ $\forall v\in T_xX$ and so if we denote
$f_x:=d_x\phi-d_x\psi\neq 0$ we find that
\[
	\dd\pi_{\phi(x)}(f_x(v))=v-v=0,\quad\forall v\in T_xX
\]
which by definition means
$f_x(v)\in\mathcal{V}_{\phi(x)}\pi:=\mathrm{ker}\,\dd\pi_{\phi(x)}\subseteq
T_{\phi(x)}Y$, or equivalently, $f_x\in T^*_xX\otimes\mathcal{V}_{\phi(x)}\pi$
(see \cite{leonMethodsDifferentialGeometry1989}). In components $f_x=(\partial_\mu\phi^i-
\partial_\mu\psi^i)\left[\dd x^\mu|_x\otimes\frac{\partial}{\partial y^i}\big|_
{\phi(x)}\right]$. If we consider then the two points $j^1_x\phi,j^1_x\psi$
in $Z$ and define a point `difference' (in components) as
\begin{equation}
	\label{eq:affine_subtraction}
	j^1_x\phi-j^1_x\psi:=(0,0,\partial_\mu\phi^i-\partial_\mu\psi^i) 
\end{equation}
there is a clear components identification $j^1_x\phi-j^1_x\psi\equiv f_x$.

These remarks were made without any preferred pair of points $j^1_x\phi,
j^1_x\psi\in Z$, reminiscing a lack 
of privileged origin point, which is a defining characteristic of affine spaces.
Indeed the bundle $\pi_{ZY}:Z\to Y$ can be view as an affine bundle modelled
over the vector bundle $\pi^*(T^*X)\otimes_Y\mathcal{V}\pi\to Y$ where
$\pi^*(T^*X)$ denotes the pullback bundle over $Y$ w.r.t. the cotangent
fibration $T^*X\to X$. For completeness, let us recall what the pullback
of a fiber bundle is
\begin{definition}[Pullback bundle]
	\label{def:pullback-bundle}
	Let $\pi:E\to B$ be a general fiber bundle and $f:B^\prime\to B$ be
	a smooth map. We then define the \textbf{pullback bundle through}
	$\bm{f}$ as the projection $\pi^\prime:f^*E\to B^\prime$ where
	\[
		f^*E:=\{(b^\prime,e)\in B^\prime\times E\,|\, f(b^\prime)=\pi(e)\}
		\subseteq B^\prime\times E
	\]
	is equipped with the subspace topology of the product topology of $B^\prime
	\times E$ and where $\pi^\prime$ is the canonical projection onto the first
	factor. $f^*E$ is then a differentiable manifold which fibers to $B^\prime$
	through $\pi^\prime$
\end{definition}

And although we have arrived at this insight quite abruptly, the details can 
be checked on \cite{gotayMomentumMapsClassical2004,saundersAlternativeApproachCartan1987,
saundersGeometryJetBundles1989}. This affine bundle construction is
built in such a way that the $Z$ fibers over $Y$, $Z_y:=\pi_{ZY}^{-1}(y)$ are 
affine spaces modelled over the fibers $(\pi^*T^*X\otimes_Y\mathcal{V}\pi)_y=
T^*_{\pi(y)}X\otimes\mathcal{V}_y\pi$ (a vector space), giving an affine sense
to the `subtraction' defined on Equation \eqref{eq:affine_subtraction}, or
equivalently,
\[
	j^1_x\psi=j^1_x\phi+f_x
\]
as an affine sum. By construction, both points have the same $x$ and $y$ local
coordinates (same base point and field value), so any vector
$f_x\in(\pi^*T^*X\otimes_Y\mathcal{V}\pi)|_{\phi(x)}=
T^*_xX\otimes\mathcal{V}_{\phi(x)}\pi$ must point 
in the Z fiber direction (in the sense of this vector sum changing only the
$z$-component of points on $Z$).

Taking the derivative at $t=0$ of the curve $\gamma_{f_x}:\RR\to\pi^{-1}_{ZY}
(\pi_{ZY}(j^1_x\phi))$ defined by $\gamma_{f_x}(t)=j^1_x\phi+tf_x$
defines a canonical isomorphism $v:\pi^*T^*X\otimes_Y\mathcal{V}\pi\to
\mathcal{V}\pi_{ZY}$, called the \textbf{vertical lift}. In coordinates
it takes the following expression:
\[
	f_x^v:=f_{x,\mu}^i\frac{\partial}{\partial z^i_\mu}\bigg|_{\phi(x)},\quad\text{
		for}\, f_x=f_{x,\mu}^i\dd x^\mu|_x\otimes\frac{\partial}{\partial y^i}\bigg|_{\phi(x)}.
\]

With all this machinery, we are ready to define the vertical endomorphism.
First, if we consider $n+1$ vectors $V_0,\dots,V_n\in T_{j^1_x\phi}Z$
we note that for any of them $T_{j^1_x\phi}\pi_{ZY}(V_i)-T_x\phi\circ 
T_{j^1_x\phi}\pi_{ZX}(V_i)\in\mathcal{V}_{\phi(x)}\pi$
(cf. Appendix \ref{app:proof_of_belonging}).

Second, from the volume form $\eta$ we consider the family of covectors 
$\eta_i(x)\in T^*_xX$ for each $x\in X, i\in\{0,\dots,n\}$
defined as
\[
	\eta_i(x)=(-1)^{n+1-i}\iota_{T_{j^1_x\phi}\pi_{ZX}(V_0)}\dots\widehat{\iota_{T_{j^1_x\phi}
	\pi_{ZX}(V_i)}}\dots\iota_{T_{j^1_x\phi}\pi_{ZX}(V_n)}\,\eta(x).
\]
Finally,
\begin{definition}[Vertical endomorphism]
	We define the \textbf{vertical endomorphism} $S_\eta$ for our volume form
	$\eta$ as the (1,n+1)-rank tensor field defined on each point $z\in Z$, $S_\eta|_z\in T_z^*Z\otimes\overset{n)}{\dots}\otimes
	T_z^*Z\otimes\mathcal{V}_z\pi_{ZY}$ by
	\[
		(S_\eta)_{j^1_x\phi}(V_0,\dots,V_n)=\sum_{i=0}^n[\eta_i(x)\otimes
		T_{j^1_x\phi}\pi_{ZY}(V_i)-T_x\phi\circ T_{j^1_x\phi}\pi_{ZX}(V_i)]^v.
	\]	
	In coordinates,
	\[
		S_\eta=(\dd y^i-z^i_\mu\dd x^\mu)\wedge \dd^n x_\nu\otimes\frac{\partial}
		{\partial z^i_\nu}\equiv\theta^i\wedge \dd^n x_\nu\otimes\frac{\partial}
		{\partial z^i_\nu}
	\]
	where the $\theta^i:=\dd y^i-z^i_\mu\dd x^\mu$ are the so-called \textbf{contact forms} (cf. Subsection
	\ref{subsect:jet_prolongation_of_vector_fields} and \cite{saundersGeometryJetBundles1989}
	for the concept of the contact structure).
\end{definition}

\subsection{Jet prolongation of vector fields}
\label{subsect:jet_prolongation_of_vector_fields}	
Since the Lagrangian will be an object on $J^1\pi$ and our physical
field solutions will be sections $\phi\in\Gamma(\pi)$
of our \textbf{configuration bundle} $\pi:Y\to X$, we are interested
in filtering the whole space of sections $\sigma\in\Gamma(\pi_{ZX})$
of the jet bundle to only those of the form $\sigma=j^1\phi$ (called
\textbf{holonomic sections}), i.e. coming from an actual solution.

It turns out that these holonomic sections are characterized by the
contact forms $\theta^i$ previously defined (we can give a more
intrinsic definition of a contact form as follows: $(j^1\phi)^*
\theta=0,\forall\phi\in\Gamma(\pi)$ ), or more generally characterized
by the exterior algebraic ideal $\mathcal{C}:=\{\sum_{i=1}^m\alpha^i\wedge
\theta^i\,:\, (j^1\phi)^*\theta^i=0, \forall\phi\in\Gamma(\pi),\, 
\alpha^i\in\Omega(Z)\}$ generated by them.
Note that in the case of $\phi$ being a solution
of the soon to be defined Euler-Lagrange equations, $(j^1\phi)^*
\theta=0$ is nothing more than the usual sense of a vanishing 
\textbf{on-shell} condition in physics.

In order to characterize these contact forms, we have to first introduce
the following definition
\begin{definition}[Distributions and Cartan distribution]
	Let $M$ be a $m-$dimensional manifold. A $\bm{k}$
	\textbf{-dimensional distribution} $D$ on $M$ is a choice
	of a $k-$dimensional subspace $D(x)$ of $T_xM$ for every
	$x\in M$. We say that a vector field $X\in\mathfrak{X}(M)$
	\textbf{lies in the distribution} if $X(x)\in D(x)$ $\forall
	x\in M$. The distribution is called \textbf{involutive} if
	$[X,Y]$ lies in the distribution for all $X,Y\in\mathfrak{X}(M)$
	lying on the distribution. We also say that an immersed submanifold
	$\phi:N\subseteq M$ is an \textbf{integral manifold of} $\bm{D}$ if
	\[
		\dd_y\phi(T_yN)\subseteq D(\phi(y)),\quad\forall y\in N.
	\]
	A distribution is \textbf{completely integrable} if for every point $x\in M$
	there exists an integral manifold of the distribution of maximal dimension.
	
	In the case of the manifold $Z$, we define
	the \textbf{Cartan distribution} $\mathcal{D_C}$ as the
	distribution on $Z$ satisfying
	\[
		\iota_V\theta^i(z)=0,\quad \forall z\in Z,\forall V\in
		\mathcal{D_C}(z),
	\] 
	i.e. the annihilator distribution of the contact forms.
\end{definition}
A central result of distributions is the following theorem
\begin{theorem}[Frobenius theorem]
	\label{thm:Frobenius_theorem}
	A distribution is completely integrable if and only if it is involutive.
\end{theorem}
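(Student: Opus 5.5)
The easy direction comes first. Suppose $D$ is completely integrable and let $X,Y$ be vector fields lying in $D$. Near any point $x$ pick an integral manifold $\phi\colon N\to M$ of maximal dimension $k$ through $x$. Then $\dd_y\phi(T_yN)=D(\phi(y))$, so $X$ and $Y$ restricted to $\phi(N)$ are tangent to it. Locally they are $\phi$-related to vector fields $\tilde X,\tilde Y$ on $N$, and naturality of the bracket gives $[X,Y]=\phi_*[\tilde X,\tilde Y]$ along $\phi(N)$. Hence $[X,Y](x)\in D(x)$. Since $x$ was arbitrary, $D$ is involutive. The one subtlety is that integral manifolds through each point are only immersed, so we restrict to a small embedded piece around $x$.

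For the converse, assume $D$ is involutive and smooth of rank $k$. The plan is the classical local construction.

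First, choose coordinates $(x^1,\dots,x^m)$ around $p$ such that $D(p)$ is transverse to the coordinate subspace spanned by $\partial/\partial x^{k+1},\dots,\partial/\partial x^m$. The projection onto the first $k$ coordinates is then an isomorphism on $D$ near $p$. This gives a unique local frame of $D$ of the form
\[
X_a=\frac{\partial}{\partial x^a}+\sum_{b>k} c_a^b\,\frac{\partial}{\partial x^b},\qquad a=1,\dots,k.
\]

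Second, compute $[X_a,X_c]$. It has no $\partial/\partial x^1,\dots,\partial/\partial x^k$ components. Involutivity says it is a combination $\sum_e f^e X_e$. Comparing the first $k$ components forces every $f^e=0$, so $[X_a,X_c]=0$ and the frame commutes.

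Third, let $\Phi^a_t$ be the flows of the $X_a$. Define
\[
F(t^1,\dots,t^m)=\Phi^1_{t^1}\circ\dots\circ\Phi^k_{t^k}\bigl(p+(0,\dots,0,t^{k+1},\dots,t^m)\bigr).
\]
Because the flows commute, $\partial F/\partial t^a=X_a\circ F$ for every $a\le k$. The differential of $F$ at $0$ is invertible, by transversality, so $F$ is a local diffeomorphism. The slices $t^{k+1},\dots,t^m=\text{const}$ are then $k$-dimensional integral manifolds, and one of them passes through $p$. This gives complete integrability.

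I expect the main obstacle to be the third step. Commuting flows give commuting compositions only where all the flows are defined, so we have to shrink domains carefully. The usual lemma that $[X,Y]=0$ implies $\Phi^X_s\circ\Phi^Y_t=\Phi^Y_t\circ\Phi^X_s$ is assumed as standard; it is proved by showing that $(\Phi^X_s)_*Y=Y$ via the Lie derivative. Everything else is routine linear algebra.
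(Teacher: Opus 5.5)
Your proof is correct. It is the standard textbook argument: an adapted local frame $X_a=\partial_a+c_a^b\partial_b$, which involutivity forces to commute, followed by the flow-box map $F$. The paper gives no proof of its own here, only a pointer to \cite{leonMethodsDifferentialGeometry1989}, so there is no different route to compare against.

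Two details are worth stating explicitly.

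\textbf{Smoothness.} You are right to assume $D$ smooth. The paper's definition of a distribution is purely pointwise, and without smoothness the equivalence is false: a pointwise field of subspaces can contain no nonzero smooth vector field, hence be vacuously involutive, and still fail to be integrable.

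\textbf{Localizing involutivity.} The paper defines involutivity using global fields $X,Y\in\mathfrak{X}(M)$, so your second step needs the usual bump-function localization. Replace $X_a$ by $\psi X_a$, with $\psi$ supported in the chart and $\psi\equiv 1$ near $p$, so that $[\psi X_a,\psi X_c]=[X_a,X_c]$ near $p$.
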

\begin{proof}
	See \cite{leonMethodsDifferentialGeometry1989}.
\end{proof}

\begin{proposition}\label{prop:holonomicity_iff_cartan}
	Let	$\sigma\in\Gamma(\pi_{ZX})$ be a section, then
	$\sigma=j^1\phi$ for some $\phi\in\Gamma(\pi_{YX})$
	(namely it is holonomic) if and only if $T_x\sigma(T_xX)\subseteq
	\mathcal{D_{C}}(\sigma(x))\,\forall x\in X$.
\end{proposition}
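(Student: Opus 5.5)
The plan is to work in the adapted coordinates $(x^\mu,y^i,z^i_\mu)$ on $Z$ and reduce both conditions to the same system of equations on the component functions of $\sigma$. Write the section as $\sigma(x)=(x^\mu,\sigma^i(x),\sigma^i_\mu(x))$. This form is forced by $\pi_{ZX}\circ\sigma=\mathrm{id}_X$. Set $\phi:=\pi_{ZY}\circ\sigma$, which is a section of $\pi_{YX}$ with components $\phi^i=\sigma^i$. By the definition of the coordinates $z^i_\mu$, the statement $\sigma=j^1\phi'$ for some $\phi'\in\Gamma(\pi_{YX})$ means two things. First, $\phi'=\pi_{ZY}\circ\sigma=\phi$, so $\phi$ is the only possible candidate. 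Second, $\sigma^i_\mu=\partial_\mu\sigma^i$ for all $i,\mu$ and at every point. Holonomicity is therefore equivalent to the system $\sigma^i_\mu=\partial_\mu\sigma^i$.

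Next I compute $T_x\sigma$ on the coordinate basis:
\[
T_x\sigma\left(\frac{\partial}{\partial x^\mu}\bigg|_x\right)=\frac{\partial}{\partial x^\mu}+\partial_\mu\sigma^i\frac{\partial}{\partial y^i}+\partial_\mu\sigma^j_\nu\frac{\partial}{\partial z^j_\nu}.
\]
I then contract this with the contact forms $\theta^i=\dd y^i-z^i_\nu\dd x^\nu$ evaluated at $\sigma(x)$. This gives $\iota_{T_x\sigma(\partial_\mu)}\theta^i=\partial_\mu\sigma^i-\sigma^i_\mu$. Equivalently, $\sigma^*\theta^i=(\partial_\mu\sigma^i-\sigma^i_\mu)\,\dd x^\mu$. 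The Cartan distribution $\mathcal{D_C}(\sigma(x))$ is by definition the common kernel of the $\theta^i$ at $\sigma(x)$. Since the vectors $\partial/\partial x^\mu|_x$ span $T_xX$ and the condition is linear, $T_x\sigma(T_xX)\subseteq\mathcal{D_C}(\sigma(x))$ holds if and only if $\partial_\mu\sigma^i(x)=\sigma^i_\mu(x)$ for all $i,\mu$. Doing this at every $x$ gives exactly the system from the first paragraph, which proves both directions.

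The only delicate point is that this argument uses coordinates, so I need both conditions to be chart-independent to pass from local to global. The contact-form annihilator condition is intrinsic because, as noted just before the proposition, the $\theta^i$ generate an ideal characterised intrinsically by $(j^1\phi)^*\theta=0$. The holonomicity condition $\sigma=j^1(\pi_{ZY}\circ\sigma)$ is pointwise, so it can be checked chart by chart. Frobenius' theorem (Theorem \ref{thm:Frobenius_theorem}) is not needed, because the candidate $\phi$ is given outright rather than obtained by integrating a distribution. I expect this bookkeeping about well-definedness, rather than any computation, to be the main care point.
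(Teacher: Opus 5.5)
Your proposal is correct and follows essentially the same route as the paper's proof in Appendix \ref{app:proof_of_holonomicity_iff_cartan}: compute $T_x\sigma$ on coordinate vectors, contract with the contact forms $\theta^i$ at $\sigma(x)$, and reduce each direction to the system $\sigma^i_\mu=\partial_\mu\sigma^i$. You also make explicit two points the paper treats more loosely. First, $\pi_{ZY}\circ\sigma$ is the only possible candidate for $\phi$. Second, the condition should be tested on the spanning set $\{\partial/\partial x^\mu|_x\}$, whereas the paper argues from a single nonzero $w$.
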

\begin{proof}
	The reader may refer to the Appendix
	\ref{app:proof_of_holonomicity_iff_cartan} for the full
	demonstration.	
\end{proof}
In short, a section $\sigma$ is holonomic if and only if it is an
integral section of the contact distribution. 

With symmetries in mind, it is natural to regard them as generated
by vector fields on $Y$ (cf. Chapter \ref{chap:symmetries_and_Noether_theorems}).
This makes it necessary to require the
canonical jet lift or prolongation of these
vector fields to vector fields on $Z$ to preserve holonomic sections,
as we expect sections coming from actual solutions of 
the e.o.m.s to flow into sections coming from actual solutions
if transformed by symmetries.
A vector field $X$ on $Z$ preserves all holonomic sections if and only if
it preserves the Cartan distribution
(i.e. that $\pounds_{X}Y$ lies in $\mathcal{D_{C}}$, for all
Y lying in $\mathcal{D_{C}}$ or equivalently -see 
\cite{leonSymmetriesClassicalField2004}- $\pounds_{X}\theta\in
\mathcal{C}$ for all $\theta \in\mathcal{C}$- both imply also
the preservation of $\mathcal{I(C)}$, the ideal of the exterior
algebra generated by the contact forms and their differentials).

Considering all this, we define
\begin{definition}[1-jet prolongation]
\label{def:1jet-prolongation}
	Let $\xi_Y\in\mathfrak{X}(Y)$. We define its \textbf{1-jet
	prolon\-gation} as the unique $\xi_Y^{(1)}\in\mathfrak{X}(Z)$
	projectable onto $\xi_Y$ (i.e. $T_z\pi_{ZY}(\xi_Y^{(1)}(z))=\xi_Y
	(\pi_{ZY}(z)),\,\forall z\in Z$) that preserves the Cartan
	distribution $(\pounds_{\xi_Y^{(1)}}\theta\in\mathcal{C},\,\forall
	\theta\in\mathcal{C})$. Locally, $\xi_{Y}^{(1)}$ is given by
	\cite{leonSymmetriesClassicalField2004}
	\[
		\xi_Y^{(1)}=\xi^\mu_Y(x,y)\frac{\partial}{\partial x^\mu}+
		\xi_Y^i(x,y)\frac{\partial}{\partial y^i}+\left(
		\der{\xi^i_Y(x,y)}{x^\mu}-z^i_\nu\der{\xi^\nu_Y(x,y)}{x^\mu}
		\right)\frac{\partial}{\partial z^i_\mu}.
	\]
	The definition of 1-jet prolongations will be vital for Chapter
	\ref{chap:symmetries_and_Noether_theorems} about symmetries.	
\end{definition}
The following proposition will be needed on Chapter \ref{chap:symmetries_and_Noether_theorems}
to show that symmetries of our Lagrangian form a Lie algebra
\begin{proposition}
\label{prop:symmetries_will_form_subalgebra}
	For all $\xi,\zeta\in\mathfrak{X}(Y)$, $[\xi,\zeta]^{(1)}=[\xi^{(1)},\zeta^{(1)}]$, i.e.,
	the $1$-jet prolongation is a Lie algebra homomorphism.
\end{proposition}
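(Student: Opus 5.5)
The natural route is to use the uniqueness built into Definition \ref{def:1jet-prolongation} rather than the coordinate formula. The $1$-jet prolongation of $\chi\in\mathfrak{X}(Y)$ is characterized as the unique vector field on $Z$ that is projectable onto $\chi$ and preserves the Cartan distribution, equivalently the contact ideal $\mathcal{C}$. So it suffices to show that $[\xi^{(1)},\zeta^{(1)}]$ has both properties with respect to $[\xi,\zeta]$.

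\textbf{Projectability.} Since $\xi^{(1)}$ and $\zeta^{(1)}$ are $\pi_{ZY}$-related to $\xi$ and $\zeta$, the standard naturality of the bracket under related vector fields shows that $[\xi^{(1)},\zeta^{(1)}]$ is $\pi_{ZY}$-related to $[\xi,\zeta]$. That is,
\[
T_z\pi_{ZY}\bigl([\xi^{(1)},\zeta^{(1)}](z)\bigr)=[\xi,\zeta](\pi_{ZY}(z)).
\]

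\textbf{Preservation of $\mathcal{C}$.} Use the identity $\pounds_{[X,W]}=\pounds_X\pounds_W-\pounds_W\pounds_X$ on forms. For $\theta\in\mathcal{C}$ we have $\pounds_{\zeta^{(1)}}\theta\in\mathcal{C}$, and applying $\pounds_{\xi^{(1)}}$ keeps us in $\mathcal{C}$. The same holds with the roles of the two fields swapped. Hence the commutator also lies in $\mathcal{C}$.

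The only point needing care is that the hypothesis must really be "$\pounds_X$ maps $\mathcal{C}$ into $\mathcal{C}$" for the whole ideal, not merely for the generators $\theta^i$. This follows from the Leibniz rule:
\[
\pounds_X(\alpha\wedge\theta^i)=\pounds_X\alpha\wedge\theta^i+\alpha\wedge\pounds_X\theta^i,
\]
and the right-hand side lies in $\mathcal{C}$ once $\pounds_X\theta^i\in\mathcal{C}$. Iterating the Lie derivative is therefore legitimate. Uniqueness in Definition \ref{def:1jet-prolongation} then gives $[\xi,\zeta]^{(1)}=[\xi^{(1)},\zeta^{(1)}]$.

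\textbf{Main obstacle.} The step I expect to be the real issue is the uniqueness itself, which the definition asserts but which underpins everything. I would verify it locally as follows. Let $V$ be a vector field on $Z$ that projects to $0$ on $Y$, so $V=V^i_\mu\,\partial/\partial z^i_\mu$, and that preserves $\mathcal{C}$. Then
\[
\pounds_V\theta^i=-V^i_\mu\,\dd x^\mu.
\]
This is a semibasic $1$-form. It can lie in $\mathcal{C}$ only if it vanishes, because pulling back by an arbitrary holonomic section $j^1\phi$ kills $\mathcal{C}$ but not a nonzero $V^i_\mu\,\dd x^\mu$; here one uses that every point of $Z$ is $j^1_x\phi$ for some $\phi$. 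Thus $V=0$, and the difference of two prolongations of the same field vanishes.

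As a backup, one could compare coefficients directly using the local formula for $\xi^{(1)}$. The $x$ and $y$ components match by projectability. The $\partial/\partial z^i_\mu$ component requires the identity $[\xi^{(1)},\dd/\dd x^\mu]=-(\dd\xi^\nu/\dd x^\mu)\,\dd/\dd x^\nu$ modulo vertical terms, which is considerably messier. I would avoid this route.
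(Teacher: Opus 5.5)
Your proof is correct and is essentially the paper's argument: $[\xi^{(1)},\zeta^{(1)}]$ projects onto $[\xi,\zeta]$, the identity $\pounds_{[\xi^{(1)},\zeta^{(1)}]}=\pounds_{\xi^{(1)}}\pounds_{\zeta^{(1)}}-\pounds_{\zeta^{(1)}}\pounds_{\xi^{(1)}}$ shows it preserves $\mathcal{C}$, and uniqueness in Definition~\ref{def:1jet-prolongation} gives the equality. Your two extra checks are sound and fill in steps the paper leaves implicit: that $\pounds_X$ maps the whole ideal into $\mathcal{C}$ (by the Leibniz rule), and the local uniqueness argument, where a $\pi_{ZY}$-vertical $V$ gives $\pounds_V\theta^i=-V^i_\mu\,\dd x^\mu$, forcing $V=0$.
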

\begin{proof}
		Of course $[\xi^{(1)},\zeta^{(1)}]$ projects onto $[\xi,\zeta]$ and for any $\alpha
	\in\mathcal{C}$ contact form we have
	\[
		\pounds_{[\xi^{(1)},\zeta^{(1)}]}\alpha=
		\pounds_{\xi^{(1)}}\pounds_{\zeta^{(1)}}\alpha-
		\pounds_{\zeta^{(1)}}\pounds_{\xi^{(1)}}\alpha\in\mathcal{C}
	\]
	using that $\xi^{(1)}$ and $\zeta^{(1)}$ are 1-jet prolongations.
\end{proof}
\subsection{Lagrangian and Poincaré-Cartan forms}
Evidently, to describe a Lagrangian setting in the language of
differential geometry, at some point or another we would have
to introduce a Lagrangian density, and this is the time.

\begin{definition}[Lagrangian density, function and regularity]
\label{def:lagrangian_form_function_and_regularity}
A \textbf{Lagrangian density} $\mathcal{L}$ is an $(n+1)$-form on $Z$
with the requirement of being \textbf{semibasic} w.r.t. $\pi_{ZX}$,
i.e.
\[
	\iota_{v}\mathcal{L}=0,\quad\forall v\in\mathcal{V}\pi_{ZX}.
\]

The associated \textbf{Lagrangian function} $L$ then is the unique 
$L:Z\to\RR$ such that $\mathcal{L}=L\eta$ (note that we are making some
abuse of notation, as we have 
identified $\eta$ with its pullback to the jet bundle via the projection
$\pi_{ZX}^*\eta$). 

Lastly, $\mathcal{L}$ is said to be \textbf{regular} at a point whenever the
Hessian matrix
\[
	\left(\frac{\partial^2 L}{\partial z^i_\mu\partial z^j_\nu}\right)
	_{i,j,\mu,\nu}
\]
is regular at that point, and regular, whenever the Hessian is
regular at every point.
\end{definition}

Other important forms that can be built upon this Lagrangians are
the Poincaré-Cartan forms, which (in the case of regular Lagrangians
) will allow us to describe our equations of motion (e.o.m.) in an
equivalent way to Euler-Lagrange equations. The importance
of these forms and their role will be demonstrated in
Subsection \ref{subsect:euler_lagrange_and_de_donder_equations}.

\begin{definition}[Poincaré-Cartan forms]\label{def:poincare-cartan-forms}
	Considering the vertical endomorphism $S_\eta$ defined at
	Subsection \ref{subsect:the_vertical_endomorphism}, we define the
	\textbf{Poincaré-Cartan} $\bm{(n+1)}$\textbf{-form} as
	\[
		\Theta_L:=\mathcal{L}+(S_\eta)^*(\dd L),
	\]
	locally given by
	\begin{align*}
		\Theta_L&=\left(L-z^i_\mu\frac{\partial L}{\partial z^i_\mu}
		\right)\dd^{n+1}x+\frac{\partial L}{\partial z^i_\mu}\dd y^i\wedge
		\dd^nx_\mu \\
		&=\mathcal{L}+\frac{\partial L}{\partial z^i_\mu}\theta^i
		\wedge\dd^nx_\mu,
	\end{align*}
	where $(S_\eta)^*(\dd L)(V_0,\dots,V_n):=\dd L\left(S_\eta(V_0,\dots,V_n)\right)$.

	We also define the \textbf{Poincaré-Cartan} $\bm{(n+2)}$
	\textbf{-form} as
	\[
		\Omega_L:=-\dd\Theta_L,
	\]
	locally given by
	\[
		\Omega_L=-(\dd y^i-z^i_\mu \dd x^\mu)\wedge
		\left(
		\frac{\partial L}
		{\partial y^i}\dd^{n+1}x-\dd\left(\frac{\partial L}{\partial
		z^i_\mu}\right)\wedge \dd^nx_\mu
		\right).
	\]
\end{definition}

Boundary conditions on this description are implemented through the
condition of existence of an $(n+1)$-form $\Pi\in\Omega(Z)$ such
that $i_B^*\Theta_L=\dd\Pi$, where $i_B:B\to Z$ is the canonical 
inclusion. 

\subsection{Euler-Lagrange and De Donder equations}
\label{subsect:euler_lagrange_and_de_donder_equations}
Traditionally, for a section $\phi\in\Gamma(\pi)$ to be a physical
object, i.e. an object describing the dynamics of a physical
setting, apart from satisfying certain boundary conditions ($(j^1
\phi)(\partial X)\subseteq B$) it has to \textbf{extremise} (in
a variational calculus sense) the so-called \textbf{action integral}
\[
	S(\phi)=\int_{(j^1\phi)(C)}\mathcal{L}.
\] 
This variational approach for the extremality condition has an
equivalent geometric description through projectable vector fields
on $Y$.
\begin{definition}[Variations of a section]
	Let $\xi_Y\in\mathfrak{X}(Y)$ be a fixed projectable
	vector field with flow $\Phi_t^Y$ and let us denote by $\Phi_t^X$
	the flow of its projection. Then we define the \textbf{variations}
	w.r.t. $\xi_Y$ of any section $\phi\in\Gamma(\pi)$ as the sections $\phi_t\in
	\Gamma(\pi)$ defined by $\phi_t:=\Phi_t^Y\circ\phi\circ
	\Phi_{-t}^X$.
\end{definition}
Through this definition of variations, the principle of extremal
action materializes then as the condition \footnote{Where $!$ denotes that we are
imposing the equality, i.e., the left-hand side (lhs) must be equal to 
the right-hand side (rhs) of the equality; in this case, for the extremality condition to
hold by definition.}
\[
	\der{S(\phi_t)}{t}\bigg|_{t=0}\overset{!}{=}0,
\]
which can be rewritten as
\[
	\der{S(\phi_t)}{t}\bigg|_{t=0}=\der{}{t}\left[\int
	_{(j^1\phi_t)(C)}\mathcal{L}\right]\bigg|_{t=0}=\der{}{t}
	\left[\int_{C}(j^1\phi_t)^*\mathcal{L}\right]\bigg|_{t=0}
	\]
\[
	=\int_C\der{}{t}(j^1\phi_t)^*\mathcal{L}
	\big|_{t=0}=\int_C(j^1\phi)^*\pounds_{\xi_Y^{(1)}}(\mathcal{L})
	\overset{!}{=}0,
\]
where $\xi_Y^{(1)}$ is the
1-jet prolongation of $\xi_Y$ introduced in Definition 
\ref{def:1jet-prolongation} and in the last equality we used Lemma 2.4 of
\cite{leonSymmetriesClassicalField2004}.
Hence, we define
\begin{definition}[Extremal section of the action integral]
	\label{def:extremal_section}
	We say that $\phi\in\Gamma(\pi)$ is an \textbf{extremal section 
	of S} whenever
	\[
		\int_C(j^1\phi)^*\pounds_{\xi^{(1)}_Y}(\mathcal{L})=0
	\]
	for every projectable vector field $\xi_Y\in\mathfrak{X}(Y)$ and
	every compact $(n+1)$-dimensional submanifold $C$ of $X$.
\end{definition}

The extremality condition, after some manipulations (cf. 
Equation \eqref{eq:first_expression_extremality} of Appendix
\ref{app:some_manipulations_of_the_extremality_condition})
can be expressed as
\[
	\int_C(j^2\phi)^*\left[\frac{\partial L}{\partial y^i}-\der{}{x^\mu}
	\frac{\partial L}{\partial z^i_\mu}\right](\xi_Y^i-z^i_\nu
	\xi_Y^\nu)\eta+
	\int_{\partial C}(j^1\phi)^*\left[(\xi_Y^i-z^i_\nu\xi_Y^\nu)
	\frac{\partial L}{\partial z^i_\mu}+L\xi_Y^\mu\right]\dd^nx_\mu
	\overset{!}{=}0,
\]
i.e., Euler-Lagrange equations plus a boundary term. It can be
shown (again cf. Appendix
\ref{app:some_manipulations_of_the_extremality_condition})
that the boundary term is equal to
\[
	\int_{\partial C}(j^1\phi)^*(\iota_{\xi_Y^{(1)}}\Theta)
\]
for $\Theta=\Theta_L$, i.e., the Poincaré-Cartan $(n+1)-$form 
introduced in Definition \ref{def:poincare-cartan-forms}. This result
can also be proved in general for any $\Theta\in\Omega^{n+1}Z$ form
\textbf{Lepage-equivalent} to $\mathcal{L}$, i.e. such that
$\Theta-\mathcal{L}$ and $\iota_Vd\Theta$ belong to $\mathcal{I(C)}$
for all $V\in\mathcal{V}\pi_{ZY}$. However, for the rest of the thesis we
will only consider $\Theta_L$ for simplicity. 

Upon Definition \ref{def:extremal_section} we arrive at the very
well-known Euler-Lagrange equations
\begin{theorem}[Euler-Lagrange equations]
	Let $\phi\in\Gamma(\pi)$ be an extremal of $S$ and $C$ any compact
	$(n+1)$-dimensional submanifold of $X$ such that $\phi(C)$ is
	covered by a single patch with coordinates $(x^\mu,y^i)$. In the
	case that $\phi$ has fixed values at $\partial C$ w.r.t every
	possible variation, then $\phi$ satisfies the 
	\textbf{Euler-Lagrange equations}:
	\begin{equation}
	\label{eq:euler-lagrange_equations}
		(j^2\phi)^*\left(\frac{\partial L}{\partial y^i}-\der{}{x^\mu}
		\frac{\partial L}{\partial z^i_\mu}\right)=0,\quad\forall i\in\{
		1,\dots,m\}.
	\end{equation}
	In addition, a section $\phi\in\Gamma(\pi)$ is an extremal of $S$
	if and only if 
	\begin{equation}
		\label{eq:intrinsic_EL_equations}
		(j^1\phi)^*(\iota_\xi\Omega_L)=0,\quad\forall\xi\in
		\mathfrak{X}(Z),
	\end{equation}
	which are called the \textbf{intrinsic Euler-Lagrange equations} 
\end{theorem}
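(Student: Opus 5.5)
The plan has two parts, matching the two claims of the theorem, and both rest on the rewritten extremality condition stated just before the theorem. That expression splits into a bulk Euler--Lagrange integral over $C$ and a boundary integral $\int_{\partial C}(j^1\phi)^*(\iota_{\xi_Y^{(1)}}\Theta_L)$. I take it as given from Appendix \ref{app:some_manipulations_of_the_extremality_condition}.

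\textbf{First part.} Fix a chart covering $\phi(C)$ and use vertical fields $\xi_Y=f(x,y)\,\partial/\partial y^i$, with $\xi^\mu_Y=0$.
\begin{itemize}
\item The hypothesis that $\phi$ is fixed on $\partial C$ under every variation means the variation vanishes there. So I take $f$ compactly supported in the interior of $C$, which kills the boundary term.
\item The extremality condition then reduces to $\int_C (j^2\phi)^*\bigl(\partial L/\partial y^i - \dd/\dd x^\mu\, \partial L/\partial z^i_\mu\bigr)\, f(x,\phi(x))\,\eta = 0$.
\item Since $f\circ\phi$ ranges over all bump functions on the interior of $C$, the fundamental lemma of the calculus of variations gives \eqref{eq:euler-lagrange_equations} pointwise in the interior, and by continuity on $C$.
\end{itemize}

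\textbf{Second part, backward direction.} Assume \eqref{eq:intrinsic_EL_equations} holds.
\begin{itemize}
\item By Cartan's formula, $\pounds_{\xi^{(1)}}\mathcal{L}$ and $\pounds_{\xi^{(1)}}\Theta_L$ differ by $\pounds_{\xi^{(1)}}(\Theta_L-\mathcal{L})$. This is a Lie derivative of a contact term, and it lies in the contact ideal because $\xi^{(1)}$ preserves $\mathcal{C}$.
\item Contact forms pull back to zero under $j^1\phi$, so $(j^1\phi)^*\pounds_{\xi^{(1)}}\mathcal{L} = (j^1\phi)^*\pounds_{\xi^{(1)}}\Theta_L$.
\item Expanding, $(j^1\phi)^*\pounds_{\xi^{(1)}}\Theta_L = -(j^1\phi)^*\iota_{\xi^{(1)}}\Omega_L + \dd\bigl((j^1\phi)^*\iota_{\xi^{(1)}}\Theta_L\bigr)$.
\item Integrating over $C$ and applying Stokes, the first term vanishes by hypothesis. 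The extremality condition thus reduces to the boundary term, which is precisely the boundary piece of the variational formula. Read with the boundary-variation convention of Definition \ref{def:extremal_section}, this gives extremality.
\end{itemize}

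\textbf{Second part, forward direction.} Assume $\phi$ is extremal. By the first part (with local bump variations) $\phi$ satisfies the Euler--Lagrange equations. It then suffices to check $(j^1\phi)^*(\iota_V\Omega_L)=0$ for every coordinate vector field $V$ on $Z$, using the local formula $\Omega_L=-\theta^i\wedge\bigl(\tfrac{\partial L}{\partial y^i}\dd^{n+1}x-\dd(\tfrac{\partial L}{\partial z^i_\mu})\wedge\dd^nx_\mu\bigr)$.
\begin{itemize}
\item For $V=\partial/\partial z^j_\nu$: contracting can only hit the second factor. What survives is a multiple of $\theta^i\wedge\dd^nx_\mu$, which pulls back to zero since $(j^1\phi)^*\theta^i=0$.
\item For $V=\partial/\partial y^j$: the contraction with $\theta^j$ produces $(j^1\phi)^*\bigl(\tfrac{\partial L}{\partial y^j}-\dd/\dd x^\mu\, \tfrac{\partial L}{\partial z^j_\mu}\bigr)\eta$, up to sign. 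The remaining terms again contain $\theta$ and vanish.
\item For $V=\partial/\partial x^\nu$: the contraction with $\theta^i$ gives $-z^i_\nu$ times the same Euler--Lagrange expression, plus $\theta$-terms. Both vanish on holonomic sections that solve Euler--Lagrange.
\end{itemize}
By $C^\infty(Z)$-linearity in $\xi$, the identity then holds for all $\xi\in\mathfrak{X}(Z)$.

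\textbf{Main obstacle.} The delicate step is the bookkeeping in these contractions. I must track signs from $\dd^nx_\mu$ and confirm that $(j^1\phi)^*\bigl(\dd(\partial L/\partial z^i_\mu)\wedge\dd^nx_\mu\bigr)=(j^2\phi)^*\bigl(\dd/\dd x^\mu\,\partial L/\partial z^i_\mu\bigr)\eta$. Alongside this, I must justify that the $\theta$-containing pieces vanish even though some factors of $\Omega_L$ are not themselves contact. I would handle this by first writing $\iota_V\Omega_L$ explicitly and grouping the terms by $\theta^i$ before pulling back.
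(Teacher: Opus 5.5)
Your first part is the paper's argument, and your forward direction of the equivalence is correct. For the equivalence itself the paper gives no argument at all (it only cites \cite{leonSymmetriesClassicalField2004}), so supplying one is worthwhile. The bookkeeping you flag as the main obstacle is harmless. Since $(j^1\phi)^*$ is multiplicative, any term $\theta^i\wedge\beta$ pulls back to zero whatever $\beta$ is. Using $\dd x^\nu\wedge\dd^nx_\mu=\delta^\nu_\mu\eta$, one then gets, for every $\xi\in\mathfrak{X}(Z)$ with components $\xi^\nu,\xi^i$ along $\partial/\partial x^\nu$, $\partial/\partial y^i$,
\[
(j^1\phi)^*(\iota_\xi\Omega_L)=-\big((\xi^i-z^i_\nu\xi^\nu)\circ j^1\phi\big)\,(j^2\phi)^*\Big(\parder{L}{y^i}-\der{}{x^\mu}\parder{L}{z^i_\mu}\Big)\,\eta .
\]
This gives ``intrinsic Euler--Lagrange $\iff$ Euler--Lagrange'' in one stroke.

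The genuine gap is the last step of your backward direction. After Cartan's formula and Stokes you are left with $\int_{\partial C}(j^1\phi)^*(\iota_{\xi^{(1)}_Y}\Theta_L)$. You discard it by appeal to ``the boundary-variation convention of Definition \ref{def:extremal_section}'', but that definition contains no such convention: it quantifies over every projectable $\xi_Y$ and every compact $C$. Under that literal reading the boundary term survives, and the ``if'' direction is in fact false. Take $m=1$, $Y=\RR^{n+1}\times\RR$, $L=\tfrac12\sum_\mu(z_\mu)^2$, and the harmonic section $\phi(x)=x^0$, which satisfies the (intrinsic) Euler--Lagrange equations. For $\xi_Y=y\,\partial/\partial y$ one has $\xi^{(1)}_Y=y\,\partial/\partial y+z_\mu\,\partial/\partial z_\mu$. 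Hence $\pounds_{\xi^{(1)}_Y}\mathcal{L}=2L\eta$ and $\int_C(j^1\phi)^*\pounds_{\xi^{(1)}_Y}\mathcal{L}=\mathrm{vol}(C)\neq0$.

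So you must state explicitly that extremality is taken with respect to variations vanishing on the boundary, e.g.\ $\xi_Y=0$ along $\phi(\partial C)$, the convention you already use in the first part. You then still have to show the boundary term vanishes. This is not automatic, because $\xi^{(1)}_Y$ need not vanish on $j^1\phi(\partial C)$: its $\partial/\partial z^i_\mu$-components involve derivatives of $\xi_Y$. The missing observation is that $\Theta_L$ annihilates $\mathcal{V}\pi_{ZY}$. Hence $(\iota_{\xi^{(1)}_Y}\Theta_L)_{j^1_x\phi}$ depends only on $\xi_Y(\phi(x))$ (cf.\ \eqref{eq:useful_local_expression_for_the_preserved_quantity}), and therefore vanishes for $x\in\partial C$. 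With this amendment your argument is complete.
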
 
\begin{proof}
	As we already proved, the extremality condition has the form
	\eqref{eq:first_expression_extremality}. If by hypothesis $\phi$
	has fixed values at $\partial C$, then every possible variation
	field $\xi_Y$ components must vanish on $(j^1\phi)(\partial C)$
	implying
	\[
		\int_C(j^2\phi)^*\left[\frac{\partial L}{\partial y^i}-\der{}{x^\mu}
		\frac{\partial L}{\partial z^i_\mu}\right](\xi_Y^i-z^i_\nu
		\xi_Y^\nu)\eta=0.
	\]
	As the components of the vector field are arbitrary, Euler-Lagrange
	equations must be met.
	For the if and only if statement cf. 
	\cite{leonSymmetriesClassicalField2004}.
\end{proof}
If we relax the holonomicity condition of the intrinsic Euler-Lagrange
equations and consider any section $\sigma\in\Gamma(\pi_{ZX})$, we
obtain the following set of equations
\begin{definition}[De Donder equations]
	Given a section $\sigma\in\Gamma(\pi_{ZX})$, the set of equations
	\begin{equation}\label{eq:De-Donder-equations}
		\sigma^*(\iota_\xi\Omega_L)=0,\quad\forall\xi\in\mathfrak{X}(Z)
	\end{equation}
	are called the \textbf{De Donder equations}.
\end{definition}
The importance of De Donder equations arise in the case
of \textbf{regular Lagrangians}; in such cases, holonomy is
recovered automatically and De Donder equations are proven to be
equivalent to Euler-Lagrange ones.
These two facts are collected in the following proposition
\begin{proposition}
\label{prop:De_Donder_eq_to_E-L}
	For a regular Lagrangian, if there exists a section $\sigma
	\in\Gamma(\pi_{ZX})$ satisfying the De Donder equations, then
	there exists a section $\phi\in\Gamma(\pi_{YX})$ such that
	$\sigma=j^1\phi$ and $\phi$ satisfies the Euler-Lagrange
	equations.
\end{proposition}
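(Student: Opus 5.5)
Write $\sigma(x)=(x^\mu,\sigma^i(x),\sigma^i_\mu(x))$ in adapted coordinates, and set $\phi:=\pi_{ZY}\circ\sigma\in\Gamma(\pi_{YX})$, so that $\phi^i=\sigma^i$. The plan is to evaluate the De Donder equations \eqref{eq:De-Donder-equations} on the coordinate vector fields $\partial/\partial z^j_\nu$, $\partial/\partial y^j$ and $\partial/\partial x^\nu$. Since the condition is $C^\infty(Z)$-linear in $\xi$, it suffices to test on a local basis. We then read off holonomy from the first family and the Euler--Lagrange equations from the second. The $\partial/\partial x^\nu$ equations should then hold automatically.

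\textbf{Step 1 (holonomy).} Using the local expression of $\Omega_L$ from Definition \ref{def:poincare-cartan-forms}, the $\dd z$-dependence enters only through $\dd(\partial L/\partial z^i_\mu)$. A short computation gives
\[
\iota_{\partial/\partial z^j_\nu}\Omega_L=-\frac{\partial^2 L}{\partial z^j_\nu\partial z^i_\mu}\,\theta^i\wedge\dd^nx_\mu .
\]
Pulling back by $\sigma$, we have $\sigma^*\theta^i=(\partial_\rho\sigma^i-\sigma^i_\rho)\dd x^\rho$ and $\dd x^\rho\wedge\dd^nx_\mu=\delta^\rho_\mu\,\dd^{n+1}x$. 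The equation therefore becomes
\[
\frac{\partial^2 L}{\partial z^j_\nu\partial z^i_\mu}\big(\sigma\big)\,(\partial_\mu\sigma^i-\sigma^i_\mu)=0 .
\]
Regularity of the Hessian (Definition \ref{def:lagrangian_form_function_and_regularity}) forces $\sigma^i_\mu=\partial_\mu\sigma^i$. This says exactly that $\sigma^*\theta^i=0$, so $\sigma=j^1\phi$; equivalently, one may invoke Proposition \ref{prop:holonomicity_iff_cartan}.

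\textbf{Step 2 (Euler--Lagrange).} Now contract with $\partial/\partial y^j$. The contact factor contributes
\[
-\Big(\frac{\partial L}{\partial y^j}\dd^{n+1}x-\dd\Big(\frac{\partial L}{\partial z^j_\mu}\Big)\wedge\dd^nx_\mu\Big),
\]
plus terms still carrying a factor $\theta^i$. The latter vanish under $\sigma^*$ by Step 1. Pulling back by $j^1\phi$ turns $\dd(\partial L/\partial z^j_\mu)\wedge\dd^nx_\mu$ into $\frac{\dd}{\dd x^\mu}(\partial L/\partial z^j_\mu)\circ j^2\phi\;\dd^{n+1}x$. This yields exactly \eqref{eq:euler-lagrange_equations}. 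Finally, for $\partial/\partial x^\nu$ the contraction is $-z^i_\nu$ times the $\partial/\partial y^i$ contraction plus $\theta$-terms, so it gives nothing new. One could alternatively conclude via the ``if and only if'' part of the Euler--Lagrange theorem, since $(j^1\phi)^*\iota_\xi\Omega_L=0$ for all $\xi$ is now the hypothesis.

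The main obstacle is the bookkeeping in Step 1: correctly contracting $\partial/\partial z^j_\nu$ through the wedge product with its signs, and verifying that the pulled-back form reduces to the Hessian times the holonomy defect. Once this is done, the rest is a routine pullback computation.
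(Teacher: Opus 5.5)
Your proof is correct. Contracting with $\partial/\partial z^j_\nu$ gives $-\frac{\partial^2 L}{\partial z^j_\nu\partial z^i_\mu}\,\theta^i\wedge\dd^n x_\mu$, whose pullback is the Hessian applied to the holonomy defect $\partial_\mu\sigma^i-\sigma^i_\mu$, and once that defect vanishes the $\partial/\partial y^j$ components are exactly the Euler--Lagrange equations; the $\partial/\partial x^\nu$ components are hypotheses in this direction, so nothing needs checking there. The paper gives no argument of its own for this proposition and only cites de León et al., so your local-coordinate computation, which is the standard argument for this result, supplies the missing proof.
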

\begin{proof}
	See \cite{leonSymmetriesClassicalField2004}.
\end{proof}

Lastly, given a general Ehresmann connection $\Gamma$ on the bundle $\pi_{ZX}$ (i.e. a
distribution $H$ on $Z$ such that $T_zZ=H_z\oplus\mathcal{V}_z\pi_{ZX}$ for every point
$z\in Z$) we have an equivalent (geometric) description of the De Donder equations
in terms of the horizontal projector $\textbf{h}$, locally given $\forall z\in Z$ by
\[
\textbf{h}_z\left(\frac{\partial}{\partial x^\mu}\bigg|_z\right)=\frac{\partial}{\partial
x^\mu}\bigg|_z+\Gamma^i_\mu(z)\frac{\partial}{\partial y^i}\bigg|_z+\Gamma^i_{\mu\nu}(z)
\frac{\partial}{\partial z^i_\nu}\bigg|_z,
\]
\[
\textbf{h}_z\left(\frac{\partial}{\partial y^i}
\bigg|_z\right)=0,\quad \textbf{h}_z\left(\frac{\partial}{\partial z^i_\mu}\bigg|_z\right)=0,
\]
under the following definition of interior product for \textbf{h}
\begin{equation}
	\label{eq:interior_product_of_h}
	\iota_{\textbf{h}}\Omega(V_1,\dots,V_p):=\sum_{i=1}^p\Omega(V_1,\dots,\textbf{h}(V_i),
	\dots V_n),\quad\forall V_1,\dots,V_n\in\mathfrak{X}(Z)
\end{equation}
for any $p$-form $\Omega\in\Omega^p(Z)$. 

The equivalent equation is given in the following
proposition
\begin{proposition}
 Given a connection $\Gamma$ in $\pi_{ZX}$ satisfying the equation
 \begin{equation}
 \label{eq:geometric-De-Donder}
 	\iota_{\textbf{h}}\Omega_L=n\Omega_L,
 \end{equation}
 then, every horizontal local integral section $\sigma$ of $\Gamma$ (i.e. satisfying 
 $\textbf{h}\left(\frac{\partial}{\partial x^\mu}\big|_{\sigma(U)}\right)=\dd_z\sigma\left(\frac
 {\partial}{\partial x^\mu}\big|_U\right)$ for a given open set $U\subseteq Z$) is
 a solution of the De Donder equations. Equation
 \eqref{eq:geometric-De-Donder} will be called \textbf{geometric De
 Donder equation}.
\end{proposition}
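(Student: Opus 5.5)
The plan is to decompose $\Omega_L$ according to the splitting $T_zZ=H_z\oplus\mathcal{V}_z\pi_{ZX}$ given by the connection $\Gamma$, and to read the geometric De Donder equation \eqref{eq:geometric-De-Donder} as an eigenvalue condition on that decomposition. Write $\mathbf{v}:=\mathrm{id}-\textbf{h}$ for the vertical projector. Since $\textbf{h}$ and $\mathbf{v}$ are complementary projectors, every $p$-form splits uniquely as $\Omega=\sum_{k}\Omega^{(k)}$. Here $\Omega^{(k)}$ is the part of \emph{bidegree} $(k,p-k)$, meaning it is nonzero only when exactly $k$ of its arguments are horizontal and the rest vertical. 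Concretely, one inserts $V_j=\textbf{h}(V_j)+\mathbf{v}(V_j)$ in every slot, expands by multilinearity, and groups the terms by the number of $\textbf{h}$'s.

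The first key step is the identity $\iota_{\textbf{h}}\Omega^{(k)}=k\,\Omega^{(k)}$, read with \eqref{eq:interior_product_of_h} understood as summing over all $p$ slots. Evaluate both sides on arguments that are each purely horizontal or purely vertical; this suffices by multilinearity. Applying $\textbf{h}$ to a horizontal slot leaves the term unchanged, while applying it to a vertical slot kills it. So $\iota_{\textbf{h}}\Omega^{(k)}$ returns $\Omega^{(k)}$ counted once for each of its $k$ horizontal slots.

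For $\Omega_L$, of degree $n+2$, the equation $\iota_{\textbf{h}}\Omega_L=n\Omega_L$ becomes $\sum_k (k-n)\Omega_L^{(k)}=0$. By uniqueness of the bigraded decomposition, $\Omega_L^{(k)}=0$ for every $k\neq n$. Since $\dim H_z=n+1$, the relevant possibilities are only $k\le n+1$. The important consequence is $\Omega_L^{(n+1)}=0$: \emph{$\Omega_L$ vanishes whenever $n+1$ of its arguments are horizontal}.

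The second step is to pull back along a horizontal integral section $\sigma$. Fix $\xi\in\mathfrak{X}(Z)$ and a point $x\in U$. Then $\sigma^*(\iota_\xi\Omega_L)$, evaluated on $\partial/\partial x^0,\dots,\partial/\partial x^n$, equals
\[
\Omega_L\big(\xi,\dd_x\sigma(\partial_0),\dots,\dd_x\sigma(\partial_n)\big).
\]
By hypothesis, each $\dd_x\sigma(\partial_\mu)=\textbf{h}(\partial_\mu)$ is horizontal. Split $\xi=\textbf{h}\xi+\mathbf{v}\xi$.
\begin{itemize}
\item The horizontal part gives $\Omega_L$ on $n+2$ vectors in the $(n+1)$-dimensional space $H_z$, which vanishes by linear dependence.
\item The vertical part gives $\Omega_L$ with exactly $n+1$ horizontal arguments and one vertical argument. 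This value only sees the component $\Omega_L^{(n+1)}$, which is zero.
\end{itemize}
Hence $\sigma^*(\iota_\xi\Omega_L)=0$ for all $\xi$, which are the De Donder equations \eqref{eq:De-Donder-equations}.

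I expect the main obstacle to be bookkeeping rather than an idea. The statement of \eqref{eq:interior_product_of_h} mixes the indices $p$ and $n$, so I would fix the convention that the sum runs over all $p$ arguments. I would also state the uniqueness of the bigraded decomposition carefully, since it is exactly what turns the single equation into the vanishing of each $\Omega_L^{(k)}$, $k\neq n$. If needed, this can be checked in the adapted coframe $\{\dd x^\mu,\ \dd y^i-\Gamma^i_\mu\dd x^\mu,\ \dd z^i_\nu-\Gamma^i_{\mu\nu}\dd x^\mu\}$. In that coframe, $\Omega^{(k)}$ is simply the part containing exactly $k$ factors $\dd x^\mu$, and $\iota_{\textbf{h}}$ acts as the degree operator in those factors.
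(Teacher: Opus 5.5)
Your argument is correct. The thesis gives no proof of this proposition (it only cites \cite{leonSymmetriesClassicalField2004}). Where it actually works with the geometric De Donder equation, in Section~\ref{sec:the_connection_approach_to_yang_mills_equations}, it proceeds in coordinates. It writes $\textbf{h}$ through its coefficients $(\Gamma_\mu)^a_\rho,(\Gamma_\mu)^a_{\rho\sigma}$, expands $\iota_{\textbf{h}}\Omega_L-n\Omega_L$ using Corollary~\ref{cor:some_iotah_computations}, and reads off algebraic conditions on those coefficients. Only then does it observe that along an integral section, where the coefficients become derivatives of $\sigma$, these conditions turn into the field equations.

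Your route is intrinsic and isolates the single fact behind the proposition: $\iota_{\textbf{h}}$ counts horizontal slots. Your identity $\iota_{\textbf{h}}\Omega^{(k)}=k\,\Omega^{(k)}$ is consistent with the thesis's $\iota_{\textbf{h}}\eta=(n+1)\eta$, which confirms that summing over all slots in \eqref{eq:interior_product_of_h} is the intended reading. So \eqref{eq:geometric-De-Donder} says exactly that $\Omega_L$ has no component with $n+1$ horizontal slots. That is the only component the De Donder equations see along a horizontal section.

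For this implication you do not need the full bigraded decomposition. Evaluating both sides on $(\xi,\textbf{h}X_0,\dots,\textbf{h}X_n)$ gives $\Omega_L(\textbf{h}\xi,\textbf{h}X_0,\dots,\textbf{h}X_n)+(n+1)\,\Omega_L(\xi,\textbf{h}X_0,\dots,\textbf{h}X_n)=n\,\Omega_L(\xi,\textbf{h}X_0,\dots,\textbf{h}X_n)$. The first term vanishes because $\dim H_z=n+1$, so the remaining term vanishes too.

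What the decomposition does buy is a converse. Every term of the local expression of $\Omega_L$ carries at least $n$ basic factors $\dd x^\mu$, so $\Omega_L$ is killed by any three $\pi_{ZX}$-vertical vectors. Hence $\Omega_L^{(k)}=0$ for $k<n$ automatically, and \eqref{eq:geometric-De-Donder} is \textit{equivalent} to $\Omega_L^{(n+1)}=0$. In particular, a flat connection all of whose integral sections solve the De Donder equations must satisfy \eqref{eq:geometric-De-Donder}. Your argument also never uses closedness of $\Omega_L$ or any regularity of $L$, only its degree.

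The coordinate route, in exchange, produces the explicit system on the connection coefficients. That system is what the constraint algorithm of Subsection~\ref{subsect:the_singular_case} needs to decide where a solution $\textbf{h}_z$ exists. It also makes visible conditions such as $(\Gamma_{[\nu})^a_{\mu]}=A^a_{[\mu\nu]}$ in the Yang--Mills case.
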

\begin{proof}
	See \cite{leonSymmetriesClassicalField2004}.
\end{proof}
As this equivalent description (in the case of integrable connections or
holonomic connections, i.e., connections contained in the Cartan
distribution) is a pointwise condition (in the
sense of finding this connection, and so for each point on $Z$ a complement
to the vertical subspace), if the Lagrangian is not fully regular, we can
restrict this view to submanifolds of $Z$ in a form of an algorithm (as we
will see now) until finding a final submanifold in which the solution of this
field equation does exist. In the previous field equations perspective though,
for singular Lagrangians we can't proceed pointwise.

This geometric description can be stated both in this connection
perspective or in totally equivalent multivector field way (cf. for
example \cite{echeverria-enriquezMultivectorFieldsConnections1998}
to see this equivalence). In this work we will center
ourselves in the connection approach, in which boundary conditions are imprinted onto the connection itself.
As we restrict ourselves to fields satisfying $\sigma(\partial X)\subseteq B$, \textbf{h}
induces canonically a connection $\partial\textbf{h}$ on the bundle $\pi_{B\partial X}
:B\to\partial X$ under the additional condition $\textbf{h}_z\left(T_z B\right)\subseteq T_z B$ $\forall
z\in B$ (which just adds up to the equation $\iota_{\textbf{h}}\Omega_L=n\Omega_L$ restriction).

\subsection{The singular case}
\label{subsect:the_singular_case}
In the case of \textbf{singular Lagrangians} (i.e. not regular ones), this connection $\Gamma$ only exists
in general for a particular submanifold $Z_f\subset Z$ which is found through the following algorithm
(see \cite{deleonGeometricalApproachClassical1996}).

We start first with the total manifold $Z_1\equiv Z$ and construct $Z_2$ as the submanifold
of points such that a solution of Equation \eqref{eq:geometric-De-Donder} exists, i.e.
\[
	Z_2:=\{z\in Z_1\,|\,\exists\textbf{h}_z:T_z Z_1\to T_z Z_1\text{ linear projector such that } \textbf{h}_z^2=
	\textbf{h}_z, 
\]
\[
	\mathrm{ker}\textbf{h}_z=\mathcal{V}_z\pi_{ZX},	\iota_{\textbf{h}_z}\Omega_L(z)=n
	\Omega_L(z),\text{ and }\textbf{h}_{\tilde{z}}(T_{\tilde{z}}B)\subseteq T_{\tilde{z}}B\,
	\forall \tilde{z}\in B\}.
\] 
In the case that $Z_2$ is an embedded submanifold of $Z_1$ ($\mathrm{dim} Z_2>0$), and we have
$\textbf{h}_z(T_zZ_1)\subset T_z Z_2$ and $\textbf{h}_z(T_zB)\subset T_z B_2$ $\forall z\in Z_2$ (where
$B_2:=B\cap Z_2$), $\textbf{h}$ is the horizontal projector of an Ehresmann connection $\Gamma$
\textbf{defined on} $\bm{Z_2}$ satisfying the intrinsic field equation
(in the connection formulation).

The reason why these two additional conditions are imposed is that the resulting
horizontal local integral sections that solve the De Donder equations \textbf{must} live where
Equation \eqref{eq:geometric-De-Donder} is satisfied (i.e. $Z_2$ by definition)
and satisfy boundary conditions for $B_2$ in consequence. If a given 
horizontal local integral section $\sigma$ lives
in $Z_2$, $\textbf{h}\left(\frac{\partial}{\partial x^\mu}\big|_{\sigma(U)}\right)=\dd_z\sigma\left(\frac
{\partial}{\partial x^\mu}\big|_U\right)\in TZ_2$ for a given open set $U\subseteq X$,
i.e., \textbf{the connection must be tangent to} $\bm{Z_2}$, and
this is where the imposition $\textbf{h}_z(T_zZ_1)\subset T_z Z_2$ and
$\textbf{h}_z(T_zB)\subset T_z B_2$ $\forall z\in Z_2$ come from.
This is not guaranteed by the definition of $Z_2$ and so we need to go one step further.
Defining canonically $B_r$ at any subsequent step we construct
\[
	Z_3:=\{z\in Z_2\,|\,\exists\textbf{h}_z:T_z Z_1\to T_z Z_2 \text{ linear projector such that } \textbf{h}_z^2=
	\textbf{h}_z, 
\]
\[
	\mathrm{ker}\textbf{h}_z=\mathcal{V}_z\pi_{ZX},
	\iota_{\textbf{h}_z}\Omega_L(z)=n
	\Omega_L(z),\text{ and }\textbf{h}_{\tilde{z}}(T_{\tilde{z}} B_2)\subseteq T_{\tilde{z}}B_2\,\forall
	\tilde{z}\in B_2\},
\]
now, with the same reasoning, if $\textbf{h}_z(T_z Z_1)$ is not contained in $T_z Z_3$ and $\textbf{h}_z(T_z B)$ is not
contained in $T_z B_3$ $\forall z\in Z_3$ we continue with the third step. 
Proceeding further we obtain a sequence of embedded submanifolds of the form
\[
	\dots\hookrightarrow Z_3\hookrightarrow Z_2\hookrightarrow Z_1=J^1\pi=Z
\] 
up to the stabilization term in which, for some $f\in\NN$, we find $Z_{f+1}=Z_f$. This set will be denoted by $Z_f$ and
if $\mathrm{dim} Z_f>0$ (cf. \cite{deleonGeometricalApproachClassical1996}), then there exists a connection
$\Gamma$ in $\pi_{ZX}$ whose projector verifies the (geometric version of the) De Donder equations at the same 
time that satisfies the boundary conditions.
However, this is not equivalent to having a solution of the Euler-Lagrange equations. The existence of a local
section $\sigma$ satisfying the De Donder equations is equivalent to the horizontal distribution of the connection being
integrable (equivalently involutive by Frobenius Theorem
\ref{thm:Frobenius_theorem}) and this is equivalent to the connection being flat (i.e. with zero curvature, cf.
\cite{deleon1995ehresmann}). Note the crucial difference
with classical mechanics: as we deal with ordinary differential equations, integrability is guaranteed.

For singular Lagrangians, Proposition
\ref{prop:De_Donder_eq_to_E-L} can be substituted with an holonomicity condition for De Donder implying Euler-Lagrange.
With all this, we can build an additional algorithm upon $Z_f$ to find a submanifold in which Euler-Lagrange equations
are met.

\subsection{Hamiltonian forms and multisymplectic brackets}
\label{subsect:hamiltonian_forms_and_multisymplectic_brackets}
In this subsection we will introduce the concept of multisymplectic
brackets as the natural generalization of Poisson brackets in
symplectic classical mechanics and state a result relating regular
theories with multisymplectic Poincaré-Cartan frameworks.
We recall Definition \ref{def:multisymplectic_forms_and_manifold}
in which the concept of multisymplectic forms was introduced.

For $n=0$ (time-dependent Lagrangian mechanics), the structure $(Z,\Omega_L, dt)
$ is a cosymplectic manifold in the case of a regular Lagrangian 
(cf. \cite{leonMethodsDifferentialGeometry1989} to check the peculiarities of this type of structures).
For $n>0$ we can also characterize regular
Lagrangians in an equivalent way using Definitions
\ref{def:lagrangian_form_function_and_regularity} and
\ref{def:multisymplectic_forms_and_manifold} with the following proposition
\begin{proposition}\label{prop:reg_lagr_implies_multisymp}
	For $n>0$, the Lagrangian $L$ is regular iff $\Omega_L$ is a multisymplectic
	form
\end{proposition}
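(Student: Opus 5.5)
Since $\Omega_L=-\dd\Theta_L$ is automatically closed, the statement reduces to showing that, for $n>0$, non-degeneracy of $\Omega_L$ at each point of $Z$ is equivalent to invertibility of the Hessian $W^{\mu\nu}_{ij}:=\partial^2L/\partial z^i_\mu\partial z^j_\nu$ there. I will work pointwise in the adapted coordinates $(x^\mu,y^i,z^i_\mu)$. I will expand the local expression of $\Omega_L$ from Definition \ref{def:poincare-cartan-forms} as
\[
\Omega_L=-\theta^i\wedge\Big(\tfrac{\partial L}{\partial y^i}-\tfrac{\partial^2L}{\partial x^\mu\partial z^i_\mu}-\tfrac{\partial^2 L}{\partial y^j\partial z^i_\mu}\dd y^j\wedge\cdots\Big)+\,W^{\mu\nu}_{ij}\,\theta^i\wedge\dd z^j_\nu\wedge\dd^nx_\mu+\cdots,
\]
that is, as a combination of $\theta^i\wedge\dd z^j_\nu\wedge\dd^nx_\mu$, of $\theta^i\wedge\dd y^j\wedge\dd^nx_\mu$ and of $\theta^i\wedge\dd^{n+1}x$. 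It is convenient to use the coframe $\{\dd x^\mu,\theta^i,\dd z^i_\mu\}$ and its dual frame $\{D_\mu=\partial_\mu+z^i_\mu\partial_{y^i},\ \partial_{y^i},\ \partial_{z^i_\mu}\}$.

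For ($\Leftarrow$), assume the Hessian is regular and let $v=a^\mu D_\mu+b^i\partial_{y^i}+c^i_\mu\partial_{z^i_\mu}$ satisfy $\iota_v\Omega_L=0$. I will extract components in three steps.
\begin{enumerate}
\item Since $\Omega_L$ contains no $\dd z\wedge\dd z$ terms, the coefficient of $\dd z^j_\nu\wedge\dd^nx_\mu$ in $\iota_v\Omega_L$ is $W^{\mu\nu}_{ij}b^i$. Regularity then gives $b=0$.
\item With $b=0$, the terms $\theta^i\wedge\dd y^j\wedge\dd^nx_\mu$ no longer contribute through $b$. The coefficient of $\theta^i\wedge\dd^nx_\mu$ becomes $-W^{\mu\nu}_{ij}c^j_\nu$ plus terms proportional to $a^\lambda$.
\item The coefficient of terms of the form $\theta^i\wedge\dd z^j_\nu\wedge\dd^{n-1}x_{\mu\lambda}$ is $W^{\mu\nu}_{ij}a^\lambda$ (up to sign), and it survives only because $n\ge1$, since for $n\ge1$ the forms $\dd^{n-1}x_{\mu\lambda}$ are nonzero. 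Because $W\neq0$, this forces $a=0$. Step 2 then reduces to $W c=0$, so $c=0$.
\end{enumerate}

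For ($\Rightarrow$), suppose $W$ is singular at some point, and pick a nonzero $c^j_\nu$ in its kernel, i.e. $W^{\mu\nu}_{ij}c^j_\nu=0$ for all $i,\mu$. Take $v=c^j_\nu\partial_{z^j_\nu}$. Then $\iota_v\Omega_L=-W^{\mu\nu}_{ij}c^j_\nu\,\theta^i\wedge\dd^nx_\mu$ up to sign, since only the Hessian term contains $\dd z$. This vanishes, so $\Omega_L$ is degenerate.

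I expect the main obstacle to be the bookkeeping in step 3 of the converse direction. One has to isolate an independent basis element of $\Lambda^{n+1}$ that only $a^\lambda$ feeds into, and the cross terms from $\partial^2L/\partial y\partial z$ and the $\dd^{n+1}x$ terms must be checked not to pollute it. Concretely, I would contract with $\partial_{y^i}$ and $\partial_{z^j_\nu}$ after $b=0$ is known, obtaining $W^{\mu\nu}_{ij}\,a^\lambda\,\dd^{n-1}x_{\mu\lambda}$ restricted to the base. For $n\ge1$ this vanishing forces $a^\lambda W^{\mu\nu}_{ij}=0$ for $\lambda\ne\mu$. Since $W$ is invertible, some entry $W^{\mu\nu}_{ij}$ is nonzero, and with $\dim X\ge2$ we may choose $\lambda\neq\mu$ for each $\lambda$, giving $a=0$. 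This is also exactly where $n=0$ fails: there $\Omega_L$ is only cosymplectic, with the Reeb-type kernel along $a$.
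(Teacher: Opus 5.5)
Your strategy, a pointwise computation in the coframe $\{\dd x^\mu,\theta^i,\dd z^i_\mu\}$, is the standard one; the paper gives no argument of its own and only cites \cite{leonSymmetriesClassicalField2004}. Your singular-Hessian direction and your steps 1--2 are correct.

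\textbf{The gap is in step 3}, exactly the bookkeeping you flagged. Once $b=0$, the only contribution of type $\theta\wedge\dd z\wedge(\text{base }(n-1)\text{-form})$ is indeed $a^\lambda W^{\mu\nu}_{ij}\,\theta^i\wedge\dd z^j_\nu\wedge\dd^{n-1}x_{\mu\lambda}$. However, $\dd^{n-1}x_{\mu\lambda}=-\dd^{n-1}x_{\lambda\mu}$. So the independent basis elements are those with $\mu<\lambda$, and their coefficients are $a^\lambda W^{\mu\nu}_{ij}-a^\mu W^{\lambda\nu}_{ij}$, not $a^\lambda W^{\mu\nu}_{ij}$. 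Your concrete contraction with $\partial_{y^i}$ and $\partial_{z^j_\nu}$ has the same issue: it gives $\sum_{\mu,\lambda}a^\lambda W^{\mu\nu}_{ij}\,\dd^{n-1}x_{\mu\lambda}=0$, which only kills the part antisymmetric in $(\mu,\lambda)$. What you actually obtain is
\[
a^\lambda W^{\mu\nu}_{ij}=a^\mu W^{\lambda\nu}_{ij}\qquad\text{for all }\mu,\lambda,i,j,\nu,
\]
and not $a^\lambda W^{\mu\nu}_{ij}=0$ for $\lambda\neq\mu$. With the correct relation, knowing that $W\neq0$, or that some entry of $W$ is nonzero, gives nothing. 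For $n=m=1$ the relation reads $a^1(W^{00},W^{01})=a^0(W^{10},W^{11})$. For $a\neq0$ this only says the two rows of $W$ are linearly dependent; for example, $W$ with all entries equal to $1$ and $a=(1,1)$ satisfies it.

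\textbf{The missing ingredient} is to use the full invertibility of $W$ through an explicit null vector. Suppose $a^\lambda\neq0$ for some $\lambda$. Choose $\mu\neq\lambda$ (this is where $n\geq1$ genuinely enters) and any nonzero $\beta\in\RR^m$, and set $u_{(k,\rho)}:=\beta_k\bigl(a^\lambda\delta^\mu_\rho-a^\mu\delta^\lambda_\rho\bigr)$. The relation above gives $\sum_{k,\rho}u_{(k,\rho)}W^{\rho\nu}_{kj}=\sum_k\beta_k\bigl(a^\lambda W^{\mu\nu}_{kj}-a^\mu W^{\lambda\nu}_{kj}\bigr)=0$ for all $(j,\nu)$. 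Meanwhile $u_{(k,\mu)}=\beta_k a^\lambda$ shows $u\neq0$, contradicting regularity; hence $a=0$. Equivalently, $a\neq0$ would make every row block $(W^{\rho\nu}_{kj})_{k,j,\nu}$ a multiple of a single one, so $\operatorname{rank}W\leq m<(n+1)m$. With $a=0$, your step 2 reduces to $W^{\mu\nu}_{ij}c^j_\nu=0$, so $c=0$, and the rest of your proof stands as written.
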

\begin{proof}
	See \cite{leonSymmetriesClassicalField2004}.
\end{proof}
For the following definitions, we will suppose $n>0$.
In the case of $(Z, \Omega_L)$ being a multisymplectic manifold.

We can now introduce Hamiltonian forms and vector fields in similarity to the symplectic
formalism (in that case Hamiltonian functions and vector fields)
\begin{definition}[Hamiltonian forms and vector fields]
	\label{def:hamiltonian_forms_and_vector_fields}
	A form $\alpha\in\Omega^n(Z)$ and a vector field $X_\alpha\in\mathfrak{X}(Z)$
	are called \textbf{Hamiltonian form} and \textbf{vector field}, respectively, whenever
	\[
		\dd\alpha=\iota_{X_\alpha}\Omega_L.
	\]
\end{definition}
Two Hamiltonian forms equivalent up to a closed form have the same
Hamiltonian vector fields, and for any given Hamiltonian form $\alpha$,
$X_\alpha$ is uniquely determined up to an element of the kernel of
\begin{align*}
\flat_{\Omega_L}\colon\mathfrak{X}(Z)&\to\Omega^{n+1}Z\\
X &\mapsto\iota_X\Omega_L.
\end{align*}
Hence for a regular Lagrangian $L$ (equivalently for $(Z,\Omega_L)$ being multisymplectic),
Proposition \ref{prop:reg_lagr_implies_multisymp}
implies (by non-degeneracy of $\Omega_L$) uniqueness of the vector field
$X_\alpha$ in the case that it exists.

We now define the multisymplectic bracket operation as a generalization
of the symplectic Poisson bracket operation which, as it is well known, governs the
dynamical evolution of symplectic Hamiltonian settings.
\begin{definition}[Multisymplectic bracket]\label{def:multisymplectic_bracket}
 	For two Hamiltonian forms $\alpha$ and $\beta$, we define its
 	\textbf{multisymplectic bracket} as
 	\[
 		\{\alpha,\beta\}=\iota_{X_\alpha}\iota_{X_\beta}\Omega_L,
 	\]
 	for any two $X_\alpha$, $X_\beta$ Hamiltonian vector fields of $\alpha$ and $\beta$
 	respectively, giving us an $n-$form.
 \end{definition} 
Hamiltonian forms are closed under this bracket, and the Hamiltonian
field of the bracket of two Hamiltonian forms is exactly the Lie 
bracket of the Hamiltonian vector fields associated with that two
Hamiltonian forms
\begin{proposition}\label{prop:field_of_bracket_is_Lie_bracket}
	Let $\alpha$ and $\beta$ be two Hamiltonian $n-$forms, then $\{\alpha,\beta\}$
	is a Hamiltonian $n-form$ and we have $X_{\{\alpha,\beta\}}=[X_\alpha,X_\beta]$
	for any two $X_\alpha$, $X_\beta$ Hamiltonian vector fields of $\alpha$ and $\beta$,
 	respectively.
\end{proposition}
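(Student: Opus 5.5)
The plan is to compute $\iota_{[X_\alpha,X_\beta]}\Omega_L$ directly and show it equals $\dd\{\alpha,\beta\}$. This proves both claims at once: $\{\alpha,\beta\}$ is Hamiltonian, and $[X_\alpha,X_\beta]$ is a Hamiltonian vector field for it. Uniqueness of the Hamiltonian vector field in the regular case then gives $X_{\{\alpha,\beta\}}=[X_\alpha,X_\beta]$. In the degenerate case the statement is read as \enquote{is a Hamiltonian vector field of}, i.e.\ defined up to $\ker\flat_{\Omega_L}$.

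The first step is to note that every Hamiltonian vector field preserves $\Omega_L$. Since $\Omega_L=-\dd\Theta_L$ is closed, Cartan's formula gives
\[
\pounds_{X_\alpha}\Omega_L=\dd\iota_{X_\alpha}\Omega_L+\iota_{X_\alpha}\dd\Omega_L=\dd\dd\alpha=0,
\]
and the same holds for $X_\beta$.

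The second step uses the standard identity $\iota_{[X,Y]}=[\pounds_X,\iota_Y]$. Together with the first step it gives
\[
\iota_{[X_\alpha,X_\beta]}\Omega_L=\pounds_{X_\alpha}\iota_{X_\beta}\Omega_L-\iota_{X_\beta}\pounds_{X_\alpha}\Omega_L=\pounds_{X_\alpha}\iota_{X_\beta}\Omega_L.
\]
The third step expands this Lie derivative with Cartan's formula again:
\[
\pounds_{X_\alpha}\iota_{X_\beta}\Omega_L=\dd\iota_{X_\alpha}\iota_{X_\beta}\Omega_L+\iota_{X_\alpha}\dd\iota_{X_\beta}\Omega_L .
\]
The last term vanishes, because $\iota_{X_\beta}\Omega_L=\dd\beta$ and so $\dd\iota_{X_\beta}\Omega_L=\dd\dd\beta=0$. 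What remains is $\dd\iota_{X_\alpha}\iota_{X_\beta}\Omega_L=\dd\{\alpha,\beta\}$, which is exactly the defining relation of Definition \ref{def:hamiltonian_forms_and_vector_fields} for the pair $(\{\alpha,\beta\},[X_\alpha,X_\beta])$.

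The main obstacle is not computational but one of well-definedness and sign conventions. First, the bracket is built from arbitrary choices of $X_\alpha$ and $X_\beta$. Changing $X_\beta$ by an element $K$ of $\ker\flat_{\Omega_L}$ leaves $\iota_{X_\alpha}\iota_{X_\beta}\Omega_L$ unchanged, since $\iota_{X_\alpha}\iota_K\Omega_L=\iota_{X_\alpha}0=0$. Changing $X_\alpha$ by $K$ is also harmless, by antisymmetry: $\iota_K\iota_{X_\beta}\Omega_L=-\iota_{X_\beta}\iota_K\Omega_L=0$. So the argument above works for any admissible choice of Hamiltonian vector fields. Second, I will check the sign convention of $\iota_{[X,Y]}=\pounds_X\iota_Y-\iota_Y\pounds_X$ against the ordering $\iota_{X_\alpha}\iota_{X_\beta}$ used in Definition \ref{def:multisymplectic_bracket}. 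This is where a sign could flip to $-[X_\alpha,X_\beta]$; with the ordering chosen in the definition the signs match as computed above.
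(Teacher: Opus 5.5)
Your proposal is correct and follows the paper's proof essentially step for step. Both arguments use $\iota_{[X_\alpha,X_\beta]}=\pounds_{X_\alpha}\iota_{X_\beta}-\iota_{X_\beta}\pounds_{X_\alpha}$, kill the second term via $\pounds_{X_\alpha}\Omega_L=0$ (from $\dd\Omega_L=0$ and $\dd\dd\alpha=0$), and apply Cartan's formula to $\pounds_{X_\alpha}\dd\beta$ to arrive at $\dd\iota_{X_\alpha}\iota_{X_\beta}\Omega_L=\dd\{\alpha,\beta\}$. Your extra check that the bracket does not depend on the chosen Hamiltonian vector fields modulo $\ker\flat_{\Omega_L}$ is also correct, and it makes explicit a point the paper's proof leaves unstated.
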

\begin{proof}
	This follows straightforward using Cartan's magic formula for the Lie
	derivative of a vector field ($\pounds_X=\iota_X\dd-\dd\iota_X$) by just nothing
	first that, by Leibniz rule of the Lie derivative
	\[
		\pounds_{X_\alpha}(\iota_{X_\beta}\Omega_L)=\iota_{\pounds_{X_\alpha}
		X_\beta}\Omega_L+\iota_{X_\beta}(\pounds_{X_\alpha}\Omega_L)=
		\iota_{[X_\alpha,{X_\beta}]}\Omega_L+\iota_{X_\beta}(\pounds_{X_\alpha}
		\Omega_L)
	\]
	where we used the well known relation $\pounds_{X_\alpha}X_\beta
	=[X_\alpha,X_\beta]$; then, isolating the first term of the last expresion
	\[
		\iota_{[X_\alpha,X_\beta]}\Omega_L=\pounds_{X_\alpha}(\iota_{X_\beta}
		\Omega_L)-\iota_{X_\beta}(\pounds_{X_\alpha}\Omega_L)=\pounds_{X_\alpha}
		(\dd\beta)-\iota_{X_\beta}\cancel{dd\alpha}-\iota_{X_\beta}\iota_{X_\alpha}
		\cancel{\dd\Omega_L}
	\]
	\[
		=\dd\iota_{X_\alpha}\dd\beta+\iota_{X_\alpha}\cancel{\dd\dd\beta}=\dd\iota_{X_\alpha}
		\iota_{X_\beta}\Omega_L=\dd\{\alpha,\beta\}.
	\]
\end{proof}

\subsection{The graded Lie algebra of Hamiltonian forms}
\label{subsec:graded}

The bracket introduced in Definition \ref{def:multisymplectic_bracket} was nothing more
than the spearhead of a much richer algebraic structure, a result due to 
\cite{cantrijn1996hamiltonian} that needs of the proper introduction of 
\textbf{multivector fields} and the corresponding \textbf{Schouten--Nijenhuis} bracket
(a generalization of the Lie bracket for vector fields).
The sketch of the idea is the following: $\Omega_L$ is a $(n+2)$-form, hence contracting it with
a \textbf{single} vector field yields a $(n+1)$-form that define Hamiltonian $n-$forms up
to closed forms as mentioned in Definition \ref{def:hamiltonian_forms_and_vector_fields}.
Now the generalization is then straightforward. What would happen if instead we contract
$\Omega_L$ with higher-order vector fields (i.e. multivector fields)? Indeed in this way
we can obtain generalized Hamiltonian forms from degree $n$ \textbf{all the way down to}
degree $0$; these Hamiltonian forms happen to form a graded Lie algebra (up to closed forms)
upon a generalized multisymplectic bracket operation \cite{cantrijn1996hamiltonian}.

Before introducing all these concepts, let us state some
notation conventions. We denote by $\bigwedge^p Z$ the usual bundle of exterior $p-$forms
on $Z$ while $\bigvee_p Z$ denotes the bundle of $p-$vectors (contravariant skew-symmetric
$p$-tensors) while the modules of differential $p-$forms and $p-$vector fields will be
denoted by $\Omega^p(Z)$ and $\mathfrak{X}^p(Z)$ respectively. We also
denote by $\Omega^p_{cl}(Z)\subseteq\Omega^p(Z)$ the module of closed $p-$forms.
\subsubsection*{Multivector fields and the Schouten--Nijenhuis bracket}
 
Recall that $\bigvee_p Z$ denotes the bundle of $p$-vectors on $Z$, i.e.\
the bundle whose fibre at $z \in Z$ is $\bigwedge^p(T_zZ)$, and that
$\mathfrak{X}^p(Z) := \Gamma\left(\bigvee_p Z\right)$ is the module of
$p$-vector fields on $Z$, so that $\mathfrak{X}^1(Z) = \mathfrak{X}(Z)$ and
$\mathfrak{X}^\bullet(Z) = \bigoplus_{p \geq 0} \mathfrak{X}^p(Z)$ is a
graded commutative algebra under the exterior (wedge) product, i.e.
$U \wedge V = (-1)^{pq}\, V \wedge U$ for $U \in \mathfrak{X}^p(Z)$,
$V \in \mathfrak{X}^q(Z)$.
 
The Lie bracket of vector fields admits a unique extension to
$\mathfrak{X}^\bullet(Z)$ as a graded derivation of the wedge product $\wedge$; this
extension is the Schouten--Nijenhuis bracket, defined as follows

\begin{definition}[Schouten--Nijenhuis bracket]
	We define the \textbf{Schouten--Nijen\-huis bracket} as the unique
	$\RR-$bilinear operation
	\[
		[\,\cdot,\cdot\,] : \mathfrak{X}^p(Z) \times \mathfrak{X}^q(Z)
		\longrightarrow\mathfrak{X}^{p+q-1}(Z)
	\]
	such that
	\[
		[X_1\wedge\dots\wedge X_p,Q]=\sum_{i=1}^p(-1)^{i+1}X_1\wedge\dots
		\wedge\widehat{X_i}\wedge\dots\wedge X_p\wedge[X_i,Q]
	\]
	for all $X_1,\dots,X_p\in\mathfrak{X}(Z)$, $Q\in\mathfrak{X}^q(Z)$ where
	$[X_i,Q]:=\pounds_{X_i}Q$ is the Lie derivative of a multivector field
	w.r.t. a vector field defined as
	\[
		L_X(Y_1\wedge\dots\wedge Y_q):=\sum_{j=1}^qY_1\wedge\dots
		\wedge[X,Y_j]\wedge\dots\wedge Y_q
	\]
	where $Q$ can be always decomposed as a sum of multivector fields of the
	form $Y_1\wedge\dots\wedge Y_q$.
\end{definition}
\begin{remark}
	The Schouten--Nijenhuis bracket has the following properties. Let $P\in
	\mathfrak{X}^p(Z)$, $Q\in\mathfrak{X}^q(Z)$ and $R\in\mathfrak{X}^r(Z)$,
	then
	\begin{enumerate}
		\item $[P,Q]=(-1)^{pq}[Q,P]$,
		\item $[P,Q\wedge R]=[P,Q]\wedge R+(-1)^{pq+q} Q\wedge[P,R]$,
		\item $[X_1\wedge\dots\wedge X_p,Y_1\wedge\dots\wedge Y_q]=(-1)^{pq}
			\sum_{j=1}^q(-1)^{j+1}Y_1\wedge\dots\wedge\widehat{Y_j}\wedge\dots
			\wedge Y_q\wedge[Y_j,X_1\wedge\dots\wedge X_p]$
		\item $(-1)^{p(r-1)}[P,[Q,R]]+(-1)^{q(p-1)}[Q,[R,P]]+(-1)^{r(q-1)}[R,
		[P,Q]]=0$ (graded Jacobi identity).
	\end{enumerate}
	(see \cite{vaismanLecturesGeometryPoisson1994} for a proof of these
	properties and for a more detailed discussion of the Schouten--Nijenhuis bracket
	\footnote{In the literature there are several sign conventions for
	the definition of the Schouten-Nijenhuis bracket. We remark our adoption of
	the convention in \cite{vaismanLecturesGeometryPoisson1994}.})
\end{remark}

The drop in one unit, although not natural of a graded Lie algebra, is needed
in order to reproduce the behaviour of the usual Lie bracket of vector fields
(i.e. the Lie bracket of two vector fields must be a vector field). In order
to fix this, we could change the indexing of the graded structure of
$\mathfrak{X}^\bullet(Z)$. This kind of reindexing will not be done yet but
will be fundamental for Theorem \ref{thm:graded-lie-algebra}.
 
We also extend the interior product to multivector fields by declaring
\begin{equation}
  \label{eq:iota-multivector}
  \iota_{\xi_1 \wedge \dots \wedge \xi_p}\, \Omega
  := \iota_{\xi_p} \dots \iota_{\xi_1} \Omega ,
  \qquad \forall\xi_1, \dots, \xi_p \in \mathfrak{X}(Z), \forall\Omega\in \Omega(Z)
\end{equation}
so that $\iota_U : \Omega^r(Z) \to \Omega^{r-p}(Z)$ for
$U \in \mathfrak{X}^p(Z)$. It is easy to see that for all $U_a\in
\mathfrak{X}^a(Z)$ and $U_b\in\mathfrak{X}^b(Z)$
\begin{equation}
	\label{eq:anticommuting_multivectors_rule}
	\iota_{U_b\wedge U_a}=\iota_{U_a}\iota_{U_b}=(-1)^{ab}\iota_{U_b}\iota_{U_a}
\end{equation}
\subsubsection*{Hamiltonian forms of arbitrary degree}
In the following we extend Definition \ref{def:hamiltonian_forms_and_vector_fields} from
Hamiltonian $(n-)$forms and $(1-)$vector fields to Hamiltonian $\ell-$forms
and ($n+1-\ell$)-vector fields for all $\ell\in\{0,\dots,n\}$ in a completely parallel way
noting the following relation in the grading
\[
	\underbrace{n}_{\text{form}}+\underbrace{1}_{\text{vector}}=(n+2)-1=
	\underbrace{\ell}_{\text{form}}+\underbrace{(n+1-\ell)}_{\text{multi-vector}},
	\quad\forall\ell\in\{0,\dots,n\}.
\]
\begin{definition}[Hamiltonian forms and multivector fields]
\label{def:ham-forms-general}
Let $0 \leq \ell \leq n$. A form 
$\alpha\in \Omega^{\ell}(Z)$ and a multivector field $U\in\mathfrak{X}^{n+1-\ell}(Z)$
are called a \textbf{Hamiltonian} $\bm{\ell-}$\textbf{form} and an associated
\textbf{Hamiltonian} \bm{$(n+1-\ell)-$}\textbf{vector field}, respectively,
whenever
\[
  \dd\alpha = \iota_U \Omega_L .
\]
We write $\mathcal{H}^{\ell}(Z)$ for the space of Hamiltonian
$\ell$-forms and $\mathfrak{X}^{n+1-\ell}_{H}(Z)$ for the space of
Hamiltonian $(n+1-\ell)$-vector fields.
\end{definition}
 
For $\ell = n$ we have $U \in \mathfrak{X}^1(Z)
= \mathfrak{X}(Z)$ and Definition~\ref{def:ham-forms-general} reduces
verbatim to Definition \ref{def:hamiltonian_forms_and_vector_fields}; in particular
in the case of symplectic mechanics
$\mathcal{H}^0(Z) \subseteq C^\infty(Z)$ plays the role of the space of
observables of classical mechanics ($n=0$) with the Poisson bracket defined
canonically upon the symplectic form.
 
As in the case already discussed, $\bm{\alpha}$ \textbf{determines} $\bm{U}$
\textbf{only up to a multivector field which annihilates} $\bm{\Omega_L}$
(\textbf{even in the multisymplectic case}, i.e. for a regular Lagrangian,
non-degeneracy of $\Omega_L$ does not say anything about multivectors with degree
strictly greater than one), and $\bm{U}$ \textbf{determines} $\bm{\alpha}$
\textbf{only up to a closed} $\bm{\ell}$\textbf{-form}. Since closed forms are
trivially Hamiltonian (consider the $0$ multivector field), $\Omega^{\ell}_{cl}(Z)
\subseteq\mathcal{H}^{\ell}(Z)$ and we can eliminate this `up to closed forms'
redundancy by quotienting the graded spaces of Hamiltonian forms as
\[
  \widehat{\mathcal{H}}^{\ell}(Z)
  := \mathcal{H}^{\ell}(Z)/\Omega^{\ell}_{cl}(Z),
  \qquad
  \widehat{\mathcal{H}}^{\ast}(Z):=\bigoplus_{\ell=0}^{n}
  \widehat{\mathcal{H}}^{\ell}(Z),
\]
whose elements (equivalence classes) we denote by $\hat\alpha$. The elements of
$\widehat{\mathcal{H}}^{\ell}(Z)$ are in one-to-one correspondence with
equivalence classes of Hamiltonian $(n+1-\ell)$-vector fields, two of them
being identified when their difference annihilates $\Omega_L$.
 
\subsubsection*{The graded bracket}
 
The key computation is the following, which generalizes
Proposition \ref{prop:field_of_bracket_is_Lie_bracket} \footnote{Note the
minus sign on \eqref{eq:multibracket_to_multibraces}; it is needed
in order to reduce to the $\ell=n$ case of Proposition
\ref{prop:field_of_bracket_is_Lie_bracket}. Indeed consider that $U_1,U_2
\in\mathfrak{X}(Z)$, then by Equation \ref{eq:iota-multivector} we have 
$\iota_{[U_1,U_2]}\Omega_L=-\dd\iota_{U_1\wedge U_2}\Omega_L=-\dd\iota_{U_2}
\iota_{U_1}\Omega_L=\dd\iota_{U_1}\iota_{U_2}\Omega_L$.}
and is a direct consequence of the properties of the Schouten--Nijenhuis bracket together with
$\dd\Omega_L = 0$.
 
\begin{proposition}[\cite{cantrijn1996hamiltonian}]
\label{prop:SN-hamiltonian}
Let $U_1 \in \mathfrak{X}^{n+1-\ell_1}_{H}(Z)$ and
$U_2 \in \mathfrak{X}^{n+1-\ell_2}_{H}(Z)$ be Hamiltonian multivector
fields associated with $\alpha_1 \in \mathcal{H}^{\ell_1}(Z)$ and
$\alpha_2 \in \mathcal{H}^{\ell_2}(Z)$ with $\ell_1$ not necessarily equal
to $\ell_2$. Then $[U_1,U_2]$ is again
Hamiltonian, and
\begin{equation}
\label{eq:multibracket_to_multibraces}
  \iota_{[U_1,U_2]}\Omega_L=-\,\dd\,
  \iota_{U_1 \wedge U_2}\Omega_L=\dd[-\iota_{U_1 \wedge U_2}\Omega_L] .
\end{equation}
\end{proposition}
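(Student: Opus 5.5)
The plan is to prove an operator identity for the Schouten--Nijenhuis bracket and then specialise it to $\Omega_L$. For $P\in\mathfrak{X}^p(Z)$, define the generalized Lie derivative on forms by
\[
\pounds_P:=\iota_P\,\dd-(-1)^p\,\dd\,\iota_P ,
\]
with $\iota_P$ as in \eqref{eq:iota-multivector}. For $p=1$ this is Cartan's formula with the sign convention $\pounds_X=\iota_X\dd-\dd\iota_X$ used in the proof of Proposition \ref{prop:field_of_bracket_is_Lie_bracket}. The target is a graded-commutator formula of the form
\begin{equation}
\iota_{[P,Q]}=\epsilon_1\,\pounds_P\,\iota_Q+\epsilon_2\,\iota_Q\,\pounds_P ,
\end{equation}
where the signs $\epsilon_1,\epsilon_2=\pm1$ depend only on $p$ and $q$. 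For $p=q=1$ it must reduce to $\iota_{[X,Y]}=\pounds_X\iota_Y-\iota_Y\pounds_X$, which was already used in Proposition \ref{prop:field_of_bracket_is_Lie_bracket}.

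\textbf{Proving the identity.} Both sides are $\RR$-bilinear and local, so it suffices to verify it on generators, namely functions (as $0$-vectors) and wedge products of vector fields. The key steps, in order, are:
\begin{enumerate}
\item Settle the base cases: $p=q=1$ is the classical identity, and the cases involving a function, $[X,f]=X(f)$, are immediate.
\item Induct on $q$ using property ii) of the Schouten--Nijenhuis bracket, $[P,Q\wedge R]=[P,Q]\wedge R\pm Q\wedge[P,R]$, together with the rule \eqref{eq:anticommuting_multivectors_rule}, $\iota_{Q\wedge R}=\iota_R\iota_Q$. This reduces the identity for $Q\wedge R$ to the identities for $Q$ and $R$.
\item Induct on $p$ using property i), $[P,Q]=(-1)^{pq}[Q,P]$, or the defining expansion of $[X_1\wedge\dots\wedge X_p,Q]$, which expresses the bracket through the $\pounds_{X_i}Q$.
\end{enumerate}
At each stage $\epsilon_1,\epsilon_2$ are fixed by matching the signs. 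This sign tracking is the step I expect to be the main obstacle. The reversed ordering in \eqref{eq:iota-multivector} and the Vaisman convention for the bracket make it easy to pick up spurious factors $(-1)^{pq}$. To control it, I would check every candidate sign against the vector-field case computed in the footnote before the proposition.

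\textbf{Applying the identity to $\Omega_L$.} Take $P=U_1$, $Q=U_2$ and apply the identity to $\Omega_L$.
\begin{itemize}
\item The term $\iota_{U_2}\pounds_{U_1}\Omega_L$ vanishes. Indeed $\iota_{U_1}\dd\Omega_L=0$ because $\Omega_L$ is closed, and $\dd\iota_{U_1}\Omega_L=\dd\dd\alpha_1=0$.
\item In $\pounds_{U_1}\iota_{U_2}\Omega_L$, the piece $\iota_{U_1}\dd\iota_{U_2}\Omega_L=\iota_{U_1}\dd\dd\alpha_2$ vanishes.
\end{itemize}
Only $\pm\dd\,\iota_{U_1}\iota_{U_2}\Omega_L=\pm\dd\,\iota_{U_2\wedge U_1}\Omega_L$ survives. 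Using \eqref{eq:anticommuting_multivectors_rule} to rewrite this as $\pm\dd\,\iota_{U_1\wedge U_2}\Omega_L$, and collecting the signs, should give exactly $\iota_{[U_1,U_2]}\Omega_L=-\dd\,\iota_{U_1\wedge U_2}\Omega_L$.

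This displays $\iota_{[U_1,U_2]}\Omega_L$ as an exact form, so $[U_1,U_2]$ is Hamiltonian with Hamiltonian form $-\iota_{U_1\wedge U_2}\Omega_L$. Its degree is $\ell_1+\ell_2-n$, consistent with $[U_1,U_2]\in\mathfrak{X}^{2n+1-\ell_1-\ell_2}(Z)$. As a final consistency check on the overall sign, I would verify the case $\ell_1=\ell_2=n$ against Proposition \ref{prop:field_of_bracket_is_Lie_bracket}, as the footnote does.
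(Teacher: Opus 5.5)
You have the right mechanism, and the paper has nothing more in mind: it gives no derivation beyond the vector-field footnote and the citation to \cite{cantrijn1996hamiltonian}. Your vanishing arguments in the last paragraph are also correct. The gap is the step you postpone: ``collecting the signs should give exactly $-\dd\,\iota_{U_1\wedge U_2}\Omega_L$''. It does not.

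Carry out the bookkeeping with the conventions the paper actually fixes: $\iota_{U_b\wedge U_a}=\iota_{U_a}\iota_{U_b}$, and $[X_1\wedge\dots\wedge X_p,Q]=\sum_i(-1)^{i+1}X_1\wedge\dots\wedge\widehat{X_i}\wedge\dots\wedge X_p\wedge[X_i,Q]$, which is consistent with property i). You find $\epsilon_1=(-1)^{q(p-1)}$ and $\epsilon_2=-1$. Hence, with $p=n+1-\ell_1$ and $q=n+1-\ell_2$,
\[
\iota_{[U_1,U_2]}\Omega_L=(-1)^{p+q+1}\,\dd\,\iota_{U_1\wedge U_2}\Omega_L=-(-1)^{\ell_1+\ell_2}\,\dd\,\iota_{U_1\wedge U_2}\Omega_L .
\]
So the uniform minus sign is correct only when $\ell_1+\ell_2$ is even. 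That includes $\ell_1=\ell_2=n$, the only check you (and the footnote) propose, so it cannot reveal the problem.

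You can see this for $p=1$ without any induction. From $[X,Q]=\pounds_XQ$ and the derivation property of $\pounds_X=\iota_X\dd+\dd\iota_X$ on contractions, $\iota_{[X,Q]}=\pounds_X\iota_Q-\iota_Q\pounds_X$. Therefore $\iota_{[X,U_2]}\Omega_L=\dd\,\iota_X\iota_{U_2}\Omega_L=(-1)^q\,\dd\,\iota_{X\wedge U_2}\Omega_L$, which has sign $+$ when $q$ is even.

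The argument uses only $\dd\Omega=0$, so a toy case shows the discrepancy is real. On $\RR^3$ take $\Omega=\dd x\wedge\dd y\wedge\dd z$, so $n=1$. Then $X=\partial_x$ and $U=g(x)\,\partial_y\wedge\partial_z$ are Hamiltonian, since $\iota_X\Omega=\dd(y\,\dd z)$ and $\iota_U\Omega=g(x)\,\dd x$ is exact. Yet $\iota_{[X,U]}\Omega=g'(x)\,\dd x=+\dd\,\iota_{X\wedge U}\Omega$.

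What your argument does prove is that $[U_1,U_2]$ is Hamiltonian, with Hamiltonian form $(-1)^{\ell_1+\ell_2+1}\iota_{U_1\wedge U_2}\Omega_L$. To get a degree-independent formula you must change a convention. For instance, with the forward ordering $\iota_{\xi_1\wedge\dots\wedge\xi_p}=\iota_{\xi_1}\cdots\iota_{\xi_p}$ and the same bracket, the identity holds uniformly with a plus sign.

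There is also a smaller slip that would contaminate your sign tracking. Your $\pounds_P=\iota_P\dd-(-1)^p\dd\iota_P$ reduces at $p=1$ to $\iota_X\dd+\dd\iota_X$, the correct Cartan formula, not to $\iota_X\dd-\dd\iota_X$. The minus version written in the proof of Proposition \ref{prop:field_of_bracket_is_Lie_bracket} is a typo; that computation actually uses the plus sign. If you anchor your sign checks to the typo, even the base case $\iota_{[X,Y]}=\pounds_X\iota_Y-\iota_Y\pounds_X$ fails.
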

This proposition relates, upon the injective (multivector field)$\mapsto$
(Hamiltonian form) assignment, Schouten-Nijenhuis bracket with the operation
appearing on the right-hand side of \eqref{eq:multibracket_to_multibraces};
the definition of the generalized multisymplectic bracket is then
straightforward.
\begin{definition}[Generalized multisymplectic bracket]
\label{def:graded-bracket}
For $\alpha_i \in \mathcal{H}^{\ell_i}(Z)$ with some associated Hamiltonian
multivector fields $U_i$ ($i=1,2$), we set
\[
  \{\alpha_1, \alpha_2\} := -\,\iota_{U_1 \wedge U_2}\Omega_L .
\]
\end{definition}
The right-hand side does not depend on the choice of the $U_i$ associated
with the $\alpha_i$, so the bracket is well defined; moreover it vanishes
identically whenever $\ell_1 + \ell_2 < n$. 
\begin{remark}
	As this bracket is $\RR-$bilinear and zero if any of the arguments is a closed
	form, the algebraic structure induced by $\{\cdot,\cdot\}$ on
	$\mathcal{H}^\ast(Z):=\oplus_{\ell=0}^n\mathcal{H}^\ell(Z)$ \textbf{passes canonically
	to the quotient space} $\widetilde{\mathcal{H}}^\ast(Z)$ upon the definition
	\[
		\{\hat{\alpha},\hat{\beta}\}:=\widehat{\{\alpha,\beta\}}.
	\]
\end{remark}

A degree count gives
\[
  \deg \{\alpha_1, \alpha_2\}
  = (n+2) - \big[(n+1-\ell_1) + (n+1-\ell_2)\big]
  = \ell_1 + \ell_2 - n ,
\]
so that $\{\cdot,\cdot\}$ maps
$\mathcal{H}^{\ell_1}(Z) \times \mathcal{H}^{\ell_2}(Z)
\to \mathcal{H}^{\ell_1+\ell_2-n}(Z)$. 

Note again that this is not the desired
behaviour for a graded Lie algebra -which has to map degrees $(\ell_1,\ell_2)$
to degree $\ell_1+\ell_2$, i.e. being additive on the degrees- as we mentioned
already for the Schouten--Nijenhuis bracket; we expect then to introduce a
shift -in the degrees of definition for the Hamiltonian forms- by fixing a so-called
\textbf{grading}.

\subsubsection*{The grading}
We define $\mathcal{K}^s(Z):=\widehat{\mathcal{H}}^{n-s}(Z)$ for all integer $s\in
\{0,\dots,n\}$ and $\mathcal{K}^s(Z):=\{0\}$ for all integers $s>n$. Fix some integer $s$,
then if we consider some $\hat{\alpha}\in\mathcal{K}^s(Z)=
\widehat{\mathcal{H}}^{n-s}(Z)$, its degree as an element of
$\widehat{\mathcal{H}}^{n-s}(Z)$ is $\deg\alpha:=n-s$ and so as an element of
$\mathcal{K}^s(Z)$ is grade is $s=n-(n-s)=n-\deg\alpha$; hence if we consider
any $\tilde{\alpha}$ Hamiltonian form of degree $\deg\alpha\in\{0,\dots,n\}$
(in the usual grading), we fix the new grading by
\begin{equation}
	\label{eq:grading:fix}
	|\hat{\alpha}|:=n-\deg\alpha
\end{equation}
and $|\hat{\alpha}|=0$ for $\deg\alpha>n$.

Using this new grading, the generalized bracket becomes additive. Indeed, let
$\hat{\alpha}_1\in\widehat{\mathcal{H}}^{\ell_1}(Z)$ and
$\hat{\alpha}_1\in\widehat{\mathcal{H}}^{\ell_1}(Z)$ then
\[
  \big|\{\hat\alpha_1,\hat\alpha_2\}\big|
  =n-\deg\{\alpha,\beta\}= n - (\ell_1+\ell_2-n) = (n-\ell_1)+(n-\ell_2)
\]
\[
	=(n-\deg\alpha_1)+(n-\deg\alpha_2)= |\hat\alpha_1| + |\hat\alpha_2|.
\]

The job is not fully done yet. Considering \eqref{eq:anticommuting_multivectors_rule}
and Definition \ref{def:graded-bracket}, we note that the sign of permuting
our graded bracket depends on the degrees of the Hamiltonian multivector fields
representatives, as $|\hat{\alpha_1}|=n-\deg \alpha_1=n-\ell_1$ and
$\deg U_1=n+1-\ell_1$, then $\deg U_1=|\hat{\alpha_1}|+1$ and so
\[
	\{\hat{\alpha_1},\hat{\alpha_2}\}=(-1)^{(|\hat{\alpha_1}|+1)(|\hat{\alpha_2}|+1)}
	\{\hat{\alpha_2},\hat{\alpha_1}\}.
\]
In order to have a consistent sign rule between permutation of the bracket and
the graded Jacobi identity, we expect to have instead the \textbf{Koszul sign
rule}
\[
	\{\hat{\alpha_1},\hat{\alpha_2}\}=-(-1)^{|\hat{\alpha_1}||\hat{\alpha_2}|}
	\{\hat{\alpha_2},\hat{\alpha_1}\},
\]
which is obtained under the \textbf{final redefinition of the bracket} as
\[
	\{\hat{\alpha_1},\hat{\alpha_2}\}^*:=(-1)^{|\hat{\alpha_1}|}
	\{\hat{\alpha_1},\hat{\alpha_2}\}
\]
which of course leaves the additivity of the bracket untouched. Finally we obtain
the desired result
\begin{theorem}[\cite{cantrijn1996hamiltonian,leonDescriptionClassicalField2025,
leonGradedPoissonGraded2025}]
\label{thm:graded-lie-algebra}
$\left(\bigoplus_{s \geq 0}\mathcal{K}^{s}(Z),\ \{\cdot,\cdot\}^{\ast}\right)$
is a graded Lie algebra with respect to the grading \eqref{eq:grading:fix};
that is, $\{\cdot,\cdot\}^{\ast}$ is $\mathbb{R}$-bilinear, additive in the
degree, graded skew-symmetric,
\[
  \{\hat\alpha,\hat\beta\}^{\ast}
  = -(-1)^{|\hat\alpha||\hat\beta|}\{\hat\beta,\hat\alpha\}^{\ast},
\]
and satisfies the graded Jacobi identity
\[
  (-1)^{|\hat\alpha||\hat\gamma|}
  \{\{\hat\alpha,\hat\beta\}^{\ast},\hat\gamma\}^{\ast}
  +(-1)^{|\hat\beta||\hat\alpha|}
  \{\{\hat\beta,\hat\gamma\}^{\ast},\hat\alpha\}^{\ast}
  +(-1)^{|\hat\gamma||\hat\beta|}
  \{\{\hat\gamma,\hat\alpha\}^{\ast},\hat\beta\}^{\ast}
  =0.
\]
Moreover, the subalgebra consisting of the classes of Hamiltonian
$n$-forms, i.e.\ the elements with $|\hat\alpha| = 0$, is an ordinary Lie
algebra.
\end{theorem}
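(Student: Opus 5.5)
The plan is to transport every statement to the level of Hamiltonian multivector fields, where the Schouten--Nijenhuis bracket already has the required properties, and then push them back through $U\mapsto\iota_U\Omega_L$. I would check four things in order: well-definedness on the quotient, additivity of the grading, graded skew-symmetry, and the graded Jacobi identity. The statement about $|\hat\alpha|=0$ then comes for free.

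\textbf{Well-definedness.} Suppose $U_1$ is replaced by $U_1+K$ with $\iota_K\Omega_L=0$. By \eqref{eq:anticommuting_multivectors_rule},
\[
\iota_{(U_1+K)\wedge U_2}\Omega_L=\iota_{U_1\wedge U_2}\Omega_L\pm\iota_{U_2}\iota_K\Omega_L ,
\]
and the extra term vanishes; the same holds in the second slot. If $\alpha_1$ is changed by a closed form, the zero multivector is an admissible Hamiltonian field for that closed form. So the bracket descends to $\widehat{\mathcal H}^\ast(Z)$, as in the Remark after Definition~\ref{def:graded-bracket}.

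\textbf{Additivity and skew-symmetry.} Additivity is the degree count already carried out before the statement. For skew-symmetry I use $\deg U_i=|\hat\alpha_i|+1$ together with \eqref{eq:anticommuting_multivectors_rule}. This gives $\{\hat\alpha_1,\hat\alpha_2\}=(-1)^{(a+1)(b+1)}\{\hat\alpha_2,\hat\alpha_1\}$, where $a=|\hat\alpha_1|$ and $b=|\hat\alpha_2|$. Inserting the factors $(-1)^a$ and $(-1)^b$ of the starred bracket, the total sign is $(-1)^{ab+a+b+1+a+b}=-(-1)^{ab}$, which is the Koszul rule. This is pure sign bookkeeping.

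\textbf{Graded Jacobi identity.} By Proposition~\ref{prop:SN-hamiltonian}, a Hamiltonian multivector field for $\{\alpha,\beta\}=-\iota_{U_\alpha\wedge U_\beta}\Omega_L$ is $[U_\alpha,U_\beta]$. Hence
\[
\{\{\alpha,\beta\},\gamma\}=-\iota_{[U_\alpha,U_\beta]\wedge U_\gamma}\Omega_L=-\iota_{U_\gamma}\iota_{[U_\alpha,U_\beta]}\Omega_L .
\]
I would then expand $\iota_{[U_\alpha,U_\beta]}$ with the standard commutator formula
\[
\iota_{[P,Q]}=\pm\bigl(\pounds_P\iota_Q-(-1)^{(p-1)q}\iota_Q\pounds_P\bigr),\qquad \pounds_P:=\iota_P\dd-(-1)^p\dd\iota_P ,
\]
which holds in Vaisman's convention up to an overall sign to be fixed. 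Using $\dd\Omega_L=0$ and $\iota_{U_\cdot}\Omega_L=\dd(\cdot)$, each double bracket becomes a sum of terms of the form $\iota_{U_\gamma}\iota_{U_\beta}\dd\alpha$ plus exact terms $\dd(\iota\iota\iota\Omega_L)$. In the cyclic sum with the prescribed signs, the non-exact terms should cancel pairwise. The Jacobiator is then exact, hence closed, and so zero in $\bigoplus_s\mathcal K^s(Z)$.

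An alternative route would be to apply the Schouten graded Jacobi identity directly to $U_\alpha,U_\beta,U_\gamma$. That only shows that the Jacobiator of the forms has zero Hamiltonian field modulo $\ker\Omega_L$, i.e.\ that it is closed, which is again enough in the quotient. I would use this as a cross-check.

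The \textbf{main obstacle} is this Jacobi step: matching sign conventions among the Schouten bracket, the contraction rule \eqref{eq:iota-multivector}, the factor $(-1)^{|\hat\alpha|}$, and the minus sign in Definition~\ref{def:graded-bracket}. I would first verify the whole computation in the case where all three forms have degree $n$, where it must reduce to the Jacobi identity for vector fields via Proposition~\ref{prop:field_of_bracket_is_Lie_bracket}, and then track how each sign changes with the degrees.

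Finally, elements with $|\hat\alpha|=0$ are closed under the bracket because the degree is additive. On these elements the Koszul signs are all $+1$ and the starred bracket equals the original one, so the statement reduces to an ordinary Lie algebra, consistent with Proposition~\ref{prop:field_of_bracket_is_Lie_bracket}.
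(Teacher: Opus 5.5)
Your treatment of well-definedness, additivity and graded skew-symmetry matches the discussion the paper places before the statement. That discussion covers independence of the choice of the $U_i$, passage to $\widehat{\mathcal{H}}^\ast(Z)$, the degree count, and the sign $(-1)^{(a+1)(b+1)}$ repaired by the factor $(-1)^{|\hat\alpha|}$. For the Jacobi identity the paper gives no argument and simply defers to the cited references. Your ``cross-check'' is the standard argument, and it is complete, so you should make it the main proof.

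Write $a=|\hat\alpha|$, $b=|\hat\beta|$, $c=|\hat\gamma|$. By Proposition~\ref{prop:SN-hamiltonian}, $(-1)^{a}[U_\alpha,U_\beta]$ is a Hamiltonian multivector field of $\{\hat\alpha,\hat\beta\}^{\ast}$. Hence the Jacobiator $\mathcal{J}$ of the statement satisfies $\dd\mathcal{J}=\iota_X\Omega_L$ with
\[
X=(-1)^{ac+b}[[U_\alpha,U_\beta],U_\gamma]+(-1)^{ab+c}[[U_\beta,U_\gamma],U_\alpha]+(-1)^{bc+a}[[U_\gamma,U_\alpha],U_\beta].
\]
Rewrite each term using $[[P,Q],R]=(-1)^{(p+q-1)r}[R,[P,Q]]$ with $p=a+1$, $q=b+1$, $r=c+1$. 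Each term then becomes $(-1)^{a+b+c+1}$ times the corresponding term of the Schouten identity $(-1)^{p(r-1)}[P,[Q,R]]+(-1)^{q(p-1)}[Q,[R,P]]+(-1)^{r(q-1)}[R,[P,Q]]=0$. So $X=0$, $\mathcal{J}$ is closed, and its class vanishes in the quotient. The sign conventions you were worried about are therefore compatible, and nothing more is needed.

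Your first route is unnecessary, and as described it has a degree slip. The form $\iota_{U_\gamma}\iota_{U_\beta}\dd\alpha=\iota_{U_\gamma}\iota_{U_\beta}\iota_{U_\alpha}\Omega_L$ has form degree one less than $\{\{\alpha,\beta\},\gamma\}$. It can therefore only enter through the exact term $\dd(\iota\iota\iota\Omega_L)$, not as a separate summand. In such an expansion the non-exact pieces are double-bracket terms like $\iota_{U_\gamma}\iota_{[U_\alpha,U_\beta]}\Omega_L$. Moreover, exactness of the Jacobiator is more than the quotient requires; closedness suffices.
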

 
\begin{remark}[Recovering the previous picture]
The last statement is exactly the content of
Proposition \ref{prop:field_of_bracket_is_Lie_bracket}: in the degree-zero sector all the signs
$(-1)^{|\hat\alpha||\hat\beta|}$ are trivial and the graded structure
collapses to the Lie algebra of Hamiltonian $n$-forms, isomorphic (in the
regular Lagrangian case) to the Lie
algebra $(\mathfrak{X}^1_{H}(Z), [\cdot,\cdot])$ of Hamiltonian vector
fields. In this sense the bracket of
Subsection \ref{subsect:hamiltonian_forms_and_multisymplectic_brackets}
is the `top layer' of Theorem \ref{thm:graded-lie-algebra}.
\end{remark}

\section{Hamiltonian setting}
\label{sect:hamiltonian_setting}
In time-dependent classical mechanics (as $L:TQ\to\RR$), the canonical momentum $\frac{\partial L}
{\partial\dot{q}}$ is nothing more than a function from the tangent bundle $TQ$
to the space of momenta, which is identified with the cotangent bundle $T^*Q$ (i.e., the
dual bundle of $TQ$).
Hence, for the development of a Hamiltonian counterpart to that of the
Lagrangian setting living on $Z$, a canonical generalization could be built
around the dual $Z^*$ (generalizing the phase space of classical mechanics).

The dual $(J^1\pi)^*$ -cf. \cite{gotayMomentumMapsClassical2004} for a proper
discussion- is the vector bundle over $Y$ with the set of affine
maps from $J^1_y\pi$ to $\Omega_x^{n+1}$ being the fiber over $y\in\pi^{-1}(x)
\subset Y$ ($(J^1\pi)^*$ is then built as the disjoint union of these fibers).
Luckily for us, there is an alternative formulation of this affine dual bundle
considering first the following set of spaces
\begin{definition}
	Setting for $r\in\NN$
	\[
	(\Omega_r^{n+1}Y)_y:=\left\{\sigma\in\bigwedge^{n+1}T^*_yY\,|\,\iota_{V_1}\dots\iota_{V_r}\sigma=0,\,\forall V_i\in\mathcal{V}_y\pi,\, i\in\{1,\dots,
	r\}\right\},
	\]
	we consider the family of spaces of forms $\Omega_r^{n+1}Y:=\sqcup_{y\in Y}
	(\Omega_r^{n+1}Y)_y$ as the bundle over $Y$ with canonical projection
	\begin{align*}
		\nu_r:\Omega_r^{n+1}Y&\to Y\\
		\sigma\in(\Omega_r^{n+1}Y)_y&\mapsto y.	
	\end{align*}
\end{definition}
For $r=1$ these are just the $(n+1)$-forms of $Y$ annihilated by any $\pi-$vertical
vector (also called \textbf{semibasic forms}), hence, locally given by 
$p(x,y)\dd^{n+1}x$. For $r=2$, $\Omega_2^{n+1}Y$ includes \textbf{also} (as 
$\Omega_1^{n+1}Y\subset\Omega_2^{n+1}Y$) the forms $p^\mu_i(x,y)\dd y^i\wedge
\dd^nx_\mu$ and so on.
$Z^*$ (called the \textbf{dual first-order bundle}) is then built as the quotient
vector bundle
\[
	Z^*:=\Omega_2^{n+1}Y/\Omega_1^{n+1}Y
\]
over $Y$ where its elements are locally of the form $p^\mu_i(x,y)\dd y^i
\wedge \dd^nx_\mu$ (hence $Z^*$ has coordinates $(x^\mu,y^i,p^\mu_i)$ and so
dimensions $n+1+m+(n+1)m$ just as $Z$, and has canonical projection maps
$\pi_{Z^* Y}$ and $\pi_{Z^* X}$).
We can then define our Hamiltonian on $Z^*$ as
\begin{definition}[Hamiltonian and Hamiltonian function]
	\label{def:Hamiltonian_and_Hamiltonian_function}
	A \textbf{Hamiltonian} $h$ is a section of the canonical projection
	$\mu:\Omega_2
	^{n+1}Y\to Z^*$ given locally by
	\[h(x^\mu,y^i,p^\mu_i)=(x^\mu,y^i,p=-H(x^\mu,y^i,p^\mu_i),p^\mu_i)\]
	where $H$ is called a \textbf{Hamiltonian function}. 
\end{definition}

The advantage of working from the beginning with forms as points in the definition
of $\Omega_r^{n+1}Y$ is that we have a canonical
multisymplectic $n+2$ form on $\Omega^{n+1}_2Y$ (as opposed to the Lagrangian setting
where this form depended on the choice of Lagrangian and its multisymplectic nature on
this Lagrangian being regular)
\begin{definition}[Canonical Liouville and multisymplectic forms]
	$\Omega_2^{n+1}Y$ has a canonical $(n+1)-$form $\Theta$ defined as
	\[
		\Theta|_\omega(X_0,\dots,X_n):=\omega\left(\dd\nu_2(X_0),\dots,\dd\nu_2(X_n)
		\right),\quad\forall\omega\in\Omega_2^{n+1}Y
	\]	
	and called \textbf{multimomentum Liouville form} (or canonical multisymplectic potential).
	Then $\Omega:=-\dd\Theta$ is the $(n+2)$ \textbf{canonical multisymplectic} (cf. 
	\cite{leonSymmetriesClassicalField2004}) \textbf{form}. 

	They can be expressed locally as
	\begin{align*}
		\Theta&=p\dd^{n+1}x+p^\mu_i\dd y^i\wedge \dd^nx_\mu,\\
		\Omega&=-\dd p\wedge \dd^{n+1}x-\dd p^\mu_i\wedge\dd y^i\wedge \dd^nx_\mu.
	\end{align*}
\end{definition} 

These forms, once given a Hamiltonian, can be pullbacked to $Z^*$ as
\[
	\Theta_h:=h^*\Theta,\quad \Omega_h:=h^*\Omega,
\]
with local expressions
\begin{align*}
	\Theta_h&=-H \dd^{n+1}x+p^\mu_i \dd y^i\wedge \dd^nx_\mu\\
	\Omega_h&=\dd H\wedge \dd^{n+1}x-\dd p^\mu_i\wedge \dd y^i\wedge \dd^nx_\mu
\end{align*}
We will show that these forms are canonically related to the Poincaré-Cartan forms given 
by a Lagrangian density once we define the Legendre map; on their own, they lead us to a
generalization of Hamilton's equations, which are called Hamilton-De Donder-Weyl equations
(we will stick to Hamilton's equations for brevity)
\begin{remark}
	\label{rem:canonial_form_on_form_spaces}
	This canonical form is not tied to this particular structure and indeed it arises
	naturally in every space of forms. Let $M$ be an $m-$dimensional smooth manifold and
	consider the fiber bundle $\tau:\bigwedge^k M\to M$ of exterior $k-$forms
	(for any $k\in\{1,\dots,m\}$) given by $\tau(\omega(x))=x$ for all $x\in M$ where
	$\omega(x)\in\bigwedge^k T^*_xM=(\bigwedge^k M)_x$.
	Then, consider any $\alpha\in\bigwedge^k M$ (a point) fixed but
	arbitrary and let $X_1,\dots,X_k\in T_\alpha\bigwedge^kM$ be a fixed but arbitrary set
	of vectors tangent at $\alpha$; we define a canonical $k$-form $\Theta$ on $\bigwedge^k
	M$ as
	\[
		\Theta|_\alpha(X_1,\dots,X_k)=\alpha(\dd_\alpha\tau(X_1),
		\dots,\dd_\alpha\tau(X_k)).
	\]
\end{remark}

\begin{definition}[Hamilton's equations]
	For a given Hamiltonian $h$, we say that a section $\sigma\in\Gamma(\pi_{Z^*X
	})$ satisfy the \textbf{Hamilton's equations} (also called
	\textbf{Hamilton-De Donder-Weyl equations}) if
	\begin{equation}\label{eq:Hamilton-equations}
		\sigma^*(\iota_\xi\Omega_h)=0,\quad\forall\xi\in\mathfrak{X}(Z^*).
	\end{equation}
	Locally, if $\sigma(x^\mu)=\left(x^\mu,\sigma^i(x^\mu),\sigma_i^\mu(x^\mu)
	\right)$, Hamilton's equations take the usual form
	\[
		\frac{\partial\sigma^i}{\partial x^\mu}=\frac{\partial H}{\partial p^\mu
		_i},\quad\sum_{\mu=0}^n\frac{\partial\sigma_i^\mu}{\partial x^\mu}=-
		\frac{\partial H}{\partial y^i}.
	\]
\end{definition} 
Boundary conditions are set in a totally equivalent way as they were in the
Lagrangian setting ($B^*\subseteq\partial Z^*$ and so on)
\cite{leonSymmetriesClassicalField2004}; The interested reader is referred to \cite{IbortSpivak17} 
for a more detailed discussion. The same happens for the
geometric perspective to the Hamilton's equations materialized through a given
connection in $\pi_{Z^*X}$. 

Lastly for this Hamiltonian setting, a canonical way to lift $Y$ vector fields
to $Z^*$ is needed in order to rigorously define our Hamiltonian symmetries (as
we will see later).

Let $\xi_Y\in\mathfrak{X}(Y)$ and $\alpha=\nu_2^*\beta$ for some $\beta\in
\Omega_1^{n}Y$ semibasic $n-$form on $Y$, we then define 
\begin{definition}[$\alpha$-lift]
	\label{def:alpha_lift}
	For the objects just mentioned, we define the $\bm{\alpha}$\textbf{-lift} of $\xi_Y$ on
	$Y$ to $\Omega_2^{n+1}Y$ as the unique vector field $\xi_Y^\alpha$ that projects
	onto $\xi_Y$ through $\dd\nu_2$ and satisfies $\pounds_{\xi_Y^\alpha}\Theta=
	\dd\alpha$.

	In order for $\xi_Y^\alpha$ to be projectable onto a valid vector field on
	$Z^*$, $\xi_Y$ must be projectable onto $X$ (if not its flow would not preserve neither
	semibasic nor 1-vertical forms and then any posible lift onto $Z^*$ would not be
	properly defined). 
	Locally, working upon the defining
	conditions for the $\alpha-$lift of $\xi_Y=(\xi_Y^\mu,\xi_Y^i)$, is a
	straightforward computation to derive the components
	\begin{align*}
		\xi^p_Y&=-p\frac{\partial\xi^\mu}{\partial x^\mu}-p^\mu_i\frac{\partial\xi^i}
		{\partial x^\mu}+\frac{\partial\alpha^\mu}{\partial x^\mu}\\
		\xi^{p^\mu_i}_Y&=-p\frac{\partial\xi^\mu}{\partial y^i}-p^\mu_j\frac{\partial
		\xi^j}{\partial y^i}+p^\nu_i\frac{\partial\xi^\mu}{\partial x^\nu}-p^\mu_i\frac{
		\partial\xi^\nu}{\partial x^\nu}+\frac{\partial\alpha^\mu}{\partial y^i}
	\end{align*}
	of the vector field $\xi^\alpha_Y\in\mathfrak{X}(\Omega_2^{n+1}Y)$
	with components $(\xi_Y^\mu,\xi_Y^i,\xi_Y^p, \xi_Y^{p^\mu_i})$ and consequently,
	when $\xi_Y$ is projectable (i.e. $\partial\xi_Y^\mu/\partial y^i=0$), of 
	$\xi^{(1*)}_Y=(\xi^\mu_Y,\xi_Y^i,\xi_Y^{p^\mu_i})\in\mathfrak{X}(Z^*)$.
\end{definition}

\section{Legendre map}
As we hinted before, in the case of a regular Lagrangian, there exists a
(local) one-to-one map between our Lagrangian and Hamiltonian settings and
an equivalence between their respective equations. First, notice by the local
expression of $\Theta_L$ given in Definition \ref{def:poincare-cartan-forms} that
it satisfies $\iota_\xi\Theta_L=0$, $\forall\xi\in
\mathcal{V}\pi_{ZY}$ and that by proposition 2.6 of
\cite{leonSymmetriesClassicalField2004} (also immediate from the local expression
of $\Theta_L$), $\iota_\xi\iota_\zeta\Theta_L=0$, $\forall
\xi,\zeta\in\mathcal{V}\pi_{ZX}$, then we can properly define
\begin{definition}[Legendre transformation]
	\label{def:legendre_transformation}
	Given a Lagrangian function $L$, we define the \textbf{Legendre transformation}
	as the map $\mathrm{Leg}_L:Z\to\Omega_2^{n+1}Y$ that, for each $z\in Z$, 
	acts on a set of vectors $\xi_0,\dots,\xi_n\in T_{\pi_{ZY}(z)}Y$ as
	\[
		\left(\mathrm{Leg}_L(z)\right)(\xi_0,\dots,\xi_n):=(\Theta_L)_z(\hat{\xi}_0,\dots,
		\hat{\xi}_n)
	\] 	
	$\forall\hat{\xi}_0,\dots,\hat{\xi}_n\in T_z Z$ projectable into
	$\xi_0,\dots,\xi_n$. Locally, this map is given by
	\[
		\mathrm{Leg}_L(x^\mu,y^i,z^i_\mu)=\left(x^\mu,y^i,p=L-z^i_\mu\frac{\partial L}
		{\partial z^i_\mu},p^\mu_i=\frac{\partial L}{\partial z^i_\mu}\right).
	\]
	$p$ is called the energy component while $p^\mu_i$ the momentum (or multimomentum)
	component.
\end{definition} 
\begin{remark}
	Note that in the case of a regular Lagrangian, by the inverse function
	theorem, there is a (local diffeomorphic) one-to-one correspondence between field
	velocities and field momenta ($z^i_\mu\leftrightarrow p^\mu_i$).
	And once this transformation is defined, the canonical Legendre map is just the
	composition with the canonical projection $\mu:\Omega_2^{n+1}Y\to Z^*$.
\end{remark}
\begin{definition}[Legendre map]
	\label{def:legendre_map}
	Given a Lagrangian function $L$, we define the \textbf{Legendre map}
	$\mathrm{leg}_L:Z\to Z^*$ as $\mathrm{leg}_L:=\mu\circ\mathrm{Leg}_L$.
	Its local expression is clearly derived from the previous definition to be
	$p_i^\mu=\parder{L}{z^i_\mu}$.

	The Legendre map is a local diffeomorphism if and only if the Lagrangian is
	regular (cf. \cite{leonSymmetriesClassicalField2004}) while whenever
	$\mathrm{leg}_L$ (and hence $\mathrm{Leg}_L$) is globally a diffeomorphism, we say that
	the Lagrangian is \textbf{hyperregular} (which of course implies regularity). 
	\end{definition}

Note that Definition \ref{def:Hamiltonian_and_Hamiltonian_function} of the Hamiltonian
function gives $H=z^i_\mu\parder{L}{z^i_\mu}-L$ which recovers the usual $H=p_i\dot{q}^i-L$
formula of symplectic mechanics to obtain the Hamiltonian for a given Lagrangian, where $(q^i,p_i)$ are the
generalized coordinates and velocities in the phase space.

The Legendre map allows us to relate the already defined forms on $Z$ and
$Z^*$ through their pullbacks
\begin{proposition}
	Given a Hamiltonian $h$ we have the following relations
	\[
		(\mathrm{Leg}_L)^*\Theta=\Theta_L,\quad (\mathrm{Leg}_L)^*\Omega=\Omega_L
	\]
	\[
		(\mathrm{leg}_L)^*\Theta_h=\Theta_L,\quad (\mathrm{leg}_L)^*\Omega_h=\Omega_L
	\]
\end{proposition}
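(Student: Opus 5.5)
The plan is to prove the first identity intrinsically from the definitions, obtain the $\Omega$-relations by naturality of $\dd$, and then get the statements on $Z^*$ by factoring through the projection $\mu$.

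\textbf{Step 1: $(\mathrm{Leg}_L)^*\Theta=\Theta_L$.} Fix $z\in Z$ and $V_0,\dots,V_n\in T_zZ$. By definition of the canonical Liouville form,
\[
((\mathrm{Leg}_L)^*\Theta)_z(V_0,\dots,V_n)=\mathrm{Leg}_L(z)\big(\dd\nu_2\,T\mathrm{Leg}_L(V_0),\dots,\dd\nu_2\,T\mathrm{Leg}_L(V_n)\big).
\]
Since $\nu_2\circ\mathrm{Leg}_L=\pi_{ZY}$ (the Legendre transformation covers the identity on $Y$), the arguments are just $T\pi_{ZY}(V_k)$. Each $V_k$ is trivially a vector projecting onto $T\pi_{ZY}(V_k)$. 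Hence Definition \ref{def:legendre_transformation} gives exactly $(\Theta_L)_z(V_0,\dots,V_n)$.

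The only point to address is well-definedness of that definition. It does not matter which lifts $\hat\xi_k$ are chosen, because $\iota_\xi\Theta_L=0$ for every $\xi\in\mathcal V\pi_{ZY}$. This is already asserted just before Definition \ref{def:legendre_transformation}, and it can be seen from the local expression, which contains no $\dd z^i_\mu$. As a cross-check, I would also verify in coordinates: substitute $p=L-z^i_\mu\partial L/\partial z^i_\mu$ and $p^\mu_i=\partial L/\partial z^i_\mu$ into $\Theta=p\,\dd^{n+1}x+p^\mu_i\dd y^i\wedge\dd^nx_\mu$. This reproduces the local formula of Definition \ref{def:poincare-cartan-forms}.

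\textbf{Step 2: the $\Omega$-relations.} Pullback commutes with $\dd$, so
\[
(\mathrm{Leg}_L)^*\Omega=-\dd(\mathrm{Leg}_L)^*\Theta=-\dd\Theta_L=\Omega_L.
\]
The same argument gives the $\Omega_h$ relation once the $\Theta_h$ relation is known.

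\textbf{Step 3: $(\mathrm{leg}_L)^*\Theta_h=\Theta_L$.} This is the step where care is needed, since it requires the Hamiltonian $h$ to be the one associated with $L$. Concretely, $h$ must satisfy $h\circ\mathrm{leg}_L=\mathrm{Leg}_L$, i.e. $H\circ\mathrm{leg}_L=z^i_\mu\,\partial L/\partial z^i_\mu-L$, as noted after Definition \ref{def:legendre_map}.

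For hyperregular $L$ this $h$ is defined globally by $h:=\mathrm{Leg}_L\circ\mathrm{leg}_L^{-1}$. For regular $L$ it is defined only locally, using the local diffeomorphism $\mathrm{leg}_L$. I would state this hypothesis explicitly in the proof.

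Granting it, functoriality of pullbacks and Step 1 give
\[
(\mathrm{leg}_L)^*\Theta_h=(\mathrm{leg}_L)^*h^*\Theta=(h\circ\mathrm{leg}_L)^*\Theta=(\mathrm{Leg}_L)^*\Theta=\Theta_L.
\]
Applying $-\dd$ as in Step 2 yields $(\mathrm{leg}_L)^*\Omega_h=\Omega_L$.

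For a quick local confirmation, pull back $\Theta_h=-H\,\dd^{n+1}x+p^\mu_i\dd y^i\wedge\dd^nx_\mu$ along $p^\mu_i=\partial L/\partial z^i_\mu$. This again reproduces $\Theta_L$.
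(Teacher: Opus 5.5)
Your proof is correct, but it takes a different route from the paper. The paper simply substitutes the local expressions $p=L-z^i_\mu\,\partial L/\partial z^i_\mu$, $p^\mu_i=\partial L/\partial z^i_\mu$ and $H\circ\mathrm{leg}_L=z^i_\mu\,\partial L/\partial z^i_\mu-L$ into $\Theta$ and $\Theta_h$, reads off $\Theta_L$, and then differentiates.

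You instead observe that $(\mathrm{Leg}_L)^*\Theta=\Theta_L$ is tautological. Given $\nu_2\circ\mathrm{Leg}_L=\pi_{ZY}$, and the lift-independence that follows from $\iota_\xi\Theta_L=0$ for $\xi\in\mathcal{V}\pi_{ZY}$, the definition of $\mathrm{Leg}_L$ says exactly that it pulls the Liouville form back to $\Theta_L$. The $Z^*$ identities then follow by functoriality once $h\circ\mathrm{leg}_L=\mathrm{Leg}_L$.

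Your route buys two things.
\begin{itemize}
\item It isolates the one genuine hypothesis hidden in the phrase ``given a Hamiltonian $h$''. For an arbitrary section $h$ of $\mu$ the second line of the statement is false: the energy term of $(\mathrm{leg}_L)^*\Theta_h$ is $-(H\circ\mathrm{leg}_L)\,\dd^{n+1}x$.
\item It is coordinate-free and needs no regularity beyond the existence of such a factorization. The same two lines give $\mathrm{leg}_1^*\Omega_{M_1}=\Omega_L$ in the almost regular setting, since $\tilde{j}_1\circ h_1\circ\mathrm{leg}_1=\mathrm{Leg}_L$. The paper later asserts that fact with ``of course''.
\end{itemize}
The paper's coordinate check is quicker given the displayed formulas, but there the compatibility condition on $H$ enters only implicitly, through the substitution.
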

\begin{proof}
	The relations are straightforward using the local expression of the forms
	involved.
\end{proof}

Assuming that boundary conditions on $Z$ and $Z^*$ are compatible (i.e. $
\mathrm{leg}_L(B)=B^*$ and $(\mathrm{leg}_L)^*(\Pi^*)=\Pi$), we also have
a theorem (cf. \cite{leonSymmetriesClassicalField2004}) relating De Donder
and Hamilton's equations for regular Lagrangians
\begin{theorem}[\textbf{Equivalence theorem}]
	\label{thm:equivalence_theorem}
	For a regular Lagrangian $L$, if $\sigma\in\Gamma(\pi_{ZX})$ satisfies De
	Donder equations \eqref{eq:De-Donder-equations}, then $\sigma_2:=\mathrm{leg}_L
	\circ\sigma_1$ verifies Hamilton's equations \eqref{eq:Hamilton-equations}.

	Reciprocally, if $\sigma_2\in\Gamma(\pi_{Z^*X})$ satisfies Hamilton's equations,
	then the \textbf{locally defined} $\sigma_1:=\mathrm{leg}^{-1}_L\circ\sigma_2$
	satisfies De Donder equations.
\end{theorem}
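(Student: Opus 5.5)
The plan is to reduce both directions to the pullback identity $(\mathrm{leg}_L)^*\Omega_h=\Omega_L$ from the preceding proposition, together with the fact that, for regular $L$, $\mathrm{leg}_L$ is a local diffeomorphism (Definition \ref{def:legendre_map}). The key observation is that the De Donder equations \eqref{eq:De-Donder-equations} and Hamilton's equations \eqref{eq:Hamilton-equations} have exactly the same shape, $\sigma^*(\iota_\xi\Omega)=0$ for all vector fields $\xi$. So the task is to transport this condition along $\mathrm{leg}_L$. Writing $\sigma_1$ for the Lagrangian section in the statement, $\mathrm{leg}_L$ covers the identity on $X$: it is a morphism of bundles over $Y$, and hence over $X$. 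Therefore $\sigma_2:=\mathrm{leg}_L\circ\sigma_1$ is a genuine section of $\pi_{Z^*X}$, and conversely $\mathrm{leg}_L^{-1}\circ\sigma_2$, defined on the open set where a local inverse exists, is a local section of $\pi_{ZX}$.

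For the direct implication, the first step is to reformulate both equations pointwise. The condition $\sigma^*(\iota_\xi\Omega)=0$ for all $\xi$ is equivalent to $(\iota_{v}\Omega)_{\sigma(x)}$ vanishing on $T_x\sigma(T_xX)$ for every $v\in T_{\sigma(x)}$ of the ambient space. This holds because any tangent vector at a point extends to a global vector field via bump functions, and the pulled-back form at $x$ only depends on $\xi(\sigma(x))$. Next, fix $x$, set $z=\sigma_1(x)$, and take an arbitrary $w\in T_{\mathrm{leg}_L(z)}Z^*$. Since $T_z\mathrm{leg}_L$ is an isomorphism by regularity, we can write $w=T_z\mathrm{leg}_L(v)$. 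Then, for $u_0,\dots,u_n\in T_xX$,
\[
(\iota_w\Omega_h)(T\sigma_2 u_0,\dots,T\sigma_2 u_n)=\Omega_h\bigl(T\mathrm{leg}_L v,T\mathrm{leg}_L T\sigma_1u_0,\dots\bigr)=\Omega_L(v,T\sigma_1u_0,\dots)=0,
\]
using $(\mathrm{leg}_L)^*\Omega_h=\Omega_L$ and the De Donder equations for $\sigma_1$. The converse runs the same computation in reverse on the domain of the local inverse. There every $v\in T_zZ$ is pushed forward to some $w$, and Hamilton's equations for $\sigma_2$ give the vanishing of $\Omega_L(v,T\sigma_1u_0,\dots)$. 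This is why the statement has to say "locally defined".

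The main obstacle is the surjectivity step: in the direct implication, every test vector on $Z^*$ must come from a test vector on $Z$. This is exactly where regularity enters. Without it $T\mathrm{leg}_L$ fails to be onto, and one only recovers the Hamilton equations tested against vectors in the image, which is the familiar constraint situation of the singular case. A secondary point I would check is compatibility of the boundary conditions. This is handled by the standing assumption $\mathrm{leg}_L(B)=B^*$, which gives $\sigma_2(\partial X)\subseteq B^*$ whenever $\sigma_1(\partial X)\subseteq B$, and conversely. As a sanity check, I would also verify the result in coordinates: with $p^\mu_i=\partial L/\partial z^i_\mu$ and $H=z^i_\mu p^\mu_i-L$, one has $\partial H/\partial p^\mu_i=z^i_\mu$ and $\partial H/\partial y^i=-\partial L/\partial y^i$. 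Under these, the local Hamilton equations become holonomy plus the Euler--Lagrange equations, consistent with Proposition \ref{prop:De_Donder_eq_to_E-L}.
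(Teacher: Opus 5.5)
The paper gives no proof of this theorem; it only cites \cite{leonSymmetriesClassicalField2004}, right after recording $(\mathrm{leg}_L)^*\Omega_h=\Omega_L$. Your argument is correct and is exactly the transport that identity is set up for. The pointwise reformulation is legitimate because $\sigma^*(\iota_\xi\Omega)$ at $x$ depends only on $\xi(\sigma(x))$. You use surjectivity of $T_z\mathrm{leg}_L$ in the direct implication and a local inverse in the converse, and you correctly locate where regularity enters.

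Two minor caveats, neither of which affects the theorem as stated:
\begin{enumerate}
\item For a merely regular (not hyperregular) $L$, $\mathrm{leg}_L$ need not be injective. The pullback identity presupposes an $h$ with $h\circ\mathrm{leg}_L=\mathrm{Leg}_L$, and such an $h$ then exists only locally, as $\mathrm{Leg}_L\circ(\mathrm{leg}_L|_U)^{-1}$. Both directions should be read with such a local $h$; this is the same locality the statement flags.
\item In your boundary aside, the converse inclusion $\sigma_1(\partial X)\subseteq B$ does not follow from $\mathrm{leg}_L(B)=B^*$ alone unless $B$ is saturated by the fibres of $\mathrm{leg}_L$.
\end{enumerate}
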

Geometric equations are also equivalent for a regular Lagrangian. Given a
connection $\Gamma$ on the Lagrangian side, we have a connection on the
Hamiltonian side by just pushing through $\dd\,\mathrm{leg}_L$ the Lagrangian
horizontal vectors; the Lagrangian horizontal projector \textbf{h} is also pushed to
a Hamiltonian projector given by $\textbf{h}^*:=\dd\,\mathrm{leg}_L\circ
\textbf{h}\circ(\dd\,\mathrm{leg}_L)^{-1}$. 
\subsection{Almost regular Lagrangians and the constraint algorithm}
\label{subsect:almost_regular_lagrangians_and_the_constraint_algorithm}
Legendre equivalence can also be obtained between the Lagrangian and Hamiltonian
descriptions in the case of non-regular Lagrangians, although some conditions have to be met
(which will fall under the umbrella term of `almost regularity' for the Lagrangian).
This case is of particular interest due to Yang--Mills Lagrangian being almost regular.
\begin{definition}[Almost regular Lagrangian]
	\label{def:almost_regularity}
	Given a Lagrangian $L:Z\to\RR$, we will call it to be \textbf{almost regular} in the case
	that $\mathrm{Leg}_L(Z)=:\tilde{M_1}$ is a submanifold of $\Omega_2^{n+1}Y$ and
	$\mathrm{Leg}_L:Z\to\tilde{M_1}$ is a submersion with connected (in the topological sense)
	fibers. 
\end{definition}
We can also define $M_1:=\mathrm{leg}_L(Z)=\mu\circ\mathrm{Leg}_L(Z)=\mu(\tilde{M_1})$
such that we have the following remark
\begin{remark}
\label{rem:M1_equivalence}
If the Lagrangian is almost regular (which we will take as hypothesis for the
remaining of this subsection), the following is satisfied
\begin{itemize}
	\item $M_1$ is a submanifold of $Z^*$ and $\mathrm{leg}_L:Z\to M_1$ is also a 
	submersion with connected fibers.
	\item $\tilde{M_1}$ and $M_1$ are fibrations over $X$ and $Y$.
	\item $\mu_1:\tilde{M_1}\to M_1$ (the restriction to $\tilde{M_1}$ of the canonical
	projection $\mu:\Omega_2^{n+1}\to Z^*$) is a (global) \textbf{diffeomorphism} and hence
	we can define $h_1:=(\mu_1)^{-1}:M_1\to\tilde{M_1}$.
	\item Regularity implies almost regularity as $\mathrm{Leg}_L$ becomes a
	local diffeomorphism and fibers are trivially connected as being the preimage
	of a single point (which is of course connected).
\end{itemize}
\end{remark}
A diagrammatic summary with the relations involved up to this point can be seen in Figure 
\ref{fig:almost_regular_playground} where $\mathrm{Leg}_1$ and $\mathrm{leg}_1$ are
defined using the preimage of the inclusion maps as $\tilde{j}_1\circ\mathrm{Leg}_1=
\mathrm{Leg}_L$ and $j_1\circ\mathrm{leg}_1=\mathrm{leg}_L$

\begin{figure}[htb]
  \centering
  \includegraphics[width=.5\textwidth]{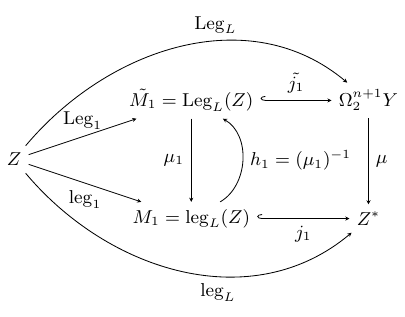}
  \caption{Visual summary of the almost regular `playground'.}
  \label{fig:almost_regular_playground}
\end{figure}

We can then pullback $\Omega$ to $M_1$ as $\Omega_{M_1}:=h_1^*(\tilde{j}_1^*\Omega)$
such that the geometric Hamilton's equations can be viewed as 
\begin{equation}
 \label{eq:geometric_almost_regular_Hamilton_equations}	
 \iota_{\tilde{\textbf{h}}}\Omega_{M_1}=n\Omega_{M_1}
\end{equation} 
for a given connection $\tilde{\textbf{h}}$ of the fibration $\pi_{M_1X}$ with boundary
conditions given by $\tilde{\textbf{h}}_{\tilde{z}}(T_{\tilde{z}}B_1^*)\subseteq
T_{\tilde{z}}B_1^*$, $\forall\tilde{z}\in B_1^*$, where $B_1^*:=B^*\cap M_1$
submanifold of $B^*$ (we define $B_r^*$ in a similar manner).
Of course, $\mathrm{leg}_1^*\Omega_{M_1}=\Omega_L$.

We then proceed in a parallel way to the algorithm of Subsection
\ref{subsect:the_singular_case} and hence define
\[
	M_2:=\{\tilde{z}\in M_1\,|\,\exists\tilde{\textbf{h}}_{\tilde{z}}:T_{\tilde{z}} M_1\to 
	T_{\tilde{z}} M_1\text{ linear projector such that } \tilde{\textbf{h}}_{\tilde{z}}^2=
	\tilde{\textbf{h}}_{\tilde{z}}, 
\]
\[
	\mathrm{ker}\tilde{\textbf{h}}_{\tilde{z}}=\mathcal{V}_{\tilde{z}}\pi_{M_1X},
	\iota_{\tilde{\textbf{h}}_{\tilde{z}}}\Omega_{M_1}(\tilde{z})=n\Omega_{M_1}
	(\tilde{z}),\text{ and }\tilde{\textbf{h}}_{\tilde{\tilde{z}}}(T_{\tilde{\tilde{z}}}
	B_1^*)\subseteq T_{\tilde{\tilde{z}}}B_1^*\,\forall \tilde{\tilde{z}}\in B_1^*\}.
\]
If $\tilde{\textbf{h}}_{\tilde{z}}(T_{\tilde{z}}M_1)\subset T_{\tilde{z}} M_2$ and 
$\tilde{\textbf{h}}_{\tilde{z}}(T_{\tilde{z}}B_1^*)\subset T_{\tilde{z}} B_2^*$
$\forall\tilde{z}\in M_2$ the algorithm stops and we have our connection
solution to the geometric Hamilton equation in the restricted $M_2\subseteq Z^*$.
If one of the conditions is not met, we proceed defining $M_3$, $M_4$ and so on
until stabilization on a given $M_f$ such that we find a connection on the
fibration $\pi_{M_f X}$ with horizontal projection $\tilde{\textbf{h}}$ satisfying
Equation \eqref{eq:geometric_almost_regular_Hamilton_equations} and boundary
conditions.

For $i\in\ZZ$, $i\geq2$, if we denote $\mathrm{leg}_i:=\mathrm{leg}_L|_{Z_i}$ where $Z_i$
is defined accordingly to Subsection \ref{subsect:the_singular_case} algorithm, then
$\mathrm{leg}_i(Z_i)=M_i$ and so both algorithms are equivalent, in the sense that
both stabilize in the same step; $\mathrm{leg}_f:Z_f\to M_f$ is a fibration with
the same fibers as $\mathrm{leg}_1$ and it can be proved that geometric De Donder
equation \eqref{eq:geometric-De-Donder} connection solutions and geometric 
Hamilton's equation connection solutions are equivalent (if projectable one to
another) \cite{leonSymmetriesClassicalField2004}.

\chapter{Instantaneous framework} \label{chap:instantaneous_formalism}

Up to now the (covariant) multisymplectic formulation -both in
the Lagrangian and Hamiltonian settings- has been a \textbf{
finite-dimensional} framework built around fiber bundles in which
our field solutions are identified as sections of the configuration bundle
built over the spacetime $X$. In this formulation, our equations give 
us a geometrically global picture of the physics behind it, but
not a `dynamical' picture, in the sense of having a solution that
evolves in time from a typically called `initial-value problem';
and although multisymplectic geometry feels like the natural
framework in which to generalize symplectic mechanics -having as we
saw a canonical multisymplectic form as the canonical generalization of the 
symplectic form of classical mechanics- (which has
long prove to be fruitful, cf. for example 
\cite{abrahamFoundationsMechanics2008}),
in order to have a canonical `time-arrow', we need to introduce the
so-called \textbf{instantaneous framework}.
Hence, we need to first break Lorentz covariance by
choosing a privileged time coordinate on our spacetime and then
consider our fields as defined on spatial slices of fixed time
(hence introducing a so-called slicing of our spacetime).
With this perspective though, we describe an 
\textbf{infinite-dimensional} framework in which -at least locally-
we have $X\cong I\times M$, for $I\subseteq\RR$ and $M$ a so-called
Cauchy surface. In this section we will assume $n>0$. 
\begin{definition}[Cauchy surface]
	Given a compact oriented $n-$dimensional manifold $M$ (with $n>0$)
	such that there exists an embedding $\tau:M\to X$ with
	$\tau(\partial M)\subseteq\partial X$ and $\tau(\mathring{M})
	\subseteq\mathring{X}$, the pair $(M,\tau)$ is usually called
	a Cauchy surface over $X$, defined up to an orientation
	and volume-preserving diffeomorphism of $M$.

	Of course, depending on the initial value problem we are trying
	to model, the set $M$ would be fixed accordingly, and so
	rather than the previous perspective, we will call \textbf{
	Cauchy surface over} $\bm{X}$ to a space $\tX$ of such kind
	of embeddings $\tau$ of the fixed $M$ to (the also fixed) $X$,
	i.e.
	\[
		\tX:=\{\tau:M\to X\,|\, (M,\tau) \text{ is a Cauchy
		surface over X}\}.
	\]
	This set depends on the physical theory involved.
\end{definition}
This set depends as we said on the initial value problem given, but of
course the problem must also characterize the initial configuration and
phase spaces $\tY$ and $\tZ$
\begin{definition}[Space of Cauchy data]
	We denote by $\tZ$ the set of embeddings $\gamma:M\to Z$
	such that $\exists\,\phi\in\Gamma(\pi)$ with
	\[
		\gamma=(j^1\phi)\circ\pi_{ZX}\circ\gamma,\quad
		\pi_{ZX}\circ\gamma\in\tX\quad\text{and}\quad
		\gamma(\partial M)\subseteq B.
	\]
	We also restrict $\tZ$ definition by imposing $\pi_{\tZ
	\tX}:\tZ\to\tX$ with $\pi_{\tZ\tX}
	(\gamma):=\pi_{ZX}\circ\gamma$ to be a locally trivial fibre
	bundle -i.e., $\forall\tau\in\tX$ and all open set $U\ni
	\tau$, $(\pi_{\tZ\tX})^{-1}(U)\cong U\times F$ for some
	differential manifold $F$ called typical fiber-. We call
	$\tZ$ then the \textbf{space of Cauchy data}.
\end{definition}

\begin{definition}[Space of Dirichlet data]
	We denote by $\tY$ the set of embeddings $\delta:M\to Y$
	of the form $\delta=\pi_{ZY}\circ\gamma$ for some $\gamma\in\tZ$
	such that, once again $\pi_{\tZ\tY}:\tZ\to\tY$ with $\pi_{\tZ\tY}
	(\gamma):=\pi_{ZY}\circ\gamma$ is a locally trivial fibration.

	We define lastly $\pi_{\tY\tX}$ to be the unique mapping from
	$\tY$ to $\tX$ such that $\pi_{\tZ\tX}=\pi_{\tY\tX}\circ\pi
	_{\tZ\tY}$ and we call $\tY$ the \textbf{space of Dirichlet data}.
\end{definition}
A visual summary of the spaces involved up now is shown in Figure
\ref{fig:instantaneous_setting}.
\begin{figure}[htb]
  \centering
  \includegraphics[width=.7\textwidth]{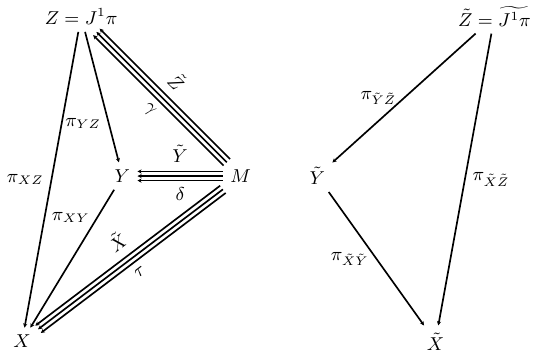}
  \caption{Involved fibrations in the instantaneous setting. Triple lines denote an embedding
  belonging to the corresponding tilde space drawn just above.}
  \label{fig:instantaneous_setting}
\end{figure}

It important to notice the double role of tangent vectors on $\tZ$.

Let $v\in T_\gamma\tZ$ for a fixed $\gamma\in\tZ$ and consider a
curve $\tilde{c}:I\subseteq\RR\to\tZ$ such that $\tilde{c}(0)=\gamma$ and
$\dot{\tilde{c}}(0)=v$. Of course, for any fixed $t\in I\subseteq\RR$,
we have a map $\tilde{c}(t):M\to Z$, as $\tilde{c}(t)\in\tZ$. Hence, for
a fixed $u\in M$ we obtain a map $\tilde{c}(t)(u)=:c^u(t):I\to Z$, which is of
course a curve on $Z$ satisfying $c^u(0)=\tilde{c}(0)(u)=\gamma(u)$ and
$\dot{c}^u(0)=\dot{\tilde{c}}(0)(u)=v(u)\in T_{\gamma(u)}Z$ (this
definition will be implicitly inverted i.e. $\tilde{c}(t)(u):=c^u(t):I\to Z$
in Proposition \ref{prop:integral_curves_on_tildeZ}). 

We then notice that $v$ is a map $v:M\to T_{\gamma(u)}Z$, for each
$u\in M$, so that we obtain the useful identification
\begin{equation}
	\label{eq:identification0}
	T_\gamma\tZ\cong\{v:M\to TZ\,|\, v(u)\in T_{\gamma(u)} Z\}=
	\Gamma(\gamma^*TZ),
\end{equation}
where we have used Definition \ref{def:pullback-bundle} of the
pullback bundle in the last equality.
This identification then gives us a way of \textbf{inducing vector fields on}
$\bm{\tZ}$ \textbf{from given vector fields on} $\bm{Z}$; 
for example, for $\xi\in\mathfrak{X}(Z)$
we define $\tilde{\xi}\in\mathfrak{X}(\tZ)$ for a given
$\gamma\in\tZ$ as the vector $\tilde{\xi}(\gamma)\in T_\gamma
\tZ\cong\Gamma(\gamma^*TZ)$
\[
	\tilde{\xi}(\gamma)(u):=\xi(\gamma(u)),\quad u\in M.		
\]

Identification \eqref{eq:identification0}
implies also the no less important remark that, as vectors
on $TZ$ act canonically on $\Omega^r(Z)$ forms
by the interior product, given $\alpha\in\Omega^r(Z)$ and 
$v\in T_\gamma\tZ$ we define canonically the interior product
$\iota_v\alpha$ as
\[
	(\iota_v\alpha)(z):=\iota_{v(u)}[\alpha(z)],\quad\text{for }
	z=\gamma(u),\,u\in M.
\]
Relying on these facts, upon integration (recall $M$ being
orientable and compact) we \textbf{crucially} induce $k-$forms on $\tZ$
from $(k+n)-$forms on $Z$ with the following definition
\begin{definition}\label{def:integration_upon_Cauchy_of_forms}
	Let $k\in\NN_0:=\NN\cup\{0\}$ and $\alpha\in\Omega^{k+n}Z$ such that $\exists
	\beta\in\Omega^{k+n-1}$ with $i_B^*\alpha=\dd\beta$, we define
	the $k-$form $\tilde{\alpha}$ on $\tZ$ as
	\begin{equation}
		\label{eq:tilde_forms}
		\iota_{\tilde{\zeta_1}}\dots\iota_{\tilde{\zeta}_k}\tilde
		{\alpha}(\gamma):=\int_M\gamma^*(\iota_{\zeta_1}\dots\iota_
		{\zeta_k}\alpha)-(-1)^k\int_{\partial M}\gamma^*(\iota_{\zeta_1}
		\dots\iota_{\zeta_k}\beta)
	\end{equation}
	for all $\gamma\in\tZ$ and all $\tilde{\zeta}_1,\dots\tilde
	{\zeta}_k\in T_\gamma\tZ$, where $\zeta_1\dots\zeta_k$ are
	exactly equal to $\tilde{\zeta}_1\dots\tilde{\zeta}_k$, the
	different notations are just for stressing they role as an
	element of $T_\gamma\tZ$ or $\Gamma(\gamma^*TZ)$. 
\end{definition}
Hence, in particular, the Poincaré-Cartan $(n+1)$ and $(n+2)$ forms
$\Theta_L$ and $\Omega_L$ on $Z$ induce on $\tZ$ the canonical
$1$-form and $2$-form $\widetilde{\Theta_L}$ and $\widetilde{\Omega_L}$ defined
by
\begin{align*}
	(\widetilde{\Theta_L})_\gamma(\tilde{\xi}):&=\int_M\gamma^*(\iota
	_\xi\Theta_L)+\int_{\partial M}\gamma^*(\iota_\xi\Pi),\\
(\widetilde{\Omega_L})_\gamma(\tilde{\xi}_1,\tilde{\xi}_2):&=\int_M
\gamma^*(\iota_{\xi_2}\iota_{\xi_1}\Omega_L).
\end{align*}

Upon \eqref{eq:identification0} we can keep extending in a canonical way
geometric tools from the multisymplectic to the instantaneous (`tilde')
formalism like the concept of integral curves, pushforward and pullback
maps or the exterior and Lie derivatives. The following is a concatenation
of results which proofs are straightforward and can be checked on
\cite{leonSymmetriesClassicalField2004}
\begin{proposition}\label{prop:integral_curves_on_tildeZ}
	Let $\xi\in\mathfrak{X}(Z)$ be fixed complete vector field and let
	$u\in M$ and $\gamma\in\tZ$ be fixed but arbitrary. If $c^u$ is an
	integral curve of $\xi$ through $\gamma$ i.e. $c^u(0)=\gamma(u)$
	and $\dot{c}^u(0)=\xi(\gamma(u))$	then $\tilde{c}$ is an integral
	curve of $\tilde{\xi}$ through $\gamma$, i.e. $\tilde{c}(0)=\gamma$
	and $\dot{\tilde{c}}(0)=\tilde{\xi}(\gamma)$.
\end{proposition}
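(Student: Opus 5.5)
The plan is to build $\tilde c$ pointwise from the flow of $\xi$. Since $\xi$ is complete, it has a global flow $\Phi_t:Z\to Z$, and uniqueness of integral curves gives $c^u(t)=\Phi_t(\gamma(u))$ for every $u\in M$. Following the inversion announced just before the statement, I set
\[
	\tilde c(t)(u):=c^u(t)=\Phi_t(\gamma(u)),\qquad\text{that is,}\qquad \tilde c(t)=\Phi_t\circ\gamma .
\]
Then $\tilde c(0)=\Phi_0\circ\gamma=\gamma$ immediately. Since $\Phi_t$ is a diffeomorphism, each $\tilde c(t)$ is again an embedding $M\to Z$.

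Next I compute the velocity through identification \eqref{eq:identification0}. The map $(t,u)\mapsto\Phi_t(\gamma(u))$ is jointly smooth, so differentiating in $t$ at fixed $u$ gives a section of $\gamma^*TZ$:
\[
	\dot{\tilde c}(0)(u)=\der{}{t}\Big|_{t=0}\Phi_t(\gamma(u))=\dot c^u(0)=\xi(\gamma(u))=\tilde\xi(\gamma)(u).
\]
This is exactly the defining formula of $\tilde\xi(\gamma)$, so $\dot{\tilde c}(0)=\tilde\xi(\gamma)$. The same computation at any $t$, using $\Phi_{t+s}=\Phi_s\circ\Phi_t$, gives $\dot{\tilde c}(t)=\tilde\xi(\tilde c(t))$. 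Hence $\tilde c$ is a genuine integral curve of $\tilde\xi$ for all times, not merely at $t=0$.

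The main obstacle is checking that $\tilde c(t)$ actually stays in $\tZ$. Being an embedding is not enough: membership in $\tZ$ also requires holonomicity, $\tilde c(t)=j^1\phi_t\circ\pi_{ZX}\circ\tilde c(t)$ for some section $\phi_t$, together with $\pi_{ZX}\circ\tilde c(t)\in\tX$ and $\tilde c(t)(\partial M)\subseteq B$. An arbitrary $\xi\in\mathfrak{X}(Z)$ does not preserve these conditions.

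Two routes are available for this step.
\begin{enumerate}
	\item One can read the statement, as the surrounding text does through identification \eqref{eq:identification0}, as taking place in the ambient space of embeddings $M\to Z$. In that reading $\tilde\xi$ is defined on that space, the computation above is already the whole proof, and membership in $\tZ$ need not be checked.
	\item To stay inside $\tZ$ itself, I would add the hypothesis that $\xi$ is tangent to $\tZ$, for example that $\xi=\xi_Y^{(1)}$ is a 1-jet prolongation of a projectable $\xi_Y$ tangent to $B$. Its flow then preserves the Cartan distribution (Definition \ref{def:1jet-prolongation}), hence holonomic sections by Proposition \ref{prop:holonomicity_iff_cartan}. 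It also maps $\pi_{ZX}$-fibres to fibres, so the base embedding stays a Cauchy surface.
\end{enumerate}
I would present the first route as the proof and record the second as a remark.
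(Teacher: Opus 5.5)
Your proof is correct and is essentially the argument the paper intends: the paper only announces the pointwise definition $\tilde c(t)(u):=c^u(t)$ and defers the details to de León et al., and your realization $\tilde c(t)=\Phi_t\circ\gamma$, with $\tilde c(0)=\gamma$ and $\dot{\tilde c}(0)=\tilde\xi(\gamma)$ read off pointwise through \eqref{eq:identification0}, is exactly that argument, with the flow supplying the joint smoothness the paper leaves implicit. Your caveat about $\tilde c(t)$ staying in $\tZ$ is legitimate, and the paper tacitly adopts your first route by identifying $T_\gamma\tZ$ with all of $\Gamma(\gamma^*TZ)$.
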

In particular, given a diffeomorphism $F:Z\to Z$ and defining $\tilde{F}
:\tZ\to\tZ$ as $\tilde{F}(\gamma)(u)=F(\gamma(u))$ then
\begin{corollary}
	For $F:Z\to Z$ a given diffeomorphism, then $T\tilde{F}(\tilde{\xi})=
	\widetilde{TF(\xi)}$ $\forall\xi\in\mathfrak{X}(Z)$.
\end{corollary}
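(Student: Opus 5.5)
The claim is that for a diffeomorphism $F:Z\to Z$ and any $\xi\in\mathfrak{X}(Z)$ we have $T\tilde{F}(\tilde{\xi})=\widetilde{TF(\xi)}$, where $TF(\xi)$ means the pushforward vector field $F_*\xi=TF\circ\xi\circ F^{-1}$. I would prove this by working pointwise in $\gamma\in\tZ$ and evaluating both sides at each $u\in M$ through the identification \eqref{eq:identification0}. Two things must be compared at a fixed $\gamma$: $T_\gamma\tilde{F}(\tilde{\xi}(\gamma))$ and $\widetilde{F_*\xi}(\tilde{F}(\gamma))$. Both lie in $T_{\tilde F(\gamma)}\tZ\cong\Gamma((F\circ\gamma)^*TZ)$, so it is enough to check that they agree at every $u\in M$.

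The first step is to compute $T\tilde{F}$ on a tangent vector $v\in T_\gamma\tZ$ using curves. Pick a curve $\tilde c$ in $\tZ$ with $\tilde c(0)=\gamma$ and $\dot{\tilde c}(0)=v$. Then $\tilde F\circ\tilde c$ is a curve through $\tilde F(\gamma)$, and its pointwise curves are $F\circ c^u$. Differentiating at $t=0$ gives
\[
(T_\gamma\tilde F(v))(u)=T_{\gamma(u)}F(v(u)).
\]
In other words, $T\tilde F$ acts fibrewise as $TF$.

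Next, take $v=\tilde\xi(\gamma)$, so that $v(u)=\xi(\gamma(u))$. For the existence of a suitable curve I would use Proposition \ref{prop:integral_curves_on_tildeZ}: if $\xi$ is complete, the curve $\tilde c(t)(u)=\Phi^\xi_t(\gamma(u))$ works. Otherwise any curve representing $v$ suffices, since the formula above only depends on $v$. This gives $(T_\gamma\tilde F(\tilde\xi(\gamma)))(u)=T F(\xi(\gamma(u)))$. On the other side, by definition $\widetilde{F_*\xi}(\tilde F(\gamma))(u)=(F_*\xi)(F(\gamma(u)))=TF(\xi(\gamma(u)))$. The two sides agree for every $u$, which proves the corollary.

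The main obstacle is not analytic but foundational. One must check that $\tilde F$ actually maps $\tZ$ to itself, since $F\circ\gamma$ has to remain admissible Cauchy data, and that the curve $\tilde F\circ\tilde c$ stays in $\tZ$. Since the statement is posed as given in the paper, I would take the invariance of $\tZ$ under $\tilde F$ as implicit in the hypothesis, as in \cite{leonSymmetriesClassicalField2004}, and note that the remaining argument is the pointwise chain rule.
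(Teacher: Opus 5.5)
Your proof is correct and is essentially the argument the paper intends. The paper gives no proof of its own: it presents the corollary as an immediate consequence of Proposition \ref{prop:integral_curves_on_tildeZ} and defers the details to \cite{leonSymmetriesClassicalField2004}, and unwinding that proposition through the identification \eqref{eq:identification0} gives exactly your pointwise chain rule $(T_\gamma\tilde F(v))(u)=T_{\gamma(u)}F(v(u))$. Your caveat that $\tilde F$ must preserve $\tZ$ is fair, and the paper also takes it for granted when it writes $\tilde F:\tZ\to\tZ$.
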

\begin{proposition}
	Let $F$ be as above, then for a fixed $\alpha\in\Omega^{n+k}Z$ such that
	$\exists\beta\in\Omega^{n+k-1}Z$ with $\iota_B^*\alpha=\dd\beta$, we have
	that $\tilde{F}^*\tilde{\alpha}=\widetilde{F^*\alpha}$ where $\widetilde
	{F^*\alpha}$ is given by Definition
	\ref{def:integration_upon_Cauchy_of_forms}.
\end{proposition}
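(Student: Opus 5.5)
To show $\tilde{F}^*\tilde{\alpha}=\widetilde{F^*\alpha}$, I will evaluate both sides on an arbitrary $\gamma\in\tZ$ and tangent vectors $\tilde{\zeta}_1,\dots,\tilde{\zeta}_k\in T_\gamma\tZ$, and compare them directly with Definition~\ref{def:integration_upon_Cauchy_of_forms}. Before doing so, two small points must be fixed. First, $\tilde F$ has to map $\tZ$ into itself, and $F$ has to preserve $B$; I will take this as part of the standing hypotheses on $F$, as the statement tacitly does. Second, $\widetilde{F^*\alpha}$ must be well defined. Since $F(B)\subseteq B$, we have $F\circ i_B=i_B\circ F|_B$, and so
\[
  i_B^*F^*\alpha=(F|_B)^*i_B^*\alpha=(F|_B)^*\dd\beta=\dd\big((F|_B)^*\beta\big).
\]
Hence the boundary primitive for $F^*\alpha$ can be taken to be $F^*\beta$, restricted to $B$.

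The left-hand side is computed with the preceding Corollary, or directly from the identification \eqref{eq:identification0}. Through that identification, the tangent map $T_\gamma\tilde F$ sends $\zeta\in\Gamma(\gamma^*TZ)$ to the section
\[
  u\mapsto T_{\gamma(u)}F\big(\zeta(u)\big)\in T_{F(\gamma(u))}Z,
\]
which is the tangent vector at $\tilde F(\gamma)=F\circ\gamma$. To check this, I will take a curve $\tilde c$ representing $\zeta$ and differentiate $F\circ\tilde c(t)(u)$ pointwise in $u$. It follows that
\[
  (\tilde F^*\tilde\alpha)_\gamma(\tilde\zeta_1,\dots,\tilde\zeta_k)=\tilde\alpha_{F\circ\gamma}(TF\,\zeta_1,\dots,TF\,\zeta_k).
\]
By \eqref{eq:tilde_forms}, this equals $\int_M (F\circ\gamma)^*(\iota_{TF\zeta_1}\cdots\iota_{TF\zeta_k}\alpha)$ minus the corresponding boundary integral over $\partial M$ involving $\beta$. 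The order of contractions has to match the convention in \eqref{eq:tilde_forms}; only the orderings must agree, and I keep them identical on both sides.

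The key step is the pointwise identity, which I write in the slightly informal notation of the paper for contractions along a map:
\[
  \gamma^*\big(\iota_{\zeta_1}\cdots\iota_{\zeta_k}F^*\alpha\big)=(F\circ\gamma)^*\big(\iota_{TF\zeta_1}\cdots\iota_{TF\zeta_k}\alpha\big)
\]
as $n$-forms on $M$, and likewise with $\beta$ in place of $\alpha$ on $\partial M$. To prove it, I will evaluate both sides at $u\in M$ on vectors $w_1,\dots,w_n\in T_uM$. The left side gives
\[
  (F^*\alpha)_{\gamma(u)}\big(\zeta_k(u),\dots,\zeta_1(u),T\gamma\, w_1,\dots,T\gamma\, w_n\big),
\]
with the arguments in whatever order the contraction convention fixes. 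By the definition of pullback this equals
\[
  \alpha_{F(\gamma(u))}\big(TF\zeta_k(u),\dots,TF\zeta_1(u),T(F\circ\gamma)w_1,\dots,T(F\circ\gamma)w_n\big),
\]
using the chain rule $TF\circ T\gamma=T(F\circ\gamma)$. This is exactly the right side. The same computation restricted to $\partial M$ handles the $\beta$ term, with the sign $(-1)^k$ unchanged on both sides. Integrating over $M$ and over $\partial M$ then gives the equality.

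The main obstacle is not this computation, which is just multilinear algebra. It is the care needed in making the infinite-dimensional tangent map $T\tilde F$ rigorous through \eqref{eq:identification0}, together with the boundary bookkeeping: I must check that $F^*\beta$ is an admissible primitive on $B$, and that the value of $\tilde\alpha$ does not depend on which primitive is chosen. For the latter, two primitives differ by a closed form on $B$; if the independence does not follow from the setup, I would simply fix the choice $F^*\beta$ as the definition of $\widetilde{F^*\alpha}$.
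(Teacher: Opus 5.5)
Your proposal is correct. It is the straightforward computation the paper alludes to; the paper gives no proof of its own and defers to \cite{leonSymmetriesClassicalField2004}. Your route is the expected one: identify $T\tilde F$ through \eqref{eq:identification0}, then use pointwise pullback and the chain rule $TF\circ T\gamma=T(F\circ\gamma)$ on both $M$ and $\partial M$.

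The one thing to firm up is your hedge about the primitive. Two primitives differ by a closed form $c$ on $B$, and the extra term $\int_{\partial M}\gamma^*(\iota_{\zeta_1}\cdots\iota_{\zeta_k}c)$ need not vanish in general. So the identity must be read with $(F|_B)^*\beta$ as the primitive for $F^*\alpha$; that is exactly the choice you already make, and it is necessary rather than merely a fallback.
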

\begin{proposition}
	Let $\xi\in\mathfrak{X}(Z)$ be a fixed vector field, then
	\[
		\pounds_{\tilde{\xi}}\tilde{\alpha}=\widetilde{\pounds_\xi\alpha}
	\]
	$\forall\alpha\in\Omega^{n+k}Z$, with $k\in\NN_0$ such that
	$\exists\beta\in\Omega^{n+k-1}Z$ where $\iota_B^*\alpha=\dd\beta$.
\end{proposition}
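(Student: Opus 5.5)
The plan is to derive the Lie derivative identity from the pullback identity $\tilde F^*\tilde\alpha=\widetilde{F^*\alpha}$ of the preceding proposition, applied to the flow of $\xi$, together with Proposition \ref{prop:integral_curves_on_tildeZ}. Let $F_t$ denote the flow of $\xi$; first assume $\xi$ complete, and in general work locally in $t$. By Proposition \ref{prop:integral_curves_on_tildeZ}, the curves $t\mapsto\tilde F_t(\gamma)$ with $\tilde F_t(\gamma)(u)=F_t(\gamma(u))$ are integral curves of $\tilde\xi$. Hence $\tilde F_t$ is the flow of $\tilde\xi$ on $\tZ$. This uses that $F_t$ preserves $\tZ$, which I take as implicit in the statement, as it was for the preceding corollary and proposition. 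Then
\[
\pounds_{\tilde\xi}\tilde\alpha=\der{}{t}\Big|_{t=0}\tilde F_t^*\tilde\alpha=\der{}{t}\Big|_{t=0}\widetilde{F_t^*\alpha}.
\]

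The remaining step, and the main obstacle, is to interchange $\der{}{t}$ with the tilde construction, i.e. to show $\der{}{t}|_{t=0}\widetilde{F_t^*\alpha}=\widetilde{\der{}{t}|_{t=0}F_t^*\alpha}=\widetilde{\pounds_\xi\alpha}$. For this I evaluate on $\tilde\zeta_1,\dots,\tilde\zeta_k\in T_\gamma\tZ$ using \eqref{eq:tilde_forms}. The interior term is $\int_M\gamma^*(\iota_{\zeta_1}\dots\iota_{\zeta_k}F_t^*\alpha)$. Since $M$ is compact and everything depends smoothly on $t$, differentiation passes under the integral and gives $\int_M\gamma^*(\iota_{\zeta_1}\dots\iota_{\zeta_k}\pounds_\xi\alpha)$.

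The delicate part is the boundary term. The form $F_t^*\alpha$ must be paired with a primitive on $B$, and the natural candidate is $F_t^*\beta$. This requires $F_t$ to preserve $B$, i.e. $\xi$ tangent to $B$, which I will add as a hypothesis implicit in the setting (the same one needed for $\tilde F_t$ to map $\tZ$ to itself, since $\gamma(\partial M)\subseteq B$). Granting it, $i_B^*F_t^*\alpha=\dd(F_t^*\beta)$ on $B$. Differentiating at $t=0$ gives $i_B^*\pounds_\xi\alpha=\dd(\pounds_\xi\beta)$ restricted to $B$, so $\pounds_\xi\beta$ is an admissible primitive for $\pounds_\xi\alpha$. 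The boundary integral $\int_{\partial M}\gamma^*(\iota_{\zeta_1}\dots\iota_{\zeta_k}F_t^*\beta)$ then differentiates under the integral sign in the same way.

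Collecting the two terms shows that the derivative equals $\widetilde{\pounds_\xi\alpha}$, with boundary primitive $\pounds_\xi\beta$, evaluated on the $\tilde\zeta_i$, which proves the claim. As a cross-check, I would also verify the identity via Cartan's formula on $\tZ$, using $\dd\tilde\alpha=\widetilde{\dd\alpha}$ and $\iota_{\tilde\xi}\tilde\alpha=\widetilde{\iota_\xi\alpha}$. I prefer the flow argument because it avoids establishing those two compatibilities separately, given that their boundary terms carry the sign $(-1)^k$.
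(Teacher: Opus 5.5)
Your flow argument is correct and is essentially the intended one. The paper gives no proof of its own here (it defers to \cite{leonSymmetriesClassicalField2004}), but this is the standard route, and your key step of differentiating $(F_t\circ\gamma)^*\alpha$ under the integral sign to obtain $\gamma^*\pounds_\xi\alpha$ is exactly the computation the paper carries out in Appendix \ref{app:proof_of_proposition_ref_prop_relation_hamiltonian_forms_and_functions}. You are also right to state the hypotheses the statement leaves implicit: the flow of $\xi$ must preserve $\tZ$ and $B$, and $\widetilde{\pounds_\xi\alpha}$ is formed with the primitive $\pounds_\xi\beta$. Your Cartan-formula cross-check would need more care, since $i_B^*\iota_\xi\alpha=\pounds_{\xi}\beta-\dd\iota_{\xi}\beta$ is not exact in general, so $\widetilde{\iota_\xi\alpha}$ need not even be defined.
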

Lastly, and fundamentally for this instantaneous formalism, we have the
following result
\begin{proposition}\label{prop:presymplectic_proposition}
	Let $\alpha$ be a $n-$ or $(n+1)-$form, then $\widetilde{\dd\alpha}=\dd
	\tilde{\alpha}$. 
	In particular
	\[
		\widetilde{\Omega_L}=-\dd\widetilde{\Theta_L},
	\]
	and hence $(\tZ,\widetilde{\Omega_L})$ is a presymplectic manifold.
\end{proposition}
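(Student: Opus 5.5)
To prove $\widetilde{\dd\alpha}=\dd\tilde\alpha$, I would evaluate both sides on vector fields of the induced form $\tilde\zeta_1,\dots,\tilde\zeta_k$ (with $k=1$ for an $n$-form $\alpha$, giving $k=0$ for $\tilde\alpha$, and $k=2$ for an $(n+1)$-form). Since such induced fields span every $T_\gamma\tZ\cong\Gamma(\gamma^*TZ)$ pointwise, checking equality on them is enough. For $\dd\tilde\alpha$ I would use the invariant (Palais) formula for the exterior derivative. It involves the terms $\tilde\zeta_i(\tilde\alpha(\dots))$ and $\tilde\alpha([\tilde\zeta_i,\tilde\zeta_j],\dots)$.

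The first ingredient is $[\tilde\xi,\tilde\zeta]=\widetilde{[\xi,\zeta]}$. I would derive it from Proposition \ref{prop:integral_curves_on_tildeZ}: the flow of $\tilde\xi$ is $\widetilde{\Phi_t}$, i.e.\ composition with the flow $\Phi_t$ of $\xi$. Combined with the corollary $T\tilde F(\tilde\zeta)=\widetilde{TF(\zeta)}$, this identifies Lie derivatives of induced fields.

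The second ingredient is the derivative $\tilde\zeta(\int_M\gamma^*\omega)$ for an $n$-form $\omega$ on $Z$. Along the curve $\widetilde{\Phi_t}(\gamma)=\Phi_t\circ\gamma$ we have $\frac{\dd}{\dd t}\int_M\gamma^*\Phi_t^*\omega=\int_M\gamma^*\pounds_\zeta\omega=\int_M\gamma^*(\iota_\zeta\dd\omega)+\int_{\partial M}\gamma^*(\iota_\zeta\omega)$, using Cartan's formula and Stokes on the compact $M$. The same computation is exactly the statement $\pounds_{\tilde\zeta}\tilde\alpha=\widetilde{\pounds_\zeta\alpha}$ (the preceding proposition), which I may assume.

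With these, the $n$-form case goes as follows. Here $\tilde\alpha(\gamma)=\int_M\gamma^*\alpha-\int_{\partial M}\gamma^*\beta$, so $\dd\tilde\alpha(\tilde\zeta)=\pounds_{\tilde\zeta}\tilde\alpha=\widetilde{\pounds_\zeta\alpha}$. This equals $\int_M\gamma^*\iota_\zeta\dd\alpha$ plus a boundary term $\int_{\partial M}\gamma^*(\iota_\zeta\alpha-\pounds_\zeta\beta)$. Comparing with Definition \ref{def:integration_upon_Cauchy_of_forms} applied to $\dd\alpha$ (boundary primitive $\dd\beta$, since $i_B^*\dd\alpha=\dd\dd\beta$), I need the boundary pieces to match. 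This uses $i_B^*\alpha=\dd\beta$ together with the fact that $\gamma(\partial M)\subseteq B$ and that the $\zeta$ are tangent to $B$ there. The pullback of $\iota_\zeta\alpha$ then equals that of $\iota_\zeta\dd\beta$, so $\pounds_\zeta\beta-\iota_\zeta\dd\beta=\dd\iota_\zeta\beta$ integrates to zero over the closed manifold $\partial M$.

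For the $(n+1)$-form case I would use Cartan's formula on $\tZ$: $\dd\tilde\alpha(\tilde\zeta_1,\tilde\zeta_2)$ is expressed through $\pounds_{\tilde\zeta_1}(\iota_{\tilde\zeta_2}\tilde\alpha)$, $\pounds_{\tilde\zeta_2}(\iota_{\tilde\zeta_1}\tilde\alpha)$ and $\iota_{[\tilde\zeta_1,\tilde\zeta_2]}\tilde\alpha$. The functions $\iota_{\tilde\zeta}\tilde\alpha$ are themselves induced by the $n$-forms $\iota_\zeta\alpha$ (with boundary forms $\iota_\zeta\beta$), so I can reuse the first case together with the bracket identity and then reassemble Cartan's formula on $Z$ inside the integral.

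Finally, $\widetilde{\Omega_L}=-\widetilde{\dd\Theta_L}=-\dd\widetilde{\Theta_L}$, hence $\dd\widetilde{\Omega_L}=0$, and $\widetilde{\Omega_L}$ is a closed $2$-form, i.e.\ presymplectic by Definition \ref{def:multisymplectic_forms_and_manifold}.

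The main obstacle is the boundary bookkeeping: the signs $(-1)^k$ and the $\beta$-terms in \eqref{eq:tilde_forms} must cancel correctly. This requires the tangency of variations to $B$ along $\partial M$, which is implicit in the definition of $\tZ$. I would settle it first in the case $k=1$, then propagate it.
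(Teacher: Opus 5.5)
Your strategy is the right one. The paper gives no proof of its own beyond a citation, but your $k=0$ step is exactly the computation it carries out in the appendix proof of Proposition~\ref{prop:relation_Hamiltonian_forms_and_functions}. However, the boundary bookkeeping, which you correctly single out as the crux, goes wrong in two places as written.

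\textbf{The primitive of $\dd\alpha$.} It has to be $0$, not $\dd\beta$. Both are admissible since $i_B^*\dd\alpha=\dd\dd\beta=0$, but they give different forms on $\tZ$. Your own cancellation shows that the boundary part $\int_{\partial M}\gamma^*(\iota_\zeta\alpha-\pounds_\zeta\beta)$ of $\dd\tilde\alpha(\tilde\zeta)$ vanishes, so $\dd\tilde\alpha(\tilde\zeta)=\int_M\gamma^*\iota_\zeta\dd\alpha$. By contrast, \eqref{eq:tilde_forms} with primitive $\dd\beta$ gives $\widetilde{\dd\alpha}(\tilde\zeta)=\int_M\gamma^*\iota_\zeta\dd\alpha+\int_{\partial M}\gamma^*\iota_\zeta\dd\beta$. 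The extra term equals $\int_{\partial M}\gamma^*\iota_\zeta\alpha=\tilde\zeta\bigl(\int_{\partial M}\gamma^*\beta\bigr)$, which is not zero in general, so with your choice the pieces do not match. With the zero primitive they agree. This is also the convention behind the paper's $\widetilde{\Omega_L}$, which carries no boundary term even though $i_B^*\Theta_L=\dd\Pi$. With your convention, $-\widetilde{\dd\Theta_L}$ would differ from that $\widetilde{\Omega_L}$ by a boundary integral of $\iota_{\xi_1}\iota_{\xi_2}\Theta_L$, and the ``in particular'' would fail.

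\textbf{Reusing the $n$-form case.} In the $(n+1)$-form case you cannot feed the functions $\iota_{\tilde\zeta}\tilde\alpha=\int_M\gamma^*\iota_\zeta\alpha+\int_{\partial M}\gamma^*\iota_\zeta\beta$ into the $n$-form case. In the $k=0$ format of \eqref{eq:tilde_forms} their boundary primitive would be $-\iota_\zeta\beta$ (note the sign), so your $n$-form argument would need $i_B^*(\iota_\zeta\alpha)=-\dd\iota_\zeta\beta$. In fact $i_B^*(\iota_\zeta\alpha)=\iota_\zeta\dd\beta=\pounds_\zeta\beta-\dd\iota_\zeta\beta$.

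Accordingly, $\tilde\zeta_1(\iota_{\tilde\zeta_2}\tilde\alpha)=\int_M\gamma^*\iota_{\zeta_1}\dd\iota_{\zeta_2}\alpha+\int_{\partial M}\gamma^*\iota_{\zeta_1}\pounds_{\zeta_2}\beta$ retains a boundary term. The cancellation only happens after antisymmetrizing and subtracting the bracket term, not term by term.

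Instead, differentiate directly: $\tilde\zeta_1(\iota_{\tilde\zeta_2}\tilde\alpha)=\int_M\gamma^*\pounds_{\zeta_1}\iota_{\zeta_2}\alpha+\int_{\partial M}\gamma^*\pounds_{\zeta_1}\iota_{\zeta_2}\beta$, which needs no exactness hypothesis. Then apply
\[
\pounds_{\zeta_1}\iota_{\zeta_2}-\pounds_{\zeta_2}\iota_{\zeta_1}-\iota_{[\zeta_1,\zeta_2]}=\iota_{\zeta_2}\iota_{\zeta_1}\dd-\dd\,\iota_{\zeta_2}\iota_{\zeta_1}
\]
to both $\alpha$ and $\beta$, and use three facts:
\begin{enumerate}
\item Stokes on $M$;
\item $\partial(\partial M)=\emptyset$;
\item $\gamma^*\iota_{\zeta_2}\iota_{\zeta_1}\alpha=\gamma^*\iota_{\zeta_2}\iota_{\zeta_1}\dd\beta$ on $\partial M$, by tangency and $i_B^*\alpha=\dd\beta$.
\end{enumerate}
This gives $\dd\tilde\alpha(\tilde\zeta_1,\tilde\zeta_2)=\int_M\gamma^*\iota_{\zeta_2}\iota_{\zeta_1}\dd\alpha$, which is $\widetilde{\dd\alpha}(\tilde\zeta_1,\tilde\zeta_2)$ with zero primitive. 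After these two repairs, your final step $\dd\widetilde{\Omega_L}=-\dd\dd\widetilde{\Theta_L}=0$ goes through unchanged.
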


\section{Instantaneous De Donder equations}
As we introduced in the beginning of this chapter, one of the goals
of this instantaneous formalism is to materialize a canonical arrow
of time, i.e., a dynamical picture of the field equations. In
particular, we will end this section with a fundamental theorem that
proves the existence of an equivalence between multisymplectic
De Donder equations formulation and a sort of presymplectic equation
for $\widetilde{\Omega_L}$ that specifically exploits this fact
discovered on Proposition \ref{prop:presymplectic_proposition}.

So first, we start defining a foliation of our spacetime $X$.
\begin{definition}[Slicing]
	\label{def:slicing}
	Let $c_{\tX}:I\subseteq\RR\to\tX$ be a curve on $\tX$. We call
	$c_{\tX}$ a \textbf{slicing} or that it \textbf{splits} $X$ if
	the canonical map
	\begin{align*}
		\Phi:I\times M&\to X\\
		(t,u)&\mapsto c_{\tX}(t)(u)	
	\end{align*}
	is a diffeomorphism, hence $X\cong(I\subseteq\RR)\times M$. 
\end{definition}
For a fixed $t\in I\subseteq$, $\Phi(t,\cdot)\in\tX$, clearly foliating
$X$ as time flows. With this slicing property, as $\Phi:I\times M\to X$
becomes a diffeomorphism, we can canonically identify a time-space
decomposition in our spacetime. Indeed, fixing $(t,u^1,\dots,u^n)$
local coordinates on $I\times M$ and the considering the \textbf{fixed}
coordinates on $X$ given by
\begin{align*}
	x^0&:=t\circ\Phi^{-1},\\
	x^n&:=u^n\circ\Phi^{-1},
\end{align*}
if $\partial_t$ generates the tangent space to $I\subseteq\RR$
(time direction, essentially) and $\partial_{u^\mu}$ the tangent
space to $M$ (space direction, essentially), the
time-space canonical splitting is materialized as
\[
	\Phi_{*,(t,u)}\left(\frac{\partial}{\partial t}\bigg|_{(t,u)}\right)
	=\frac{\partial}{\partial x^0}\bigg|_{\Phi(t,u)},\quad
	\Phi_{*,(t,u)}\left(\frac{\partial}{\partial u^\mu}\bigg|_{(t,u)}
	\right)=\frac{\partial}{\partial x^\mu}\bigg|_{\Phi(t,u)}.
\]
Remarking first that the identification \eqref{eq:identification0} could
also have been done on the $\tX$ context, we have the following
definition
\begin{definition}[Infinitesimal slicing]
	Given a point $\tau\in\tX$, we call a tangent vector $v\in T_\tau\tX$
	an \textbf{infinitesimal slicing} provided that $v(u)\in
	\Gamma(\tau^*TX)$ is transverse to $\textrm{Im}\,\tau$, i.e.
	\[
		v(u)\notin T_{\tau(u)}\tau(M),\quad\forall u\in M.
	\]
\end{definition}

With this two previous definitions and considering the natural
\textbf{projection of curves} in $\tZ$ to curves in $\tX$ by $\pi_{ZX}$
as $c_{\tX}=\pi_{ZX}\circ c_{\tZ}$, we note the following relation
between curves on $\tZ$ and sections of $\pi_{ZX}$ given by the following
equations
\begin{equation}\label{eq:section-curve-identification_rhs}
	\sigma(c_{\tX}(t)(u))=:c_{\tZ}(t)(u),
\end{equation}
\begin{equation}\label{eq:section-curve-identification_lhs}
	\sigma(c_{\tX}(t)(u)):=c_{\tZ}(t)(u).
\end{equation}
Indeed, given a section $\sigma\in\Gamma(\pi_{ZX})$ and a curve
$c_{\tX}$ on $\tX$, we define a curve $c_{\tZ}$ on $\tZ$ using Equation
\eqref{eq:section-curve-identification_rhs}, while, conversely, given a curve
$c_{\tZ}$ on $\tZ$ such that its projection $c_{\tX}$ splits $X$, then $\sigma$
defined by Equation \eqref{eq:section-curve-identification_lhs} is a \textbf{local}
section of $\pi_{ZX}$. We will call this the \textbf{section/curve identification}.

And finally, with this identification, we state the desired theorem
\begin{theorem}
	\label{thm:instantaneous_de_donder}
	Given $\sigma\in\Gamma(\pi_{ZX})$ satisfying De Donder equations
	\eqref{eq:De-Donder-equations} then its curve identification $c_{\tZ}$
	(note that $\dot{c}_{\tZ}\in\mathfrak{X}(\tZ)$) satisfies
	\begin{equation}
		\label{eq:presymplectic_equation}
		\iota_{\dot{c}_{\tZ}}\widetilde{\Omega_L}=0.
	\end{equation}
	In particular, given a connection $\Gamma$ on $\pi_{ZX}$ satisfying
	the geometric De Donder equation \eqref{eq:geometric-De-Donder} for
	its horizontal projector $\textbf{h}$, if we have an horizontal local integral
	section $\sigma$ of $\Gamma$ then the solution to
	\eqref{eq:presymplectic_equation} is given by the horizontal lift
	through $\textbf{h}$ (i.e. through $\sigma$) of an infinitesimal slicing.

	Conversely given a curve $c_{\tZ}$ which projection splits $X$, its
	(local) section identification verifies the De Donder equations.
\end{theorem}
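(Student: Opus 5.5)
The whole proof rests on one pointwise identity relating the velocity of the curve $c_{\tZ}$ to the pushforward of a time vector field under $\sigma$. Through the section/curve identification \eqref{eq:section-curve-identification_rhs} and the slicing $\Phi:I\times M\to X$ we have $c_{\tZ}(t)=\sigma\circ\Phi(t,\cdot)$. Differentiating in $t$ and using \eqref{eq:identification0} gives, for every $u\in M$,
\[
\dot c_{\tZ}(t)(u)=T\sigma\Big(T\Phi\big(\tfrac{\partial}{\partial t}\big|_{(t,u)}\big)\Big)=T\sigma\Big(\tfrac{\partial}{\partial x^0}\Big|_{\Phi(t,u)}\Big),
\]
which is exactly the value at the point $\sigma(\Phi(t,u))$ of the pushforward along $\sigma$ of the infinitesimal slicing $v=\dot c_{\tX}(t)$. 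The plan is to write the two-form $\widetilde{\Omega_L}$ evaluated on $\dot c_{\tZ}$ as an integral over $M$, and to show that this integral vanishes pointwise.

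For the direct implication, fix $t$ and an arbitrary $\tilde\xi\in T_{c_{\tZ}(t)}\tZ$, i.e. $\xi\in\Gamma(c_{\tZ}(t)^*TZ)$, and extend $\xi$ to a vector field on $Z$ near the image. By the definition of $\widetilde{\Omega_L}$,
\[
(\iota_{\dot c_{\tZ}}\widetilde{\Omega_L})(\tilde\xi)=\int_M c_{\tZ}(t)^*\big(\iota_\xi\iota_{\dot c_{\tZ}}\Omega_L\big).
\]
Write $\gamma=c_{\tZ}(t)=\sigma\circ\tau$ with $\tau=\Phi(t,\cdot)$. The integrand then becomes $\tau^*\big(\iota_{\partial_0}\sigma^*(\iota_\xi\Omega_L)\big)$, up to a sign coming from the order of contractions. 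This rewriting is legitimate because $\dot c_{\tZ}=T\sigma(\partial_0)$, so contracting with $\dot c_{\tZ}$ before pulling back by $\sigma$ is the same as contracting the pulled-back form with $\partial_0$. The De Donder equations \eqref{eq:De-Donder-equations} state that $\sigma^*(\iota_\xi\Omega_L)=0$ as an $(n+1)$-form on $X$. Hence the integrand vanishes identically and \eqref{eq:presymplectic_equation} holds. The boundary term in Definition \ref{def:integration_upon_Cauchy_of_forms} needs separate attention. For $\Omega_L$ it is absent in the displayed formula, since $i_B^*\Omega_L=-\dd\,i_B^*\Theta_L=-\dd\dd\Pi=0$, so I take $\beta=0$.

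For the statement about connections, suppose $\sigma$ is a horizontal integral section of $\textbf{h}$. Then $T\sigma(\partial_0)=\textbf{h}(\partial_0)$, so $\dot c_{\tZ}(u)=\textbf{h}\big(v(u)\big)$ is the horizontal lift of the infinitesimal slicing $v$. By the proposition on the geometric De Donder equation \eqref{eq:geometric-De-Donder}, $\sigma$ solves the De Donder equations, and the first part then applies.

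For the converse, start from a curve $c_{\tZ}$ whose projection splits $X$, and let $\sigma$ be the local section defined by \eqref{eq:section-curve-identification_lhs}. We know that $\int_M\tau^*(\iota_{\partial_0}\sigma^*\iota_\xi\Omega_L)=0$ for all $t$ and all $\xi$. This is the main obstacle: we must pass from vanishing integrals to the pointwise vanishing of the $(n+1)$-form $\sigma^*(\iota_\xi\Omega_L)$. I would argue by localization. First, the De Donder condition is $C^\infty(Z)$-linear in $\xi$, so it suffices to test it on the fields $f\,\partial_{y^i}$, $f\,\partial_{z^i_\mu}$ and $f\,\partial_{x^\mu}$ with $f$ a bump function of arbitrarily small support around a point $\gamma(u)$. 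Second, $\tau^*(\iota_{\partial_0}\beta)$ for an $(n+1)$-form $\beta$ on $X\cong I\times M$ equals $\beta(\partial_0,\partial_1,\dots,\partial_n)\,\dd^n u$, which recovers the single coefficient of $\beta$. Together these force $\sigma^*(\iota_\xi\Omega_L)=0$ at every point $\Phi(t,u)$, since every such point lies on some slice. Two technical points remain to be checked. One is that arbitrary compactly supported $\xi$ indeed define elements of $T_\gamma\tZ$, which requires the boundary conditions to be respected near $\partial M$; I would handle this by taking supports in the interior and using continuity up to the boundary. The other is the orientation sign convention in $\dd^n x_0$, which is harmless because we are proving that something vanishes.
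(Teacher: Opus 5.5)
Your proof is correct, and it is the standard argument for this result. The paper gives no proof of its own (it only cites \cite{binzGeometryClassicalFields1988}), so there is no competing route to compare. Three small remarks follow.

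First, the forward direction does not need $c_{\tX}$ to split $X$. For any curve one has $\dot c_{\tZ}(t)(u)=T\sigma\bigl(v(u)\bigr)$ with $v=\dot c_{\tX}(t)$. The identity $\gamma^*(\iota_\xi\iota_{\dot c_{\tZ}}\Omega_L)=-\tau^*\bigl(\iota_{v}\,\sigma^*(\iota_\xi\Omega_L)\bigr)$ never uses transversality of $v$.

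Second, in the converse you correctly assume $\iota_{\dot c_{\tZ}}\widetilde{\Omega_L}=0$. The statement omits this hypothesis, and without it the claim is false.

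Third, your localization uses the test fields $f\,\partial/\partial y^i$, $f\,\partial/\partial z^i_\mu$, $f\,\partial/\partial x^\mu$. This is legitimate only because of the identification \eqref{eq:identification0}, which treats every section of $\gamma^*TZ$ as a tangent vector to $\tZ$. If you took $\tZ$ literally as defined (data holonomic along the slice), a field such as $f\,\partial/\partial z^i_a$ with $a$ spatial would not be admissible on its own. You would then only recover integrated-by-parts combinations of the De Donder equations. So the real input of the converse is that identification, rather than the boundary behaviour near $\partial M$ that you flagged.
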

\begin{proof}
	See \cite{binzGeometryClassicalFields1988}.	
\end{proof}
Note that for a singular Lagrangian, the connection lifting of an infinitesimal
slicing is not guaranteed (as the connection is not guaranteed to even exist
globally), hence, the need of a submanifold restriction algorithm just like
the one given in the Subsection \ref{subsect:the_singular_case} is clear.
The general procedure is similar to the restriction algorithms already presented in
Subsections \ref{subsect:the_singular_case} and
\ref{subsect:almost_regular_lagrangians_and_the_constraint_algorithm}; we restrict
to submanifolds where a tangent vector satisfying the De Donder equations exists and
such that vector projects onto an infinitesimal slicing (cf.
\cite{leonSymmetriesClassicalField2004} for the details).

\section{Instantaneous presymplectic bracket}
In finite-dimensional symplectic geometry $(M,\omega)$, by definition of the symplectic form,
its non-degeneracy condition implies that the canonical \textbf{flat} map
\begin{align*}
	\flat_{\omega}\colon\Gamma(TM)&\to\Gamma(T^*M)\\
	\xi&\mapsto \iota_{\xi}\omega
\end{align*}
has trivial kernel, and as the domain and codomain are fiber-wise vector spaces of the same
dimension, the flat map has an inverse (\textbf{sharp}) map by the rank-nullity
theorem given by
\begin{align*}
	\sharp_{\omega}\colon\Gamma(T^*M)&\to\Gamma(TM)\\
	\alpha&\mapsto \sharp_{\omega}(\alpha)
	:=(\flat_{\omega})^{-1}(\alpha).
\end{align*}
In the infinite dimensional cases (like $(\tZ,\widetilde{\Omega_L})$), the rank-nullity theorem
is not applicable, and hence tangent and cotangent bundles are not guaranteed to be isomorphic.
If $(M,\omega)$ in the infinite dimensional case is a symplectic manifold, injectivity of
the flat map is met by the non-degeneracy condition; in this case we differentiate between
\textbf{strong symplectic forms} when $\omega$ induces an isomorphism between the tangent and
the cotangent bundles and \textbf{weak symplectic forms} when $\omega$ induces a flat map
that is only injective.

From Proposition \ref{prop:presymplectic_proposition}, we know that
$(\tZ,\widetilde{\Omega_L})$ is only guaranteed to be a (infinite-dimensional) presymplectic
manifold, and not a symplectic one, hence, even injectivity of the flat map is not assured.

The existence of the sharp map, in the (finite-dimensional) symplectic context $(M,\omega)$,
implies the existence of a unique Hamiltonian vector field (cf. the following definition)
for every function $\tilde{f}\in C^\infty(M)$. In the presymplectic
context of $(\tZ,\widetilde{\Omega_L})$, neither the existence nor the uniqueness of this
Hamiltonian vector field is guaranteed (uniqueness only up to an element of the kernel of
the $\flat$ map); In any case, the following
definition allow us to introduce a well defined presymplectic bracket related
to the multisymplectic one already defined in Subsection
\ref{subsect:hamiltonian_forms_and_multisymplectic_brackets}. 
\begin{definition}[Hamiltonian function and vector field]
	Let $\tilde{f}\in C^\infty(\tZ)$ such that $\exists\tilde{\xi}\in\mathfrak{X}
	(\tZ)$ with $\iota_{\tilde{\xi}}\widetilde{\Omega_L}=\dd f$, then $f$ is
	called a \textbf{Hamiltonian function} while $\tilde{\xi}$ is the
	\textbf{Hamiltonian vector field} associated to $\tilde{f}$.
\end{definition}
Recalling Definition \ref{def:hamiltonian_forms_and_vector_fields} of Hamiltonian
$n-$forms and vector fields in the multisymplectic setting, we have a relation with
the above defined counterpart for the instantaneous formalism given by the
following definition
\begin{proposition}
	\label{prop:relation_Hamiltonian_forms_and_functions}
	Let $\alpha$ be a Hamiltonian $n$-form for $\Omega_L$ which is exact on
	$\partial Z$ ($B\subseteq\partial Z$) and let $X_\alpha$ be its corresponding Hamiltonian vector
	field, then the $0-$form $\tilde{\alpha}$ (i.e. function) given by Definition
	\ref{def:integration_upon_Cauchy_of_forms} and the induced vector field
	$\tilde{X}_{\tilde{\alpha}}$ by $X_\alpha$ are a Hamiltonian function and
	vector field for $\widetilde{\Omega_L}$ respectively.
\end{proposition}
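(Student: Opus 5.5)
The plan is to verify directly that $\dd\tilde{\alpha}=\iota_{\tilde{X}_{\tilde{\alpha}}}\widetilde{\Omega_L}$ as $1$-forms on $\tZ$, by evaluating both sides on an arbitrary tangent vector $\tilde{\zeta}\in T_\gamma\tZ\cong\Gamma(\gamma^*TZ)$ and pushing everything down to $Z$.

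For the left-hand side, the hypothesis that $\alpha$ is exact on the boundary, $i_B^*\alpha=\dd\beta$ for some $(n-1)$-form $\beta$, is exactly what Definition \ref{def:integration_upon_Cauchy_of_forms} requires with $k=0$. So the function $\tilde{\alpha}(\gamma)=\int_M\gamma^*\alpha-\int_{\partial M}\gamma^*\beta$ is well defined. Proposition \ref{prop:presymplectic_proposition} (the case of an $n$-form) then gives $\dd\tilde{\alpha}=\widetilde{\dd\alpha}$. Since $\dd\alpha=\iota_{X_\alpha}\Omega_L$ by Definition \ref{def:hamiltonian_forms_and_vector_fields}, we get $\dd\tilde{\alpha}=\widetilde{\iota_{X_\alpha}\Omega_L}$. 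To apply the definition with $k=1$ to this $(n+1)$-form, we also need its pullback to $B$ to be exact. The natural choice is that it is exact with primitive $\dd\beta$ restricted to $B$, which holds because $i_B^*\dd\alpha=\dd i_B^*\alpha=\dd\dd\beta=0$. That primitive $\dd\beta$ is closed, so it contributes $\int_{\partial M}\gamma^*(\iota_\zeta\dd\beta)$.

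The main obstacle is this boundary term. I would handle it either by noting that the convention implicit in Proposition \ref{prop:presymplectic_proposition} already absorbs it, or by showing directly that it vanishes. For the direct route, $\zeta$ is tangent to $\tZ$, so its flow preserves the condition $\gamma(\partial M)\subseteq B$; hence $\zeta|_{\partial M}$ is tangent to $B$. Then $\iota_\zeta\dd\beta$ along $\partial M$ equals $\pounds_\zeta\beta-\dd\iota_\zeta\beta$. The exact piece integrates to zero over the closed manifold $\partial M$ by Stokes. The Lie derivative piece is exactly the variation of the $-\int_{\partial M}\gamma^*\beta$ term in $\tilde{\alpha}$. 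Keeping track of this consistently shows that the boundary contributions match on both sides, leaving $\dd\tilde{\alpha}(\tilde{\zeta})=\int_M\gamma^*(\iota_\zeta\iota_{X_\alpha}\Omega_L)$.

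For the right-hand side, I would use the formula for $\widetilde{\Omega_L}$ given after Definition \ref{def:integration_upon_Cauchy_of_forms}, together with the induced vector field $\tilde{X}_{\tilde{\alpha}}(\gamma)(u)=X_\alpha(\gamma(u))$. This gives $(\iota_{\tilde{X}_{\tilde{\alpha}}}\widetilde{\Omega_L})(\tilde{\zeta})=\widetilde{\Omega_L}(\tilde{X}_{\tilde{\alpha}},\tilde{\zeta})=\int_M\gamma^*(\iota_\zeta\iota_{X_\alpha}\Omega_L)$, which agrees with the previous paragraph up to the ordering convention of the interior products. I would fix the signs so that they are consistent with the definition of $\widetilde{\Omega_L}$. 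Finally, because $\gamma$ and $\tilde{\zeta}$ were arbitrary, the equality $\dd\tilde{\alpha}=\iota_{\tilde{X}_{\tilde{\alpha}}}\widetilde{\Omega_L}$ holds on all of $\tZ$. This shows $\tilde{\alpha}$ is a Hamiltonian function with Hamiltonian vector field $\tilde{X}_{\tilde{\alpha}}$.
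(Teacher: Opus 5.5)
Your skeleton is right: reduce to $\dd\tilde\alpha=\widetilde{\dd\alpha}$, then use $\dd\alpha=\iota_{X_\alpha}\Omega_L$. You are also right that $\zeta$ is tangent to $B$ along $\gamma(\partial M)$. The gap is the boundary bookkeeping, which is the only real content of the statement: you leave it unresolved, and with your choice of primitive the claim is actually false. Taking $\dd\beta$ as primitive of $i_B^*\dd\alpha=0$, Definition \ref{def:integration_upon_Cauchy_of_forms} gives $\widetilde{\dd\alpha}(\tilde\zeta)=\int_M\gamma^*(\iota_\zeta\dd\alpha)+\int_{\partial M}\gamma^*(\iota_\zeta\dd\beta)$. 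Your own decomposition then shows the extra term equals $\int_{\partial M}\gamma^*(\pounds_\zeta\beta)$, which does not vanish in general.

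For a concrete failure, take $\alpha=\dd\beta$ globally, so $X_\alpha=0$. Then $\tilde\alpha\equiv 0$ by Stokes and $\dd\tilde\alpha=0$, while your $\widetilde{\dd\alpha}$ is $\tilde\zeta\mapsto\int_{\partial M}\gamma^*(\iota_\zeta\dd\beta)$. So with that convention you cannot invoke Proposition \ref{prop:presymplectic_proposition} as you do. Saying the extra term ``is the variation of $-\int_{\partial M}\gamma^*\beta$'' and therefore ``matches'' double-counts, since that variation is already part of $\dd\tilde\alpha$. The term you never produce is $\int_{\partial M}\gamma^*(\iota_\zeta\alpha)$.

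The paper closes the argument by direct computation. With $\Phi_t$ the flow of $\zeta$, one has $\dd\tilde\alpha(\tilde\zeta)=\int_M\gamma^*(\pounds_\zeta\alpha)-\int_{\partial M}\gamma^*(\pounds_\zeta\beta)$. Cartan's formula and Stokes turn $\int_M\gamma^*(\dd\iota_\zeta\alpha)$ into $\int_{\partial M}\gamma^*(\iota_\zeta\alpha)$ and kill $\int_{\partial M}\gamma^*(\dd\iota_\zeta\beta)$, because $\partial M$ is closed. The two surviving boundary integrals, $\int_{\partial M}\gamma^*(\iota_\zeta\alpha)-\int_{\partial M}\gamma^*(\iota_\zeta\dd\beta)$, cancel because $\zeta$ is tangent to $B$ there and $i_B^*\alpha=\dd\beta$. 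This leaves $\dd\tilde\alpha(\tilde\zeta)=\int_M\gamma^*(\iota_\zeta\iota_{X_\alpha}\Omega_L)$.

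Equivalently, the consistent convention is the zero primitive for $i_B^*\dd\alpha$, the same one used for $\widetilde{\Omega_L}$ since $i_B^*\Omega_L=0$. With it, your appeal to Proposition \ref{prop:presymplectic_proposition} becomes correct, but that proposition in this degree encodes exactly the boundary cancellation above. Finally, no sign adjustment is needed at the end. With $(\widetilde{\Omega_L})_\gamma(\tilde\xi_1,\tilde\xi_2)=\int_M\gamma^*(\iota_{\xi_2}\iota_{\xi_1}\Omega_L)$ one has $(\iota_{\tilde X_{\tilde\alpha}}\widetilde{\Omega_L})(\tilde\zeta)=\int_M\gamma^*(\iota_\zeta\iota_{X_\alpha}\Omega_L)$ exactly.
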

\begin{proof}
	See Appendix
	\ref{app:proof_of_proposition_ref_prop_relation_hamiltonian_forms_and_functions}.
\end{proof}

As we said in the beginning of this section, given $\tilde{f}\in C^\infty(\tZ)$, (if
it exists) its associated Hamiltonian vector field $\tilde{X}_{\tilde{f}}$ is a 
`representative' defined only up to an element of
$\textrm{ker}(\flat_{\widetilde{\Omega_L}})$.
Due to $\widetilde{\Omega_L}$ being a form (hence multilinear) we can use it
canonically to define a presymplectic bracket exactly the same way as it is done
in the symplectic context even though of this lack of uniqueness (which could
cause the bracket of two Hamiltonian functions to not being uniquely defined).
\begin{definition}[Presymplectic bracket]
	Let $(\tilde{f},\tilde{X}_{\tilde{f}})$ and $(\tilde{g},\tilde{X}_{\tilde{f}})$
	be a pair of Hamiltonian functions and vector fields for $\widetilde{\Omega_L}$,
	then we define the \textbf{presymplectic bracket} as
	\[
		\{\tilde{f},\tilde{g}\}:=\iota_{\tilde{X}_{\tilde{f}}}\iota_{\tilde{X}_{\tilde{g}}}\widetilde{\Omega_L}=
		\widetilde{\Omega_L}(\tilde{X}_{\tilde{g}},\tilde{X}_{\tilde{f}}).
	\]
\end{definition}
Multisymplectic and presymplectic brackets (as expected) are connected through
Cauchy surface integration of Definition \ref{def:integration_upon_Cauchy_of_forms}
in a consistent way
\begin{proposition}
	\label{prop:brackets_commute_with_tilde}
	Let $\alpha_1$ and $\alpha_2$ be two Hamiltonian $n-forms$ that are exact on $\partial Z\supseteq B$ and which Hamiltonian
	vector fields are tangent to $B$, then 
	\[
		\widetilde{\{\alpha_1,\alpha_2\}}=
		\{\tilde{\alpha}_1,\tilde{\alpha}_2\}.
	\]
\end{proposition}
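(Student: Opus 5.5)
We compare both sides pointwise at a fixed $\gamma\in\tZ$, reducing everything to integrals over $M$ via Definition \ref{def:integration_upon_Cauchy_of_forms}. By Proposition \ref{prop:relation_Hamiltonian_forms_and_functions}, $\tilde{\alpha}_1,\tilde{\alpha}_2$ are Hamiltonian functions for $\widetilde{\Omega_L}$, and the induced fields $\tilde{X}_{\alpha_1},\tilde{X}_{\alpha_2}$ are admissible Hamiltonian vector fields for them. We use these representatives, since the presymplectic bracket is computed with any chosen Hamiltonian vector fields.

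The right-hand side then evaluates directly. By the definition of the presymplectic bracket and the formula for $\widetilde{\Omega_L}$ (which has no boundary term because $\Omega_L=-\dd\Theta_L$ and $i_B^*\Theta_L=\dd\Pi$ give $i_B^*\Omega_L=0$),
\[
\{\tilde{\alpha}_1,\tilde{\alpha}_2\}(\gamma)=\widetilde{\Omega_L}_\gamma(\tilde{X}_{\alpha_2},\tilde{X}_{\alpha_1})=\int_M\gamma^*\bigl(\iota_{X_{\alpha_1}}\iota_{X_{\alpha_2}}\Omega_L\bigr)=\int_M\gamma^*\{\alpha_1,\alpha_2\}.
\]
The order of contractions must be matched against Definition \ref{def:multisymplectic_bracket}, $\{\alpha_1,\alpha_2\}=\iota_{X_{\alpha_1}}\iota_{X_{\alpha_2}}\Omega_L$. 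The two orderings agree once we note that $\widetilde{\Omega_L}_\gamma(\tilde\xi_1,\tilde\xi_2)$ was defined as $\int_M\gamma^*(\iota_{\xi_2}\iota_{\xi_1}\Omega_L)$.

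For the left-hand side, $\{\alpha_1,\alpha_2\}$ is a Hamiltonian $n$-form by Proposition \ref{prop:field_of_bracket_is_Lie_bracket}. To apply Definition \ref{def:integration_upon_Cauchy_of_forms} with $k=0$ we must show it is exact on $B$ and identify the boundary primitive $\beta$, since $\widetilde{\{\alpha_1,\alpha_2\}}(\gamma)=\int_M\gamma^*\{\alpha_1,\alpha_2\}-\int_{\partial M}\gamma^*\beta$. This is the main obstacle. Because $X_{\alpha_1},X_{\alpha_2}$ are tangent to $B$ and $i_B^*\Omega_L=0$, we get $i_B^*(\iota_{X_{\alpha_1}}\iota_{X_{\alpha_2}}\Omega_L)=\iota_{X_{\alpha_1}|_B}\iota_{X_{\alpha_2}|_B}i_B^*\Omega_L=0$. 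Hence $\{\alpha_1,\alpha_2\}$ restricts to zero on $B$, so we may take $\beta=0$ (any other choice differs by a closed form on $B$, and we fix the canonical zero primitive). Since $\gamma(\partial M)\subseteq B$, the boundary integral vanishes regardless, because $\gamma|_{\partial M}^*$ factors through $i_B^*$.

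Combining the two computations gives $\widetilde{\{\alpha_1,\alpha_2\}}(\gamma)=\int_M\gamma^*\{\alpha_1,\alpha_2\}=\{\tilde\alpha_1,\tilde\alpha_2\}(\gamma)$ for every $\gamma$. The only points needing care are the sign and order conventions in the two contractions and the vanishing of $i_B^*\Omega_L$. The latter follows from $i_B^*\Omega_L=-\dd\, i_B^*\Theta_L=-\dd\dd\Pi=0$.
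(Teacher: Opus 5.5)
Your proposal is correct and takes the same route as the paper. You use $i_B^*\Omega_L=0$ together with the tangency of $X_{\alpha_1},X_{\alpha_2}$ to $B$ to get $i_B^*\{\alpha_1,\alpha_2\}=0$, so the boundary primitive can be taken to be zero, and then $\int_M\gamma^*(\iota_{X_{\alpha_1}}\iota_{X_{\alpha_2}}\Omega_L)=\widetilde{\Omega_L}(\tilde X_{\alpha_2},\tilde X_{\alpha_1})$ with the contraction order you checked; you also do not need the paper's remark about the tangency of $[X_1,X_2]$. The one claim to drop is that the boundary integral vanishes ``regardless'' of the primitive: a merely closed $(n-1)$-form on $B$ can integrate to a nonzero number over the closed manifold $\gamma(\partial M)$, so the factorization through $i_B^*$ does not kill that term; it is the choice $\beta=0$ (or any $\beta$ exact on $B$) that does.
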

\begin{proof}
	As $i_B^*\Theta_L=\dd\Pi$ by boundary conditions, then $i_B^*\Omega_L=0$. As
	$X_1$ and $X_2$ are tangent to $B$, we have that their Lie bracket is also
	tangent to $B$ and hence
	\[
		i^*_B\{\alpha_1,\alpha_2\}
		\overset{\ref{def:multisymplectic_bracket}}{=}i_B^*\iota_{X_1}\iota_{X_2}
		\Omega_L=\iota_{X_1|_B}\iota_{X_2|_B}i_B^*\Omega_L=\iota_{X_1|_B}(0|_B)=0
	\]
		where in the last equality we used precisely this last fact mentioned. Once noted
	this, the corresponding $\beta$ for $\{\alpha_1,\alpha_2\}$ if not exactly zero has to be
	closed, and then by Stokes theorem is easy to see $\forall\gamma\in\tZ$ that
	\[
		\widetilde{\{\alpha_1,\alpha_2\}}|_\gamma=\int_M\gamma^*\{\alpha_1,\alpha_2\}
		-\cancel{\int_{M}\gamma^* \dd\beta}=\int_M\gamma^*\iota_{X_1}\iota_{X_2}
		\Omega_L=\widetilde{\Omega_L}(\tilde{X}_2,\tilde{X}_1)|_\gamma=\{\tilde{\alpha}_1,
		\tilde{\alpha}_2\}|_\gamma.
	\]
\end{proof}

\chapter{Symmetries and Noether's theorems} \label{chap:symmetries_and_Noether_theorems}
Noether's theorems have changed the way theoretical physicists face
building new theories. They state a relation between \textbf{symmetries} of a
theory and conservation laws. These relations allow to compute invariant
(conserved quantities) from symmetries or restrict the theory (the Lagrangian)
given a set of symmetries known to be conserved. 

We have two Noether's theorems \cite{noether_invariant_1971}; the first
one deals with global symmetries and conserved quantities
(or more generally conserved currents, one per each symmetry generator).
The second one deals instead with localizable (local and independent)
symmetries and relations between the equations of motion.

Here we state the formulation of Noether's theorems for the multisymplectic
and the instantaneous frameworks. We shall suppose that the Lagrangian is
regular unless stated otherwise.
\section{Symmetries for the Euler-Lagrange equations}
\label{sec:symmetries_for_euler_lagrange_equations}
In a nutshell, we say that a theory is invariant under a (infinitesimal)
symmetry if the symmetry leaves invariant the action, i.e. the (linear)
variation of the action induced by the symmetry is identically zero 
$\delta S:=\der{S(\phi_t)}{t}\big|_{t=0}\overset{!}{=}0$. 
As the reader could remember, we had some
equivalent expressions for $\delta S$ already computed in the Subsection
\ref{subsect:euler_lagrange_and_de_donder_equations}. Let $C$ be \textbf{any}
compact submanifold of $X$, we then have
\begin{align*}
	\delta S:&=\der{S(\phi_t)}{t}\bigg|_{t=0}=\int_C(j^1\phi)^*
	\pounds_{\xi_Y^{(1)}}(\mathcal{L})=\int_C(j^1\phi)^*
	\pounds_{\xi_Y^{(1)}}(\Theta_L)\\
	&=\int_C(j^2\phi)^*\left[\frac{\partial L}{\partial y^i}-\der{}{x^\mu}
	\frac{\partial L}{\partial z^i_\mu}\right](\xi_Y^i-z^i_\nu
	\xi_Y^\nu)\eta+
	\int_{\partial C}(j^1\phi)^*(\iota_{\xi_Y^{(1)}}\Theta_L),
\end{align*}
(\textbf{valid for any} $\bm{\Theta}$ \textbf{Lepage-equivalent to} $\bm{\mathcal{L}}$
but we will stick
to Poincaré-Cartan $(n+1)-$forms specifically for the rest of the chapter).
From the last equality of the first row, we identify $\pounds_{\xi_Y^{(1)}}
\Theta_L\in\mathcal{I(C)}$ as an equivalent definition for $\xi_Y$ being a
symmetry (note that this implies the on-shell variation of the Lagrangian being
exactly zero), while from the second row expression (variation is equal to
E-L plus a boundary term),
we note that if $\phi$ is a solution of the Euler-Lagrange equations and
$\xi_Y$ is a symmetry (i.e. $\delta S=0$), then $\int_{\partial C}
(j^1\phi)^*(\iota_{\xi_Y^{(1)}}\Theta)=0$ or equivalently, by Stokes' theorem
\begin{equation}
	\label{eq:preservation_already_proven}
	\int_{C}(j^1\phi)^*(\dd\iota_{\xi_Y^{(1)}}\Theta)=0,\quad\forall\phi\in\Gamma(\pi)
	\text{ solution to E-L's equations,}
\end{equation}
which then implies (remember that $C$ is any compact submanifold of $X$) that
$\iota_{\xi_Y^{(1)}}\Theta_L$ is conserved. This motivates the following definition for
\textbf{conserved quantities} (also called \textbf{conserved currents})
\begin{definition}[Conserved quantity for E-L's equations and associated momentum]
	\label{def:conserved_quantity_for_EL_equations}
	A $n-$form on $Z$ is called a \textbf{conserved quantity for the
	Euler-Lagrange equations} if $(j^1\phi)^*\dd\alpha=0$ for every solution
	$\phi\in\Gamma(\pi)$ of the E-L's equations
	\eqref{eq:euler-lagrange_equations}; $\tilde{\alpha}$ defined by
	\eqref{eq:tilde_forms} is called its associated \textbf{momentum}.
\end{definition}
\begin{remark}
	\label{remark:preserved_quantities_plus_closed_or_contact_forms}
	Noting how we went from $\int_{\partial C}(j^1\phi)^*(\iota_{\xi_Y^{(1)}}
	\Theta)=0$ to the conserved quantity through Stokes' theorem, is easy to
	realise that \textbf{conserved quantities are preserved under the addition of
	closed forms or forms belonging to} $\bm{\mathcal{I(C)}}$.
\end{remark}

We can not forget about boundary conditions in the case that $X$ has a boundary;
here we have two additional requirements.
As we already defined, boundary conditions are encoded in a subbundle
$B\subseteq\partial Z$ such that we define admissible solutions of E-L's equations
as solutions $\sigma\in\Gamma(\pi)$ satisfying $(j^1\phi)(\partial X)\subseteq B$.
In the case of \textbf{projectable symmetries} (which will \textbf{not} be
demanded to our soon-to-be defined symmetries but makes the motivation easier)
the flow of admissible symmetries should send admissible solutions to
admissible solutions while conserved quantities should also be well behaved in
$B\subseteq\partial B$ in a sense; as we will see 
(in Proposition \ref{prop:conservation_along_solutions})
this last bit is equivalent to conserved quantities being exact (and not only
closed) on $B$.

The \textbf{first extra condition} is stated through the requirement that $(j^1\phi_t)
(\partial X)\subseteq B$ for all $t$ in the time domain of definition of the
flow of the symmetry which is equivalent to $\bm{\xi_{Y}^{(1)}}$
\textbf{being tangent to} $\bm{B}$ (see Appendix
\ref{app:some_symmetries_computations} for a proof).

The \textbf{second extra condition}, in the case of the conserved quantity being
$\iota_{\xi_Y^{(1)}}\Theta_L$, i.e. coming from the Lagrangian,
is met whenever $\bm{\pounds_{\xi_Y^{(1)}|_B}\Pi=0}$ (see Appendix
\ref{app:some_symmetries_computations} for a proof).

Once all this motivation is done, we are ready to define our symmetries.
There will come in three types, \textbf{symmetries of the Lagrangian} (which in
theoretical physics are usually called exact symmetries -we already gave some
hints on why its called that way), \textbf{Noether symmetries} (in theoretical physics'
perspective are symmetries in which the Lagrangian's variation is not
exactly zero but a divergence term) and \textbf{Cartan symmetries} (which will rigorously
motivate the statement of symmetries that flow E-L's solutions to E-L's solutions).
These symmetries form a hierarchy of increasing generality: every Lagrangian symmetry is a 
Noether symmetry, and every Noether symmetry induces a Cartan symmetry via $1$-jet 
prolongation; also, each one of them will have an associated Noether's current, giving us
a conserved quantity (also called current).
\subsection*{Symmetries of the Lagrangian}
\begin{definition}[Symmetries of the Lagrangian]
\label{def:symmetries_of_the_lagrangian}
A vector field $\xi_Y\in\mathfrak{X}(Y)$ is called a \textbf{symmetry of
the Lagrangian} or \textbf{variational symmetry} if the following 
conditions hold: 
\textup{(i)} $\pounds_{\xi_Y^{(1)}}\Theta_L\in\mathcal{I(C)}$, \quad
\textup{(ii)} $\xi_Y^{(1)}$ is tangent to $B$ \quad and \quad
\textup{(iii)} $\pounds_{\xi_Y^{(1)}|_B}\Pi=0$.
\end{definition}
Recalling Proposition \ref{prop:symmetries_will_form_subalgebra}, the next proposition is
obvious
\begin{proposition}
	Lagrangian symmetries form a Lie subalgebra of $\mathfrak{X}(Y).$
\end{proposition}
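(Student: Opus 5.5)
The plan is to show that the set $\mathcal{S}$ of Lagrangian symmetries is a real vector subspace of $\mathfrak{X}(Y)$ and is closed under the Lie bracket. I will check each of the three conditions (i)--(iii) of Definition \ref{def:symmetries_of_the_lagrangian} separately. The basic tool is Proposition \ref{prop:symmetries_will_form_subalgebra}, $[\xi,\zeta]^{(1)}=[\xi^{(1)},\zeta^{(1)}]$, together with the linearity of the prolongation map $\xi\mapsto\xi^{(1)}$. Linearity can be read off the local formula in Definition \ref{def:1jet-prolongation}, or obtained from uniqueness: $a\xi^{(1)}+b\zeta^{(1)}$ projects onto $a\xi+b\zeta$ and preserves $\mathcal{C}$.

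\textbf{Linearity.} For $\xi,\zeta\in\mathcal{S}$ and $a,b\in\RR$, I will use that $\pounds_{a\xi^{(1)}+b\zeta^{(1)}}=a\pounds_{\xi^{(1)}}+b\pounds_{\zeta^{(1)}}$ and that $\mathcal{I(C)}$ is an ideal, in particular a linear subspace. This gives (i). Tangency to $B$ is a pointwise linear condition, so (ii) holds. Finally, $\pounds_{(a\xi^{(1)}+b\zeta^{(1)})|_B}\Pi=a\cdot0+b\cdot0=0$, which gives (iii).

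\textbf{Bracket.} I will use $\pounds_{[X,Y]}=[\pounds_X,\pounds_Y]$ with $X=\xi^{(1)}$ and $Y=\zeta^{(1)}$. For (i), write
\[
\pounds_{[X,Y]}\Theta_L=\pounds_X(\pounds_Y\Theta_L)-\pounds_Y(\pounds_X\Theta_L).
\]
Both inner terms lie in $\mathcal{I(C)}$. The remaining point is that $\pounds_X$ maps $\mathcal{I(C)}$ into itself when $X$ is a $1$-jet prolongation. This holds because $X$ preserves $\mathcal{C}$, so $\pounds_X\theta^i\in\mathcal{C}$; moreover $\pounds_X$ commutes with $\dd$, so $\pounds_X\dd\theta^i=\dd\pounds_X\theta^i\in\mathcal{I(C)}$; and $\pounds_X$ is a derivation of $\wedge$. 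Hence the Leibniz rule keeps $\pounds_X(\alpha\wedge\theta^i+\beta\wedge\dd\theta^i)$ in $\mathcal{I(C)}$. This fact is recorded in the paper, whose Jet-prolongation subsection notes that preserving the Cartan distribution also implies preserving $\mathcal{I(C)}$.

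For (ii), the Lie bracket of two vector fields tangent to the submanifold $B$ is again tangent to $B$. Indeed $X|_B$ and $Y|_B$ are $i_B$-related to $X$ and $Y$, so $[X,Y]|_B=[X|_B,Y|_B]$; this was already used in the proof of Proposition \ref{prop:brackets_commute_with_tilde}. For (iii), that identity gives
\[
\pounds_{[X,Y]|_B}\Pi=\pounds_{X|_B}\pounds_{Y|_B}\Pi-\pounds_{Y|_B}\pounds_{X|_B}\Pi=0 .
\]
Strictly, $\Pi$ lives on $B$ (one has $i_B^*\Theta_L=\dd\Pi$), so all these derivatives are taken on $B$.

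\textbf{Main obstacle.} The only step that is not purely formal is the stability of $\mathcal{I(C)}$ under $\pounds_{\xi^{(1)}}$ in condition (i). I would justify it as above: the generators are preserved and $\pounds$ is a derivation commuting with $\dd$. Everything else is bookkeeping with Proposition \ref{prop:symmetries_will_form_subalgebra} and the naturality of Lie brackets under restriction to $B$.
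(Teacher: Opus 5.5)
Your proof is correct and takes the paper's approach. The paper merely calls the result obvious from Proposition~\ref{prop:symmetries_will_form_subalgebra}, and you supply the details it omits: linearity of the prolongation, stability of $\mathcal{I(C)}$ under $\pounds_{\xi^{(1)}}$ (via preservation of the generators, commutation with $\dd$ and the derivation property), and closure of conditions (ii)--(iii) under brackets through $[X,Y]|_B=[X|_B,Y|_B]$.
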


In order to actually make computations, we have to introduce a function $F$ on $Z$
\begin{definition}[F function]
	We define the $\bm{F}$ \textbf{function} as the function $F\in C^\infty(Z)$ such that
	\begin{equation}
	\label{eq:F_function}
		\pounds_{\xi_Y^{(1)}}\mathcal{L}-F\eta\in\mathcal{I(C)},
	\end{equation}
	with local expression
	\begin{equation}
	\label{eq:F_local_expression}
		F=\xi_Y^{(1)}(L)+\left(\frac{\partial\xi_Y^\mu}{\partial x^\mu}+
		z^i_\nu\frac{\partial\xi_Y^\nu}{\partial y^i}\right)L.
	\end{equation}
\end{definition}
A long computation then gives a local expression for $\pounds_{\xi_Y^{(1)}}\Theta_L$
(cf. \cite{leonSymmetriesClassicalField2004})

\begin{equation}
\begin{aligned}
	\label{eq:lengthy_computation}
	\pounds_{\xi^{(1)}_{Y}}\Theta_L
	&=
	F\eta
	+
	\frac{\partial F}{\partial z^i_\mu}\,\theta^i\wedge \dd^n x_\mu
	-
	z^j_\nu\left(
	\frac{\partial \xi^\nu_Y}{\partial y^j}\frac{\partial L}{\partial z^i_\mu}
	-
	\frac{\partial \xi^\mu_Y}{\partial y^j}\frac{\partial L}{\partial z^i_\nu}
	\right)\theta^i\wedge \dd^n x_\mu \\
	&-
	\frac{\partial \xi^\nu_Y}{\partial y^j}\frac{\partial L}{\partial z^i_\mu}\,
	\theta^i\wedge dy^j\wedge \dd^{\,n-1}x_{\mu\nu}.
\end{aligned}
\end{equation}
Considering \ref{eq:lengthy_computation}, the following result is obvious once noted that
the terms containing $\theta^i$ belong to $\mathcal{I(C)}$.
\begin{proposition}
	\label{prop:Fzero_iff_symmetry}
	If $\xi_Y\in\mathfrak{X}(Y)$ is a symmetry of the Lagrangian, then the function $F$
	defined by \eqref{eq:F_function} is equal to zero. The converse is also true in the case that
	$\xi_Y$ satisfy the boundary conditions $ii)$ and $iii)$ of Definition
	\ref{def:symmetries_of_the_lagrangian}.
\end{proposition}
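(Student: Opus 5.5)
The plan is to read both directions directly off the local formula \eqref{eq:lengthy_computation} for $\pounds_{\xi_Y^{(1)}}\Theta_L$. The only real work is to understand which terms lie in $\mathcal{I(C)}$ and which do not. First, every term in \eqref{eq:lengthy_computation} except $F\eta$ has a factor $\theta^i$ wedged with some other form, so it belongs to the ideal $\mathcal{C}\subseteq\mathcal{I(C)}$. Therefore
\[
	\pounds_{\xi_Y^{(1)}}\Theta_L \equiv F\eta \pmod{\mathcal{I(C)}},
\]
and condition (i) of Definition \ref{def:symmetries_of_the_lagrangian} is equivalent to $F\eta\in\mathcal{I(C)}$.

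The key step, and the only delicate one, is to show that $F\eta\in\mathcal{I(C)}$ forces $F=0$. I would argue by pulling back along holonomic sections. Every element of $\mathcal{I(C)}$ is a sum of terms $\alpha_i\wedge\theta^i+\beta_i\wedge\dd\theta^i$, and $(j^1\phi)^*\theta^i=0$ for every $\phi\in\Gamma(\pi)$. Hence $(j^1\phi)^*$ kills all of $\mathcal{I(C)}$, including the $\dd\theta^i$ terms, since pullback commutes with $\dd$. Applying this to $F\eta$ gives $(F\circ j^1\phi)\,\eta=0$ on $X$ for every section $\phi$. Given any point $z=(x^\mu,y^i,z^i_\mu)\in Z$, there is a local section $\phi$ (affine in coordinates) with $j^1_x\phi=z$, and this can be extended globally with a bump function. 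So $F(z)=0$ for every $z$, i.e. $F\equiv 0$. This proves the first implication; conditions (ii) and (iii) are not used.

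For the converse, assume $F=0$ together with (ii) and (iii). Then \eqref{eq:lengthy_computation} shows that $\pounds_{\xi_Y^{(1)}}\Theta_L$ is a sum of terms each containing $\theta^i$, so it lies in $\mathcal{I(C)}$. This is condition (i), and together with the assumed (ii) and (iii) it makes $\xi_Y$ a symmetry of the Lagrangian.

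I expect the main obstacle to be justifying that every point of $Z$ is the jet of some section, so that vanishing along all holonomic sections yields $F\equiv0$. An alternative, purely algebraic route would be to note that $\mathcal{I(C)}$ contains no nonzero multiple of $\eta$: in degree $n+1$, after subtracting the $\theta$ and $\dd\theta$ parts, no $\pi_{ZX}$-basic top-degree remainder can survive. I would prefer the pullback argument, since it is cleaner.
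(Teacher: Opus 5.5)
Your proposal is correct and follows the paper's route. The paper simply reads off from \eqref{eq:lengthy_computation} that $\pounds_{\xi_Y^{(1)}}\Theta_L\equiv F\eta$ modulo $\mathcal{I(C)}$, and your pullback argument supplies the step $F\eta\in\mathcal{I(C)}\Rightarrow F=0$ that the paper calls obvious. The only overreach is extending the local section globally with a bump function, which can fail for a general fiber bundle (it need not even admit global sections). You do not need it: membership in $\mathcal{I(C)}$ and the vanishing of $(j^1\phi)^*\theta^i$ are both local, so local sections through each point of $Z$ suffice.
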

Also, \ref{eq:lengthy_computation} gives us a nice computational checking tool equation to
see whenever a given $\textbf{projectable}$ vector field $\xi_Y$ is a symmetry (note
that by definition a vector field $\xi_Y$ is projectable to $X$ if the $\xi_Y^\mu$ terms
do not depend on $y$)
\begin{proposition}
	\label{prop:useful_proposition_for_checking_symmetries}
	Given a symmetry of the Lagrangian $\xi_Y$ (hence $F=1$), if $\dim X=0$ (i.e. $n=0$) or
	$\xi_Y$ is projectable, then $\pounds_{\xi^{(1)}_{Y}}\Theta_L=0$ and Equation
	\eqref{eq:F_local_expression} gives
	\begin{equation}
	\label{eq:useful_tool_for_checking_symmetries}	
		\xi_Y^{(1)}(L)=-\sum_\mu\der{\xi^\mu_Y}{x^\mu}L
	\end{equation}
\end{proposition}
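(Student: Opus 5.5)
The plan is to read both claims directly off the local formulas \eqref{eq:lengthy_computation} and \eqref{eq:F_local_expression}, once we know that $F$ vanishes. (The parenthetical ``$F=1$'' in the statement should read $F=0$; this is what Proposition \ref{prop:Fzero_iff_symmetry} gives.) So the first step is to invoke Proposition \ref{prop:Fzero_iff_symmetry}: since $\xi_Y$ is a symmetry of the Lagrangian, $F\equiv 0$ on $Z$.

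In the second step I substitute $F\equiv0$ into \eqref{eq:lengthy_computation}. The term $F\eta$ vanishes. The term $\frac{\partial F}{\partial z^i_\mu}\theta^i\wedge\dd^nx_\mu$ also vanishes, because $F$ is identically zero as a function on $Z$, so all its partial derivatives are zero. What remains are the two terms built from $\partial\xi^\nu_Y/\partial y^j$, and I split into the two cases of the hypothesis.
\begin{itemize}
\item If $\xi_Y$ is projectable, then $\xi^\nu_Y=\xi^\nu_Y(x)$, so $\partial\xi^\nu_Y/\partial y^j=0$ and both remaining terms vanish identically.
\item If $n=0$, the only index value is $\mu=\nu=0$. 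The bracket $\frac{\partial \xi^\nu_Y}{\partial y^j}\frac{\partial L}{\partial z^i_\mu}-\frac{\partial \xi^\mu_Y}{\partial y^j}\frac{\partial L}{\partial z^i_\nu}$ is antisymmetric in $(\mu,\nu)$, so it vanishes on the diagonal. The last term contains $\dd^{n-1}x_{\mu\nu}=\iota_{\partial_\nu}\iota_{\partial_\mu}\eta$, which is antisymmetric in $(\mu,\nu)$, so it also vanishes at $\mu=\nu$ (and for $n=0$ it is a $(-1)$-form anyway).
\end{itemize}
In either case $\pounds_{\xi_Y^{(1)}}\Theta_L=0$.

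In the third step I set $F=0$ in \eqref{eq:F_local_expression}, which gives
\[
\xi_Y^{(1)}(L)=-\Big(\frac{\partial\xi^\mu_Y}{\partial x^\mu}+z^i_\nu\frac{\partial\xi^\nu_Y}{\partial y^i}\Big)L .
\]
Relabelling the dummy index $\nu\to\mu$ in the second summand, the bracket becomes $\sum_\mu\big(\partial_\mu\xi^\mu_Y+z^i_\mu\,\partial\xi^\mu_Y/\partial y^i\big)$. This is exactly $\sum_\mu\der{\xi^\mu_Y}{x^\mu}$ by the definition of the total derivative, since $\xi^\mu_Y$ depends only on $(x,y)$ and so no higher jet terms appear. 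This yields \eqref{eq:useful_tool_for_checking_symmetries}. In the projectable case the identity reduces to the ordinary divergence $\partial_\mu\xi^\mu_Y$.

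The only delicate point is the case analysis in the second step, in particular checking that for $n=0$ the antisymmetry really kills the leftover terms. The rest is direct substitution.
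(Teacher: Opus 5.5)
Your proof is correct and follows essentially the same route as the paper's. The paper also uses $F=0$ from Proposition~\ref{prop:Fzero_iff_symmetry} to kill the first two terms of \eqref{eq:lengthy_computation}, and disposes of the remaining two either by $y$-independence of $\xi^\mu_Y$ (projectable case) or by the $(\mu,\nu)$-antisymmetry forced by $\mu=\nu=0$ (case $n=0$), before isolating $\xi_Y^{(1)}(L)$ in \eqref{eq:F_local_expression}. You are also right that the ``$F=1$'' in the statement is a typo for $F=0$, and your remark that the final identity needs only $F=0$ together with the total-derivative identification is a slightly cleaner packaging than the paper's case-by-case treatment.
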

\begin{proof}
See Appendix \ref{app:some_symmetries_computations}.
\end{proof}
Finally
\begin{theorem}[Noether's theorem for symmetries of the Lagrangian]
	\label{thm:noether_theorem_symmetries_of_the_lagrangian}
	If $\xi_Y$ is a symmetry of the Lagrangian then $\iota_{\xi_Y^{(1)}}\Theta_L$ is a
	conserved quantity which is exact on $B$.	
\end{theorem}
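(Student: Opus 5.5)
The plan is to prove the two claims separately: conservation of $\alpha:=\iota_{\xi_Y^{(1)}}\Theta_L$ along solutions, and exactness of $\alpha$ on $B$.

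\textbf{Conservation.} Cartan's magic formula gives
\[
\dd\alpha=\pounds_{\xi_Y^{(1)}}\Theta_L-\iota_{\xi_Y^{(1)}}\dd\Theta_L=\pounds_{\xi_Y^{(1)}}\Theta_L+\iota_{\xi_Y^{(1)}}\Omega_L .
\]
For a solution $\phi$ of the Euler--Lagrange equations, pull this identity back by $j^1\phi$. The first term lies in $\mathcal{I(C)}$ by condition (i) of Definition \ref{def:symmetries_of_the_lagrangian}. Every element of $\mathcal{I(C)}$ pulls back to zero along holonomic sections: $(j^1\phi)^*\theta^i=0$, and $(j^1\phi)^*\dd\theta^i=\dd(j^1\phi)^*\theta^i=0$.

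The second term vanishes by the intrinsic Euler--Lagrange equations \eqref{eq:intrinsic_EL_equations}, since $\phi$ being a solution (an extremal) means $(j^1\phi)^*(\iota_\xi\Omega_L)=0$ for all $\xi\in\mathfrak{X}(Z)$, in particular for $\xi=\xi_Y^{(1)}$. Hence $(j^1\phi)^*\dd\alpha=0$, which is exactly Definition \ref{def:conserved_quantity_for_EL_equations}. This is consistent with the variational derivation leading to \eqref{eq:preservation_already_proven}, but the argument is now pointwise and does not need compact $C$.

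\textbf{Exactness on $B$.} By condition (ii), $\xi_Y^{(1)}$ is tangent to $B$, so $i_B^*\iota_{\xi_Y^{(1)}}\Theta_L=\iota_{\xi_Y^{(1)}|_B}i_B^*\Theta_L$. The boundary condition $i_B^*\Theta_L=\dd\Pi$ then gives
\[
i_B^*\alpha=\iota_{\xi_Y^{(1)}|_B}\dd\Pi=\pounds_{\xi_Y^{(1)}|_B}\Pi-\dd\,\iota_{\xi_Y^{(1)}|_B}\Pi .
\]
By condition (iii) the Lie derivative vanishes, so $i_B^*\alpha=\dd\bigl(-\iota_{\xi_Y^{(1)}|_B}\Pi\bigr)$ is exact.

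\textbf{Main obstacle.} The delicate point is the step showing that $\mathcal{I(C)}$ pulls back to zero along jet prolongations. For forms of the type $\beta\wedge\dd\theta^i$ one must use naturality of $\dd$ under pullback, as done above. One also has to check that the intrinsic equations \eqref{eq:intrinsic_EL_equations} apply to arbitrary solutions rather than only those with fixed boundary values; here I would invoke the ``if and only if'' part of the Euler--Lagrange theorem. The sign convention $\Omega_L=-\dd\Theta_L$ has to be tracked carefully, but it does not affect the vanishing.
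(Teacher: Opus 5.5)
Your proof is correct. The exactness on $B$ is exactly the paper's argument (Appendix~\ref{app:some_symmetries_computations}), but for conservation you take a different route.

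The paper argues integrally. The first variation formula of Appendix~\ref{app:some_manipulations_of_the_extremality_condition} writes $\int_C(j^1\phi)^*\pounds_{\xi_Y^{(1)}}\Theta_L$ as a bulk Euler--Lagrange term plus the flux $\int_{\partial C}(j^1\phi)^*\iota_{\xi_Y^{(1)}}\Theta_L$. The left side vanishes for a symmetry and the bulk term vanishes for a solution, so the flux vanishes for every compact $C$. Stokes' theorem and the arbitrariness of $C$ then give \eqref{eq:preservation_already_proven}.

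You instead use the pointwise identity $\dd\iota_{\xi_Y^{(1)}}\Theta_L=\pounds_{\xi_Y^{(1)}}\Theta_L+\iota_{\xi_Y^{(1)}}\Omega_L$. This is the computation of Proposition~\ref{prop:dJ_equals_iota_omega}, extended from $\pounds_{\xi_Z}\Theta_L=0$ to $\pounds_{\xi_Y^{(1)}}\Theta_L\in\mathcal{I(C)}$. Your version buys three things:
\begin{itemize}
\item it is shorter;
\item it avoids the long integration by parts and the passage from ``for every $C$'' to a pointwise statement;
\item it does not need $\xi_Y$ to be projectable. The paper's variations $\phi_t=\Phi^Y_t\circ\phi\circ\Phi^X_{-t}$ and its first variation formula are only set up for projectable fields, yet Definition~\ref{def:symmetries_of_the_lagrangian} does not require projectability.
\end{itemize}
In exchange, the paper's route shows the current as literally the boundary term of the first variation.

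On the step you flagged, I would not go through the ``if and only if'' statement. Definition~\ref{def:extremal_section}, as written, does not require $\xi_Y$ to vanish on $\partial C$, so a solution of \eqref{eq:euler-lagrange_equations} need not be an extremal in that literal sense. It is cleaner to read the vanishing off the local expression of $\Omega_L$. Write $\xi^\nu,\xi^i$ for the $\partial/\partial x^\nu$ and $\partial/\partial y^i$ components of $\xi$. The term of $\iota_\xi\Omega_L$ with $\theta^i$ as a factor pulls back to zero, so for every $\xi\in\mathfrak{X}(Z)$ and $\phi\in\Gamma(\pi)$
\[
(j^1\phi)^*(\iota_\xi\Omega_L)=-\bigl[(\xi^i-z^i_\nu\xi^\nu)\circ j^1\phi\bigr]\,(j^2\phi)^*\Bigl(\parder{L}{y^i}-\der{}{x^\mu}\parder{L}{z^i_\mu}\Bigr)\eta .
\]
This vanishes on any solution of the Euler--Lagrange equations, with no boundary hypothesis.
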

\begin{proof}
	The exactness part is already proved in Appendix \ref{app:some_symmetries_computations},
	while the preservation part was already proven in the symmetries' motivation part
	(specifically on Equation \eqref{eq:preservation_already_proven}).
	Which is not proven yet (cf. \cite{leonSymmetriesClassicalField2004})
	is the useful local expression for the conserved quantity
	\begin{equation}
		\label{eq:useful_local_expression_for_the_preserved_quantity}
		\iota_{\xi_Y^{(1)}}\Theta_L=\left(\left[L-z^i_\mu\frac{\partial L}{\partial z^i_\mu}
		\right]\xi_Y^\nu+\frac{\partial L}{\partial z^i_\nu}\xi_Y^i\right)\dd^nx_\nu
		\,-\,\frac{\partial L}{\partial z^i_\mu}\xi_Y^\nu dy^i\wedge
		\dd^{n-1}x_{\mu\nu}.
	\end{equation}
\end{proof}
\subsection*{Noether's symmetries}
As mentioned before, this type of symmetries correspond to symmetries in which the Lagrangian
varies as a divergence term:
\begin{definition}[Noether's symmetries]
A vector field $\xi_Y\in\mathfrak{X}(Y)$ is called a \textbf{Noether's symmetry} or
\textbf{divergence symmetry} if the following 
conditions hold:
\begin{enumerate}
 	\item There is a form $\alpha\in\Omega^n(Y)$ such that its pullback to $Z$ is exact on $B$, i.e., 
 	$\iota^*_B\alpha=\dd\beta$, for some $\beta\in\Omega^{n-1}(B)$
 	(where $\pi_{ZY}^*\alpha\equiv\alpha$),
 	\item $\pounds_{\xi_Y^{(1)}}\Theta_L-\dd\alpha\in\mathcal{I(C)}$,
	\item $\xi_Y^{(1)}$ is tangent to $B$,
	\item $\pounds_{\xi_Y^{(1)}|_B}\Pi=0$.
 \end{enumerate} 
\end{definition}
Clearly, symmetries of the Lagrangian are also Noether's symmetries, and as for the
symmetries of the Lagrangian, we have the following result for Noether's symmetries
\begin{proposition}
	The set of Noether's symmetries form a subalgebra of $\mathfrak{X}(Y)$ containing the
	subalgebra of symmetries of the Lagrangian.
\end{proposition}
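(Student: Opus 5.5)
Two things must be shown: that the set $\mathcal{N}$ of Noether symmetries is a Lie subalgebra of $\mathfrak{X}(Y)$, and that it contains the symmetries of the Lagrangian. The inclusion is immediate. Given a symmetry of the Lagrangian, take $\alpha=0\in\Omega^n(Y)$; it is trivially exact on $B$ with $\beta=0$. Condition ii) then reduces to condition i) of Definition \ref{def:symmetries_of_the_lagrangian}, and conditions iii) and iv) are verbatim the boundary conditions ii) and iii) of that definition.

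For the subalgebra property, $\RR$-linearity is routine. If $\xi,\zeta$ have associated forms $\alpha_\xi,\alpha_\zeta$, then $a\xi+b\zeta$ has associated form $a\alpha_\xi+b\alpha_\zeta$. This uses that $(a\xi+b\zeta)^{(1)}=a\xi^{(1)}+b\zeta^{(1)}$, which is clear from the local formula in Definition \ref{def:1jet-prolongation}, together with the fact that $\mathcal{I(C)}$, tangency to $B$ and the condition $\pounds\Pi=0$ are all linear.

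The main step is closure under the bracket. By Proposition \ref{prop:symmetries_will_form_subalgebra}, $[\xi,\zeta]^{(1)}=[\xi^{(1)},\zeta^{(1)}]$, so I would compute
\[
\pounds_{[\xi^{(1)},\zeta^{(1)}]}\Theta_L=\pounds_{\xi^{(1)}}\pounds_{\zeta^{(1)}}\Theta_L-\pounds_{\zeta^{(1)}}\pounds_{\xi^{(1)}}\Theta_L .
\]
Next I write $\pounds_{\zeta^{(1)}}\Theta_L=\dd\alpha_\zeta+c_\zeta$ with $c_\zeta\in\mathcal{I(C)}$, and similarly for $\xi$. Since Lie derivatives along prolongations preserve $\mathcal{I(C)}$, as recalled before Definition \ref{def:1jet-prolongation}, and commute with $\dd$, we get
\[
\pounds_{[\xi^{(1)},\zeta^{(1)}]}\Theta_L-\dd\big(\pounds_{\xi^{(1)}}\alpha_\zeta-\pounds_{\zeta^{(1)}}\alpha_\xi\big)\in\mathcal{I(C)}.
\]
Because $\alpha_\zeta$ is pulled back from $Y$ and $\xi^{(1)}$ projects onto $\xi$, we have $\pounds_{\xi^{(1)}}\pi_{ZY}^*\alpha_\zeta=\pi_{ZY}^*\pounds_\xi\alpha_\zeta$. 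The candidate form for the bracket is therefore
\[
\alpha_{[\xi,\zeta]}:=\pounds_\xi\alpha_\zeta-\pounds_\zeta\alpha_\xi\in\Omega^n(Y).
\]

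The delicate part, which I expect to be the main obstacle, is the boundary conditions. Tangency of $[\xi^{(1)},\zeta^{(1)}]$ to $B$ follows because brackets of vector fields tangent to a submanifold are tangent to it. For condition iv), on $B$ the restrictions satisfy $[\xi^{(1)},\zeta^{(1)}]|_B=[\xi^{(1)}|_B,\zeta^{(1)}|_B]$, so $\pounds_{[\cdot,\cdot]|_B}\Pi$ is a commutator of Lie derivatives annihilating $\Pi$, hence zero. For condition i), I would use the naturality $i_B^*\pounds_{\xi^{(1)}}=\pounds_{\xi^{(1)}|_B}i_B^*$, valid by tangency, to get
\[
i_B^*\alpha_{[\xi,\zeta]}=\pounds_{\xi^{(1)}|_B}\dd\beta_\zeta-\pounds_{\zeta^{(1)}|_B}\dd\beta_\xi=\dd\big(\pounds_{\xi^{(1)}|_B}\beta_\zeta-\pounds_{\zeta^{(1)}|_B}\beta_\xi\big),
\]
which is exact on $B$. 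The points I would check carefully are the identification of forms on $Y$ with their pullbacks to $Z$ and $B$, and the fact that the Lie derivative of a contact-ideal element along a prolongation stays in $\mathcal{I(C)}$, which I take from the Cartan-distribution preservation recalled in Subsection \ref{subsect:jet_prolongation_of_vector_fields}.
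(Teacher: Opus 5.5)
Your proof is correct and follows the route the paper intends. The paper gives no separate argument: for symmetries of the Lagrangian it only remarks that closure follows from Proposition~\ref{prop:symmetries_will_form_subalgebra}, i.e.\ $[\xi,\zeta]^{(1)}=[\xi^{(1)},\zeta^{(1)}]$, together with the commutator identity for Lie derivatives and the invariance of $\mathcal{I(C)}$ under prolongations, and it treats the Noether case as analogous. You make explicit what the paper leaves implicit, namely the choice $\alpha_{[\xi,\zeta]}=\pounds_\xi\alpha_\zeta-\pounds_\zeta\alpha_\xi\in\Omega^n(Y)$ and the check of the boundary conditions, and all of these steps are sound.
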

As for the symmetries of the Lagrangian, we have some useful local results. First, using
that $dy^i=\theta^i+z^i_\mu dx^\mu$ by definition of the contact forms, every $n-$form
on $Y$ including $\alpha$ can be written locally as
\[
	\alpha=\alpha^\mu \dd^nx_\mu+\theta,\quad\text{for }\theta\in\mathcal{I(C)}.
\]
From this expression, and using once again $dy^i=\theta^i+z^i_\mu dx^\mu$ we have
that
\[
	\dd\alpha\equiv\sum_\mu\left(\frac{\partial\alpha^\mu}{\partial x^\mu}+ z^i_\mu
	\frac{\partial\alpha^\mu}{\partial y^i}\right)\eta\quad\mod\mathcal{I(C)}
\] 
and as from the previous section we know that $\pounds_{\xi_Y}\Theta_L\equiv F\eta$
$\mod\mathcal{I(C)}$ (where $F$ is defined upon $\xi_Y$ using \eqref{eq:F_function},
if we define
\begin{equation}
	\label{eq:F_tilde}
	\tilde{F}:=F-\sum_\mu\left(\frac{\partial\alpha^\mu}{\partial x^\mu}+ z^i_\mu
	\frac{\partial\alpha^\mu}{\partial y^i}\right)
\end{equation}
the following result is trivial
\begin{proposition}
	If $\xi_Y\in\mathfrak{X}(Y)$ is a Noether's symmetry, then $\tilde{F}$ defined upon
	$\xi_Y$ and $\alpha$ by \eqref{eq:F_tilde} is equal to zero. 
\end{proposition}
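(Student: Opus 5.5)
The plan is to reduce the statement to the already established relation \eqref{eq:F_function} between $\pounds_{\xi_Y^{(1)}}\Theta_L$ and the function $F$, combined with the local normal form of $\dd\alpha$ modulo $\mathcal{I(C)}$ derived just above. Both $\pounds_{\xi_Y^{(1)}}\Theta_L$ and $\dd\alpha$ will be reduced modulo the contact ideal to multiples of the volume form $\eta$. The Noether condition (ii) then forces the coefficient of $\eta$ in their difference to vanish.

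First, I use Equation \eqref{eq:lengthy_computation}: every term except $F\eta$ contains a factor $\theta^i$ and hence lies in $\mathcal{I(C)}$. Therefore
\[
\pounds_{\xi_Y^{(1)}}\Theta_L\equiv F\eta \mod \mathcal{I(C)}.
\]

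Second, I write $\alpha=\alpha^\mu\dd^nx_\mu+\theta$ with $\theta\in\mathcal{I(C)}$, as done in the text. Since $\mathcal{I(C)}$ is a differential ideal (it is generated by the $\theta^i$ together with their differentials), $\dd\theta\in\mathcal{I(C)}$. Expanding $\dd(\alpha^\mu\dd^nx_\mu)$ and substituting $\dd y^i=\theta^i+z^i_\nu\dd x^\nu$ gives
\[
\dd\alpha\equiv\sum_\mu\left(\frac{\partial\alpha^\mu}{\partial x^\mu}+z^i_\mu\frac{\partial\alpha^\mu}{\partial y^i}\right)\eta \mod\mathcal{I(C)}.
\]
Subtracting the two congruences and using condition (ii) of the definition of Noether symmetry, I obtain $\tilde F\eta\in\mathcal{I(C)}$.

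The key remaining step, and the only genuine obstacle, is to show that a semibasic form $G\eta$ with $G\in C^\infty(Z)$ can lie in $\mathcal{I(C)}$ only if $G=0$. I will prove this by pulling back along holonomic sections. Every element of $\mathcal{I(C)}$ is a sum of terms $\beta_i\wedge\theta^i+\gamma_i\wedge\dd\theta^i$. Since $(j^1\phi)^*\theta^i=0$, also $(j^1\phi)^*\dd\theta^i=\dd\,(j^1\phi)^*\theta^i=0$, so $(j^1\phi)^*$ annihilates the whole ideal. Hence
\[
(j^1\phi)^*(G\eta)=(G\circ j^1\phi)\,\dd^{n+1}x=0
\]
for every section $\phi$. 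Finally, for any point $z=(x^\mu,y^i,z^i_\mu)$ of $Z$, the local affine section $\phi^i(x')=y^i+z^i_\mu(x'^\mu-x^\mu)$ satisfies $j^1_x\phi=z$. It follows that $G(z)=0$ for every $z$, so $G\equiv0$. Applying this with $G=\tilde F$ proves the proposition. (If $\alpha$ is only defined locally, the same argument works chart by chart.)
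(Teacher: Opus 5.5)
Your proof is correct and follows the same route as the paper. The paper obtains $\pounds_{\xi_Y^{(1)}}\Theta_L\equiv F\eta$ and $\dd\alpha\equiv(F-\tilde F)\eta$ modulo $\mathcal{I(C)}$, subtracts them using condition (ii), and calls the result trivial; the one step it leaves implicit is that a semibasic form $G\eta$ lying in $\mathcal{I(C)}$ must vanish, and your argument (every element of $\mathcal{I(C)}$ is annihilated by $(j^1\phi)^*$, and $1$-jets of affine local sections reach every point of $Z$) supplies exactly that justification.
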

We also have
\begin{proposition}[Noether symmetries]
Let $\xi_Y$ be a Noether symmetry. If $\dim X=1$ ($n=0$), then
\begin{equation}
\label{eq:noether-symmetry}
\mathcal{L}_{\xi_Y^{(1)}}\Theta_L=\dd\alpha,
\end{equation}
while in general, if $\xi_Y$ is projectable, the equation above is also satisfied and
furthermore
\[
\xi_Y^{(1)}(L)=\sum_{\mu}\left(\der{\alpha^\mu}{x^\mu}-\der{\xi^\mu_Y}{x^\mu}L
\right).
\]
\end{proposition}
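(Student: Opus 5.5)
The plan is to reduce the Noether case to the Lagrangian case already handled in Proposition \ref{prop:useful_proposition_for_checking_symmetries}. The key device is the $n$-form $\alpha$ supplied by the definition of a Noether symmetry.

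First I would start from the long local expression \eqref{eq:lengthy_computation} for $\pounds_{\xi_Y^{(1)}}\Theta_L$ and subtract $\dd\alpha$, written locally as
\[
\dd\alpha=\Big(\der{\alpha^\mu}{x^\mu}\Big)\eta+(\text{terms in }\mathcal{I(C)}).
\]
Condition (ii), $\pounds_{\xi_Y^{(1)}}\Theta_L-\dd\alpha\in\mathcal{I(C)}$, kills the $\eta$-coefficient, giving $\tilde F=0$ as in \eqref{eq:F_tilde}. It remains to show that every term of $\pounds_{\xi_Y^{(1)}}\Theta_L-\dd\alpha$ vanishes identically, not just modulo $\mathcal{I(C)}$. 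In \eqref{eq:lengthy_computation} there are two kinds of terms beyond $F\eta$:
\begin{itemize}
\item the terms $\partial_{z}F\,\theta^i\wedge\dd^nx_\mu$;
\item the terms carrying $\partial\xi^\nu_Y/\partial y^j$.
\end{itemize}
When $\xi_Y$ is projectable, or when $n=0$ (where $\dd^{n-1}x_{\mu\nu}$ is absent and only one $x$-index exists, so the antisymmetric bracket vanishes), the second kind drops out. For the first kind, I would note that $\alpha$ lives on $Y$, so the contact-part of $\dd\alpha$ is $\big(\partial\alpha^\mu/\partial y^i\big)\theta^i\wedge\dd^nx_\mu$ plus terms coming from the $\theta$-part of $\alpha$.

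The main obstacle is matching $\partial F/\partial z^i_\mu$ with the $\theta$-coefficients of $\dd\alpha$. Since $\tilde F=0$ holds identically on $Z$, I would differentiate it in $z^i_\mu$, which gives
\[
\partial F/\partial z^i_\mu=\partial\alpha^\mu/\partial y^i.
\]
This is exactly the coefficient of $\theta^i\wedge\dd^nx_\mu$ in $\dd\alpha$, provided $\alpha$ is taken purely horizontal, $\alpha=\alpha^\mu\dd^nx_\mu$. In the general case I would absorb the contact part into the freedom of $\alpha$ modulo $\mathcal{I(C)}$, using Remark \ref{remark:preserved_quantities_plus_closed_or_contact_forms}. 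This mirrors the proof in Appendix \ref{app:some_symmetries_computations} for Proposition \ref{prop:useful_proposition_for_checking_symmetries}, and it yields \eqref{eq:noether-symmetry}.

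Finally, for the formula for $\xi_Y^{(1)}(L)$, I would insert the local expression \eqref{eq:F_local_expression} into $\tilde F=0$, using projectability ($\partial\xi^\nu_Y/\partial y^i=0$, so $\partial_\mu\xi^\mu_Y=\der{\xi^\mu_Y}{x^\mu}$), and rearrange. This gives
\[
\xi_Y^{(1)}(L)=\sum_\mu\Big(\der{\alpha^\mu}{x^\mu}-\der{\xi^\mu_Y}{x^\mu}L\Big),
\]
where $\der{\alpha^\mu}{x^\mu}=\partial_\mu\alpha^\mu+z^i_\mu\partial_i\alpha^\mu$ is precisely the total derivative appearing in \eqref{eq:F_tilde}.
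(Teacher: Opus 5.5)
For a $\pi$-semibasic $\alpha=\alpha^\mu(x,y)\,\dd^nx_\mu$ your computation is correct. It is the standard local argument; the paper itself only cites the reference for it. In that case
\[
\dd\alpha=D\,\eta+\frac{\partial D}{\partial z^i_\mu}\,\theta^i\wedge\dd^nx_\mu,\qquad D=\der{\alpha^\mu}{x^\mu}.
\]
So for projectable $\xi_Y$ (or for $n=0$, where every $\alpha$ has this form) \eqref{eq:lengthy_computation} gives
\[
\pounds_{\xi_Y^{(1)}}\Theta_L-\dd\alpha=\tilde F\,\eta+\frac{\partial\tilde F}{\partial z^i_\mu}\,\theta^i\wedge\dd^nx_\mu=0,
\]
and the formula for $\xi_Y^{(1)}(L)$ is just $\tilde F=0$ rewritten.

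\textbf{The gap.} It is the sentence that handles a general $\alpha\in\Omega^n(Y)$ by absorbing its contact part. This fails for two reasons.
\begin{enumerate}
\item Writing $\alpha=\alpha^\mu\dd^nx_\mu+\theta$ with $\theta\in\mathcal{I(C)}$ forces $z$-dependent coefficients as soon as $\alpha$ contains $\dd y$'s. For example, $\dd y^i\wedge\dd^{n-1}x_{\mu\nu}=z^i_\nu\,\dd^nx_\mu-z^i_\mu\,\dd^nx_\nu+\theta^i\wedge\dd^{n-1}x_{\mu\nu}$. Then $\alpha^\mu\dd^nx_\mu$ is no longer a form on $Y$, and differentiating $\tilde F=0$ in $z^i_\mu$ no longer gives $\partial F/\partial z^i_\mu=\partial\alpha^\mu/\partial y^i$.
\item More decisively, discarding $\theta$ changes $\dd\alpha$ by $\dd\theta$, which lies in $\mathcal{I(C)}$ but is nonzero. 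So an exact identity of forms on $Z$ cannot be transported this way. Remark~\ref{remark:preserved_quantities_plus_closed_or_contact_forms} concerns pullbacks by holonomic sections, where contact forms are invisible, and does not apply here.
\end{enumerate}

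\textbf{A counterexample.} This cannot be patched: for $n\geq1$ the identity $\pounds_{\xi_Y^{(1)}}\Theta_L=\dd\alpha$ is false without a semibasic hypothesis. Take $X=\RR^2$ (so $\partial X=\emptyset$), $m=2$, and the regular Lagrangian $L=z^1_0z^2_1-z^1_1z^2_0$ (its Hessian has determinant $1$). Take the vertical field $\xi_Y=y^1\,\partial/\partial y^1$ and $\alpha=y^1\dd y^2$. A direct check gives $\Theta_L=\dd y^1\wedge\dd y^2-\theta^1\wedge\theta^2$ and $\xi_Y^{(1)}=y^1\partial_{y^1}+z^1_\mu\partial_{z^1_\mu}$. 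This field satisfies $\pounds_{\xi_Y^{(1)}}\dd y^1=\dd y^1$ and $\pounds_{\xi_Y^{(1)}}\theta^1=\theta^1$, and it kills $\dd y^2$ and $\theta^2$. Hence
\[
\pounds_{\xi_Y^{(1)}}\Theta_L=\Theta_L,\qquad \pounds_{\xi_Y^{(1)}}\Theta_L-\dd\alpha=-\theta^1\wedge\theta^2\in\mathcal{I(C)},
\]
so $\xi_Y$ is a Noether symmetry. But $\dd\Theta_L=-\dd(\theta^1\wedge\theta^2)\neq0$, so $\pounds_{\xi_Y^{(1)}}\Theta_L$ is not $\dd$ of any form, for this $\alpha$ or any other.

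\textbf{What your computation actually proves.} For projectable $\xi_Y$, $\pounds_{\xi_Y^{(1)}}\Theta_L$ equals the part of $\dd\alpha$ that is at most linear in the $\theta^i$. So the identity holds exactly when $\dd\alpha$ vanishes on any two $\pi$-vertical vectors, in particular for semibasic $\alpha$. This is the setting of Definition~\ref{def:symmetries_for_hamilton_equations} and of the remark in Definition~\ref{def:5.6}. You should state that hypothesis explicitly rather than try to remove it.
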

\begin{proof}
	See \cite{leonSymmetriesClassicalField2004}.
\end{proof}
Now we arrive at the Noether's theorem of this Noether's symmetries. As admissible
solutions of E-L's equations satisfy $(j^1\phi)(\partial X)\subseteq B$, then for every
compact submanifold $C$ of $X$, we have also that $(j^1\phi)(\partial X)\subseteq B$ and
so Remark \ref{remark:preserved_quantities_plus_closed_or_contact_forms} ends up motivating
the condition of $\alpha$ being exact (hence closed) on $B$ as
\begin{theorem}[Noether's theorem for Noether's symmetries]
	If $\xi_Y$ is a Noether's symmetry then the quantity $\iota_{\xi_Y^{(1)}}\Theta_L-\alpha$
	is a conserved quantity which is exact on the boundary.
\end{theorem}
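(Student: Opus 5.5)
The argument mirrors the proof of Theorem \ref{thm:noether_theorem_symmetries_of_the_lagrangian}: it runs the variational computation of Section \ref{sec:symmetries_for_euler_lagrange_equations} with $\Theta_L$ replaced by $\Theta_L-\alpha$. There are two separate claims to establish: conservation, and exactness on $B$.

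For conservation, fix an admissible solution $\phi$ of the Euler--Lagrange equations and a compact $C\subseteq X$.

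1. Write $\pounds_{\xi_Y^{(1)}}\Theta_L=\dd\alpha+\theta$ with $\theta\in\mathcal{I(C)}$, which is condition ii) of the definition. Since $(j^1\phi)^*$ kills $\mathcal{I(C)}$, we get $(j^1\phi)^*\pounds_{\xi_Y^{(1)}}\Theta_L=(j^1\phi)^*\dd\alpha$.

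2. Apply Cartan's formula $\pounds=\dd\iota+\iota\dd$:
\[
(j^1\phi)^*\dd\big(\iota_{\xi_Y^{(1)}}\Theta_L-\alpha\big)=(j^1\phi)^*\iota_{\xi_Y^{(1)}}\dd\Theta_L=-(j^1\phi)^*\iota_{\xi_Y^{(1)}}\Omega_L .
\]

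3. The right-hand side vanishes because $\phi$ is a solution: by the intrinsic Euler--Lagrange equations \eqref{eq:intrinsic_EL_equations}, $(j^1\phi)^*(\iota_\xi\Omega_L)=0$ for every $\xi\in\mathfrak{X}(Z)$, in particular for $\xi=\xi_Y^{(1)}$.

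4. Equivalently, in the integral form used to motivate \eqref{eq:preservation_already_proven}, $\delta S=\int_C(j^1\phi)^*\dd\alpha$. Stokes' theorem and the bulk Euler--Lagrange term give $\int_{\partial C}(j^1\phi)^*(\iota_{\xi_Y^{(1)}}\Theta_L-\alpha)=0$ for every $C$. This is the statement $(j^1\phi)^*\dd(\iota_{\xi_Y^{(1)}}\Theta_L-\alpha)=0$ of Definition \ref{def:conserved_quantity_for_EL_equations}. I will present the pointwise version and mention the integral one as a remark.

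For exactness on $B$, we need $i_B^*(\iota_{\xi_Y^{(1)}}\Theta_L-\alpha)$ to be exact. The $\alpha$-part is immediate from condition i): $i_B^*\alpha=\dd\beta$.

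For the $\Theta_L$-part, condition iii) says $\xi_Y^{(1)}$ is tangent to $B$, so it restricts to a vector field $\xi_B:=\xi_Y^{(1)}|_B$ with
\[
i_B^*\iota_{\xi_Y^{(1)}}\Theta_L=\iota_{\xi_B}i_B^*\Theta_L=\iota_{\xi_B}\dd\Pi .
\]
Cartan's formula on $B$ then gives $\iota_{\xi_B}\dd\Pi=\pounds_{\xi_B}\Pi-\dd\iota_{\xi_B}\Pi$. Condition iv), $\pounds_{\xi_B}\Pi=0$, kills the first term, so
\[
i_B^*\big(\iota_{\xi_Y^{(1)}}\Theta_L-\alpha\big)=\dd\big(-\iota_{\xi_B}\Pi-\beta\big),
\]
which is exact. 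This is the same argument the paper defers to Appendix \ref{app:some_symmetries_computations} for Theorem \ref{thm:noether_theorem_symmetries_of_the_lagrangian}, now with the extra term $-\beta$.

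The main obstacle is step 3 of the conservation argument, namely justifying that the intrinsic equation can be applied with the specific field $\xi_Y^{(1)}$. The intrinsic Euler--Lagrange equations are only equivalent to extremality in the sense of Definition \ref{def:extremal_section}, so the honest route is to cite that equivalence, which is part of the statement of the Euler--Lagrange theorem.

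A subtlety is that $\xi_Y$ need not be projectable here, so $\xi_Y^{(1)}$ need not be $\pi_{ZX}$-vertical-compatible. The pointwise identity in step 3 still holds because it is stated for all $\xi\in\mathfrak{X}(Z)$. The integral/Stokes route, by contrast, relied on projectable variations, so I will make sure the final write-up rests on the intrinsic form rather than on it.
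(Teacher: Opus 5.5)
Your proof is correct. The paper gives no argument of its own here; it defers to \cite{leonSymmetriesClassicalField2004}. Its only in-text reasoning for this family of results is the first-variation computation of Section~\ref{sec:symmetries_for_euler_lagrange_equations}. Adapted to a Noether symmetry, that computation combines $\delta S=\int_C(\text{Euler--Lagrange term})+\int_{\partial C}(j^1\phi)^*\iota_{\xi_Y^{(1)}}\Theta_L$ with $\delta S=\int_{\partial C}(j^1\phi)^*\alpha$, and then uses Stokes and the arbitrariness of $C$; this is essentially your step 4.

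Your pointwise route (Cartan's formula plus the intrinsic Euler--Lagrange equations) is the better choice, for the reason you give. The first-variation formula of Appendix~\ref{app:some_manipulations_of_the_extremality_condition} is derived with $\xi_Y^\mu=\xi_Y^\mu(x)$, whereas Noether symmetries are not assumed projectable. So only the pointwise argument covers the theorem as stated. Your boundary-exactness argument is the paper's appendix argument with the extra $-\beta$, and it is fine.

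There are two small corrections.

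\begin{enumerate}
\item In step 2 the sign is off. From $\pounds_{\xi_Y^{(1)}}\Theta_L=\iota_{\xi_Y^{(1)}}\dd\Theta_L+\dd\iota_{\xi_Y^{(1)}}\Theta_L$ you get $\dd(\iota_{\xi_Y^{(1)}}\Theta_L-\alpha)=\theta+\iota_{\xi_Y^{(1)}}\Omega_L$. The pulled-back right-hand side is therefore $+(j^1\phi)^*\iota_{\xi_Y^{(1)}}\Omega_L$. This is harmless, since it vanishes anyway.

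\item For your ``main obstacle'' you need not go through extremal sections at all, and it is safer not to. Read literally, Definition~\ref{def:extremal_section} puts no boundary condition on $\xi_Y$, and is then strictly stronger than the Euler--Lagrange equations. Instead, work directly from the local expression of $\Omega_L$. For every $\xi\in\mathfrak{X}(Z)$ and every $\phi$,
\[
(j^1\phi)^*\iota_\xi\Omega_L=-\bigl(\theta^i(\xi)\circ j^1\phi\bigr)\,(j^2\phi)^*\Bigl(\frac{\partial L}{\partial y^i}-\frac{\dd}{\dd x^\mu}\frac{\partial L}{\partial z^i_\mu}\Bigr)\,\eta ,
\]
because the term $\theta^i\wedge\iota_\xi(\cdots)$ vanishes under $(j^1\phi)^*$. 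Hence any solution of \eqref{eq:euler-lagrange_equations} satisfies $(j^1\phi)^*\iota_\xi\Omega_L=0$ for all $\xi$, in particular for $\xi=\xi_Y^{(1)}$. This settles step 3 in two lines.
\end{enumerate}
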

\begin{proof}
	See \cite{leonSymmetriesClassicalField2004}.
\end{proof}
\subsection*{Cartan symmetries}
As already mentioned in the motivation, Cartan symmetries are those most general kind of
symmetries that at least flow solutions of E-L's equations into each other. These live
on $Z$, and do not even need to be projectable into vector fields on $Y$ or $X$. Once
introduced the other two kinds of symmetries, the only different requirement is that
its flow must preserve $\mathcal{I(C)}$
\begin{definition}[Cartan's symmetries]
	A vector field $\xi_Z\in\mathfrak{X}(Z)$ is called a \textbf{Cartan's symmetry} whenever
	\begin{enumerate}
	\item Its flow preserves $\mathcal{I(C)}$ (i.e. $\pounds_{\xi_Z}\subseteq\mathcal{I(C)})$,
 	\item There is a form $\alpha\in\Omega^n(Z)$ which is exact on $B$, i.e., 
 	$\iota^*_B\alpha=\dd\beta$, for some $\beta\in\Omega^{n-1}(B)$,
 	\item $\pounds_{\xi_Z}\Theta_L-\dd\alpha\in\mathcal{I(C)}$,
	\item $\xi_Z$ is tangent to $B$,
	\item $\pounds_{\xi_Z|_B}\Pi=0$.
 \end{enumerate} 
\end{definition}
The $1-$jet prolongation of Noether's symmetries are Cartan symmetries, while the projection
(if it exists) of Cartan symmetries are Noether's symmetries. We also have naturally
\begin{proposition}
	Cartan symmetries form a Lie subalgebra of $\mathfrak{X}(Z)$.
\end{proposition}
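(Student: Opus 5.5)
To prove that Cartan symmetries form a Lie subalgebra of $\mathfrak{X}(Z)$, I will take two Cartan symmetries $\xi_Z,\zeta_Z$ with associated $n$-forms $\alpha,\alpha'$ (exact on $B$ with primitives $\beta,\beta'$). I will show that $[\xi_Z,\zeta_Z]$ satisfies the five defining conditions. I will use the same form of argument as in Proposition \ref{prop:symmetries_will_form_subalgebra}, namely the identity $\pounds_{[\xi,\zeta]}=\pounds_\xi\pounds_\zeta-\pounds_\zeta\pounds_\xi$ on forms. Closure under real linear combinations is immediate: every condition is linear, and one uses $a\alpha+b\alpha'$ with primitive $a\beta+b\beta'$.

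For condition (i), let $\theta\in\mathcal{I(C)}$. Then $\pounds_{\zeta_Z}\theta\in\mathcal{I(C)}$, so $\pounds_{\xi_Z}\pounds_{\zeta_Z}\theta\in\mathcal{I(C)}$, and symmetrically. Hence $\pounds_{[\xi_Z,\zeta_Z]}\theta\in\mathcal{I(C)}$.

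Conditions (iv) and (v) are also direct. The bracket of two vector fields tangent to the submanifold $B$ is tangent to $B$, and $[\xi_Z,\zeta_Z]|_B=[\xi_Z|_B,\zeta_Z|_B]$. Therefore $\pounds_{[\xi_Z,\zeta_Z]|_B}\Pi=\pounds_{\xi_Z|_B}\pounds_{\zeta_Z|_B}\Pi-\pounds_{\zeta_Z|_B}\pounds_{\xi_Z|_B}\Pi=0$.

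The core of the proof is (ii)--(iii). Write $\pounds_{\zeta_Z}\Theta_L=\dd\alpha'+\theta'$ and $\pounds_{\xi_Z}\Theta_L=\dd\alpha+\theta$, with $\theta,\theta'\in\mathcal{I(C)}$. Then
\[
\pounds_{[\xi_Z,\zeta_Z]}\Theta_L=\dd\big(\pounds_{\xi_Z}\alpha'-\pounds_{\zeta_Z}\alpha\big)+\pounds_{\xi_Z}\theta'-\pounds_{\zeta_Z}\theta,
\]
using that $\pounds$ commutes with $\dd$. The last two terms lie in $\mathcal{I(C)}$ by condition (i). So I take the candidate form
\[
\alpha'':=\pounds_{\xi_Z}\alpha'-\pounds_{\zeta_Z}\alpha .
\]
It remains to check that $\alpha''$ is exact on $B$. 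Since $\xi_Z$ is tangent to $B$, pullback commutes with the Lie derivative: $i_B^*\pounds_{\xi_Z}\alpha'=\pounds_{\xi_Z|_B}i_B^*\alpha'=\pounds_{\xi_Z|_B}\dd\beta'=\dd\pounds_{\xi_Z|_B}\beta'$. The same holds for the other term. Hence
\[
i_B^*\alpha''=\dd\big(\pounds_{\xi_Z|_B}\beta'-\pounds_{\zeta_Z|_B}\beta\big).
\]

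The main point requiring care is this last step. It is exactly where tangency to $B$ (condition (iv)) is used to commute $i_B^*$ with $\pounds$. One must also note that the ideal $\mathcal{I(C)}$, rather than only $\mathcal{C}$, is preserved. Both facts follow from the hypotheses, so I expect no genuine obstruction.
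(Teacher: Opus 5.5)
Your proof is correct and is the argument the paper has in mind. The paper states this proposition without proof (``naturally''), relying on the same commutator identity $\pounds_{[X,Y]}=\pounds_X\pounds_Y-\pounds_Y\pounds_X$ it uses for Proposition~\ref{prop:symmetries_will_form_subalgebra}; you make the details explicit, namely the new form $\pounds_{\xi_Z}\alpha'-\pounds_{\zeta_Z}\alpha$ and its boundary primitive $\pounds_{\xi_Z|_B}\beta'-\pounds_{\zeta_Z|_B}\beta$, with tangency to $B$ letting $i_B^*$ commute with the Lie derivatives.
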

As we motivated the following proposition is true for every Cartan symmetry
\begin{proposition}
	The flow of Cartan symmetries map solutions of the E-L's equations to solutions
	of the E-L's equations.
\end{proposition}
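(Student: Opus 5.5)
The plan is to use the intrinsic characterization of Euler--Lagrange solutions, Equation \eqref{eq:intrinsic_EL_equations}: $\phi$ is a solution iff $(j^1\phi)^*(\iota_V\Omega_L)=0$ for all $V\in\mathfrak{X}(Z)$. Let $\xi_Z$ be a Cartan symmetry with flow $F_t$, and let $\phi$ be an admissible solution. The first step is to show that $F_t\circ j^1\phi$, reparametrized by the induced diffeomorphism of $X$, is again a holonomic section $j^1\phi_t$. Condition i) says $\pounds_{\xi_Z}$ preserves $\mathcal{I(C)}$, so $F_t^*$ preserves the contact ideal and $F_t$ maps the Cartan distribution to itself. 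By Proposition \ref{prop:holonomicity_iff_cartan}, integral sections of $\mathcal{D_C}$ are then sent to integral sections.

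Some care is needed here, because $\xi_Z$ need not be projectable, so the image $F_t(j^1\phi(X))$ is a priori only an $(n+1)$-submanifold. For small $t$ this submanifold is still transverse to $\mathcal{V}\pi_{ZX}$, at least on compact pieces. Hence it is locally the image of a section $\sigma_t$, and $\sigma_t$ is holonomic by the previous remark, i.e. $\sigma_t=j^1\phi_t$. Admissibility, $j^1\phi_t(\partial X)\subseteq B$, follows from condition iv), since $\xi_Z$ is tangent to $B$ and its flow therefore preserves $B$.

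The second step is to show that the dynamics is preserved. From condition iii) we have $\pounds_{\xi_Z}\Theta_L=\dd\alpha+\theta$ with $\theta\in\mathcal{I(C)}$. Applying $-\dd$ gives $\pounds_{\xi_Z}\Omega_L=-\dd\theta\in\mathcal{I(C)}$, using that the ideal $\mathcal{I(C)}$ is closed under $\dd$. Integrating along the flow, $F_t^*\Omega_L=\Omega_L+\kappa_t$ with $\kappa_t\in\mathcal{I(C)}$. Now take any $V$ and set $W=F_{t*}^{-1}V$. Then $(j^1\phi_t)^*\iota_V\Omega_L$ equals, up to the base diffeomorphism, $(j^1\phi)^*\iota_W(\Omega_L+\kappa_t)$. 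The $\Omega_L$ part vanishes because $\phi$ is a solution. The $\kappa_t$ part vanishes because $\iota_W$ of an element of $\mathcal{I(C)}$ is a sum of terms, each of which is either still in $\mathcal{I(C)}$ (and so pulls back to zero under $j^1\phi$) or contains a contact form $\theta^i$ evaluated on $W$ times a form that is still in $\mathcal{I(C)}$.

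This last point is the main obstacle. Terms of the form $\iota_W\theta^i\,\beta$, with $\beta$ not in the ideal, must be ruled out. I would handle them by writing $\kappa_t$ locally as a combination of $\theta^i\wedge\beta_i$ and $\dd\theta^i\wedge\gamma_i$, exactly as in the local expression \eqref{eq:lengthy_computation}, and checking that the $(n+2)$-form has at least two contact factors or a $\theta\wedge\dd\theta$ structure. If this does not hold in general, I would fall back to proving the statement for the variational formulation instead. There, $\phi_t$ extremizes $S$ because the action changes only by $\int_{\partial C}\alpha$-type terms, which are controlled by condition ii) (exactness of $\alpha$ on $B$) and condition v), together with terms in $\mathcal{I(C)}$ that vanish on holonomic sections. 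This is the route I consider most robust, and it is the one I would pursue first if the direct pullback argument stalls.
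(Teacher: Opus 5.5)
Your first step is sound. In degree one $\mathcal{I(C)}$ consists of combinations of the $\theta^j$, so condition i) gives $F_t^*\theta^i=f^i_j\theta^j$, hence $F_t$ preserves $\mathcal{D_C}$. By Proposition~\ref{prop:holonomicity_iff_cartan}, transverse images of holonomic sections are then holonomic. The thesis gives no argument here, only a pointer to \cite{leonSymmetriesClassicalField2004}.

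\textbf{The gap is in your primary route for the second step, and the local check you propose cannot close it.} Knowing $\kappa_t\in\mathcal{I(C)}$ gives no control once you contract \emph{before} pulling back. For $\kappa=\theta^1\wedge\eta\in\mathcal{I(C)}$ and $W=\partial/\partial y^1$, one has $(j^1\phi)^*\iota_W\kappa=\eta\neq0$ for every holonomic $\phi$, solution or not. This is exactly how field equations arise in the first place. For \emph{every} Lagrangian $L'$, $\Omega_{L'}=-\theta^i\wedge(\dots)$ lies in $\mathcal{I(C)}$, yet a solution for $L$ is not a solution for $L'$.

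Nor will $\kappa_t=-\dd\int_0^tF_s^*\theta\,ds$ have two contact factors. The contact remainder $\theta$ generically contains one-contact terms, as \eqref{eq:lengthy_computation} already shows. A term $f\,\theta^i\wedge\dd^nx_\mu$ in $\theta$ contributes $\dd(f\,\theta^i\wedge\dd^nx_\mu)=\dd f\wedge\theta^i\wedge\dd^nx_\mu-f\,\dd z^i_\mu\wedge\eta$, and both pieces survive contraction and pullback. Whatever makes $(j^1\phi)^*\iota_W\kappa_t$ vanish on solutions is specific to $\Omega_L+\kappa_t=F_t^*\Omega_L$. Proving it directly amounts to the proposition itself.

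\textbf{Your fallback variational argument is the one that works; make it your main proof.} It works because there you pull back \emph{without} contracting, and $(j^1\phi)^*\theta^i=0=(j^1\phi)^*\dd\theta^i$ kills all of $\mathcal{I(C)}$. The details are as follows.
\begin{itemize}
\item Integrating iii) along the flow gives $F_t^*\Theta_L=\Theta_L+\dd\alpha_t+\int_0^tF_s^*\theta\,ds$, with $\alpha_t:=\int_0^tF_s^*\alpha\,ds$ (not $\alpha$ itself).
\item The last term pulls back to zero along $j^1\phi$, because each $F_s\circ j^1\phi$ is an integral manifold of $\mathcal{D_C}$ by your first step.
\item Since $\Theta_L-\mathcal{L}\in\mathcal{I(C)}$, the action of holonomic sections can be computed with $\Theta_L$.
\item Set $f_t:=\pi_{ZX}\circ F_t\circ j^1\phi$, so that $j^1\psi\circ f_t=F_t\circ j^1\phi$. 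A change of variables gives $S_{f_t(C)}(\psi)=S_C(\phi)+\int_{\partial C}(j^1\phi)^*\alpha_t$.
\item For small $t$, $F_{-t}$ carries a variation $\psi_s$ supported in the interior of $f_t(C)$ to a variation $\phi_s$ supported in the interior of $C$.
\item The base maps $f_{t,s}$ do depend on $s$, since $\xi_Z$ need not be projectable. However, they coincide with $f_t$ near $\partial C$, so the integration region does not move.
\item The boundary term is constant in $s$, because $j^1\phi_s=j^1\phi$ near $\partial C$. Hence $\frac{d}{ds}S(\psi_s)|_{s=0}=\frac{d}{ds}S(\phi_s)|_{s=0}=0$.
\end{itemize}
In particular, conditions ii) and v) are not what controls the boundary term; they concern the exactness of the Noether current on $B$. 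Only i) and iii) are used, together with iv) for the boundary condition $j^1\psi(\partial X)\subseteq B$.
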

This proposition is fundamental to see that Cartan symmetries are the less restrictive
symmetries that are really symmetries in the sense of transforming solutions
of E-L's equations into solutions of E-L's equations.
\begin{proof}
	See \cite{leonSymmetriesClassicalField2004}.
\end{proof}
Lastly
\begin{theorem}[Noether's theorem for Cartan's symmetries]
	If $\xi_Z$ is a Cartan symmetry, then the $\iota_{\xi_Z}\Theta_L-\alpha$ is a conserved
	quantity which is exact on $B$.
\end{theorem}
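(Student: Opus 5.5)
We prove two things: that $J:=\iota_{\xi_Z}\Theta_L-\alpha$ is conserved, i.e. $(j^1\phi)^*\dd J=0$ for every solution $\phi$ of the Euler--Lagrange equations, and that $J$ is exact on $B$. The plan mirrors the argument given for symmetries of the Lagrangian, with the Poincaré--Cartan form and Cartan's magic formula doing the work. First rewrite $\dd J$ using $\pounds_{\xi_Z}=\iota_{\xi_Z}\dd+\dd\iota_{\xi_Z}$ (the standard sign convention) and $\Omega_L=-\dd\Theta_L$:
\[
	\dd J=\dd\iota_{\xi_Z}\Theta_L-\dd\alpha=\pounds_{\xi_Z}\Theta_L-\dd\alpha-\iota_{\xi_Z}\dd\Theta_L
	=\left(\pounds_{\xi_Z}\Theta_L-\dd\alpha\right)+\iota_{\xi_Z}\Omega_L .
\]

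Pull back along $j^1\phi$ and treat the two terms separately. The first term lies in $\mathcal{I(C)}$ by condition iii) of the definition of Cartan symmetry. Every element of $\mathcal{I(C)}$ is a combination of terms $\beta\wedge\theta^i$ and $\beta\wedge\dd\theta^i$, and $(j^1\phi)^*\theta^i=0$ while $(j^1\phi)^*\dd\theta^i=\dd(j^1\phi)^*\theta^i=0$. Hence that pullback vanishes for every section $\phi$, not only for solutions. The second term vanishes because $\phi$ is a solution: by the intrinsic Euler--Lagrange equations \eqref{eq:intrinsic_EL_equations} we have $(j^1\phi)^*(\iota_\xi\Omega_L)=0$ for all $\xi\in\mathfrak{X}(Z)$, in particular for $\xi=\xi_Z$. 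Together these give $(j^1\phi)^*\dd J=0$. This is exactly Definition \ref{def:conserved_quantity_for_EL_equations}, and it is consistent with Remark \ref{remark:preserved_quantities_plus_closed_or_contact_forms}.

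For exactness on $B$, compute $i_B^*J=i_B^*\iota_{\xi_Z}\Theta_L-i_B^*\alpha$. By condition ii), $i_B^*\alpha=\dd\beta$. Since $\xi_Z$ is tangent to $B$ (condition iv), contraction commutes with restriction, so $i_B^*\iota_{\xi_Z}\Theta_L=\iota_{\xi_Z|_B}i_B^*\Theta_L=\iota_{\xi_Z|_B}\dd\Pi$, using the boundary condition $i_B^*\Theta_L=\dd\Pi$. Applying Cartan's formula on $B$ and condition v),
\[
	\iota_{\xi_Z|_B}\dd\Pi=\pounds_{\xi_Z|_B}\Pi-\dd\iota_{\xi_Z|_B}\Pi=-\dd\iota_{\xi_Z|_B}\Pi .
\]
Therefore $i_B^*J=\dd\left(-\iota_{\xi_Z|_B}\Pi-\beta\right)$, which is exact. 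This repeats the computation the paper refers to in Appendix \ref{app:some_symmetries_computations} for the Lagrangian case.

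The main obstacle is the step showing that $\pounds_{\xi_Z}\Theta_L-\dd\alpha\in\mathcal{I(C)}$ is annihilated by $(j^1\phi)^*$. This needs the ideal $\mathcal{I(C)}$ to be the differential ideal generated by the $\theta^i$ and the $\dd\theta^i$, which is the meaning used in the definition of Lepage-equivalence. The other point to check is that the intrinsic Euler--Lagrange equations apply to an arbitrary vector field on $Z$, not only to prolongations. Both follow directly from the stated results, so I would state them explicitly and not re-derive them. Condition i), preservation of $\mathcal{I(C)}$, is not needed for conservation itself; it matters only for the flow to map solutions to solutions.
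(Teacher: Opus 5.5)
Your proof is correct and complete, and you are right that condition i) is never used. The paper states this theorem without any proof; the neighbouring results are deferred to \cite{leonSymmetriesClassicalField2004}. So the relevant comparison is with the only conservation argument it carries out, the one for symmetries of the Lagrangian.

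That argument is integral. The first-variation formula of Appendix~\ref{app:some_manipulations_of_the_extremality_condition} writes $\int_C(j^1\phi)^*\pounds_{\xi_Y^{(1)}}\Theta_L$ as an Euler--Lagrange bulk term plus $\int_{\partial C}(j^1\phi)^*\iota_{\xi_Y^{(1)}}\Theta_L$. The symmetry condition makes the left side vanish and the field equations kill the bulk term. Stokes' theorem on arbitrary compact $C$ then gives $(j^1\phi)^*\dd\iota_{\xi_Y^{(1)}}\Theta_L=0$. This makes the variational origin of the current visible. However, the formula is obtained by a coordinate computation valid only for prolongations of projectable fields on $Y$, whereas a Cartan symmetry is an arbitrary field on $Z$.

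Your pointwise route via Cartan's formula needs neither integration nor that structure, and it actually contains the paper's formula. From the local expression of $\Omega_L$ one gets, for every $\phi\in\Gamma(\pi)$ and every $\xi=\xi^\mu\partial/\partial x^\mu+\xi^i\partial/\partial y^i+\xi^i_\mu\partial/\partial z^i_\mu\in\mathfrak{X}(Z)$,
\[
(j^1\phi)^*\iota_\xi\Omega_L=-(j^2\phi)^*\Bigl[(\xi^i-z^i_\nu\xi^\nu)\Bigl(\frac{\partial L}{\partial y^i}-\frac{\dd}{\dd x^\mu}\frac{\partial L}{\partial z^i_\mu}\Bigr)\Bigr]\eta .
\]
Integrating $\pounds_\xi\Theta_L=-\iota_\xi\Omega_L+\dd\iota_\xi\Theta_L$ over $C$ then reproduces the first-variation formula for arbitrary $\xi$. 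Your decomposition also yields the off-shell identity expressing $(j^1\phi)^*\dd J$ through the Euler--Lagrange expressions.

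The same identity settles your second worry: for holonomic sections, the intrinsic equations with arbitrary $\xi\in\mathfrak{X}(Z)$ are equivalent to \eqref{eq:euler-lagrange_equations}. Your exactness computation is exactly the one in Appendix~\ref{app:some_symmetries_computations}, with the additional $-\beta$ coming from condition ii).
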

\section{Symmetries for De Donder equations}

Up to now we used constantly the fact that if $\phi$ is a solution to E-L equations then
$(j^1\phi)^*\theta=0$, for all $\theta\in\mathcal{I(C)}$, but in the context of De Donder
equations, solutions are sections $\sigma\in\Gamma(\pi_{ZX})$ which of course in general
do not annihilate contact forms through their pullback (as $\sigma^i_\mu\neq\frac{\partial
\sigma^i}{\partial x^\mu}$ in general). This of course reflects on symmetries for 
the De Donder equations being more restrictive that the ones of E-L's equations, but, luckily
for us, the generalizations of the three kind of symmetries and conserved quantities
are straightforward 
\begin{definition}[Conserved quantities for the De Donder equations]
	A $n-$form on $Z$ is called a \textbf{conserved quantity for the
	De Donder equations} if $\sigma^*\dd\alpha=0$ for every solution
	$\sigma\in\Gamma(\pi_{ZX})$ of the De Donder equations
	\eqref{eq:De-Donder-equations}; $\tilde{\alpha}$ defined by
	\eqref{eq:tilde_forms} is called its associated \textbf{momentum}.
\end{definition}
\begin{remark}
	\label{remark:preserved_quantities_plus_closed}
	Note that \textbf{conserved quantities for the De Donder equations remain conserved
	under the addition of closed forms} (but not under contact forms of course).
\end{remark}
Given an integrable connection solution of the geometric De Donder equation
\eqref{eq:geometric-De-Donder} then we have another way of characterizing 
conserved quantities
\begin{proposition}[\cite{leonSymmetriesClassicalField2004}]
	Let \textbf{h} be an integrable connection solution of geometric De Donder equation
	\eqref{eq:geometric-De-Donder}, then $\alpha$ is a conserved quantity for the De Donder
	equations if and only if $\dd\alpha$ is annihilated by any $n+1$ horizontal tangent vectors
	at each point. 
\end{proposition}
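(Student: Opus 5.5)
The plan is to reduce both directions to a pointwise statement about the pullback of an $(n+1)$-form along an integral section, and then use that the connection's horizontal spaces are exactly the tangent spaces of such sections. By Frobenius (Theorem \ref{thm:Frobenius_theorem}), integrability of $\textbf{h}$ means that through every point $z\in Z$ there passes a local horizontal integral section $\sigma$ with $\sigma(x)=z$ and
\[
T_x\sigma(T_xX)=H_z=\textbf{h}_z(T_zZ).
\]
By the proposition on the geometric De Donder equation \eqref{eq:geometric-De-Donder}, every such section solves the De Donder equations. The key identity I will use is that, for $v_0,\dots,v_n\in T_xX$,
\[
(\sigma^*\dd\alpha)_x(v_0,\dots,v_n)=(\dd\alpha)_z\big(T_x\sigma(v_0),\dots,T_x\sigma(v_n)\big),
\]
and the vectors $T_x\sigma(v_\mu)$ range over the full horizontal space $H_z$. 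Since $\dim H_z=n+1$, a top-degree form on $X$ vanishes at $x$ if and only if $\dd\alpha$ vanishes on a basis of $H_z$, i.e.\ on any $n+1$ horizontal vectors.

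For the direction ($\Rightarrow$), assume $\alpha$ is a conserved quantity for the De Donder equations. Fix $z$, take the horizontal integral section $\sigma$ through $z$, and use that it is a De Donder solution. Then $\sigma^*\dd\alpha=0$, and by the identity above $\dd\alpha$ annihilates $H_z$. Since $z$ is arbitrary, the claim holds at every point.

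For the direction ($\Leftarrow$), assume $\dd\alpha|_{H}=0$ and let $\sigma$ be an arbitrary solution of the De Donder equations. I need $\sigma^*\dd\alpha=0$, and this is the main obstacle: an arbitrary De Donder solution need not be horizontal for the chosen $\textbf{h}$. To handle it, I would read the statement as intended in \cite{leonSymmetriesClassicalField2004}, where solutions are understood as the integral sections of the given integrable solution connection. Under that reading, $T\sigma(TX)\subseteq H$ and the identity gives $\sigma^*\dd\alpha=0$ immediately.

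If a stronger version is wanted, I would argue for regular Lagrangians as follows. First, show that the tangent plane of any De Donder solution satisfies the pointwise equation $\iota_{\textbf{h}'}\Omega_L=n\Omega_L$ for some projector $\textbf{h}'$. Then appeal to the uniqueness of the horizontal part modulo $\ker$ considerations: non-degeneracy of $\Omega_L$ (Proposition \ref{prop:reg_lagr_implies_multisymp}) fixes the $y$- and $z$-components of the horizontal lift up to the freedom allowed by the equations. This last step is delicate, so I would state the converse relative to sections tangent to $\textbf{h}$, and explicitly note that this is the sense in which "conserved quantity" is meant once $\textbf{h}$ is fixed.
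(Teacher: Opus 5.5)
The paper gives no proof of this proposition (it only cites \cite{leonSymmetriesClassicalField2004}), and your core argument is the natural one. For a horizontal integral section $\sigma$ with $\sigma(x)=z$ one has $(\sigma^*\dd\alpha)_x(v_0,\dots,v_n)=(\dd\alpha)_z\bigl(T_x\sigma(v_0),\dots,T_x\sigma(v_n)\bigr)$, where $T_x\sigma(T_xX)=H_z$ has dimension $n+1$. Frobenius (Theorem \ref{thm:Frobenius_theorem}) gives such a local $\sigma$ through every point, because $T_zZ=H_z\oplus\mathcal{V}_z\pi_{ZX}$ makes each leaf locally the image of a section of $\pi_{ZX}$. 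This proves the statement with ``conserved'' read relative to the integral sections of $\mathbf{h}$. Your ($\Rightarrow$) direction also holds for the paper's literal definition (all De Donder solutions), provided local solutions are admitted.

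Your reinterpretation of ($\Leftarrow$) is not merely convenient but necessary: with the paper's definition the converse is false, already for a regular Lagrangian with $n=1$. Take $X=\RR^2$, $m=1$, $L=\tfrac12(z_0^2+z_1^2)$, so that $\Omega_L=\theta\wedge\dd z_\mu\wedge\dd^nx_\mu$, and take $\mathbf{h}(\partial_\mu)=\partial_\mu+z_\mu\partial_y$.
\begin{itemize}
\item Since $\iota_{\mathbf{h}}\theta=0$ and $\iota_{\mathbf{h}}\dd z_\nu=0$, the derivation $\iota_{\mathbf{h}}$ acts only on $\dd^nx_\mu$, multiplying it by $n$, so $\mathbf{h}$ solves \eqref{eq:geometric-De-Donder}.
\item $\mathbf{h}$ is integrable; its integral sections are the jets of affine functions.
\item The $1$-form $\alpha=z_0\,\dd x^0$ has $\dd\alpha=\dd z_0\wedge\dd x^0$, which vanishes on $H$.
\item Yet for the harmonic function $\phi=x^0x^1$, whose jet is a De Donder solution, $(j^1\phi)^*\dd\alpha=\dd x^1\wedge\dd x^0\neq0$.
\end{itemize}

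For the same reason, drop your ``stronger version'' rather than just flagging it as delicate: non-degeneracy of $\Omega_L$ does not pin down the horizontal lift. For regular $L$, equation \eqref{eq:geometric-De-Donder} at a point forces $\Gamma^i_\mu=z^i_\mu$. However, it imposes only $m$ linear conditions on the $(n+1)^2m$ components $\Gamma^i_{\mu\nu}$. So for $n\geq1$ the tangent planes of De Donder solutions through $z$ form a large family; in the example, $j^1\phi$ has $\Gamma_{01}=\Gamma_{10}=1$ while $\mathbf{h}$ has $\Gamma_{\mu\nu}=0$. Vanishing of $\dd\alpha$ on one of these planes, $H_z$, says nothing about the others. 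State the result, as you propose, for sections tangent to $\mathbf{h}$.
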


The three kind of symmetries are defined as follow
\begin{definition}[Symmetries for the De Donder equations]\label{def:5.6}
We have the following definitions of symmetries for the De Donder equations
\begin{enumerate}
	\item $\xi_Y\in\mathfrak{X}(Y)$ is said to be a \textbf{symmetry of the Lagrangian},
	or a \textbf{variational symmetry} if $\pounds_{\xi_Y^{(1)}}\Theta_L = 0$,
	$\xi_Y^{(1)}$ is tangent to $B$ and verifies $\pounds_{\xi_Y^{(1)}|_B}\Pi=0$.
	\item $\xi_Y\in\mathfrak{X}(Y)$ is said to be a \textbf{Noether's symmetry}, or a
	\textbf{divergence symmetry} if $\pounds_{\xi_Y^{(1)}}\Theta_L = \dd\alpha$
	where $\alpha$ is the pullback to $Z$ of an $n$-form on $Y$ (that must be exact
	$\alpha=\dd\beta$ on $B$), $\xi_Y^{(1)}$ is tangent to $B$ and verifies 
	$\pounds_{\xi_Y^{(1)}|_B}\Pi = 0.$
	\item A vector field $\xi_Z$ on $Z$ is a \textbf{Cartan's symmetry} if $\pounds_{\xi_Z}\Theta_L=
	\dd\alpha$ where $\alpha$ is an $n$-form on $Z$ (that is exact $\alpha=\dd\beta$ on $B$).
	
	Equivalently, if there is an $n$-form $\alpha'$ such that $\iota_{\xi_Z}\Omega_L=
	\dd\alpha^\prime$
	(we can take for example $\alpha^\prime=\iota_{\xi_Z}\Theta_L-\alpha$). This
	formulation is equivalent to $\bm{\xi_Z}$ \textbf{being a Hamiltonian vector field}.
	
	We also need $\xi_Z$ being tangent to $B$ and verifying $\pounds_{\xi_Z|_B}\Pi = 0.$
\end{enumerate}
\begin{remark}
	For a Noether's symmetry, it can be computed locally that $\alpha(x,y,z)=
	\alpha^\mu(x,y)\dd^nx_{\mu}|_{(x,y,z)}$, while for a would-be Cartan's symmetry, we have the
	following computational check to see if it is a symmetry
	\[
		\iota_{\xi_Z}\Omega_L=\dd\alpha^\prime\implies \dd\iota_{\xi_Z}\Omega_L=\pounds_{\xi_Z}
		\Omega_L=0
	\]
	Of course the relation among the three symmetries is the same that we had for the E-L's
	equations symmetries. Then, as Cartan's symmetries also include the other two, we will only focus on them
\end{remark}
\end{definition}
\begin{theorem}[Noether's theorem for Cartan symmetries]
	If $\xi_Z\in\mathfrak{X}(Z)$ is a Cartan's symmetry for the De Donder equations
	with $\pounds_{\xi_z}\Theta_L=\dd\alpha$, then $\iota_{\xi_Z}\Theta_L-\dd\alpha$ is a
	conserved quantity for the De Donder equations that is exact on $B$.
\end{theorem}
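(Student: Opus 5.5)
The plan is to read the conserved form as $\alpha':=\iota_{\xi_Z}\Theta_L-\alpha$. This is the form already singled out in Definition \ref{def:5.6} iii), and it matches the earlier Noether theorems for E-L's equations.

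The literal expression $\iota_{\xi_Z}\Theta_L-\dd\alpha$ has the same differential as $\iota_{\xi_Z}\Theta_L$. Its differential is therefore $\dd\alpha+\iota_{\xi_Z}\Omega_L$ (computed below). The extra term $\sigma^*\dd\alpha$ need not vanish for a De Donder solution $\sigma$, so I expect that expression to be a typo for $\iota_{\xi_Z}\Theta_L-\alpha$, and I would prove the statement in that form.

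The first step is the conservation law. By Cartan's magic formula and $\Omega_L=-\dd\Theta_L$,
\[
\dd\iota_{\xi_Z}\Theta_L=\pounds_{\xi_Z}\Theta_L-\iota_{\xi_Z}\dd\Theta_L=\dd\alpha+\iota_{\xi_Z}\Omega_L ,
\]
so $\dd\alpha'=\iota_{\xi_Z}\Omega_L$, i.e. $\xi_Z$ is Hamiltonian for $\alpha'$ as remarked after Definition \ref{def:5.6}. Now let $\sigma\in\Gamma(\pi_{ZX})$ be any solution of the De Donder equations \eqref{eq:De-Donder-equations}. Taking $\xi=\xi_Z$ there gives directly
\[
\sigma^*\dd\alpha'=\sigma^*(\iota_{\xi_Z}\Omega_L)=0 ,
\]
which is precisely the definition of a conserved quantity for the De Donder equations. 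No holonomy or contact-ideal argument is needed here, in contrast with the E-L case. By Remark \ref{remark:preserved_quantities_plus_closed}, adding closed forms does not affect this conclusion.

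The second step, exactness on $B$, is the only point needing care. It uses the three boundary hypotheses: $i_B^*\Theta_L=\dd\Pi$, $i_B^*\alpha=\dd\beta$, and $\xi_Z$ tangent to $B$ with $\pounds_{\xi_Z|_B}\Pi=0$. Tangency lets the interior product commute with the pullback, $i_B^*\iota_{\xi_Z}\Theta_L=\iota_{\xi_Z|_B}i_B^*\Theta_L=\iota_{\xi_Z|_B}\dd\Pi$. Cartan's formula on $B$ then gives
\[
\iota_{\xi_Z|_B}\dd\Pi=\pounds_{\xi_Z|_B}\Pi-\dd\iota_{\xi_Z|_B}\Pi=-\dd\iota_{\xi_Z|_B}\Pi .
\]
Hence $i_B^*\alpha'=-\dd\big(\iota_{\xi_Z|_B}\Pi+\beta\big)$ is exact on $B$, which is the same computation as in Appendix \ref{app:some_symmetries_computations} for Lagrangian symmetries.

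The main obstacle is the sign and typo bookkeeping rather than any analytic difficulty. One must check that the sign conventions $\Omega_L=-\dd\Theta_L$ and $\pounds=\iota\dd+\dd\iota$ combine to give exactly $\dd\alpha'=\iota_{\xi_Z}\Omega_L$. One must also make explicit that the $\alpha$ subtracted in the statement is the one from $\pounds_{\xi_Z}\Theta_L=\dd\alpha$.
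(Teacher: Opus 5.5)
Your proof is correct and is the standard argument the paper defers to \cite{leonSymmetriesClassicalField2004}; its ingredients already appear in Definition~\ref{def:5.6} iii) and Appendix~\ref{app:some_symmetries_computations}. Cartan's formula gives $\dd(\iota_{\xi_Z}\Theta_L-\alpha)=\iota_{\xi_Z}\Omega_L$, the De Donder equations with $\xi=\xi_Z$ kill its pullback, and tangency to $B$ together with $\pounds_{\xi_Z|_B}\Pi=0$ and $i_B^*\alpha=\dd\beta$ give exactness on $B$.

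Your reading of $\iota_{\xi_Z}\Theta_L-\dd\alpha$ as a typo for $\iota_{\xi_Z}\Theta_L-\alpha$ is right, and it is even forced by degrees, since the literal expression subtracts an $(n+1)$-form from an $n$-form. The corrected form is precisely the $\alpha'$ of Definition~\ref{def:5.6} iii), and it matches the E-L and Hamiltonian versions of the theorem in the paper.
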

\begin{proof}
	See \cite{leonSymmetriesClassicalField2004}.
\end{proof}
\begin{proposition}
	\label{prop:cartan_flow_de_donder}
	The flow of Cartan's symmetries maps solutions of the De Donder equations into
	solutions of the De Donder equations.
\end{proposition}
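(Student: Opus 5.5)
Let $\xi_Z$ be a Cartan symmetry for the De Donder equations, with flow $F_t$. Let $\sigma\in\Gamma(\pi_{ZX})$ be a solution of \eqref{eq:De-Donder-equations}. The plan is to build the transformed section $\sigma_t$ explicitly and then check the De Donder equations for it by pulling everything back to $\sigma$. The only input needed from the symmetry is the remark after Definition~\ref{def:5.6}: $\iota_{\xi_Z}\Omega_L=\dd\alpha'$ implies $\pounds_{\xi_Z}\Omega_L=0$. Consequently $F_t^*\Omega_L=\Omega_L$ for all $t$ in the domain of the flow.

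\textbf{Step 1: defining the transformed section.} Since $\xi_Z$ need not be projectable, $F_t\circ\sigma$ is not a section of $\pi_{ZX}$ in general. Set $\varphi_t:=\pi_{ZX}\circ F_t\circ\sigma:X\to X$. At $t=0$ this is the identity, so for small $t$ (and on compact regions) it is a local diffeomorphism of $X$. Define
\[
\sigma_t:=F_t\circ\sigma\circ\varphi_t^{-1},
\]
which satisfies $\pi_{ZX}\circ\sigma_t=\mathrm{id}$, so it is a (local) section. Because $\xi_Z$ is tangent to $B$, $F_t$ preserves $B$, and so the boundary condition $\sigma_t(\partial X)\subseteq B$ is inherited provided $\varphi_t$ preserves $\partial X$. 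I expect the main technical obstacle to lie in this step: controlling the domain of $\varphi_t^{-1}$ and the boundary behaviour. I would handle it by working locally, for small $t$, on compact subsets, and by using tangency of $\xi_Z$ to $B\subseteq\partial Z$ to get that $\varphi_t$ maps $\partial X$ to itself.

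\textbf{Step 2: verifying the De Donder equations.} Take any $\zeta\in\mathfrak{X}(Z)$. Then
\[
\sigma_t^*(\iota_\zeta\Omega_L)=(\varphi_t^{-1})^*\sigma^*F_t^*(\iota_\zeta\Omega_L).
\]
By naturality, $F_t^*(\iota_\zeta\Omega_L)=\iota_{F_t^*\zeta}F_t^*\Omega_L=\iota_{(F_{-t})_*\zeta}\Omega_L$, where the last equality uses the invariance of $\Omega_L$ from Step~0. Since $(F_{-t})_*\zeta$ is again a vector field on $Z$ and $\sigma$ solves \eqref{eq:De-Donder-equations} for every vector field, we get $\sigma^*(\iota_{(F_{-t})_*\zeta}\Omega_L)=0$. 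Pulling back by the diffeomorphism $\varphi_t^{-1}$ preserves this vanishing, so $\sigma_t^*(\iota_\zeta\Omega_L)=0$ for all $\zeta$. Hence $\sigma_t$ is a solution of the De Donder equations.

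If one prefers the geometric formulation, the same argument works by pushing the connection $\mathbf h$ forward by $F_t$. Invariance of $\Omega_L$ preserves \eqref{eq:geometric-De-Donder}, and $F_t$ maps horizontal integral sections to horizontal integral sections. I would present this only as a remark, since the section-level argument above already proves the proposition.
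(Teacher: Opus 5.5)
Your proof is correct. The thesis gives no argument of its own for this proposition (it only refers to \cite{leonSymmetriesClassicalField2004}), so there is no in-text route to compare against. What you give is a self-contained proof along the natural lines: $\pounds_{\xi_Z}\Omega_L=-\dd\pounds_{\xi_Z}\Theta_L=-\dd\dd\alpha=0$ gives $F_t^*\Omega_L=\Omega_L$, and naturality of the interior product then transfers the De Donder equations from $\sigma$ to $\sigma_t$.

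It is worth saying explicitly why the reparametrization by $\varphi_t$ costs nothing. The condition $\sigma^*(\iota_\zeta\Omega_L)=0$ for all $\zeta$ says exactly that $\iota_v\Omega_L$ vanishes on $T_{\sigma(x)}\sigma(X)$ for every $v\in T_{\sigma(x)}Z$. It is therefore a condition on the image submanifold, pointwise in $\zeta$, which also justifies using the possibly only locally defined fields $(F_{-t})_*\zeta$. Your computation shows that $F_t$ maps such submanifolds to such submanifolds. The only thing that can fail is transversality of $F_t(\sigma(X))$ to $\mathcal{V}\pi_{ZX}$. That is an open condition, so your restriction to small $t$ on compact regions is genuinely needed when $\xi_Z$ is not $\pi_{ZX}$-projectable, not a mere convenience.

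Your argument uses only $\pounds_{\xi_Z}\Omega_L=0$ and tangency to $B$. It therefore proves the statement for any vector field preserving $\Omega_L$ and tangent to $B$. Exactness of $\iota_{\xi_Z}\Omega_L$ and the condition on $\Pi$ are needed for the Noether theorem, not for this proposition.

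Finally, the ``Step~0'' you invoke in Step~2 is your opening paragraph; label it accordingly.
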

\begin{proof}
	See \cite{leonSymmetriesClassicalField2004}.
\end{proof}
\subsection*{The singular Lagrangian case}
Considering the constraint algorithm of Subsection \ref{subsect:the_singular_case}, we can
define symmetries for the final constraint submanifold $Z_f\subset Z$. We first note that
the following result holds for diffeomorphisms $\Psi:Z\to Z$ which preserve the
Poincaré-Cartan $(n+2)-$form and are $\pi_{ZX}$ projectable (namely ther is a unique
$\psi:X\to X$ diffeomorphism such that $\pi_{ZX}\circ\Psi=\psi\circ\pi_{ZX}$)
\begin{proposition}[\cite{leonSymmetriesClassicalField2004}]
	If any $\Psi$ as above satisfies $\Psi(B)\subseteq B$ then it restricts to
	a diffeomorphism $\Psi_a:Z_a\to Z_a$ for any $a\in\NN$ step in the constraint algorithm
	(hence also for the final step).
\end{proposition}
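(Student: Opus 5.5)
Write $\Psi^*\Omega_L=\Omega_L$ for the preservation hypothesis. The plan is an induction on the step $a$ of the constraint algorithm of Subsection \ref{subsect:the_singular_case}. The inductive claim is: $\Psi(Z_a)=Z_a$, and $\Psi(B_a)=B_a$ with $B_a:=B\cap Z_a$. Since $\Psi^{-1}$ satisfies the same hypotheses, it suffices to prove the inclusions $\Psi(Z_a)\subseteq Z_a$ and $\Psi(B_a)\subseteq B_a$; applying them to $\Psi^{-1}$ gives equality, so $\Psi_a:=\Psi|_{Z_a}$ is a diffeomorphism of $Z_a$ (smoothness comes from $Z_a$ being embedded). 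For the boundary I will first upgrade $\Psi(B)\subseteq B$ to $\Psi(B)=B$. For $\Psi^{-1}$ I will use that $\Psi$ maps $\partial Z=\pi_{ZX}^{-1}(\partial X)$ to itself, because $\psi$ preserves $\partial X$. If needed I will simply read the hypothesis as $\Psi(B)=B$, which is the natural meaning.

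The base case $Z_1=Z$ is trivial. For the inductive step, assume $\Psi$ preserves $Z_a$ and $B_a$, and let $z\in Z_{a+1}$ with witnessing projector $\textbf{h}_z:T_zZ\to T_zZ_a$. Define the transported projector at $\Psi(z)$ by
\[
\textbf{h}'_{\Psi(z)}:=T_z\Psi\circ\textbf{h}_z\circ(T_z\Psi)^{-1}.
\]
I will check the defining conditions in turn.
\begin{enumerate}
\item $\textbf{h}'$ is idempotent, being conjugate to an idempotent.
\item Its image lies in $T_{\Psi(z)}Z_a$, because $\Psi(Z_a)=Z_a$ and so $T\Psi(TZ_a)=TZ_a$.
\item Its kernel is $T\Psi(\mathcal{V}_z\pi_{ZX})=\mathcal{V}_{\Psi(z)}\pi_{ZX}$. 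This uses projectability: $\pi_{ZX}\circ\Psi=\psi\circ\pi_{ZX}$ with $\psi$ a diffeomorphism, so $T\Psi$ maps $\pi_{ZX}$-vertical vectors bijectively onto vertical vectors.
\item The geometric De Donder condition transports by naturality of the interior product \eqref{eq:interior_product_of_h} under diffeomorphisms:
\[
\Psi^*\big(\iota_{\textbf{h}'}\Omega_L\big)=\iota_{\textbf{h}}\Psi^*\Omega_L=\iota_{\textbf{h}}\Omega_L=n\Omega_L=n\Psi^*\Omega_L
\]
at $z$. Since $\Psi^*$ is injective on forms at a point, $\iota_{\textbf{h}'}\Omega_L=n\Omega_L$ at $\Psi(z)$.
\item The boundary tangency condition, $\textbf{h}(TB_a)\subseteq TB_a$, transports to $\textbf{h}'$ in the same way, using $\Psi(B_a)=B_a$.
\end{enumerate}
Hence $\Psi(z)\in Z_{a+1}$, and then $\Psi(B_{a+1})=\Psi(B\cap Z_{a+1})\subseteq B_{a+1}$.

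The main obstacle is bookkeeping, not analysis. First, the boundary clause in the definition of $Z_{a+1}$ is a condition at every point $\tilde z\in B_a$, not just at $z$. I will treat it as a condition on a (local) connection and conjugate the whole field of projectors by $\Psi$, so that points of $B_a$ are sent to points of $B_a$. Second, the final stable step: $Z_{f+1}=Z_f$ means $\Psi$ restricts to $Z_f$ by the $a=f$ case. It would additionally be worth noting that a connection $\Gamma$ on $Z_f$ solving the equation is sent by $\Psi$ to another such connection; the same conjugation argument gives this.
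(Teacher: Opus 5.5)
Your transport argument is the natural proof. The paper gives no proof of its own and only cites the reference. Your pointwise checks are right: $\mathbf{h}'=T\Psi\circ\mathbf{h}\circ(T\Psi)^{-1}$ is idempotent, its image lies in $TZ_a$ once $\Psi(Z_a)=Z_a$, its kernel is vertical because $\Psi$ covers the diffeomorphism $\psi$, and naturality of $\iota_{\mathbf{h}}$ together with $\Psi^*\Omega_L=\Omega_L$ transports the De Donder condition.

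The gap is at the boundary. Your upgrade from $\Psi(B)\subseteq B$ to $\Psi(B)=B$ does not work: $\Psi(\partial Z)=\partial Z$ tells you that $\Psi^{-1}$ preserves $\partial Z$, not that it preserves $B$. You use the equality essentially, in two places:
\begin{itemize}
\item to apply the inclusion argument to $\Psi^{-1}$, which satisfies the hypotheses only if $\Psi^{-1}(B)\subseteq B$;
\item in your fifth check, because a point $z\in\partial Z\setminus B$, where no tangency is required of $\mathbf{h}_z$, may be sent into $B$, where tangency is required at $\Psi(z)$.
\end{itemize}

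With the inclusion alone the statement genuinely fails. Take the free scalar $L=\tfrac12\eta^{\mu\nu}z_\mu z_\nu$ on $X=\{x^0\geq 0\}$ and $\Psi(x,y,z_0,z_a)=(x,\,y+cx^0,\,z_0+c,\,z_a)$ with $c>0$.
\begin{itemize}
\item $\Psi$ covers $\mathrm{id}_X$ and preserves $\theta=\dd y-z_\mu\dd x^\mu$ and $\dd p^\mu$, hence also $\Omega_L=\theta\wedge\dd p^\mu\wedge\dd^n x_\mu$.
\item It maps $B=\{x^0=0,\ y=0,\ z_0>0\}$ onto $\{z_0>c\}\subsetneq B$.
\item For this regular $L$ the geometric De Donder equation forces $\mathbf{h}(\partial/\partial x^\mu)$ to have $\partial/\partial y$-component $z_\mu$. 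So at a point of $B$, tangency of $\mathbf{h}(\partial/\partial x^a)$ to $B$ forces $z_a=0$ for $a=1,\dots,n$.
\item Hence a boundary point with $y=0$, $z_0\in(-c,0]$, $z_1\neq 0$ lies in $Z_2$ (no boundary clause applies there and $L$ is regular), while its image lies in $B$ with $z_1\neq0$ and is not in $Z_2$.
\end{itemize}

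You must therefore either take $\Psi(B)=B$ as the hypothesis, which is what your proof actually establishes, or deduce it from extra assumptions on $B$. For instance, suppose $B$ is closed in $Z$ with finitely many connected components. Then $\Psi|_B\colon B\to B$ is an injective immersion between manifolds of equal dimension, hence an open map. For each component $C$, the set $\Psi(C)$ is therefore open in $B$, and it is also closed, since $\Psi$ is a homeomorphism of $Z$. So $\Psi(C)$ is a whole component. Injectivity and finiteness then give $\Psi(B)=B$, and with that in place the rest of your induction goes through as written.
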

As the flow generated by a vector field $\xi_Z\in\mathfrak{X}(Z)$ is a diffeomorphism,
the following corollary is straightforward
\begin{corollary}
	Let $\xi_Z\in\mathfrak{X}(Z)$ a $\pi_{ZX}-$projectable vector field that is tangent to B
	and such that $\pounds_{\xi_Z}\Omega_L=0$, then $\xi_Z$ is tangent to $Z_f$. In particular,
	$\bm{\pi_{ZX}-}$\textbf{projectable Cartan's symmetries are tangent to} $\bm{Z_f}$.
\end{corollary}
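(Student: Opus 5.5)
The plan is to realize the flow of $\xi_Z$ as a one-parameter family of diffeomorphisms and feed it to the preceding proposition. Let $\Psi_t$ denote the (local) flow of $\xi_Z$. Three properties of $\Psi_t$ are needed.

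\textbf{Projectability.} Since $\xi_Z$ is $\pi_{ZX}$-projectable onto some $\xi_X\in\mathfrak{X}(X)$, its flow satisfies $\pi_{ZX}\circ\Psi_t=\psi_t\circ\pi_{ZX}$, where $\psi_t$ is the flow of $\xi_X$. This is the standard fact that projectable vector fields have fibred flows: $\pi_{ZX}$ maps integral curves of $\xi_Z$ to integral curves of $\xi_X$.

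\textbf{Preservation of $\Omega_L$.} From $\pounds_{\xi_Z}\Omega_L=0$ we get $\der{}{t}\Psi_t^*\Omega_L=\Psi_t^*\pounds_{\xi_Z}\Omega_L=0$. Hence $\Psi_t^*\Omega_L=\Omega_L$ for every $t$.

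\textbf{Preservation of $B$.} Tangency of $\xi_Z$ to $B$ means $\xi_Z|_B$ is a vector field on $B$. By uniqueness of integral curves, the integral curves starting in $B$ stay in $B$, so $\Psi_t(B)\subseteq B$.

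With these three properties, the preceding proposition applies to each $\Psi_t$: it restricts to a diffeomorphism of every constraint submanifold $Z_a$, in particular $\Psi_t(Z_f)\subseteq Z_f$. For $z\in Z_f$, the curve $t\mapsto\Psi_t(z)$ therefore lies in $Z_f$. Because $Z_f$ is an embedded submanifold, its velocity at $t=0$, namely $\xi_Z(z)$, belongs to $T_zZ_f$. This is exactly tangency of $\xi_Z$ to $Z_f$.

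For the final claim about Cartan symmetries, recall from the Remark after Definition \ref{def:5.6} that a Cartan symmetry satisfies $\iota_{\xi_Z}\Omega_L=\dd\alpha'$. Hence $\pounds_{\xi_Z}\Omega_L=\dd\iota_{\xi_Z}\Omega_L=\dd\dd\alpha'=0$, using $\dd\Omega_L=0$. By definition, a Cartan symmetry is also tangent to $B$. So a $\pi_{ZX}$-projectable Cartan symmetry satisfies all the hypotheses of the first part and is tangent to $Z_f$.

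The main obstacle is that the flow may only be local, since $\xi_Z$ need not be complete, whereas the proposition is stated for global diffeomorphisms. The first thing I would try is to observe that the algorithm's definition of each $Z_a$ is pointwise and natural under local fibred diffeomorphisms preserving $\Omega_L$ and $B$. The proposition's argument should then go through verbatim for the local flow maps $\Psi_t:U\to\Psi_t(U)$, which is all the tangency argument uses. A smaller point is that the sign convention in Cartan's formula ($\pounds_X=\iota_X\dd-\dd\iota_X$ as written in the paper, versus the standard $\pounds_X=\dd\iota_X+\iota_X\dd$) does not matter here, because $\dd\Omega_L=0$ and $\dd\dd\alpha'=0$ in either case.
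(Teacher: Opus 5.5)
Your proposal is correct and follows the paper's argument. The paper simply applies the preceding proposition to the flow of $\xi_Z$ and calls the result ``straightforward''; you supply the omitted details, namely that the flow is fibred, that $\Psi_t^*\Omega_L=\Omega_L$ and $\Psi_t(B)\subseteq B$, that differentiating at $t=0$ gives tangency because $Z_f$ is embedded, and that $\pounds_{\xi_Z}\Omega_L=\dd\dd\alpha'=0$ for De Donder Cartan symmetries. Your remark that $\xi_Z$ need not be complete is a point the paper glosses over by treating the flow as a global diffeomorphism. Your fix is sound: each $Z_a$ is cut out by pointwise conditions that $T\Psi_t$ intertwines, so the proposition's argument works for the local flow maps.
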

Considering the inclusion $i_f:Z_f\to Z$, we denote $\Theta_{M_f}\equiv i_f^*(\Theta_L)$, and
$\Omega_{M_f}\equiv i_f^*(\Omega_L)$; we then define Cartan symmetries for our final constraint
submanifold as follows
\begin{definition}[Cartan's symmetries in $Z_f$]
	We say that $\xi_{Z_f}\in\mathfrak{X}(Z_f)$ is a \textbf{Cartan's symmetry} for
	$\bm{(Z_f,\Omega_{Z_f})}$ if it is tangent to $B\cap Z_f$, and there exists $\alpha
	\in\Omega^n(Z_f)$ such that $\pounds_{\xi_{Z_f}}\Theta_{Z_f}=\dd\alpha_{Z_f}$.	
\end{definition}
\begin{remark}
	Note that a $\pi_{ZX}-$projectable Cartan's symmetry for the De Donder equations
	restricted to $Z_f$ is a Cartan symmetry for $(Z_f,\Omega_{Z_f})$.
\end{remark}


\section{Symmetries for the Hamilton's equations}
\label{sect:symmetries_for_hamilton_equations}
Everything we have done so far lives on the Lagrangian side, but in Section
\ref{sect:hamiltonian_setting} we built a whole Hamiltonian counterpart on $Z^*$ with
its own canonical forms $\Theta_h,\Omega_h$ and its own equations
\eqref{eq:Hamilton-equations}. The equivalence Theorem \ref{thm:equivalence_theorem}
tells us that, for a regular Lagrangian, both descriptions carry exactly the same
dynamics, so we should expect them to carry exactly the same symmetries as well. This
subsection makes that expectation precise.
 
There is one technical point that deserves attention before writing any definition.
Cartan's symmetries were the ones living directly on the total space of the theory and
they will keep doing so (now on $Z^*$), but Noether's symmetries were vector fields on
$Y$, and in the Lagrangian setting we could canonically push them up to $Z$ through the
$1-$jet prolongation $\xi_Y\mapsto\xi_Y^{(1)}$. On the Hamiltonian side there is no jet
prolongation available, and the canonical replacement is precisely the $\alpha-$\textbf{lift}
of Definition \ref{def:alpha_lift}: recall that, given $\xi_Y\in\mathfrak{X}(Y)$ and
$\alpha$ the pullback to $\Omega^{n+1}_2Y$ of a semibasic $n-$form on $Y$, the
$\alpha-$lift $\xi_Y^\alpha$ is the unique vector field on $\Omega^{n+1}_2Y$ projecting
onto $\xi_Y$ and satisfying $\pounds_{\xi_Y^\alpha}\Theta=\dd\alpha$, and that it descends
to a vector field $\xi_Y^{(1*)}\in\mathfrak{X}(Z^*)$ whenever $\xi_Y$ is
$\pi-$projectable. Note that the defining condition of the $\alpha-$lift is already
\textit{half} of a Cartan condition, but written upstairs with the \textbf{canonical}
$(n+1)-$form $\Theta$; what the definition below adds is the compatibility with the
particular Hamiltonian $h$ that we have chosen.
\begin{definition}[Symmetries for the Hamilton's equations]
\label{def:symmetries_for_hamilton_equations}
Given a Hamiltonian $h$, we have the following definitions of symmetries for the
Hamilton's equations
\begin{enumerate}
	\item A vector field $\xi_Y\in\mathfrak{X}(Y)$ is said to be a \textbf{Noether's
	symmetry}, or a \textbf{divergence symmetry}, if there exists a $\pi-$semibasic
	$n-$form on $Y$ whose pullback $\alpha$ to $\Omega^{n+1}_2Y$ is exact on $B^*$
	(i.e. $i^*_{B^*}\alpha=\dd\beta$ for some $\beta\in\Omega^{n-1}(B^*)$) and such that
	\begin{enumerate}
		\item the $\alpha-$lift $\xi_Y^\alpha$ of $\xi_Y$ to $\Omega^{n+1}_2Y$ is
		projectable to a vector field $\xi_Y^{(1*)}\in\mathfrak{X}(Z^*)$,
		\item $\pounds_{\xi_Y^{(1*)}}\Theta_h=\dd\alpha$, $\xi_Y^{(1*)}$ is tangent to
		$B^*$ and verifies $\pounds_{\xi_Y^{(1*)}|_{B^*}}\Pi^*=0$.
		Note we are \textbf{abusing notation} by using both $\alpha$ to denote the pullback
		of the original form on $Y$ both to $\Omega^{n+1}_2Y$ and $Z^*$.
	\end{enumerate}
	\item A vector field $\xi_{Z^*}\in\mathfrak{X}(Z^*)$ is said to be a
	\textbf{Cartan's symmetry} if
	\[
		\pounds_{\xi_{Z^*}}\Theta_h=\dd\alpha
	\]
	for some $\alpha\in\Omega^n(Z^*)$ which is exact on $B^*$, $\xi_{Z^*}$ is tangent
	to $B^*$ and verifies $\pounds_{\xi_{Z^*}|_{B^*}}\Pi^*=0$.
\end{enumerate}
\end{definition}
As it was the case for the De Donder equations, Noether's symmetries induce Cartan's
symmetries on $Z^*$, so once again \textbf{we may
concentrate on Cartan's symmetries}, which are the most general ones (conversely, 
if we have a Cartan's symmetry $\xi_{Z^*}\in\mathfrak{X}(Z^*)$ and there is a vector
field $\xi_Y\in\mathfrak{X}(Y)$ such that $\xi_{Z^*}$ comes from the $\alpha$-lift of
$\xi_Y$ as before, then $\xi_Y$ is a Noether's symmetry).
\begin{remark}
	\label{rem:cartan_hamiltonian_check}
	Exactly as in Definition \ref{def:5.6} (because $\Omega_h=-\dd\Theta_h$), Cartan's
	condition can be rewritten in the equivalent (and much more computable) form
	\[
		\iota_{\xi_{Z^*}}\Omega_h=\dd\alpha^\prime,\qquad
		\alpha^\prime:=\iota_{\xi_{Z^*}}\Theta_h-\alpha,
	\]
	i.e. \textbf{Cartan's symmetries are exactly the Hamiltonian vector fields of}
	$\bm{\Omega_h}$ (which are tangent to $B^*$ and preserve $\Pi^*$ there). In
	particular, a quick necessary check for a would-be Cartan's symmetry is
	\[
		\pounds_{\xi_{Z^*}}\Omega_h=\dd\iota_{\xi_{Z^*}}\Omega_h=0 .
	\]
\end{remark}
For actual computations, and recalling the local expression
$\Theta_h=(-H\,dx^\mu+p^\mu_i dy^i)\wedge \dd^nx_\mu$ of Section
\ref{sect:hamiltonian_setting}, a completely analogous computation to the one leading to
\eqref{eq:useful_local_expression_for_the_preserved_quantity} gives, for
$\xi_{Z^*}=\xi^\mu\partial_{x^\mu}+\xi^i\partial_{y^i}+\xi^\mu_i\partial_{p^\mu_i}$,
\begin{equation}
	\label{eq:useful_local_expression_hamiltonian_current}
	\iota_{\xi_{Z^*}}\Theta_h=\left(-H\xi^\nu+p^\nu_i\xi^i\right)\dd^nx_\nu
	-p^\mu_i\xi^\nu dy^i\wedge \dd^{n-1}x_{\mu\nu},
\end{equation}
which is exactly \eqref{eq:useful_local_expression_for_the_preserved_quantity} after
the substitutions $p^\mu_i=\partial L/\partial z^i_\mu$ and
$-H=p=L-z^i_\mu\partial L/\partial z^i_\mu$ dictated by the Legendre transformation
(Definition \ref{def:legendre_transformation}); this is of course not a coincidence, and it is
exactly the content of the next subsection. Note that the $p^\mu_i$ components
$\xi^\mu_i$ of the symmetry ($\xi_{Z^*}$) do not appear in the current, only in the condition defining
it.
 
Finally we have the two results that make all of this worth the effort, whose proofs are
completely analogous to their De Donder counterparts
\begin{theorem}[Noether's theorem for Cartan's symmetries of the Hamilton's equations]
	\label{thm:noether_theorem_hamilton}
	If $\xi_{Z^*}\in\mathfrak{X}(Z^*)$ is a Cartan's symmetry with
	$\pounds_{\xi_{Z^*}}\Theta_h=\dd\alpha$, then
	\[
		\sigma^*\dd\left(\iota_{\xi_{Z^*}}\Theta_h-\alpha\right)=0
	\]
	for every solution $\sigma\in\Gamma(\pi_{Z^*X})$ of the Hamilton's equations
	\eqref{eq:Hamilton-equations}, i.e. $\iota_{\xi_{Z^*}}\Theta_h-\alpha$ is a
	\textbf{conserved quantity for the Hamilton's equations}, which is moreover exact
	on $B^*$.
\end{theorem}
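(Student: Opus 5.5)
The plan is to use the Hamiltonian reformulation of Cartan's condition in Remark \ref{rem:cartan_hamiltonian_check}. Since $\Omega_h=-\dd\Theta_h$, Cartan's formula gives
\[
\pounds_{\xi_{Z^*}}\Theta_h=\iota_{\xi_{Z^*}}\dd\Theta_h+\dd\iota_{\xi_{Z^*}}\Theta_h=-\iota_{\xi_{Z^*}}\Omega_h+\dd\iota_{\xi_{Z^*}}\Theta_h .
\]
Imposing $\pounds_{\xi_{Z^*}}\Theta_h=\dd\alpha$ then yields
\[
\dd\left(\iota_{\xi_{Z^*}}\Theta_h-\alpha\right)=\iota_{\xi_{Z^*}}\Omega_h .
\]
This reduces the conservation statement to showing that $\sigma^*(\iota_{\xi_{Z^*}}\Omega_h)=0$ for every solution $\sigma\in\Gamma(\pi_{Z^*X})$ of Hamilton's equations. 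That holds directly by \eqref{eq:Hamilton-equations}, applied with the particular vector field $\xi=\xi_{Z^*}\in\mathfrak{X}(Z^*)$, because Hamilton's equations require $\sigma^*(\iota_\xi\Omega_h)=0$ for every $\xi$. Pullback commutes with $\dd$, so $\dd(\sigma^*(\iota_{\xi_{Z^*}}\Theta_h-\alpha))=0$, which is the conservation claim. This step is essentially formal and is the exact analogue of the De Donder argument.

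The only step needing care is exactness on $B^*$, and I expect it to be the main (mild) obstacle, because of the sign conventions and the boundary data $\Pi^*$. The boundary conditions give $i_{B^*}^*\Theta_h=\dd\Pi^*$, in parallel with $i_B^*\Theta_L=\dd\Pi$ on the Lagrangian side. Since $\xi_{Z^*}$ is tangent to $B^*$, pullback commutes with contraction:
\[
i_{B^*}^*\iota_{\xi_{Z^*}}\Theta_h=\iota_{\xi_{Z^*}|_{B^*}}\dd\Pi^*=\pounds_{\xi_{Z^*}|_{B^*}}\Pi^*-\dd\iota_{\xi_{Z^*}|_{B^*}}\Pi^* .
\]
The hypothesis $\pounds_{\xi_{Z^*}|_{B^*}}\Pi^*=0$ kills the first term, leaving the exact form $-\dd(\iota_{\xi_{Z^*}|_{B^*}}\Pi^*)$. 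By definition of a Cartan symmetry we also have $i_{B^*}^*\alpha=\dd\beta$. Combining the two,
\[
i_{B^*}^*\left(\iota_{\xi_{Z^*}}\Theta_h-\alpha\right)=\dd\left(-\iota_{\xi_{Z^*}|_{B^*}}\Pi^*-\beta\right),
\]
which shows exactness on $B^*$. This mirrors the Lagrangian computation referred to in Appendix \ref{app:some_symmetries_computations}. The one point to double-check is that tangency of $\xi_{Z^*}$ to $B^*$ is what justifies $i_{B^*}^*\iota_{\xi_{Z^*}}=\iota_{\xi_{Z^*}|_{B^*}}i_{B^*}^*$, with signs consistent with the convention $\pounds=\iota\dd+\dd\iota$.
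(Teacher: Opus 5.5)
Your proof is correct and follows essentially the argument the paper relies on. The paper only cites \cite{leonSymmetriesClassicalField2004}, but the same steps appear in Remark \ref{rem:cartan_hamiltonian_check}, Proposition \ref{prop:dJ_equals_iota_omega} and Appendix \ref{app:some_symmetries_computations}: Cartan's formula turns $\pounds_{\xi_{Z^*}}\Theta_h=\dd\alpha$ into $\dd(\iota_{\xi_{Z^*}}\Theta_h-\alpha)=\iota_{\xi_{Z^*}}\Omega_h$, whose pullback by $\sigma$ vanishes by Hamilton's equations, and exactness on $B^*$ follows from tangency, $\pounds_{\xi_{Z^*}|_{B^*}}\Pi^*=0$ and $i_{B^*}^*\alpha=\dd\beta$, exactly as you wrote.
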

\begin{proof}
	See \cite{leonSymmetriesClassicalField2004}.
\end{proof}
\begin{proposition}
	\label{prop:cartan_flow_hamilton}
	The flow of Cartan's symmetries maps solutions of the Hamilton's equations into
	solutions of the Hamilton's equations.
\end{proposition}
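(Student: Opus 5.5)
The plan is to copy the argument for the De Donder counterpart (Proposition \ref{prop:cartan_flow_de_donder}), with $\Omega_L$ replaced by $\Omega_h$. Let $\xi_{Z^*}$ be a Cartan's symmetry with flow $\Psi_t$ on $Z^*$. By Remark \ref{rem:cartan_hamiltonian_check} we have $\iota_{\xi_{Z^*}}\Omega_h=\dd\alpha'$, hence $\pounds_{\xi_{Z^*}}\Omega_h=\dd\iota_{\xi_{Z^*}}\Omega_h=0$. Therefore $\Psi_t^*\Omega_h=\Omega_h$ wherever the flow is defined.

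\textbf{Step 1: turn the flow into an action on sections.} A general flow on $Z^*$ does not send sections of $\pi_{Z^*X}$ to sections. I will therefore first assume that $\xi_{Z^*}$ is $\pi_{Z^*X}$-projectable onto some $\xi_X$ with flow $\psi_t$. This holds for all Noether symmetries, through the $\alpha$-lift of a projectable $\xi_Y$. In this case I define $\sigma_t:=\Psi_t\circ\sigma\circ\psi_{-t}$, which is again a (local) section.

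For a non-projectable $\xi_{Z^*}$ I would follow the De Donder reference. The image $\Psi_t(\sigma(X))$ is an $(n+1)$-dimensional submanifold, and for small $t$ it stays transverse to the fibres, since transversality is an open condition. So it is locally the image of a unique section $\sigma_t$. This step is the \textbf{main obstacle}: only local-in-time and local-in-space statements are available, and I would state the result in that sense.

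\textbf{Step 2: verify Hamilton's equations.} I would use the geometric reformulation of \eqref{eq:Hamilton-equations}: $\sigma$ is a solution iff $\iota_V\Omega_h$ pulls back to zero on the submanifold $\sigma(X)$ for all $V$. Equivalently, $\Omega_h(V,w_0,\dots,w_n)=0$ for every $V\in T_zZ^*$ and every basis $w_0,\dots,w_n$ of $T_z\sigma(X)$. This condition depends only on the image $\sigma(X)$ and not on the parametrisation, which is exactly why the reparametrisation by $\psi_{-t}$ (or the implicit reparametrisation in the non-projectable case) does no harm.

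Take $z'=\Psi_t(z)$. Then $T_{z'}\sigma_t(X)=T\Psi_t(T_z\sigma(X))$, and any $V'\in T_{z'}Z^*$ can be written as $T\Psi_t(V)$ because $\Psi_t$ is a diffeomorphism. Invariance of $\Omega_h$ gives
\[
\Omega_h(T\Psi_t V,T\Psi_t w_0,\dots,T\Psi_t w_n)=(\Psi_t^*\Omega_h)(V,w_0,\dots,w_n)=\Omega_h(V,w_0,\dots,w_n)=0 .
\]
Hence $\sigma_t$ satisfies Hamilton's equations.

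\textbf{Step 3: boundary conditions.} Since $\xi_{Z^*}$ is tangent to $B^*$, the flow $\Psi_t$ preserves $B^*$. So if $\sigma(\partial X)\subseteq B^*$, then $\sigma_t(\partial X)\subseteq B^*$, because $\psi_t$ preserves $\partial X$ when $\xi_X$ is tangent to it. The condition $\pounds_{\xi_{Z^*}|_{B^*}}\Pi^*=0$ ensures that the boundary data encoded by $\Pi^*$ are also preserved. Alternatively, in the regular case one can transport the whole statement through Theorem \ref{thm:equivalence_theorem} and Proposition \ref{prop:cartan_flow_de_donder} by pushing the symmetry forward with $\mathrm{leg}_L$, using $(\mathrm{leg}_L)^*\Omega_h=\Omega_L$.
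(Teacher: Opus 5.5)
Your proposal is correct and follows essentially the paper's route. The paper's proof only says the argument is identical to the De Donder case (Proposition \ref{prop:cartan_flow_de_donder}, deferring to \cite{leonSymmetriesClassicalField2004}), and you have written that argument out with $\Omega_h$ in place of $\Omega_L$: $\pounds_{\xi_{Z^*}}\Omega_h=\dd\iota_{\xi_{Z^*}}\Omega_h=0$ gives $\Psi_t^*\Omega_h=\Omega_h$, which combined with the fact that Hamilton's equations depend only on the image $\sigma(X)$ yields the result; your treatment of the projectable versus non-projectable cases and of $\sigma(\partial X)\subseteq B^*$ adds detail the paper leaves implicit (tangency of $\xi_X$ to $\partial X$ in fact follows automatically from tangency of $\xi_{Z^*}$ to $B^*$).
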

\begin{proof}
	Identical to the one of Proposition \ref{prop:cartan_flow_de_donder} for the De
	Donder equations, see \cite{leonSymmetriesClassicalField2004}.
\end{proof}
\subsection{The Legendre map and the symmetries}
We now make precise the correspondence we have been hinting. \textbf{Throughout this
subsection we assume that $L$ is hyperregular}, so that $\mathrm{leg}_L:Z\to Z^*$ is a
global diffeomorphism (Definition \ref{def:legendre_map}) and vector fields can be
pushed forward through it; boundary conditions on both sides are assumed to be
compatible, i.e. $\mathrm{leg}_L(B)=B^*$ and $(\mathrm{leg}_L)^*\Pi^*=\Pi$, as we
demanded in Section \ref{subsect:almost_regular_lagrangians_and_the_constraint_algorithm}.
The proofs are out of the scope of this thesis (and they are straightforward), so we will
only state them (for proofs we refer the reader to \cite{leonSymmetriesClassicalField2004}).
\begin{proposition}
	\label{prop:legendre_cartan_correspondence}
	If $\xi_Z\in\mathfrak{X}(Z)$ is a Cartan's symmetry for the De Donder equations,
	then $\dd\,\mathrm{leg}_L(\xi_Z)$ is a Cartan's symmetry for the Hamilton's equations.
	The converse is also true.
\end{proposition}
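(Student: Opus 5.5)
The plan is to transport everything through the global diffeomorphism $\mathrm{leg}_L$ and use naturality of Lie derivatives under pullback. Write $\xi_{Z^*}:=\dd\,\mathrm{leg}_L(\xi_Z)$, understood as the pushforward vector field $(\mathrm{leg}_L)_*\xi_Z$. This is well defined because $L$ is hyperregular. It is characterized by $\mathrm{leg}_L$-relatedness, i.e. $T\mathrm{leg}_L\circ\xi_Z=\xi_{Z^*}\circ\mathrm{leg}_L$.

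The key identity is $(\mathrm{leg}_L)^*\Theta_h=\Theta_L$ from the proposition following Definition \ref{def:legendre_map}. For related vector fields, pullback commutes with Lie derivatives, so
\[
(\mathrm{leg}_L)^*\big(\pounds_{\xi_{Z^*}}\Theta_h\big)=\pounds_{\xi_Z}\big((\mathrm{leg}_L)^*\Theta_h\big)=\pounds_{\xi_Z}\Theta_L .
\]
Now suppose $\xi_Z$ is a Cartan symmetry for the De Donder equations, so $\pounds_{\xi_Z}\Theta_L=\dd\alpha$. Set $\alpha^*:=(\mathrm{leg}_L^{-1})^*\alpha\in\Omega^n(Z^*)$. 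Since $\dd$ commutes with pullbacks, this gives $\pounds_{\xi_{Z^*}}\Theta_h=(\mathrm{leg}_L^{-1})^*\dd\alpha=\dd\alpha^*$.

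Next I check the boundary requirements one by one, using the compatibility assumptions $\mathrm{leg}_L(B)=B^*$ and $(\mathrm{leg}_L)^*\Pi^*=\Pi$.

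Exactness of $\alpha^*$ on $B^*$: the restriction $\mathrm{leg}_L|_B:B\to B^*$ is a diffeomorphism, so $i_{B^*}^*\alpha^*=(\mathrm{leg}_L|_B^{-1})^*i_B^*\alpha=\dd\big((\mathrm{leg}_L|_B^{-1})^*\beta\big)$.

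Tangency: because $\xi_Z$ is tangent to $B$ and $\mathrm{leg}_L$ maps $B$ onto $B^*$, the differential maps $TB$ onto $TB^*$, so $\xi_{Z^*}$ is tangent to $B^*$.

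Preservation of $\Pi^*$: the restrictions $\xi_Z|_B$ and $\xi_{Z^*}|_{B^*}$ are $\mathrm{leg}_L|_B$-related, so naturality applies once more and gives $(\mathrm{leg}_L|_B)^*\pounds_{\xi_{Z^*}|_{B^*}}\Pi^*=\pounds_{\xi_Z|_B}\Pi=0$. Injectivity of the pullback by a diffeomorphism then yields $\pounds_{\xi_{Z^*}|_{B^*}}\Pi^*=0$.

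The converse runs the same argument with $\mathrm{leg}_L^{-1}$ in place of $\mathrm{leg}_L$, using $\Theta_h=(\mathrm{leg}_L^{-1})^*\Theta_L$. Alternatively, one can apply the Hamiltonian-vector-field characterizations of Definition \ref{def:5.6} and Remark \ref{rem:cartan_hamiltonian_check} together with $(\mathrm{leg}_L)^*\Omega_h=\Omega_L$.

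The only delicate point is a bookkeeping issue. The proposition relating $\Theta_L$ and $\Theta_h$ was stated for the local Legendre expressions, and here I need it globally. I also need $\Pi$ and $\Pi^*$ to be understood as forms on $B$ and $B^*$ (or as their extensions), so that the restricted naturality makes sense. Hyperregularity and the compatibility hypotheses are exactly what make both steps legitimate, so I expect no real obstacle beyond stating them carefully.
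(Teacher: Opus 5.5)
Your proposal is correct and is essentially the argument the paper intends. The paper omits the proof as straightforward and defers to de León et al.; the intended route is exactly transport by the global diffeomorphism $\mathrm{leg}_L$, using $(\mathrm{leg}_L)^*\Theta_h=\Theta_L$, naturality of $\pounds$ for $\mathrm{leg}_L$-related vector fields, and the compatibility hypotheses $\mathrm{leg}_L(B)=B^*$, $(\mathrm{leg}_L)^*\Pi^*=\Pi$. Compare Remark \ref{rem:legendre_currents_correspond}, where $\alpha$ is carried to $Z^*$ just as you do.

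Your stated worry about local versus global is not a real issue. The identity $(\mathrm{leg}_L)^*\Theta_h=\Theta_L$ is a pointwise equality of forms, so checking it in coordinates around every point already gives it globally. This requires $h$ to be the Hamiltonian section induced by $L$, which is implicit in that proposition.
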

\begin{lemma}
	\label{lem:legendre_alpha_lift}
	If $\xi_Y\in\mathfrak{X}(Y)$ is a Noether's symmetry for the De Donder equations,
	with $\pounds_{\xi^{(1)}_Y}\Theta_L=\dd\alpha$, then $\dd\,\mathrm{Leg}_L(\xi^{(1)}_Y)$
	is precisely the $\alpha-$lift of $\xi_Y$.
\end{lemma}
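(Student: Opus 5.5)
The plan is to use the uniqueness built into Definition \ref{def:alpha_lift}: the $\alpha$-lift is the \emph{unique} vector field on $\Omega^{n+1}_2Y$ that projects onto $\xi_Y$ through $\dd\nu_2$ and satisfies $\pounds_{\xi^\alpha_Y}\Theta=\dd\alpha$. So it suffices to show that $V:=\dd\,\mathrm{Leg}_L(\xi^{(1)}_Y)$ is a well-defined vector field on $\Omega^{n+1}_2Y$ with these two properties. Well-definedness comes from hyperregularity: $\mathrm{Leg}_L$ is a diffeomorphism onto its image $\tilde M_1$, since $\mu_1$ is a diffeomorphism and $\mathrm{leg}_L$ is global. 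Hence $V:=(\mathrm{Leg}_L)_*\xi^{(1)}_Y$ is a vector field on $\tilde M_1$.

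\textbf{Projection.} Since $\nu_2\circ\mathrm{Leg}_L=\pi_{ZY}$ (from the local expression in Definition \ref{def:legendre_transformation}), we get $\dd\nu_2(V)=\dd\pi_{ZY}(\xi^{(1)}_Y)=\xi_Y$. The last equality holds because $\xi^{(1)}_Y$ projects onto $\xi_Y$ by Definition \ref{def:1jet-prolongation}.

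\textbf{Lie derivative condition.} Pulling back by the diffeomorphism $\mathrm{Leg}_L:Z\to\tilde M_1$ and using $(\mathrm{Leg}_L)^*\Theta=\Theta_L$, naturality of the Lie derivative gives
\[
(\mathrm{Leg}_L)^*\big(\pounds_V\Theta\big)=\pounds_{\xi^{(1)}_Y}(\mathrm{Leg}_L)^*\Theta=\pounds_{\xi^{(1)}_Y}\Theta_L=\dd\alpha .
\]
Here $\alpha$ on $Z$ is the pullback via $\pi_{ZY}=\nu_2\circ\mathrm{Leg}_L$ of the form on $Y$, so $\dd\alpha=(\mathrm{Leg}_L)^*\dd(\nu_2^*\alpha)$. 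Injectivity of the pullback under a diffeomorphism then yields $\pounds_V\Theta=\dd\alpha$ on $\tilde M_1$. Along the way I also have to check that $\alpha$ is semibasic, as the $\alpha$-lift requires. This follows from the Remark after Definition \ref{def:5.6}, which gives $\alpha=\alpha^\mu(x,y)\dd^nx_\mu$ for a Noether symmetry.

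\textbf{Main obstacle.} $V$ is only defined on $\tilde M_1$, which is a hypersurface (the graph $p=-H$) in $\Omega^{n+1}_2Y$, while the $\alpha$-lift lives on the whole of $\Omega^{n+1}_2Y$. So the statement must be read as ``$V$ coincides with $\xi_Y^\alpha$ restricted to $\tilde M_1$'', and I must show that $\xi_Y^\alpha$ is tangent to $\tilde M_1$ and that uniqueness holds pointwise there. I would avoid an abstract uniqueness argument and compare components directly. Write $V$ in coordinates using $\mathrm{Leg}_L$: its $p^\mu_i$-component is $\xi^{(1)}_Y(\partial L/\partial z^i_\mu)$ and its $p$-component is $\xi^{(1)}_Y(L-z^j_\nu\,\partial L/\partial z^j_\nu)$. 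Then check that these equal the formulas $\xi^{p^\mu_i}_Y,\xi^p_Y$ of Definition \ref{def:alpha_lift} evaluated at $p=L-z\partial_zL$, $p^\mu_i=\partial L/\partial z^i_\mu$.

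The equality should follow from the coordinate expression of $\pounds_{\xi^{(1)}_Y}\Theta_L=\dd\alpha$. Comparing the coefficients of $\dd y^i\wedge\dd^nx_\mu$ and $\dd^{n+1}x$ with the corresponding coefficients of $\pounds_{\xi^\alpha_Y}\Theta$ pulled back reproduces exactly the defining component equations. Equivalently, the computation behind the $\alpha$-lift formulas holds pointwise, so any field on $\tilde M_1$ with the right projection and Lie derivative has those components. Projectability of $\xi_Y$ (so $\partial\xi^\mu/\partial y^i=0$) removes the extra $\dd y\wedge\dd y$ terms of \eqref{eq:lengthy_computation} and makes the matching clean.
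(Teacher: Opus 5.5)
The paper does not prove this lemma. It is stated without proof and deferred to \cite{leonSymmetriesClassicalField2004}, so there is no in-text argument to compare with. Your proposal is correct, but it is your final component comparison, not the uniqueness framing, that actually proves it.

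\textbf{Why the uniqueness framing is not enough.} Because $V=(\mathrm{Leg}_L)_*\xi^{(1)}_Y$ lives only on $\tilde{M}_1$, your ``Lie derivative condition'' step really gives $\pounds_V(\tilde{j}_1^*\Theta)=\tilde{j}_1^*\dd\alpha$. This is a statement intrinsic to $\tilde{M}_1$. It does not show that $\xi^\alpha_Y$ is tangent to $\tilde{M}_1$, so the ambient uniqueness of the $\alpha$-lift cannot be invoked.

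\textbf{How to do the direct check.} Do it without ``pulling back $\pounds_{\xi^\alpha_Y}\Theta$'', since that would presuppose the relatedness you want. Write $\Theta_L=(p\circ\mathrm{Leg}_L)\,\dd^{n+1}x+(p^\mu_i\circ\mathrm{Leg}_L)\,\dd y^i\wedge\dd^nx_\mu$ and expand $\pounds_{\xi^{(1)}_Y}\Theta_L$ by the Leibniz rule. The $(x,y)$-components of $\xi^{(1)}_Y$ are those of $\xi_Y$. So you obtain, term by term, the expression of $\pounds_W\Theta$ for a field $W$ on $\Omega^{n+1}_2Y$ over $\xi_Y$, with $W^p$ and $W^{p^\mu_i}$ replaced by $\xi^{(1)}_Y(p\circ\mathrm{Leg}_L)$ and $\xi^{(1)}_Y(p^\mu_i\circ\mathrm{Leg}_L)$. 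Now equate the $\dd^{n+1}x$ and $\dd y^i\wedge\dd^nx_\mu$ coefficients with those of $\dd\alpha=\partial_\mu\alpha^\mu\,\dd^{n+1}x+\partial_i\alpha^\mu\,\dd y^i\wedge\dd^nx_\mu$. This reproduces exactly the formulas of Definition \ref{def:alpha_lift} evaluated at $\mathrm{Leg}_L(z)$. Hence $T\mathrm{Leg}_L\circ\xi^{(1)}_Y=\xi^\alpha_Y\circ\mathrm{Leg}_L$, and tangency of $\xi^\alpha_Y$ to $\tilde{M}_1$ comes as a by-product.

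\textbf{Two minor points.}
\begin{enumerate}
\item You need neither \eqref{eq:lengthy_computation} nor projectability for the matching. Definition \ref{def:5.6} does not assume $\xi_Y$ projectable, and the $-p\,\partial\xi^\mu/\partial y^i$ term of the lift formula already accounts for a $y$-dependent $\xi^\mu$.
\item The remaining piece $p^\mu_i\,(\partial\xi^\nu/\partial y^j)\,\dd y^i\wedge\dd y^j\wedge\dd^{n-1}x_{\mu\nu}$ only affects the \textit{existence} of the $\alpha$-lift on all of $\Omega^{n+1}_2Y$, which the lemma presupposes. That existence actually follows from your hypotheses. The piece is linear in the $p^\mu_i$. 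It must vanish at $p^\mu_i=\partial L/\partial z^i_\mu$, because $\dd\alpha$ has no such component. For hyperregular $L$ these values fill the fibres of $Z^*$, so it vanishes identically.
\end{enumerate}
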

\begin{proposition}
	\label{prop:legendre_noether_correspondence}
	Every Noether's symmetry for the De Donder equations is a Noether's symmetry for the
	Hamilton's equations. The converse is also true.
\end{proposition}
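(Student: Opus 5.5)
The plan is to treat the two implications separately, with Lemma \ref{lem:legendre_alpha_lift} doing the real work in the forward direction and hyperregularity (so that $\mathrm{leg}_L$ is a global diffeomorphism) giving the converse.

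\textbf{Forward direction.} Let $\xi_Y$ be a Noether's symmetry for the De Donder equations. Then $\xi_Y^{(1)}$ is tangent to $B$, $\pounds_{\xi_Y^{(1)}|_B}\Pi=0$, and $\pounds_{\xi_Y^{(1)}}\Theta_L=\dd\alpha$ with $\alpha$ the pullback of an $n$-form on $Y$. The remark after Definition \ref{def:5.6} shows that locally $\alpha=\alpha^\mu(x,y)\dd^nx_\mu$, so $\alpha$ is $\pi$-semibasic. It therefore defines an $\alpha$-lift $\xi_Y^\alpha$ on $\Omega_2^{n+1}Y$.

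By Lemma \ref{lem:legendre_alpha_lift}, $\dd\,\mathrm{Leg}_L(\xi_Y^{(1)})=\xi_Y^\alpha$ along $\tilde M_1=\mathrm{Leg}_L(Z)$. Composing with $\mu$, the field $\xi_Y^\alpha$ projects to the vector field $\xi_Y^{(1*)}:=\dd\,\mathrm{leg}_L(\xi_Y^{(1)})$ on $Z^*$. This vector field is globally defined because $\mathrm{leg}_L$ is a diffeomorphism, and this settles condition (a).

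For condition (b), use $(\mathrm{leg}_L)^*\Theta_h=\Theta_L$ and naturality of the Lie derivative under diffeomorphisms:
\[
(\mathrm{leg}_L)^*\pounds_{\xi_Y^{(1*)}}\Theta_h=\pounds_{\xi_Y^{(1)}}\Theta_L=\dd\alpha=(\mathrm{leg}_L)^*\dd\alpha .
\]
Here $\alpha$ on $Z^*$ and on $Z$ are pullbacks of the same form on $Y$, and $\pi_{Z^*Y}\circ\mathrm{leg}_L=\pi_{ZY}$. Hence $\pounds_{\xi_Y^{(1*)}}\Theta_h=\dd\alpha$. The boundary conditions follow in the same way from $\mathrm{leg}_L(B)=B^*$ and $(\mathrm{leg}_L)^*\Pi^*=\Pi$. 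Tangency to $B$ is carried to tangency to $B^*$, and $\pounds\Pi^*=0$ pulls back to $\pounds\Pi=0$. Exactness of $\alpha$ on $B^*$ follows from exactness on $B$ by pulling $\beta$ through $(\mathrm{leg}_L|_B)^{-1}$.

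\textbf{Converse.} Let $\xi_Y$ be a Noether's symmetry for the Hamilton's equations, with $\xi_Y^{(1*)}$ the projection of the $\alpha$-lift. Set $\xi_Z:=\dd\,\mathrm{leg}_L^{-1}(\xi_Y^{(1*)})$. The same naturality argument gives $\pounds_{\xi_Z}\Theta_L=\dd\alpha$, together with the boundary conditions, so $\xi_Z$ is a Cartan symmetry for the De Donder equations (this is just Proposition \ref{prop:legendre_cartan_correspondence}). What remains is to show $\xi_Z=\xi_Y^{(1)}$; this is the main obstacle.

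Both $\xi_Z$ and $\xi_Y^{(1)}$ project to $\xi_Y$, because $\mathrm{leg}_L$ commutes with the projections to $Y$. I would first try a uniqueness argument through $\mathrm{Leg}_L$. The field $\xi_Y^{(1)}$ is defined regardless, and the computation behind Lemma \ref{lem:legendre_alpha_lift} (which only uses the local formulas of Definitions \ref{def:1jet-prolongation} and \ref{def:alpha_lift}) shows that $\dd\,\mathrm{Leg}_L(\xi_Y^{(1)})$ agrees with $\xi_Y^\alpha$ in the $p^\mu_i$ components once $\pounds_{\xi_Y^{(1)}}\Theta_L=\dd\alpha$ holds. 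Projecting by $\mu$, $\dd\,\mathrm{leg}_L(\xi_Y^{(1)})$ and $\xi_Y^{(1*)}$ then coincide, and injectivity of $\dd\,\mathrm{leg}_L$ gives $\xi_Z=\xi_Y^{(1)}$.

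The subtle point is that this argument seems to presuppose $\pounds_{\xi_Y^{(1)}}\Theta_L=\dd\alpha$, which is what we are trying to prove. To avoid the circularity, I would instead compare the vertical $\partial/\partial z^i_\mu$ components directly. Since $p^\mu_i=\partial L/\partial z^i_\mu$ is a local diffeomorphism in the fibres, the $p$-components of the $\alpha$-lift determine the $z$-components uniquely. It then suffices to check, in coordinates, that the pullback of $\xi_Y^{(1*)}$ preserves the contact ideal. This holds because a vector field $X$ with $\pounds_X\Theta_L=\dd\alpha$ and projecting onto a projectable field on $Y$ must, by regularity, satisfy $\pounds_X\theta^i\in\mathcal{C}$. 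By the uniqueness in Definition \ref{def:1jet-prolongation}, such a field is the $1$-jet prolongation.
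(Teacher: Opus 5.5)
The thesis does not prove this proposition (it defers to \cite{leonSymmetriesClassicalField2004}), so your argument has to stand on its own. Its architecture is the natural one, but two steps are not established.

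The first gap is condition (a) in the forward direction. Lemma~\ref{lem:legendre_alpha_lift} identifies $\xi^\alpha_Y$ only along $\tilde M_1=\mathrm{Leg}_L(Z)$. For hyperregular $L$ this set is the image of the Hamiltonian section $h$, so it meets each fibre of $\mu$ in exactly one point. Pushing that restriction forward by $\mu$ therefore always yields a vector field on $Z^*$. It says nothing about whether $\xi^\alpha_Y$ is $\mu$-projectable on all of $\Omega^{n+1}_2Y$, which is what (a) requires.

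By the local formula of Definition~\ref{def:alpha_lift}, the $p^\mu_i$-component of $\xi^\alpha_Y$ contains $-p\,\partial\xi^\mu_Y/\partial y^i$, with $p$ a free coordinate. So (a) holds exactly when $\xi_Y$ is $\pi$-projectable, and Definition~\ref{def:5.6}(ii) does not assume this. You must either impose it or derive it from regularity. For instance, when $n\geq1$ and $m\geq2$, write $\dd y^j=\theta^j+z^j_\rho\dd x^\rho$ in the last summand of \eqref{eq:lengthy_computation}. This produces $\theta^i\wedge\theta^j\wedge\dd^{n-1}x_{\mu\nu}$-terms with coefficients $\frac{\partial\xi^\nu_Y}{\partial y^j}\frac{\partial L}{\partial z^i_\mu}$, up to sign and antisymmetrization. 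Nothing in $\dd\alpha$ matches them for semibasic $\alpha$. Since regularity lets $\partial L/\partial z^i_\mu$ range over an open set, they force $\partial\xi^\nu_Y/\partial y^j=0$. Once projectability is available, the rest of your forward argument goes through: naturality under $\mathrm{leg}_L$, and transport of $B$, $\Pi$ and $\beta$.

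The second gap is the decisive step of the converse. You assert, ``by regularity'', that a field $X$ projecting onto a projectable $\xi_Y$ with $\pounds_X\Theta_L=\dd\alpha$ must preserve the contact ideal, but you do not prove it. Routing it through the $p$-components of the $\alpha$-lift brings back exactly the identity you recognised as circular. The missing computation is short. Since $\xi_Z$ and $\xi^{(1)}_Y$ both project onto $\xi_Y$, write $\xi_Z=\xi^{(1)}_Y+V$ with $V=V^j_\nu\,\partial/\partial z^j_\nu$. Because $\iota_V\Theta_L=0$,
\[
  \pounds_V\Theta_L=-\iota_V\Omega_L=\frac{\partial^2L}{\partial z^i_\mu\,\partial z^j_\nu}\,V^j_\nu\,\theta^i\wedge\dd^nx_\mu .
\]
For projectable $\xi_Y$ (guaranteed by (a)), \eqref{eq:lengthy_computation} reduces to $\pounds_{\xi^{(1)}_Y}\Theta_L=F\eta+\frac{\partial F}{\partial z^i_\mu}\theta^i\wedge\dd^nx_\mu$. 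For $\alpha=\alpha^\mu(x,y)\dd^nx_\mu$ one has $\dd\alpha=\der{\alpha^\mu}{x^\mu}\eta+\frac{\partial\alpha^\mu}{\partial y^i}\theta^i\wedge\dd^nx_\mu$. Then:
\begin{itemize}
\item Matching $\eta$-components gives $F=\der{\alpha^\mu}{x^\mu}$.
\item Differentiating this identity in $z^i_\mu$ gives $\partial F/\partial z^i_\mu=\partial\alpha^\mu/\partial y^i$.
\item The $\theta^i\wedge\dd^nx_\mu$-components therefore reduce to $\frac{\partial^2L}{\partial z^i_\mu\partial z^j_\nu}V^j_\nu=0$, so $V=0$ by regularity.
\end{itemize}
This gives $\xi_Z=\xi^{(1)}_Y$, hence $\pounds_{\xi^{(1)}_Y}\Theta_L=\dd\alpha$, directly. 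You no longer need the detour through the contact ideal and the uniqueness of the prolongation.
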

\begin{remark}
	\label{rem:legendre_currents_correspond}
	The practical consequence of these results is that
	\textbf{for a hyperregular Lagrangian one may compute the Noether currents on
	whichever side is more convenient}: applying $(\mathrm{leg}_L)^*$ to the conserved
	quantity of Theorem \ref{thm:noether_theorem_hamilton} gives
	\[
		(\mathrm{leg}_L)^*\left(\iota_{\dd\,\mathrm{leg}_L(\xi_Z)}\Theta_h-\tilde{\alpha}
		\right)=\iota_{\xi_Z}\Theta_L-\alpha,
	\]
	so the Lagrangian and the Hamiltonian currents are the same object read in
	the two distint sets of coordinates, exactly as
	\eqref{eq:useful_local_expression_hamiltonian_current} and
	\eqref{eq:useful_local_expression_for_the_preserved_quantity} suggested.
	
	\textbf{However}, note that these set of results will not be used
	in our computations because, as already said, our Yang-Mills
	Lagrangian is almost regular (hence not regular neither hyperregular).
\end{remark}

\subsection*{The almost regular case}
The correspondence above breaks down as soon as $\mathrm{leg}_L$ stops being a
diffeomorphism, and unfortunately (or rather, interesting physically)
\textbf{this is precisely our case}: the Yang--Mills Lagrangian is almost regular but not
regular (Proposition \ref{prop:YM_almost_regular}). The way out is the same one we
already took for the De Donder equations: to restrict everything to the final constraint
submanifold produced by the algorithm of Subsection
\ref{subsect:almost_regular_lagrangians_and_the_constraint_algorithm}.
 
Recall from that subsection the objects $M_1=\mathrm{leg}_L(Z)\subseteq Z^*$,
$h_1=(\mu_1)^{-1}$ and $\Omega_{M_1}:=h_1^*(\tilde{\jmath}_1^*\Omega)$, together with
the sequence $M_1\supseteq M_2\supseteq\cdots\supseteq M_f$ and the fibration
$\mathrm{leg}_f:Z_f\to M_f$. Denoting by $j_f:M_f\hookrightarrow M_1$ the canonical
inclusion, we set
\[
	\Theta_{M_f}:=j_f^*\left(h_1^*(\tilde{\jmath}_1^*\Theta)\right),\qquad
	\Omega_{M_f}:=j_f^*\Omega_{M_1}=-\dd\Theta_{M_f},
\]
so that $(M_f,\Omega_{M_f})$ is the Hamiltonian mirror of the pair $(Z_f,\Omega_{Z_f})$
introduced at the end of the previous section, and we define
\begin{definition}[Cartan's symmetries in $M_f$]
	\label{def:cartan_symmetries_Mf}
	We say that $\xi_{M_f}\in\mathfrak{X}(M_f)$ is a \textbf{Cartan's symmetry} for
	$\bm{(M_f,\Omega_{M_f})}$ if it is tangent to $M_f\cap B^*$ and there exists
	$\alpha_{M_f}\in\Omega^n(M_f)$ such that
	$\pounds_{\xi_{M_f}}\Theta_{M_f}=\dd\alpha_{M_f}$.
\end{definition}
\begin{proposition}
	\label{prop:cartan_symmetries_Mf_to_Zf}
	If $\xi_{M_f}$ is a Cartan's symmetry of $(M_f,\Omega_{M_f})$, then any vector field
	$\xi_{Z_f}\in\mathfrak{X}(Z_f)$ such that $\dd\,\mathrm{leg}_f(\xi_{Z_f})=\xi_{M_f}$ is a
	Cartan's symmetry of $(Z_f,\Omega_{Z_f})$.
\end{proposition}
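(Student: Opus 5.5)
The plan is to pull everything back along $\mathrm{leg}_f$ and check the two defining conditions of a Cartan's symmetry of $(Z_f,\Omega_{Z_f})$: tangency to $B\cap Z_f$, and the existence of an $n$-form $\alpha_{Z_f}$ with $\pounds_{\xi_{Z_f}}\Theta_{Z_f}=\dd\alpha_{Z_f}$. The key identity is
\[
\mathrm{leg}_f^*\Theta_{M_f}=\Theta_{Z_f},
\]
where $\Theta_{Z_f}:=i_f^*\Theta_L$. I would derive it from $(\mathrm{Leg}_L)^*\Theta=\Theta_L$ together with the factorization of the almost regular playground (Figure~\ref{fig:almost_regular_playground}), namely $\tilde{\jmath}_1\circ h_1\circ\mathrm{leg}_1=\mathrm{Leg}_L$. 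This factorization holds because $h_1=\mu_1^{-1}$ and $\mu\circ\mathrm{Leg}_L=\mathrm{leg}_L=j_1\circ\mathrm{leg}_1$. Restricting to $Z_f$, and using that $\mathrm{leg}_L(Z_f)=M_f$ so that $\mathrm{leg}_1\circ i_f=j_f\circ\mathrm{leg}_f$ (with $j_f:M_f\hookrightarrow M_1$ as in the text and $i_f:Z_f\hookrightarrow Z$), we get
\[
\mathrm{leg}_f^*\Theta_{M_f}=\mathrm{leg}_f^*j_f^*h_1^*\tilde{\jmath}_1^*\Theta=i_f^*\mathrm{Leg}_L^*\Theta=i_f^*\Theta_L .
\]

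Next comes the naturality of the Lie derivative under related vector fields. Since $\xi_{Z_f}$ and $\xi_{M_f}$ are $\mathrm{leg}_f$-related, their flows satisfy $\mathrm{leg}_f\circ\phi^{Z_f}_t=\phi^{M_f}_t\circ\mathrm{leg}_f$, and hence
\[
\pounds_{\xi_{Z_f}}(\mathrm{leg}_f^*\beta)=\mathrm{leg}_f^*(\pounds_{\xi_{M_f}}\beta)
\]
for every form $\beta$ on $M_f$. Alternatively one can use Cartan's formula, since both $\dd$ and $\iota$ are compatible with pullback along related fields. Applying this to $\beta=\Theta_{M_f}$ gives
\[
\pounds_{\xi_{Z_f}}\Theta_{Z_f}=\mathrm{leg}_f^*\dd\alpha_{M_f}=\dd(\mathrm{leg}_f^*\alpha_{M_f}).
\]
So $\alpha_{Z_f}:=\mathrm{leg}_f^*\alpha_{M_f}$ works. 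If exactness on the boundary is also demanded, it transfers the same way, because $\mathrm{leg}_L(B)=B^*$ by the compatibility assumption, so $\mathrm{leg}_f$ maps $B\cap Z_f$ into $B^*\cap M_f$.

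Tangency is the step I expect to be the main obstacle. From $\dd\,\mathrm{leg}_f(\xi_{Z_f})=\xi_{M_f}$ tangent to $B^*\cap M_f$, we only know that $\xi_{Z_f}$ is tangent to $B\cap Z_f$ modulo $\ker\dd\,\mathrm{leg}_f$. To close the gap I would use the compatibility $\mathrm{leg}_L(B)=B^*$, read as $B=\mathrm{leg}_L^{-1}(B^*)$; this is natural since $B$ and $B^*$ encode the same boundary data and $\mathrm{leg}_L$ is a submersion onto $M_1$ with connected fibers. Then $B\cap Z_f=\mathrm{leg}_f^{-1}(B^*\cap M_f)$ is the preimage of a submanifold under the submersion $\mathrm{leg}_f$. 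At each point its tangent space is $(\dd\,\mathrm{leg}_f)^{-1}\big(T(B^*\cap M_f)\big)$, and $\xi_{Z_f}$ lies in it because its image does. If $B$ is not saturated by the fibers, the statement needs this as an implicit hypothesis, and I would state it explicitly.

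Finally, the condition $\pounds\Pi=0$ on the boundary, if included, follows from $(\mathrm{leg}_L)^*\Pi^*=\Pi$ by the same naturality argument.
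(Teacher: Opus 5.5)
The paper gives no argument of its own here (it defers to \cite{leonSymmetriesClassicalField2004}). Your proof is the standard argument behind that citation and is correct: you obtain $\mathrm{leg}_f^*\Theta_{M_f}=\Theta_{Z_f}$ from $\mathrm{Leg}_L=\tilde{\jmath}_1\circ h_1\circ\mathrm{leg}_1$ and $\mathrm{leg}_1\circ i_f=j_f\circ\mathrm{leg}_f$, and naturality of the Lie derivative for $\mathrm{leg}_f$-related fields then gives $\pounds_{\xi_{Z_f}}\Theta_{Z_f}=\dd(\mathrm{leg}_f^*\alpha_{M_f})$. Your tangency caveat is also right: lifts are only determined up to $\ker\dd\,\mathrm{leg}_f$ (as Remark~\ref{rem:almost_regular_symmetries_yang_mills} notes), so the claim for \emph{every} lift needs $B\cap Z_f=\mathrm{leg}_f^{-1}(B^*\cap M_f)$, e.g.\ $B=\mathrm{leg}_L^{-1}(B^*)$, which $\mathrm{leg}_L(B)=B^*$ alone does not give, and stating it as an explicit hypothesis (vacuous when $\partial X=\emptyset$) is the correct fix.
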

\begin{proof}
	See \cite{leonSymmetriesClassicalField2004}.
\end{proof}
\begin{proposition}
	\label{prop:noether_theorem_Mf}
	If $\xi_{M_f}$ is a Cartan's symmetry of $(M_f,\Omega_{M_f})$ with
	$\pounds_{\xi_{M_f}}\Theta_{M_f}=\dd\alpha_{M_f}$, then
	$\iota_{\xi_{M_f}}\Theta_{M_f}-\alpha_{M_f}$ is a conserved quantity for the
	Hamilton's equations restricted to $M_f$, and the analogous statement holds on
	$(Z_f,\Omega_{Z_f})$ for the De Donder equations.
\end{proposition}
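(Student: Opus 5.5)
The plan is to mimic the proof of Theorem \ref{thm:noether_theorem_hamilton}, carried out intrinsically on the constraint submanifold $M_f$. Set $J:=\iota_{\xi_{M_f}}\Theta_{M_f}-\alpha_{M_f}\in\Omega^n(M_f)$. The first step is purely algebraic. By Cartan's formula in the paper's sign convention, $\pounds_{\xi}=\iota_\xi\dd+\dd\iota_\xi$, and using $\Omega_{M_f}=-\dd\Theta_{M_f}$,
\[
\dd J=\dd\iota_{\xi_{M_f}}\Theta_{M_f}-\dd\alpha_{M_f}=\pounds_{\xi_{M_f}}\Theta_{M_f}-\iota_{\xi_{M_f}}\dd\Theta_{M_f}-\dd\alpha_{M_f}=\iota_{\xi_{M_f}}\Omega_{M_f}.
\]
Thus $\xi_{M_f}$ is a Hamiltonian vector field for $J$ with respect to the presymplectic form $\Omega_{M_f}$, exactly as in Remark \ref{rem:cartan_hamiltonian_check}.

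The second step applies this to a solution. A solution of the Hamilton equations restricted to $M_f$ is a section $\sigma$ of $\pi_{M_fX}$ with $\sigma^*(\iota_\zeta\Omega_{M_f})=0$ for all $\zeta\in\mathfrak{X}(M_f)$. This is the restricted analogue of \eqref{eq:Hamilton-equations}, equivalently a horizontal integral section of a connection solving \eqref{eq:geometric_almost_regular_Hamilton_equations} on $M_f$. Taking $\zeta=\xi_{M_f}$, which is allowed because $\xi_{M_f}$ is by definition a vector field on $M_f$ (tangent to it), gives $\sigma^*\dd J=\sigma^*\iota_{\xi_{M_f}}\Omega_{M_f}=0$. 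Hence $J$ is conserved.

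The main point needing care is the definition of the field equations on $M_f$. If they are stated only as $\sigma^*(\iota_\zeta\Omega_{M_1})=0$ for $\zeta\in\mathfrak{X}(M_1)$ with $\sigma$ taking values in $M_f$, I would restrict to vector fields tangent to $M_f$. Pulling back along $j_f$ then yields the intrinsic condition, since $\sigma$ factors through $M_f$ and $j_f^*\Omega_{M_1}=\Omega_{M_f}$. For the connection formulation I would use the earlier proposition that a horizontal integral section of an $\mathbf{h}$ solving the geometric equation satisfies the section equation. Tangency of $\mathbf{h}$ to $M_f$, which is guaranteed by the stabilization of the algorithm, is what lets the argument stay inside $M_f$. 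For exactness on the boundary, I would use the hypotheses that $\xi_{M_f}$ is tangent to $M_f\cap B^*$ and that $\alpha_{M_f}$ is exact there, together with $i^*_{B^*}\Theta=\dd\Pi^*$, following Appendix \ref{app:some_symmetries_computations}.

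For the Lagrangian side on $(Z_f,\Omega_{Z_f})$, the identical computation with $\Theta_{Z_f}=i_f^*\Theta_L$ and $\Omega_{Z_f}=-\dd\Theta_{Z_f}$ gives the statement for the De Donder equations restricted to $Z_f$. Alternatively, I would lift $\xi_{M_f}$ to some $\xi_{Z_f}$ with $\dd\,\mathrm{leg}_f(\xi_{Z_f})=\xi_{M_f}$ by Proposition \ref{prop:cartan_symmetries_Mf_to_Zf}. Then $\mathrm{leg}_f^*\Theta_{M_f}=\Theta_{Z_f}$ gives $\mathrm{leg}_f^*J=\iota_{\xi_{Z_f}}\Theta_{Z_f}-\mathrm{leg}_f^*\alpha_{M_f}$, which matches the currents on the two sides, as in Remark \ref{rem:legendre_currents_correspond}.
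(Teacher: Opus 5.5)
Your proposal is correct and follows essentially the paper's argument. The paper simply transplants the proof of Theorem~\ref{thm:noether_theorem_hamilton} to $(M_f,\Theta_{M_f},\Omega_{M_f})$, i.e.\ $\dd J=\iota_{\xi_{M_f}}\Omega_{M_f}$ followed by a pullback along a solution, using that the solutions produced by the constraint algorithm are tangent to $M_f$; this is exactly what you make explicit through $j_f^*\Omega_{M_1}=\Omega_{M_f}$ and the tangency of $\tilde{\mathbf{h}}$. Your extra remark on boundary exactness is not part of the statement, and it would need hypotheses (exactness of $\alpha_{M_f}$ on $M_f\cap B^*$, preservation of $\Pi^*$) that Definition~\ref{def:cartan_symmetries_Mf} does not impose.
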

\begin{proof}
	Word by word the proof of Theorem \ref{thm:noether_theorem_hamilton}, replacing
	$(Z^*,\Theta_h,\Omega_h)$ by $(M_f,\Theta_{M_f},\\\Omega_{M_f})$ and using that
	solutions of the geometric Hamilton equations are, by construction of the algorithm,
	tangent to $M_f$.
\end{proof}
\begin{remark}[Why this is the version we will actually use]
	\label{rem:almost_regular_symmetries_yang_mills}
	Note two comments.
	First, Proposition
	\ref{prop:cartan_symmetries_Mf_to_Zf} goes only in one direction and the lift
	$\xi_{Z_f}$ is \textbf{not unique}: two lifts differ by a vector field tangent to
	the fibers of $\mathrm{leg}_f$, which are exactly the directions along which the
	Lagrangian does not see the velocities (in the sense that the corresponding
	multimomenta are zero and so we can not apply the inverse function theorem to locally
	relate them one to one). For Yang--Mills these are the directions of
	the symmetric part $A^a_{(\mu\nu)}$ of the jet coordinates, which is precisely the
	piece of $Z$ killed by $\mathrm{Leg}_L$ (cf. the proof of Proposition
	\ref{prop:YM_almost_regular}). In other words, \textbf{the currents are
	honestly defined downstairs on} $\bm{M_f}$, and all the ambiguity sits in the
	Lagrangian picture.
	
	Second, since $\mathrm{leg}_f$ is a fibration with connected fibers, a function
	(or form) on $Z_f$ that is constant along those fibers descends to $M_f$, so the
	conserved quantities of Proposition \ref{prop:noether_theorem_Mf} are in one-to-one
	correspondence on both sides.
\end{remark}
 
\section{Symmetries in the instantaneous framework}
\label{sect:symmetries_instantaneous}
So far every conservation statement we have proven has been of the form
``$\sigma^*\dd\alpha=0$'': a \textit{local} conservation law, one $(n+1)-$dimensional
divergence that vanishes, with no distinguished notion of time and hence no
\textit{conserved charge}. That is exactly what the instantaneous framework of Chapter
\ref{chap:instantaneous_formalism} was built to provide, so it is here where the word
``conserved'' finally recovers its usual dynamical meaning: a number attached to each
Cauchy slice which does not change as the slice moves in time.
 
The starting point is that symmetries of presymplectic systems are remarkably simple to
describe. Let $(N,\omega)$ be a presymplectic manifold (i.e. $\omega$ a closed $2-$form,
possibly degenerate); $\xi\in\mathfrak{X}(N)$ is a symmetry of the presymplectic system if
there exists a function $G\in C^\infty(N)$ satisfying $\iota_\xi\omega=\dd G$. Suppose that
there a solution of the presymplectic system, i.e. a vector field $U\in\mathfrak{X}(N)$
such that $\iota_U\omega=0$, then
\begin{equation}
	\label{eq:presymplectic_noether_one_line}
	0=\iota_U\iota_\xi\omega=\dd G(U)=U(G),
\end{equation}
so $G$ is constant along the solutions. The whole first Noether's theorem in the
instantaneous framework is contained in this one line; everything else is bookkeeping
relating $G$ with the currents we already know. Applying this to
$(\tilde{Z},\widetilde{\Omega_L})$ -which is presymplectic by Proposition
\ref{prop:presymplectic_proposition}- we define
\begin{definition}[Cartan's symmetries in the space of Cauchy data]
	\label{def:presymplectic_cartan_symmetry}
	A vector field $\tilde{\xi}\in\mathfrak{X}(\tilde{Z})$ is a \textbf{Cartan's
	symmetry} of the presymplectic system $(\tilde{Z},\widetilde{\Omega_L})$ if there
	exists $\tilde{G}\in C^\infty(\tilde{Z})$ such that
	\[
		\iota_{\tilde{\xi}}\widetilde{\Omega_L}=\dd\tilde{G},
	\]
	that is, if $\tilde\xi$ is a Hamiltonian vector field with Hamiltonian function
	$\tilde{G}$ in the same sense of a symplectic system.
\end{definition}
\begin{proposition}[Noether's theorem in the instantaneous framework]
	\label{prop:noether_presymplectic}
	Let $\tilde{\xi}$ be a Cartan's symmetry of $(\tilde{Z},\widetilde{\Omega_L})$ with
	Hamiltonian function $\tilde{G}$, and let $c_{\tilde{Z}}$ be a solution of the
	instantaneous De Donder equations \eqref{eq:presymplectic_equation}. Then
	$\tilde{G}\circ c_{\tilde{Z}}$ is constant.
\end{proposition}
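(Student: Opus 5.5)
The plan is to reduce the statement to the one-line computation \eqref{eq:presymplectic_noether_one_line}, applied along the curve $c_{\tZ}$. Throughout, $c_{\tZ}:I\to\tZ$ is a solution of \eqref{eq:presymplectic_equation}, i.e. $\iota_{\dot c_{\tZ}(t)}\widetilde{\Omega_L}(c_{\tZ}(t))=0$ for every $t\in I$. This condition is pointwise along the curve. We do not need $\dot c_{\tZ}$ to extend to a global vector field $U$ on $\tZ$, and this is worth noting because Theorem \ref{thm:instantaneous_de_donder} only provides the velocity of the curve.

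The argument is a direct differentiation of $f(t):=\tilde G(c_{\tZ}(t))$, which is a smooth function on the interval $I$. By the chain rule, $f'(t)=\dd\tilde G_{c_{\tZ}(t)}\bigl(\dot c_{\tZ}(t)\bigr)$. Since $\tilde\xi$ is a Cartan symmetry (Definition \ref{def:presymplectic_cartan_symmetry}), we have $\dd\tilde G=\iota_{\tilde\xi}\widetilde{\Omega_L}$, and therefore
\[
f'(t)=\widetilde{\Omega_L}_{c_{\tZ}(t)}\bigl(\tilde\xi(c_{\tZ}(t)),\dot c_{\tZ}(t)\bigr)=-\bigl(\iota_{\dot c_{\tZ}(t)}\widetilde{\Omega_L}\bigr)\bigl(\tilde\xi(c_{\tZ}(t))\bigr)=0,
\]
where the middle equality is the antisymmetry of the $2$-form and the last equality uses that $c_{\tZ}$ solves the instantaneous De Donder equation. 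Since $I$ is an interval (connected) and $f'\equiv0$, $f$ is constant, which is the statement.

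The main point needing care is the infinite-dimensional setting. The chain rule for $\tilde G\circ c_{\tZ}$ and the identification of $\dd\tilde G(\dot c_{\tZ})$ rely on $\tZ$ being a (Fréchet/Banach) manifold and on the curve being differentiable in that sense. I would take these as the standing assumptions under which $\widetilde{\Omega_L}$, $\dd$, and the tangent identification \eqref{eq:identification0} were already being used in Proposition \ref{prop:presymplectic_proposition} and Theorem \ref{thm:instantaneous_de_donder}. Under those assumptions the computation is purely pointwise and linear-algebraic on each $T_{c_{\tZ}(t)}\tZ$. In particular, the degeneracy of $\widetilde{\Omega_L}$ plays no role: uniqueness of $\tilde\xi$ is never used, only the existence of some $\tilde G$ with $\iota_{\tilde\xi}\widetilde{\Omega_L}=\dd\tilde G$.
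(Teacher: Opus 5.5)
Your proof is correct and is the same argument as the paper's. You differentiate $\tilde G\circ c_{\tilde Z}$, replace $\dd\tilde G$ by $\iota_{\tilde\xi}\widetilde{\Omega_L}$, swap the two arguments of the $2$-form, and conclude with $\iota_{\dot c_{\tilde Z}}\widetilde{\Omega_L}=0$; your extra remarks (that the equation is used only pointwise along the curve, that $I$ is connected, and that non-degeneracy is never needed) are harmless refinements of the paper's one-line computation.
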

\begin{proof}
	This is \eqref{eq:presymplectic_noether_one_line} with $U=\dot{c}_{\tilde{Z}}$
	\[
		\der{}{t}\left(\tilde{G}\circ c_{\tilde{Z}}\right)=
		\dot{c}_{\tilde{Z}}(\tilde{G})=\iota_{\dot{c}_{\tilde{Z}}}\dd\tilde{G}=
		\iota_{\dot{c}_{\tilde{Z}}}\iota_{\tilde{\xi}}\widetilde{\Omega_L}=
		-\iota_{\tilde{\xi}}\iota_{\dot{c}_{\tilde{Z}}}\widetilde{\Omega_L}=0,
	\]
	where the last equality is precisely Theorem \ref{thm:instantaneous_de_donder}.
\end{proof}
It only remains to check that the Cartan's symmetries we already know produce Cartan's
symmetries in this sense, and to identify the corresponding $\tilde{G}$.
\begin{proposition}
	\label{prop:cartan_symmetry_induces_presymplectic}
	Let $\xi_Z\in\mathfrak{X}(Z)$ be a Cartan's symmetry for the De Donder equations,
	i.e. $\pounds_{\xi_Z}\Theta_L=\dd\alpha$. Then the induced vector field
	$\tilde{\xi}_{\tilde{Z}}\in\mathfrak{X}(\tilde{Z})$, defined by
	$\tilde{\xi}_{\tilde{Z}}(\gamma):=\xi_Z\circ\gamma$, is a Cartan's symmetry of the
	presymplectic system $(\tilde{Z},\widetilde{\Omega_L})$, with Hamiltonian function
	the momentum $\widetilde{\iota_{\xi_Z}\Theta_L-\alpha}$ of its associated conserved
	quantity.
\end{proposition}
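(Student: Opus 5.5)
The plan is to reduce the statement to the pointwise identity on $Z$ that characterises Cartan symmetries, and then transport it to $\tilde{Z}$ with the tools of Chapter \ref{chap:instantaneous_formalism}. By Definition \ref{def:5.6}, the relation $\pounds_{\xi_Z}\Theta_L=\dd\alpha$ together with Cartan's formula $\pounds_{\xi_Z}=\iota_{\xi_Z}\dd+\dd\iota_{\xi_Z}$ and $\Omega_L=-\dd\Theta_L$ gives
\[
	\iota_{\xi_Z}\Omega_L=\dd\left(\iota_{\xi_Z}\Theta_L-\alpha\right)=:\dd\alpha',
\]
so $\xi_Z$ is a Hamiltonian vector field with Hamiltonian $n$-form $\alpha'$. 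The claim then becomes the instantaneous counterpart of this identity, $\iota_{\tilde{\xi}_{\tilde{Z}}}\widetilde{\Omega_L}=\dd\widetilde{\alpha'}$, which is essentially the content of Proposition \ref{prop:relation_Hamiltonian_forms_and_functions}.

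First I will verify the hypothesis of Definition \ref{def:integration_upon_Cauchy_of_forms} for $\alpha'$, namely that $i_B^*\alpha'$ is exact; this is what makes $\widetilde{\alpha'}$ a well-defined function on $\tilde{Z}$. Here $i_B^*\alpha=\dd\beta$ holds by assumption. For the other piece, tangency of $\xi_Z$ to $B$ gives $i_B^*\iota_{\xi_Z}\Theta_L=\iota_{\xi_Z|_B}\dd\Pi$. Combining this with $\pounds_{\xi_Z|_B}\Pi=0$, Cartan's formula yields $\iota_{\xi_Z|_B}\dd\Pi=-\dd\iota_{\xi_Z|_B}\Pi$, which is exact. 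Hence $\alpha'$ is exact on $B$, with primitive $-\iota_{\xi_Z|_B}\Pi-\beta$.

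Next comes the main step: evaluating both sides on an arbitrary $\tilde{\zeta}\in T_\gamma\tilde{Z}$. For the left-hand side, the formula for $\widetilde{\Omega_L}$ gives
\[
	\widetilde{\Omega_L}(\tilde{\xi},\tilde{\zeta})=\int_M\gamma^*(\iota_{\zeta}\iota_{\xi_Z}\Omega_L)=\int_M\gamma^*(\iota_\zeta\dd\alpha').
\]
For the right-hand side I use Proposition \ref{prop:presymplectic_proposition}, $\dd\widetilde{\alpha'}=\widetilde{\dd\alpha'}$, which applies because $\alpha'$ is an $n$-form. Here $\widetilde{\dd\alpha'}$ is computed with Definition \ref{def:integration_upon_Cauchy_of_forms} for $k=1$. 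Its boundary term involves the primitive of $i_B^*\dd\alpha'$, and since $i_B^*\dd\alpha'=\dd\, i_B^*\alpha'$ that primitive is just $i_B^*\alpha'$ (equivalently, the pullback to $B$ of $\iota_{\xi_Z}\Omega_L$, which vanishes since $i_B^*\Omega_L=0$ and $\xi_Z$ is tangent to $B$). The remaining integral $\int_M\gamma^*(\iota_\zeta\dd\alpha')$ then matches the left-hand side. Alternatively, one can simply invoke Proposition \ref{prop:relation_Hamiltonian_forms_and_functions} with $X_{\alpha'}=\xi_Z$ once exactness on $B$ is established.

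I expect the main obstacle to be the boundary bookkeeping: signs and the choice of primitive in the $\partial M$ integrals, together with making sure $\widetilde{\alpha'}$ is defined with the primitive that makes $\dd\widetilde{\alpha'}=\widetilde{\dd\alpha'}$ hold exactly rather than up to a constant. I would handle this via Stokes on $M$ and the vanishing $i_B^*\Omega_L=0$, much as in the proof of Proposition \ref{prop:brackets_commute_with_tilde}.
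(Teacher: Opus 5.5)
Your proof is correct and follows the route the paper relies on; its own proof is only a citation, but the ingredients it sets up match yours. You rewrite the Cartan condition as $\iota_{\xi_Z}\Omega_L=\dd\alpha'$ as in Definition~\ref{def:5.6}. You check exactness of $\alpha'$ on $B$ exactly as in Remark~\ref{rem:momentum_map_on_B}. You then apply Proposition~\ref{prop:relation_Hamiltonian_forms_and_functions} with $X_{\alpha'}=\xi_Z$, so your ``alternatively'' sentence already completes the proof.

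One sentence of your main route is wrong as written. The primitive of $i_B^*\dd\alpha'$ that makes $\widetilde{\dd\alpha'}=\dd\widetilde{\alpha'}$ (Proposition~\ref{prop:presymplectic_proposition}) true is $0$, not $i_B^*\alpha'$. The zero primitive is admissible because $i_B^*\dd\alpha'=0$, which is what your parenthetical actually shows.

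With $i_B^*\alpha'$ as the primitive you pick up the extra term $\int_{\partial M}\gamma^*(\iota_\zeta\alpha')=\int_{\partial M}\gamma^*\pounds_\zeta\beta'$, where $\beta':=-\iota_{\xi_Z|_B}\Pi-\beta$. This term does not vanish in general. The boundary terms really cancel through the Stokes argument of Appendix~\ref{app:proof_of_proposition_ref_prop_relation_hamiltonian_forms_and_functions}, which is where exactness of $\alpha'$ on $B$ and tangency of $\zeta$ to $B$ enter.

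Your worry about constants is unnecessary. Replacing $\beta'$ by $\beta'+c$ with $c$ closed on $B$ shifts $\widetilde{\alpha'}$ by $\gamma\mapsto\int_{\partial M}\gamma^*c$. The differential of that shift is $\int_{\partial M}\gamma^*\dd\iota_\zeta c=0$, so $\dd\widetilde{\alpha'}$ is unaffected.
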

\begin{proof}
	See \cite{leonSymmetriesClassicalField2004}.
\end{proof}
\begin{remark}
	\label{rem:noether_current_is_hamiltonian_form}
	The proposition above is the cleanest
	possible statement of the first Noether's theorem in the multisymplectic language:
	\textbf{the Noether current of a Cartan's symmetry is exactly a Hamiltonian}
	$\bm{n-}$\textbf{form (Definition \ref{def:hamiltonian_forms_and_vector_fields}),
	and the symmetry itself is its Hamiltonian vector field}. The dictionary
	``symmetry $\leftrightarrow$ conserved quantity'' is therefore the dictionary
	``Hamiltonian vector field $\leftrightarrow$ Hamiltonian form'' of Subsection
	\ref{subsect:hamiltonian_forms_and_multisymplectic_brackets}, and the ambiguity of
	the current up to closed forms (Remark
	\ref{remark:preserved_quantities_plus_closed}) is nothing but the ambiguity in
	Definition \ref{def:hamiltonian_forms_and_vector_fields}. Combining this with
	Proposition \ref{prop:noether_presymplectic}, the conserved charge attached to a
	Cartan's symmetry is just its momentum.
\end{remark}
\subsection*{Conservation of the momenta along solutions}
We now prove the result that we owed the reader since the motivation of Section
\ref{sec:symmetries_for_euler_lagrange_equations}, namely the precise sense in which a
conserved quantity gives a conserved number, and why we insisted so much on conserved
quantities being \textbf{exact} (and not merely closed) on $B$.
\begin{proposition}
	\label{prop:conservation_along_solutions}
	Let $\alpha$ be a conserved quantity for the De Donder equations which is exact on
	$B\subseteq\partial Z$, $\iota_B^*\alpha=\dd\beta$, and let $c_{\tilde{Z}}$ be a
	solution of the instantaneous De Donder equations
	\eqref{eq:presymplectic_equation} whose projection $c_{\tilde{X}}$ to $\tilde{X}$
	splits $X$ in the sense of Definition \ref{def:slicing}. Then
	$\tilde{\alpha}\circ c_{\tilde{Z}}$ is constant, i.e. the function
	\[
		t\longmapsto\int_M c_{\tilde{Z}}(t)^*\alpha-\int_{\partial M}
		c_{\tilde{Z}}(t)^*\beta
	\]
	does not depend on $t$, and hence is a \textbf{conserved quantity}.
\end{proposition}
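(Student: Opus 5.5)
Since $c_{\tilde{X}}$ splits $X$, I would use the section/curve identification \eqref{eq:section-curve-identification_lhs}. It produces a (local) section $\sigma\in\Gamma(\pi_{ZX})$ with $\sigma(c_{\tilde{X}}(t)(u))=c_{\tilde{Z}}(t)(u)$. By the converse part of Theorem \ref{thm:instantaneous_de_donder}, $\sigma$ satisfies the De Donder equations. Because $\alpha$ is a conserved quantity for the De Donder equations, $\sigma^*\dd\alpha=0$ on $X$. The boundary condition built into $\tilde{Z}$, namely $\gamma(\partial M)\subseteq B$ for every $\gamma\in\tilde{Z}$, gives $\sigma(\partial X)\subseteq B$ on the slab swept by the slicing.

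Next, fix $t_0<t_1$ in $I$ and consider the compact region $C:=\Phi([t_0,t_1]\times M)\subseteq X$, with $\Phi$ the diffeomorphism of Definition \ref{def:slicing}. Its boundary decomposes as
\[
\partial C=\Phi(\{t_1\}\times M)\cup\Phi(\{t_0\}\times M)\cup\Phi([t_0,t_1]\times\partial M),
\]
with the induced orientations: $+$ on the top slice, $-$ on the bottom slice, and a lateral piece $S:=\Phi([t_0,t_1]\times\partial M)\subseteq\partial X$. By Stokes,
\[
0=\int_C\sigma^*\dd\alpha=\int_M c_{\tilde{Z}}(t_1)^*\alpha-\int_M c_{\tilde{Z}}(t_0)^*\alpha+\int_S\sigma^*\alpha ,
\]
where I use $\sigma\circ c_{\tilde{X}}(t)=c_{\tilde{Z}}(t)$ to rewrite the slice integrals as integrals over $M$.

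On the lateral piece, $\sigma$ takes values in $B$, so $\sigma^*\alpha=\sigma^*i_B^*\alpha=\dd(\sigma^*\beta)$ there. A second application of Stokes on $S\cong[t_0,t_1]\times\partial M$ (which has boundary $\{t_1\}\times\partial M$ and $\{t_0\}\times\partial M$, since $\partial M$ is closed) gives
\[
\int_S\sigma^*\alpha=\pm\left(\int_{\partial M}c_{\tilde{Z}}(t_1)^*\beta-\int_{\partial M}c_{\tilde{Z}}(t_0)^*\beta\right).
\]
Substituting this into the first identity yields $\tilde{\alpha}(c_{\tilde{Z}}(t_1))=\tilde{\alpha}(c_{\tilde{Z}}(t_0))$, provided the sign matches the $-\int_{\partial M}\gamma^*\beta$ convention of Definition \ref{def:integration_upon_Cauchy_of_forms} with $k=0$.

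The main obstacle is exactly this orientation bookkeeping. The orientation that $\partial C$ induces on $S$, followed by the orientation that $\partial S$ induces on the two copies of $\partial M$, must come out consistent with the sign of the $\beta$ term in \eqref{eq:tilde_forms}. I would check it first in the model $X=[0,1]\times[0,1]$, $M=[0,1]$, and then argue that it holds in general because $\Phi$ is orientation-preserving. A secondary point is that the section $\sigma$ is only local. Here $\Phi$ is a global diffeomorphism onto $X$, so $\sigma$ is defined on all of $C$, and the argument above goes through without patching.
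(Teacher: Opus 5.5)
Your proposal follows the paper's argument: Stokes on the slab $C=\Phi([t_0,t_1]\times M)$, exactness of $\alpha$ on the lateral piece $S\subseteq\partial X$, and Stokes once more on $S$. As written, though, the proof is conditional (``provided the sign matches''). The sign you flag as the main obstacle needs neither a model computation nor orientation-preservation of $\Phi$, which Definition \ref{def:slicing} does not assume anyway. As in the paper, it is forced by $\partial\partial C=0$.

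Orient the slices $M_t:=\Phi(\{t\}\times M)$ by $M$. Then $\partial C=\epsilon\,(M_{t_1}-M_{t_0})+S$ for some $\epsilon=\pm1$; top and bottom always enter with opposite signs, since their outward normals are $\pm\Phi_*\partial_t$. So $0=\partial\partial C$ gives $\partial S=-\epsilon\,(\partial M_{t_1}-\partial M_{t_0})$, and your two Stokes identities combine into
\[
0=\epsilon\Bigl(\int_M c_{\tilde{Z}}(t_1)^*\alpha-\int_{\partial M}c_{\tilde{Z}}(t_1)^*\beta\Bigr)-\epsilon\Bigl(\int_M c_{\tilde{Z}}(t_0)^*\alpha-\int_{\partial M}c_{\tilde{Z}}(t_0)^*\beta\Bigr)
\]
for either value of $\epsilon$. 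Thus your $\pm$ is a $-$ (when $\epsilon=1$), and it automatically reproduces the $-\int_{\partial M}\gamma^*\beta$ convention of \eqref{eq:tilde_forms}.

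One point in your favour: your explicit use of the section $\sigma$ and the converse of Theorem \ref{thm:instantaneous_de_donder} is more precise than the paper's ``$c_{\tilde{Z}}(t)^*\dd\alpha=0$''. As written, that is an $(n+1)$-form on the $n$-manifold $M$ and hence vacuous. What the Stokes step actually uses is $\sigma^*\dd\alpha=0$ on $C$, exactly as you state it.
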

\begin{proof}
	Pick $t_1<t_2$ in the domain of the solution curve and denote
	$M_1:=c_{\tilde{X}}(t_1)$, $M_2:=c_{\tilde{X}}(t_2)$. As $c_{\tilde{X}}$ splits $X$,
	we can consider the piece $U\subseteq X$ identified with $M\times[t_1,t_2]$, where
	$M_1$ is identified with $M\times\{t_1\}$, $M_2$ with $M\times\{t_2\}$, and we
	denote by $V$ the boundary piece corresponding to $\partial M\times[t_1,t_2]$.
	Since $\alpha$ is a conserved quantity, $c_{\tilde{Z}}(t)^*\dd\alpha=0$ for all $t$,
	so integrating over $U$ and applying Stokes' theorem we obtain
	\[
		0=\int_{M_2}c_{\tilde{Z}}(t)^*\alpha+\int_{V}c_{\tilde{Z}}(t)^*\alpha-
		\int_{M_1}c_{\tilde{Z}}(t)^*\alpha .
	\]
	Now, $V$ lies inside the boundary, where $\alpha=\dd\beta$, and from
	$0=\partial\partial U=\partial M_2+\partial V-\partial M_1$ we get, applying
	Stokes' theorem once more,
	\[
		\int_{V}c_{\tilde{Z}}(t)^*\alpha=\int_{\partial V}c_{\tilde{Z}}(t)^*\beta=
		\int_{\partial M_1}c_{\tilde{Z}}(t)^*\beta-\int_{\partial M_2}
		c_{\tilde{Z}}(t)^*\beta .
	\]
	Substituting in the previous equation and grouping the terms of each slice together
	we obtain $\tilde{\alpha}(c_{\tilde{Z}}(t_2))=\tilde{\alpha}(c_{\tilde{Z}}(t_1))$,
	which is the claim as $t_1$ and $t_2$ were arbitrary.
\end{proof}
\begin{corollary}
	\label{cor:conserved_charge_lagrangian_symmetry}
	In particular, if $\xi_Y$ is a symmetry of the Lagrangian for the De Donder
	equations, the preceding formula applied to the conserved quantity
	$\iota_{\xi^{(1)}_Y}\Theta_L$ says that
	\[
		\int_{M}c_{\tilde{Z}}(t)^*\iota_{\xi^{(1)}_Y}\Theta_L+\int_{\partial M}
		c_{\tilde{Z}}(t)^*\iota_{\xi^{(1)}_Y}\Pi
	\]
	is conserved along every solution of the instantaneous De Donder equations
	\eqref{eq:presymplectic_equation} whose projection $c_{\tilde{X}}$ splits $X$.
\end{corollary}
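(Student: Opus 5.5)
The plan is to apply Proposition \ref{prop:conservation_along_solutions} with $\alpha:=\iota_{\xi^{(1)}_Y}\Theta_L$. This requires three inputs: that $\alpha$ is a conserved quantity for the De Donder equations, that it is exact on $B$ with an explicit primitive $\beta$, and that this primitive is $\beta=-\iota_{\xi^{(1)}_Y}\Pi$. With these in hand, the function $t\mapsto\int_M c_{\tZ}(t)^*\alpha-\int_{\partial M}c_{\tZ}(t)^*\beta$ is exactly the displayed expression, and Proposition \ref{prop:conservation_along_solutions} gives its constancy.

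For conservation, recall that a symmetry of the Lagrangian for the De Donder equations satisfies $\pounds_{\xi^{(1)}_Y}\Theta_L=0$ (Definition \ref{def:5.6}). In particular $\xi^{(1)}_Y$ is a Cartan symmetry with $\alpha=0$. Cartan's formula then gives
\[
\iota_{\xi^{(1)}_Y}\Omega_L=-\iota_{\xi^{(1)}_Y}\dd\Theta_L=\dd\iota_{\xi^{(1)}_Y}\Theta_L .
\]
Let $\sigma$ be any De Donder solution. Pulling back gives $\sigma^*\dd\iota_{\xi^{(1)}_Y}\Theta_L=\sigma^*(\iota_{\xi^{(1)}_Y}\Omega_L)=0$ by \eqref{eq:De-Donder-equations}. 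This is the Noether theorem for Cartan symmetries, which I would simply invoke. Through the section/curve identification (Theorem \ref{thm:instantaneous_de_donder}), the curve $c_{\tZ}$ corresponds to such a $\sigma$, so the hypothesis of Proposition \ref{prop:conservation_along_solutions} holds.

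For exactness on $B$, I would use the boundary conditions $i_B^*\Theta_L=\dd\Pi$, the tangency of $\xi^{(1)}_Y$ to $B$, and $\pounds_{\xi^{(1)}_Y|_B}\Pi=0$. Since $\xi^{(1)}_Y$ is tangent to $B$, interior product commutes with $i_B^*$, so
\[
i_B^*\iota_{\xi^{(1)}_Y}\Theta_L=\iota_{\xi^{(1)}_Y|_B}\dd\Pi=\pounds_{\xi^{(1)}_Y|_B}\Pi-\dd\iota_{\xi^{(1)}_Y|_B}\Pi=\dd\left(-\iota_{\xi^{(1)}_Y|_B}\Pi\right).
\]
Hence $\beta=-\iota_{\xi^{(1)}_Y}\Pi$. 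Substituting into Proposition \ref{prop:conservation_along_solutions} turns $-\int_{\partial M}c_{\tZ}(t)^*\beta$ into $+\int_{\partial M}c_{\tZ}(t)^*\iota_{\xi^{(1)}_Y}\Pi$, which is the claimed formula.

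The main obstacle is sign bookkeeping. The first point is the sign convention of Cartan's formula: the paper writes $\pounds_X=\iota_X\dd-\dd\iota_X$ at one point, so I would carefully use the standard $\pounds_X=\dd\iota_X+\iota_X\dd$ in the computation above. The second point is matching the resulting sign of $\beta$ with the sign in front of the boundary integral in Proposition \ref{prop:conservation_along_solutions}, whose definition of $\tilde\alpha$ for $k=0$ carries $-(-1)^0$. If the signs disagree, the stated formula would need $\Pi$ replaced by $-\Pi$, so I would double-check against Definition \ref{def:integration_upon_Cauchy_of_forms}.
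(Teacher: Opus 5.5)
Your proposal is correct and follows the paper's route. You apply Proposition \ref{prop:conservation_along_solutions} to $\alpha=\iota_{\xi^{(1)}_Y}\Theta_L$, with conservation coming from $\pounds_{\xi^{(1)}_Y}\Theta_L=0$, and with the boundary primitive $\beta=-\iota_{\xi^{(1)}_Y}\Pi$ obtained exactly as in Appendix \ref{app:some_symmetries_computations} (cf.\ Remark \ref{rem:momentum_map_on_B}). Your sign worry is already settled by your own computation: for $k=0$ the prefactor $-(-1)^0=-1$ of Definition \ref{def:integration_upon_Cauchy_of_forms} is the same one appearing in Proposition \ref{prop:conservation_along_solutions}, so $-\int_{\partial M}c_{\tilde{Z}}(t)^*\beta=+\int_{\partial M}c_{\tilde{Z}}(t)^*\iota_{\xi^{(1)}_Y}\Pi$ and no replacement of $\Pi$ by $-\Pi$ is needed.
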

\begin{remark}
	\label{rem:why_exactness_on_B}
	This closes the circle opened in the motivation of Section
	\ref{sec:symmetries_for_euler_lagrange_equations}. The two ``extra conditions'' we
	imposed there were not technical decoration: the tangency of $\xi^{(1)}_Y$ to $B$
	guarantees that the flow of the symmetry maps admissible solutions to admissible
	solutions, while $\pounds_{\xi^{(1)}_Y|_B}\Pi=0$ is exactly what makes the conserved
	quantity \textbf{exact} on $B$, which by Definition
	\ref{def:integration_upon_Cauchy_of_forms}/\eqref{eq:tilde_forms} is the hypothesis needed for the
	momentum $\tilde{\alpha}$ to be defined at all, and by Proposition
	\ref{prop:conservation_along_solutions} the hypothesis needed for it to be
	conserved. Had we only asked for closedness, the boundary contribution
	$\int_V c^*\alpha$ in the proof above would have been an uncontrolled flux of the
	current through $\partial M$ and no conservation statement would survive.
\end{remark}
\section{Localizable symmetries and the second Noether's theorem}
Everything up to this point has been (a geometric version of) the \textbf{first}
Noether's theorem: one symmetry, one current. We finally turn to the second one, which
concerns symmetries that can be turned on and off independently at different regions of
spacetime -what a theoretical physicist calls \textbf{local} or \textbf{gauge}
symmetries- and which, instead of producing new information, produce \textbf{identities}
among the field equations. This is the case of interest for us, as the gauge
symmetries of Section \ref{sec:gauge_group} are of exactly this
type.
\begin{definition}[Localizable symmetry]
	\label{def:localizable_symmetry}
	A symmetry of the Lagrangian $\xi_Y$ is said to be \textbf{localizable} if
	$\xi^{(1)}_Y$ vanishes on $\partial Z$ and, for every pair of open sets $U,U'$ in
	$X$ with disjoint closures, there exists another symmetry of the Lagrangian
	$\zeta_Y$ such that
	\[
		\xi^{(1)}_Y=\zeta^{(1)}_Y\quad\text{on }\pi_{ZX}^{-1}(U)\qquad\text{and}\qquad
		\zeta^{(1)}_Y=0\quad\text{on }\pi^{-1}_{ZX}(U')\cup\partial Z .
	\]
\end{definition}
\begin{remark}
	The definition is just the rigorous version of the sentence ``the symmetry depends
	on arbitrary functions on the spacetime''. For a \textbf{rigid} (global) symmetry, such
	as a spacetime translation or an internal rotation with a constant parameter,
	prescribing the symmetry on an open set already prescribes it everywhere, so no such
	$\zeta_Y$ can exist unless $\xi_Y$ vanishes identically. For a \textbf{gauge}
	symmetry, on the contrary, the parameter is an arbitrary (say compactly supported)
	section, and $\zeta_Y$ is obtained simply by multiplying the parameter by a bump
	function which is $1$ on $U$ and $0$ on $U'$; the whole point of Definition
	\ref{def:localizable_symmetry} is that \textbf{the symmetry algebra is infinite
	dimensional and locally supported}.
\end{remark}
\begin{theorem}[Second Noether's theorem \cite{leonSymmetriesClassicalField2004}]
	\label{thm:second_noether_theorem}
	Let $\xi_Y$ be a localizable symmetry and let $c_{\tilde{Z}}$ be a solution of the
	instantaneous De Donder equations \eqref{eq:presymplectic_equation}. Then the following
	equation (called \textbf{Noether's identity}) is satisfied
	\[
		\widetilde{\left(\iota_{\xi^{(1)}_Y}\Theta_L\right)}(c_{\tilde{Z}}(t))=0
	\]
	for all $t$. 
\end{theorem}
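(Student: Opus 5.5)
The plan is to combine conservation of the momentum (Proposition \ref{prop:conservation_along_solutions}, or equivalently Proposition \ref{prop:noether_presymplectic}) with the localizability property. A localizable symmetry can be switched off in one region of spacetime while left untouched in another, so its conserved charge must equal both its value at early times and zero.

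First, since $\xi_Y$ is a symmetry of the Lagrangian for the De Donder equations, Corollary \ref{cor:conserved_charge_lagrangian_symmetry} says that
\[
Q_\xi(t):=\widetilde{\left(\iota_{\xi^{(1)}_Y}\Theta_L\right)}(c_{\tilde{Z}}(t))
\]
is independent of $t$. The boundary term involving $\Pi$ drops out because $\xi^{(1)}_Y$ vanishes on $\partial Z$ by Definition \ref{def:localizable_symmetry}. The same holds for any other localizable symmetry $\zeta_Y$.

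Now fix a time $t_0$, and let $c_{\tilde X}$ be the projected curve, which splits $X\cong I\times M$. Choose $t_1<t_0$ and open sets $U,U'\subseteq X$ with disjoint closures such that:
\begin{itemize}
\item $U$ is a neighbourhood of the slice $c_{\tilde X}(t_0)(M)$, say the image of $(t_0-\epsilon,t_0+\epsilon)\times M$;
\item $U'$ is a neighbourhood of the slice $c_{\tilde X}(t_1)(M)$, say the image of $(t_1-\epsilon,t_1+\epsilon)\times M$ with $t_1+\epsilon<t_0-\epsilon$.
\end{itemize}
Localizability provides a symmetry $\zeta_Y$ with $\zeta^{(1)}_Y=\xi^{(1)}_Y$ over $U$ and $\zeta^{(1)}_Y=0$ over $U'\cup\partial Z$. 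The integrand $c_{\tilde Z}(t)^*\iota_{\zeta^{(1)}_Y}\Theta_L$ is computed pointwise along $c_{\tilde Z}(t)(M)\subseteq\pi_{ZX}^{-1}(c_{\tilde X}(t)(M))$. Hence $Q_\zeta(t_0)=Q_\xi(t_0)$ and $Q_\zeta(t_1)=0$. Conservation of $Q_\zeta$ then gives $Q_\xi(t_0)=Q_\zeta(t_0)=Q_\zeta(t_1)=0$. Since $t_0$ is arbitrary (if $t_0$ is the left endpoint of $I$, take $t_1>t_0$ instead), the Noether identity follows.

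The main obstacle is to justify conservation of $Q_\zeta$ along the given solution. The hypothesis $c_{\tilde Z}$ only solves \eqref{eq:presymplectic_equation}, so Proposition \ref{prop:conservation_along_solutions} needs its projection to split $X$; Theorem \ref{thm:instantaneous_de_donder} is what turns it into a De Donder section on which $\iota_{\zeta^{(1)}_Y}\Theta_L$ is conserved. I would assume, as is implicit in the statement, that $c_{\tilde X}$ splits $X$. If it does not, I would argue directly with Proposition \ref{prop:noether_presymplectic}, using that $\tilde\zeta$ is a Cartan symmetry with Hamiltonian function $\tilde Q_\zeta$ by Proposition \ref{prop:cartan_symmetry_induces_presymplectic}. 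A secondary point is that the definition requires $U,U'$ to have disjoint closures; for compact $M$ this is satisfied by the slabs chosen above.
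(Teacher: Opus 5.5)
Your proposal is correct, and it is essentially the standard argument behind the cited result; the thesis gives no proof of its own and simply refers to \cite{leonSymmetriesClassicalField2004}. That argument compares the charge on the slice at $t_0$ with the charge on a second, disjoint slice, using a symmetry $\zeta_Y$ that agrees with $\xi_Y$ near the first slice and vanishes near the second, and then concludes by conservation of the $\zeta$-charge. Your fallback for non-splitting curves does not work, however: the splitting of $X$ by $c_{\tilde X}$ is needed not only for conservation but to have disjoint slices at all, and without it the statement is false. For example, in Yang--Mills a constant curve solves \eqref{eq:presymplectic_equation} trivially, yet at Cauchy data violating the Gauss constraint $D_i p_a^{i0}=0$ the charge of a compactly supported gauge parameter equals $-\int_M \xi^a D_i p_a^{i0}\,\dd^n x_0\neq 0$, so Proposition \ref{prop:noether_presymplectic} alone cannot replace that hypothesis.
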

In other words, the conserved charge associated with a localizable
symmetry ($\iota_{\xi^{(1)}_Y}\Theta_L$) does not merely stay constant: it
\textbf{vanishes identically on-shell}.

\section{Momentum map}
\label{sect:momentum_map}
In symplectic mechanics, when a Lie group $G$ acts on the phase space preserving the
symplectic form, all the conserved quantities produced by Noether's theorem can be
packaged into a single object, the momentum map $J:M\to\mathfrak{g}^*$, which is not
only elegant but the starting point of Marsden-Weinstein reduction (see
\cite{baguisMarsdenWeinsteinReductionSymplectic2001} for example). We close this
chapter by doing exactly the same here, with the only (expected) difference that in
field theory a conserved quantity is an $n-$form rather than a function, so our momentum
map will take values in $\mathfrak{g}^*\otimes\Omega^n(Z)$: \textbf{one conserved current
per generator of the symmetry group}.
\subsection*{Lie group actions as symmetries}
If $G$ is a Lie group acting on $Y$, the action can be lifted to an action of $G$ on
$Z$, and the infinitesimal generator of the lifted action is the $1-$jet prolongation of
the infinitesimal generator of the original one, i.e., for each $\xi\in\mathfrak{g}$,
$\xi_Z=\xi_Y^{(1)}$ (where $\xi$ acts on $Y$ through the exponential map and 
$\xi_Y(y):=\der{}{t}\big|_{t=0}y\cdot\exp(t\xi)$).

\begin{definition}[Group of symmetries of the Lagrangian]
	\label{def:group_of_symmetries_of_the_lagrangian}
	We say that a Lie group $G$ \textbf{acts as a group of symmetries of the Lagrangian}
	if it defines an action on $Y$ that projects onto a compatible action on $X$, whose
	$1-$jet prolongation preserves $B$, and such that the flow $\phi_t^Z$ of $\xi_Z$
	verifies
	\[
		(\phi_t^Z)^*\mathcal{L}=\mathcal{L},\qquad(\phi_t^Z)^*\Pi=\Pi,\quad\forall t.
	\]
\end{definition}
The fact that the action is fibered implies that $\xi_Y$ is a projectable vector field;
therefore the condition $(\phi_t^Z)^*\mathcal{L}=\mathcal{L}$, infinitesimally expressed as
$\pounds_{\xi_Z}\mathcal{L}=0$, together with the two direct consequences of the
definition $i)$ $\xi_Z$ is tangent to $B$ and $ii)$ $\pounds_{(\xi_Z)|_B}\Pi=0$
states exactly that $\xi_Y$ is a symmetry of the Lagrangian in the sense of Definition
\ref{def:symmetries_of_the_lagrangian}, and by Proposition
\ref{prop:useful_proposition_for_checking_symmetries} (recall that $\xi_Y$ is
projectable) we even have $\pounds_{\xi_Z}\Theta_L=0$ on the nose. \textbf{So a group of
symmetries of the Lagrangian is just a whole Lie algebra worth of symmetries of the
Lagrangian, parametrized by} $\mathfrak{g}$.
\subsection*{The multisymplectic momentum map}
\begin{definition}[Momentum map]
	\label{def:momentum_map}
	Let $G$ act on $Y$ as a group of symmetries of the Lagrangian. The
	\textbf{momentum map} is the map
	\[
		J:Z\longrightarrow\mathfrak{g}^*\otimes\bigwedge^nZ
	\]
	defined by $\langle J(z),\xi\rangle:=\left(\iota_{\xi_Z}\Theta_L\right)_z$ for every
	$\xi\in\mathfrak{g}$. Equivalently, and this is the way we shall use it, $J$ assigns
	to each $\xi\in\mathfrak{g}$ the $n-$form
	\[
		J^\xi:=\iota_{\xi_Z}\Theta_L\in\Omega^n(Z),
	\]
	which is precisely the Noether current of Theorem
	\ref{thm:noether_theorem_symmetries_of_the_lagrangian}; in particular $J^\xi$ is a
	conserved quantity and we shall denote its associated momentum by
	$\widetilde{J^\xi}$.
\end{definition}
\begin{remark}
	\label{rem:momentum_map_on_B}
	On $B$, since $\iota_B^*\Theta_L=\dd\Pi$ and $\pounds_{(\xi_Z)|_B}\Pi=0$, we have
	$\iota_{(\xi_Z)|_B}\dd\Pi=-\dd\iota_{(\xi_Z)|_B}\Pi$ and therefore
	\[
		J^\xi|_B=\iota_{\xi_Z}\dd\Pi=-\dd\left(\iota_{\xi_Z}\Pi\right),
	\]
	which exhibits explicitly the exactness on the boundary demanded in Proposition
	\ref{prop:conservation_along_solutions}, with $\beta=-\iota_{\xi_Z}\Pi$.
\end{remark}
The local expression of $J^\xi$ is, of course, the one we already computed in
\eqref{eq:useful_local_expression_for_the_preserved_quantity}, so all the momentum map
computations of Chapter \ref{chap:multisymplectic_yang_mills_computations} reduce to
plugging the generators $\xi_Z$ into that formula.
\begin{proposition}
	\label{prop:dJ_equals_iota_omega}
	For every $\xi\in\mathfrak{g}$,
	\[
		\dd J^\xi=\iota_{\xi_Z}\Omega_L,
	\]
	that is, $J^\xi$ is a Hamiltonian $n-$form with Hamiltonian vector field $\xi_Z$ in
	the sense of Definition \ref{def:hamiltonian_forms_and_vector_fields}.
\end{proposition}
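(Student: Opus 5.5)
The plan is to use Cartan's magic formula together with the fact, established just before the statement, that $\pounds_{\xi_Z}\Theta_L=0$ exactly, not merely modulo $\mathcal{I(C)}$. Indeed, a group of symmetries of the Lagrangian acts through fibered transformations, so each generator $\xi_Y$ is projectable. By Proposition \ref{prop:useful_proposition_for_checking_symmetries} this forces $\pounds_{\xi^{(1)}_Y}\Theta_L=0$, where $\xi_Z=\xi_Y^{(1)}$. I would first record this carefully, since it is the only place where the hypotheses on the action enter.

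With this in hand, I apply Cartan's formula for the Lie derivative of a form, $\pounds_{\xi_Z}\Theta_L=\dd\iota_{\xi_Z}\Theta_L+\iota_{\xi_Z}\dd\Theta_L$. The left-hand side vanishes. Using $\Omega_L=-\dd\Theta_L$ and the definition $J^\xi=\iota_{\xi_Z}\Theta_L$, this becomes
\[
0=\dd J^\xi-\iota_{\xi_Z}\Omega_L ,
\]
which is exactly the claim. Since $J^\xi$ is an $n$-form and $\iota_{\xi_Z}\Omega_L$ is an $(n+1)$-form, this is literally the defining relation of Definition \ref{def:hamiltonian_forms_and_vector_fields}. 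Hence $J^\xi$ is a Hamiltonian $n$-form with $\xi_Z$ as one of its Hamiltonian vector fields, and as the unique one if $L$ is regular, by Proposition \ref{prop:reg_lagr_implies_multisymp}.

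The main obstacle is a sign-convention issue. The paper occasionally writes Cartan's formula as $\pounds_X=\iota_X\dd-\dd\iota_X$ (see the proof of Proposition \ref{prop:field_of_bracket_is_Lie_bracket}), whereas the standard identity is $\pounds_X=\dd\iota_X+\iota_X\dd$. I will use the standard one explicitly. I will also double-check the result against the sign $\Omega_L=-\dd\Theta_L$ of Definition \ref{def:poincare-cartan-forms}, so that the final relation reads $\dd J^\xi=+\iota_{\xi_Z}\Omega_L$ and not its negative.

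As a sanity check, I would verify the identity in coordinates on a simple case. I would take a pure spacetime translation $\xi_Y=\partial_{x^\nu}$ for an $x$-independent $L$. Then equation \eqref{eq:useful_local_expression_for_the_preserved_quantity} gives $J^\xi$ explicitly, and its differential can be compared directly with $\iota_{\partial_{x^\nu}}\Omega_L$ computed from the local expression of $\Omega_L$.
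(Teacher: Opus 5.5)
Your proof is correct and is the paper's argument: projectability of $\xi_Y$ gives $\pounds_{\xi_Z}\Theta_L=0$ via Proposition \ref{prop:useful_proposition_for_checking_symmetries}, and Cartan's formula $\pounds_{\xi_Z}=\dd\iota_{\xi_Z}+\iota_{\xi_Z}\dd$ together with $\Omega_L=-\dd\Theta_L$ then yields $\dd J^\xi=\iota_{\xi_Z}\Omega_L$. Your caution about the sign convention is justified: the paper's proof here uses the standard form of Cartan's formula, not the one written in the proof of Proposition \ref{prop:field_of_bracket_is_Lie_bracket}.
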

\begin{proof}
	As $\xi_Y$ is projectable, $\pounds_{\xi_Z}\Theta_L=0$ by Proposition
	\ref{prop:useful_proposition_for_checking_symmetries}, so that by Cartan's magic
	formula and $\Omega_L=-\dd\Theta_L$
	\[
		0=\pounds_{\xi_Z}\Theta_L=\iota_{\xi_Z}\dd\Theta_L+\dd\iota_{\xi_Z}\Theta_L=
		-\iota_{\xi_Z}\Omega_L+dJ^\xi .
	\]
\end{proof}
Proposition \ref{prop:dJ_equals_iota_omega} is the exact analogue of the symplectic
identity $\iota_{\xi_M}\omega=dJ^\xi$, and it is what allows us to feed the momentum map
into the bracket machinery of Subsection
\ref{subsect:hamiltonian_forms_and_multisymplectic_brackets}
\begin{proposition}
	\label{prop:momentum_map_algebra}
	For every $\xi,\zeta\in\mathfrak{g}$,
	\begin{equation}
		\label{eq:momentum_map_algebra}
		\left\{J^\xi,J^\zeta\right\}=J^{[\xi,\zeta]}-\dd(\iota_{\xi_Z}\iota_{\zeta_Z}\Theta_L)
	\end{equation}
	(namely $\left\{J^\xi,J^\zeta\right\}=J^{[\xi,\zeta]}$ modulo exact $n$-forms),
	provided that $\xi\mapsto\xi_Y$ is a homomorphism of Lie algebras; if the
	identification is an anti-homomorphism (which is the case for a left action) a
	global minus sign appears.
\end{proposition}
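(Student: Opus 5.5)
By Definition \ref{def:multisymplectic_bracket} together with Proposition \ref{prop:dJ_equals_iota_omega}, I would compute the bracket directly, using $\xi_Z$ and $\zeta_Z$ as Hamiltonian vector fields of $J^\xi$ and $J^\zeta$. This gives
\[
\{J^\xi,J^\zeta\}=\iota_{\xi_Z}\iota_{\zeta_Z}\Omega_L=\iota_{\xi_Z}\dd J^\zeta .
\]
Since the bracket does not depend on the choice of Hamiltonian vector fields, this choice is legitimate. The whole proof then consists in rewriting $\iota_{\xi_Z}\dd J^\zeta$ through Cartan's formula in its standard form $\pounds_X=\iota_X\dd+\dd\iota_X$, the same form used in the proof of Proposition \ref{prop:dJ_equals_iota_omega}:
\[
\iota_{\xi_Z}\dd J^\zeta=\pounds_{\xi_Z}J^\zeta-\dd\,\iota_{\xi_Z}J^\zeta
=\pounds_{\xi_Z}\left(\iota_{\zeta_Z}\Theta_L\right)-\dd\left(\iota_{\xi_Z}\iota_{\zeta_Z}\Theta_L\right).
\]
The second term is already the exact correction appearing in \eqref{eq:momentum_map_algebra}.

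For the first term I would use the derivation rule $\pounds_X\iota_Y-\iota_Y\pounds_X=\iota_{[X,Y]}$. This gives
\[
\pounds_{\xi_Z}\iota_{\zeta_Z}\Theta_L=\iota_{[\xi_Z,\zeta_Z]}\Theta_L+\iota_{\zeta_Z}\pounds_{\xi_Z}\Theta_L .
\]
The last summand vanishes because $\pounds_{\xi_Z}\Theta_L=0$. This holds by Proposition \ref{prop:useful_proposition_for_checking_symmetries}, since $\xi_Y$ is projectable for a group of symmetries of the Lagrangian, as noted after Definition \ref{def:group_of_symmetries_of_the_lagrangian}.

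It remains to identify $[\xi_Z,\zeta_Z]$ with $[\xi,\zeta]_Z$. Since $\xi_Z=\xi_Y^{(1)}$, Proposition \ref{prop:symmetries_will_form_subalgebra} gives $[\xi_Y^{(1)},\zeta_Y^{(1)}]=[\xi_Y,\zeta_Y]^{(1)}$. The hypothesis that $\xi\mapsto\xi_Y$ is a Lie algebra homomorphism gives $[\xi_Y,\zeta_Y]=[\xi,\zeta]_Y$. Combining both, $[\xi_Z,\zeta_Z]=[\xi,\zeta]_Y^{(1)}=[\xi,\zeta]_Z$, so $\iota_{[\xi_Z,\zeta_Z]}\Theta_L=J^{[\xi,\zeta]}$, which yields \eqref{eq:momentum_map_algebra}. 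In the anti-homomorphism case one has instead $[\xi_Y,\zeta_Y]=-[\xi,\zeta]_Y$, and the same chain produces $-J^{[\xi,\zeta]}$. This is the announced global sign for that term.

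I expect the only real obstacle to be sign bookkeeping, not any conceptual step. One source is the order of contractions in the bracket, $\iota_{X_\alpha}\iota_{X_\beta}$ versus $\Omega_L(X_\alpha,X_\beta)$. Another is the sign slip in the paper's statement of Cartan's formula, and a third is the convention for $\iota$ of multivectors \eqref{eq:iota-multivector}. I would fix the conventions exactly as in the proofs of Propositions \ref{prop:field_of_bracket_is_Lie_bracket} and \ref{prop:dJ_equals_iota_omega} and check the final sign against the $\ell=n$ footnote computation. I would also note that the exact term does not affect the class in $\widehat{\mathcal{H}}^n(Z)$, which gives the \emph{modulo exact forms} version of the statement.
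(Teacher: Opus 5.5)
Your proposal is correct and follows the paper's proof essentially step by step. Both arguments use $\{J^\xi,J^\zeta\}=\iota_{\xi_Z}\dd J^\zeta$, then Cartan's formula, then $[\pounds_X,\iota_Y]=\iota_{[X,Y]}$ together with $\pounds_{\xi_Z}\Theta_L=0$, and finally Proposition \ref{prop:symmetries_will_form_subalgebra} with the homomorphism hypothesis to obtain $[\xi_Z,\zeta_Z]=[\xi,\zeta]_Z$. Your remark that in the anti-homomorphism case only the $J^{[\xi,\zeta]}$ term changes sign, while the exact correction does not, is slightly more precise than the paper's ``overall minus sign''.
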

\begin{proof}
	By Proposition \ref{prop:dJ_equals_iota_omega} the vector field $\zeta_Z$ is a
	Hamiltonian vector field for $J^\zeta$, so using Definition
	\ref{def:multisymplectic_bracket} we have
	\[
		\{J^\xi,J^\zeta\}=\iota_{\xi_Z}\iota_{\zeta_Z}\Omega_L=\iota_{\xi_Z}dJ^\zeta
		=\pounds_{\xi_Z}J^\zeta-\dd(\iota_{\xi_Z}J^\zeta),
	\]
	and so Equation \eqref{eq:momentum_map_algebra} follows after noticing
	\[
		\pounds_{\xi_Z}J^\zeta=\pounds_{\xi_Z}(\iota_{\zeta_Z}\Theta_L)=
		\iota_{\pounds_{\xi_Z}\zeta_Z}\Theta_L+\iota_{\zeta_Z}\pounds_{\xi_Z}\Theta_L
		=\iota_{[\xi_Z,\zeta_Z]}\Theta_L+0=\iota_{[\xi_Y^{(1)},\zeta_Y^{(1)}]}\Theta_L
	\]
	\[
		\overset{\text{Prop. }\ref{prop:symmetries_will_form_subalgebra}}{=}
		\iota_{[\xi_Y,\zeta_Y]^{(1)}}\Theta_L\overset{(*)}{=}\iota_{[\xi,\zeta]^{(1)}_Y}\Theta_L=
		\iota_{[\xi,\zeta]_Z}\Theta_L\overset{\text{Prop. }\ref{prop:dJ_equals_iota_omega}}{=}
		J^{[\xi,\zeta]},
	\]
	where $\iota_{\zeta_Z}\pounds_{\xi_Z}\Theta_L=0$ follows from $\xi_Y$ being projectable
	and $(*)$ follows from the hypothesis of $\xi\mapsto\xi_Y$ being an homomorphism of Lie
	algebras -in the case of it being an anti-homomorphism, we would get an overall minus
	sign in Equation \eqref{eq:momentum_map_algebra}.
	Observe that the argument never used uniqueness of the Hamiltonian vector fields.
\end{proof}
\begin{remark}
Note that \textbf{no regularity of} $\bm{L}$ \textbf{is
needed}, which matters for us since Yang--Mills is only almost regular. Note also that
Equation \eqref{eq:momentum_map_algebra} do not have the exact term in the case
that both $\xi_Y$ and $\zeta_Y$ are $\pi_{YX}$ vertical. This fact is straightforward once
considered the local expressions of $\Omega_L$ (cf. Definition
\ref{def:poincare-cartan-forms}) and of the jet prolongations (cf. Definition
\ref{def:1jet-prolongation}, where because of verticality $\xi_Y^\mu=0$) and will be
specifically met in Subsection \ref{sub:top_forms_ell_n_}, implying hence the closeness
of the algebra.
\end{remark}
\subsection*{The momentum map in the instantaneous framework}
If $G$ acts on $Y$ as a group of symmetries of the Lagrangian, it induces an action on
$\tilde{Z}$ defined pointwise on the image of every embedding $\gamma\in\tilde{Z}$, and
for $\xi\in\mathfrak{g}$ the vector field $\tilde{\xi}_{\tilde{Z}}\in\mathfrak{X}(\tilde{Z})$
induced by $\xi_Z$ is, by Proposition
\ref{prop:cartan_symmetry_induces_presymplectic}, a Cartan's symmetry of the
presymplectic system $(\tilde{Z},\widetilde{\Omega_L})$.
\begin{definition}[Instantaneous momentum map]
	\label{def:instantaneous_momentum_map}
	The presymplectic form $\widetilde{\Theta_L}$ induces a momentum map
	\[
		\tilde{J}:\tilde{Z}\longrightarrow\mathfrak{g}^*,\qquad
		\tilde{J}^\xi:=\langle\tilde{J},\xi\rangle=\iota_{\tilde{\xi}_{\tilde{Z}}}
		\widetilde{\Theta_L}:\tilde{Z}\longrightarrow\mathbb{R},
	\]
	which, unwinding Definition \ref{def:integration_upon_Cauchy_of_forms}, is explicitly
	\begin{equation}
		\label{eq:instantaneous_momentum_map_local}
		\tilde{J}^\xi(\gamma)=\int_M\gamma^*\left(\iota_{\xi_Z}\Theta_L\right)+
		\int_{\partial M}\gamma^*\left(\iota_{\xi_Z}\Pi\right),\qquad\gamma\in\tilde{Z}.
	\end{equation}
\end{definition}
Comparing \eqref{eq:instantaneous_momentum_map_local} with Corollary
\ref{cor:conserved_charge_lagrangian_symmetry} we immediately get
$\tilde{J}^\xi=\widetilde{J^\xi}$: \textbf{the instantaneous momentum map is the
momentum of the multisymplectic one}, so it is a conserved charge.

Repeating verbatim
the argument of Proposition \ref{prop:dJ_equals_iota_omega} in $\tilde{Z}$, or simply
applying Proposition \ref{prop:presymplectic_proposition} to Proposition
\ref{prop:dJ_equals_iota_omega}, we also have
\begin{proposition}
	\label{prop:instantaneous_momentum_map}
	For every $\xi\in\mathfrak{g}$,
	\[
		\dd\tilde{J}^\xi=\iota_{\tilde{\xi}_{\tilde{Z}}}\widetilde{\Omega_L},
	\]
	i.e. $\tilde{J}^\xi$ is a Hamiltonian function with Hamiltonian vector field
	$\tilde{\xi}_{\tilde{Z}}$ in the sense of symplectic mechanics; consequently, by Proposition
	\ref{prop:noether_presymplectic}, $\tilde{J}^\xi$ is constant along every solution
	of the instantaneous De Donder equations \eqref{eq:presymplectic_equation}, i.e., on-shell.
\end{proposition}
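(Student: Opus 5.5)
The plan is to derive the identity $\dd\tilde{J}^\xi=\iota_{\tilde{\xi}_{\tilde{Z}}}\widetilde{\Omega_L}$ from its multisymplectic counterpart, Proposition \ref{prop:dJ_equals_iota_omega}, by passing through the ``tilde'' calculus of Chapter \ref{chap:instantaneous_formalism}. After that, conservation follows at once from Proposition \ref{prop:noether_presymplectic}.

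First, I will identify $\tilde{J}^\xi$ with the momentum $\widetilde{J^\xi}$ of the $n$-form $J^\xi=\iota_{\xi_Z}\Theta_L$. Unwinding Definition \ref{def:integration_upon_Cauchy_of_forms} shows that $\iota_{\tilde\xi_{\tilde Z}}\widetilde{\Theta_L}$ is given by \eqref{eq:instantaneous_momentum_map_local}. By Remark \ref{rem:momentum_map_on_B}, $J^\xi$ is exact on $B$ with $\beta=-\iota_{\xi_Z}\Pi$, so $\widetilde{J^\xi}$ is well defined. Its boundary term $-\int_{\partial M}\gamma^*\beta=+\int_{\partial M}\gamma^*\iota_{\xi_Z}\Pi$ matches \eqref{eq:instantaneous_momentum_map_local}, so the two coincide.

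Second, I will take the exterior derivative and use Proposition \ref{prop:presymplectic_proposition} in the case of an $n$-form. This gives $\dd\widetilde{J^\xi}=\widetilde{\dd J^\xi}=\widetilde{\iota_{\xi_Z}\Omega_L}$, where the last equality is Proposition \ref{prop:dJ_equals_iota_omega}. It then remains to check that $\widetilde{\iota_{\xi_Z}\Omega_L}=\iota_{\tilde\xi_{\tilde Z}}\widetilde{\Omega_L}$ as $1$-forms on $\tilde Z$. Evaluated on $\tilde\zeta\in T_\gamma\tilde Z$, the left side is $\int_M\gamma^*(\iota_\zeta\iota_{\xi_Z}\Omega_L)$ plus a boundary term built from a primitive of $i_B^*\iota_{\xi_Z}\Omega_L$. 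The right side is $\widetilde{\Omega_L}(\tilde\xi,\tilde\zeta)$, which has no boundary term because $i_B^*\Omega_L=0$. Since $\xi_Z$ is tangent to $B$, we have $i_B^*\iota_{\xi_Z}\Omega_L=\iota_{\xi_Z|_B}i_B^*\Omega_L=0$, so the primitive can be taken to be zero; the same argument appears in the proof of Proposition \ref{prop:brackets_commute_with_tilde}. The interior terms agree up to the ordering convention $\iota_{\zeta_2}\iota_{\zeta_1}$ fixed in the definition of $\widetilde{\Omega_L}$, and the sign will be tracked against \eqref{eq:tilde_forms}.

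An alternative route, which I would use as a cross-check, repeats the argument of Proposition \ref{prop:dJ_equals_iota_omega} directly on $\tilde Z$. We have $\pounds_{\tilde\xi}\widetilde{\Theta_L}=\widetilde{\pounds_{\xi_Z}\Theta_L}=0$, using the Lie-derivative compatibility proposition together with Proposition \ref{prop:useful_proposition_for_checking_symmetries}. Then Cartan's formula and $\widetilde{\Omega_L}=-\dd\widetilde{\Theta_L}$ give $0=-\iota_{\tilde\xi}\widetilde{\Omega_L}+\dd\iota_{\tilde\xi}\widetilde{\Theta_L}$, which is exactly the claim. In this route the boundary contributions are handled by the hypothesis $\pounds_{\xi_Z|_B}\Pi=0$.

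Finally, $\tilde{J}^\xi$ is a Hamiltonian function for the Cartan symmetry $\tilde\xi_{\tilde Z}$ in the sense of Definition \ref{def:presymplectic_cartan_symmetry}, so Proposition \ref{prop:noether_presymplectic} shows it is constant along solutions of \eqref{eq:presymplectic_equation}. I expect the main obstacle to be the careful bookkeeping of the boundary terms and signs in the tilde construction, since the $\partial M$ integrals and the contraction-order convention are where a discrepancy would appear. The tangency of $\xi_Z$ to $B$ and the invariance of $\Pi$ are exactly what make these terms cancel.
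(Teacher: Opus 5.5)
Your proposal is correct and follows the paper's approach. The paper justifies the statement by exactly your two routes: applying $\widetilde{\dd\alpha}=\dd\tilde\alpha$ (Proposition \ref{prop:presymplectic_proposition}) to $\dd J^\xi=\iota_{\xi_Z}\Omega_L$, or repeating the Cartan-formula argument on $\tilde Z$, and then invoking Proposition \ref{prop:noether_presymplectic}. Your boundary bookkeeping only makes explicit what the paper leaves implicit, and no sign needs tracking: with the paper's conventions, $\widetilde{\iota_{\xi_Z}\Omega_L}(\tilde\zeta)$ and $(\iota_{\tilde\xi_{\tilde Z}}\widetilde{\Omega_L})(\tilde\zeta)=\widetilde{\Omega_L}(\tilde\xi_{\tilde Z},\tilde\zeta)$ are both literally $\int_M\gamma^*(\iota_\zeta\iota_{\xi_Z}\Omega_L)$.
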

Furthermore, by Proposition \ref{prop:brackets_commute_with_tilde} the multisymplectic
and the presymplectic brackets are compatible with the Cauchy surface integration, so
Proposition \ref{prop:momentum_map_algebra} descends to
\[
	\left\{\tilde{J}^\xi,\tilde{J}^\zeta\right\}=\widetilde{\left\{J^\xi,J^\zeta\right\}}
	=\tilde{J}^{[\xi,\zeta]},
\]
i.e., \textbf{the charges close the same algebra as the symmetry
generators}. This is the field theoretical version of the classical statement that a
momentum map is a Poisson map, and it is the structure that organizes the whole
computation of conserved quantities.

\chapter{Gauge theory} \label{chap:gauge_theory}

Up to now we have developed the multisymplectic machinery
(Chapter~\ref{chap:multisymplectic}), its instantaneous counterpart
(Chapter~\ref{chap:instantaneous_formalism}) and the theory of symmetries and conserved
quantities (Chapter~\ref{chap:symmetries_and_Noether_theorems}) for a completely
general first order field theory, i.e.\ for an arbitrary configuration bundle
$\pi\colon Y\to X$. In order to reach the main goal of this work, Yang--Mills
theories, it only remains to identify \textbf{which} configuration bundle encodes a
gauge field and to translate the notion of gauge transformation into the language
of the previous chapters.

Nowadays, familiarization with fiber bundles (principal fiber bundles in the case of gauge
theories) among particle phenomenologists is limited beyond some general concepts as
(principal) connections, field strengths, covariant derivatives among others. Of course, this is not
always the case; indeed, an anecdote regarding this is the documented encounter (see
\cite{yangSelectedPapers194519802005}) between the physicist Chen Ning Yang and the
mathematician Shiing-Shen Chern in 1975.
Thanks to the mathematician Jim Simons, Yang learned about the beauty of fiber-bundle theory
and how his gauge fields are exactly geometric connections on principal fiber bundles.
Motivated by this, he drove to the house of Chern in El Cerrito (near Berkeley university)
to talk about it (among many other things).
In this encounter, Yang, years after developing Yang--Mills theory in collaboration with
Robert Mills, asked Chern how could mathematicians have dreamt about fiber bundle
concepts out of nowhere, without any reference to the physical world. Chern answered that
these concepts were indeed \textbf{natural} and \textbf{real}.
As David Bleecker mentions on \cite{Bleecker81}, `to the uninitiated, it would seem that the use
of (principal) fiber bundles and connections to describe the basic forces of nature is a
half-baked scheme devised by some clique of mathematicians bent on producing an application
for their work' but indeed is not.

Through this chapter, we will dive on the whole principal fiber bundle structure motivated
to both \textbf{show the algebraic richness of the mathematical formulation of gauge theory}
(and hence of modern theoretical physics) and the \textbf{identifications with the usual
physical gauge `computational tools'} (like gauge potentials, field strength tensors,
gauge transformations and much more concepts used by phenomenologist every day). But first,
let us try to show some light on the physical nature behind principal fiber bundles and how
these are the canonical mathematical structure to describe the notion of `choosing a gauge'.

For every spacetime point ($x\in X$), gauge inner nature of a theory is encapsulated 
in a vector space $V$ (for example $\mathbb{C}$ representing the phase of a wave function in
quantum mechanics), in the case of gauge fields (which will be the only case possible
in this work about \textbf{pure} Yang--Mills theory, also called \textbf{gluodynamics}),
the Lie algebra $V\equiv \mathfrak{g}$. 
Hence, a natural arena for our gauge theory will be one that associates for each spacetime
point a copy of $\mathfrak{g}$. This clearly reminds of a fiber bundle construction in
which \textbf{locally}, the bundle is identify with the base manifold times a typical fiber;
in this case we want the typical fiber to be a Lie group $G$ such that the tangent (vertical)
space resembles $\mathfrak{g}$, i.e., the inner `directions'. The naive approach of
just considering the space $X\times G$ can't describe phenomena like instantons, magnetic
monopoles, string theory and in general, situations in where the base manifold $X$ has a
non-trivial topology. 

Fibre bundles can be thought as a twisted, non-trivial ensemble of products between a
base manifold and a fiber manifold. In the case of principal fiber bundles the fiber manifold
is a Lie group $G$ while in \textit{vector fiber bundles} the fiber manifold is a vector space
(vector fiber manifolds were already devised on Subsection
\ref{subsect:the_vertical_endomorphism}; in that case fibers were vector spaces of the form
$T^*_{\pi(y)}X\otimes\mathcal{V}_y\pi$). In the end, as mentioned by Mark J.D. Hamilton on \cite{hamiltonMathematicalGaugeTheory2018}
`It is not an overstatement to say that gauge theory is ultimately the theory of principal
bundles and associated vector bundles'. In pure gauge theories (like pure Yang--Mills), we
fix the principal fiber bundle and then the associated vector bundle is canonically fully determined
by it, by the so-called \textbf{adjoint action} and the Lie algebra $\mathfrak{g}$. In full gauge
theories (like quantum chromodynamics with Lie group SU(3)) matter fields need an additional
structure, a \textbf{representation}; then we have the associated vector bundle of pure gauge
plus another associated vector bundle determined by the principal fiber bundle, by the vector
space $V$ of the representation and by the representation itself $\rho$. The ultimate reason
for the interaction between matter fields and gauge fields (for example once quantized,
in chromodynamics the interaction between quarks and gluons) is that both are related to the
\textbf{same principal bundle} \cite{hamiltonMathematicalGaugeTheory2018}.

Considering the principal fiber bundle, as every fiber bundle, we can consider local sections
that associate to base points $x\in X$ a \textit{choice} of a point on the fiber above it.
These local sections allow physicist to work locally on $X$ with object defined globally
on $P$ (like the U(1) connection on electromagnetism) at the price of having to make that
choice of local internal $\textbf{frame}$ for the gauge nature of the object. Of course,
physically measurable quantities should not depend on the specific choice of point above the fiber
neither should there be a specific `privileged' point. The nature of the free transitive action of
the Lie group over the principal fiber bundle avoid this defect allowing also a way of
translating local descriptions on $X$ of objects living on $P$ between two different physicists.

The physical (gauge) field of a Yang--Mills theory is a \textbf{gauge potential}, whose
geometric incarnation is a \textbf{principal connection} on a principal fiber bundle
$P\to X$ modelling the internal (colour) degrees of freedom. The subtlety, which
makes gauge theories genuinely richer than, say, a scalar field, is twofold.
First, connections are not sections of a bundle over $X$ in an obvious way, so it
is not immediately clear how they fit into the framework of
Chapter~\ref{chap:multisymplectic}; the answer is the \textbf{bundle of connections}
$C(P)=J^{1}P/G$ (Section~\ref{sec:bundle_of_connections}), whose sections are
exactly the connections and which turns out to carry precisely the affine
structure already met in Chapter~\ref{chap:multisymplectic}. Second, the theory is
invariant under an infinite dimensional group of \textbf{gauge transformations}
(Section~\ref{sec:gauge_group}); these are the localizable symmetries governed by
the second Noether theorem, and they are responsible for the Yang--Mills
Lagrangian being singular --- indeed \textbf{almost regular} --- so that the
constraint algorithm of Chapter~\ref{chap:multisymplectic} is unavoidable.

In this chapter we set up all these geometric ingredients; the explicit
computation of the Yang--Mills field equations, momenta and conserved currents is
carried out in Chapter \ref{chap:multisymplectic_yang_mills_computations}. Standard
references for the differential geometry of principal fiber bundles and connections are
\cite{sontzPrincipalBundlesClassical2019,kobayashiFoundationsDifferentialGeometry1996,KMS93}, and for the variational, physics-oriented viewpoint
\cite{Bleecker81}. The realization of gauge fields inside the
multisymplectic formalism follows \cite{Garcia77,CLMM01}; an alternative
multisymplectic treatment of the Yang--Mills equations, keeping the connection on
the total space of $P$ rather than reducing, is \cite{Helein14}.

After all this motivation, we introduce mathematically the concept of principal fiber bundles
\section{Principal bundles}\label{sec:principal_bundles}

Let $G$ be a (finite-dimensional) Lie group (non abelian in general) 
with dimension $N$ -some physical examples
are $G=U(1)$ for electromagnetism, $G=SU(2)$ for the weak interaction, $G=SU(3)$ for the
strong interaction (QCD), and $G=U(1)\times SU(2)\times SU(3)$ for the Standard Model-
then

\begin{definition}[Principal fiber bundle (PFB) \cite{Bleecker81}]\label{def:PFB}
A \textbf{principal fiber bundle (PFB)} consists of a manifold $P$ (called the total space),
a Lie group $G$, a base manifold $X$, and a projection map $\pi_P:P\longrightarrow X$
such that the following conditions hold.
\begin{enumerate}
\item[(A)] For each $g\in G$ there is a diffeomorphism $R_g:P\longrightarrow P$, $R_g(p)=pg$,
such that $(pg_1)g_2=p(g_1g_2)$ for all $g_1,g_2\in G$ and $p\in P$, and if $e\in G$ is the
identity element, then $pe=p$ for all $p\in P$. We require the function
$P\times G\longrightarrow P$, $(p,g)\mapsto pg$
to be a differentiable map. We suppose that if $pg=p$ for some $p\in P$ and
$g\in G$, then $g=e$. We summarize this paragraph by saying that
$\bm{G}$ \textbf{acts freely (and smoothly) on} $\bm{P}$ \textbf{to the right}.

\item[(B)] The map $\pi_P:P\longrightarrow X$ is surjective, and $\pi_P^{-1}(\pi_P(p))=
\{pg:g\in G\}$, which is, by definition, the \textbf{orbit} of $G$ through $p$. Note that
$X\cong P/G$ under the right action identification.

Note that for each fixed $p\in\pi_P^{-1}(x)$ (i.e. chosen, not canonically)
there is a diffeomorphism
\[
G\longrightarrow\pi_P^{-1}(x),
\qquad
g\longmapsto pg.
\]
Thus, \textbf{all the fibers} $\bm{\pi_P^{-1}(x)}$ \textbf{are diffeomorphic to} $\bm{G}$,
but there is no canonical identification of $\pi_P^{-1}(x)$ with $G$, and hence no
natural group structure on $\pi_P^{-1}(x)$.

\item[(C)] For each $x\in X$ there is an open set $U_\alpha$ with $x\in U_\alpha$
and a diffeomorphism
\begin{align*}
  T_\alpha:\pi_P^{-1}(U_\alpha)&\longrightarrow U_\alpha\times G\\
  p&\longmapsto \left(\pi_P(p),t_\alpha(p)\right),
\end{align*}
where $t_\alpha:\pi_P^{-1}(U_\alpha)\longrightarrow G$ has the property $t_\alpha(pg)=t_\alpha(p)g$
for all $g\in G$ and $p\in\pi_P^{-1}(U_\alpha)$ (i.e. is $\bm{G}$\textbf{-equivariant}). 
The map $T_\alpha$ is called a \textbf{local trivialization (LT)}, or 
(in \textit{physics language}) a \textbf{choice of gauge}.
\end{enumerate}
\end{definition}
Once built the PFB, we need a way of translating different gauge choices as mentioned in the
motivation of the beginning of the chapter
\begin{definition}[Transition functions \cite{Bleecker81}]
  Let $T_\alpha:\pi_P^{-1}(U_\alpha)\to U_\alpha\times G$ and
  $T_\beta:\pi_P^{-1}(U_\beta)\to U_\beta\times G$ be two LTs of a PFB $\pi_P:P\to X$ with
  Lie group $G$. The \textbf{transition function} (also called \textbf{cocycle}) from
  $T_\alpha$ to $T_\beta$ is the map
  \begin{align*}
    g_{\alpha\beta}:U_\alpha\cap U_\beta&\to G\\
    x=\pi_P(p)&\longmapsto g_{\alpha\beta}(x):=t_\alpha(p)t_\beta(p)^{-1},
  \end{align*}
  defined this way such as ensuring $T_\alpha\circ T_\beta^{-1}$ maps $(x,t_\beta(p))$ for
  some $p\in\pi_P^{-1}(U_\alpha\cap U_\beta)$ to $(x,g_{\alpha\beta}(x)t_\beta(p))$.
  Note that $g_{\alpha\beta}(x)$ is independent of the choice of $p\in\pi_P^{-1}(x)$ and that
  each trivialization induces a canonical local section of $\pi_P$ given by
  $s_\alpha:=T_\alpha^{-1}(x,e):U_\alpha\to P_x$ for every $x\in U_\alpha$. In this
  way, once chosen two trivializations, the cocycle relates their canonical local sections
  as $s_\beta(x)=s_\alpha(x)\cdot g_{\alpha\beta}(x)$ for all $x\in U_\alpha\cap U_\beta$.
\end{definition}
\textbf{The transition functions} $\bm{g_{\alpha\beta}}$ \textbf{encode how two observers'
gauge choices are related}.
\begin{remark}
  Due to the properties of $s_\alpha$ and $s_\beta$ maps defined on Definition \ref{def:PFB}
  the following is satisfied for transition functions
  \begin{enumerate}
    \item $g_{\alpha\alpha}(x)=e$ for all $x\in U_\alpha$, 
    \item $g_{\alpha\beta}(x)g_{\beta\gamma}(x)=g_{\alpha\gamma}(x)$ for all $x\in
    U_\alpha\cap U_\beta\cap U_\gamma$. 
    \end{enumerate}
  Another important remark is that $P$ itself can be entirely built from these transition
  functions. Indeed, consider an open cover $\{U_\alpha\}$ of $X$ and for each pair of open
  sets a transition function $g_{\alpha\beta}$ satisfying the two (cocycle) conditions above,
  then it can be proved (cf. \cite{sontzPrincipalBundlesClassical2019} for example) that
  defining the set $P$ as
  \begin{equation}
    P := \left(\bigsqcup_{\alpha} U_{\alpha}\times G\right)\Big/\sim,
    \qquad
    (x,g)_{\beta}\sim(x,\,g_{\alpha\beta}(x)\,g)_{\alpha}
    \quad\text{for } x\in U_{\alpha}\cap U_{\beta}.
  \end{equation}
  where the obvious projection to $X$ is compatible with $\sim$ and descent to a proper
  PFB projection $\pi_P$; then $P$ becomes a PFB once considered the quotient topology on it.
  Note that in this construction \textbf{the whole twisted nature of
  the PFB is encoded in the transition functions} used to build it. Indeed, if all transition
  functions map to the identity element $e\in G$ then $P\cong X\times G$ globally.
  This fact as already shown \textbf{is always satisfied locally} by a LT so that
  \textbf{most explicit computations may be carried out in a single trivialization}.
\end{remark}

Because the action is free and fiber-preserving, differentiating it yields, for
every $\xi\in\mathfrak{g}=T_eG$, the \textbf{fundamental} vector field
$\xi_{P}\in\mathfrak{X}(P)$,
\begin{equation}
\xi_{P}(p):=\der{}{t}\Big|_{t=0}\,\left(p\cdot\exp(t\xi)\right)\equiv\der{}{t}\Big|_{t=0}\,
c(t),
\end{equation}
which is \textbf{vertical}, $\xi_{P}(p)\in \mathcal{V}_{p}\pi_{P}$ (since due to G preserving fibers,
the curve $c(t)$ never leaves the fiber above $\pi_P(p)$ and so $\pi_P\circ c(t)$ is a
constant map). For the rigorous definition of the exponential map on Lie theory and other
basic Lie concepts the reader can consult \cite{kirillovIntroductionLieGroups2008}.

\begin{remark}\label{remark:fundamental_vector_field}
  The map
  $\mathfrak{g}\to \mathcal{V}_{p}\pi_{P}$, $\xi\mapsto\xi_{P}(p)$, is a linear isomorphism for
  every $p\in P$: \textbf{the vertical subspaces of a PFB are canonically
  copies of the Lie algebra}. 
  Hence, every vertical vector $v\in\mathcal{V}_p\pi_P$ for every $p\in P$ is of the form
  $v=\xi_P(p)$ for a unique $\xi\in\mathfrak{g}$.
\end{remark}

Throughout, $G$ is a (finite dimensional) Lie group of dimension $N$ with Lie algebra
$\mathfrak{g}=T_{e}G$, and we fix a basis $\{e_{a}\}_{a=1}^{\dim{G}=N}$ of $\mathfrak{g}$ with structure
constants $[e_{b},e_{c}]=C^{a}_{bc}\,e_{a}$
-and \textbf{not} $\mathrm{i}C^{a}_{bc}\,e_{a}$ as usual in physics; this is just a
convention that depends on choosing the generators of the algebra to be
\textit{anti-hermitian} (mathematical perspective) or \textit{hermitian} (physical
perspective, usually in QFT contexts). We will stick to the mathematical convention-.
Note that $C^a_{cb}=-C^a_{bc}$ due the skew-symmetry property of the Lie bracket and
\[
\sum_{b=1}^{N}C^a_{cb}C^b_{de}+C^a_{eb}C^b_{cd}+C^a_{db}C^b_{ec}=0,\quad\forall
a,c,d,e\in\{1,\dots,N\}
\]
due to the Jacobi identity of the Lie bracket. 

We will also use this section to introduce the crucial concept of vector-valued forms because
-as we will see in Section \ref{sect:connections}- the canonical way of introducing connections
on a PFB is equivalent to a formulation using this type of forms.
\begin{definition}[Vector-valued forms \cite{Bleecker81} and forms in a vector fiber bundle]
  \label{def:vector-valued-forms}
  Let $V$ be a vector space with basis $v_1,\dots,v_m$ and consider $\alpha^1,\dots\alpha^m
  \in\Omega^k(M)$ $k-$forms living in a differentiable manifold $M$. Then, the map $x\mapsto
  \alpha^1_x v_1+\dots\alpha^m_x v_m\in \bigwedge^kT_x^*M\otimes V$ is called a $\bm{V-}$
  \textbf{valued} $\bm{k-}$\textbf{form}. The space of $V-$valued $k-$forms is denoted by
  $\Omega^k(M;V)$.

  In general, given a vector fiber bundle $\pi:E\to M$ with model fiber the vector space $V$, an
  $\bm{E}$\textbf{-valued differential} $\bm{k}$\textbf{-form} is a smooth section of the
  tensor product bundle $E\otimes\bigwedge^k T^*M$ over $M$, i.e. an element of
  $\Gamma(E\otimes\bigwedge^k T^*M)$. This general definition reduces to the vector-valued
  one on each fiber.

  Implicitly in this definition is the fact that all usual operators on forms on $M$ are
  naturally endowed in this vector-valued structure by just acting on the forms on the
  vector basis sum decomposition; as an example, the exterior derivative acts on a given
  $\alpha\in\Omega^k(M;V)$ -that decomposes as $\alpha=\alpha^1 v_1+\dots+\alpha^m v_m$- as
  \[
    \dd\alpha=(\dd\alpha^1)v_1+\dots(\dd\alpha^m)v_m.
  \]
  Note that $\Omega^k(M;V)\cong\Omega^k(M)\otimes_{\RR} V$ as $\RR-$vector spaces hence
  notations $\alpha^k v_k$ and $\alpha^k\otimes v_k$ will be used interchangeably. 

  Examples of spaces of vector-valued forms are $\Omega(P;\mathfrak{g})$,
  $\Omega(G;\mathfrak{g})$...
\end{definition}

Is also important to recall the group and algebra-level adjoint maps, which pervade the
whole chapter (you can consult \cite{sontzPrincipalBundlesClassical2019} for a more detailed
introduction of these representations and the Maurer--Cartan form).

\begin{definition}[Adjoint representations and Maurer--Cartan form]\label{def:adjoint}
For each $g\in G$, conjugation $C_{g}\colon G\to G$, $C_{g}(h):=g\,h\,g^{-1}$, fixes
the identity $e$; its differential there is the \textbf{adjoint representation of the
group},
\begin{equation}\label{eq:Ad_def}
\Ad\colon G\longrightarrow GL(\mathfrak{g}),\qquad
\Ad_{g}:=T_{e}C_{g},
\end{equation}
which is a Lie group homomorphism. Differentiating $\Ad$ at $e$ gives the
\textbf{adjoint representation of the algebra},
\begin{equation}\label{eq:ad_def}
\ad\colon\mathfrak{g}\longrightarrow\mathfrak{gl}(\mathfrak{g}),\qquad
\ad_{\xi}(\zeta)=[\xi,\zeta].
\end{equation}
Finally, the \textbf{(left)
Maurer--Cartan form} $\theta_{G}\in\Omega^{1}(G;\mathfrak{g})$ is defined by
$\theta_{G}|_{g}:=(L_{g^{-1}})_{*,g}$, i.e.\ it left-translates a tangent vector on $g$ back
to $e$. For a matrix group these are simply
$\Ad_{g}\zeta=g\,\zeta\,g^{-1}$, $\ad_{\xi}\zeta=[\xi,\zeta]$ (the matrix commutator) and
$\theta_{G}=g^{-1}\dd g$ (note that things are trivial if the Lie group is abelian).
The fact that the adjoint representation coincides with the bracket of the Lie algebra,
i.e., that $\ad_{\xi}(\zeta)=[\xi,\zeta]$ is proved in Theorem 0.3.9 of \cite{Bleecker81}.
\end{definition}

\section{Connections, the connection form and the adjoint bundle}\label{sect:connections}
A connection is a coherent, $G$-invariant prescription of ``horizontality'' that
allows one to differentiate and to compare fibers.

\begin{definition}[Principal connection]\label{def:connection}
A \textbf{principal connection} (or PFB connection) on $P$ is a $G$-invariant distribution
$H\subset TP$ complementary to the vertical one, i.e., 
\begin{equation}
T_{p}P=\mathcal{V}_{p}\pi_{P}\oplus H_{p},\qquad
(R_{g})_{*}H_{p}=H_{pg}\quad\forall g\in G,
\end{equation}
called the \textbf{horizontal} distribution. Equivalently (cf.
\cite{sontzPrincipalBundlesClassical2019}), a connection is a
$\mathfrak{g}$-valued $1$-form $\omega\in\Omega^{1}(P;\mathfrak{g})$, the
\textbf{connection form}, such that
\begin{enumerate}
  \item $\omega(\xi_{P})=\xi$ for all $\xi\in\mathfrak{g}$
        (it reproduces the fundamental fields), and
  \item $R_{g}^{*}\omega=\Ad_{g^{-1}}\circ\,\omega$ for all $g\in G$
        (it is $\bm{\mathrm{Ad}}$\textbf{-equivariant}),
\end{enumerate}
the two viewpoints being related by $H_{p}=\ker\omega_{p}$, and from the form perspective
of the PFB connection $H_p$ is defined as the set of vectors annihilated by $\omega$.
\end{definition}
\begin{remark}
  Any PFB admits a connection. C.f. \cite{kobayashiFoundationsDifferentialGeometry1996} for
  a proof of this fact.
\end{remark}
A principal connection is thus a very particular Ehresmann connection on
$\pi_{P}$; the two extra conditions above are exactly the $G$-invariance built into
Definition~\ref{def:connection}. Given a local section
$s_{\alpha}\colon U_{\alpha}\subseteq X\to P$ (which naturally induces a local trivialization
$T_\alpha:\pi^{-1}(U)\to U\times G$ by
$T_\alpha(s_\alpha(x)g):=(x,g)$), the connection is
described on $U_{\alpha}$ by its pullback; this is what physicist call the
\textbf{gauge potential}
\begin{equation}
A_{\alpha}:=s_{\alpha}^{*}\omega\in\Omega^{1}(U_{\alpha};\mathfrak{g}),\qquad
A_{\alpha}=A^{a}_{\mu}(x)\,\dd x^{\mu}\otimes e_{a},\quad x\in U_\alpha.
\end{equation}
On an overlap $U_{\alpha}\cap U_{\beta}$, writing $g:=g_{\alpha\beta}$ and denoting
by $\theta_{G}$ the (left) Maurer--Cartan form of $G$, the potentials are related by
(cf. \cite{Bleecker81} recalling that $s_\beta(x)=s_\alpha(x)\cdot g_{\alpha\beta}(x)$)
\begin{equation}\label{eq:gauge_transf_A}
A_{\beta}=\Ad_{g^{-1}}\circ A_{\alpha}+g^{*}\theta_{G},
\end{equation}
which for a matrix group (physics `notation') reads $A_{\beta}=g^{-1}A_{\alpha}\,g+g^{-1}\dd g$. The
inhomogeneous Maurer--Cartan term $g^{-1}\dd g$ is what prevents $A$ from being
tensorial and it is the origin of the covariant derivative below.
Note that in the case of $G=\mathrm{U}(1)$ (abelian), we get (up to the $\mathrm{i}$ dependent
of the convention on the Lie algebra basis mentioned just before Definition
\ref{def:vector-valued-forms}) the usual $A^\prime=A+\nabla\alpha$, so things seem to be going on the
right track.

\begin{remark}
The affine nature already appears here: recalling Remark \ref{remark:fundamental_vector_field}
is easy to show that the difference of two connection forms
$\omega_{1}-\omega_{2}$ is horizontal (i.e. it vanishes upon vertical vector fields)
and $\Ad$-equivariant, which is called \textbf{tensorial}, so the space of connections is
an affine space modelled on the vector space of tensorial $\mathfrak{g}$-valued $1$-forms.
This is the germ of Theorem \ref{thm:connections} below.
\end{remark}

Physicists work always with the gauge potential on a given neighborhood of spacetime, so
one could ask when does a form on $\Omega(P;\mathfrak{g})$ projects down to a proper form
defined on the whole spacetime $X$ and not just locally. For this, we have to introduce
first the concept of an \textit{associated vector bundle} to a PFB $P$ and a representation
$\rho:G\to\mathrm{GL}(V)$ of $G$ over a $\mathbb{K}-$vector space $V$ (for $\mathbb{K}$
either $\CC$ or $\RR$); the \textit{adjoint bundle} is the instance relevant here.

Recall that a representation such as $\rho$ is a group homomorphism hence $\rho(g_1g_2)=
\rho(g_1)\circ\rho(g_2)$; we also have that $\rho(g^{-1})=\rho(g)^{-1}$.

\begin{theorem}[\cite{hamiltonMathematicalGaugeTheory2018}]
\label{thm:associated}
Let $P$ be a PFB with Lie group $G$ and $\rho:G\to\mathrm{GL}(V)$ representation of a
$\mathbb{K}-$vector space. Consider the right action of $G$ over $P\times V$ as
\begin{align*}
  (P\times V)\times G&\longrightarrow P\times V\\
  (p,v,g)&\longmapsto (p,v)\cdot g:=(pg,\rho(g)^{-1}v)
\end{align*}
and then the quotient space by that right action $(P\times V)/G:=E$. If we consider
the canonical projection $\pi_E:E\to X$ by $\pi_E((p,v))=\pi_P(p)$ then $(E,\pi_E)$ is a
$\mathbb{K}-$vector bundle over $X$ with fibers $E_x\cong V$.
\end{theorem}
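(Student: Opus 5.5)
The plan is to build the vector bundle structure on $E=(P\times V)/G$ directly out of the local trivializations of $P$ (Definition \ref{def:PFB}, item (C)). For each $x\in X$ we pick $p\in\pi_P^{-1}(x)$ and set
\[
\iota_p:V\to E_x:=\pi_E^{-1}(x),\qquad v\mapsto[p,v].
\]
We first check that $\pi_E$ is well defined: $\pi_P(pg)=\pi_P(p)$, so $\pi_E$ is constant on $G$-orbits. Next we show $\iota_p$ is a bijection onto $E_x$. For surjectivity, any $[q,w]\in E_x$ has $q=pg$ for a unique $g$ by (B) and freeness, and $[pg,w]=[p,\rho(g)w]$. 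For injectivity, $[p,v]=[p,v']$ means $(p,v')=(pg,\rho(g)^{-1}v)$ for some $g$; freeness forces $g=e$, hence $v=v'$.

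We then transport the vector space structure of $V$ to $E_x$ via $\iota_p$ and check that it does not depend on $p$. Since $\iota_{pg}(v)=[pg,v]=[p,\rho(g)v]=\iota_p(\rho(g)v)$, we get $\iota_{pg}=\iota_p\circ\rho(g)$. Because $\rho(g)$ is linear and invertible, the two induced linear structures coincide. This gives canonical linear isomorphisms $E_x\cong V$.

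For the smooth structure and local triviality we use a trivialization $T_\alpha$ with its local section $s_\alpha$ and define
\[
\Phi_\alpha:U_\alpha\times V\to\pi_E^{-1}(U_\alpha),\qquad (x,v)\mapsto[s_\alpha(x),v].
\]
By the previous paragraph $\Phi_\alpha$ is a bijection that is linear on fibers. Its inverse is $[p,w]\mapsto(\pi_P(p),\rho(t_\alpha(p))w)$. This inverse is well defined: using the equivariance $t_\alpha(pg)=t_\alpha(p)g$, the class representative $(pg,\rho(g)^{-1}w)$ goes to $\rho(t_\alpha(p)g)\rho(g)^{-1}w=\rho(t_\alpha(p))w$. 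Here $p=s_\alpha(x)t_\alpha(p)$, so $[p,w]=[s_\alpha(x),\rho(t_\alpha(p))w]$. On overlaps we use $s_\beta=s_\alpha g_{\alpha\beta}$ to compute
\[
\Phi_\alpha^{-1}\circ\Phi_\beta(x,v)=\bigl(x,\rho(g_{\alpha\beta}(x))v\bigr).
\]
This map is smooth in $(x,v)$ because $\rho$ is a smooth homomorphism and $g_{\alpha\beta}$ is smooth, and it is linear in $v$. The standard vector bundle construction lemma then gives a unique manifold topology and smooth structure on $E$ for which each $\Phi_\alpha$ is a diffeomorphism and a local trivialization. Explicitly, we glue the $U_\alpha\times V$ with the $GL(V)$-valued cocycle $\rho\circ g_{\alpha\beta}$, which satisfies the cocycle conditions by the Remark on transition functions.

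The main obstacle is reconciling this glued smooth structure with the quotient topology on $(P\times V)/G$, that is, showing the two agree. The first thing we will try is to show that $P\times V\to E$ is a smooth submersion in the glued structure. Locally it reads $(p,w)\mapsto(\pi_P(p),\rho(t_\alpha(p))w)$, which is clearly a submersion. A surjective open continuous map is a quotient map, so the glued topology equals the quotient topology. If this becomes delicate, we will simply define the manifold structure on $E$ via the cocycle, as is done for $P$ itself in the Remark above.
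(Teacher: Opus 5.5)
Your proof is correct and follows essentially the route the paper indicates. The paper gives no proof of this theorem (it defers to \cite{hamiltonMathematicalGaugeTheory2018}), but Remark~\ref{rem:associated_general} sketches exactly your construction: the trivializations $[p,v]\mapsto(\pi_P(p),\rho(t_\alpha(p))v)$ induced from those of $P$, the transition functions $\rho\circ g_{\alpha\beta}$, and the relation $\iota_{pg}=\iota_p\circ\rho(g)$ that makes the fiberwise linear structure well defined. Your submersion argument additionally identifies the glued topology with the quotient topology, a point the paper leaves implicit.

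One small verbal slip: the linear structure on $E_x$ is canonical, but the isomorphisms $\iota_p\colon V\to E_x$ are not, since they depend on the choice of $p\in\pi_P^{-1}(x)$.
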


\begin{definition}[Associated vector bundles and the adjoint bundle]\label{def:associated}
$E=(P\times V)/G=:P\times_\rho V$ with the projection over $X$ $\pi_E$ defined on Theorem
\ref{thm:associated} is called the \textbf{vector bundle associated} to the PFB $P$ and the
representation $\rho$ on $V$.

In particular, taking $V=\mathfrak{g}$ with the adjoint representation $\Ad$ of $G$
\eqref{eq:Ad_def} yields the so-called \textbf{adjoint bundle} $\ad P:=P\times_{\Ad}
\mathfrak{g}$.
\end{definition}

Note that we have
\begin{equation}\label{eq:adP}
\ad P:=P\times_{\Ad}\mathfrak{g}\;\cong\;(\mathcal{V}\pi_{P})/G,
\end{equation}
under the identification given by the action $\xi_P\cdot g = (R_g)_*(\xi_P)$ on $\mathcal{V}\pi_{P}$. To see this
consider the isomorphism $\mathfrak{g}\cong\mathcal{V}_p\pi_P$ 
of Remark \ref{remark:fundamental_vector_field} once a $p\in P$ is fixed, then $P\times
\mathfrak{g}\cong\mathcal{V}\pi_P$.
As the adjoint bundle relates $(p,\xi)\sim(pg,\Ad_{g^{-1}}\xi)$, we just have to prove
that 
\begin{equation}
  \label{eq:adjoint_bundle_is_vertical_equivariant_vector_fields_on_P}
  (R_g)_*(\xi_P(p))=(\Ad_{g^{-1}}\xi)_P(pg),
\end{equation}
which is done on Equation \eqref{eq:proof_adjoint_bundle_is_vertical_proof} of
Appendix \ref{app:some_proofs_on_gauge_theory}.

After all this machinery of associated vector bundles to a PFB and a representation of the
fiber vector space, we have the following remark.

\begin{remark}[Which $P-$forms become a legitimate (physical) object on X? Horizontal and
equivariant!]
\label{rem:physical_forms}
Imagine we have a $k-$form $\Omega\in\Omega^k(P;\mathfrak{g})$ with values in $\mathfrak{g}$
and we want to see if this form can be projected down to $X$ to a well defined object
(i.e. not depending on information on the fiber above each point on $X$).
Evaluating this $k$-form on $X$ at $x$, on vectors $v_1,\dots,v_k\in T_xX$, requires
in general two arbitrary choices on $P$: a point $p\in\pi_P^{-1}(x)$, and, for each $v_i$,
a \textbf{lift} $w_i\in T_pP$ that projects into $v_i$, i.e., $\dd_p\pi_P(w_i)=v_i$
(any two such lifts differ by a vertical vector, by Remark
\ref{remark:fundamental_vector_field}).

Horizontality removes the dependence on the lift: if $w_i'=w_i+\xi_P(p)$ is another lift of
the same $v_i$, multilinearity and $\Omega(\xi_P,\cdot)=0$ give
\[
\Omega_p(w_1,\dots,w_i,\dots)-\Omega_p(w_1,\dots,w_i',\dots)
=\Omega_p(w_1,\dots,\xi_P(p),\dots)=0,
\]
so the value depends only on $v_1,\dots,v_k$, not on the chosen lifts.

Equivariance then removes the dependence on the point $p\in\pi_P^{-1}(x)$: if
$(R_g)_*w_i$ lifts $v_i$ at $p\cdot g$, then $\Omega_{p\cdot g}((R_g)_*w_1,\dots)
=\Ad_{g^{-1}}\Omega_p(w_1,\dots)$, so the class $[p,\Omega_p(w_1,\dots,w_k)]$ in the
associated bundle $P\times_{\Ad}\mathfrak{g}$ (defined in Section
\ref{sec:bundle_of_connections}) is independent of $p$ as well. Together,
both choices are eliminated and $\bm{\Omega}\in\Omega^k(P;\mathfrak{g})$ \textbf{defines a
genuine} $\bm{\mathrm{ad} P}$\textbf{-valued} $\bm{k}$\textbf{-form} on $X$,
$\Omega\in\Omega^k(X;\ad P)$. This is exactly
where $\omega$ fails: without horizontality, changing the
lift changes the value by $-\xi$ (from $\omega(\xi_P)=\xi$), so ``$\omega$ evaluated
at $x\in X$'' is simply not well defined -only $A_\alpha=s_\alpha^*\omega$, tied to a
choice of section, makes sense, and the inhomogeneous term in
\eqref{eq:gauge_transf_A} is exactly the bookkeeping of this lift-dependence when
changing from $s_\alpha$ to $s_\beta$.
\end{remark}

\section{Curvature, the field strength and the exterior covariant derivative}\label{sec:curvature}

The failure of the horizontal distribution to be involutive is measured by the
curvature, which is the geometric version of the (local) field strength used by
physicists.

Let $\omega$ be a PFB connection, then every $X\in\mathfrak{X}(P)$ can be decomposed as
$X=X^V+X^H$ where $X^V\in\mathcal{V}\pi_P$ and $\omega(X^H)=0$ (note that this decomposition
depends clearly on $\omega$) and so we define

\begin{definition}[Horizontal part of a $\mathfrak{g}-$valued form and exterior covariant derivative]
  Let $\omega$ be a PFB connection and $\varphi\in\Omega^k(P;\mathfrak{g})$ for any
  $k\in\{0,\dots,(n+1)+\dim G\}$, we define the \textbf{horizontal part} of
  $\varphi$, $\varphi^H\in\Omega^k(P,\mathfrak{g})$ as
  \[
    \varphi^H(X_1,\dots,X_k)=\varphi(X_1^H,\dots,X_k^H),\quad\forall X_1,\dots,X_k\in\mathfrak{X}(P).
  \]

  Once defined this, we also define the \textbf{exterior covariant derivative} as the
  derivation $\bm{D^\omega}$ acting on $\varphi$ as $D^\omega\varphi:=(\dd\varphi)^H
  \in\Omega^{k+1}(P,\mathfrak{g})$. Clearly
  $D^\omega$ depends on the connection form chosen, but sometimes the superscript $^\omega$
  is omitted in the literature.
\end{definition}
With the exterior covariant derivative, we define the curvature of the PFB connection
\begin{definition}[curvature of a PFB connection]
  Given a PFB connection $\omega$, we define its \textbf{curvature} $\bm{\Omega^\omega}$ as
  the $2-$form with values in $\mathfrak{g}$ given by $\Omega^\omega:=D^\omega\omega
  \in\Omega^2(P;\mathfrak{g})$.
\end{definition}
A simpler way to compute the curvature of a PFB connection arises once introduced
the \textbf{canonical Lie bracket of} $\bm{\mathfrak{g}-}$\textbf{valued forms}:
let $\varphi\in\Omega^j(P;\mathfrak{g})$,
$\psi\in\Omega^k(P;\mathfrak{g})$ we define the bracket of $\mathfrak{g}-$valued forms as
\[
  [\varphi,\psi]^\wedge(X_1,\dots,X_{j+k}):=\frac{1}{j!k!}\sum_{\sigma}(-1)^\sigma\left[
    \varphi(X_{\sigma(1)},\dots,X_{\sigma(j)}),\psi(X_{\sigma(j+1)},\dots,X_{\sigma(j+k)})
  \right],
\]
where the bracket in the rhs is the usual Lie bracket on $\mathfrak{g}$ (the $\wedge$
superscript of the bracket will almost always be omitted, only left to distinguish with
the usual bracket if necessary). 
In particular for two $1-$forms, we have $[\varphi,\psi](X,Y)=[\varphi(X),\psi(Y)]-
[\varphi(Y),\psi(X)]$.
Then, one can prove that

\begin{theorem}[Structure equation]
  \label{thm:structure_equation}
  Given a PFB connection $\omega$, its curvature form $\Omega^\omega$ is given by
  \begin{equation}
    \label{eq:structure_equation}
    \Omega^\omega=\dd\omega+\frac{1}{2}[\omega,\omega].
  \end{equation}
\end{theorem}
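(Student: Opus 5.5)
The plan is to check the identity pointwise, evaluating both sides on a pair of vectors $X,Y\in T_pP$. Both sides are $\mathfrak{g}$-valued $2$-forms, hence bilinear and skew-symmetric in $(X,Y)$. Using the splitting $T_pP=\mathcal{V}_p\pi_P\oplus H_p$ from Definition \ref{def:connection}, it is enough to check three cases: both vectors horizontal, one vertical and one horizontal, and both vertical. By Remark \ref{remark:fundamental_vector_field}, any vertical vector can be written as $\xi_P(p)$ for a unique $\xi\in\mathfrak{g}$. Any horizontal vector can be extended to a horizontal vector field (for instance the horizontal lift of a vector field on $X$), and this lift is $R_g$-invariant because $H$ is $G$-invariant. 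We will use the intrinsic formula $\dd\omega(X,Y)=X(\omega(Y))-Y(\omega(X))-\omega([X,Y])$, applied componentwise as in Definition \ref{def:vector-valued-forms}. For two $1$-forms the bracket is $[\omega,\omega](X,Y)=2[\omega(X),\omega(Y)]$, so $\frac12[\omega,\omega](X,Y)=[\omega(X),\omega(Y)]$.

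\emph{Case 1: both horizontal.} Here $\omega(X)=\omega(Y)=0$, so the bracket term vanishes. By definition, $\Omega^\omega(X,Y)=\dd\omega(X^H,Y^H)=\dd\omega(X,Y)$, so the two sides agree trivially.

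\emph{Case 2: $X=\xi_P$ vertical and $Y$ horizontal.} The left-hand side is zero, since $X^H=0$. On the right, the bracket term is $[\xi,0]=0$. For the exterior derivative, $\xi_P(\omega(Y))=0$ because $\omega(Y)\equiv0$, and $Y(\omega(\xi_P))=Y(\xi)=0$ because $\xi$ is constant. What remains to show is $\omega([\xi_P,Y])=0$. Since $\xi_P$ generates the flow $R_{\exp(t\xi)}$, we have $[\xi_P,Y]=\pounds_{\xi_P}Y=-\frac{\dd}{\dd t}\big|_{0}(R_{\exp t\xi})_*Y$. The $G$-invariance of $H$ makes this derivative horizontal; indeed it vanishes when $Y$ is chosen $R_g$-invariant. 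Hence $\dd\omega(\xi_P,Y)=0$.

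\emph{Case 3: $X=\xi_P$, $Y=\zeta_P$ both vertical.} Again the left-hand side is zero. Since $\omega(\xi_P)=\xi$ and $\omega(\zeta_P)=\zeta$ are constant, $\dd\omega(\xi_P,\zeta_P)=-\omega([\xi_P,\zeta_P])$. For a right action, $\xi\mapsto\xi_P$ is a Lie algebra homomorphism, so $[\xi_P,\zeta_P]=[\xi,\zeta]_P$ and $\dd\omega(\xi_P,\zeta_P)=-[\xi,\zeta]$. This cancels exactly against $\frac12[\omega,\omega](\xi_P,\zeta_P)=[\xi,\zeta]$, giving zero.

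I expect the main obstacle to be the sign conventions in Case 3. The identity $[\xi_P,\zeta_P]=[\xi,\zeta]_P$, with no minus sign, holds for right actions but picks up a sign for left actions, as the paper itself notes in Proposition \ref{prop:momentum_map_algebra}. I would check it directly by writing $\xi_P$ as the left-invariant field transported by a local trivialization $T_\alpha$, where the $G$-factor is acted on by right multiplication. I would also verify that the normalization $\frac{1}{j!k!}$ in the bracket of $\mathfrak{g}$-valued forms really gives the factor $2$ for two $1$-forms, together with the convention $\dd\omega(X,Y)=X\omega(Y)-Y\omega(X)-\omega([X,Y])$ without a factor $\frac12$. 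These conventions are mutually consistent, and the cancellation in Case 3 confirms the coefficient $\frac12$ in \eqref{eq:structure_equation}.
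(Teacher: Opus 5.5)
Your proposal is correct: the reduction to horizontal/vertical pairs, the $R_g$-invariant horizontal lift giving $\omega([\xi_P,Y])=0$ in the mixed case, and $[\xi_P,\zeta_P]=[\xi,\zeta]_P$ for the right action in the vertical case are all sound. Your conventions $\dd\omega(X,Y)=X\omega(Y)-Y\omega(X)-\omega([X,Y])$ and $\frac12[\omega,\omega](X,Y)=[\omega(X),\omega(Y)]$ agree with the paper's bracket of $\mathfrak{g}$-valued forms. The paper gives no argument of its own and only cites \cite{hamiltonMathematicalGaugeTheory2018}; your case analysis is the standard textbook proof of this identity, so nothing further is needed.
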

\begin{proof}
  See \cite{hamiltonMathematicalGaugeTheory2018}.
\end{proof}

In the same way that a local section $s_\alpha:U_\alpha\subseteq X\to P$ (which uniquely
determines a trivialization) pulled the connection to the gauge potential, we pull now the
curvature to obtain
\begin{definition}[Field strength]
  Let $\Omega^\omega$ be the curvature of a PFB connection $\omega$ and $s_\alpha$ a local
  section on $\pi_P$. We define the \textbf{field strength} $\bm{F_\alpha}:=s_\alpha^*
  \Omega^\omega\in\Omega^2(U_\alpha;\mathfrak{g})$ which by Theorem
  \ref{thm:structure_equation} is equal to
  \[
    F_\alpha=dA_\alpha+\frac{1}{2}[A_\alpha,A_\alpha].
  \]
  Using that locally $A_\alpha=A^a_\mu(x)\,\dd x^\mu\otimes e_a$, is easy to compute that the
  field strength is locally given by
  \begin{equation}
    \label{eq:F_components}
    F_\alpha=\frac{1}{2}F_{\mu\nu}^a(x)\dd x^\mu\wedge\dd x^\nu\otimes e_a,\quad
    F_{\mu\nu}^a(x):=\partial_\mu A^a_\nu(x)-\partial_\nu A^a_\mu(x)+C^a_{bc}A^b_\mu(x)
    A^c_\nu(x),
  \end{equation}
  with $F^a_{\nu\mu}=-F^a_{\mu\nu}$.
\end{definition}
Differentiating \eqref{eq:structure_equation} and using
$[\omega\wedge[\omega\wedge\omega]]=0$ (the graded Jacobi identity) yields the
\textbf{Bianchi identity} (cf. \cite{hamiltonMathematicalGaugeTheory2018} for a proof)
\begin{equation}\label{eq:bianchi}
D^\omega\Omega^\omega=\dd\Omega^\omega+[\omega,\Omega^\omega]=0,
\end{equation}
which of course also implies $D_{A_\alpha} F_\alpha:=s_\alpha^*D^\omega \Omega^\omega=
\dd F_\alpha+[A_\alpha,F_\alpha]=0$.

Under \eqref{eq:gauge_transf_A}, noting that Lie bracket on $\mathfrak{g}-$valued forms
is preserved under pullback (in particular $(R_g)_*$), the field strength transforms
tensorially, without any inhomogeneous term,
\begin{equation}\label{eq:gauge_transf_F}
F_{\beta}=\Ad_{g^{-1}}\circ F_{\alpha}
\quad\Big(\text{matrix groups: } F_{\beta}=g^{-1}F_{\alpha}\,g\Big).
\end{equation}

\begin{remark}[Electromagnetism]
For $G=U(1)$ the algebra is abelian, $C^{a}_{bc}=0$, so \eqref{eq:F_components}
reduces to $F_{\mu\nu}=\partial_{\mu}A_{\nu}-\partial_{\nu}A_{\mu}$,
\eqref{eq:gauge_transf_A} to $A_{\beta}=A_{\alpha}+\dd\lambda$ and $D^\omega$ to $\dd$:
one recovers Maxwell's electromagnetic tensor and the familiar $U(1)$ gauge freedom,
while the Bianchi identity \eqref{eq:bianchi} becomes the homogeneous Maxwell equations
$\dd F=0$ (i.e. $\partial_{[\rho}F_{\mu\nu]}=0)$. The non-abelian novelty is entirely encoded in the quadratic term
$C^{a}_{bc}A^{b}_{\mu}A^{c}_{\nu}$,
which makes the gauge field self-interacting.
\end{remark}

\begin{remark}[The key first-order fact]\label{rem:antisymmetry}
The field strength \eqref{eq:F_components} depends on the first derivatives of the
potential \textbf{only through the antisymmetric combination}
$\partial_{[\mu}A^{a}_{\nu]}$. This innocent-looking observation is the root of the
singular (almost regular) character of the Yang--Mills Lagrangian, as we shall see
in Section \ref{sec:YM_lagrangian}.
\end{remark}

\subsection*{The exterior covariant derivative for a general representation}
\label{subsec:D_general_rep}
 
Everything above was written for $\mathfrak{g}$-valued forms, but neither the
definition of $\varphi^{H}$ nor that of $D^{\omega}$ used anything about
$\mathfrak{g}$ beyond it being a vector space. The natural generality --- and the
one we shall actually need in Chapter
\ref{chap:multisymplectic_yang_mills_computations}, where the multimomenta take
values in $\ad^{*}P$ and the multivector fields in $\bigwedge^{k}\ad P$ --- is
that of an arbitrary representation.
 
\begin{definition}[Tensorial forms of type $\rho$]
  \label{def:tensorial_forms}
  Let $(V,\rho)$ be a finite-dimensional representation of $G$ and let
  $\rho_{*}:\mathfrak{g}\to\mathfrak{gl}(V)$ denote the induced representation of
  $\mathfrak{g}$ (the differential of $\rho$ at the identity). A form $\varphi\in\Omega^{k}(P;V)$ is called
  \textbf{tensorial of type} $\bm{\rho}$ if it is
  \begin{enumerate}
    \item \emph{horizontal}: $\iota_{X}\varphi=0$ for every $X\in\mathcal{V}\pi_{P}$
          (equivalently $\varphi^{H}=\varphi$), and
    \item \emph{equivariant}: $R_{g}^{*}\varphi=\rho(g^{-1})\circ\varphi$ for all $g\in G$.
  \end{enumerate}
  We write $\Omega^{k}_{\rho}(P;V)$ for the space of such forms. In an equivalent way of
  the one presented in Remark \ref{rem:physical_forms} the pullback along local sections
  gives a canonical isomorphism
  \begin{equation}
    \label{eq:tensorial_iso}
    \Omega^{k}_{\rho}(P;V)\;\cong\;\Omega^{k}\bigl(X;P\times_{\rho}V\bigr),
  \end{equation}
  so a tensorial form of type $\rho$ \emph{is} a form on the base with values in the
  associated bundle.
\end{definition}
 
The definition $D^{\omega}\varphi:=(\dd\varphi)^{H}$ makes sense verbatim for
$V$-valued forms, and on tensorial forms it becomes computable; it can be proved that
the following equation is satisfied for tensorial forms of type $\rho$:
  \begin{equation}
  \label{eq:exterior_covariant_derivative_on_tensorial_forms}
    D^{\omega}\varphi=\dd\varphi+\rho_{*}(\omega)\wedge\varphi,
  \end{equation}
  where
  \[
    \bigl(\rho_{*}(\omega)\wedge\varphi\bigr)(X_{0},\dots,X_{k})
    :=\sum_{i=0}^{k}(-1)^{i}\,
      \rho_{*}\bigl(\omega(X_{i})\bigr)\,
      \varphi(X_{0},\dots,\widehat{X_{i}},\dots,X_{k}).
  \]
Note that the connection form $\omega$ is \textbf{not} horizontal, hence not tensorial
and then \eqref{eq:exterior_covariant_derivative_on_tensorial_forms} is not applicable
(recall that we have \eqref{eq:structure_equation} nevertheless).
\subsubsection*{The two representations we shall use}
 
\paragraph{Adjoint, $V=\mathfrak{g}$, $\rho=\Ad$.}
Here $\rho_{*}(\xi)=\ad_{\xi}=[\xi,\cdot\,]$ and
\eqref{eq:exterior_covariant_derivative_on_tensorial_forms} reduces to
$D^{\omega}\varphi=\dd\varphi+[\omega,\varphi]$, the case treated above; the
associated bundle is $\ad P:=P\times_{\Ad}\mathfrak{g}$. In components, for
$\varphi=\varphi^{a}\otimes e_{a}$,
\begin{equation}
  \label{eq:D_adjoint_components}
  (D_{A}\varphi)^{a}=\dd\varphi^{a}+C^{a}_{bc}\,A^{b}\wedge\varphi^{c}.
\end{equation}
 
\paragraph{Coadjoint, $V=\mathfrak{g}^{*}$, $\rho=\Ad^{*}$.}
With $\langle\Ad^{*}_{g}\mu,\eta\rangle:=\langle\mu,\Ad_{g^{-1}}\eta\rangle$ one gets
$\langle\rho_{*}(\xi)\mu,\eta\rangle=-\langle\mu,[\xi,\eta]\rangle$, so for
$\mu=\mu_{a}\otimes e^{a}$
\begin{equation}
  \label{eq:D_coadjoint_components}
  (D_{A}\mu)_{a}=\dd\mu_{a}+C^{c}_{ab}\,A^{b}\wedge\mu_{c}.
\end{equation}
Note the position of the structure index: it is \emph{down} on the object and the
constants appear as $C^{c}_{ab}$. This is precisely the derivative that acts on the
multimomenta $p_{a}{}^{\mu\nu}$ in the Yang--Mills field equations
$D_{\nu}p_{a}{}^{\mu\nu}=0$ as we will see on Chapter
\ref{chap:multisymplectic_yang_mills_computations}.
This behaviour is easily generalizable to $\wedge^j\mathfrak{g}\otimes
\wedge^k\mathfrak{g}^*$ which in components reads
\begin{equation}
  \label{eq:D_general_case}  
  (D_\mu\zeta)^{a_1\dots,a_j}_{b_1,\dots,b_k}=\partial_\mu
  \zeta^{a_1\dots,a_j}_{b_1,\dots,b_k}+\sum_{l=1}^jC^{a_l}_{cd}A^c_\mu
  \zeta^{a_1\dots,d,\dots,a_j}_{b_1,\dots,b_k}+\sum_{m=1}^kC^{c}_{b_md}A^d_\mu
  \zeta^{a_1,\dots,a_j}_{b_1,\dots,c,\dots,b_k}.
\end{equation}

\section{The bundle of connections}\label{sec:bundle_of_connections}

We now perform the identification that lets the whole apparatus of
Chapter~\ref{chap:multisymplectic} act on gauge fields: realizing connections as
sections of a bundle over $X$.

Let $J^{1}P$ be the first order jet bundle of $\pi_{P}\colon P\to X$, with source
projection $\pi_{1,0}\colon J^{1}P\to P$ and target projection $\pi_{1}\colon
J^{1}P\to X$. The right action of $G$ on $P$ prolongs canonically to a right action
on $J^{1}P$ by $g\cdot j^{1}_{x}s:=j^{1}_{x}(R_{g}\circ s)$, and so we can consider
quotienting by $G$. This quotient bundle will be our configuration bundle
\begin{definition}[Bundle of connections]\label{def:bundle_connections}
The \textbf{bundle of connections} of $P$ is the quotient
\begin{equation}
C(P):=J^{1}P/G,
\end{equation}
which is a fiber bundle $\pi\colon C(P)\to X$ over $X$ (we will see in a moment why do we
use exactly $\pi$ for the projection like $\pi:Y\to X)$.
\end{definition}

The two projections fit into the commutative diagram
\begin{equation*}
\begin{tikzcd}[row sep=2.4em, column sep=3.2em]
J^{1}P \arrow[r, "\rho"] \arrow[d, "\pi_{1,0}"'] & C(P)=J^{1}P/G \arrow[d, "\pi"] \\
P \arrow[r, "\pi_{P}"'] & X
\end{tikzcd}
\end{equation*}
where $\rho$ is the quotient projection.

The fundamental result (see for example \cite{Garcia77,CLMM01,KMS93}), is
the following.

\begin{theorem}[Connections as sections]\label{thm:connections}
There is a one-to-one correspondence between principal connections on $P$ and
(global) sections of $\pi\colon C(P)\to X$. Moreover $\pi\colon C(P)\to X$ is an
\textbf{affine} bundle modelled on the vector bundle $T^{*}X\otimes\ad P\to X$, with
$\ad P$ the adjoint bundle \eqref{eq:adP}.
\end{theorem}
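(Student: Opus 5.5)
The plan is to identify a point of $J^1P$ with a horizontal complement at a single point of $P$, and then show that $G$-equivariance is exactly what lets such a complement be spread along the whole fiber. The correspondence with principal connections, and the affine structure, should then follow from the affine structure of $J^1P\to P$ already met in Subsection \ref{subsect:the_vertical_endomorphism}.

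\textbf{Step 1 (jets as horizontal complements).} A point $j^1_x s\in J^1P$ with $s(x)=p$ is determined by the linear map $d_xs:T_xX\to T_pP$, which satisfies $d_p\pi_P\circ d_xs=\mathrm{id}$. Equivalently, it is determined by its image $H:=d_xs(T_xX)$, an $(n+1)$-dimensional subspace complementary to $\mathcal{V}_p\pi_P$. By the discussion of Subsection \ref{subsect:the_vertical_endomorphism}, $J^1_pP$ is an affine space modelled on $T^*_xX\otimes\mathcal{V}_p\pi_P$. Using Remark \ref{remark:fundamental_vector_field}, this model space is $T^*_xX\otimes\mathfrak{g}$.

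\textbf{Step 2 (the $G$-action and the quotient).} The prolonged action acts by $g\cdot j^1_xs=j^1_x(R_g\circ s)$, so it sends $H$ to $(R_g)_*H\subset T_{pg}P$. It also acts on the model spaces via $(R_g)_*$ on vertical vectors, which by \eqref{eq:adjoint_bundle_is_vertical_equivariant_vector_fields_on_P} is $\Ad_{g^{-1}}$ in the $\mathfrak{g}$-picture. The action is free and proper, because it is free on the base $P$. Hence $C(P)=J^1P/G$ is a manifold fibered over $P/G=X$. Its fiber over $x$ is the set of $G$-orbits of pairs $(p,H_p)$ with $p\in\pi_P^{-1}(x)$, that is, the set of $G$-invariant fields of horizontal complements along the fiber $\pi_P^{-1}(x)$.

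\textbf{Step 3 (sections $\leftrightarrow$ connections).} Given a principal connection $H$, I would define $\sigma_H(x):=[p,H_p]$. This is independent of $p$ precisely because $(R_g)_*H_p=H_{pg}$, and it is smooth because $H$ is. Conversely, given a section $\sigma$ and $p\in\pi_P^{-1}(x)$, the orbit $\sigma(x)$ contains a unique jet based at $p$, by freeness. Define $H_p$ as the image of that jet. The resulting distribution is complementary to $\mathcal{V}\pi_P$ and $G$-invariant by construction, and the two assignments are mutually inverse. The point needing care is smoothness of $p\mapsto H_p$. I would check it in a local trivialization of $P$ via local slices $s_\alpha$: there $J^1P|_{U_\alpha}\cong J^1(U_\alpha\times G)$, and the unique representative at $p=s_\alpha(x)g$ is obtained by the smooth map $g\cdot(\cdot)$.

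\textbf{Step 4 (affine structure).} The difference of two jets at $p$ is an element of $T^*_xX\otimes\mathfrak{g}$. Under the action of $g$, both jets and their difference transform compatibly, the difference by $\Ad_{g^{-1}}$. The difference therefore descends to a well-defined class $[p,\cdot]\in T^*_xX\otimes(P\times_{\Ad}\mathfrak{g})_x=(T^*X\otimes\ad P)_x$, by Definition \ref{def:associated}. This makes $C(P)_x$ an affine space over that vector space, since the action of the model space on each $J^1_pP$ is free and transitive and equivariant. Local triviality follows from Step 3's trivialization, in which $C(P)|_{U_\alpha}\cong T^*U_\alpha\otimes\mathfrak{g}$ with coordinates $(x^\mu,A^a_\mu)$. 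As a consistency check, the change of trivialization should reproduce \eqref{eq:gauge_transf_A}.

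I expect the main obstacle to be the bookkeeping in Steps 3–4: checking carefully that the quotient is a smooth affine bundle and that the equivariance of the differences matches $\Ad$ rather than $\Ad^{-1}$. I would handle it by doing everything in a trivialization and verifying the transition law \eqref{eq:gauge_transf_A}.
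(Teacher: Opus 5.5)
Your proposal is correct and follows essentially the paper's own route: Step 1 is Lemma \ref{lem:jets_are_complements}, and Steps 2--3 are Proposition \ref{prop:connections_as_sections_explicit} (an orbit meets each $\pi_{1,0}^{-1}(p)$ exactly once by freeness and reaches every $p\in P_x$ by transitivity, with smoothness checked in a frame $s_\alpha$). The only difference is in the affine part: the paper argues at the level of sections (two connection forms differ by a tensorial $\mathfrak{g}$-valued $1$-form), while you argue fiberwise via differences of jets transforming by $\Ad_{g^{-1}}$, which is the more direct way to see that $C(P)\to X$ is an affine bundle; note only that properness of the prolonged action comes from properness of the action on $P$ through the equivariant map $\pi_{1,0}$, not from freeness alone.
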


Comparing with Chapter~\ref{chap:multisymplectic}, this affine bundle structure is
\textbf{exactly} the one already encountered there: recall that $\pi_{ZY}\colon Z\to
Y$ was shown to be an affine bundle modelled on $\pi^{*}T^{*}X\otimes_{Y}\mathcal{V}\pi$
in Chapter~\ref{chap:multisymplectic}. Here $C(P)\to X$ is affine over
$T^{*}X\otimes\ad P$ in precisely the same fashion. This is no coincidence: the
adjoint bundle plays, for the gauge potential, the role that the vertical bundle
$\mathcal{V}\pi$ played for a general field.

In view of Theorem~\ref{thm:connections} we \textbf{set}
\begin{equation}\label{eq:identification}
\boxed{\;Y:=C(P)=J^{1}P/G\;}\qquad\Longrightarrow\qquad
Z:=J^{1}\pi=J^{1}\bigl(C(P)\bigr),
\end{equation}
so that a gauge field $\omega$ is (identified with) a section $\phi\in\Gamma(\pi)$ and all of
Chapters~\ref{chap:multisymplectic}--\ref{chap:symmetries_and_Noether_theorems}
applies verbatim. In a local trivialization the fiber coordinates on $Y=C(P)$ are
the potential components $A^{a}_{\mu}$, so the general fiber index of
Chapter~\ref{chap:multisymplectic} becomes a pair $i\leftrightarrow(a,\mu)$ and the
fiber dimension is $m=(n+1)\dim\mathfrak{g}$; the jet coordinates on
$Z=J^{1}(C(P))$ are the derivatives $\partial_{\nu}A^{a}_{\mu}$. Explicitly the
coordinate dictionary with Chapter~\ref{chap:multisymplectic} is
\begin{equation}\label{eq:dictionary}
y^{i}\;\longleftrightarrow\;A^{a}_{\mu},\qquad
z^{i}_{\nu}\;\longleftrightarrow\;A_{\mu\nu}^a,
\end{equation}
and this last pulls to $\parder{A^{a}_{\mu}}{x^{\nu}}$ by a $\pi-$section.
Under \eqref{eq:dictionary} the field strength \eqref{eq:F_components} defines a
bundle map (affine along the fibers of $\pi_{ZY}$, quadratic along $Y$)
\begin{equation}
\mathcal{F}\colon Z=J^{1}\bigl(C(P)\bigr)\longrightarrow
\textstyle\bigwedge^{2}T^{*}X\otimes\ad P,\qquad
\mathcal{F}(j^{1}_{x}A)=F,
\end{equation}
which, by Remark~\ref{rem:antisymmetry}, factors through the antisymmetrization of
the jet coordinates $z^{i}_{\nu}\leftrightarrow\partial_{\nu}A^{a}_{\mu}$.

\section{The gauge group and gauge symmetries}\label{sec:gauge_group}
We finally translate ``gauge transformation'' into the language of symmetries of
Chapter \ref{chap:symmetries_and_Noether_theorems}; because of that, this section is
specially relevant to this work.

Let us begin motivating the concept of a \textit{gauge transformation}. A gauge transformation
should be a diffeomorphism between two physically equivalent internal frames, hence it should
act along the same fiber of $\pi_P$ (for each base $x\in X$ point). These two requirements
(working on fibers and not moving base points) translate into the following definition
\begin{definition}[Gauge group]\label{def:gauge_group}
The \textbf{group of gauge transformations} of a PFB $\pi_P:P\to X$ is the group of
vertical and $G-$equivariant diffeomorphism of $P$, i.e., 
\[
  \Gau(P):=\{\Phi\in\mathrm{Diff}(P)\,|\,\Phi(pg)=\Phi(p)g\,\forall p\in P\,\forall g\in G,\quad
  \pi_P\circ\Phi=\pi_P\}.
\]
\end{definition}
Consider now a $1-$parameter family of diffeomorphisms $\Phi_t$ such that
$\Phi_0=\mathrm{id}_P$.
For each fixed $p\in P$, consider the curve $\Phi_t(p)$ on $P$, then
$\xi(p):=\der{}{t}\big|_{t=0}\Phi_t(p)$ defines a vector on $T_p P$ and extending to the
whole $P$, a vector field on $P$, i.e., $\xi\in\mathfrak{X}(P)$. By this construction $\xi$
resembles an infinitesimal gauge transformation (note that infinitesimal transformations
are the type of symmetries we have considered on Chapter
\ref{chap:symmetries_and_Noether_theorems}). From the properties of the $\Phi_t$ we infer
that $\xi$ is vertical and $G-$invariant (see Appendix
\ref{app:some_proofs_on_gauge_theory}), and as this forms a subset of $\mathfrak{X}(P)$,
we define
\begin{definition}[Infinitesimal gauge transformations]
  We define the Lie algebra of \textbf{infinitesimal gauge transformations} as
  \[
    \mathfrak{gau}(P):=\{\xi\in\mathfrak{X}(P)\,|\,\dd\pi_P(\xi)=0,\quad (R_g)_*\xi=\xi\,
    \forall g\in G\}.
  \]
\end{definition}
Is easy to see then that $\mathfrak{gau}(P)\cong\Gamma(\mathcal{V}\pi_P/G)\cong\Gamma(\ad P)$.

We now define an (right, active) action of $\Gau(P)$ over PFB connections given by
\[
  \Phi\cdot\omega:=\Phi^*\omega;
\]
note that $\Phi\cdot\omega$ is a PFB connection (a fact proven in Appendix
\ref{app:some_proofs_on_gauge_theory}).

\subsection*{The gauge group in local terms}
 
Definition~\ref{def:gauge_group} is intrinsic but not operative: it says nothing about
how a gauge transformation looks in a local trivialization. The following elementary
result supplies the missing dictionary, and it is what allows us to write the local
transformation law \eqref{eq:gauge_transf_A_active} below.
 
\begin{proposition}[Gauge transformations as equivariant maps]
\label{prop:gauge_group_equivariant_maps}
Let
\[
  C^{\infty}_{\Ad}(P,G):=\bigl\{\tau\in C^{\infty}(P,G)\;\big|\;
  \tau(pg)=g^{-1}\tau(p)g=C_{g^{-1}}\!\bigl(\tau(p)\bigr)
  \ \ \forall p\in P,\ \forall g\in G\bigr\}.
\]
Then
\begin{equation}\label{eq:gauge_tau}
  \tau\;\longmapsto\;\Phi_{\tau},\qquad \Phi_{\tau}(p):=p\cdot\tau(p),
\end{equation}
is a group isomorphism $C^{\infty}_{\Ad}(P,G)\to\Gau(P)$, where
$C^{\infty}_{\Ad}(P,G)$ carries the pointwise product; explicitly
$\Phi_{\tau}\circ\Phi_{\sigma}=\Phi_{\tau\sigma}$ and
$\Phi_{\tau}^{-1}=\Phi_{\tau^{-1}}$.
\end{proposition}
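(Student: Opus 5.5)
The plan is to check, in order, that the map $\tau\mapsto\Phi_\tau$ lands in $\Gau(P)$, that it is a homomorphism for the pointwise product, and that it is bijective. Inverses are handled last, since they follow from the homomorphism property.

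\textbf{Well-definedness.} Take $\tau\in C^\infty_{\Ad}(P,G)$. The map $\Phi_\tau(p)=p\cdot\tau(p)$ is smooth because the action $P\times G\to P$ is smooth (Definition \ref{def:PFB}). It is vertical, since $p\cdot\tau(p)$ lies in the orbit of $p$, which by (B) is the fiber $\pi_P^{-1}(\pi_P(p))$; hence $\pi_P\circ\Phi_\tau=\pi_P$. For equivariance we compute
\[
\Phi_\tau(pg)=pg\cdot g^{-1}\tau(p)g=p\,\tau(p)\,g=\Phi_\tau(p)g.
\]
Diffeomorphy will come for free once we have the homomorphism property: $\Phi_{\tau^{-1}}$, with pointwise inverse $\tau^{-1}(p):=\tau(p)^{-1}$, is again $\Ad$-equivariant and will serve as a smooth inverse.

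\textbf{Homomorphism.} A direct computation using equivariance of $\tau$ at the point $p\sigma(p)$ gives
\[
\Phi_\tau(\Phi_\sigma(p))=p\sigma(p)\cdot\tau(p\sigma(p))=p\sigma(p)\,\sigma(p)^{-1}\tau(p)\sigma(p)=p\,\tau(p)\sigma(p).
\]
This equals $\Phi_{\tau\sigma}(p)$ under the pointwise product convention $(\tau\sigma)(p)=\tau(p)\sigma(p)$. Here I should double-check the order convention; if it comes out reversed, the statement should be read with the product taken accordingly. The computation above indicates that it matches. Since $\Phi_e=\mathrm{id}$, we get $\Phi_\tau\circ\Phi_{\tau^{-1}}=\mathrm{id}=\Phi_{\tau^{-1}}\circ\Phi_\tau$, which settles both invertibility and the formula $\Phi_\tau^{-1}=\Phi_{\tau^{-1}}$.

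\textbf{Injectivity.} If $\Phi_\tau=\Phi_\sigma$, then $p\tau(p)=p\sigma(p)$ for all $p$, and freeness of the action forces $\tau=\sigma$.

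\textbf{Surjectivity (the main obstacle).} Given $\Phi\in\Gau(P)$, verticality places $\Phi(p)$ in the fiber of $p$. By freeness and (B) there is a unique $\tau(p)\in G$ with $\Phi(p)=p\tau(p)$. Equivariance of $\tau$ follows from freeness: $pg\,\tau(pg)=\Phi(pg)=\Phi(p)g=p\tau(p)g$, so $\tau(pg)=g^{-1}\tau(p)g$. The delicate point is the smoothness of $\tau$, which I would check in a local trivialization $T_\alpha$. There $t_\alpha(pg)=t_\alpha(p)g$, so
\[
\tau(p)=t_\alpha(p)^{-1}\,t_\alpha(\Phi(p)),
\]
which is smooth as a composition of smooth maps and group operations. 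This local formula also gives an explicit construction of $\tau$, which will feed into the local transformation law \eqref{eq:gauge_transf_A_active}.
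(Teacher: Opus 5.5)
Your proof is correct and follows the paper's argument: you construct $\tau$ from $\Phi$ using freeness and transitivity on fibers, check equivariance the same way, and use the same composition computation $\Phi_\tau\circ\Phi_\sigma=\Phi_{\tau\sigma}$ to obtain inverses. Your one refinement is the explicit local formula $\tau(p)=t_\alpha(p)^{-1}t_\alpha(\Phi(p))$ for smoothness, which is more concrete than the paper's appeal to the action being free and proper.
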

 
\begin{proof}
Let $\Phi\in\Gau(P)$. Since $\pi_{P}\circ\Phi=\pi_{P}$, the points $p$ and $\Phi(p)$
lie in the same fiber; as $G$ acts freely and transitively on each fiber there is a
\emph{unique} $\tau(p)\in G$ with $\Phi(p)=p\cdot\tau(p)$, and $\tau$ is smooth because
$\Phi$ is and the action is free and proper. Equivariance of $\Phi$ gives
\[
  p\,\tau(p)\,g=\Phi(p)\,g=\Phi(pg)=pg\,\tau(pg)
  \quad\Longrightarrow\quad
  \tau(pg)=g^{-1}\tau(p)g ,
\]
cancelling $p$ by freeness. Conversely, for $\tau\in C^{\infty}_{\Ad}(P,G)$ the map
$\Phi_{\tau}$ satisfies $\pi_{P}\circ\Phi_{\tau}=\pi_{P}$ and
\[
  \Phi_{\tau}(pg)=pg\,\tau(pg)=pg\,g^{-1}\tau(p)g=p\,\tau(p)\,g=\Phi_{\tau}(p)\,g ,
\]
so it is vertical and $G$-equivariant. Finally
\[
  (\Phi_{\tau}\circ\Phi_{\sigma})(p)=\Phi_{\tau}\bigl(p\,\sigma(p)\bigr)
  =p\,\sigma(p)\,\tau\bigl(p\,\sigma(p)\bigr)
  =p\,\sigma(p)\,\sigma(p)^{-1}\tau(p)\,\sigma(p)
  =p\,\bigl(\tau(p)\sigma(p)\bigr),
\]
which is $\Phi_{\tau\sigma}(p)$; taking $\sigma=\tau^{-1}$ gives the inverse.
\end{proof}
 
Given a local section $s_{\alpha}\colon U_{\alpha}\to P$ we call
\begin{equation}\label{eq:local_tau}
  \tau_{\alpha}:=\tau\circ s_{\alpha}\colon U_{\alpha}\longrightarrow G
\end{equation}
the \textbf{local gauge transformation} determined by $\Phi_{\tau}$ in the frame
$s_{\alpha}$. This is the object physicists actually write.
 
\begin{proposition}[Pullback of a connection by a gauge transformation]
\label{prop:pullback_omega_gauge}
Let $\omega$ be a principal connection and $\Phi_{\tau}\in\Gau(P)$. Then
$\Phi_{\tau}^{*}\omega$ is again a principal connection, and
\begin{equation}\label{eq:pullback_omega_gauge}
  \Phi_{\tau}^{*}\omega=\Ad_{\tau^{-1}}\!\circ\,\omega+\tau^{*}\theta_{G}
  \ \in\ \Omega^{1}(P;\mathfrak{g}).
\end{equation}
Consequently, in the frame $s_{\alpha}$ the gauge potential transforms as
\begin{equation}\label{eq:gauge_transf_A_active}
  A_{\alpha}\;\longmapsto\;
  s_{\alpha}^{*}\bigl(\Phi_{\tau}^{*}\omega\bigr)
  =\Ad_{\tau_{\alpha}^{-1}}\!\circ A_{\alpha}+\tau_{\alpha}^{*}\theta_{G}
  \qquad\Bigl(\text{matrix groups: }
  \tau_{\alpha}^{-1}A_{\alpha}\tau_{\alpha}+\tau_{\alpha}^{-1}\dd\tau_{\alpha}\Bigr).
\end{equation}
\end{proposition}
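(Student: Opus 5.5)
The plan is to compute the pushforward of a tangent vector $v\in T_pP$ under $\Phi_\tau=\mu\circ(\mathrm{id},\tau)$, where $\mu:P\times G\to P$ is the right action. By the Leibniz rule for the differential of the action map,
\[
(\Phi_\tau)_{*,p}(v)=(R_{\tau(p)})_{*}v+\bigl(\theta_G(\tau_{*}v)\bigr)_P\bigl(p\,\tau(p)\bigr).
\]
The second summand arises from writing $\tau_*v=(L_{\tau(p)})_*\theta_G(\tau_*v)$ and differentiating the curve $t\mapsto p\,\tau(p)\exp(t\,\theta_G(\tau_*v))$, which by definition gives the fundamental vector field at $p\tau(p)$. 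I would verify this with a curve $c(t)$ through $p$ with $\dot c(0)=v$, differentiating $c(t)\tau(c(t))$ with the product rule on $P\times G$.

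Next I apply $\omega$ at $p\tau(p)$ and use the two defining properties from Definition \ref{def:connection}. Equivariance gives $\omega_{p\tau(p)}((R_{\tau(p)})_*v)=(R^*_{\tau(p)}\omega)_p(v)=\Ad_{\tau(p)^{-1}}\omega_p(v)$. The reproduction property $\omega(\xi_P)=\xi$ gives $\theta_G(\tau_*v)=(\tau^*\theta_G)_p(v)$. Adding the two pieces yields \eqref{eq:pullback_omega_gauge}.

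To show that $\Phi_\tau^*\omega$ is again a connection, I check both properties directly from the formula. For a fundamental field $\xi_P$, the derivative of $\tau$ along $p\exp(t\xi)$ follows from $\tau(p\exp(t\xi))=\exp(-t\xi)\tau(p)\exp(t\xi)$, so $(\tau^*\theta_G)(\xi_P)=\xi-\Ad_{\tau^{-1}}\xi$. Added to $\Ad_{\tau^{-1}}\xi$, this gives $\xi$. Equivariance holds more simply: since $\Phi_\tau$ commutes with $R_g$, we have $R_g^*\Phi_\tau^*\omega=\Phi_\tau^*R_g^*\omega=\Ad_{g^{-1}}\Phi_\tau^*\omega$. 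Finally, \eqref{eq:gauge_transf_A_active} follows by pulling back along $s_\alpha$, using $s_\alpha^*\circ\tau^*=(\tau\circ s_\alpha)^*=\tau_\alpha^*$ and the pointwise linearity of $\Ad$. For matrix groups, $\theta_G=g^{-1}\dd g$ gives the stated expression.

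The main obstacle is the first step: obtaining the correct derivative of the action map with the left Maurer--Cartan form, as opposed to the right one. Here $\tau_*v$ must be translated to $e$ by $L_{\tau(p)^{-1}}$ so that the correction appears as a fundamental field at $p\tau(p)$. I would fix the sign conventions by checking the matrix case, where $\frac{\dd}{\dd t}\,c(t)\tau(c(t))$ produces $v\tau+p\,\dd\tau(v)=(v\tau)+(p\tau)(\tau^{-1}\dd\tau(v))$.
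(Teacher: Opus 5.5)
Your proposal is correct and follows essentially the same route as the paper: the Leibniz splitting of $T_p\Phi_\tau(v)$ into $(R_{\tau(p)})_*v$ plus the fundamental field of $\theta_G(T_p\tau(v))$ at $p\tau(p)$, followed by the two defining properties of $\omega$, equivariance via $\Phi_\tau\circ R_g=R_g\circ\Phi_\tau$, and the pullback along $s_\alpha$. The only minor deviation is how you verify $(\Phi_\tau^*\omega)(\xi_P)=\xi$. You derive it from the explicit formula using $\tau(p\exp(t\xi))=\exp(-t\xi)\tau(p)\exp(t\xi)$, whereas the paper gets it directly from the fact that the $G$-equivariant map $\Phi_\tau$ intertwines fundamental vector fields; both arguments are valid.
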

\begin{proof}
  See Appendix \ref{app:some_proofs_on_gauge_theory}.
\end{proof} 

\begin{remark}[One computation, two readings]\label{rem:one_computation}
The proof of Proposition~\ref{prop:pullback_omega_gauge} is, line by line,
the same computation that produces the transition law
\eqref{eq:gauge_transf_A}. In both cases one differentiates a map of the
form
\[
  x\longmapsto \mu\bigl(\sigma(x),\,h(x)\bigr)\in P ,
  \qquad \sigma\in\Gamma(U,P),\quad h\in C^{\infty}(U,G),
\]
and the Leibniz rule for the action $\mu$ splits the result into a
$P$-derivative, which $\omega$ converts into $\Ad_{h^{-1}}$, and a
$G$-derivative, which $\omega$ converts into $h^{*}\theta_{G}$. The
\emph{only} difference is the meaning of $h$:
\begin{itemize}
  \item in \eqref{eq:gauge_transf_A}, $h=g_{\alpha\beta}$ compares
        \textbf{two frames} $s_{\alpha},s_{\beta}$ for \textbf{one}
        connection --- this is the \textbf{passive} reading. We are
        describing the same point in $C(P)$ from two different perspectives;
  \item in \eqref{eq:gauge_transf_A_active}, $h=\tau_{\alpha}$ is read in
        \textbf{one frame} $s_{\alpha}$ and \textbf{moves} the connection
        --- this is the \textbf{active} reading. We are really
        transforming in $C(P)$.
\end{itemize}
The formulas are formally identical, but only the active one produces
vector fields on the configuration bundle (as we will see), and hence symmetries in the
sense of Chapter~\ref{chap:symmetries_and_Noether_theorems}. Keeping the
two apart is the single most common source of sign errors in this subject,
and it is why we spell out the convention in
Remark~\ref{rem:sign_convention_gauge} below.
\end{remark}

The next observation says what kind of object the collection
$\{\tau_{\alpha}\}$ of local gauge transformations \eqref{eq:local_tau}
really is. It is \emph{not} a $G$-valued function on $X$: it is a section
of a bundle of groups.

\begin{remark}[Associated bundles with non-linear fiber]\label{rem:associated_general}
Theorem \ref{thm:associated} and Defini\-tion \ref{def:associated} were stated
for a representation $\rho\colon G\to\mathrm{GL}(V)$, but the construction
never uses the linear structure of $V$: given \emph{any} smooth left action
$\lambda\colon G\times F\to F$, $(g,f)\mapsto g\cdot f$, of $G$ on a manifold
$F$, the right action $(p,f)\cdot g:=(pg,\,g^{-1}\!\cdot f)$ on $P\times F$ is
again free and proper, and the quotient
\[
  P\times_{G}F:=(P\times F)/G,
  \qquad \pi_{F}\bigl([p,f]\bigr):=\pi_{P}(p),
\]
is a fiber bundle over $X$ with typical fiber $F$; a local trivialization
$T_{\alpha}=(\pi_{P},t_{\alpha})$ of $P$ induces
$[p,f]\mapsto(\pi_{P}(p),t_{\alpha}(p)\cdot f)$, whose transition functions are
$(x,f)\mapsto(x,g_{\alpha\beta}(x)\cdot f)$. Exactly as in the linear case, the
fibers are diffeomorphic to $F$ non canonically --- the identification
$\iota_{p}\colon f\mapsto[p,f]$ satisfies $\iota_{pg}=\iota_{p}\circ\lambda_{g}$
--- so \textbf{any structure on $F$ preserved by the action descends
canonically to each fiber}: if the $\lambda_{g}$ are linear one recovers a
vector bundle (Theorem~\ref{thm:associated}), and if they are \emph{group
automorphisms} one obtains a bundle of groups, which is the case we need
below. The correspondence between sections and equivariant maps is also
unchanged: $\Gamma(P\times_{G}F)\cong\{\sigma\in C^{\infty}(P,F):
\sigma(pg)=g^{-1}\!\cdot\sigma(p)\}$, via $s_{\sigma}(x)=[p,\sigma(p)]$ for any
$p\in\pi_{P}^{-1}(x)$, and the local representatives
$\sigma_{\alpha}:=\sigma\circ s_{\alpha}$ glue by
$\sigma_{\beta}=g_{\alpha\beta}^{-1}\!\cdot\sigma_{\alpha}$. We shall only use
this generality once, for $F=G$ with the conjugation action, in
Corollary~\ref{cor:tau_transition}.
\end{remark}

\begin{corollary}[Local gauge transformations glue by conjugation]
\label{cor:tau_transition}
Let $\tau\in C^{\infty}_{\Ad}(P,G)$ and let $\tau_{\alpha}=\tau\circ
s_{\alpha}$, $\tau_{\beta}=\tau\circ s_{\beta}$ be its local expressions in
two frames. Then, on $U_{\alpha}\cap U_{\beta}$,
\begin{equation}\label{eq:tau_transition}
  \tau_{\beta}=g_{\alpha\beta}^{-1}\,\tau_{\alpha}\,g_{\alpha\beta}
  =\Conj_{g_{\alpha\beta}^{-1}}\!\circ\,\tau_{\alpha}.
\end{equation}
Consequently $\Gau(P)\cong C^\infty_{\Ad}(P,G)\cong\Gamma\bigl(\Conj P\bigr)$, where
\[
  \Conj P:=P\times_{\Conj}G
\]
is the \textbf{group bundle} (or \textbf{adjoint group bundle}) associated
to $P$ through the conjugation action $\Conj\colon G\to\Aut(G)$,
$\Conj_{g}(h)=ghg^{-1}$; its fibers are groups isomorphic to $G$ and its
sections are multiplied pointwise.
\end{corollary}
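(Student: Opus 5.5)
The transition law \eqref{eq:tau_transition} follows from evaluating the equivariance property of $\tau$ on the two local sections. Fix $x\in U_\alpha\cap U_\beta$ and recall from the definition of the transition functions that $s_\beta(x)=s_\alpha(x)\cdot g_{\alpha\beta}(x)$. Since $\tau\in C^\infty_{\Ad}(P,G)$ satisfies $\tau(pg)=g^{-1}\tau(p)g$ for all $p\in P$ and $g\in G$, we take $p=s_\alpha(x)$ and $g=g_{\alpha\beta}(x)$. This gives
\[
\tau_\beta(x)=\tau\bigl(s_\alpha(x)g_{\alpha\beta}(x)\bigr)=g_{\alpha\beta}(x)^{-1}\tau_\alpha(x)g_{\alpha\beta}(x)=\Conj_{g_{\alpha\beta}(x)^{-1}}\bigl(\tau_\alpha(x)\bigr),
\]
which is \eqref{eq:tau_transition}. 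No further computation is needed for this part.

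For the isomorphism chain, the first isomorphism $\Gau(P)\cong C^\infty_{\Ad}(P,G)$ is exactly Proposition~\ref{prop:gauge_group_equivariant_maps}, and it is a group isomorphism. For the second, I apply Remark~\ref{rem:associated_general} with $F=G$ and the left action $\lambda_g=\Conj_g$. This is a smooth left action by group automorphisms, so $\Conj P=P\times_{\Conj}G$ is a bundle of groups whose fibers are isomorphic to $G$. The remark's section/equivariant-map correspondence states that
\[
\Gamma(\Conj P)\cong\{\sigma\in C^\infty(P,G):\sigma(pg)=\Conj_{g^{-1}}\sigma(p)\}.
\]
Since $\Conj_{g^{-1}}\sigma(p)=g^{-1}\sigma(p)g$, this set is literally $C^\infty_{\Ad}(P,G)$. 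Moreover, the remark's gluing rule $\sigma_\beta=g_{\alpha\beta}^{-1}\cdot\sigma_\alpha$ for local representatives coincides with \eqref{eq:tau_transition}, so the two descriptions are consistent.

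The remaining point, and the only one needing care, is that the bijection $\tau\mapsto s_\tau$, $s_\tau(x)=[p,\tau(p)]$, is a group homomorphism when sections of $\Conj P$ are multiplied fiberwise. The fiberwise product is defined by $[p,h]\cdot[p,h']:=[p,hh']$. This is well defined because changing the representative gives $[pg,g^{-1}hg]\cdot[pg,g^{-1}h'g]=[pg,g^{-1}hh'g]=[p,hh']$, which uses that each $\Conj_g$ is an automorphism. Using a common representative $p$, we then get $s_\tau s_\sigma(x)=[p,\tau(p)\sigma(p)]=s_{\tau\sigma}(x)$. This matches the pointwise product on $C^\infty_{\Ad}(P,G)$, and Proposition~\ref{prop:gauge_group_equivariant_maps} shows that the pointwise product corresponds to composition in $\Gau(P)$. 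Composing the two isomorphisms completes the proof.
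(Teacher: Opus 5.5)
Your proposal is correct and follows the paper's proof. You evaluate the equivariance $\tau(pg)=g^{-1}\tau(p)g$ at $p=s_\alpha(x)$, $g=g_{\alpha\beta}(x)$, and then combine Proposition~\ref{prop:gauge_group_equivariant_maps} with the equivariant-map/section correspondence for $P\times_G F$ (taking $F=G$ with the conjugation action); your explicit check that the fiberwise product $[p,h]\cdot[p,h']=[p,hh']$ is well defined and that $\tau\mapsto s_\tau$ is a group isomorphism fills in what the paper leaves implicit, and citing Remark~\ref{rem:associated_general} rather than Theorem~\ref{thm:associated}, which is stated only for linear representations, is the more accurate reference.
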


\begin{proof}
Using $s_{\beta}=s_{\alpha}\cdot g_{\alpha\beta}$ and the equivariance
$\tau(pg)=g^{-1}\tau(p)g$ of Proposition~\ref{prop:gauge_group_equivariant_maps},
\[
  \tau_{\beta}(x)=\tau\bigl(s_{\alpha}(x)g_{\alpha\beta}(x)\bigr)
  =g_{\alpha\beta}(x)^{-1}\tau\bigl(s_{\alpha}(x)\bigr)g_{\alpha\beta}(x).
\]
The identification $C^{\infty}_{\Ad}(P,G)\cong\Gamma(\Conj P)$ is the
standard correspondence between $G$-equivariant maps $P\to F$ and sections
of $P\times_{G}F$ (Theorem~\ref{thm:associated} with $F=G$ and the
conjugation action, which is by group automorphisms, so the fibers of
$\Conj P$ inherit a group structure).
\end{proof}

\subsection*{The infinitesimal gauge algebra}

We now make the isomorphism $\gau(P)\cong\Gamma(\ad P)$ used above
completely explicit, because the whole of
Chapter~\ref{chap:multisymplectic_yang_mills_computations} is written in
terms of the local components $\xi^{a}(x)$ that will appear in the following.
\begin{lemma}[Three descriptions of an infinitesimal gauge transformation]
\label{lem:gau_three_descriptions}
The following three $\RR$-vector spaces are canonically isomorphic:
\begin{enumerate}
  \item[(i)] $\gau(P)=\{\xi\in\mathfrak{X}(P)\ :\ \dd\pi_{P}(\xi)=0,\
        (R_{g})_{*}\xi=\xi\ \forall g\in G\}$, the vertical
        $G$-invariant vector fields on $P$;
  \item[(ii)] $C^{\infty}_{\Ad}(P,\mathfrak{g})
        =\{\widehat\xi\in C^{\infty}(P,\mathfrak{g})\ :\
        \widehat\xi(pg)=\Ad_{g^{-1}}\widehat\xi(p)\}$, the
        $\Ad$-equivariant $\mathfrak{g}$-valued functions on $P$;
  \item[(iii)] $\Gamma(\ad P)$, the sections of the adjoint bundle
        \eqref{eq:adP}.
\end{enumerate}
The isomorphisms are
\begin{equation}\label{eq:gau_dictionary}
  \xi(p)=\widehat\xi(p)_{P}(p),
  \qquad
  \sigma_{\xi}(x)=\bigl[p,\widehat\xi(p)\bigr]\in(\ad P)_{x}
  \quad\text{for any }p\in\pi_{P}^{-1}(x).
\end{equation}
In a frame $s_{\alpha}$ the corresponding \textbf{local gauge parameter} is
\begin{equation}\label{eq:local_xi}
  \xi_{\alpha}:=\widehat\xi\circ s_{\alpha}\colon U_{\alpha}\to\mathfrak{g},
  \qquad
  \xi_{\alpha}=\xi^{a}(x)\,e_{a},
\end{equation}
and on an overlap it transforms \textbf{tensorially},
\begin{equation}\label{eq:xi_transition}
  \xi_{\beta}=\Ad_{g_{\alpha\beta}^{-1}}\circ\,\xi_{\alpha}.
\end{equation}
\end{lemma}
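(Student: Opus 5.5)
I would establish the two isomorphisms $(i)\cong(ii)$ and $(ii)\cong(iii)$ separately, and then read off the local statements \eqref{eq:local_xi}--\eqref{eq:xi_transition} from $(ii)$.

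\textbf{Step 1: $(i)\cong(ii)$.} By Remark~\ref{remark:fundamental_vector_field}, for each $p$ the map $\mathfrak{g}\to\mathcal{V}_p\pi_P$, $\eta\mapsto\eta_P(p)$, is a linear isomorphism. A vertical vector field $\xi$ therefore determines a unique map $\widehat\xi\colon P\to\mathfrak{g}$ with $\xi(p)=\widehat\xi(p)_P(p)$. I would show smoothness by writing $\widehat\xi(p)=\omega_p(\xi(p))$ for any auxiliary principal connection $\omega$, which exists by the remark after Definition~\ref{def:connection}; this uses $\omega(\eta_P)=\eta$. Next I would translate $G$-invariance into equivariance using \eqref{eq:adjoint_bundle_is_vertical_equivariant_vector_fields_on_P}:
\[
(R_g)_*\xi(p)=(R_g)_*\bigl(\widehat\xi(p)_P(p)\bigr)=\bigl(\Ad_{g^{-1}}\widehat\xi(p)\bigr)_P(pg).
\]
So $(R_g)_*\xi=\xi$, i.e. $(R_g)_*\xi(p)=\xi(pg)=\widehat\xi(pg)_P(pg)$ for all $p$, holds if and only if $\widehat\xi(pg)=\Ad_{g^{-1}}\widehat\xi(p)$, by injectivity at $pg$. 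Linearity of the correspondence is clear. The converse direction is the same computation read backwards: given $\widehat\xi$, the field $\xi(p):=\widehat\xi(p)_P(p)$ is smooth because the action is smooth.

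\textbf{Step 2: $(ii)\cong(iii)$.} This is the standard correspondence between equivariant maps and sections of associated bundles, as stated in Remark~\ref{rem:associated_general} with $F=\mathfrak{g}$ and the linear action $\Ad$. I would check the points directly. First, $\sigma_\xi(x)=[p,\widehat\xi(p)]$ is independent of $p$, because $[pg,\Ad_{g^{-1}}\widehat\xi(p)]=[p,\widehat\xi(p)]$ by the relation defining $\ad P$ in Theorem~\ref{thm:associated}. Second, the inverse is obtained as follows: given a section $\sigma$, for each $p$ there is a unique $\widehat\xi(p)$ with $\sigma(\pi_P(p))=[p,\widehat\xi(p)]$, because $\iota_p$ is a bijection onto the fiber. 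Third, smoothness is checked in a local trivialization, where $\widehat\xi\circ s_\alpha$ is just the fiber coordinate of $\sigma$.

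\textbf{Step 3: local statements.} Setting $\xi_\alpha=\widehat\xi\circ s_\alpha$ and using $s_\beta=s_\alpha\cdot g_{\alpha\beta}$, I get
\[
\xi_\beta(x)=\widehat\xi\bigl(s_\alpha(x)g_{\alpha\beta}(x)\bigr)=\Ad_{g_{\alpha\beta}(x)^{-1}}\xi_\alpha(x),
\]
which is \eqref{eq:xi_transition}.

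The main obstacle is smoothness in Step 1, namely that $\widehat\xi$ is smooth and not just a pointwise-defined map. The connection-form trick settles it cleanly. If one prefers to avoid connections, I would instead write $\widehat\xi(p)=\theta_G\bigl(\dd t_\alpha(\xi(p))\bigr)$ conjugated appropriately in a local trivialization.
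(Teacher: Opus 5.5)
Your proposal is correct and follows the paper's proof essentially step for step. You get (i)$\leftrightarrow$(ii) from Remark~\ref{remark:fundamental_vector_field}, Equation~\eqref{eq:adjoint_bundle_is_vertical_equivariant_vector_fields_on_P} and injectivity of $\zeta\mapsto\zeta_P(pg)$; (ii)$\leftrightarrow$(iii) from the standard equivariant-map/section correspondence; and \eqref{eq:xi_transition} by evaluating on $s_\beta=s_\alpha g_{\alpha\beta}$.

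The only addition is your smoothness argument $\widehat\xi=\omega(\xi)$, which the paper simply asserts. In your connection-free alternative no conjugation is needed: right equivariance of $t_\alpha$ gives exactly $\widehat\xi=(t_\alpha^{*}\theta_G)(\xi)$ on $\pi_P^{-1}(U_\alpha)$.
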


\begin{proof}
(i)$\leftrightarrow$(ii). By Remark~\ref{remark:fundamental_vector_field}
every vertical vector field is of the form $\xi(p)=\widehat\xi(p)_{P}(p)$
for a unique smooth $\widehat\xi\colon P\to\mathfrak{g}$. Using
\eqref{eq:adjoint_bundle_is_vertical_equivariant_vector_fields_on_P},
\[
  (R_{g})_{*}\xi(p)
  =(R_{g})_{*}\bigl(\widehat\xi(p)_{P}(p)\bigr)
  =\bigl(\Ad_{g^{-1}}\widehat\xi(p)\bigr)_{P}(pg),
  \qquad
  \xi(pg)=\widehat\xi(pg)_{P}(pg),
\]
and $G$-invariance $(R_{g})_{*}\xi=\xi$ is therefore \emph{equivalent}
to $\widehat\xi(pg)=\Ad_{g^{-1}}\widehat\xi(p)$, again by the injectivity
of $\zeta\mapsto\zeta_{P}(pg)$.
(ii)$\leftrightarrow$(iii) is Theorem~\ref{thm:associated} with
$V=\mathfrak{g}$ and $\rho=\Ad$: equivariant maps $P\to\mathfrak{g}$ are
exactly sections of $P\times_{\Ad}\mathfrak{g}$. Finally
\eqref{eq:xi_transition} is \eqref{eq:gau_dictionary} evaluated on
$s_{\beta}=s_{\alpha}g_{\alpha\beta}$.
\end{proof}

\begin{remark}[The bracket, and a sign that must be taken into account]
\label{rem:bracket_sign}
$\Gau(P)$ is an infinite dimensional Lie group; by
Proposition~\ref{prop:gauge_group_equivariant_maps} it is isomorphic, as a
group, to $C^{\infty}_{\Ad}(P,G)$ with the \textbf{pointwise} product. Its
Lie algebra is therefore $C^{\infty}_{\Ad}(P,\mathfrak{g})\cong
\Gamma(\ad P)$ with the \textbf{pointwise} bracket
\begin{equation}\label{eq:pointwise_bracket}
  [\sigma,\varsigma](x):=\bigl[\xi(x),\zeta(x)\bigr]_{\mathfrak{g}},
  \qquad\text{locally}\qquad
  [\xi,\zeta]^{a}(x)=C^{a}_{bc}\,\xi^{b}(x)\zeta^{c}(x),
\end{equation}
where $\sigma(x)=[s(x),\xi(x)]$ and $\varsigma(x)=[s(x),\zeta(x)]$ given a local section
$s\in\Gamma(\pi)$ and $\xi,\zeta$ $\mathfrak{g}$-valued functions.
This -and not the vector field bracket- is the bracket we use
throughout. Indeed, realizing $\gau(P)$ inside $\mathfrak{X}(P)$ as in
Lemma~\ref{lem:gau_three_descriptions}(i) \textbf{reverses the sign}:
\begin{equation}\label{eq:bracket_sign}
  \bigl[\,\xi,\zeta\,\bigr]_{\mathfrak{X}(P)}
  \;=\;-\,\bigl[\widehat\xi,\widehat\zeta\,\bigr]^{\wedge},
  \qquad \xi,\zeta\in\gau(P).
\end{equation}
To see this, write $X_{f}(p):=f(p)_{P}(p)$ for arbitrary smooth
$f\colon P\to\mathfrak{g}$. Expanding in a basis and using that
$\zeta\mapsto\zeta_{P}$ is a Lie algebra homomorphism for the right action,
\[
  [X_{f},X_{h}]=X_{[f,h]+X_{f}(h)-X_{h}(f)} .
\]
If $f=\widehat\xi$ and $h=\widehat\zeta$ are equivariant, then
\[
  X_{\widehat\xi}(\widehat\zeta)(p)
  =\der{}{t}\Big|_{t=0}\widehat\zeta\bigl(p\exp(t\widehat\xi(p))\bigr)
  =\der{}{t}\Big|_{t=0}\Ad_{\exp(-t\widehat\xi(p))}\widehat\zeta(p)
  =-\bigl[\widehat\xi,\widehat\zeta\bigr](p),
\]
and substituting gives $[X_{\widehat\xi},X_{\widehat\zeta}]
=X_{[\widehat\xi,\widehat\zeta]-[\widehat\xi,\widehat\zeta]
-[\widehat\xi,\widehat\zeta]}=X_{-[\widehat\xi,\widehat\zeta]}$, which is
\eqref{eq:bracket_sign} by definition of $(X_\cdot)$. This is precisely what makes
$\xi\mapsto\xi_{C}$ a homomorphism rather than an anti-homomorphism in
Theorem~\ref{thm:gauge_prolongation}.
\end{remark}

\subsection*{The correspondence \texorpdfstring{$\omega\leftrightarrow\phi_{\omega}$}{omega <-> phi} made explicit}

Theorem~\ref{thm:connections} was stated without construction. We need the
construction itself, because the action of $\Gau(P)$ on $C(P)$ is defined
on jets and we must be able to read off what it does to connections. The
whole point is the following elementary identification.

\begin{lemma}[Jets are horizontal complements]\label{lem:jets_are_complements}
Fix $p\in P$ with $x=\pi_{P}(p)$. The map
\[
  j^{1}_{x}s\;\longmapsto\;\mathrm{im}\bigl(T_{x}s\bigr)=:H_p\subset T_{p}P
\]
is a bijection between $\pi_{1,0}^{-1}(p)\subset J^{1}P$ (the $1$-jets at
$x$ of local sections of $\pi_{P}$ through $p$) and the set of linear
complements of $\mathcal{V}_{p}\pi_{P}$ in $T_{p}P$. Under the prolonged
right action $g\cdot j^{1}_{x}s=j^{1}_{x}(R_{g}\circ s)$ this bijection
intertwines with
\[
  H_{p}\;\longmapsto\;(R_{g})_{*}H_{p}\ \subset T_{pg}P .
\]
\end{lemma}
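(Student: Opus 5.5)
We show that the map is well defined, injective and surjective, and then check the equivariance claim by a chain-rule computation.

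\textbf{Well-definedness.} Fix $p\in P$ with $x=\pi_P(p)$, and let $s$ be a local section with $s(x)=p$. Differentiating $\pi_P\circ s=\mathrm{id}$ at $x$ gives $T_p\pi_P\circ T_xs=\mathrm{id}_{T_xX}$. Hence $T_xs$ is injective, and $H_p:=\mathrm{im}(T_xs)$ is an $(n+1)$-dimensional subspace of $T_pP$. We claim $H_p\cap\mathcal{V}_p\pi_P=0$. Indeed, if $T_xs(v)$ is vertical, then $v=T_p\pi_P(T_xs(v))=0$. Since $\dim\mathcal{V}_p\pi_P=\dim T_pP-(n+1)$, the subspace $H_p$ is a linear complement. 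By the definition of the equivalence relation, $j^1_xs$ is exactly the pair $(s(x),T_xs)$. So $H_p$ depends only on the jet, and the map is well defined.

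\textbf{Injectivity.} Suppose $j^1_xs$ and $j^1_xs'$ lie over $p$ and give the same image $H$. Both $T_xs$ and $T_xs'$ are sections of the isomorphism $T_p\pi_P|_H:H\to T_xX$. They must therefore both equal its inverse, so $T_xs=T_xs'$ and the two jets coincide.

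\textbf{Surjectivity.} Let $H$ be any complement of $\mathcal{V}_p\pi_P$. Then $T_p\pi_P|_H$ is an isomorphism onto $T_xX$; let $L$ be its inverse. We need a local section $s$ with $s(x)=p$ and $T_xs=L$. Work in a local trivialization $P|_U\cong U\times G$, or in adapted coordinates $(x^\mu,y^A)$ on $P$ around $p$. There $L(\partial_\mu)=\partial_\mu+L^A_\mu\,\partial_{y^A}$ for some constants $L^A_\mu$. Define
\[
s(x')=\bigl(x',\,y^A(p)+L^A_\mu\,(x'^\mu-x^\mu)\bigr)
\]
for $x'$ near $x$. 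This is smooth, satisfies $s(x)=p$, and has $T_xs=L$. We expect this step to be the main obstacle, mostly in stating cleanly why a linear local extension exists; it is routine once coordinates are fixed. The only point needing care is shrinking $U$ so that the affine formula stays inside the chart on $G$.

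\textbf{Equivariance.} For $g\in G$ we have $R_g\circ s(x)=pg$ and, by the chain rule, $T_x(R_g\circ s)=(R_g)_{*,p}\circ T_xs$. Taking images gives
\[
\mathrm{im}\,T_x(R_g\circ s)=(R_g)_*H_p\subset T_{pg}P,
\]
which is exactly the stated intertwining property. As a consistency check, $(R_g)_*$ preserves vertical subspaces because $\pi_P\circ R_g=\pi_P$, so $(R_g)_*H_p$ is again a complement of $\mathcal{V}_{pg}\pi_P$.
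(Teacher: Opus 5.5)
Your proof is correct and follows the same route as the paper's. The paper identifies jets over $p$ with right splittings of $0\to\mathcal{V}_p\pi_P\to T_pP\to T_xX\to 0$, uses the bijection $L\mapsto\mathrm{im}\,L$, $H\mapsto(T_p\pi_P|_H)^{-1}$ between splittings and complements of the kernel, and gets equivariance from the chain rule, exactly as you do; you additionally spell out, through the affine section in adapted coordinates, that every splitting is realized as $T_xs$ for some local section, which the paper tacitly assumes when it says a jet ``is the same thing as'' such a linear map.
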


\begin{proof}
A jet $j^{1}_{x}s$ with $s(x)=p$ is the same thing as a linear map
$T_{x}s\colon T_{x}X\to T_{p}P$ with
$T_{p}\pi_{P}\circ T_{x}s=\mathrm{id}_{T_{x}X}$. i.e. a right splitting of
the exact sequence
$0\to\mathcal{V}_{p}\pi_{P}\to T_{p}P\to T_{x}X\to0$. Splittings (in this
case $T_x s$) of a short exact sequence correspond bijectively to complements of
the kernel (the kernel of $T_p\pi_P$ i.e. $\mathcal{V}_p\pi_P$), the complement being
the image of the splitting and the splitting being the inverse of $T_{p}\pi_{P}|_{H_{p}}$.
Equivariance is $\mathrm{im}\,T_{x}(R_{g}\circ s)=(R_{g})_{*}\mathrm{im}\,T_{x}s$.
\end{proof}

\begin{proposition}[Connections as sections, explicitly]
\label{prop:connections_as_sections_explicit}
Let $x\in X$. Because $G$ acts freely on $P$, a $G$-orbit in
$\pi_{1}^{-1}(x)\subset J^{1}P$ meets each fiber $\pi_{1,0}^{-1}(p)$,
$p\in P_{x}$, in exactly one point. Hence, by
Lemma~\ref{lem:jets_are_complements},
\[
  C(P)_{x}=\pi^{-1}(x)
  \;\cong\;
  \left\{
  \begin{array}{c}
    \text{$G$-invariant fields }p\mapsto H_{p}\ (p\in P_{x})\\[2pt]
    \text{of complements of }\mathcal{V}_{p}\pi_{P}\text{ in }T_{p}P
  \end{array}
  \right\},
\]
and \textbf{consequently a section} $\bm{\phi\in\Gamma(\pi)}$ \textbf{is exactly} a
$G$-invariant horizontal distribution $H\subset TP$, i.e. \textbf{a principal connection}
(Definition \ref{def:connection}). We write $\phi_{\omega}\in\Gamma(\pi)$
for the section determined by the connection $\omega$, so that
\[
  \phi_{\omega}(x)=\bigl[\,j^{1}_{x}s\,\bigr],
  \qquad s\ \text{any local section with}\ \mathrm{im}\,T_{x}s=H_{s(x)}
  =\ker\omega_{s(x)} .
\]
Moreover, in the frame (local section) $s_{\alpha}$ the (isomorphic)
\textbf{horizontal lift} is
\begin{align}\label{eq:horizontal_lift_local}
  T_{x}X\ni v\;&\longmapsto\;
  T_{x}s_{\alpha}(v)-\bigl(A_{\alpha}(v)\bigr)_{P}\bigl(s_{\alpha}(x)\bigr)
  \ \in H_{s_{\alpha}(x)},\\
  &A_{\alpha}=s_{\alpha}^{*}\omega=A_\mu^a(x)\dd x^\mu\otimes e_a,
\end{align}
so that \textbf{the fiber coordinates of} $\bm{\phi_{\omega}(x)}$ \textbf{in the induced
chart are precisely the components} $\bm{A^{a}_{\mu}(x)}$ \textbf{of the gauge potential}.
Finally, two such sections differ by a $G$-invariant horizontal
$\Ad$-equivariant $\mathfrak{g}$-valued $1$-form on $P$, i.e.\ by an
element of $\Omega^{1}(X;\ad P)=\Gamma(T^{*}X\otimes\ad P)$: this is the
affine structure asserted in Theorem~\ref{thm:connections}.
\end{proposition}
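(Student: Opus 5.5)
The plan is to prove the statement in four steps, each resting on Lemma~\ref{lem:jets_are_complements} and the freeness of the action.

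\emph{Step one: orbits meet each fiber over $p$ exactly once.} Take any two jets $j^1_xs$ and $j^1_xs'$ in $\pi_1^{-1}(x)$ with $s(x)=p$. For any $g\in G$, the jet $g\cdot j^1_xs$ sits over $pg$. So the orbit of $j^1_xs$ meets $\pi_{1,0}^{-1}(pg)$ in the single point $g\cdot j^1_xs$. Uniqueness holds because $pg=pg'$ forces $g=g'$ by freeness.

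\emph{Step two: identify the fiber of $C(P)$.} By step one, an orbit is determined by its representative at any chosen $p\in P_x$. By Lemma~\ref{lem:jets_are_complements}, that representative is a complement $H_p$. Transporting it with $(R_g)_*$ gives the field $p g\mapsto (R_g)_*H_p$. This field is well defined because $pg$ determines $g$, and it is $G$-invariant by construction. Conversely, every invariant field of complements is determined by its value at one point. This gives the bijection for $C(P)_x$.

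\emph{Step three: sections are connections.} A section $\phi$ assigns such an invariant field to each $x$, and these fields glue to a $G$-invariant distribution $H\subset TP$ complementary to $\mathcal{V}\pi_P$. This is Definition~\ref{def:connection}. The step I expect to be the main obstacle is smoothness in both directions. I would handle it by computing in a trivialization. A local section $s_\alpha$ trivializes $J^1P$ over $P|_{U_\alpha}$, and $C(P)|_{U_\alpha}$ is then identified with the jets at the points $s_\alpha(x)$. In these coordinates, smoothness of $H$ is equivalent to smoothness of the coefficient functions, namely the $A^a_\mu$ defined in step four.

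\emph{Step four: the local formula.} Set $h(v):=T_xs_\alpha(v)-(A_\alpha(v))_P(s_\alpha(x))$. Then $\omega(h(v))=A_\alpha(v)-A_\alpha(v)=0$, using property (i) of Definition~\ref{def:connection}. Moreover $T\pi_P(h(v))=v$, since fundamental fields are vertical. So $h$ is the horizontal lift, i.e.\ the splitting whose image is $H_{s_\alpha(x)}$.

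Write $A_\mu:=A^a_\mu e_a$ for the components of $A_\alpha$. Comparing $h$ with the reference splitting $T_xs_\alpha$ in the induced chart, the difference is $-(A_\mu)_P\,\dd x^\mu$. Hence the fiber coordinates of $\phi_\omega(x)$ are exactly the $A^a_\mu(x)$ (up to the sign convention fixed by the chart).

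Finally, take two connections $\omega_1,\omega_2$. Their difference $\omega_1-\omega_2$ is horizontal and $\Ad$-equivariant, because each connection reproduces the fundamental fields and the two equivariance laws subtract. By \eqref{eq:tensorial_iso} this difference is an element of $\Omega^1(X;\ad P)$, which gives the affine structure asserted in Theorem~\ref{thm:connections}.
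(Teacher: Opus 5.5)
Your proposal is correct and follows the paper's own route: the orbit description via Lemma~\ref{lem:jets_are_complements} together with freeness and transitivity, horizontality of the lift checked by applying $\omega$ and using that fundamental fields are vertical, and the affine structure from the difference of two connection forms being horizontal and $\Ad$-equivariant. Your trivialization-based smoothness sketch is slightly more explicit than the paper's one-line remark.

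Your sign hedge is also justified. The paper never fixes the induced chart, and reading $h-T_xs_\alpha=-(A_\mu)_P\,\dd x^\mu$ through $\xi\mapsto\xi_P$ gives $-A^a_\mu$. So the claim that the coordinates are precisely $A^a_\mu$ amounts to defining the chart with the opposite sign.
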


\begin{proof}
  See Appendix \ref{app:some_proofs_on_gauge_theory}.
\end{proof}

\subsection*{The induced action on the bundle of connections}

\begin{lemma}[The prolongation of a gauge transformation]
\label{lem:PhiC_welldefined}
Let $\Phi\in\Gau(P)$. Then
\[
  \widehat{\Phi}\colon J^{1}P\longrightarrow J^{1}P,
  \qquad
  \widehat{\Phi}\bigl(j^{1}_{x}s\bigr):=j^{1}_{x}(\Phi\circ s),
\]
is well defined, is a diffeomorphism, commutes with the prolonged right
$G$-action, and therefore descends to a diffeomorphism
\[
  \Phi_{C}\colon C(P)\longrightarrow C(P),
  \qquad
  \Phi_{C}\bigl(\rho(j^{1}_{x}s)\bigr):=\rho\bigl(j^{1}_{x}(\Phi\circ s)\bigr),
\]
which is \textbf{vertical}, $\pi\circ\Phi_{C}=\pi$. Moreover
$\Phi\mapsto\Phi_{C}$ is a group \textbf{homomorphism}
$\Gau(P)\to\Diff(C(P))$.
\end{lemma}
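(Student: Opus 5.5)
The plan is to check the claims one by one, working at the level of jets and using only the definition of $J^1P$ as equivalence classes of local sections.

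\textbf{Well-definedness of $\widehat{\Phi}$.} Let $s$ be a local section of $\pi_P$. Then $\Phi\circ s$ is again a local section, because $\pi_P\circ\Phi=\pi_P$ gives $\pi_P\circ\Phi\circ s=\mathrm{id}$. Suppose $j^1_xs=j^1_xs'$, that is, $s(x)=s'(x)$ and $T_xs=T_xs'$. The chain rule gives $(\Phi\circ s)(x)=(\Phi\circ s')(x)$ and $T_x(\Phi\circ s)=T_{s(x)}\Phi\circ T_xs=T_x(\Phi\circ s')$. Hence $\widehat\Phi$ is well defined.

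\textbf{Smoothness and bijectivity.} In induced jet coordinates $\widehat\Phi$ is the usual first prolongation $j^1\Phi$, whose components are built from $\Phi$ and its first derivatives, so it is smooth. Since $\Phi^{-1}\in\Gau(P)$ and $\widehat{\Phi\circ\Psi}=\widehat\Phi\circ\widehat\Psi$ (again by associativity of composition under the jet bracket), $\widehat{\Phi^{-1}}$ is a smooth inverse. Therefore $\widehat\Phi$ is a diffeomorphism, and $\Phi\mapsto\widehat\Phi$ is already a homomorphism upstairs.

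\textbf{Commutation with the prolonged $G$-action.} This follows from $\Phi\circ R_g=R_g\circ\Phi$:
\[
\widehat\Phi(g\cdot j^1_xs)=j^1_x(\Phi\circ R_g\circ s)=j^1_x(R_g\circ\Phi\circ s)=g\cdot\widehat\Phi(j^1_xs).
\]
Consequently $\widehat\Phi$ maps $G$-orbits to $G$-orbits, so $\Phi_C$ is well defined on $C(P)=J^1P/G$ and satisfies $\Phi_C\circ\rho=\rho\circ\widehat\Phi$.

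\textbf{Smoothness of $\Phi_C$ (main obstacle).} I expect the main technical point to be the smoothness of $\Phi_C$, i.e.\ that a $G$-equivariant smooth map descends smoothly. I would invoke that $\rho\colon J^1P\to C(P)$ is a surjective submersion; it is a principal $G$-bundle, since the prolonged action is free and proper, being induced from the free and proper action on $P$ through $\pi_{1,0}$. A map on the base is smooth as soon as its composition with a surjective submersion is smooth, and $\Phi_C\circ\rho=\rho\circ\widehat\Phi$ is smooth. Alternatively, one can argue locally using smooth local sections of $\rho$.

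\textbf{Remaining properties.} The inverse of $\Phi_C$ is $(\Phi^{-1})_C$, so $\Phi_C$ is a diffeomorphism. Verticality follows from $\pi(\rho(j^1_xs))=x$ together with the fact that $\widehat\Phi$ preserves the base point $x$, since $\pi_1\circ\widehat\Phi=\pi_1$. The homomorphism property descends from the upstairs identity:
\[
(\Phi\circ\Psi)_C\circ\rho=\rho\circ\widehat\Phi\circ\widehat\Psi=\Phi_C\circ\Psi_C\circ\rho,
\]
and surjectivity of $\rho$ then gives $(\Phi\circ\Psi)_C=\Phi_C\circ\Psi_C$.
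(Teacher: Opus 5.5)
Your proposal is correct and follows the paper's proof step for step: well-definedness by the chain rule, $\widehat{\Phi^{-1}}$ as the inverse, equivariance from $\Phi\circ R_g=R_g\circ\Phi$, verticality from $\pi_1\circ\widehat\Phi=\pi_1$, and the homomorphism property from $\widehat{\Phi\circ\Psi}=\widehat\Phi\circ\widehat\Psi$. The only difference is that you make the smoothness claims explicit, using jet coordinates for $\widehat\Phi$ and descent through the surjective submersion $\rho\colon J^1P\to C(P)$ for $\Phi_C$, which the paper leaves implicit.
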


\begin{proof}
$\Phi\circ s$ is again a local section of $\pi_{P}$, because
$\pi_{P}\circ\Phi\circ s=\pi_{P}\circ s=\mathrm{id}_X$; and $j^{1}_{x}(\Phi\circ
s)$ depends only on $j^{1}_{x}s$ by the chain rule, so $\widehat{\Phi}$ is
well defined, with inverse $\widehat{\Phi^{-1}}$. For the equivariance,
$G$-equivariance of $\Phi$ gives $\Phi\circ R_{g}=R_{g}\circ\Phi$, hence
\[
  \widehat{\Phi}\bigl(j^{1}_{x}s\cdot g\bigr)
  =j^{1}_{x}\bigl(\Phi\circ R_{g}\circ s\bigr)
  =j^{1}_{x}\bigl(R_{g}\circ\Phi\circ s\bigr)
  =\widehat{\Phi}\bigl(j^{1}_{x}s\bigr)\cdot g ,
\]
so $\widehat{\Phi}$ descends to the quotient $C(P)=J^{1}P/G$. Verticality
holds because $\pi_{1}\circ\widehat{\Phi}=\pi_{1}$ (the base point $x$ is
untouched). Finally
$\widehat{\Phi\circ\Psi}(j^{1}_{x}s)=j^{1}_{x}(\Phi\circ\Psi\circ s)
=\widehat{\Phi}\bigl(\widehat{\Psi}(j^{1}_{x}s)\bigr)$, so
$(\Phi\circ\Psi)_{C}=\Phi_{C}\circ\Psi_{C}$ (group homomorphism).
\end{proof}

The following proposition is the one that was missing, and it is what
forces the convention of Remark~\ref{rem:sign_convention_gauge}: the
canonical map $\Phi_{C}$ is \textbf{not} the map $\omega\mapsto
\Phi^{*}\omega$, but rather, its inverse.

\begin{proposition}[What $\Phi_{C}$ does to a connection]
\label{prop:PhiC_is_pullback}
For every $\Phi\in\Gau(P)$ and every principal connection $\omega$,
\begin{equation}\label{eq:PhiC_is_pullback}
  \Phi_{C}\circ\phi_{\omega}=\phi_{(\Phi^{-1})^{*}\omega}.
\end{equation}
\end{proposition}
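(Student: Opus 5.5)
We argue pointwise, working through the explicit correspondence of Proposition~\ref{prop:connections_as_sections_explicit}. By that proposition a point $\phi_\omega(x)\in C(P)_x$ is the $G$-orbit $[j^1_xs]$ of any jet with $s(p)=p$ and $\mathrm{im}\,T_xs=\ker\omega_{s(x)}$. Equivalently, it is the $G$-invariant field of complements $p\mapsto H_p=\ker\omega_p$ over the fiber $P_x$. It therefore suffices to show that, for every $p\in P_x$, the jet $\widehat\Phi(j^1_xs)=j^1_x(\Phi\circ s)$ corresponds, through Lemma~\ref{lem:jets_are_complements}, to the complement $\ker\bigl((\Phi^{-1})^*\omega\bigr)_{\Phi(p)}$ at the point $\Phi(p)=\Phi(s(x))$. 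Since $\Phi$ is vertical, $\Phi(p)$ still lies in $P_x$, and as $p$ ranges over $P_x$ so does $\Phi(p)$, $\Phi$ being a bijection of the fiber. The resulting orbit is then the whole $G$-invariant field of complements defined by $(\Phi^{-1})^*\omega$, which is the claimed identity \eqref{eq:PhiC_is_pullback}.

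The key computation is the chain rule. We have $\mathrm{im}\,T_x(\Phi\circ s)=T_p\Phi\bigl(\mathrm{im}\,T_xs\bigr)=T_p\Phi(\ker\omega_p)$. It remains to identify $T_p\Phi(\ker\omega_p)$ with $\ker\bigl((\Phi^{-1})^*\omega\bigr)_{\Phi(p)}$. Take $w\in T_{\Phi(p)}P$. Then $((\Phi^{-1})^*\omega)_{\Phi(p)}(w)=\omega_p\bigl(T_{\Phi(p)}\Phi^{-1}w\bigr)$, and this vanishes exactly when $T\Phi^{-1}w\in\ker\omega_p$, that is, exactly when $w\in T_p\Phi(\ker\omega_p)$. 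Note that this step needs only that $\Phi$ is a diffeomorphism.

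Next we must make sure that $(\Phi^{-1})^*\omega$ is genuinely a principal connection, so that $\phi_{(\Phi^{-1})^*\omega}$ is defined. This follows from Proposition~\ref{prop:pullback_omega_gauge} applied to $\Phi^{-1}=\Phi_{\tau^{-1}}\in\Gau(P)$, which is available by Proposition~\ref{prop:gauge_group_equivariant_maps}. Consistency with $G$-invariance also holds: $\Phi$ commutes with $R_g$, so the pushed-forward distribution is again $G$-invariant. This matches Lemma~\ref{lem:PhiC_welldefined}, which says that $\widehat\Phi$ commutes with the prolonged action, so the choice of representative $p$ in the orbit is immaterial.

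The main obstacle is not this computation, which is immediate, but bookkeeping the direction of the maps. The jet picture \emph{pushes forward} the horizontal subspaces by $T\Phi$, while the action $\Phi\cdot\omega=\Phi^*\omega$ \emph{pulls back}; this mismatch is the source of the inverse. As an independent check of the sign, we will verify it in a frame. Using \eqref{eq:horizontal_lift_local} together with \eqref{eq:gauge_transf_A_active} for $\tau^{-1}$, the fiber coordinates of $\Phi_C\circ\phi_\omega$ should be $\Ad_{\tau_\alpha}A_\alpha+(\tau_\alpha^{-1})^*\theta_G$. We will compare this with the direct local differentiation of $x\mapsto s_\alpha(x)\tau_\alpha(x)$ applied to horizontal lifts, following Remark~\ref{rem:one_computation}.
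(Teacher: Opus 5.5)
Your proposal is correct and follows essentially the paper's argument. Via Lemma~\ref{lem:jets_are_complements} and the chain rule, $\Phi_C$ sends the orbit of $(p,\ker\omega_p)$ to that of $(\Phi(p),T_p\Phi(\ker\omega_p))$, and because $T_p\Phi$ is an isomorphism this pushed-forward distribution is the horizontal distribution of $(\Phi^{-1})^*\omega$. The only difference is cosmetic: you compute $\ker\bigl((\Phi^{-1})^*\omega\bigr)_{\Phi(p)}$ directly, which suffices since $\phi_\omega$ is defined through $\ker\omega$ and Proposition~\ref{prop:pullback_omega_gauge} makes $(\Phi^{-1})^*\omega$ a principal connection; the paper instead names the connection $\omega'$ with horizontal distribution $T\Phi(H)$, shows $\ker\Phi^*\omega'=\ker\omega$, and uses that two principal connection forms with the same kernel coincide.
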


\begin{proof}
By Proposition~\ref{prop:connections_as_sections_explicit},
$\phi_{\omega}(x)$ is the orbit of a pair $(p,H_{p})$ with
$H_{p}=\ker\omega_{p}$, and by Lemma~\ref{lem:jets_are_complements}
$\Phi_{C}$ sends it to the orbit of $\bigl(\Phi(p),T_{p}\Phi(H_{p})\bigr)$.
Hence $\Phi_{C}\circ\phi_{\omega}=\phi_{\omega'}$, where $\omega'$ is the
connection whose horizontal distribution is $H'=T\Phi(H)$. Now
$\Phi^{*}\omega'$ is a principal connection by
Proposition~\ref{prop:pullback_omega_gauge}, and
\[
  v\in\ker(\Phi^{*}\omega')_{p}
  \iff T_{p}\Phi(v)\in\ker\omega'_{\Phi(p)}=T_{p}\Phi(H_{p})
  \iff v\in H_{p}=\ker\omega_{p},
\]
where the last iff condition is true because $T_{p}\Phi$ is an isomorphism.
Two principal connection forms with the same kernel coincide: they agree on the
horizontal part (both vanish) and on the vertical part (both send $\zeta_{P}\mapsto\zeta$,
by Definition~\ref{def:connection}(i)), and $T_{p}P=\mathcal{V}_{p}\pi_{P}
\oplus H_{p}$. Therefore $\Phi^{*}\omega'=\omega$, i.e.\
$\omega'=(\Phi^{-1})^{*}\omega$.
\end{proof}

\begin{remark}[Convention, fixed once and for all]
\label{rem:sign_convention_gauge}
We have two natural actions of $\Gau(P)$ on the space $\mathcal{A}$ of
principal connections, and they are inverse to each other:
\begin{align*}
  \Phi\cdot\omega&:=\Phi^{*}\omega
  &&\textbf{right}\text{ action, since }(\Phi\circ\Psi)^{*}
    =\Psi^{*}\circ\Phi^{*},\\
  \Phi\ast\omega&:=(\Phi^{-1})^{*}\omega
  &&\textbf{left}\text{ action, since }((\Phi\circ\Psi)^{-1})^*=(\Phi^{-1})^*\circ(\Psi^{-1})^*.
\end{align*}
By Proposition~\ref{prop:PhiC_is_pullback} the \emph{canonical} map
$\Phi_{C}$ of Lemma~\ref{lem:PhiC_welldefined} realizes the second one.
Following \eqref{eq:gauge_transf_A_active} --- and standard physics usage
--- we adopt the \textbf{first}:
\[
  \boxed{\ \Phi\cdot\omega:=\Phi^{*}\omega\ }
\]
\textbf{Accordingly}, the transformation of the configuration bundle $Y=C(P)$ that
we shall use throughout is
\begin{equation}\label{eq:PhiY_def}
  \Phi_{Y}:=\bigl(\Phi^{-1}\bigr)_{C}=\bigl(\Phi_{C}\bigr)^{-1},
  \qquad\text{so that}\qquad
  \Phi_{Y}\circ\phi_{\omega}=\phi_{\Phi^{*}\omega},
\end{equation}
a vertical (trivial by $\Phi_C$ being vertical) automorphism of $\pi\colon Y\to X$ over
$\mathrm{id}_{X}$. Note
that $\Phi\mapsto\Phi_{Y}$ is an \emph{anti}-homomorphism,
$(\Phi\circ\Psi)_{Y}=\Psi_{Y}\circ\Phi_{Y}$, exactly as it must be for a
right action. The opposite choice would replace $\Phi$ by $\Phi^{-1}$
everywhere, i.e.\ $\xi$ by $-\xi$, and would turn
\eqref{eq:infinitesimal_gauge} into $\delta_{\xi}A=-D_{A}\xi$ and
\eqref{eq:gauge_transf_A_active} into
$\Ad_{\tau_{\alpha}}\!\circ A_{\alpha}+(\tau_\alpha^{-1})^*\theta_G$.
It changes no orbit, no invariant and no field equation, but it changes the
sign of every current below, so it must be declared.
\end{remark}

\subsection{Infinitesimal gauge transformations on \texorpdfstring{$Y$}{Y} and \texorpdfstring{$Z$}{Z}}

\begin{theorem}[Gauge transformations and their prolongations]
\label{thm:gauge_prolongation}
Let $\xi\in\gau(P)\cong\Gamma(\ad P)$ and let $\{\Phi_{t}\}\subset\Gau(P)$
be its flow, $\Phi_{0}=\mathrm{id}_{P}$. Define its \textbf{generator} as
\begin{equation}\label{eq:xi_C_def}
  \xi_{C}:=\der{}{t}\Big|_{t=0}\bigl(\Phi_{t}\bigr)_{Y}
        =-\der{}{t}\Big|_{t=0}\bigl(\Phi_{t}\bigr)_{C} .
\end{equation}
Then:
\begin{enumerate}
  \item $\xi_{C}$ is a $\pi$-\textbf{vertical} vector field on $Y=C(P)$, and in the
        fibered coordinates $(x^{\mu},A^{a}_{\mu})$ of
        Proposition \ref{prop:connections_as_sections_explicit},
        \begin{equation}\label{eq:xi_C_local}
          \xi_{C}=\bigl(D_{\mu}\xi\bigr)^{a}(x)\parder{}{A^{a}_{\mu}},
          \qquad
          \bigl(D_{\mu}\xi\bigr)^{a}(x)
          :=\parder{\xi^{a}(x)}{x^{\mu}}+C^{a}_{bc}A^{b}_{\mu}(x)\xi^{c}(x);
        \end{equation}
  \item $\xi\mapsto\xi_{C}$ is a \textbf{homomorphism of Lie algebras}
        from $\Gamma(\ad P)$ with the pointwise bracket
        \eqref{eq:pointwise_bracket} into $\Gamma(\mathcal{V}\pi)=:
        \mathfrak{X}^{\mathcal{V}(\pi)}(C(P))$:
        \[
          [\xi_{C},\zeta_{C}]=[\xi,\zeta]_{C},
        \]
        and (for $G$ connected) its kernel consists of the locally constant 
        sections with values in the centre $\mathfrak{z}(\mathfrak{g})$; in
        particular it is \textbf{injective} whenever
        $\mathfrak{z}(\mathfrak{g})=0$, e.g.\ for $G$ semisimple;
  \item the canonical prolongation $j^{1}\xi_{C}\in\mathfrak{X}(Z)$,
        $Z=J^{1}(C(P))$, is the infinitesimal gauge transformation acting
        on the first order field variables, with local expression
        \begin{equation}\label{eq:j1xi_C_local}
          \begin{aligned}
          j^{1}\xi_{C}
          &=\bigl(D_{\mu}\xi\bigr)^{a}(x)\parder{}{A^{a}_{\mu}}
          +\frac{\dd\bigl(D_{\mu}\xi\bigr)^{a}}{\dd x^{\nu}}(x)
           \parder{}{A^{a}_{\mu\nu}},\\[2pt]
          \frac{\dd(D_{\mu}\xi)^{a}}{\dd x^{\nu}}(x)
          &=\partial_{\nu}\partial_{\mu}\xi^{a}(x)
           +C^{a}_{bc}\bigl(A^{b}_{\mu\nu}(x)\xi^{c}(x)
           +A^{b}_{\mu}(x)\partial_{\nu}\xi^{c}(x)\bigr),
          \end{aligned}
        \end{equation}
        where $A^{a}_{\mu\nu}$ denotes the jet coordinate corresponding to
        $\partial_{\nu}A^{a}_{\mu}$ under the dictionary
        \eqref{eq:dictionary}.
\end{enumerate}
\end{theorem}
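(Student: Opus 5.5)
The plan is to prove all three items in one frame $s_\alpha$, by first computing how a finite gauge transformation acts on the fiber coordinates $A^a_\mu$ of $Y=C(P)$, and then differentiating at $t=0$.

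For item (i) the tools are Proposition~\ref{prop:PhiC_is_pullback} and the convention \eqref{eq:PhiY_def}, which give $(\Phi_t)_Y\circ\phi_\omega=\phi_{\Phi_t^*\omega}$. Proposition~\ref{prop:connections_as_sections_explicit} identifies the fiber coordinates of $\phi_\omega(x)$ with the components of $A_\alpha=s_\alpha^*\omega$. Every point of $\pi^{-1}(x)$ is $\phi_\omega(x)$ for some connection $\omega$: one can prescribe $A_\alpha$ near $x$ and extend with a partition of unity, using the affine structure of Theorem~\ref{thm:connections}. So \eqref{eq:gauge_transf_A_active} determines $(\Phi_t)_Y$ pointwise in coordinates:
\[
A^a_\mu\longmapsto\bigl(\Ad_{\tau_{t}^{-1}}A+\tau_{t}^{-1}\partial_\mu\tau_{t}\bigr)^a,
\]
where $\tau_t:=\tau_{\alpha,t}$ is the local gauge transformation of $\Phi_t$ in the frame $s_\alpha$. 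This map fixes $x$, so $\xi_C$ is vertical.

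Next I need $\tau_t$ to first order. Since $\Phi_t(p)=p\,\tau_t(p)$ has velocity $\xi(p)=\widehat\xi(p)_P(p)$ by Lemma~\ref{lem:gau_three_descriptions}, we get $\tau_t=\exp(t\widehat\xi)+O(t^2)$, and hence $\tau_{t}=e+t\,\xi_\alpha+O(t^2)$ along $s_\alpha$. Differentiating $\Ad_{\tau_{t}^{-1}}A$ gives $-[\xi_\alpha,A_\mu]=[A_\mu,\xi_\alpha]$, with components $C^a_{bc}A^b_\mu\xi^c$. Differentiating the Maurer--Cartan term gives $\partial_\mu\xi^a$. Together these yield \eqref{eq:xi_C_local}. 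The sign bookkeeping is the one delicate point here, and it is exactly what Remark~\ref{rem:sign_convention_gauge} fixed: using $\Phi_Y$ rather than $\Phi_C$ gives $+D_\mu\xi$.

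For item (ii) I will compute the bracket directly in coordinates. Since $\xi_C$ and $\zeta_C$ are vertical with coefficients affine in $A$,
\[
[\xi_C,\zeta_C]^a_\mu=\xi_C\bigl((D_\mu\zeta)^a\bigr)-\zeta_C\bigl((D_\mu\xi)^a\bigr)=C^a_{bc}(D_\mu\xi)^b\zeta^c-C^a_{bc}(D_\mu\zeta)^b\xi^c .
\]
Expanding, the derivative terms combine to $\partial_\mu(C^a_{bc}\xi^b\zeta^c)$. The quadratic terms reduce, via the Jacobi identity for $C^a_{bc}$ (stated just before Definition~\ref{def:vector-valued-forms}), to $C^a_{bd}A^b_\mu C^d_{ef}\xi^e\zeta^f$. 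This gives $(D_\mu[\xi,\zeta])^a$ with the pointwise bracket \eqref{eq:pointwise_bracket}, consistent with Remark~\ref{rem:bracket_sign}.

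For the kernel of $\xi\mapsto\xi_C$: requiring $D_\mu\xi=0$ for all values of $A^b_\mu$ forces the part linear in $A$ to vanish, $C^a_{bc}\xi^c=0$, so $\xi$ takes values in $\mathfrak z(\mathfrak g)$; the remaining equation is $\partial_\mu\xi=0$. The result is frame independent because $\Ad$ fixes the centre when $G$ is connected. Injectivity then follows when $\mathfrak z(\mathfrak g)=0$.

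Item (iii) is Definition~\ref{def:1jet-prolongation} applied with $\xi^\mu_Y=0$, where the fiber index is $i\leftrightarrow(a,\mu)$ and $\xi^i_Y=(D_\mu\xi)^a$. The prolongation then has jet component $\frac{\dd}{\dd x^\nu}(D_\mu\xi)^a$. Expanding the total derivative, and replacing $\partial_\nu A^b_\mu$ by the jet coordinate $A^b_{\mu\nu}$ as in \eqref{eq:dictionary}, gives \eqref{eq:j1xi_C_local}.

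The main obstacle I expect is the first step. One has to justify that knowing the action on sections of the form $\phi_\omega$ determines the diffeomorphism $(\Phi_t)_Y$ pointwise. One also has to keep consistent sign conventions between $\Phi_Y$, the active formula \eqref{eq:gauge_transf_A_active}, and the right-action bracket.
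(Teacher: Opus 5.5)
Your proposal is correct and follows essentially the same route as the paper's proof. You obtain $\xi_C$ by differentiating the active law \eqref{eq:gauge_transf_A_active} along $(\Phi_t)_Y\circ\phi_\omega=\phi_{\Phi_t^*\omega}$, check the bracket by the same coordinate computation (skew-symmetry plus Jacobi), read off the kernel from the affinity of $\xi_C$ in $A$, and get (iii) from Definition~\ref{def:1jet-prolongation} with vanishing base components; your only addition is the explicit justification that every point of a fiber of $C(P)$ has the form $\phi_\omega(x)$, which the paper leaves implicit.
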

\begin{proof}
The full proof is given in Section \ref{app:the_big_proof} of
Appendix \ref{app:some_proofs_on_gauge_theory}; the reader is encouraged to
consult it, since some notation (e.g.\ $\delta_\xi A_\alpha$) is introduced
there.
\end{proof}

\begin{remark}[Passive versus active, once more]
Equation \eqref{eq:infinitesimal_gauge} is formally identical to
\eqref{eq:gauge_transf_A} with $g_{t}(x)=\exp(t\xi_{\alpha}(x))$ in the
role of the transition function $g_{\alpha\beta}$; by
Remark~\ref{rem:one_computation} this is no accident. But only the active
reading produces the vector field \eqref{eq:xi_C_local} on $Y=C(P)$, which
is the object that the variational theory of
Chapter~\ref{chap:symmetries_and_Noether_theorems} consumes.
\end{remark}

The field strength, in contrast, transforms tensorially. Since
$F_\alpha=\dd A_\alpha+\tfrac{1}{2}[A_\alpha,A_\alpha]$ and the bracket of two
$\mathfrak{g}$-valued $1$-forms is symmetric, $\delta_{\xi}F_\alpha=\dd(\delta_{\xi}A_\alpha)
+[A_\alpha,\delta_{\xi}A_\alpha]=D_{A_\alpha}(\delta_{\xi}A_\alpha)$, so by
\eqref{eq:infinitesimal_gauge} and $D_{A_\alpha}D_{A_\alpha}\sigma=[F_\alpha,\sigma]$ for
$\sigma\in\Gamma(\ad P)$,
\begin{equation}\label{eq:infinitesimal_gauge_F}
  \delta_{\xi_\alpha}F_\alpha=D_{A_\alpha}\bigl(D_{A_\alpha}\xi_\alpha\bigr)=[F_\alpha,\xi_\alpha],
  \qquad\text{in components}\qquad
  j^{1}\xi_{C}\bigl(F^{a}_{\mu\nu}\bigr)=C^{a}_{bc}F^{b}_{\mu\nu}\xi^{c},
\end{equation}
the infinitesimal version of the finite law
$F\mapsto\Ad_{g^{-1}}\!\circ F$ of \eqref{eq:gauge_transf_F}, which is proved in
Subsection \ref{sub:useful_identities_for_gauge_symmetries_computations}.

\section{The Yang--Mills Lagrangian}\label{sec:YM_lagrangian}

We close the chapter by writing down the Yang--Mills Lagrangian on $Z=J^{1}(C(P))$
and pointing out its degeneracy, which is what forces the constraint algorithm of
Chapter~\ref{chap:multisymplectic} into play. The detailed computation of the field
equations, momenta and currents is deferred to
Chapter~\ref{chap:multisymplectic_yang_mills_computations}.

The Yang--Mills Lagrangian requires two extra pieces of data, both external to the
gauge structure:
\begin{enumerate}
  \item a (pseudo-)Riemannian metric $g$ on $X$, whose associated volume form we
        identify with $\sqrt{|g|}\eta$ for $\eta$ fixed in Chapter~\ref{chap:multisymplectic}, and
  \item an $\Ad$-invariant, non-degenerate bilinear form $\kappa$ on
        $\mathfrak{g}$ (the Killing form when $G$ is semisimple, or $-\operatorname{tr}$
        for matrix groups), with components $\kappa_{ab}=\kappa(e_{a},e_{b})=\kappa_{ba}$.
\end{enumerate}
Having the metric $g$, we can define the Hodge star operator $\star$

\begin{definition}[Hodge star]\label{def:hodge_star}
Let $(X,g)$ be an oriented $n+1$-dimensional (pseudo-)Riemannian manifold
with volume form $\eta$. The metric induces a fiberwise non-degenerate
symmetric pairing (an inner product) on each $\bigwedge^{k}T^{*}_{x}X$, determined on
decomposables by
\begin{equation}\label{eq:metric_on_forms}
  \bigl\langle
    \alpha^{1}\wedge\dots\wedge\alpha^{k},\;
    \beta^{1}\wedge\dots\wedge\beta^{k}
  \bigr\rangle_{g}
  :=\det\bigl(g^{-1}(\alpha^{i},\beta^{j})\bigr)_{i,j=1}^{k},
\end{equation}
and extended to $\bigwedge^{k}T^{*}_{x}X$ by linearity, 
where $g^{-1}$ is the induced metric on $T^{*}_{x}X$. The \textbf{Hodge
star operator} is then the unique bundle morphism
\[
  \star\colon\Lambda^{k}T^{*}X\longrightarrow\Lambda^{n+1-k}T^{*}X,
  \qquad 0\le k\le n+1,
\]
such that
\begin{equation}\label{eq:hodge_defining}
  \alpha\wedge\star\beta=\langle\alpha,\beta\rangle_{g}\,\eta
  \qquad\text{for all }\alpha,\beta\in\Lambda^{k}T^{*}_{x}X,\ x\in X .
\end{equation}
\end{definition}
 
Existence and uniqueness are immediate from elementary linear algebra: for fixed
$\beta$ the left-hand side of \eqref{eq:hodge_defining} is linear in
$\star\beta$, and the wedge pairing
$\Lambda^{k}T^{*}_{x}X\times\Lambda^{n+1-k}T^{*}_{x}X\to\Lambda^{n+1}T^{*}_{x}X
\cong\RR$ (the last isomorphism being $\eta$) is non-degenerate, so
\eqref{eq:hodge_defining} determines $\star\beta$ uniquely; smoothness and
$C^{\infty}(X)$-linearity follow because everything is algebraic in $g$ and
$\eta$. Note that $\star$ requires an orientation, which we have already
assumed in fixing $\eta$.
 
\begin{remark}[Coordinate expression]\label{rem:hodge_coordinates}
Write $\varepsilon_{\mu_{0}\dots\mu_{n}}$ for the Levi-Civita \emph{symbol}
(totally skew-symmetric, $\varepsilon_{0\dots n}=1$) and
$\eta=\sqrt{\lvert g\rvert}\,\dd x^{0}\wedge\dots\wedge\dd x^{n}$ with
$\lvert g\rvert:=\lvert\det(g_{\mu\nu})\rvert$. If
$\beta=\tfrac{1}{k!}\beta_{\mu_{1}\dots\mu_{k}}
\dd x^{\mu_{1}}\wedge\dots\wedge\dd x^{\mu_{k}}$ with
$\beta_{\mu_{1}\dots\mu_{k}}$ totally skew-symmetric, then
\begin{equation}\label{eq:hodge_components}
  (\star\beta)_{\nu_{1}\dots\nu_{n+1-k}}
  =\frac{\sqrt{\lvert g\rvert}}{k!}\,
   \beta^{\mu_{1}\dots\mu_{k}}\,
   \varepsilon_{\mu_{1}\dots\mu_{k}\nu_{1}\dots\nu_{n+1-k}},
\end{equation}
indices being raised with $g^{\mu\nu}$. Moreover
\begin{equation}\label{eq:star_star}
  \star\star\big|_{\Lambda^{k}}=(-1)^{k(n+1-k)}(-1)^{s}\,\mathrm{id},
\end{equation}
$s$ being the number of negative eigenvalues of $g$; in Lorentzian
signature on a four dimensional $X$ this gives $\star\star=-\mathrm{id}$ on
$2$-forms.
\end{remark}
 
\begin{definition}[Hodge star on bundle-valued forms]\label{def:hodge_bundle_valued}
Let $E\to X$ be a vector bundle. Since $\star$ is $C^{\infty}(X)$-linear
and acts only on the form factor of
$\Omega^{k}(X;E)=\Gamma\bigl(\Lambda^{k}T^{*}X\otimes E\bigr)$, the operator
\[
  \star:=\star\otimes\mathrm{id}_{E}\colon
  \Omega^{k}(X;E)\longrightarrow\Omega^{n-k}(X;E),
  \qquad
  \star(\beta\otimes s):=(\star\beta)\otimes s ,
\]
is well defined on decomposables and extends uniquely by
$C^{\infty}(X)$-linearity. We use it for $E=\ad P$.
\end{definition}
 
\begin{remark}[Applying $\star$ to the curvature]\label{rem:hodge_on_F}
Let $F\in\Omega^{2}(X;\ad P)$ (see Remark \ref{rem:physical_forms}) be the curvature,
with local representatives
$F_{\alpha}\in\Omega^{2}(U_{\alpha};\mathfrak{g})$ obtained through the
trivialization $s_{\alpha}$, written in the basis $\{e_{a}\}$ as
\begin{equation}\label{eq:F_as_two_form}
  F_{\alpha}
  =\tfrac{1}{2}\,F^{a}_{\mu\nu}\,
   \dd x^{\mu}\wedge\dd x^{\nu}\otimes e_{a},
  \qquad F^{a}_{\mu\nu}=-F^{a}_{\nu\mu},
\end{equation}
the factor $\tfrac12$ compensating the double counting of the
unrestricted sum over $(\mu,\nu)$. Because the $F^{a}_{\mu\nu}$ are
functions and $\star$ is $C^{\infty}$-linear in the form slot and the
identity on the algebra slot,
\begin{equation}\label{eq:star_F_local}
  \star F_{\alpha}
  =\tfrac{1}{2}\,F^{a}_{\mu\nu}\,
   \star\bigl(\dd x^{\mu}\wedge\dd x^{\nu}\bigr)\otimes e_{a},
  \qquad
  (\star F)^{a}_{\nu_{1}\dots\nu_{n-2}}
  =\tfrac{1}{2}\sqrt{\lvert g\rvert}\,F^{a\,\mu\nu}
   \varepsilon_{\mu\nu\nu_{1}\dots\nu_{n-2}} .
\end{equation}
This is consistent globally: under a change of trivialization
$F_{\beta}=\Ad_{g_{\beta\alpha}}\!\circ F_{\alpha}$ by
\eqref{eq:gauge_transf_F}, and $\Ad$ acts on the $\mathfrak{g}$ factor
while $\star$ acts on the form factor, so the two commute and the local
$\star F_{\alpha}$ glue to a global $\star F\in\Omega^{n+1-2}(X;\ad P)$.
\end{remark}

\begin{definition}[Yang--Mills Lagrangian]\label{def:YM_lagrangian}
The \textbf{Yang--Mills Lagrangian} is the first order Lagrangian $L=L_{\mathrm{YM}}\,\eta$
on $Z=J^{1}(C(P))$ with Lagrangian function
\begin{equation}\label{eq:YM_lagrangian}
L_{\mathrm{YM}}=-\tfrac{1}{4}\,\kappa_{ab}\,F^{a}_{\mu\nu}F^{b}_{\rho\sigma}\,
g^{\mu\rho}g^{\nu\sigma}\,\sqrt{\lvert g\rvert}
\;=\;-\tfrac{1}{2}\,\bigl\langle F,\star F\bigr\rangle_{\kappa},
\end{equation}
where $F^{a}_{\mu\nu}$ is given by \eqref{eq:F_components} (hence
$L_{\mathrm{YM}}$ is a genuine function of $(x^{\mu},y^{i},z^{i}_{\mu})$ through the
dictionary \eqref{eq:dictionary}), $\star$ is the Hodge star of $g$ and
$\langle\,\cdot\,,\cdot\,\rangle_{\ad P}$ pairs $\ad P$-values with $\kappa$.
\end{definition}

Since $F$ is first order in $A$, the Lagrangian \eqref{eq:YM_lagrangian} is a
\textbf{first order} Lagrangian, so it fits the framework of
Chapter~\ref{chap:multisymplectic} with no need for higher order jets. Moreover, by
the tensoriality \eqref{eq:gauge_transf_F} of $F$ and the $\Ad$-invariance of
$\kappa$, $L_{\mathrm{YM}}$ is invariant under gauge transformations, so every
$\xi\in\Gamma(\ad P)$ is a symmetry in the sense of
Chapter~\ref{chap:symmetries_and_Noether_theorems} (cf.\
Theorem \ref{thm:gauge_prolongation}). That $L_{\mathrm{YM}}$ is the \textbf{most general}
first order gauge invariant Lagrangian (a Utiyama-type statement) is proved in
\cite{CLMMR20}.
\begin{remark}[The Lagrangian in invariant form]\label{rem:YM_invariant_form}
Combining $\star$ with $\kappa$ gives the $\RR$-valued pairing
\[
  \bigl\langle\phi,\psi\bigr\rangle_{\ad P}:=\kappa_{ab}\,\varphi^{a}\wedge\psi^{b}
  \in\Omega^{k+l}(X),
  \qquad
  \varphi\in\Omega^{k}(X;\ad P),\ \psi\in\Omega^{l}(X;\ad P),
\]
well defined because $\kappa$ is $\Ad$-invariant and hence descends to the
associated bundle. With \eqref{eq:hodge_defining} and
\eqref{eq:F_as_two_form},
\begin{equation}\label{eq:YM_lagrangian_invariant}
  -\tfrac{1}{2}\,\bigl\langle F,\star F\bigr\rangle_{\ad P}
  =-\tfrac{1}{4}\,\kappa_{ab}F^{a}_{\mu\nu}F^{b\,\mu\nu}\,\eta
  =\mathcal{L}_{\mathrm{YM}},
\end{equation}
which is \eqref{eq:YM_lagrangian} written without indices, and the field
equations and Bianchi identity become $D_{A}\star F=0$ and $D_{A}F=0$ as we will
prove on Chapter \ref{chap:multisymplectic_yang_mills_computations}.
\end{remark}

\begin{proposition}[Almost regularity]\label{prop:YM_almost_regular}
The Yang--Mills Lagrangian \eqref{eq:YM_lagrangian} is \textbf{almost regular}
(see Definition \ref{def:almost_regularity}), but it is \textbf{not}
regular (see Definition \ref{def:lagrangian_form_function_and_regularity}).
\end{proposition}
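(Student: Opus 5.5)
We work in a local trivialization with the dictionary \eqref{eq:dictionary}, so the jet coordinates are $A^a_{\mu\nu}$ and $L_{\mathrm{YM}}$ depends on them only through $F^a_{\mu\nu}=A^a_{\nu\mu}-A^a_{\mu\nu}+C^a_{bc}A^b_\mu A^c_\nu$ (Remark \ref{rem:antisymmetry}; we use the convention $A^a_{\mu\nu}\leftrightarrow\partial_\nu A^a_\mu$, so $\partial_\mu A^a_\nu=A^a_{\nu\mu}$).

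\textbf{Non-regularity.} First we compute the multimomenta. Differentiating \eqref{eq:YM_lagrangian} gives
\[
p_a^{\mu\nu}:=\parder{L_{\mathrm{YM}}}{A^a_{\mu\nu}}=\sqrt{|g|}\,\kappa_{ab}F^{b\,\mu\nu}.
\]
This is antisymmetric in $(\mu,\nu)$, so the Hessian $\partial^2L/\partial A^a_{\mu\nu}\partial A^b_{\rho\sigma}$ equals $\sqrt{|g|}\,\kappa_{ab}\,(g^{\mu\rho}g^{\nu\sigma}-g^{\mu\sigma}g^{\nu\rho})$ up to an overall sign. This matrix annihilates every symmetric variation $\delta A^b_{(\rho\sigma)}$, so it is singular. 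By Definition \ref{def:lagrangian_form_function_and_regularity}, $L_{\mathrm{YM}}$ is therefore not regular.

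\textbf{Almost regularity.} We use the local expression of Definition \ref{def:legendre_transformation}:
\[
\mathrm{Leg}_L(x,A,A_{\cdot\cdot})=\bigl(x,A,\;p=L-A^a_{\mu\nu}p_a^{\mu\nu},\;p_a^{\mu\nu}\bigr).
\]
Split the jet fibre coordinates into symmetric and antisymmetric parts, $A^a_{(\mu\nu)}$ and $A^a_{[\mu\nu]}$. Then $F$ is an affine function of $A^a_{[\mu\nu]}$ alone, and this affine map is invertible because the translation term $C^a_{bc}A^b_\mu A^c_\nu$ depends only on $y$. Since $p_a^{\mu\nu}$ is antisymmetric, the contraction $A^a_{\mu\nu}p_a^{\mu\nu}$ only sees $A^a_{[\mu\nu]}$, so $p$ is independent of $A^a_{(\mu\nu)}$ as well. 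Hence $\mathrm{Leg}_L$ factors through the projection $(x,A,A^a_{\mu\nu})\mapsto(x,A,F)$. Since $\kappa$ and $g$ are non-degenerate, the map $F\mapsto p_a^{\mu\nu}$ is a linear isomorphism onto the antisymmetric multimomenta. Also $p$ is a smooth function of $(x,A,p_a^{\mu\nu})$; explicitly, $p=-H$ with $H$ quadratic in the momenta plus a term linear in the momenta coming from the $C^a_{bc}A^b_\mu A^c_\nu$ part.

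From this we read off the image:
\[
\tilde M_1=\mathrm{Leg}_L(Z)=\bigl\{p_a^{(\mu\nu)}=0,\;p=-H(x,A,p_a^{[\mu\nu]})\bigr\}.
\]
This is the graph of a smooth function over a vector subbundle of $\Omega_2^{n+1}Y$, hence a closed embedded submanifold. In the adapted coordinates $(x,A,A_{[\cdot\cdot]},A_{(\cdot\cdot)})$, the map $\mathrm{Leg}_L:Z\to\tilde M_1$ is the projection forgetting $A_{(\cdot\cdot)}$ followed by a diffeomorphism. It is therefore a submersion, and each fibre is an affine space $\cong\RR^{\dim\mathfrak g\,(n+1)(n+2)/2}$, which is connected.

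\textbf{Main obstacle.} These statements are local, so the main obstacle is globality. We must check that the splitting into symmetric and antisymmetric parts and the resulting description of $\tilde M_1$ are independent of the trivialization. The plan is to argue intrinsically. The curvature map $\mathcal F:Z\to\bigwedge^2T^*X\otimes\ad P$ introduced after \eqref{eq:dictionary} is a globally defined affine bundle map over $Y$. Its kernel direction is the symmetric part $S^2T^*X\otimes\ad P$ of the model space $T^*X\otimes T^*X\otimes\ad P$ of $\pi_{ZY}$, so $\mathcal F$ is a surjective affine submersion with affine, hence connected, fibres. Then $\mathrm{Leg}_L=\Psi\circ\mathcal F$, where $\Psi$ is a fibrewise injective immersion: in coordinates it is the graph map above, built from $\kappa$ and $g$. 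This presents $\tilde M_1$ globally as the image of an embedding. Transition compatibility reduces to \eqref{eq:gauge_transf_F}, the tensoriality of $F$.
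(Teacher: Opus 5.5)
Your proposal is correct and follows the same route as the paper's sketch. In both, the antisymmetry of $p_a^{\mu\nu}$ makes the Hessian degenerate along the symmetric jet directions, the image $\widetilde{M}_1$ is cut out by $p_a^{(\mu\nu)}=0$ with $p$ determined by the momenta, and $\mathrm{Leg}_L$ is a submersion whose fibres are the connected affine spaces of symmetric jet parts. Your factorization of $\mathrm{Leg}_L$ through the curvature map $\mathcal{F}$ additionally settles the globality question, which the paper's sketch leaves implicit.
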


\begin{proof}[Proof (sketch)]
By Remark~\ref{rem:antisymmetry}, $L_{\mathrm{YM}}$ depends on the jet coordinates
$z^{i}_{\nu}\leftrightarrow\partial_{\nu}A^{a}_{\mu}$ only through the antisymmetric
part $\partial_{[\mu}A^{a}_{\nu]}$. Hence the fiber derivatives, i.e. the
multimomenta (which will be computed explicitly in Chapter
\ref{chap:multisymplectic_yang_mills_computations})
\begin{equation}\label{eq:YM_momenta}
p^{\mu\nu}_{a}=\parder{L_{\mathrm{YM}}}{A^a_{\mu\nu}}
=\sqrt{\lvert g\rvert}\,\kappa_{ab}\,F^{b\,\mu\nu}
\end{equation}
are \textbf{antisymmetric} in $(\mu,\nu)$, so their symmetric part vanishes
identically. Equivalently, the Hessian
$\bigl(\partial^{2}L_{\mathrm{YM}}/\partial z^{i}_{\mu}\partial z^{j}_{\nu}\bigr)$ is
degenerate, and $L_{\mathrm{YM}}$ is not regular. The relations
$p^{\mu\nu}_{a}+p^{\nu\mu}_{a}=0$ cut out the image
$\widetilde{M}_{1}=\Leg_{L}(Z)$ as a submanifold of $\Omega^{n+1}_{2}Y$, and the
Legendre transformation $\Leg_{L}\colon Z\to\widetilde{M}_{1}$ is a submersion whose
fibers are the affine subspaces obtained by varying the (dynamically irrelevant)
symmetric part of the jet; these fibers are connected. By the very definition of
almost regularity of Chapter~\ref{chap:multisymplectic}, $L_{\mathrm{YM}}$ is almost
regular.
\end{proof}

Proposition~\ref{prop:YM_almost_regular} is precisely the promised nontrivial
example of an almost regular Lagrangian announced in
Chapter~\ref{chap:multisymplectic}: the constraint algorithm developed there is a necessity
already for the most fundamental gauge theory.

\chapter{Multisymplectic Yang--Mills computations} \label{chap:multisymplectic_yang_mills_computations}

Finally, in this last chapter we apply the whole apparatus of multisymplectic pure gauge theory,
symmetries and preserved quantities to Yang--Mills theory, finding the usual Yang--Mills equations,
Bianchi identity, conserved currents and charges and some Hamiltonian forms. We also compute some
brackets of the graded Lie algebra associated with them. To close the chapter and the thesis,
we highlight a number of open problems and outline possible directions for future research.
\section{Initial clarifications}
\label{sec:initial_clarifications}
Before going into the details of the computations, let us make clear some notational stuff and
compute some key expressions that we will use all the time. 

First, as we will deal with long tensorial computations, \textbf{dummy index relabeling} will
be done recurrently. We will not indicate all relabel correspondences explicitly but we will
use the symbol $\equiv$ to indicate that a dummy index relabeling has been done when possible.
Second, \textbf{some of the long computations that are not interesting perse will be archive
in the Appendix} \ref{app:auxiliary}; we will reference them when needed.
\textbf{Explicit dependence on the points of the manifolds where the
tensor live will be omitted in order to make computations more clear, except in the case of
the dependence in} $\bm{x}$; in the case of the components of our gauge symmetries, namely
$\xi^a(x)$, this is done to remark its infinite-dimensional nature (as mentioned in
on the Section \ref{app:the_big_proof}) and its localizability nature (we can always consider
bump functions in order to make these symmetries localizable in the sense of Definition
\ref{def:localizable_symmetry}), while in the case of regular tensor living on $Z$ (most
of the cases) like $F^a_{\mu\nu}(x)$, $p_a^{\mu\nu}(x)$ etc, this will be done when we
pullback them by an holonomic section $\phi_\omega(x)=(x^\mu,A^a_\mu(x),
\partial_\nu A^a_\mu(x))$. Hence expressions like $\partial_\mu F^a_{\rho\sigma}(x)$ will
refer actually by chain rule to
\begin{align}
	\partial_\mu F^a_{\rho\sigma}(x)&=(j^1\phi_\omega)^*\left(
	\parder{F^a_{\rho\sigma}(z)}{x^\mu}+A^b_{\nu\mu}(z)
	\parder{F^a_{\rho\sigma}(z)}{A^b_{\nu}}+A^b_{\nu\tau\mu}(z)
	\parder{F^a_{\rho\sigma}(z)}{A^b_{\nu\tau}}\right)\\
	&=(j^1\phi_\omega)^*
	\left(\der{F^a_{\rho\sigma}(z)}{x^\mu}\right),
\end{align}
which will appear in computations like the Yang--Mills e.o.m.
By the same reasoning, every function $f\in C^\infty(Z)$ (including $F$, $p$ etc) will
satisfy
\begin{equation}
	\label{eq:pullbacking_dform}
	(j^1\phi_\omega)^*(df(z))=(j^1\phi_\omega)^*\left(\der{f(z)}{x^\mu}\dd x^\mu\right)=
	\partial_\mu f(x)\,\dd x^\mu.
\end{equation}
Recall that, working locally on $Z=J^1(C(P))$ with fiber coordinates
$(x^\mu,A^a_\mu,A^a_{\mu\nu})$, our Yang--Mills Lagrangian has the form
\[
	\mathcal{L}_{YM}(z)=-\frac{1}{4}\sqrt{|g(x)|}\kappa_{ab}g^{\mu\rho}(x)g^{\nu\sigma}(x)
	F^a_{\mu\nu}(z)F^b_{\rho\sigma}(z)\underbrace{\dd^{n+1}x|_z}_{\eta|_z}\equiv L_{YM}(z)\eta|_z,
	\quad \forall z\in Z 
\]
\[
	\text{where }F^a_{\mu\nu}(z):=A^a_{\nu\mu}(z)-A^a_{\mu\nu}(z)+C^a_{bc}A^b_\mu(z)A^c_\nu(z),
	\quad x=\pi_{ZX}(z),
\]
and $C^a_{bc}$ are the structure constants associated with the Lie algebra basis $\{e_a\}$.
Note that $\Ad$-invariance of the bilinear form $\kappa$ implies that the tensor
$C_{abc}=\kappa_{ad}C^d_{bc}$ is totally antisymmetric and that $F^a_{\mu\nu}$ is antisymmetric
in the pair of index $(\mu,\nu)$.

Some straightforward computations that will be substituted on our big computations to make
them more manageable are
\begin{description}
 	\item[Contact forms $\theta^a_\mu$\,:]$\quad\theta^a_\mu \;=\,\dd A^a_\mu
 	- A^a_{\mu\nu}\,\dd x^\nu.$
 	\item[Total derivative $\der{}{x^\nu}$\,:]$\quad\frac{\dd}{\dd x^\nu}
  		\;=\; \frac{\partial}{\partial x^\nu}
      + A^a_{\mu\nu}\,\frac{\partial}{\partial A^a_\mu}
      + A^a_{\mu\sigma\nu}\,\frac{\partial}{\partial A^a_{\mu\sigma}}.$
	\item[Partial derivatives of the $F$ tensor\,:]
	\[
		\frac{\partial F_{\rho\sigma}^{a}}
		{\partial A_{\mu\nu}^{c}}
		=
		\delta^{a}_{c}\,\delta^{\mu}_{\sigma}\,\delta^{\nu}_{\rho}
		-
		\delta^{a}_{c}\,\delta^{\mu}_{\rho}\,\delta^{\nu}_{\sigma}
		=
		\delta^{a}_{c}
		\left(
		\delta^{\mu}_{\sigma}\delta^{\nu}_{\rho}
		-
		\delta^{\mu}_{\rho}\delta^{\nu}_{\sigma}
		\right),
	\]
	\[
		\frac{\partial F_{\rho\sigma}^{a}}
		{\partial A_{\lambda}^{c}}
		=
		\frac{\partial}{\partial A_{\lambda}^{c}}
		\left[
		C^{a}_{de}\,A_{\rho}^{d}A_{\sigma}^{e}
		\right]
		=
		C^{a}_{de}\,\delta^{d}_{c}\,\delta^{\lambda}_{\rho}\,A_{\sigma}^{e}
		+
		C^{a}_{de}\,A_{\rho}^{d}\,\delta^{e}_{c}\,\delta^{\lambda}_{\sigma}
		\equiv
		C^a_{cd}(\delta^\lambda_\rho A^d_\sigma-\delta^\lambda_\sigma A^d_\rho).
	\]	
	\item[Multimomenta $p^a_{\mu\nu}$\,:] The whole computation is done in
	\ref{eq:explicit_computation_of_multimomentum}. Note the $(\mu,\nu)$ antisymmetry
	inherited from $F$
		\[
			p_{a}^{\mu\nu}:=\frac{\partial L_{YM}}{\partial A^{a}_{\mu\nu}}=
			\sqrt{|g|}\kappa_{ab}\,F^{b\,\mu\nu}\equiv \sqrt{|g|}\kappa_{ab}g^{\mu\rho}g^{\nu\sigma}
			F^b_{\rho\sigma}\equiv\sqrt{|g|}F_a^{\mu\nu}.
		\]
	\item[Partial derivative of $L_{YM}$ w.r.t. $A^a_{\mu}$\,:] The whole computation is done in
	\ref{eq:pd_of_L_wrt_field_components}
	\[
		\frac{\partial L_{YM}}{\partial A^{a}_{\mu}}=-\sqrt{|g|}\kappa_{cd}C^c_{ab}A^b_\nu
		F^{d\mu\nu}\equiv-C^c_{ab}A^b_\nu p_c^{\mu\nu}\equiv C^c_{ab}A^b_\nu p_c^{\nu\mu}.
	\]
 \end{description}
\section[Poincaré-Cartan forms, Legendre transformations and the constraint algorithm][Poincaré--Cartan forms and the constraint algorithm]{Poincaré-Cartan forms, Legendre transformations and the constraint algorithm}
\label{sec:poincare_cartan_forms_legendre_transformations_and_the_constraint_algorithm}
Is straightforward to express the Poincaré-Cartan forms on $Z$ of Definition
\ref{def:poincare-cartan-forms} as
\[
	\Theta_{L_{YM}}(z)=L_{YM}(z)\eta|_z+\parder{L_{YM}(z)}{A^a_{\mu\nu}}\theta^a_\mu\wedge
	\dd^nx_\nu|_z=L_{YM}(z)\eta|_z+p_a^{\mu\nu}(z)\theta^a_\mu\wedge\dd^nx_\nu|_z,
\]
\begin{align}
	\Omega_{L_{YM}}(z)&=-\theta^a_\mu\wedge\left(\parder{L_{YM}(z)}{A^a_\mu}\dd^{n+1}x-
	\dd\,\left(
	\parder{L_{YM}(z)}{A^a_{\mu\nu}}\right)\dd^nx_\nu\right)|_z\\
	&=C^c_{ab}A^b_\nu(z)p_c^{\mu\nu}(z)\theta^a_\mu\wedge\dd^{n+1}x|_z+\theta^a_\mu\wedge
	\dd p_a^{\mu\nu}
	\wedge\dd^nx_\nu|_z
\end{align}
where $\dd p_a^{\mu\nu}$ is computed in \ref{eq:explicit_computation_of_multimomentum}.
For the Legendre transformations, we will denote by $(x^\mu,A^a_\mu,p,\hat{p}_a^{\mu\nu})$
the $\Omega^{n+1}_2 C(P)$ coordinates in order to differentiate $\hat{p}_a^{\mu\nu}$ (the
component) and $p_a^{\mu\nu}$ (the multimomenta with values in $Z$). Then, the Legendre
transformations read
\[
	\Leg_{L_{YM}}\left(x^\mu,A^a_\mu,A^a_{\mu\nu}\right)=\left(x^\mu,A^a_\mu,p=L_{YM}-
	A^a_{\mu\nu}\,p_a^{\mu\nu},\hat{p}_a^{\mu\nu}=p_a^{\mu\nu}\right),
\]
\[
	\leg_{L_{YM}}\left(x^\mu,A^a_\mu,A^a_{\mu\nu}\right)=\left(x^\mu,A^a_\mu,
	\hat{p}_a^{\mu\nu}=p_a^{\mu\nu}\right).
\]
From the antisymmetry of $p_a^{\mu\nu}$ we can read two things. First, we can express $p$
in  more manageable way as follows
\[
	p=p=L_{YM}-A^a_{\mu\nu}\,p_a^{[\mu\nu]}=L_{YM}-A^a_{[\mu\nu]}\,p_a^{[\mu\nu]}=
	L_{YM}-\frac{1}{2}\left(A^a_{\mu\nu}-A^a_{\nu\mu}\right)p_a^{\mu\nu},
\]
and using the definition of $F^a_{\mu\nu}$ to replace the term $(A^a_{\mu\nu}-A^a_{\nu\mu})$,
we have
\begin{align*}
	p&=L_{YM}+\frac{1}{2}F^a_{\mu\nu}p_a^{\mu\nu}-\frac{1}{2}C^a_{bc}A^b_\mu A^c_\nu
	p_a^{\mu\nu}=L_{YM}-2L_{YM}-\frac{1}{2}C^a_{bc}A^b_\mu A^c_\nu p_a^{\mu\nu}\\
	&=-L_{YM}-\frac{1}{2}C^a_{bc}A^b_\mu A^c_\nu p_a^{\mu\nu}.
\end{align*}
Second, we observe that the submanifold $\tilde{M}_1:=\Leg_{L_{YM}}(Z)$ of the constraint
algorithm of Subsection \ref{subsect:almost_regular_lagrangians_and_the_constraint_algorithm}
is given by $\{\hat{p}_a^{\mu\nu}=-\hat{p}_a^{\nu\mu}\}$ with $p=-L_{YM}-1/2C^a_{bc}A^b_\mu A^c_\nu$
fixed; equivalently by the submanifold of $\Omega^{n+1}_2 Y$ where the $\hat{p}_a^{\mu\nu}$ coordinates
satisfy the $N\frac{(n+1)(n+2)}{2}$ constraints $\hat{p}_a^{(\mu\nu)}=0$ and $p$ is held fixed.
As $M_1:=\leg_{L_{YM}}(Z)\cong\tilde{M}_1$, $M_1$ is
equivalently defined. In both manifolds, the number of constraints is equal to
$N\frac{(n+1)(n+2)}{2}$ and so $\dim(\tilde{M}_1)=\dim(M_1)=\dim(Z^*)-\sharp\{\text{of
constraints}\}=(n+1)+(n+1)N+\frac{n(n+1)}{2}N$. Yang--Mills Lagrangian is not regular,
as we already said, but a special one, since the constraint algorithm ends in the first step
(at $M_1$ as we will discuss on Section \ref{sec:the_connection_approach_to_yang_mills_equations}).
\section{Yang--Mills equations}
\label{sec:yang_mills_equations}
Given a connection $\omega$ and the corresponding section $\phi_\omega\in\Gamma(C(P))$ due to
Proposition \ref{prop:connections_as_sections_explicit} with $j^1\phi_\omega(x)=
(x^\mu,A^a_{\mu}(x),\partial_\nu A^a_\mu(x))$, where $A_a^\mu(x)$ are the components of
$A_{\alpha}=s_{\alpha}^{*}\omega=A_\mu^a(x)\dd x^\mu\otimes e_a$ once chosen a local
trivialization $T_\alpha$ of our PFB, Euler-Lagrange equations for our Yang--Mills
Lagrangian read from \ref{eq:euler-lagrange_equations} 
\[
	(j^2\phi_\omega)^*\left(\parder{L_{YM}}{A^a_\mu}-\der{}{x^\nu}\parder{L_{YM}}
	{A^a_{\mu\nu}}\right)=(j^2\phi_\omega)^*\left(-C^c_{ab}A^b_\nu p_c^{\mu\nu}-\der{}{x^\nu}
	p_a^{\mu\nu}\right)
\]
\[
	=-\left[\partial_\nu p_a^{\mu\nu}(x)+C^c_{ab}A^b_\nu(x)p_c^{\mu\nu}(x)\right]=0.
\]
Then, since $p_a^{\mu\nu}$ (the form it defines) inherits tensoriality from $F^a_{\mu\nu}$
(which does from $\Omega^\omega$), we can use the component form of Equation
\eqref{eq:D_coadjoint_components} to express Yang--Mills equations in components as
\begin{equation}
	\label{eq:Yang-Mills_equations}
	\boxed{
	D_\nu p_a^{\mu\nu}(x)=0,\quad\forall\mu\in\{0,\dots,n\},a\in\{1,\dots,N\}
	}\,.
\end{equation}
We can also express the equation in term of the field strength (we omit the $x$ dependence
for clarity)
\begin{align*}
	D_\nu p_a^{\mu\nu}(x)=D_\nu\left[\sqrt{|g|}\kappa_{ab}F^{b\mu\nu}\right]=0\implies
	\partial_\nu\left[\sqrt{|g|}F^{b\mu\nu}\right]+\sqrt{|g|}C^b_{cd}A^c_\nu F^{d\mu\nu}=0.
\end{align*}
Using the well known fact
\[
	\partial_\nu\left[\sqrt{|g(x)|}F^{b\mu\nu}(x)\right]=\sqrt{|g(x)|}\nabla_\nu F^{b\mu\nu}(x),
\]
where $\nabla$ is the covariant derivative in the spacetime $X$, we have finally
\[
	\mathcal{D}_\nu F^{a\mu\nu}(x)\equiv\nabla_\nu F^{b\mu\nu}(x)+
	C^b_{cd}A^c_\nu(x)F^{d\mu\nu}(x)=0\quad\forall\mu\in\{0,\dots,n\},a\in\{1,\dots,N\}.	
\]

It remains to check that \eqref{eq:Yang-Mills_equations} is exactly the intrinsic
equation $D_A\star F=0$ announced in Remark \ref{rem:YM_invariant_form}. The bridge is the
following $(\ad P)^*$-valued $(n-1)$-form, which will reappear in Subsection
\ref{sub:the_generalization_of_hamiltonian_forms}:
\begin{equation}
	\label{eq:bbF_definition}
	\F_a:=\frac{1}{2}\,p^{\mu\nu}_a(x)\,\dd^{n-1}x_{\mu\nu}\ \in\ \Omega^{n-1}(X;(\ad P)^*).
\end{equation}
 
\begin{lemma}
	\label{lem:F_is_star_F}
	$\F_a=\kappa_{ab}\,\star\! F^b$.
\end{lemma}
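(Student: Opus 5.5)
The identity is pointwise and purely algebraic, so I will check it in a local trivialization and in coordinates $(x^\mu)$ on $X$. Both sides are $(n-1)$-forms on $X$ with values in $(\ad P)^*$; in a frame $s_\alpha$ each reduces to a $\mathfrak{g}^*$-valued form with components indexed by $a$. Two reductions remove the gauge and algebra content. First, since $\kappa_{ab}$ is constant and $\star$ acts as $\star\otimes\mathrm{id}$ on the algebra slot (Definition~\ref{def:hodge_bundle_valued}), we have $\kappa_{ab}\star F^b=\star(\kappa_{ab}F^b)$. Second, with $F_{a\,\mu\nu}:=\kappa_{ab}F^b_{\mu\nu}$, the multimomenta read $p_a^{\mu\nu}=\sqrt{|g|}\,g^{\mu\rho}g^{\nu\sigma}F_{a\,\rho\sigma}$. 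It therefore suffices to prove, for an arbitrary real $2$-form $\beta=\tfrac12\beta_{\mu\nu}\dd x^\mu\wedge\dd x^\nu$, that
\[
\star\beta=\tfrac12\sqrt{|g|}\,\beta^{\mu\nu}\,\dd^{n-1}x_{\mu\nu}.
\]

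For this I will expand $\dd^{n-1}x_{\mu\nu}=\iota_{\partial_\nu}\iota_{\partial_\mu}\dd^{n+1}x$ in the coordinate basis. Writing $\dd^{n+1}x=\tfrac{1}{(n+1)!}\varepsilon_{\lambda_0\dots\lambda_n}\dd x^{\lambda_0}\wedge\dots\wedge\dd x^{\lambda_n}$ and contracting twice gives
\[
\dd^{n-1}x_{\mu\nu}=\tfrac{1}{(n-1)!}\,\varepsilon_{\mu\nu\nu_1\dots\nu_{n-1}}\,\dd x^{\nu_1}\wedge\dots\wedge\dd x^{\nu_{n-1}}.
\]
The ordering convention matters here: the paper's $\dd^{n-1}x_{\mu\nu}$ contracts $\partial_\mu$ first and then $\partial_\nu$. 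I will then compare the coefficient of $\dd x^{\nu_1}\wedge\dots\wedge\dd x^{\nu_{n-1}}$ with the component formula \eqref{eq:hodge_components} for $k=2$. That formula gives $(\star\beta)_{\nu_1\dots\nu_{n-1}}=\tfrac{\sqrt{|g|}}{2}\beta^{\mu\nu}\varepsilon_{\mu\nu\nu_1\dots\nu_{n-1}}$, and a form with these components equals $\tfrac{1}{(n-1)!}(\star\beta)_{\nu_1\dots}\dd x^{\nu_1}\wedge\dots$. The coefficients then match exactly. Substituting $\beta=F_a$ yields $\F_a=\tfrac12 p_a^{\mu\nu}\dd^{n-1}x_{\mu\nu}=\star F_a=\kappa_{ab}\star F^b$.

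I will also note global consistency. Both sides transform under $\Ad^*$ under a change of frame: this follows from \eqref{eq:gauge_transf_F} together with $\Ad$-invariance of $\kappa$. So the local identity glues, and $\F$ is a well-defined element of $\Omega^{n-1}(X;(\ad P)^*)$.

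The main obstacle is sign and normalization bookkeeping. Three things must be checked carefully: the $\tfrac{1}{(n-1)!}$ from the double interior product, the order of the two contractions relative to the index order in $\varepsilon_{\mu\nu\dots}$, and the $\tfrac12$ from double counting the antisymmetric pair, which cancels against the $\tfrac1{k!}$ in \eqref{eq:hodge_components}. The factor $\sqrt{|g|}$ needs attention too: the paper writes $\eta=\dd^{n+1}x$ but also identifies the metric volume form with $\sqrt{|g|}\eta$. I will read $\dd^{n-1}x_{\mu\nu}$ as built from the coordinate form $\dd^{n+1}x$, so that the $\sqrt{|g|}$ lives in $p_a^{\mu\nu}$, exactly as in \eqref{eq:hodge_components}. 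As a sanity check, I will test the identity in flat $\RR^{n+1}$ on $\beta=\dd x^0\wedge\dd x^1$, verifying it directly against the defining relation \eqref{eq:hodge_defining}.
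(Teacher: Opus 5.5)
Your proposal is correct and follows essentially the paper's own argument. Like the paper, you expand $\dd^{n-1}x_{\mu\nu}=\iota_{\partial_\nu}\iota_{\partial_\mu}\dd^{n+1}x$ as $\frac{1}{(n-1)!}\varepsilon_{\mu\nu\nu_1\dots\nu_{n-1}}\dd x^{\nu_1}\wedge\dots\wedge\dd x^{\nu_{n-1}}$, match this against the component formula \eqref{eq:hodge_components} with $k=2$ for an arbitrary $2$-form $\beta$, and then substitute $\beta=F^b$ using $p_a^{\mu\nu}=\sqrt{|g|}\,\kappa_{ab}F^{b\,\mu\nu}$. Your extra remarks are consistent with how the paper implicitly uses its notation but are not needed for the local identity. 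These are building $\dd^{n-1}x_{\mu\nu}$ from the coordinate form so that $\sqrt{|g|}$ sits in $p_a^{\mu\nu}$, and the $\Ad^*$ gluing argument.
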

 
\begin{proof}
	Let $\beta=\frac12\beta_{\mu\nu}\dd x^\mu\wedge\dd x^\nu$ be any $2$-form. Expanding
	$\dd^{n-1}x_{\mu\nu}=\iota_{\partial_\nu}\iota_{\partial_\mu}\eta$ in the coordinate
	basis gives
	$\dd^{n-1}x_{\mu\nu}=\frac{1}{(n-1)!}\,\varepsilon_{\mu\nu\nu_1\dots\nu_{n-1}}
	\dd x^{\nu_1}\wedge\dots\wedge\dd x^{\nu_{n-1}}$, whence
	\[
		\frac12\sqrt{|g|}\,\beta^{\mu\nu}\dd^{n-1}x_{\mu\nu}
		=\frac{1}{(n-1)!}\left[\frac{\sqrt{|g|}}{2!}\,\beta^{\mu\nu}
		\varepsilon_{\mu\nu\nu_1\dots\nu_{n-1}}\right]\dd x^{\nu_1}\wedge\dots\wedge\dd x^{\nu_{n-1}}
		=\star\beta
	\]
	by the coordinate expression of Remark \ref{rem:hodge_coordinates} with $k=2$. Applying this to
	$\beta=F^b$ and using $p^{\mu\nu}_a=\sqrt{|g|}\,\kappa_{ab}F^{b\,\mu\nu}$ from
	Section \ref{sec:initial_clarifications} yields the claim.
\end{proof}
 
\begin{proposition}
	\label{prop:YM_is_DstarF}
	For every $x\in X$,
	\[
		\big(D_A\F\big)_a=\big(D_\nu p^{\mu\nu}_a\big)\,\dd^n x_\mu ,
	\]
	and consequently
	\[
		D_\nu p^{\mu\nu}_a=0\quad\forall\mu,a
		\qquad\Longleftrightarrow\qquad
		D_A\!\star\! F=0 .
	\]
\end{proposition}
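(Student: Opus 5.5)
The plan is to compute $D_A\F$ directly in the local trivialization, using the coadjoint form of the exterior covariant derivative \eqref{eq:D_coadjoint_components}. Applied to the $(n-1)$-form $\F_a=\frac12 p^{\mu\nu}_a\,\dd^{n-1}x_{\mu\nu}$, it gives
\[
	(D_A\F)_a=\dd\F_a+C^c_{ab}A^b\wedge\F_c .
\]
The computation splits into two pieces. The first is the exterior derivative. Since $\dd x^\lambda$ is constant, $\dd(\dd^{n-1}x_{\mu\nu})=0$, and the only contribution is
\[
	\frac12\,\partial_\lambda p^{\mu\nu}_a\,\dd x^\lambda\wedge\dd^{n-1}x_{\mu\nu}.
\]
To reduce this I need the contraction identity
\[
	\dd x^\lambda\wedge\dd^{n-1}x_{\mu\nu}=\delta^\lambda_\nu\,\dd^nx_\mu-\delta^\lambda_\mu\,\dd^nx_\nu .
\]
I will derive it from $\dd^{n-1}x_{\mu\nu}=\iota_{\partial_\nu}\iota_{\partial_\mu}\eta$. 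Take $\iota_{\partial_\nu}$ of $\dd x^\lambda\wedge\dd^nx_\mu$, noting that $\dd x^\lambda\wedge\dd^nx_\mu=\delta^\lambda_\mu\eta$, which is an $(n+1)$-form. By the antiderivation property,
\[
	\iota_{\partial_\nu}\bigl(\dd x^\lambda\wedge\dd^nx_\mu\bigr)=\delta^\lambda_\nu\,\dd^nx_\mu-\dd x^\lambda\wedge\dd^{n-1}x_{\mu\nu},
\]
and the left side equals $\delta^\lambda_\mu\,\dd^nx_\nu$. This yields the identity, up to the order and sign conventions for $\dd^{n-1}x_{\mu\nu}$, which I must fix carefully.

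Substituting and using the antisymmetry of $p^{\mu\nu}_a$, the two terms coincide. I therefore expect $\dd\F_a=\partial_\nu p^{\mu\nu}_a\,\dd^nx_\mu$, possibly with an overall sign. The second piece is the bracket term $C^c_{ab}A^b_\lambda\,\dd x^\lambda\wedge\F_c$. The same identity turns it into $C^c_{ab}A^b_\nu\, p^{\mu\nu}_c\,\dd^nx_\mu$. Adding the two pieces gives exactly $(D_\nu p^{\mu\nu}_a)\,\dd^nx_\mu$, by the component definition \eqref{eq:D_general_case} with one lower index.

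The equivalence then follows in two steps. First, the $(n+1)$ forms $\dd^nx_\mu$ form a basis of $\bigwedge^nT^*_xX$, so $(D_A\F)_a=0$ if and only if $D_\nu p^{\mu\nu}_a=0$ for all $\mu$. Second, Lemma \ref{lem:F_is_star_F} gives $\F_a=\kappa_{ab}\star F^b$. Since $\kappa$ is constant and $\Ad$-invariant, it intertwines the adjoint and coadjoint covariant derivatives, so $\kappa_{ab}(D_A\star F)^b=(D_A\F)_a$; this is the index lowering $C^c_{ab}\kappa_{cd}=-\kappa_{ad'}C^{d'}_{bd}$ coming from the total antisymmetry of $C_{abc}$. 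Nondegeneracy of $\kappa$ then gives $D_A\F=0$ if and only if $D_A\star F=0$. As in Remark \ref{rem:hodge_on_F}, gauge covariance makes this local statement global.

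The main obstacle is sign bookkeeping. The sign of the contraction identity must agree with the conventions $\dd^nx_\mu=\iota_{\partial_\mu}\eta$ and $\dd^{n-1}x_{\mu\nu}=\iota_{\partial_\nu}\iota_{\partial_\mu}\eta$. The wedge order in $\rho_*(\omega)\wedge\varphi$ for an $(n-1)$-form must also be checked, since that is where an overall $(-1)$ could appear. A global sign is harmless for the equivalence, but the displayed identity requires consistency. I would fix the signs by testing the identity in the case $n+1=2$ before trusting the general formula.
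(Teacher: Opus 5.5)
Your proposal is correct and is essentially the paper's proof. The shared ingredients are the coadjoint formula for $D_A$, and the contraction identity $\dd x^\lambda\wedge\dd^{n-1}x_{\mu\nu}=\delta^\lambda_\nu\,\dd^nx_\mu-\delta^\lambda_\mu\,\dd^nx_\nu$, obtained from the antiderivation property of $\iota_{\partial_\nu}$ exactly as in the appendix. You also use, as the paper does, the antisymmetry of $p^{\mu\nu}_a$ for both pieces, and then Lemma~\ref{lem:F_is_star_F} with $\Ad$-invariance and nondegeneracy of $\kappa$. With the paper's conventions no overall sign appears, since $\rho_*(\omega)\wedge\varphi$ places $\omega$ on the left.

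One small slip: the index-lowering identity actually needed is $C^c_{ab}\kappa_{cd}=+\kappa_{ac}C^c_{bd}$, not the minus-sign version you wrote. Both sides equal $C_{abd}$ by total antisymmetry. Your conclusion $\kappa_{ab}(D_A\star F)^b=(D_A\F)_a$ is nevertheless right.
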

 
\begin{proof}
	The covariant exterior derivative on $(\ad P)^*$-valued forms is the coadjoint
	version of \eqref{eq:D_general_case}, namely Equation \eqref{eq:D_coadjoint_components}
	for $V=(\ad P)^*$. We will use twice the identity
	\eqref{eq:crucial_identity_superpotential}.
 
	First, we have
	\[
		\dd\F_a=\frac12\,\partial_\alpha p^{\mu\nu}_a\,\dd x^\alpha\wedge\dd^{n-1}x_{\mu\nu}
		=\frac12\left(\partial_\nu p^{\mu\nu}_a\,\dd^n x_\mu-\partial_\mu p^{\mu\nu}_a\,\dd^n x_\nu\right)
		=\partial_\nu p^{\mu\nu}_a\,\dd^n x_\mu ,
	\]
	the last step relabelling $\mu\leftrightarrow\nu$ in the second summand and using
	$p^{\nu\mu}_a=-p^{\mu\nu}_a$. Second,
	\[
		C^c_{ab}\,A^b\wedge\F_c
		=\frac12\,C^c_{ab}A^b_\alpha p^{\mu\nu}_c\,\dd x^\alpha\wedge\dd^{n-1}x_{\mu\nu}
		=C^c_{ab}A^b_\nu p^{\mu\nu}_c\,\dd^n x_\mu .
	\]
	Adding the two,
	\[
		(D_A\F)_a=\Big(\partial_\nu p^{\mu\nu}_a+C^c_{ab}A^b_\nu p^{\mu\nu}_c\Big)\dd^n x_\mu
		=\big(D_\nu p^{\mu\nu}_a\big)\,\dd^n x_\mu .
	\]
	Finally, by Lemma \ref{lem:F_is_star_F},
	$(D_A\F)_a=\kappa_{ab}(D_A\!\star\! F)^b$ while $\kappa$ is $D_A$-parallel because it is
	$\mathrm{Ad}$-invariant and $\kappa$ is non-degenerate, so
	$(D_A\F)_a=0\iff D_A\!\star\! F=0$.
\end{proof}
 
\begin{remark}
	\label{rem:no_sqrtg_needed}
	Proposition~\ref{prop:YM_is_DstarF} makes the passage through
	$\partial_\nu(\sqrt{|g|}F^{b\mu\nu})=\sqrt{|g|}\nabla_\nu F^{b\mu\nu}$ unnecessary:
	the metric enters only through the definition of $p^{\mu\nu}_a$, and the Levi-Civita
	connection of $X$ never appears, as it should be, since $\star F$ is an
	$\mathrm{ad}^*P$-valued form and $D_A$ only differentiates the bundle index.
\end{remark}

\subsection{Intrinsic Yang--Mills equations}
\label{sub:intrinsic_yang_mills_equations}
Recalling \eqref{eq:intrinsic_EL_equations} in the general framework, the intrinsic Yang-Mills
equations have the form
\[
	(j^1\phi_\omega)^*(\iota_\zeta\Omega_{L_{YM}})=0,\quad\forall\zeta\in\mathfrak{X}(Z)
\]
with $Z=J^1(C(P))$. Using $\Omega_{L_{YM}}$ expression of Section
\ref{sec:poincare_cartan_forms_legendre_transformations_and_the_constraint_algorithm}
with $\dd p_a^{\mu\nu}$ computed in \ref{eq:explicit_computation_of_multimomentum} we have
\begin{align*}
\Omega_{L_{YM}}
&=\left[C^c_{ab}A^b_\nu p_c^{\mu\nu}+\kappa_{ab}F^{b\mu\rho}
\partial_\rho(\sqrt{|g|})\right]\theta^a_\mu\wedge d^{n+1}x\\
&+
\sqrt{|g|}\,\kappa_{ab}\,g^{\mu\rho}g^{\nu\sigma}
\theta^a_\mu\wedge
\left[
dA_{\sigma\rho}^{b}-dA_{\rho\sigma}^{b}
+C^{b}_{cd}\left(A_{\rho}^{c}(dA_{\sigma}^{d})-A_{\sigma}^{c}(dA_{\rho}^{d})\right)
\right]
\end{align*}
and all the information of the intrinsic Yang-Mills equations is contained in the following
three cases
\begin{enumerate}
	\item $\zeta=\parder{}{A^z_{\eta\theta}}$:
		\begin{align*}
			(j^1\phi_\omega)^*(\iota_\zeta\Omega_{L_{YM}}) &= (j^1\phi_\omega)^*
			\left[
			\sqrt{|g|} \kappa_{ab} g^{\mu\rho} g^{\nu\sigma} \left(\delta_z^b \delta^\eta_\sigma
			\delta_\rho^\theta-\delta_z^b \delta_\rho^\eta \delta_\sigma^\theta 
			\right)
			\theta_\mu^a \wedge\dd^n x_\nu \right] \\
			&= (j^1\phi_\omega)^*
			\left[
			\sqrt{|g|}\kappa_{az} \left( g^{\mu\theta} g^{\nu\eta}-g^{\mu\eta}g^{\nu\theta}
			\right) \theta_\mu^a \wedge\dd^n x_\nu \right]
		\end{align*}	
	which clearly goes to zero as is the pullback of a contact form by an holonomic section.
	The interesting part in this case is that, recalling \ref{eq:explicit_hessian} we have
	exactly
	\[
		(j^1\phi_\omega)^*(\iota_\zeta\Omega_{L_{YM}})=\frac{\partial^{2}L_{YM}}
		{\partial A^{a}_{\mu\nu}\,\partial A^{z}_{\eta\theta}}\theta_\mu^a \wedge
		\dd^n x_\nu=0;
	\]
	in particular
	\begin{itemize}
		\item $\zeta=\parder{}{A^z_{(\eta\theta)}}\longrightarrow \iota_\zeta\Omega_{L_{YM}}=0$.
		These are specifically the vectors spaning the kernel of the Hessian; no information
		given by the e.o.m.
		\item $\zeta=\parder{}{A^z_{[\eta\theta]}}\longrightarrow \iota_\zeta\Omega_{L_{YM}}\neq0$,
		$(j^1\phi_\omega)^*\iota_\zeta\Omega_{L_{YM}}=0.$ This is the restriction imposed
		on $\partial_{[\theta}A^z_{\eta]}(x)$.
	\end{itemize}
	\item $\zeta=\parder{}{A^z_{\eta}}$: Using $\theta^a_\mu=dA^a_\mu-A^a_{\mu\nu}\dd x^\nu$
	we have
		\begin{align*}
			(j^1\phi_\omega)^*(\iota_\zeta\Omega_{L_{YM}})=(j^1\phi_\omega)^*
			\big[&
				C^c_{ab}A^b_\nu p_c^{\mu\nu}\delta^a_z\delta^\eta_\mu\dd^{n+1}x+
				\delta^a_z\delta^\eta_\mu\dd p_a^{\mu\nu}\wedge\dd^nx_\nu\\
				&-\cancel{(\iota_\zeta
				\dd p_a^{\mu\nu})\theta^a_\mu\wedge\dd^nx_\nu},
			\big]
		\end{align*}
		where the last term cancels when pullbacking because it being a contact form. Now,
		applying the pullback to $\dd^{n+1}x$ is trivial, while for $\dd p_a^{\mu\nu}$ we use
		\eqref{eq:pullbacking_dform}, obtaining
		\begin{align*}
			(j^1\phi_\omega)^*(\iota_\zeta\Omega_{L_{YM}}) &= C^c_{zb}A^b_\nu(x)p_c^{\eta\nu}(x)
			\dd^{n+1}x+\partial_\nu p_z^{\eta\nu}(x)\dd^{n+1}x=(D_\nu p_z^{\eta\nu})=0,
		\end{align*}
		i.e. specifically Yang--Mills equations.
	\item $\zeta=\parder{}{x^\eta}$: Using $\theta^a_\mu=dA^a_\mu-A^a_{\mu\nu}\dd x^\nu$
	and cancelling contact forms once again we have
		\begin{align*}
			(j^1\phi_\omega)^*(\iota_\zeta\Omega_{L_{YM}})&=
			(j^1\phi_\omega)^*
			\left[
			C_{ab}^c A_\nu^b p_c^{\mu\nu} (-A_{\mu\eta}^a)\dd^{n+1} x
			-\cancel{C_{ab}^c A_\nu^b p_c^{\mu\nu} \theta_\mu^a \wedge\dd^n x_\nu}\right.\\
			&- A_{\mu\eta}^a\dd p_a^{\mu\nu}\wedge\dd^n x_\nu - \cancel{\theta_\mu^a
			\wedge\iota_\zeta(\dd p_a^{\mu\nu}\wedge\dd^n x_\nu)}
			\left.\right]\\
			&=-\partial_\eta A^a_\mu(x)
			\left[
			\partial_\nu p_a^{\mu\nu}(x)+C_{ab}^c A_\nu^b(x)
			p_c^{\mu\nu}(x)
			\right]\dd^{n+1}x=0,
		\end{align*}
		namely $n+1$ (running on $\eta$ free index) linear combinations of Yang-Mills
		equations. This does not add any more information as because of $\mathrm{ii})$
		all terms in all linear combinations are exactly zero.	
\end{enumerate}
\section{De Donder equations}
As we know, De Donder equations have the same form as the intrinsic Euler--Lagrange equations
but for a general section $\sigma\in\Gamma(\pi_{ZX})$ that may not be holonomic
(hence does not annihilate contact forms when pullbacking in general). However, due to the
specific construction of the jet bundle, there is a connection $\omega$ such that
$\sigma(x)=(x^\mu,A^a_\mu(x),S^a_{\mu\nu}(x))$, where as we said in general
$S^a_{\mu\nu}\neq\partial_\nu A^a_\mu(x)$; then, we declare the following notation
\begin{equation}
  \label{eq:YM_pullback_contacto_no_holonomo}
  \sigma^{*}\theta^{a}_{\mu}
  =\sigma^{*}\bigl(\dd A^{a}_{\mu}-A^{a}_{\mu\nu}\,\dd x^{\nu}\bigr)
  =\partial_{\lambda}A^{a}_{\mu}(x)\,\dd x^{\lambda}-S^{a}_{\mu\nu}(x)\,\dd x^{\nu}
  =\underbrace{\bigl(\partial_{\nu}A^{a}_{\mu}(x)-S^{a}_{\mu\nu}(x)\bigr)}
    _{=:\;\Delta^{a}_{\mu\nu}(x)}\,\dd x^{\nu}.
\end{equation}
Using $\Omega_{L}=C^{c}_{ab}A^{b}_{\nu}\,p_{c}^{\mu\nu}\;\theta^{a}_{\mu}\wedge\dd^{n+1}x
+\theta^{a}_{\mu}\wedge\dd p_{a}^{\mu\nu}\wedge\dd^{n}x_{\nu}$ from Section
\ref{sec:poincare_cartan_forms_legendre_transformations_and_the_constraint_algorithm},
we divide the computation in three cases
\begin{enumerate}
	\item $\zeta=\dfrac{\partial}{\partial A^{z}_{\eta,\theta}}$: The beginning of the
	computation is equivalent to the case $\mathrm{i})$ of Subsection
	\ref{sub:intrinsic_yang_mills_equations} and so we have
		\begin{align*}
			\sigma^*(\iota_\zeta\Omega_{L_{YM}}) &= \sigma^*
			\left[
			\sqrt{|g|}\kappa_{az}\left( g^{\mu\theta} g^{\nu\eta}-g^{\mu\eta}g^{\nu\theta}
			\right) \theta_\mu^a\wedge\dd^n x_\nu \right].
		\end{align*}	
		Using then \eqref{eq:YM_pullback_contacto_no_holonomo} it is easy to see that 
		\[
			\sigma^*(\iota_\zeta\Omega_{L_{YM}})=\sqrt{|g(x)|}\kappa_{az}
			\left(
				\Delta^{a\theta\eta}(x)-\Delta^{a\eta\theta}(x)
			\right)\dd^{n+1}x=0\iff \Delta^a_{[\mu\nu]}(x)=0,
		\]
		which then implies
		\[
			\boxed{
			S^a_{[\mu\nu]}(x)=\partial_{[\nu}A^a_{\mu]}(x)\quad\forall x\in X
			}
		\]
		i.e. holonomicity but only on the antisymmetric components of our section $\sigma$;
		in general there is no restriction to $S^a_{(\mu\nu)}(x)$.
		\item $\zeta=\dfrac{\partial}{\partial A^{z}_{\eta}}$: Once again we use Subsection
		\ref{sub:intrinsic_yang_mills_equations} to see
		\begin{align*}
			\sigma^*(\iota_\zeta\Omega_{L_{YM}}) &= \sigma^*
			\left[
				C^c_{zb}A^b_\nu p_c^{\eta\nu}\dd^{n+1}x+
				\dd p_z^{\eta\nu}\wedge\dd^nx_\nu-(\iota_\zeta
				\dd p_a^{\mu\nu})\theta^a_\mu\wedge\dd^nx_\nu
			\right]\\
			&=C^c_{zb}A^b_\nu(x)p_c^{\eta\nu}(x)\dd^{n+1}x+
				\partial_\nu p_z^{\eta\nu}(x)\dd^{n+1}x-\sigma^*(\iota_\zeta
				\dd p_a^{\mu\nu})\Delta^a_{\mu\nu}\dd^{n+1}x,
		\end{align*}
		where the dependence of $x$ is through $\sigma(x)$. Let us compute separately
		the term $\sigma^*(\iota_\zeta\dd p_a^{\mu\nu})$
		\begin{align*}
		  \sigma^{*}\bigl[\iota_{\zeta}\dd p_{a}^{\mu\nu}\bigr]
		  &=\sigma^{*}\Bigl[\sqrt{|g|}\,\kappa_{ab}\,g^{\mu\rho}g^{\nu\sigma}\,C^{b}_{cd}
		      \bigl(A^{c}_{\sigma}\,\delta^{d}_{z}\delta^{\eta}_{\rho}
		           -A^{c}_{\rho}\,\delta^{d}_{z}\delta^{\eta}_{\sigma}\bigr)\Bigr]
		   \\
		  &=\sqrt{|g|}\,\kappa_{ab}\Bigl(
		      g^{\mu\eta}g^{\nu\sigma}\,C^{b}_{cz}\,A^{c}_{\sigma}(x)
		     -g^{\mu\rho}g^{\nu\eta}\,C^{b}_{cz}\,A^{c}_{\rho}(x)\Bigr)
		   \\
		  &=\sqrt{|g|}\,\kappa_{ab}\,C^{b}_{cz}
		    \bigl(g^{\mu\eta}A^{c\nu}(x)-g^{\nu\eta}A^{c\mu}(x)\bigr),
		\end{align*}	
		which is clearly $(\mu,\nu)$-antisymmetric. As this term is multiplied by
		$\Delta^a_{\mu\nu}$ which has zero antisymmetric part on $(\mu,\nu)$ due to
		the results obtained in $\mathrm{i})$, the whole term annihilates, and so
		\begin{align*}
			\sigma^*(\iota_\zeta\Omega_{L_{YM}}) &=
			\left[
				C^c_{zb}A^b_\nu(x)p_c^{\eta\nu}(x)+
				\partial_\nu p_z^{\eta\nu}(x)
			\right]\dd^{n+1}x=0\\
			&\implies 
			\boxed{
			D_\nu p_z^{\eta\nu}(\sigma(x))=0}\,,
		\end{align*}
		i.e. Yang--Mills equations. Recall that $F^a_{\mu\nu}(\sigma(x))$ and hence
		$p_a^{\mu\nu}(\sigma(x))$ depends on $S^a_{\mu\nu}$ only through its antisymmetric
		part; hence Yang--Mills equations do not restric $S^a_{(\mu\nu)}$ either. These
		degrees of freedom are not physical, are just an artifact of expanding the intrinsic
		Yang--Mills equations to non-holonomic sections, which do not represent physical
		fields. 

		\item $\zeta=\dfrac{\partial}{\partial x^{\eta}}$: This case can be done in a
		similar way to case $\mathrm{iii})$ of Subsection
		\ref{sub:intrinsic_yang_mills_equations} but there is an argument that will save
		us some computations. Imagine there is a vector field $\xi\in\mathfrak{X}(X)$ such
		that $\sigma_*\xi=\zeta$, in that case
		\[
			\sigma^*(\iota_\zeta\Omega_{L_{YM}})=\sigma^*(\iota_{\sigma_*\xi}\Omega_{L_{YM}})
			=\iota_\xi(\sigma^*\Omega_{L_{YM}})=0
		\]
		where last equality comes from the fact that we are pullbacking a $(n+2)$-form down
		to $X$, hence it annihilates. Now, take $\partial_\eta$ as a basis for our vector
		fields in $X$, we know that
		\[
			\sigma_*\bigg(\underbrace{\parder{}{x^\eta}}_{:=\xi}\bigg)=
			\underbrace{\parder{}{x^\eta}}_{=\zeta}+
			\partial_\eta A^a_\mu(x)\parder{}{A^a_\mu}+\partial_\eta S^a_{\mu\nu}(x)
			\parder{}{A^a_{\mu\nu}}\implies
		\]
		\[
			\implies \zeta=\sigma_*(\xi)-\partial_\eta A^a_\mu(x)\parder{}{A^a_\mu}
			-\partial_\eta S^a_{\mu\nu}(x)\parder{}{A^a_{\mu\nu}}\equiv\sigma_*(\xi)+B
		\]
		and so our third case transforms into
		\begin{align*}
			\sigma^*(\iota_\zeta\Omega_{L_{YM}})=\cancel{\sigma^*(\iota_{\sigma^*(\xi)}
			\Omega_{L_{YM}})}+\sigma^*(\iota_B\Omega_{L_{YM}}),
		\end{align*}
		and we just have to realise that $B$ is `spanned' by our previous $\mathrm{i})$
		and $\mathrm{ii})$ cases; to summarize, this horizontal case gives no additional
		information.
\end{enumerate}
\section{Hamilton--De Donder--Weyl equations on \texorpdfstring{$M_1$}{M1}}
Consider a section $\sigma\in\Gamma(\pi_{ZX})$ and let $s:=\leg_{L_{YM}}\circ\,\sigma
\in\Gamma(\pi_{M_1X})$; then in coordinates we write
\[
	s(x)=\left(x^\mu,A^a_\mu(x),s_a^{\mu\nu}(x)\right),\quad\text{with }s_a^{(\mu\nu)}=0.
\]
Hamilton--De Donder--Weyl equations \eqref{eq:Hamilton-equations} then read
\[
	\frac{\partial A^a_\mu(x)}{\partial x^\nu}=\frac{\partial H}{\partial p_a^{\mu\nu}}
	(s(x)),\quad\sum_{\nu=0}^n\frac{\partial s_a^{\mu\nu}(x)}{\partial x^\nu}=-
	\frac{\partial H}{\partial A^a_\mu}(s(x)),
\]
where $H=-p=L_{YM}+\frac{1}{2}C^a_{bc}A^b_\mu A^c_\nu p_a^{\mu\nu}$ and 
$L_{YM}(\hat{z})\equiv-\frac{1}{4\sqrt{|g|}}p^a_{\mu\nu}p_a^{\mu\nu}$ for $\hat{z}\in M_1$.
From the first equation we have
\begin{align*}
	\partial_\nu A^a_\mu(x)&=-\frac{1}{2\sqrt{|g|}}s^a_{\mu\nu}(x)+\frac{1}{2}C^a_{bc}
	A^b_\mu(x)A^c_\nu(x)\\
	&\implies s^a_{\mu\nu}(x)=\sqrt{|g|}\left(
		-2\partial_\nu A^a_\mu(x)+C^a_{bc}A^b_\mu(x)A^c_\nu(x)
	\right).
\end{align*}
Since we are in $M_1$, $s_a^{\mu\nu}(x)=s_a^{[\mu\nu]}(x)$ and so this last equation is
equivalent to
\[
\boxed{
	s^a_{[\mu\nu]}(x)=\sqrt{|g|}\left(
	\partial_\mu A^a_\nu(x)-\partial_\nu A^a_\mu(x)+C^a_{bc}A^b_\mu(x)A^c_\nu(x)
	\right)=p^a_{\mu\nu}(x)
	}\,,
\]
namely, the antisymmetric components of the `momenta' of the section $s$ must
be equal to the multimomenta of our field, while the symmetric part is exactly zero
(this part is non-zero in general if working in $Z^*$).
From the second equation we get
\[
\boxed{
	\partial_\nu s_a^{\mu\nu}(x)+C^c_{ab}A^b_\nu(x)s_c^{\mu\nu}(x)\equiv D_\nu s_a^{\mu\nu}(x)=0
	}\,,
\]
and since from the first equation $s_a^{\mu\nu}(x)$ equals the multimomenta, we recover
once again Yang--Mills equations.
\section{The connection approach to Yang--Mills equations}
\label{sec:the_connection_approach_to_yang_mills_equations}
Let us now dive into the connection approach to the e.o.m. As mentioned in the main text,
this approach is the natural one in order to formulate a constraint algorithm and
construct a Hamiltonian description equivalent to the Lagrangian one in the case of almost
regular Lagrangians (e.g. Yang--Mills). Let us consider a general connection $\Gamma$
in $\pi_{ZX}$ with horizontal projector given by
\[
	\hh_z\left(\parder{}{x^\mu}\bigg|_z\right)=\parder{}{x^\mu}\bigg|_z+
	(\Gamma_\mu)^a_\rho(z)\parder{}{A^a_\rho}\bigg|_z+(\Gamma_\mu)^a_{\rho\sigma}(z)
	\parder{}{A^a_{\rho\sigma}}\bigg|_z,\quad\forall z\in Z.
\]
To shorter this section, we will use Corollary \ref{cor:some_iotah_computations}
in which we have archived some useful computations regarding $\iota_\hh$. Once noted this and
using $\Omega_{L_{YM}}$ expression of Section
\ref{sec:poincare_cartan_forms_legendre_transformations_and_the_constraint_algorithm},
we compute Equation \eqref{eq:geometric-De-Donder} (which lives in $Z$), obtaining
(recall we are solving for $\Gamma$ components)
\begin{align*}
\iota_{\hh}\Omega_{L_{YM}}
\;&=
(\cancel{n}+1)C^{c}_{ab}\,A^{b}_{\nu}\,p_{c}^{\mu\nu}\,
\dd A^{a}_{\mu}\wedge\dd^{n+1}x
\;-\;
\bigl((\Gamma_{\nu})^{a}_{\mu}-A^{a}_{\mu\nu}\bigr)\,
\dd p_{a}^{\mu\nu}\wedge\dd^{n+1}x
\\
&+\theta^{a}_{\mu}\wedge\dd^{n+1}x
\left(
\partial_{\nu}p_{a}^{\mu\nu}
+(\Gamma_{\nu})^{b}_{\rho}\,
\frac{\partial p_{a}^{\mu\nu}}{\partial A^{b}_{\rho}}
+(\Gamma_{\nu})^{b}_{\rho\sigma}\,
\frac{\partial p_{a}^{\mu\nu}}{\partial A^{b}_{\rho\sigma}}
\right)
\\
&+\bcancel{n\,\theta^{a}_{\mu}\wedge\dd p_{a}^{\mu\nu}\wedge\dd^{n}x_{\nu}}
\;\overset{!}{=}\;n\,\Omega_{L_{YM}}
\;=\;\cancel{n\,C^{c}_{ab}\,A^{b}_{\nu}\,p_{c}^{\mu\nu}\,
\dd A^{a}_{\mu}\wedge\dd^{n+1}x}
\\
&+\bcancel{n\,\theta^{a}_{\mu}\wedge\dd p_{a}^{\mu\nu}\wedge\dd^{n}x_{\nu}}
\\[4pt]
\Longleftrightarrow\;\;&
C^{c}_{ab}\,A^{b}_{\nu}\,p_{c}^{\mu\nu}\,
\dd A^{a}_{\mu}\wedge\dd^{n+1}x
-\bigl((\Gamma_{\nu})^{a}_{\mu}-A^{a}_{\mu\nu}\bigr)\,
\dd p_{a}^{\mu\nu}\wedge\dd^{n+1}x
\\
&+\left(
\partial_{\nu}p_{a}^{\mu\nu}
+(\Gamma_{\nu})^{b}_{\rho}\,
\frac{\partial p_{a}^{\mu\nu}}{\partial A^{b}_{\rho}}
+(\Gamma_{\nu})^{b}_{\rho\sigma}\,
\frac{\partial p_{a}^{\mu\nu}}{\partial A^{b}_{\rho\sigma}}
\right)
\dd A^{a}_{\mu}\wedge\dd^{n+1}x=0.
\end{align*}
Now we use Section \ref{eq:explicit_computation_of_multimomentum} to expand the
$\dd p_a^{\mu\nu}$ term and obtain
\begin{align*}
	\iota_\hh\Omega_{L_{YM}}=n\Omega_{L_{YM}}\iff
	&\biggl(
		C^{c}_{ab}\,A^{b}_{\nu}\,p_{c}^{\mu\nu}\,
		+\partial_{\nu}p_{a}^{\mu\nu}
		+(\Gamma_{\nu})^{b}_{\rho}\,
		\frac{\partial p_{a}^{\mu\nu}}{\partial A^{b}_{\rho}}
		+(\Gamma_{\nu})^{b}_{\rho\sigma}\,
		\frac{\partial p_{a}^{\mu\nu}}{\partial A^{b}_{\rho\sigma}}\\
		&-C^c_{da}\bigl((\Gamma_{\sigma})^{b}_{\rho}-A^{b}_{\rho\sigma}\bigr)
		W^{\rho\sigma,\mu\tau}_{bc}A^d_\tau\biggr)\dd A^{a}_{\mu}\wedge\dd^{n+1}x\\
		&-\bigl((\Gamma_{\nu})^{a}_{\mu}-A^{a}_{\mu\nu}\bigr)W_{ab}^{\mu\nu,\rho\sigma}\dd
		A^b_{\rho\sigma}\wedge\dd^{n+1}x=0,
\end{align*}
where $W_{ab}^{\mu\nu,\rho\sigma}$ is the Hessian computed on Section
\ref{eq:explicit_hessian} (that is $(\mu,\nu)$ and $(\rho,\sigma)$-antisymmetric).
Now since both terms are independent of each other ($\dd A^a_\mu$ and
$\dd A^b_{\rho\sigma}$ are independent on $Z$), both must be zero. From the second term we
get
\[
	\bigl((\Gamma_{\nu})^{a}_{\mu}(z)-A^{a}_{\mu\nu}(z)\bigr)W_{ab}^{\mu\nu,\rho\sigma}(z)=0
	\iff\boxed{(\Gamma_{[\nu})^{a}_{\mu]}(z)=A^{a}_{[\mu\nu]}(z)}
\]
for all $z\in Z$, where the iff condition comes from the antisymmetric part of the hessian
being invertible. The symmetric part $(\Gamma_{(\nu})^a_{\mu)}(z)$ remains free, i.e., does
not add any further restriction; this
is the actual gauge freedom, consequence of the Yang--Mills Lagrangian depending on
$A^a_{\mu\nu}$ only through its antisymmetrization.

While from the first term (the fifth term inside this form annihilates as it contains exactly
the first boxed result) we get
\begin{equation*}
\boxed{\;
\frac{\partial p_{a}^{\mu\nu}(z)}{\partial x^{\nu}}
+(\Gamma_{\nu})^{b}_{\rho}(z)\,
\frac{\partial p_{a}^{\mu\nu}(z)}{\partial A^{b}_{\rho}}
+(\Gamma_{\nu})^{b}_{\rho\sigma}(z)\,
W_{ab}^{\mu\nu,\rho\sigma}(z)
+C^{c}_{ab}\,A^{b}_{\nu}(z)\,p_{c}^{\mu\nu}(z)=0
}\,,
\end{equation*}
for all $z\in Z$. When pullbacked by an integrable holonomic section, this last equation
recovers Yang--Mills equations.
Since $W_{ab}^{\mu\nu,\rho\sigma}$ is invertible under the antisymmetric condition on
$(\mu,\nu)$ and $(\rho,\sigma)$, we can always solve for $(\Gamma_\nu)^b_{[\rho\sigma]}$
\textbf{for every point in} $\bm{Z}$ (with $(\Gamma_\nu)^b_{[\rho\sigma]}$ once again free).
Hence, the constraint algorithm of Subsection
\ref{subsect:the_singular_case} ends in the first step $Z_1=Z$ and so does its
Hamiltonian counterpart (on $M_1=\leg_{L_{YM}}(Z)$) as we mentioned in Subsection
\ref{subsect:almost_regular_lagrangians_and_the_constraint_algorithm}.

\section{Bianchi identity}
Apart from e.o.m. (on-shell identity), as already mentioned on the main text we have
an off-shell identity coming from the fact that the exterior covariant derivative of
the curvature of our connections is exactly zero, which is the Bianchi identity
$D^\omega\Omega^\omega=0$. If we choose a trivialization, this identity yields
a (local) off-shell identity on $X$
\begin{align*}
	&D^\omega\Omega^\omega=\dd\Omega^\omega+[\omega,\Omega^\omega]=0\\
	\implies s_\alpha^*(D^\omega\Omega^\omega)=&D_{A_\alpha}F_\alpha=dF_\alpha
	+[A_\alpha,F_\alpha]=0,
\end{align*}
where
\[
	F_\alpha=\frac{1}{2}F^a_{\mu\nu}(x)\dd x^\mu\wedge\dd x^\nu\otimes e_a,
\]
\[
	F^a_{\mu\nu}(x)=\partial_\mu A^a_\nu(x)-\partial_\nu A^a_\mu(x)+C^a_{bc}A^b_\mu(x)
	A^c_\nu(x)=F^a_{[\mu\nu]}(x).
\]
Substituting in the previous expression is straightforward to obtain
\[
\boxed{
	\partial_{[\alpha}F^a_{\mu\nu]}(x)+C^a_{bc}A^b_{[\alpha}(x)F^c_{\mu\nu]}(x)=0,
	\quad\forall x\in X
	}
\]
which we recall is an off-shell identity. Its intrinsic version is just $D_A F=0$ as already
anticipated in Remark \ref{rem:YM_invariant_form}.

\section{Symmetries and conserved quantities}
We now proceed computing conserved quantities following Chapter
\ref{chap:symmetries_and_Noether_theorems}. We will denote $j^1\xi_C$ to the 1-jet
prolongation of our gauge symmetries $\xi_C$ (instead of $\xi_C^{(1)}$). Recall that our
symmetries have components given by Equation \eqref{eq:xi_C_local} and jet prolongation
components given by Equation \eqref{eq:j1xi_C_local}. Some useful computations will be
archived on Section \ref{sub:useful_identities_for_gauge_symmetries_computations}.
\subsection{Invariance of the Lagrangian: the \texorpdfstring{$F$}{F} function}
\label{sub:invariance_of_the_YM_lagrangian}
Let us check that our gauge symmetries $\xi_C$ are indeed symmetries of the Yang--Mills
Lagrangian in the sense of Definition \ref{def:symmetries_of_the_lagrangian}. For that, we
compute the $F$ function defined in \eqref{eq:F_function}, whose local expression
\eqref{eq:F_local_expression} reads
\[
	F=j^1\xi_C(L_{YM})+\left(\parder{\xi_C^\mu}{x^\mu}+A^a_{\mu\nu}
	\parder{\xi_C^\nu}{A^a_\mu}\right)L_{YM}.
\]
In our case $\xi_C$ is $\pi$-vertical by Theorem \ref{thm:gauge_prolongation}, i.e.
$\xi_C^\mu=0$, so the whole parenthesis annihilates and we are simply left with
$F=j^1\xi_C(L_{YM})$.

Using now the Leibniz rule (vectors act as derivation over functions) together with the fact
that neither $\kappa_{ab}$ nor $g^{\mu\nu}$ depend on $A^a_\mu$ nor on
$A^a_{\mu\nu}$, and substituting \eqref{eq:j1xi_C_of_F}, we get
\begin{align*}
	F&=j^1\xi_C\left(-\frac{1}{4}\sqrt{|g|}\,F_a^{\mu\nu}F^a_{\mu\nu}\right)
	=-\frac{1}{4}\sqrt{|g|}\,\kappa_{ac}\,g^{\mu\rho}g^{\nu\sigma}\,
	j^1\xi_C\left(F^c_{\rho\sigma}F^a_{\mu\nu}\right)\\
	&=-\frac{1}{4}\sqrt{|g|}\,\kappa_{ac}\,g^{\mu\rho}g^{\nu\sigma}
	\Bigl(j^1\xi_C\left(F^c_{\rho\sigma}\right)F^a_{\mu\nu}
	+F^c_{\rho\sigma}\,j^1\xi_C\left(F^a_{\mu\nu}\right)\Bigr)\\
	&=-\frac{1}{4}\sqrt{|g|}\,\kappa_{ac}\,g^{\mu\rho}g^{\nu\sigma}
	\Bigl(C^c_{hi}F^h_{\rho\sigma}\xi^i(x)F^a_{\mu\nu}
	+F^c_{\rho\sigma}C^a_{jk}F^j_{\mu\nu}\xi^k(x)\Bigr)\\
	&=-\frac{1}{4}\sqrt{|g|}\Bigl(C_{ahi}F^{h\mu\nu}F^a_{\mu\nu}\xi^i(x)
	+C_{cjk}F^{c\mu\nu}F^j_{\mu\nu}\xi^k(x)\Bigr).
\end{align*}
After some dummy index relabelling and using total antisymmetry for $C_{abc}$,
both terms cancel each other
\begin{align*}
	F&\equiv-\frac{1}{4}\sqrt{|g|}\Bigl(C_{jci}F^{c\mu\nu}F^j_{\mu\nu}\xi^i(x)
	+C_{cji}F^{c\mu\nu}F^j_{\mu\nu}\xi^i(x)\Bigr)\\
	&=-\frac{1}{4}\sqrt{|g|}\Bigl(\cancel{C_{jci}F^{c\mu\nu}F^j_{\mu\nu}\xi^i(x)}
	-\cancel{C_{jci}F^{c\mu\nu}F^j_{\mu\nu}\xi^i(x)}\Bigr)=0.
\end{align*}
Hence
\[
	\boxed{F=0=j^1\xi_C(L_{YM})}\,,
\]
and so, by Proposition \ref{prop:Fzero_iff_symmetry}, in the case that $\xi_C$ satisfies
the boundary conditions $\mathrm{ii})$ and $\mathrm{iii})$ of Definition
\ref{def:symmetries_of_the_lagrangian} (trivially satisfied whenever $\partial X=\emptyset$),
\textbf{$\xi_C$ is a symmetry of the Yang--Mills Lagrangian}.

Even more, using Equation \eqref{eq:lengthy_computation}, since $F=0$ and $\xi_C$ is
vertical, we get $\pounds_{j^1\xi_C}\Theta_{L_{YM}}=0$ \textbf{exactly}
(and not only modulo $\mathcal{I(C)}$ as Definition
\ref{def:symmetries_of_the_lagrangian} requires), i.e. our gauge symmetry of the Lagrangian
is exact, not only modulo divergence terms.
\subsection{Noether's gauge conserved current and charge}
By using Theorem \ref{thm:noether_theorem_symmetries_of_the_lagrangian} or equivalently
the multisymplectic momentum map of Section \ref{sect:momentum_map}, we can compute
the conserved current associated with a gauge symmetry $\xi_C$ of our Lagrangian.
Given some $\xi\in\mathfrak{gau}(P)$ and using Equation
\eqref{eq:useful_local_expression_for_the_preserved_quantity} (recall that $\xi_C$
is vertical) we have
\begin{equation}
\label{eq:conserved_current}
\boxed{
	J^\xi\equiv\iota_{j^1\xi_C}\Theta_{L_{YM}}
	\overset{\eqref{eq:useful_local_expression_for_the_preserved_quantity}}{=}
	\parder{L_{YM}}{A^a_{\mu\nu}}(\xi_C)^a_\mu\dd^nx_\nu
	\overset{\ref{eq:pd_of_L_wrt_field_components}}{=}p_a^{\mu\nu}
	(D_\mu\xi^a(x))\dd^nx_\nu
	}\,.
\end{equation}

Let us see now that indeed it is conserved in the sense of Definition
\ref{def:conserved_quantity_for_EL_equations}, i.e. that, given a solution $\phi_\omega$ of
the Euler--Lagrange equations \eqref{eq:euler-lagrange_equations}, we have
$(j^2\phi_\omega)^*\dd J^\xi=0$ (recall that, exactly as it happened with the Euler--Lagrange
equations, $\dd J^\xi$ contains $\dd A^a_{\mu\nu}$ terms and so we have to pullback it with
the second jet prolongation of $\phi_\omega$).
 
Since we have just seen that $\pounds_{j^1\xi_C}\Theta_{L_{YM}}=0$ exactly, the proof of
Proposition \ref{prop:dJ_equals_iota_omega} applies verbatim to our gauge symmetries, giving
\[
	\dd J^\xi=\iota_{j^1\xi_C}\Omega_{L_{YM}}.
\]
Recalling then the expression of $\Omega_{L_{YM}}$ obtained in Section
\ref{sec:poincare_cartan_forms_legendre_transformations_and_the_constraint_algorithm},
\[
	\Omega_{L_{YM}}=C^{c}_{ab}A^{b}_{\nu}\,p_{c}^{\mu\nu}\,\theta^{a}_{\mu}\wedge\dd^{n+1}x
	+\theta^{a}_{\mu}\wedge\dd p_{a}^{\mu\nu}\wedge\dd^{n}x_{\nu},
\]
together with the local expression \eqref{eq:j1xi_C_local} of the jet prolongation of our
gauge symmetries,
\begin{align*}
	j^1\xi_C&=(D_\mu\xi)^a(x)\parder{}{A^a_\mu}+\der{(D_\mu\xi)^a(x)}{x^\nu}
	\parder{}{A^a_{\mu\nu}}\\
	&=\Bigl(\partial_\mu\xi^a(x)+C^a_{bc}A^b_\mu\xi^c(x)\Bigr)\parder{}{A^a_\mu}\\
	&+\Bigl[\partial_\nu\partial_\mu\xi^a(x)+C^a_{bc}\bigl(A^b_{\mu\nu}\xi^c(x)
	+A^b_\mu\partial_\nu\xi^c(x)\bigr)\Bigr]\parder{}{A^a_{\mu\nu}},
\end{align*}
and using that $j^1\xi_C$ is vertical (hence it annihilates both $\dd^{n+1}x$ and
$\dd^nx_\nu$) while $\iota_{j^1\xi_C}\theta^a_\mu=(D_\mu\xi)^a(x)$ and
$\iota_{j^1\xi_C}\dd p_a^{\mu\nu}=j^1\xi_C(p_a^{\mu\nu})$, we obtain
\begin{align*}
	(j^2\phi_\omega)^*\left(\dd J^\xi\right)&=(j^2\phi_\omega)^*\left(
	\iota_{j^1\xi_C}\Omega_{L_{YM}}\right)\\
	&=(j^2\phi_\omega)^*\Bigl[(D_\mu\xi)^a(x)\,C^{c}_{ab}A^{b}_{\nu}\,p_{c}^{\mu\nu}\,
	\dd^{n+1}x+(D_\mu\xi)^a(x)\,\dd p_{a}^{\mu\nu}\wedge\dd^{n}x_{\nu}\\
	&\qquad\qquad\quad-\cancel{j^1\xi_C\left(p_a^{\mu\nu}\right)\theta^{a}_{\mu}\wedge
	\dd^{n}x_{\nu}}\Bigr],
\end{align*}
where the last term annihilates as it is the pullback of a contact form by an holonomic
section. Applying now \eqref{eq:pullbacking_dform} to $p_a^{\mu\nu}$ and using
$\dd x^\alpha\wedge\dd^nx_\nu=\delta^\alpha_\nu\,\dd^{n+1}x$, we finally get
\begin{align*}
	(j^2\phi_\omega)^*\left(\dd J^\xi\right)&=(D_\mu\xi)^a(x)\,C^{c}_{ab}A^{b}_{\nu}(x)\,
	p_{c}^{\mu\nu}(x)\,\dd^{n+1}x\\
	&+(D_\mu\xi)^a(x)\,\partial_\nu p_a^{\mu\nu}(x)\,\dd^{n+1}x\\
	&=(D_\mu\xi)^a(x)\underbrace{\left(\partial_\nu p_a^{\mu\nu}(x)
	+C^{c}_{ab}A^{b}_{\nu}(x)p_{c}^{\mu\nu}(x)\right)}
	_{=\,D_\nu p_a^{\mu\nu}(x)\,=\,0\text{ by }\eqref{eq:Yang-Mills_equations}}\dd^{n+1}x=0,
\end{align*}
namely, $J^\xi$ is a conserved current for every $\xi\in\mathfrak{gau}(P)$, the whole
computation reducing precisely to the Yang--Mills equations
\eqref{eq:Yang-Mills_equations}.

Using Proposition \ref{prop:dJ_equals_iota_omega}, we also know that
\[
		dJ^\xi=\iota_{j^1\xi_C}\Omega_{L_{YM}},
\]
i.e. $\bm{J^\xi}$ \textbf{is a Hamiltonian} $\bm{n}$\textbf{-form} (up to closed forms) with Hamiltonian vector field
$j^1\xi_C$ (up to vectors in the kernel of $\flat_{\Omega_{L_{YM}}}$).


Let us finally note that, once pullbacked by a solution of the Yang--Mills equations, our
current becomes exact on $X$. Indeed
\[
	(j^1\phi_\omega)^*J^\xi=p_a^{\mu\nu}(x)\,(D_\mu\xi)^a(x)\,\dd^nx_\nu\in\Omega^n(X),
\]
and expanding the covariant derivative and using the Leibniz rule,
\[
	p_a^{\mu\nu}(x)(D_\mu\xi)^a(x)=\partial_\mu\left(p_a^{\mu\nu}(x)\xi^a(x)\right)
	-\left(D_\mu p_a^{\mu\nu}(x)\right)\xi^a(x),
\]
so that on-shell ($D_\mu p_a^{\mu\nu}(x)=0$) our current is a total divergence.
Using then the $(\mu,\nu)$-antisymmetry of $p_a^{\mu\nu}$ we have
\begin{equation}
	\label{eq:superpotential}
	\boxed{
	(j^1\phi_\omega)^*J^\xi=\partial_\mu\left(p_a^{\mu\nu}(x)\xi^a(x)\right)\dd^nx_\nu
	=\dd K^\xi,\qquad
	K^\xi:=-\frac{1}{2}\,p_a^{\mu\nu}(x)\,\xi^a(x)\,\dd^{n-1}x_{\mu\nu}
	}\,,
\end{equation}
and hence the associated conserved charge reduces to a boundary integral (where
we will call $\bm{K}$ the \textbf{superpotential}). The proof of Equation
\eqref{eq:superpotential} is done in Subsection
\ref{sub:proof_of_superpotential_definition}.

Let us make this last statement explicit by computing the charge associated with $J^\xi$
after the slicing, i.e. its instantaneous momentum map
\eqref{eq:instantaneous_momentum_map_local} for a given Cauchy surface $M$. Let then
$\gamma\in\tilde{Z}$ be the restriction to $M$ of $j^1\phi_\omega$ for $\phi_\omega$ a
solution of the Yang--Mills equations \eqref{eq:Yang-Mills_equations}; we have
\begin{align*}
	\tilde{J}^\xi(\gamma)&=\int_M\gamma^*\left(\iota_{j^1\xi_C}\Theta_{L_{YM}}\right)
	+\int_{\partial M}\cancel{\gamma^*\left(\iota_{j^1\xi_C}\Pi\right)}
	=\int_M\gamma^*J^\xi\\[4pt]
	&\overset{\text{on-shell}}{=}\int_M\gamma^*\left(\dd K^\xi\right)
	\overset{\text{(Stokes)}}{=}\int_{\partial M}\gamma^*K^\xi
	=-\frac{1}{2}\int_{\partial M}\gamma^*\left(p_a^{\mu\nu}(x)\,\xi^a(x)\,
	\dd^{n-1}x_{\mu\nu}\right),
\end{align*}
where the boundary term of \eqref{eq:instantaneous_momentum_map_local} annihilates because
$\Pi$ can be proved to be a purely horizontal $n$-form while $j^1\xi_C$ is $\pi_{ZX}$-vertical.
\textbf{The whole charge is therefore carried by the superpotential} $\bm{K^\xi}$, and this
has a neat physical reading:
\begin{itemize}
	\item If $\xi$ generates a \textbf{localizable} symmetry in the sense of Definition
	\ref{def:localizable_symmetry} -- and, as we remarked in Section
	\ref{sec:initial_clarifications}, we may always multiply $\xi$ by a bump function
	supported in the interior of $M$ -- then $\xi^a(x)|_{\partial M}=0$ and hence
	\[
		\tilde{J}^\xi(\gamma)=0\,,
	\]
	for that now bumped $\xi$ of compact support.
	Genuine gauge transformations carry \textbf{no} conserved charge at all; they are pure
	redundancy of the description and not physical symmetries.
	\item If, on the contrary, $\xi$ generates a \textbf{global} (rigid) transformation, i.e.
	it does not vanish on $\partial M$, then $\tilde{J}^\xi(\gamma)$ is a genuinely non-trivial
	conserved charge, entirely determined by the asymptotic data of the field on $\partial M$
	(this is precisely the colour charge of the Yang--Mills field).
\end{itemize}
This dichotomy is nothing but the manifestation, in our concrete Yang--Mills setting, of
Noether's second Theorem \ref{thm:second_noether_theorem}: the charges associated with
localizable symmetries are trivial on-shell, while for rigid transformations they are given
by
\[
	\boxed{
	\tilde{J}^\xi(\gamma)=\xi^a\left(-\frac{1}{2}\int_{\partial M}\gamma^* p_a^{\mu\nu}(x)
	\dd^{n-1}x_{\mu\nu}\right)\equiv \xi^a\,Q_a
	}\,.
\]

\section{Yang--Mills graded Lie algebra}
In this section we will perform some computations regarding the graded Lie algebra structure
manifested in Theorem \ref{thm:graded-lie-algebra}.
\subsection{Top forms \texorpdfstring{$\ell=n$}{l=n}}
\label{sub:top_forms_ell_n_}
We begin our computations with the trivial case of top Hamiltonian forms $\ell=n$, i.e.,
we work with the usual Lie bracket on $\mathfrak{g}$ for our Hamiltonian vector fields
and with the multisymplectic bracket of Definition \ref{def:multisymplectic_bracket}
for our Hamiltonian $n$-forms. As we already mentioned in the previous section, for any
$\xi\in\mathfrak{gau}(P)$, $J^\xi$ is a Hamiltonian $n$-form with Hamiltonian vector field
given by $j^1\xi_C$.
Hence, consider two different $\xi,\zeta\in\mathfrak{gau}(P)$ and let us compute the bracket
of their associated Hamiltonian $n-$forms $J^\xi$ and $J^\zeta$ respectively.

Before going into the actual computation, we already know the result. Indeed, from
Proposition \ref{prop:momentum_map_algebra} we have Equation \eqref{eq:momentum_map_algebra}
(where $\mathfrak{gau}(P)$ plays the role of the (infinite-dimensional) Lie algebra),
and so \textbf{the algebra closes up to close forms in general}
\[
	\{J^\xi,J^\zeta\}=J^{[\xi,\zeta]}-\dd\left(\iota_{j^1\xi_C}\iota_{j^1\xi_C}
	\Theta_{L_{YM}}\right),
\]
were the bracket is the pointwise bracket defined on Equation \eqref{eq:pointwise_bracket},
i.e. locally given by $[\xi,\zeta]^a(x)=C^a_{bc}\xi^b(x)\zeta^c(x)$.
Since $\Theta_{L_{YM}}\in\Omega^{n+1}_1 Y$ (which is easily seen from its local expression
given in Section
\ref{sec:poincare_cartan_forms_legendre_transformations_and_the_constraint_algorithm}),
$\iota_{j^1\xi_C}\iota_{j^1\xi_C}\Theta_{L_{YM}}=0$ and \textbf{the algebra closes exactly}.
In components
\begin{align*}
	\{J^\xi,J^\zeta\}&=\iota_{j^1\xi_C}\iota_{j^1\zeta_C}\Omega_{L_{YM}}=\iota_{j^1\xi_C}\dd J^\zeta
	\overset{\eqref{eq:conserved_current}}{=}\iota_{j^1\xi_C}\big(D_\mu\zeta^a(x)\,\dd
	p_a^{\mu\nu}\wedge\dd^nx_\nu\\
	&+p_a^{\mu\nu}\,\dd(D_\mu\zeta^a(x))\wedge\dd^nx_\nu\big)=D_\mu\zeta^a(x)\,j^1\xi_C
	(p_a^{\mu\nu})\dd^nx_\nu\\
	&+p_a^{\mu\nu}\,C^a_{bc}\zeta^c(x)j^1\xi_C(A^a_\mu)\dd^nx_\nu
	\overset{\eqref{eq:j1xi_C_of_p}}{=}\big(-C^b_{ad}p_b^{\mu\nu}\xi^d(x)D_\mu\zeta^a(x)\\
	&+C^a_{bc}p_a^{\mu\nu}\zeta^c(x)D_\mu\xi^b(x)\big)\dd^nx_\nu\equiv
	C^a_{bc}p_a^{\mu\nu}\big(\zeta^c(x)D_\mu\xi^b(x)-\xi^c(x)D_\mu\zeta^b(x)\big)\\
	&=C^a_{bc}p_a^{\mu\nu}D_\mu(\xi^b(x)\zeta^c(x))\dd^nx_\nu=
	p_a^{\mu\nu}D_\mu(C^a_{bc}\xi^b(x)\zeta^c(x))\dd^nx_\nu=
	J^{[\xi,\zeta]},
\end{align*}
where in the last equality we used the pointwise bracket of Equation
\eqref{eq:pointwise_bracket} and Equation \eqref{eq:conserved_current}
\[
	\boxed{
	\{J^\xi,J^\zeta\}=J^{[\xi,\zeta]},\quad\forall\xi,\zeta\in\gau(P)	
	}\,.
\]

\subsection{The generalization of Hamiltonian forms}
\label{sub:the_generalization_of_hamiltonian_forms}
The main problem of lowering to $\ell<n-1$ is that of
actually finding meaningful multivector gauge symmetries.
Physically, Definition \ref{def:conserved_quantity_for_EL_equations} does not
cover situations were the divergence on shell of the current is not
zero but a potential term.

Let $\alpha\in\mathcal{H}^n(Z)$ be a top Hamiltonian form with associated
Hamiltonian vector field $X_\alpha\in\mathfrak{X}_H(Z)$; then an easy
computation shows that $\alpha$ is a preserved quantity for the Euler--Lagrange
equations; take any $\phi_\omega\in\Gamma(\pi)$ solution to the
Euler--Lagrange equations \eqref{eq:euler-lagrange_equations}, then
\[
	(j^1\phi_\omega)^*(\dd\alpha)=(j^1\phi_\omega)^*(\iota_{X_\alpha}
	\Omega_{L_{YM}})\overset{\eqref{eq:intrinsic_EL_equations}}{=}0.
\]Of course, $(j^1\phi_\omega)^*(\dd\alpha)$ can be written locally
as $(j^1\phi_\omega)^*(\dd\alpha)=\dd\alpha^\mu(x)\wedge\dd^nx_\mu$
for some set of functions $\alpha^\mu(x)\in C^\infty(X)$, and so
conservation of our current implies that these components are
divergenceless, namely
\[
	(j^1\phi_\omega)^*(\dd\alpha)=\partial_\mu\alpha^\mu(x)\dd^{n+1}x=0
	\implies\partial_\mu\alpha^\mu(x)=0,\quad\forall x\in X. 
\]
How can we obtain instead situations where we have
$\partial_\mu\alpha^\mu(x)=B(x)$? Take any
$\beta\in\Omega^{n+1}_1 Y$, i.e., a semibasic form on $Z$. This form has
to have then the local expression $\beta=B(z)\dd^{n+1}x$ for some
function $B\in C^\infty(Z)$. What happens if we now relax
the Hamiltonian form definition to one modulo this kind
of semibasic forms?
\[
	\dd\alpha=\iota_{X_\alpha}\Omega_{L_{YM}}+\beta,\quad\beta\in\Omega^{n+1}_1Z.
\]
Then
\[
	(j^1\phi_\omega)^*(\dd\alpha)=\cancel{(j^1\phi_\omega)^*(\iota_{X_\alpha}
	\Omega_{L_{YM}})}+(j^1\phi_\omega)^*(\beta)=0
\]
\[
	\implies\partial_\mu\alpha^\mu(x)=B(j^1\phi_\omega(x))\equiv B(x),
	\quad\forall x\in X. 
\]
This is the approach taken in \cite{leonGradedPoissonGraded2025}.
Given our Yang--Mills Lagrangian, as computed in Section
\ref{sec:poincare_cartan_forms_legendre_transformations_and_the_constraint_algorithm}
our Hamiltonian function on $M_1=\leg_{L_{YM}}(Z)$ has the form
\[
	H(x^\mu,A^a_\mu,\hat{p}^a_{\mu\nu}=p^a_{\mu\nu})=
	L_{YM}+\frac{1}{2}C^a_{bc}A^b_\mu A^c_\nu p_a^{\mu\nu},
\]
with $L_{YM}=-\frac{1}{4\sqrt{|g|}}p_a^{\mu\nu}p^a_{\mu\nu}$. This Hamiltonian function defines
a Hamiltonian section as $h_1=(\mu_{\tilde{M}_1})^{-1}$ (recall $\tilde{M}_1\cong M_1$
from Remark \ref{rem:M1_equivalence}), and considering the pullback by it of the canonical
multisymplectic $(n+2)$-form on $\Omega^{n+2}_2 Y$, i.e., $\Omega_{M_1}=h_1^*\Omega$ given locally by
\begin{align}
	\label{eq:Omega_in_M1}
	\Omega_{M_1}&=\dd H\wedge \dd^{n+1}x-\dd p_a^{\mu\nu}\wedge\dd A^a_\mu\wedge\dd^nx_\nu\\
	&=\parder{H}{A^a_\mu}\,\dd A^a_\mu\wedge\dd^{n+1}x
	+\parder{H}{p_a^{\mu\nu}}\,\dd p_a^{\mu\nu}\wedge \dd^{n+1}x
	-\dd p_a^{\mu\nu}\wedge\dd A^a_\mu\wedge\dd^nx_\nu
\end{align}
then, the graded Lie algebra of Subsection \ref{subsec:graded} applies
verbatim with $(M_1,\Omega_{M_1})$ instead of $(Z,\Omega_L)$
and for a chosen a LT $s_\alpha$, the following forms
\[
	\A^a:=A^a_\mu\dd x^\mu,\quad\F_a:=\frac{1}{2}p_a^{\mu\nu}\dd^{n-1}x_{\mu\nu},
\]
where $\A\in\Omega^1(M_1;\mathfrak{g})$ and $\F\in\Omega^{n-1}(M_1;\mathfrak{g}^*)$
(note we already introduced $\F_a$ on $X$ in section \ref{sec:yang_mills_equations})
belong to this semibasic, more broad sense of Hamiltonian forms
-called \textbf{special Hamiltonian forms} \cite{leonGradedPoissonGraded2025}-
under the following multivector fields
\[
	X_{\F}=\frac{1}{2}\parder{}{A^a_\mu}\wedge\parder{}{x^\mu},\quad
	X_{\A}=\frac{1}{2n}\parder{}{p_a^{\mu\nu}}\wedge\parder{}{^{n-1}x_{\mu\nu}},
\]
where $\parder{}{^nx^\mu}:=\iota_{\dd x^\mu}\parder{}{^{n+1}x}\equiv(-1)^\mu\parder{}
{x^0}\wedge\dots\wedge\widehat{\parder{}{x^\mu}}\wedge\dots\wedge\parder{}{x^{n}}$.

Physically (in a motivational sense) is important to notice that, considering the
section $\sigma\in\Gamma(\pi_{M_1X})$ defined as $\sigma:=\leg_{L_{YM}}\circ\,
j^1\phi_\omega$ for some $\phi_\omega\in\Gamma(\pi)$, then
\[
	\sigma^*\A=A^a_\mu(x)\dd x^\mu=s_\alpha^*\omega=A_\alpha,
	\sigma^*\F=\frac{1}{2}p_a^{\mu\nu}(x)\dd^{n-1}x_{\mu\nu}
	\propto \star(s_\alpha^*\Omega^\omega)=\star F_\alpha
\]
where the second relation is an equality up to some $\kappa$ and
$\sqrt{|g|}$ factors.

We now proceed computing its multisymplectic brackets $\{\cdot,\cdot\}^*$
defined on Subsection \ref{subsec:graded}. The first observation is that
the bracket $\{\widehat{\A}^a,\widehat{\A}^b\}^*$ vanish identically
by Definition \ref{def:graded-bracket} ($2\deg\A^a=2<n$, the case of interest)
while $\{\widehat{\A}^a,\widehat{\F}^b\}^*$ and
$\{\widehat{\F}^a,\widehat{\F}^b\}^*$ are $0$ and $(n-2)$-forms respectively
(degrees $n$ and $2$ on the new grading \eqref{eq:grading:fix}). Although
this observation, the bracket $\{\widehat{\F}^a,\widehat{\F}^b\}^*$ also happens
to vanish; we just have to note that the multivector $X_{\F_a}\wedge X_{\F_b}$
has two $\parder{}{A^a_\mu}$ `inside', while $\Omega_{M_1}$ in Equation
\eqref{eq:Omega_in_M1} has at most one $\dd A^a_\mu$. The only bracket hence
left to compute is $\{\widehat{\A}^a,\widehat{\F}^b\}^*$
\begin{align*}
	\{\widehat{\A}^a,\widehat{\F}^b\}^*&=-(-1)^{|\hat{\A}^a|}\iota_{X_{\A^a}\wedge X_{\F_b}}\Omega_{M_1}
	=(-1)^n\iota_{X_{\F_b}}\iota_{X_{\A^a}}\Omega_{M_1}
	=(-1)^n\cancel{(-1)^{2n}}\iota_{X_{\A^a}}\iota_{X_{\F_b}}\Omega_{M_1}\\
	&=(-1)^n\frac{1}{2}\iota_{X_{\A^a}}\iota_{\parder{}{x^\rho}}\iota_{\parder{}{A_\rho^b}}\Omega_{M_1}
	=(-1)^n\frac{1}{2}\iota_{X_{\A^a}}\iota_{\parder{}{x^\rho}}
	\left[
		\parder{H}{A_\rho^b}\dd^{n+1}x+\dd p_b^{\rho\tau}\wedge\dd^{n}x_{\tau}
	\right]
	\\
	&=(-1)^n\frac{1}{2}\iota_{X_{\A^a}}\left[\parder{H}{A_\rho^b}\dd^n x_\rho-\dd p_b^{\rho\tau}
	\wedge\dd^{n-1}x_{\tau\rho}\right]\\
	&=(-1)^n\frac{1}{4n}\iota_{\parder{}{^{n-1}x_{\mu\nu}}}\iota_{\parder{}{p_a^{\mu\nu}}}\left[
	\parder{H}{A_\rho^b}\dd^n x_\rho+\dd p_b^{\rho\tau}\wedge\dd^{n-1}x_{\rho\tau}\right]\\
	&=(-1)^n\frac{1}{4n}\iota_{\parder{}{^{n-1}x_{\mu\nu}}}\left[
	\delta^a_b(\delta_\mu^\rho\delta_\nu^\tau-\delta_\mu^\tau\delta_\nu^\rho)\,\dd^{n-1}x_{\rho\tau}
	\right]
	=(-1)^n\frac{1}{\cancel{4}2n}\,\delta^a_b\,\iota_{\parder{}{^{n-1}x_{\mu\nu}}}\left[\cancel{2}\,\dd^{n-1}x_{\mu\nu}\right]\\
	&=(-1)^n\frac{1}{2n}\delta^a_b\sum_{\mu,\nu}1=(-1)^n\frac{1}{\cancel{2}n}\cancel{2}\delta^a_b\sum_{\mu>\nu}1
	=(-1)^n\frac{\cancel{n}(n+1)}{2\cancel{n}}\,\delta^a_b
	=(-1)^n\frac{n+1}{2}\,\delta^a_b
\end{align*}
\[
	\implies
	\boxed{
		\{\widehat{\A}^a,\widehat{\F}^b\}^*=(-1)^n\frac{n+1}{2}\,\delta^a_b,\quad
		\forall a,b\in\{1,\dots,N\}
	}\,.
\]
\section{Table summary and further work}
As a way of closing this thesis, we have collected some of the main concepts and results
covered in this work in the form of a `dictionary' that serves as a `translator' between
particle physicists and differential geometers. The dictionary is presented in Table
\ref{tab:dictionary}. With this in place, we conclude by outlining several directions
for future work that lie beyond the scope of this thesis but that we find
interesting and/or promising as potential research directions 
\begin{itemize}
	\item Apply multisymplectic formalism to gravity theories through the
	Einstein--Hilbert Lagrangian.
	\item Apply Noether theorems to general spacetime translations and recover the
	Energy-Momentum tensor.
	\item Delve into the physical consequences of Cartan symmetries. 
	\item Relate multisymplectic formulation of classical field theories with
	$k$-symplectic and $k$-cosymplectic ones. The advantage of these formulations is
	that they are more manageable when performing numerical integration computations.
	\item Deepen in the relationship among the graded Lie algebra and covariantly closed
	forms ($D_A\alpha=0$).
	\item Explore the existence of (special) Hamiltonian forms in intermediate degrees
	and analyze their corresponding conserved currents. 
\end{itemize}

\let\emph\emphoriginal

\printbibliography
\appendix
\allowdisplaybreaks
\chapter{Proof of vertical endomorphism intermediate condition }
\label{app:proof_of_belonging}
Let us prove that indeed $T_{j^1_x\phi}\pi_{ZY}(V_i)-
T_x\phi\circ T_{j^1_x\phi}\pi_{ZX}(V_i)\in\mathcal{V}_{\phi(x)}\pi$
\begin{proof}
Indeed
\[
T_{j^1_x\phi}\pi_{ZY}(V_i)-T_x\phi\circ 
T_{j^1_x\phi}\pi_{ZX}(V_i)\in\mathcal{V}_{\phi(x)}\pi\iff
\]
\[
\iff T_{\phi(x)}\pi\left[T_{j^1_x\phi}\pi_{ZY}(V_i)-T_x\phi\circ 
T_{j^1_x\phi}\pi_{ZX}(V_i)\right]=0
\]
\[
\iff T_{\phi(x)}\pi\circ T_{j^1_x\phi}\pi_{ZY}(V_i)-
T_{\phi(x)}\pi\circ 
T_x\phi\circ T_{j^1_x\phi}\pi_{ZX}(V_i)=0
\]
and
\[
	\pi\circ\pi_{ZY}=\pi_{ZX}\implies
	T_{\phi(x)}\pi\circ T_{j^1_x\phi}\pi_{ZY}=T_{j^1_x\phi}
	\pi_{ZX},
\]
\[
	\pi\circ\phi=\mathrm{id}_X\implies T_{\phi(x)}\pi\circ T_x\phi=
	\mathrm{id}_{T_xX}.
\] 
The proof is clearer though working in components. Stating first the 
following notation convention
\[
	\partial_\mu\equiv\frac{\partial}{\partial x^\mu},\quad
	\partial_i\equiv\frac{\partial}{\partial y^i},\quad
	\partial_i^\mu\equiv\frac{\partial}{\partial z^i_\mu},
\]
we consider some $V\in T_{j^1_x\phi}Z$ locally given by 
$V=V^\mu(j^1_x\phi)\partial_\mu|_{j^1_x\phi}+V^i(j^1_x\phi)
\partial_i|_{j^1_x\phi}+V^i_\mu(j^1_x\phi)\partial_i^\mu|_{j^1_x\phi}$
(in the following we will omit the point $j^1_x\phi$ in the expressions
for the sake of clarity, except in the pushforward maps, where we
will write it).
Is then straightforward to compute
\[
	T_{j^1_x\phi}\pi_{ZY}(V)=V^\mu T_{j^1_x\phi}\pi_{ZY}(\partial_\mu)+
	V^i T_{j^1_x\phi}\pi_{ZY}(\partial_i)+
	V^i_\mu T_{j^1_x\phi}\pi_{ZY}(\partial_i^\mu)
\]
\[
	=V^\mu\partial_\mu+V^i\partial_i,
\]
\[
	T_{j^1_x\phi}\pi_{ZX}(V)=V^\mu T_{j^1_x\phi}\pi_{ZX}(\partial_\mu)+
	V^i T_{j^1_x\phi}\pi_{ZX}(\partial_i)+V^i_\mu T_{j^1_x\phi}\pi_{ZX}
	(\partial^\mu_i)=V^\mu\partial_\mu,
\]
\[
	T_x\phi\circ T_{j^1_x\phi}\pi_{ZX}(V)=V^\mu T_x\phi(\partial_\mu)=
	V^\mu\partial_\mu+V^\mu\partial_\mu\phi^i\partial_i,
\]
and hence 
\[
T_{j^1_x\phi}\pi_{ZY}(V)-T_x\phi\circ T_{j^1_x\phi}\pi_{ZX}(V)=
(V^i-V^\mu\partial_\mu\phi^i)\partial_i.
\]
which is clearly $\pi-$vertical.
\end{proof}

\chapter{Some manipulations of the extremality condition}
\label{app:some_manipulations_of_the_extremality_condition}
As already mentioned in the main text, we can rewrite the extremality
condition as
\[
	\frac{dS(\phi_t)}{dt}\bigg|_{t=0}=\int_C(j^1\phi)^*
	\pounds_{\xi_Y^{(1)}}(\mathcal{L})\overset{!}{=}0,
\]
from which
\begin{equation}
	\label{eq:first_expression}
	\begin{aligned}
		\int_C(j^1\phi)^*\pounds_{\xi_Y^{(1)}}(\mathcal{L})&=
		\int_C(j^1\phi)^*\pounds_{\xi_Y^{(1)}}(L)\eta+
		\int_C(j^1\phi)^*L\pounds_{\xi_Y^{(1)}}(\eta) \\
		&=\int_C(j^1\phi)^*\iota_{\xi_Y^{(1)}}(dL)\eta+
		\int_C(j^1\phi)^*L\pounds_{\xi_Y^{(1)}}(\eta).
	\end{aligned}	
\end{equation}
In components, we have that $dL=\frac{\partial L}{\partial x^\mu}
dx^\mu+\frac{\partial L}{\partial y^i}dy^i+\frac{\partial L}
{\partial z^i_\mu}dz^i_\mu$ while $\xi_Y^{(1)}=\xi_Y^\mu(x)
\partial_\mu+\xi_Y^i(x,y)\partial_i+\left(\frac{d\xi_Y^i(x,y)}
{dx^\mu}-z^i_\nu \frac{d\xi_Y^\nu(x)}
{dx^\mu}\right)\partial^i_\mu$, and so
\[
	\iota_{\xi_Y^{(1)}}(dL)=\frac{\partial L}{\partial x^\mu}
	\xi_Y^\mu(x)+\frac{\partial L}{\partial y^i}\xi_Y^i(x,y)+
	\frac{\partial L}{\partial z^i_\mu}\left(\frac{d\xi_Y^i(x,y)}
	{dx^\mu}-z^i_\nu \frac{d\xi_Y^\nu(x)}{dx^\mu}\right)
\]
while
\[
	\pounds_{\xi_Y^{(1)}}(\eta)=\iota_{\xi_Y^{(1)}}
	(\cancelto{0}{d\eta})+d(\iota_{\xi_Y^{(1)}}\eta)=
	d(\xi_Y^\mu(x)d^n x_\mu)=\frac{\partial\xi_Y^\mu(x)}
	{\partial x^\mu}\eta=\frac{d\xi_Y^\mu(x)}
	{d x^\mu}\eta.
\]
Using these two last results back into \eqref{eq:first_expression}
we have
\[
	\int_C(j^1\phi)^*\pounds_{\xi_Y^{(1)}}(\mathcal{L})=
	\int_C(j^1\phi)^*\frac{\partial L}{\partial x^\mu}
	\xi_Y^\mu(x)\eta+\int_C(j^1\phi)^*\frac{\partial L}
	{\partial y^i}\xi_Y^i(x,y)\eta
\]
\[
	+\bigg[\int_C(j^1\phi)^*\frac{\partial L}{\partial z^i_\mu}
	\left(\frac{d\xi_Y^i(x,y)}{dx^\mu}-z^i_\nu\frac{d\xi_Y^\nu(x)}
	{dx^\mu}\right)\eta\bigg]+\int_C(j^1\phi)^*L\frac{d\xi_Y^\mu(x)}
	{d x^\mu}\eta.
\]
We now try to group total derivative terms $(d/dx^\mu)$ by adding 
the corresponding total derivative of $z^i_\mu$. Note that these
terms -since we are total deriving $z^i_\mu$- will contain
coordinates of the second jet bundle, and so we have to pullback
them with $(j^2\phi)^*$ instead of just $(j^1\phi)^*$
\[
	\int_C(j^1\phi)^*\pounds_{\xi_Y^{(1)}}(\mathcal{L})=
	\int_C(j^1\phi)^*\frac{\partial L}{\partial x^\mu}
	\xi_Y^\mu(x)\eta+\int_C(j^1\phi)^*\frac{\partial L}
	{\partial y^i}\xi_Y^i(x,y)\eta
\]
\[
	+\bigg[\int_C(j^2\phi)^*
	\frac{d}{dx^\mu}\left(\xi_Y^i(x,y)-z^i_\nu\xi_Y^\nu(x)\right)
	\frac{\partial L}{\partial z^i_\mu}\eta
	+\int_C(j^2\phi)^*\frac{dz^i_\nu}{dx^\mu}\xi_Y^\nu(x)
	\frac{\partial L}{\partial z^i_\mu}\eta
	\bigg]
\]
\[
	+\int_C(j^1\phi)^*L\frac{d\xi_Y^\mu(x)}
	{d x^\mu}\eta.
\]
Let us now use Stokes' theorem in this last integral 
term to decompose it as a boundary integral plus another integral.
For this, we use the following trick
\[
	L\frac{d\xi_Y^\mu(x)}{dx^\mu}=\frac{d(L\xi_Y^\mu(x))}{dx^\mu}-
	\xi_Y^\mu(x)\frac{dL}{dx^\mu}
\]
obtaining
\[
	\int_C(j^1\phi)^*L\frac{d\xi_Y^\mu(x)}
	{d x^\mu}\eta=
	\int_C(j^2\phi)^*\frac{d(L\xi_Y^\mu(x))}
	{dx^\mu}\eta-\int_C(j^2\phi)^*\xi_Y^\mu(x)
	\frac{dL}
	{dx^\mu}\eta
\]
\[
	=\int_C\frac{\partial((j^2\phi)^*L\xi_Y^\mu(x))}
	{\partial x^\mu}\eta-\int_C(j^2\phi)^*\xi_Y^\mu(x)
	\frac{dL}
	{dx^\mu}\eta
\]
\[
	=\int_C d\left((j^1\phi)^*(L\xi_Y^\mu(x)
	d^nx_\mu)\right)-\int_C(j^2\phi)^*\xi_Y^\mu(x)
	\frac{dL}
	{dx^\mu}\eta
\]
\[
	=\int_{\partial C} (j^1\phi)^*L\xi_Y^\mu(x)
	d^nx_\mu-\int_C(j^2\phi)^*\xi_Y^\mu(x)
	\frac{dL}
	{dx^\mu}\eta,
\]
where we used Stokes' theorem in the last equality.
Expanding now the total derivative of the Lagrangian
\[
	\int_C(j^1\phi)^*L\frac{d\xi_Y^\mu(x)}
	{d x^\mu}\eta=\int_{\partial C} (j^1\phi)^*L\xi_Y^\mu(x)d^nx_\mu
	-\int_C(j^1\phi)^*\xi_Y^\mu(x)\frac{\partial L}{\partial x^\mu}\eta
	\]
\[
	-\int_C(j^1\phi)^*\xi_Y^\mu(x)z^i_\mu\frac{\partial L}{\partial y^i}\eta
	-\int_C(j^2\phi)^*\xi_Y^\mu(x)z^i_{\mu\nu}\frac{\partial L}
	{\partial z^i_\nu}\eta.
\]
Lastly we note that as jet coordinates are independent from one another
we have that $z^i_{\mu\nu}=z^i_{\nu\mu}=\frac{dz^i_\mu}{dx^\nu}$, hence
writing back this integral term in the extremality condition we have
\[
	\int_C(j^1\phi)^*\pounds_{\xi_Y^{(1)}}(\mathcal{L})=
	\cancel{\int_C(j^1\phi)^*\frac{\partial L}{\partial x^\mu}
	\xi_Y^\mu(x)\eta}+\int_C(j^1\phi)^*\frac{\partial L}
	{\partial y^i}\xi_Y^i(x,y)\eta
\]
\[
	+\bigg[\int_C(j^2\phi)^*
	\frac{d}{dx^\mu}\left(\xi_Y^i(x,y)-z^i_\nu\xi_Y^\nu(x)\right)
	\frac{\partial L}{\partial z^i_\mu}\eta
	+\bcancel{\int_C(j^2\phi)^*\frac{dz^i_\nu}{dx^\mu}\xi_Y^\nu(x)
	\frac{\partial L}{\partial z^i_\mu}\eta}
	\bigg]
\]
\[
	+\int_{\partial C} (j^1\phi)^*L\xi_Y^\mu(x)d^nx_\mu
	-\cancel{\int_C(j^1\phi)^*\xi_Y^\mu(x)\frac{\partial L}{\partial x^\mu}\eta}
	-\int_C(j^1\phi)^*\xi_Y^\mu(x)z^i_\mu\frac{\partial L}{\partial y^i}\eta
\]
\[
	\bcancel{-\int_C(j^2\phi)^*\frac{dz^i_\mu}{dx^\nu}\xi_Y^\mu(x)
	\frac{\partial L}{\partial z^i_\nu}\eta}.	
\]

Regrouping terms we have
\[
	\int_C(j^1\phi)^*\pounds_{\xi_Y^{(1)}}(\mathcal{L})=
	\int_C(j^1\phi)^*\frac{\partial L}
	{\partial y^i}(\xi_Y^i(x,y)-z^i_\mu\xi_Y^\mu(x))\eta
\]
\[
	+\bigg[\int_C(j^2\phi)^*
	\frac{d}{dx^\mu}\left(\xi_Y^i(x,y)-z^i_\nu\xi_Y^\nu(x)\right)
	\frac{\partial L}{\partial z^i_\mu}\eta\bigg]
	+\int_{\partial C} (j^1\phi)^*L\xi_Y^\mu(x)d^nx_\mu.
\]
Once again we use Stokes' theorem now over the big bracket $[...]$
term using the following trick
\[
	\frac{d}{dx^\mu}\left(\xi_Y^i(x,y)-z^i_\nu\xi_Y^\nu(x)\right)
	\frac{\partial L}{\partial z^i_\mu}=
\]
\[
	\frac{d}{dx^\mu}\left([\xi_Y^i(x,y)-z^i_\nu\xi_Y^\nu(x)]
	\frac{\partial L}{\partial z^i_\mu}\right)
	-[\xi_Y^i(x,y)-z^i_\nu\xi_Y^\nu(x)]\frac{d}{dx^\mu}\left(
	\frac{\partial L}{\partial z^i_\mu}\right)
\]
obtaining
\[
	\bigg[\int_C(j^2\phi)^*
	\frac{d}{dx^\mu}\left(\xi_Y^i(x,y)-z^i_\nu\xi_Y^\nu(x)\right)
	\frac{\partial L}{\partial z^i_\mu}\eta\bigg]=
\]
\[
	\int_C(j^2\phi)^*
	\frac{d}{dx^\mu}\left([\xi_Y^i(x,y)-z^i_\nu\xi_Y^\nu(x)]
	\frac{\partial L}{\partial z^i_\mu}\right)\eta
	-\int_C(j^2\phi)^*
	[\xi_Y^i(x,y)-z^i_\nu\xi_Y^\nu(x)]\frac{d}{dx^\mu}\left(
	\frac{\partial L}{\partial z^i_\mu}\right)\eta=
\]
\[
	\int_C\frac{\partial}{\partial x^\mu}
	\left((j^2\phi)^*[\xi_Y^i(x,y)-z^i_\nu\xi_Y^\nu(x)]
	\frac{\partial L}{\partial z^i_\mu}\right)\eta
	-\int_C(j^2\phi)^*
	[\xi_Y^i(x,y)-z^i_\nu\xi_Y^\nu(x)]\frac{d}{dx^\mu}\left(
	\frac{\partial L}{\partial z^i_\mu}\right)\eta=
\]
\[
	\int_Cd\left((j^1\phi)^*[\xi_Y^i(x,y)-z^i_\nu\xi_Y^\nu(x)]
	\frac{\partial L}{\partial z^i_\mu}d^nx_\mu\right)
	-\int_C(j^2\phi)^*
	[\xi_Y^i(x,y)-z^i_\nu\xi_Y^\nu(x)]\frac{d}{dx^\mu}\left(
	\frac{\partial L}{\partial z^i_\mu}\right)\eta
\]
\[
	=\int_{\partial C}(j^1\phi)^*[\xi_Y^i(x,y)-z^i_\nu\xi_Y^\nu(x)]
	\frac{\partial L}{\partial z^i_\mu}d^nx_\mu
	-\int_C(j^2\phi)^*
	[\xi_Y^i(x,y)-z^i_\nu\xi_Y^\nu(x)]\frac{d}{dx^\mu}\left(
	\frac{\partial L}{\partial z^i_\mu}\right)\eta
\]
(Stokes' theorem was applied one again in the last equality)
which back into the extremality condition gives
\[
	\int_C(j^1\phi)^*\pounds_{\xi_Y^{(1)}}(\mathcal{L})=
	\int_C(j^1\phi)^*\frac{\partial L}
	{\partial y^i}(\xi_Y^i(x,y)-z^i_\mu\xi_Y^\mu(x))\eta
	+\int_{\partial C} (j^1\phi)^*L\xi_Y^\mu(x)d^nx_\mu
\]
\[
	+\bigg[
	\int_{\partial C}(j^1\phi)^*[\xi_Y^i(x,y)-z^i_\nu\xi_Y^\nu(x)]
	\frac{\partial L}{\partial z^i_\mu}d^nx_\mu
	-\int_C(j^2\phi)^*
	[\xi_Y^i(x,y)-z^i_\nu\xi_Y^\nu(x)]\frac{d}{dx^\mu}\left(
	\frac{\partial L}{\partial z^i_\mu}\right)\eta
	\bigg].
\]
Grouping integrals on $C$ and boundary integrals we finally obtain
\begin{equation}
\begin{aligned}
	\label{eq:first_expression_extremality}	
	\int_C(j^1\phi)^*\pounds_{\xi_Y^{(1)}}(\mathcal{L})=
	\int_C(j^2\phi)^*\left[\frac{\partial L}
	{\partial y^i}-\frac{d}{dx^\mu}
	\frac{\partial L}{\partial z^i_\mu}\right]
	(\xi_Y^i(x,y)-z^i_\nu\xi_Y^\nu(x))\eta \\
	+\int_{\partial C}(j^1\phi)^*\left[ [(\xi_Y^i(x,y)-z^i_\nu\xi_Y^\nu(x)]
	\frac{\partial L}{\partial z^i_\mu}+L\xi_Y^\mu(x)\right]d^nx_\mu.
\end{aligned}
\end{equation}
Once obtained this form, what is left to do its to prove that
the boundary integral is equal to
\[
	\int_{\partial C}(j^1\phi)^*(\iota_{\xi_Y^{(1)}}\Theta_L)
\]
for $\Theta_L$ the Poincaré-Cartan
$(n+1)-$form of Definition \ref{def:poincare-cartan-forms} with
coordinate expression
\[
\Theta_L
=
\left(
L - z^i_\mu \frac{\partial L}{\partial z^i_\mu}
\right)d^{n+1}x
+
\frac{\partial L}{\partial z^i_\mu}\,dy^i \wedge d^n x_\mu.
\]
As $dy^i=\theta^i+z^i_\nu dx^\nu$, by definition of the contact forms,
we can substitute this expression back into $\Theta_L$ coordinate
form obtaining
\begin{align*}
\Theta_L
&=
\left(
L - z^i_\mu \frac{\partial L}{\partial z^i_\mu}
\right)d^{n+1}x
+
\frac{\partial L}{\partial z^i_\mu}\,\theta^i\wedge d^n x_\mu
+
\frac{\partial L}{\partial z^i_\mu}z^i_\nu\,dx^\nu\wedge d^n x_\mu \\
&=
\left(
L\cancel{- z^i_\mu \frac{\partial L}{\partial z^i_\mu}}
\cancel{+z^i_\mu \frac{\partial L}{\partial z^i_\mu}}
\right)d^{n+1}x
+
\frac{\partial L}{\partial z^i_\mu}\,\theta^i\wedge d^n x_\mu \\
&=
L\,d^{n+1}x
+
\frac{\partial L}{\partial z^i_\mu}\,\theta^i\wedge d^n x_\mu,
\end{align*}
as we already had in the main text. Contracting with $\xi_Y^{(1)}$
we then have
\begin{align*}
\iota_{\xi_Y^{(1)}}\Theta_L
&=
\iota_{\xi_Y^{(1)}}\!\left(L\,d^{n+1}x\right)
+
\iota_{\xi_Y^{(1)}}\!\left(\frac{\partial L}{\partial z^\mu_i}\,\theta^i\wedge d^n x_\mu\right) \\
&=
L\,\iota_{\xi_Y^{(1)}}(d^{n+1}x)
+
\frac{\partial L}{\partial z^\mu_i}\,
\iota_{\xi_Y^{(1)}}\!\left(\theta^i\wedge d^n x_\mu\right)
=L\xi_Y^\mu d^nx_\mu+
\frac{\partial L}{\partial z^\mu_i}\,
\left(\xi_Y^i-z^i_\nu\xi_Y^\nu\right)d^n x_\mu\\
&-\xi_Y^\nu\frac{\partial L}{\partial z^\mu_i}\,
\theta^i\wedge d^{n-1} x_{\mu\nu}.
\end{align*}
Is then clear that, as the last term is a contact form, when pulling
back with $(j^1\phi)^*$ we obtain
\[
	(j^1\phi)^*\iota_{\xi_Y^{(1)}}\Theta_L=(j^1\phi)^*\left[(\xi_Y^i-z^i_\nu\xi_Y^\nu)
	\frac{\partial L}{\partial z^i_\mu}+L\xi_Y^\mu\right]d^nx_\mu
\]
and the result follows upon integration on $\partial C$ on both sides.

\chapter{Proof of Proposition \ref{prop:holonomicity_iff_cartan}}
\label{app:proof_of_holonomicity_iff_cartan}
Although this proof itself is not central to the scope of
this master's thesis, we will do the full demonstration in
order to correctly internalize the concepts
of holonomicity and the Cartan distribution and to refresh
their skills in local computations of the pushforward map
and the interior product.	

\begin{proof}
($\Rightarrow$) By hypothesis $\sigma$ is holonomic so by definition
$\exists\phi\in\Gamma(\pi)$ such that
$\sigma=j^1\phi$ and so in local coordinates for a fixed chart we
have $\sigma(x^\mu)=j^1\phi(x^\mu)= (x^\mu,\phi^i(x),
\partial_\mu\phi^i(x))$.
Let $x\in X$ be an arbitrary but fixed point and
$w\in T_xX$ an arbitrary but fixed vector above that point with
coordinates $w=w^\mu(x)\partial_\mu|_x$ for a fixed chart. We then
have
\[
	T_x\sigma(w)=T_xj^1\phi(w)=w^\mu(x)T_xj^1\phi(\partial_\mu|_x)=
\]	
\[
	=w^\mu(x)\left[\frac{\partial x^\nu\circ j^1\phi(x)}{\partial
	x^\mu}\frac{\partial}{\partial x^\nu}\bigg|_{\sigma(x)}+\frac
	{\partial y^i\circ j^1\phi(x)}{\partial x^\mu}\frac{\partial}
	{\partial y^i}\bigg|_{\sigma(x)}+\frac{\partial z^i_\nu\circ
	j^1\phi(x)}{\partial x^\mu}\frac{\partial}{\partial z^i_\nu}
	\bigg|_{\sigma(x)}\right]=
\]
\[
	=w^\mu(x)\left[\frac{\partial}{\partial x^\mu}\bigg|_{\sigma(x)}
	+\partial_\mu\phi^i(x)\frac{\partial}
	{\partial y^i}\bigg|_{\sigma(x)}+\partial_\mu\partial_\nu\phi^i(x)
	\frac{\partial}{\partial z^i_\nu}\bigg|_{\sigma(x)}\right]\equiv v.
\]
In general $v\in T_{\sigma(x)}Z$, but we want to show now that
specifically $v\in\mathcal{D_{C}}(\sigma(x))\subseteq T_{\sigma(x)}Z$,
or equivalently that $\iota_v[\theta^i(\sigma(x))]=0$ $\forall i\in
\{1,\dots,m\}$ as the contact 1-forms with local coordinates
$\theta^i=dy^i-z^i_\mu dx^\mu$ generate the ideal $\mathcal{C}$.
Let $i\in\{1,\dots,m\}$ be arbitrary but fixed, then
\[
	\iota_v[\theta^i(\sigma(x))]=\iota_v[dy^i|_{\sigma(x)}-
	\partial_\mu\phi^i(x)dx^\mu|_{\sigma(x)}]=\iota_v
	[dy^i|_{\sigma(x)}]-\partial_\mu\phi^i(x)\iota_v
	[dx^\mu|_{\sigma(x)}]=
\]
\[
	=w^\mu(x)\partial_\mu\phi^i(x)-\partial_\mu\phi^i(x)w^\mu(x)=0,
\]
hence $v\in\mathcal{D_{C}}(\sigma(x))$ and so $T_x\sigma(T_xX)
\subseteq T_{\sigma(x)}Z$.

($\Leftarrow$) Now, by hypothesis $T_x\sigma(T_xX)\subseteq
T_{\sigma(x)}Z\,\forall x\in X$ and so if we take an arbitrary but
fixed $x\in X$ with any given chart such that $\sigma$ takes the
local coordinates $\sigma(x^\mu)=(x^\mu,\sigma^i(x^\mu),
\sigma^i_\nu(x^\mu))$ and an arbitrary but fixed non zero $w=w^\mu(x)
\partial_\mu|_x\in T_xX$ (trivial if $w=0$) then  
\[
	T_x\sigma(w)=w^\mu(x)T_x\sigma(\partial_\mu|_x)=
\]	
\[
	=w^\mu(x)\left[\frac{\partial x^\nu\circ \sigma(x)}{\partial
	x^\mu}\frac{\partial}{\partial x^\nu}\bigg|_{\sigma(x)}+\frac
	{\partial y^i\circ \sigma(x)}{\partial x^\mu}\frac{\partial}
	{\partial y^i}\bigg|_{\sigma(x)}+\frac{\partial z^i_\nu\circ
	\sigma(x)}{\partial x^\mu}\frac{\partial}{\partial z^i_\nu}
	\bigg|_{\sigma(x)}\right]=
\]
\[
	=w^\mu(x)\left[\frac{\partial}{\partial x^\mu}\bigg|_{\sigma(x)}
	+\partial_\mu\sigma^i(x)\frac{\partial}
	{\partial y^i}\bigg|_{\sigma(x)}+\partial_\mu\sigma_\nu^i(x)
	\frac{\partial}{\partial z^i_\nu}\bigg|_{\sigma(x)}\right]\equiv v
	\in\mathcal{D_C}(\sigma(x)),
\]
and so, $\forall i\in\{1,\dots,m\}$, $\iota_v[\theta^i(\sigma(x))]=0$.
But then for an arbitrary but fixed $i\in\{1,\dots,m\}$ we have
\[
\iota_v[\theta^i(\sigma(x))]=\partial_\mu\sigma^i(x)w^\mu(x)-w^\mu(x)
\sigma^i_\mu(x)=0\overset{w\neq 0}{\iff}\sigma^i_\mu(x)=\partial_\mu
\sigma^i(x),\forall\mu\in\{0,\dots,n\}.
\]
As $x$ and $i$ were arbitrary, then $\sigma$ is by definition holonomic
and this ends the proof. 
\end{proof}
\chapter{Proof of Proposition \ref{prop:relation_Hamiltonian_forms_and_functions}}
\label{app:proof_of_proposition_ref_prop_relation_hamiltonian_forms_and_functions}
\begin{proof}
	$d\tilde{\alpha}$ is a $1-$form on $\tZ$ and as such is totally determined by
	its act over $\mathfrak{X}(\tZ)$ at each single point $\gamma\in\tZ$.
	Let $\tilde{\xi}\in\mathfrak{X}(\tZ)$ and $\gamma\in\tZ$ be fixed but
	arbitrary vector field and point on $\tZ$ respectively, 
	then if $c_{\tZ}(t)$ is any curve on $\tZ$ such that $c_{\tZ}(0)=\gamma$ and
	$\dot{c}_{\tZ}(0)=\tilde{\xi}(\gamma)$, noting that as $\tilde{\alpha}$ is a
	$0-$form $d\tilde{\alpha}(\tilde{\xi})=\pounds_{\tilde{\xi}}\tilde{\alpha}=
	\tilde{\xi}(\tilde{\alpha})$ by Cartan's magic formula we have that
	\[
		(d\tilde{\alpha})(\tilde{\xi})|_\gamma=\tilde{\xi}(\gamma)(\tilde{\alpha})=
		\frac{d}{dt}\left(\tilde{\alpha}\circ c_{\tZ}(t)\right)\bigg|_{t=0}
		\overset{\ref{def:integration_upon_Cauchy_of_forms}}{=}\frac{d}{dt}\left[
		\int_M c_{\tZ}(t)^*\alpha -\int_{\partial M}c_{\tZ}(t)^*\beta\right]\bigg|_{t=0}
	\]
	\[
		=\int_M\frac{d}{dt}\left(c_{\tZ}(t)^*\alpha\right)\big|_{t=0}
		-\int_{\partial M}\frac{d}{dt}\left(c_{\tZ}(t)^*\beta\right)\big|_{t=0},
	\]
	where when invoking Definition \ref{def:integration_upon_Cauchy_of_forms}
	we have to note that $c_{\tZ}(t)\in\tZ$ $\forall t\in\RR$ and as so $\tilde{\alpha}
	\circ c_{\tZ}(t)=\tilde{\alpha}|_{c_{\tZ}(t)}$.
	
	Now, if $\xi\in\mathfrak{X}(Z)$ is a vector field that induces $\tilde{\xi}$
	(i.e. we can define $\tilde{\xi}(\gamma)(u)=:\xi(\gamma(u))$) with flow $\Phi_t$,
	then for a fixed $t\in\RR$ $c_{\tZ}(t)(u)=(\Phi_t\circ\gamma)(u)$ $\forall u\in M$
	and so
	\[
		\frac{d}{dt}\left(c_{\tZ}(t)^*\alpha\right)\big|_{t=0}=
		\frac{d}{dt}\left(\gamma^*\circ\Phi_t^*\alpha\right)\big|_{t=0}=
		\gamma^*\frac{d}{dt}\left(\Phi_t^*\alpha\right)\big|_{t=0}=
		\gamma^*\pounds_{\xi}\alpha
	\] 
	(the same happens for the $\beta$ integral). Using this fact we finally have
	\[
		(d\tilde{\alpha})(\tilde{\xi})|_\gamma=\int_M\gamma^*(\pounds_{\xi}\alpha)-
		\int_{\partial M}\gamma^*(\pounds_{\xi}\beta)i=\int_M\gamma^*(\iota_{\xi}d\alpha)
	\]
	\[
		+\int_M\gamma^*(d\iota_{\xi}\alpha)
		-\int_{\partial M}\gamma^*(\iota_{\xi}d\beta)-
		\overset{\text{(Stokes' theorem)}}
		{\cancel{\int_{\partial M}\gamma^*(d\iota_{\xi}\beta)}}
		\overset{\text{(Stokes' theorem)}}{=}\int_M\gamma^*(\iota_{\xi}d\alpha)
	\]
	\[
		+
		\cancel{\int_{\partial M}\gamma^*(\iota_{\xi}\alpha)}
		-\cancel{\int_{\partial M}\gamma^*(\iota_{\xi}d\beta)}=\int_M\gamma^*\iota_\xi
		\iota_{X_\alpha}\Omega_L
		=\iota_{\tilde{X}_{\tilde{\alpha}}}\widetilde{\Omega_L}(\tilde{\xi})|_\gamma,
	\]
	and hence, by the argumentation of the beginning of the proof we have that 
	$d\tilde{\alpha}=\iota_{\tilde{X}_{\tilde{\alpha}}}\widetilde{\Omega_L}$,
	which finishes the proof.
\end{proof}

\chapter{Some computations related to symmetries}
\label{app:some_symmetries_computations}

Let us first prove that $(j^1\phi_t)(\partial X)\subseteq B$ is satisfied
in the case of $\xi_Y^{(1)}$ being tangent to $B$. The main result used to
prove this is the fact that the flow of $\xi_Y^{(1)}$, $\Phi_t^Z$ satisfies
(cf. \cite{saundersGeometryJetBundles1989})
\[
	\Phi_t^Z(j_x^1\phi)=j^1_{\Phi_t^X(x)}(\Phi_t^Y\circ\phi\circ\Phi_{-t}^X).
\]

Then, as $\xi_Y$ is projectable,
$\Phi_t^X(\partial X)\subseteq\partial X$, and the equality is obtained
by noting that $\Phi_t^X$ is a diffeomorphism using the definition of a flow.
For a fixed but arbitrary $x\in\partial X$, we consider $x^\prime:=
(\Phi_t^X)^{-1}(x)\in\partial X$ and so
\[
	(j^1\phi_t)(x)=(j^1\phi_t)(\Phi_t^X(x^\prime))=(j^1_{\Phi_t^X(x^\prime)})
	(\Phi_t^Y\circ\phi\circ\Phi_{-t}^X)=\Phi_t^Z(j^1_{x^\prime}\phi).
\]
As by definition of an admissible solution $(j^1\phi)(\partial X)\subseteq B$
and $x^\prime\in\partial X$, we have that $j_{x^\prime}\phi\in B$ and then
the result follows from imposing $\xi_Y^{(1)}$ being tangent to $B$, the
result we were chasing (note that if a vector field is tangent to a subset, then its flow send
points of the subset to points of the subset).

Let us now prove that $\iota_{\xi_Y^{(1)}}\Theta_L$ is exact on $B$ if
$\pounds_{\xi_Y^{(1)}|_B}\Pi=0$. We recall the definition
of $\Pi$ as a $n-$form satisfying $i_B^*\Theta_L=d\Pi$, then, as $\xi_Y^{(1)}$
is tangent to $B$ by the previous hypothesis, we have
\[
	i_B^*(\iota_{\xi_Y^{(1)}}\Theta_L)=\iota_{\xi_Y^{(1)}|_B}(i_B^*\Theta_L)=
	\iota_{\xi_Y^{(1)}|_B}(d\Pi)=\pounds_{\xi_Y^{(1)}|_B}(\Pi)-
	d(\iota_{\xi_Y^{(1)}|_B}\Pi),
\]
and the exactness on $B$ is then fulfilled if $\pounds_{\xi_Y^{(1)}|_B}(\Pi)=0$.

In the following we write the proof of Proposition \ref{prop:useful_proposition_for_checking_symmetries}
\begin{proof}
\begin{itemize}
		\item Case of $\bm{\xi_Y^{(1)}}$ \textbf{being projectable}

		If $\bm{\xi_Y^{(1)}}$ is projectable then the third and fourth term of
		\eqref{eq:lengthy_computation} vanish as $\xi_Y^\mu$ does not depend on $y$
		while the first and the second terms vanish as $F=0$ so $\pounds_{\xi^{(1)}_{Y}}\Theta_L=0$;
		isolating $\xi_Y^{(1)}(L)$ in \eqref{eq:F_local_expression} gets us
		\[
			\xi_Y^{(1)}(L)=-\frac{\partial\xi_Y^\mu(x)}{\partial x^\mu}L=
			-\frac{d\xi_Y^\mu(x)}{dx^\mu}L.
		\]
		\item Case of $n=0$

		The big bracket of third term in Equation \eqref{eq:lengthy_computation} is literally
		zero as $\mu$ and $\nu$ index must be both equal (to $0$); forth term contains
		$d^{n-1}x_{\nu\mu}$ which vanish for $n=0$ while as before
		the first and the second terms vanish as $F=0$. As $\xi^\mu_Y$ does not depend on $z$
		isolating $\xi_Y^{(1)}(L)$ in \eqref{eq:F_local_expression} get us
		\eqref{eq:useful_tool_for_checking_symmetries}.
\end{itemize}	
\end{proof}

\chapter{Some proofs on gauge theory}
\label{app:some_proofs_on_gauge_theory}
Let us proof Equation \eqref{eq:adjoint_bundle_is_vertical_equivariant_vector_fields_on_P}
\begin{proof}
As we have previously denoted, given $\xi\in\mathfrak{g}=T_e G$, we have that $\xi_P(p)$ is
defined by the derivative of the curve $c(t)=p\exp(t\xi)$ on $t=0$, then
\begin{align*}
	(R_g)_*(\xi_P(p))&=\der{}{t}\Big|_{t=0}(R_g\circ c)(t)=\der{}{t}\Big|_{t=0}c(t)g=
	\der{}{t}\Big|_{t=0}p\exp(t\xi)g\\
	&=\der{}{t}\Big|_{t=0}(pg)(g^{-1}\exp(t\xi)g).
\end{align*}
Now, consider the curve $\tilde{\gamma}(t):=g^{-1}\exp(t\xi)g$; this curve is equal to $C_g(\exp(t\xi))$, and so
$\tilde{\gamma}(0)=e$ and $\dot{\tilde{\gamma}}(0)=\der{}{t}\big|_{t=0}C_g(\exp(t\xi))
=(dC_g)_e(\xi)=Ad_g(\xi)$. Using the uniqueness of the exponential curve, then $\tilde{\gamma}(t)
=\exp(t\Ad_g\xi)$. 
As this is valid for every $g\in G$, is also valid for $g^{-1}$ and
so $g^{-1}\exp(t\xi)g=\exp(t\Ad_{g^{-1}}\xi)$.

Finally
\begin{align}
	\label{eq:proof_adjoint_bundle_is_vertical_proof}
	(R_g)_*(\xi_P(p))&=\der{}{t}\Big|_{t=0}(pg)(g^{-1}\exp(t\xi)g)
	=\der{}{t}\Big|_{t=0}(pg)\exp(t\Ad_{g^{-1}}\xi)\\
	&=(\Ad_{g^{-1}}\xi)_P(pg).
\end{align}
\end{proof}
Let us now prove that $\mathfrak{gau}(P)$ vectors are vertical and $G-$invariant
\begin{proof}
	Consider a $1-$parameter family of diffeomorphisms $\Phi_t$ such that $\Phi_0=
	\mathrm{id}_P$. For each $p\in P$, consider the curve $\Phi_t(p)$ on $P$ and
	define the vector field $\xi\in\mathfrak{X}(P)$ given by $\xi(p):=\der{}{t}
	\big|_{t=0}\Phi_t(p)$. Then $\xi$ satisfies
	\begin{enumerate}
	\item $\bm{\xi}$ \textbf{is vertical}, i.e. $\xi\in\mathcal{V}\pi_P$

	Using that $\Phi_t$ covers the identity on $M$, we have $\pi_P\circ\Phi_t=\pi_P$.
	Differentiating at $t=0$ we have that,
	\[
	(\pi_{P})_{*,p}\,\xi(p)=
	(\pi_{P})_{*,p}\left(\der{}{t}\Big|_{t=0}\Phi_t(p)\right)
	=
	\der{}{t}\Big|_{t=0}\big(\pi_P(\Phi_t(p))\big)
	=
	\der{}{t}\Big|_{t=0}\big(\pi_P(p)\big)
	=0.
	\]
	Hence $\xi(p)\in\ker(\pi_{P})_{*,p}=\mathcal{V}_p\pi_P$ and so $\xi$
	is vertical by definition.
	\item $\bm{\xi}$ \textbf{is} $\bm{G-}$\textbf{invariant}, i.e. $(R_g)_*\xi=\xi$

	From $G$-equivariance of $\Phi_t$, $\Phi_t(pg)=\Phi_t(p)g$,
	then we have:
	\begin{align*}
		(R_g)_{*,p}\,\xi(p)&=\der{}{t}\Big|_{t=0} (R_g\circ \Phi_t)(p)
		=\der{}{t}\Big|_{t=0} \big(\Phi_t(p)\cdot g\big)\\
		&=\der{}{t}\Big|_{t=0} \Phi_t(p\cdot g)=\xi(p\cdot g).
	\end{align*}
	Therefore $(R_g)_*\xi = \xi$ and so $\xi$ is $G$-invariant by definition.
	\end{enumerate}
\end{proof}

Now, we proof that if $\Phi\in\Gau(P)$, then $\Phi\cdot\omega$ is a PFB connection for
every $\omega$ PFB connection, i.e., $\Phi^*\omega$ is $\Ad-$equivariant an
reproduce fundamental fields
\begin{proof}
\begin{enumerate}
	\item $\bm{\mathrm{Ad}}$\textbf{-equivariance}

		A principal connection must satisfy $R_g^* \omega = \text{Ad}_{g^{-1}} \circ \omega$.
		For the pullbacked form $\Phi^*\omega$, we compute:
		\[
		R_g^* (\Phi^* \omega) = (\Phi \circ R_g)^* \omega,
		\]
		and since $\Phi$ is $G$-equivariant, ($\Phi\circ R_g=R_g\circ\Phi$), so we have:
		\[
		(\Phi\circ R_g)^* \omega=(R_g\circ\Phi)^*\omega=\Phi^*(R_g^*\omega).
		\]
		Using now the equivariance of the original connection $\omega$:
		\[
		\Phi^*(R_g^*\omega)=\Phi^*(\text{Ad}_{g^{-1}}\circ\omega)=\text{Ad}_{g^{-1}}\circ
		(\Phi^*\omega)
		\]
		where the last equality follows from the fact that the linear operator
		$\text{Ad}_{g^{-1}}$ commutes with the pullback. Thus, $\Phi^*\omega$ is equivariant.
	\item \textbf{Reproducing the fundamental fields}

		Let $u\in P$ be a fixed element in $P$ but arbitrary.
		A connection must satisfy $\omega(A_P(u)) = A$ for any $A \in \mathfrak{g}$, where
		$A_P$ is the fundamental vector field. By definition of the pullback:
		\[
		(\Phi^*\omega)_u(A_P(u))=\omega_{\Phi(u)} (d\Phi)_u(A_P(u)))
		\]
		Because $\Phi$ is a vertical, $G$-equivariant map
		$\Phi(u\cdot\exp(tA))=\Phi(u)\cdot\exp(tA)$. Taking the derivative
		at $t=0$:
		\[
		d(\Phi)_u(A_P(u))=\frac{d}{dt}\Big|_{t=0}\Phi(u \cdot \exp(tA))=
		\frac{d}{dt}\Big|_{t=0} \Phi(u)\cdot\exp(tA)=(A_P)_{\Phi(u)}
		\]
		Substituting this into the expression for $\Phi^*\omega$:
		\[
		\omega_{\Phi(u)} (d(\Phi)_u(A_P(u)))=\omega_{\Phi(u)}
		((A_P)_{\Phi(u)})=A
		\]
		The last step follows because $\omega$ is a connection. Therefore, $\Phi^*\omega$
		reproduce the fundamental fields.
\end{enumerate}
Since $\Phi^* \omega$ satisfies both conditions, it is a PFB connection.
\end{proof}
Let us now prove Proposition \ref{prop:pullback_omega_gauge}
\begin{proof}
Write $\mu\colon P\times G\to P$ for the right action on $P$, so 
$\Phi_{\tau}(p)=\mu(p,\tau(p))=(\mu\circ(\mathrm{id}_{P},\tau))(p)$.
For $v\in T_{p}P$ the Leibniz rule for $\mu$ gives
\[
  T_p\Phi_\tau(v)=T_{(p,\tau(p))}\mu(v,T_p\tau (v))=T_p[\mu(\cdot,\tau(p))](v)+
  T_{\tau(p)}[\mu(p,\cdot)](T_p\tau(v)).
\]
The first term is equal to $\bigl(R_{\tau(p)}\bigr)_{*}(v)$, while for the second we
notice that $\mu(p,\cdot)=\mu(pg,\cdot)\circ L_{g^{-1}}$ and hence by chain rule
\[
  T_{\tau(p)}[\mu(p,\cdot)](T_p\tau(v))=T_{\tau(p)}[\mu(pg,\cdot)\circ L_{g^{-1}}]
  (T_p\tau(v)).
\]
In order to make the proof easier, let us denote $g\equiv\tau(p)$ and $w\equiv T_p\tau(v)
\in T_g G$,
then
\[
  T_{g}[\mu(pg,\cdot)\circ L_{g^{-1}}](w)=T_e[\mu(pg,\cdot)]\left(T_g(L_{g^{-1}})(w)\right);
\]
the first summand being the derivative in the $P$ slot and the second the derivative in
the $G$ slot. Now, by definition of the Maurer--Cartan form, $T_g(L_{g^{-1}})(w)=
\theta_G|_g(w)$ while $T_e(\mu(pg,\cdot))(X)=X_P(pg)$ (a fact that is not difficult to prove),
hence 
\[
  T_{p}\Phi_{\tau}(v)
  =\bigl(R_{\tau(p)}\bigr)_{*}(v)
  \;+\;\Bigl(\theta_{G}|_g\bigl(T_{p}\tau(v)\bigr)\Bigr)_{P}\!\bigl(p\,\tau(p)\bigr).
\]
Applying $\omega$ and using the two defining properties of Definition \ref{def:connection},
\[
  \bigl(\Phi_{\tau}^{*}\omega\bigr)_{p}(v)
  =\omega_{p\tau(p)}\bigl((R_{\tau(p)})_{*}v\bigr)+\theta_{G}|_g\bigl(T_{p}\tau(v)\bigr)
  =\Ad_{\tau(p)^{-1}}\bigl(\omega_{p}(v)\bigr)+\bigl(\tau^{*}\theta_{G}\bigr)_{p}(v),
\]
which is \eqref{eq:pullback_omega_gauge}. That $\Phi_{\tau}^{*}\omega$ is a connection
follows from equivariance of $\Phi_{\tau}$: a $G$-equivariant map intertwines
fundamental vector fields, $T\Phi_{\tau}(\xi_{P}(p))=\xi_{P}(\Phi_{\tau}(p))$, so
$(\Phi_{\tau}^{*}\omega)(\xi_{P})=\xi$; and $\Phi_{\tau}\circ R_{g}=R_{g}\circ\Phi_{\tau}$
gives
\[
  R_{g}^{*}\Phi_{\tau}^{*}\omega=(\Phi_{\tau}\circ R_{g})^{*}\omega
  =(R_{g}\circ\Phi_{\tau})^{*}\omega=\Phi_{\tau}^{*}R_{g}^{*}\omega
  =\Ad_{g^{-1}}\!\circ\Phi_{\tau}^{*}\omega .
\]
Equation \eqref{eq:gauge_transf_A_active} follows by pulling
\eqref{eq:pullback_omega_gauge} back along $s_{\alpha}$ and using
$s_{\alpha}^{*}\tau^{*}=(\tau\circ s_{\alpha})^{*}=\tau_{\alpha}^{*}$.
\end{proof}

Let us now prove Proposition \ref{prop:connections_as_sections_explicit}
\begin{proof}
The orbit of $j^{1}_{x}s$ is $\{(pg,(R_{g})_{*}H_{p})\}_{g\in G}$ by
Lemma~\ref{lem:jets_are_complements}; freeness of the action means it hits
$\pi_{1,0}^{-1}(p)$ only for $g=e$, and transitivity on $P_{x}$ means it
reaches every $p'\in P_{x}$. So an orbit is exactly an equivariant
assignment of complements along the fiber. Smoothness of $\phi_{\omega}$
and of its inverse is local and can be proven to follow from
\eqref{eq:horizontal_lift_local}; the same happens for its fiber coordinates
being $A_\mu^a(x)$ which can be reached once stating the identification
$\phi_\omega(x)\longleftrightarrow A_\alpha(x)$ by some linear algebra results.
For \eqref{eq:horizontal_lift_local},
apply $\omega$: by Definition~\ref{def:connection}(i),
\[
  \omega\Bigl(T_{x}s_{\alpha}(v)-\bigl(A_{\alpha}(v)\bigr)_{P}\Bigr)
  =(s_{\alpha}^{*}\omega)(v)-A_{\alpha}(v)=A_{\alpha}(v)-A_{\alpha}(v)=0 ,
\]
so the displayed vector is horizontal, and it projects to $v$ because the
subtracted term is vertical. The last statement is the
Remark following Definition~\ref{def:connection}: the difference of two
connection forms is horizontal and $\Ad$-equivariant, hence descends to a
$1$-form on $X$ with values in $\ad P$.
\end{proof}

We now prove Theorem \ref{thm:gauge_prolongation} in glory detail
\section{Proof of Theorem \ref{thm:gauge_prolongation}}
\label{app:the_big_proof}
\begin{proof}
Each $\Phi_{t}$ lies in $\Gau(P)$, so $(\Phi_{t})_{Y}$ is a one parameter
group of vertical automorphisms of $\pi$ by
Lemma~\ref{lem:PhiC_welldefined} and \eqref{eq:PhiY_def}.

\textbf{i)} Its generator $\xi_{C}$ is therefore a $\pi$-vertical vector field, and
for the local expression, let $\tau_{t}\in C^{\infty}_{\Ad}(P,G)$ be the
equivariant map of $\Phi_{t}$ given by
Proposition~\ref{prop:gauge_group_equivariant_maps}, so $\Phi_{t}(p)=p\,
\tau_{t}(p)$ with $\tau_{0}(p)\equiv e$. Differentiating at $t=0$,
\[
  \xi(p)=\der{}{t}\Big|_{0}\,\Phi_t(p)=\der{}{t}\Big|_{0}\,p\,\tau_{t}(p)
        =\Bigl(\der{}{t}\Big|_{0}\tau_{t}(p)\Bigr)_{\!P}(p)
  \quad\Longrightarrow\quad
  \der{}{t}\Big|_{0}\tau_{t}=\widehat\xi ,
\]
i.e.\ the equivariant map $\widehat\xi$ of
Lemma~\ref{lem:gau_three_descriptions} is exactly the velocity of
$\tau_{t}$. By \eqref{eq:PhiY_def} the curve on $Y=C(P)$
$t\mapsto(\Phi_{t})_{Y}(\phi_{\omega}(x))$ is $t\mapsto
\phi_{\Phi_{t}^{*}\omega}(x)$, whose coordinates in the frame $s_{\alpha}$
are, by Proposition~\ref{prop:pullback_omega_gauge},
\[
  A^{(t)}_{\alpha}:=s_\alpha^*(\Phi_t^*\omega)
  =\Ad_{\tau_{t,\alpha}^{-1}}\!\circ A_{\alpha}+\tau_{t,\alpha}^{*}\theta_{G},
  \qquad \tau_{t,\alpha}:=\tau_{t}\circ s_{\alpha},
  \quad \tau_{0,\alpha}\equiv e,
  \quad \der{}{t}\Big|_{0}\tau_{t,\alpha}=\xi_{\alpha},
\]
where $\xi_\alpha:=\hat{\xi}\circ s_\alpha\equiv\xi^a(x)e_a$. Note that the
dependence of the components $\xi^a$ on $x$ is made explicit; this is done in order
to remark the correspondence of Lemma \ref{lem:gau_three_descriptions} in which
our 'gauge vectors´ are seen as sections of the adjoint bundle and hence, as elements
of an infinite-dimensional Lie algebra ($\mathfrak{gau}(P)$ in particular); when pullbacked
to $X$, the infinite nature of the vector is reflected on this $x$ dependence.

The first term of last equation differentiates to $\ad_{-\xi_{\alpha}}A_{\alpha}
=[A_{\alpha},\xi_{\alpha}]$. For the second, by definition of the
Maurer--Cartan form,
\[
  \bigl(\tau_{t,\alpha}^{*}\theta_{G}\bigr)_{x}(v)
  =\bigl(L_{\tau_{t,\alpha}(x)^{-1}}\bigr)_{*}\,T_{x}\tau_{t,\alpha}(v),
  \qquad v\in T_{x}X;
\]
since $\tau_{0,\alpha}$ is constant we have $T_{x}\tau_{0,\alpha}=0$, so by
the product rule only the derivative of the second factor survives at
$t=0$, giving $T_{x}\xi_{\alpha}(v)=\dd\xi_{\alpha}(v)$. Hence
\begin{equation}\label{eq:infinitesimal_gauge}
  \delta_{\xi}A_{\alpha}
  :=\der{}{t}\Big|_{0}A^{(t)}_{\alpha}
  =\dd\xi_{\alpha}+[A_{\alpha},\xi_{\alpha}]
  =:D_{A_{\alpha}}\xi_{\alpha},
\end{equation}
which in components is (note we are using the coordinate version of
\eqref{eq:D_adjoint_components} as $\xi_\alpha$ comes from $\hat{\xi}$ which is
Ad-equivariant)
\begin{equation}\label{eq:infinitesimal_gauge_components}
  (\delta_{\xi}A)^{a}_{\mu}(x)
  =\partial_{\mu}\xi^{a}(x)+C^{a}_{bc}A^{b}_{\mu}(x)\xi^{c}(x)
  =+\bigl(D_{\mu}\xi\bigr)^{a},
\end{equation}
and this is \eqref{eq:xi_C_local}, since the fiber coordinates of
$\phi_{\omega}$ are the $A^{a}_{\mu}$.

\textbf{ii)} For the kernel, note that \eqref{eq:xi_C_local} is \emph{affine} in the
fiber coordinates: $\xi_{C}$ vanishes identically iff both the constant
term $\partial_{\mu}\xi^{a}$ and the linear term
$C^{a}_{bc}A^{b}_{\mu}\xi^{c}$ vanish for all values of $A^{b}_{\mu}$, i.e.\
iff $\dd\xi_{\alpha}=0$ and $\ad_{\xi_{\alpha}}=0$; that is, iff
$\xi_{\alpha}$ is locally constant (i.e. constant in a neighborhood of every point in
$U_\alpha)$) with values in the centre
$\mathfrak{z}(\mathfrak{g})$ (i.e. vector fields on $\mathfrak{g}$ commuting with
every other Lie algebra vector field). Central elements are $\Ad$-invariant for $G$
connected (cf. \cite{hall_lie_2015}), so by \eqref{eq:xi_transition} such a $\xi$ is a
globally well defined locally constant $\mathfrak{z}(\mathfrak{g})$-valued function
on $X$, and the kernel vanishes when $\mathfrak{z}(\mathfrak{g})=0$. For the bracket, both
$\xi_{C}$ and $\zeta_{C}$ are vertical, so (for holonomic basis)
$[\xi_{C},\zeta_{C}]^a_\mu
=\xi_{C}\bigl(({D_{\mu}\zeta})^{a}\bigr)-\zeta_{C}\bigl((D_{\mu}\xi)^{a}\bigr)$
and, since $\partial(D_{\mu}\zeta)^{a}/\partial A^{b}_{\nu}
=C^{a}_{bc}\zeta^{c}\delta^{\nu}_{\mu}$,
\[
  [\xi_{C},\zeta_{C}]^a_\mu
  =C^{a}_{bc}\bigl(\partial_{\mu}\xi^{b}\zeta^{c}
   -\partial_{\mu}\zeta^{b}\xi^{c}\bigr)
  +C^{a}_{bc}C^{b}_{de}A^{d}_{\mu}\bigl(\xi^{e}\zeta^{c}-\zeta^{e}\xi^{c}\bigr).
\]
The first bracket equals $C^{a}_{bc}\partial_{\mu}(\xi^{b}\zeta^{c})$ by
the skew symmetry $C^{a}_{cb}=-C^{a}_{bc}$, and the second equals
$C^{a}_{db}C^{b}_{pq}A^{d}_{\mu}\xi^{p}\zeta^{q}$ by the Jacobi identity in
the form
$C^{a}_{db}C^{b}_{pq}+C^{a}_{pb}C^{b}_{qd}+C^{a}_{qb}C^{b}_{dp}=0$.
Together these give exactly
$\bigl(D_{\mu}[\xi,\zeta]\bigr)^{a}$ with $[\xi,\zeta]^{a}
=C^{a}_{bc}\xi^{b}\zeta^{c}$, i.e.\ $[\xi_{C},\zeta_{C}]=[\xi,\zeta]_{C}$.
(This is where the sign of Remark~\ref{rem:bracket_sign} is used: had we
taken the vector field bracket on $\gau(P)$, the map would be an
anti-homomorphism.)

\textbf{iii)} Equation \eqref{eq:j1xi_C_local} is the definition of the $1$-jet
prolongation of a projectable vector field (Definition \ref{def:1jet-prolongation}):
the fiber component of
$j^{1}\xi_{C}$ along $\partial/\partial A^{a}_{\mu,\nu}$ is the total
derivative $\dd/\dd x^{\nu}$ of the fiber component of $\xi_{C}$, because
$\xi_{C}$ is vertical (so there is no $-A^{a}_{\mu,\rho}\,\partial_{\nu}
(\text{base component})$ correction). Expanding the total derivative gives
\eqref{eq:j1xi_C_local}.
\end{proof}
\chapter{Auxiliary computations of Chapter \ref{chap:multisymplectic_yang_mills_computations}}
\label{app:auxiliary}
As already mentioned in the main text, on this Appendix we will archive some tedious
computations that perse do not add any value to the thesis. We will adopt the same notations
as the ones declared on Section \ref{sec:initial_clarifications}.

First, let us compute explicitly the multimomenta in terms of the coordinates on $Z$
\section{Multimomenta}
\label{eq:explicit_computation_of_multimomentum}
	\begin{align*}
	p_{a}^{\mu\nu}
	&:=
	\frac{\partial L_{YM}}{\partial A^{a}_{\mu\nu}}
	=
	\frac{\partial}{\partial A^{a}_{\mu\nu}}
	\left[
	-\frac{1}{4}\,\kappa_{bc}\,
	F^{b}_{\rho\sigma}F^{c}_{\tau\gamma}\,
	g^{\rho\tau}g^{\sigma\gamma}\sqrt{|g|}
	\right]\\
	&=
	-\frac{1}{4}\,\kappa_{bc}\,
	\frac{\partial F^{b}_{\rho\sigma}}
	{\partial A^{a}_{\mu\nu}}\,
	F^{c}_{\tau\gamma}\,
	g^{\rho\tau}g^{\sigma\gamma}\sqrt{|g|}
	-
	\frac{1}{4}\,\kappa_{bc}\,
	F^{b}_{\rho\sigma}\,
	\frac{\partial F^{c}_{\tau\gamma}}
	{\partial A^{a}_{\mu\nu}}\,
	g^{\rho\tau}g^{\sigma\gamma}\sqrt{|g|}\\
	&=
	-\frac{1}{4}\,\kappa_{bc}\,
	\left(
	\delta^{b}_{a}\delta^{\mu}_{\sigma}\delta^{\nu}_{\rho}
	-
	\delta^{b}_{a}\delta^{\mu}_{\rho}\delta^{\nu}_{\sigma}
	\right)
	F^{c}_{\tau\gamma}\,
	g^{\rho\tau}g^{\sigma\gamma}\sqrt{|g|}\\
	&\qquad
	-\frac{1}{4}\,\kappa_{bc}\,
	F^{b}_{\rho\sigma}\,
	\left(
	\delta^{c}_{a}\delta^{\mu}_{\gamma}\delta^{\nu}_{\tau}
	-
	\delta^{c}_{a}\delta^{\mu}_{\tau}\delta^{\nu}_{\gamma}
	\right)
	g^{\rho\tau}g^{\sigma\gamma}\sqrt{|g|}\\
	&=
	-\frac{1}{4}\,\kappa_{ac}\,
	\delta^{\mu}_{\sigma}\delta^{\nu}_{\rho}\,
	F^{c}_{\tau\gamma}\,
	g^{\rho\tau}g^{\sigma\gamma}\sqrt{|g|}
	+
	\frac{1}{4}\,\kappa_{ac}\,
	\delta^{\mu}_{\rho}\delta^{\nu}_{\sigma}\,
	F^{c}_{\tau\gamma}\,
	g^{\rho\tau}g^{\sigma\gamma}\sqrt{|g|}\\
	&\qquad
	-\frac{1}{4}\,\kappa_{ba}\,
	F^{b}_{\rho\sigma}\,
	\delta^{\mu}_{\gamma}\delta^{\nu}_{\tau}\,
	g^{\rho\tau}g^{\sigma\gamma}\sqrt{|g|}
	+
	\frac{1}{4}\,\kappa_{ba}\,
	F^{b}_{\rho\sigma}\,
	\delta^{\mu}_{\tau}\delta^{\nu}_{\gamma}\,
	g^{\rho\tau}g^{\sigma\gamma}\sqrt{|g|}\\
	&=
	-\frac{1}{4}\,\kappa_{ac}\,
	F^{c}_{\tau\gamma}\,
	g^{\nu\tau}g^{\mu\gamma}\sqrt{|g|}
	+
	\frac{1}{4}\,\kappa_{ac}\,
	F^{c}_{\tau\gamma}\,
	g^{\mu\tau}g^{\nu\gamma}\sqrt{|g|}\\
	&\qquad
	-\frac{1}{4}\,\kappa_{ba}\,
	F^{b}_{\rho\sigma}\,
	g^{\rho\nu}g^{\sigma\mu}\sqrt{|g|}
	+
	\frac{1}{4}\,\kappa_{ba}\,
	F^{b}_{\rho\sigma}\,
	g^{\rho\mu}g^{\sigma\nu}\sqrt{|g|}\\
	&=
	-\frac{1}{4}\,\kappa_{ac}\,
	F^{c\,\nu\mu}\sqrt{|g|}
	+
	\frac{1}{4}\,\kappa_{ac}\,
	F^{c\,\mu\nu}\sqrt{|g|}
	-
	\frac{1}{4}\,\kappa_{ab}\,
	F^{b\,\nu\mu}\sqrt{|g|}
	+
	\frac{1}{4}\,\kappa_{ab}\,
	F^{b\,\mu\nu}\sqrt{|g|}\\
	&\equiv
	\kappa_{ab}\,F^{b\,\mu\nu}\sqrt{|g|}.
	\end{align*}
In this Section we will also compute $dp_a^{\mu\nu}$; first
\begin{align*}
\dd F_{\rho\sigma}^{b}
&=
\dd A_{\sigma\rho}^{b}
-
\dd A_{\rho\sigma}^{b}
+
C^{b}_{cd}
\left(
(\dd A_{\rho}^{c})A_{\sigma}^{d}
+
A_{\rho}^{c}(\dd A_{\sigma}^{d})
\right)\\
&=
\dd A_{\sigma\rho}^{b}
-
\dd A_{\rho\sigma}^{b}
+
C^{b}_{cd}
\left(
(\dd A_{\rho}^{c})A_{\sigma}^{d}
-
A_{\rho}^{d}(\dd A_{\sigma}^{c})
\right)
\end{align*}
hence from the multimomentum expression we have
\begin{align*}
\dd p_{a}^{\mu\nu}
&=
\sqrt{|g|}\,\kappa_{ab}\,g^{\mu\rho}g^{\nu\sigma}
\left[
\dd A_{\sigma\rho}^{b}
-
\dd A_{\rho\sigma}^{b}
+
C^{b}_{cd}
\left(
(\dd A_{\rho}^{c})A_{\sigma}^{d}
-
A_{\rho}^{d}(\dd A_{\sigma}^{c})
\right)
\right]
\\
&\qquad
+
\kappa_{ab}F^b_{\rho\sigma}\,\partial_\alpha(g^{\mu\rho}g^{\nu\sigma}\sqrt{|g|})\dd x^\alpha\\
&=\kappa_{ab}F^b_{\rho\sigma}\,\partial_\alpha(g^{\mu\rho}g^{\nu\sigma}\sqrt{|g|})\dd x^\alpha+
C^b_{cd}\frac{\partial^{2}L_{YM}}{\partial A^{a}_{\mu\nu}\,\partial
A^{b}_{\rho\sigma}}
A^c_\sigma\dd A^d_\rho\\
&+\frac{\partial^{2}L_{YM}}{\partial A^{a}_{\mu\nu}\,\partial A^{b}_{\rho\sigma}}
\dd A^b_{\rho\sigma},
\end{align*}
where in the last equality we used Section \ref{eq:explicit_hessian}.

We can also compute the partial derivative of the Lagrangian w.r.t. to $A^a_\mu$ $Z$-coordinate
\section{Partial derivative of the Lagrangian w.r.t. the field components}
\label{eq:pd_of_L_wrt_field_components}
\begin{align*}
\frac{\partial L_{YM}}{\partial A^{a}_{\mu}}
&=
\frac{\partial}{\partial A^{a}_{\mu}}
\left[
-\frac{1}{4}\kappa_{bc}F^{b}_{\rho\sigma}F^{c}_{\tau\epsilon}
g^{\rho\tau}g^{\sigma\epsilon}\sqrt{|g|}
\right]
\\
&=
-\frac{1}{4}\kappa_{bc}
\frac{\partial F^{b}_{\rho\sigma}}{\partial A^{a}_{\mu}}
F^{c}_{\tau\epsilon}
g^{\rho\tau}g^{\sigma\epsilon}\sqrt{|g|}
-\frac{1}{4}\kappa_{bc}F^{b}_{\rho\sigma}
\frac{\partial F^{c}_{\tau\epsilon}}{\partial A^{a}_{\mu}}
g^{\rho\tau}g^{\sigma\epsilon}\sqrt{|g|}
\\
&=
-\frac{1}{4}\kappa_{bc}
\left(
C^{b}_{ad}\delta^{\mu}_{\rho}A^{d}_{\sigma}
+
C^{b}_{da}\delta^{\mu}_{\sigma}A^{d}_{\rho}
\right)
F^{c}_{\tau\epsilon}
g^{\rho\tau}g^{\sigma\epsilon}\sqrt{|g|}
\\
&\qquad
-\frac{1}{4}\kappa_{bc}F^{b}_{\rho\sigma}
\left(
C^{c}_{ad}\delta^{\mu}_{\tau}A^{d}_{\epsilon}
+
C^{c}_{da}\delta^{\mu}_{\epsilon}A^{d}_{\tau}
\right)
g^{\rho\tau}g^{\sigma\epsilon}\sqrt{|g|}
\\
&=
-\frac{1}{4}\kappa_{bc}C^{b}_{ad}
\delta^{\mu}_{\rho}A^{d}_{\sigma}
F^{c}_{\tau\epsilon}
g^{\rho\tau}g^{\sigma\epsilon}\sqrt{|g|}
-\frac{1}{4}\kappa_{bc}C^{b}_{da}
\delta^{\mu}_{\sigma}A^{d}_{\rho}
F^{c}_{\tau\epsilon}
g^{\rho\tau}g^{\sigma\epsilon}\sqrt{|g|}
\\
&\qquad
-\frac{1}{4}\kappa_{bc}F^{b}_{\rho\sigma}
C^{c}_{ad}\delta^{\mu}_{\tau}A^{d}_{\epsilon}
g^{\rho\tau}g^{\sigma\epsilon}\sqrt{|g|}
-\frac{1}{4}\kappa_{bc}F^{b}_{\rho\sigma}
C^{c}_{da}\delta^{\mu}_{\epsilon}A^{d}_{\tau}
g^{\rho\tau}g^{\sigma\epsilon}\sqrt{|g|}
\\
&=
-\frac{1}{4}\kappa_{bc}C^{b}_{ad}
A^{d}_{\sigma}F^{c}_{\tau\epsilon}
g^{\mu\tau}g^{\sigma\epsilon}\sqrt{|g|}
-\frac{1}{4}\kappa_{bc}C^{b}_{da}
A^{d}_{\rho}F^{c}_{\tau\epsilon}
g^{\rho\tau}g^{\mu\epsilon}\sqrt{|g|}
\\
&\qquad
-\frac{1}{4}\kappa_{bc}C^{c}_{ad}
F^{b}_{\rho\sigma}A^{d}_{\epsilon}
g^{\rho\mu}g^{\sigma\epsilon}\sqrt{|g|}
-\frac{1}{4}\kappa_{bc}C^{c}_{da}
F^{b}_{\rho\sigma}A^{d}_{\tau}
g^{\rho\tau}g^{\sigma\mu}\sqrt{|g|}
\\
&\equiv
-\frac{1}{4}\kappa_{bc}C^{b}_{ad}
A^{d}_{\nu}F^{c\,\mu\nu}\sqrt{|g|}
-\frac{1}{4}\kappa_{bc}C^{b}_{da}
A^{d}_{\nu}F^{c\,\nu\mu}\sqrt{|g|}
\\
&\qquad
-\frac{1}{4}\kappa_{bc}C^{c}_{ad}
F^{b\,\mu\nu}A^{d}_{\nu}\sqrt{|g|}
-\frac{1}{4}\kappa_{bc}C^{c}_{da}
F^{b\,\nu\mu}A^{d}_{\nu}\sqrt{|g|}
\\
&=
-\frac{1}{2}\kappa_{bc}C^{b}_{ad}
A^{d}_{\nu}F^{c\,\mu\nu}\sqrt{|g|}
-\frac{1}{2}\kappa_{bc}C^{c}_{ad}
F^{b\,\mu\nu}A^{d}_{\nu}\sqrt{|g|}
\\
&\equiv
-\sqrt{|g|}\,\kappa_{cd}C^{c}_{ab}A^{b}_{\nu}F^{d\,\mu\nu}
=
-\cancel{\sqrt{|g|}}\,C^{c}_{ab}A^{b}_{\nu}
\frac{p_{c}^{\mu\nu}}{\cancel{\sqrt{|g|}}}
=
-C^{c}_{ab}A^{b}_{\nu}p_{c}^{\mu\nu}.
\end{align*}
\section{Hessian}
\label{eq:explicit_hessian}
Let us now compute the Hessian of our Lagrangian as a function in $Z$
\begin{align*}
\frac{\partial^{2}L_{YM}}
{\partial A^{a}_{\mu\nu}\,\partial A^{b}_{\eta\theta}}
&\equiv W_{ab}^{\mu\nu,\eta\theta}=
\frac{\partial}{\partial A^{b}_{\eta\theta}}
\left(
p_a^{\mu\nu}
\right)
=
\frac{\partial}{\partial A^{b}_{\eta\theta}}
\left[
\sqrt{|g|}\,\kappa_{ac}\,
F^{c\,\mu\nu}
\right]
=
\sqrt{|g|}\,\kappa_{ac}\,
\frac{\partial F^{c\,\mu\nu}}
{\partial A^{b}_{\eta\theta}}
\\
&=
\sqrt{|g|}\,\kappa_{ac}\,
g^{\mu\sigma}g^{\nu\tau}
\frac{\partial F^{c}_{\sigma\tau}}
{\partial A^{b}_{\eta\theta}}
=
\sqrt{|g|}\,\kappa_{ac}\,
g^{\mu\sigma}g^{\nu\tau}
\left(
\delta^{c}_{b}\delta^{\eta}_{\tau}\delta^{\theta}_{\sigma}
-
\delta^{c}_{b}\delta^{\eta}_{\sigma}\delta^{\theta}_{\tau}
\right)
\\
&=
\sqrt{|g|}\,\kappa_{ab}\,
g^{\mu\sigma}g^{\nu\tau}
\left(
\delta^{\eta}_{\tau}\delta^{\theta}_{\sigma}
-
\delta^{\eta}_{\sigma}\delta^{\theta}_{\tau}
\right)
=
\sqrt{|g|}\,\kappa_{ab}
\left(
g^{\mu\theta}g^{\nu\eta}
-
g^{\mu\eta}g^{\nu\theta}
\right),
\end{align*}
which is clearly $(\eta,\theta)$ and $(\mu,\nu)$-antisymmetric.
\section{Some horizontal projector computations}
Recalling Equation \eqref{eq:interior_product_of_h} for the definition of the interior
product $\iota_\hh$ (not a real interior product as it does not lower the degree of forms)
let us prove the following proposition
\begin{proposition}
	Let $\Gamma$ be any connection on $\pi_{ZX}$ let $f\in C^\infty(Z)$ be any
	smooth function on $Z$, then the following expression holds valid locally
	\begin{equation}
		\label{eq:ihf}
		\iota_\hh\dd f(z)=\left(\partial_\mu f(z)+(\Gamma_\mu)^a_\rho(z)\parder{f(z)}{A^a_\rho}
		+(\Gamma_\mu)^a_{\rho\sigma}(z)\parder{f(z)}{A^a_{\rho\sigma}}\right)dx^\mu|_z
	\end{equation}
\end{proposition}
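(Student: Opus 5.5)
Since $\dd f$ is a $1$-form, the definition \eqref{eq:interior_product_of_h} of $\iota_\hh$ reduces to a single term: for any $V\in\mathfrak{X}(Z)$ we have $(\iota_\hh\dd f)(V)=\dd f(\hh(V))$. In other words, $\iota_\hh\dd f=\hh^*\dd f$ is just the composition of $\dd f$ with the horizontal projector. The plan is to evaluate this $1$-form on the local coordinate frame $\{\partial_{x^\mu},\partial_{A^a_\rho},\partial_{A^a_{\rho\sigma}}\}$ of $Z$ at a point $z$ and read off its components.

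The steps are as follows.

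First, on the vertical frame vectors, the defining properties of $\hh$ give $\hh_z(\partial_{A^a_\rho}|_z)=0$ and $\hh_z(\partial_{A^a_{\rho\sigma}}|_z)=0$. Hence $\iota_\hh\dd f$ vanishes on all $\pi_{ZX}$-vertical vectors, so it is semibasic and of the form $c_\mu(z)\,\dd x^\mu|_z$.

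Second, on the horizontal frame vectors we use the local expression
\[
\hh_z(\partial_{x^\mu}|_z)=\partial_{x^\mu}|_z+(\Gamma_\mu)^a_\rho(z)\,\partial_{A^a_\rho}|_z+(\Gamma_\mu)^a_{\rho\sigma}(z)\,\partial_{A^a_{\rho\sigma}}|_z .
\]
Applying $\dd f$ and using linearity gives
\[
c_\mu=\partial_\mu f+(\Gamma_\mu)^a_\rho\,\parder{f}{A^a_\rho}+(\Gamma_\mu)^a_{\rho\sigma}\,\parder{f}{A^a_{\rho\sigma}},
\]
and since $\dd x^\mu(\partial_{x^\nu})=\delta^\mu_\nu$, this is exactly \eqref{eq:ihf}.

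The only point needing care is notational rather than substantive. One must check that the general definition \eqref{eq:interior_product_of_h} with $p=1$ really produces the single term $\dd f\circ\hh$, with no spurious factor. One must also note that $\partial_\mu f$ denotes the partial derivative with respect to $x^\mu$ on $Z$, not the total derivative. Once these are settled, the identity holds pointwise in any fibered chart, which is why it is stated locally.
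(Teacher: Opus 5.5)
Your proof is correct and follows the paper's argument. The paper evaluates $\iota_\hh\dd f$ on a generic $V=V^\mu\partial_{x^\mu}+V^a_\mu\partial_{A^a_\mu}+V^a_{\mu\nu}\partial_{A^a_{\mu\nu}}$, uses $\hh(V)=V^\mu\,\hh(\partial_{x^\mu})$, and reads off the coefficient of $\dd x^\mu$ from $\dd f(\hh(V))=\hh(V)(f)$; your frame-by-frame evaluation is the same computation, and your two points of care (the $p=1$ case of \eqref{eq:interior_product_of_h} giving the single term $\dd f\circ\hh$, and $\partial_\mu f$ being the partial rather than total derivative) are the right ones.
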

\begin{proof}
Let $V\in\mathfrak{X}(Z)$ be any vector field on $Z$ and let us define its components
through its local expression
\[
		V(z)=V^\mu(z)\parder{}{x^\mu}\bigg|_z+V^a_\mu(z)\parder{}{A^a_\mu}\bigg|_z+
		V^a_{\mu\nu}(z)\parder{}{A^a_{\mu\nu}}\bigg|_z.
\]
We know also that
\[
	\hh(V)(z)=V^\mu(z)\left(\parder{}{x^\mu}\bigg|_z+(\Gamma_\mu)^a_\rho(z)
	\parder{}{A^a_\mu}\bigg|_z+(\Gamma_\rho)^a_{\rho\sigma}(z)
	\parder{}{A^a_{\rho\sigma}}\bigg|_z\right),
\]
and so we compute
\begin{align*}
	(\iota_\hh\dd f)(V)&=\dd f\left(\hh(V)\right)=[h(V)](f)\\
	&=V^\mu\left(\parder{f}{x^\mu}+(\Gamma_\mu)^a_\rho
	\parder{f}{A^a_\mu}+(\Gamma_\rho)^a_{\rho\sigma}
	\parder{f}{A^a_{\rho\sigma}}\right)\\
	&=\iota_V\left[
	\left(
	\parder{f}{x^\mu}+(\Gamma_\mu)^a_\rho
	\parder{f}{A^a_\mu}+(\Gamma_\rho)^a_{\rho\sigma}
	\parder{f}{A^a_{\rho\sigma}}
	\right)dx^\mu
	\right],
\end{align*}
from which we read $\iota_\hh\dd f$ expression.
\end{proof}
\begin{corollary}
\label{cor:some_iotah_computations}
	In particular, we have the following expressions
	\begin{itemize}
		\item $\iota_\hh(\dd x^\mu)=\dd x^\mu,\quad$
		$\iota_\hh(\dd A^a_\mu)=(\Gamma_\alpha)^a_\mu dx^\alpha,\quad$
		$\iota_\hh(\dd A^a_{\mu\nu})=(\Gamma_\alpha)^a_{\mu\nu} dx^\alpha$.
		\item
		\[
			\iota_\hh(\dd p_a^{\mu\nu})=\left(\partial_\alpha p_a^{\mu\nu}+
			(\Gamma_\alpha)^a_\rho\parder{p_a^{\mu\nu}}{A^a_\rho}
			+(\Gamma_\alpha)^a_{\rho\sigma}\parder{p_a^{\mu\nu}}{A^a_{\rho\sigma}}\right)
			dx^\alpha.
		\]
		while noting that $\iota_\hh(\alpha\wedge\beta)=\iota_\hh(\alpha)\wedge\beta+
		\alpha\wedge\iota_\hh(\beta)$ for any $\alpha,\beta\in\Omega(Z)$, we also have
		that $\iota_\hh\eta=(n+1)\eta$.
	\end{itemize}
\end{corollary}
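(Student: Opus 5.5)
The plan is to read off every item of Corollary~\ref{cor:some_iotah_computations} from the preceding proposition, i.e.\ from formula \eqref{eq:ihf}, together with one structural property of $\iota_\hh$ that I would derive directly from its definition \eqref{eq:interior_product_of_h}. No new computation beyond that is needed.

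For the first item I would apply \eqref{eq:ihf} to the coordinate functions $f=x^\mu$, $f=A^a_\mu$ and $f=A^a_{\mu\nu}$. For $f=x^\mu$, only $\partial_\alpha x^\mu=\delta^\mu_\alpha$ survives, which gives $\iota_\hh\dd x^\mu=\dd x^\mu$. For $f=A^a_\mu$, the base derivative vanishes and $\partial A^a_\mu/\partial A^b_\rho=\delta^a_b\delta^\rho_\mu$, which leaves $(\Gamma_\alpha)^a_\mu\dd x^\alpha$. The case $f=A^a_{\mu\nu}$ is identical and leaves $(\Gamma_\alpha)^a_{\mu\nu}\dd x^\alpha$. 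The second item is \eqref{eq:ihf} applied verbatim to $f=p_a^{\mu\nu}\in C^\infty(Z)$, with the free index renamed to $\alpha$.

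For the derivation rule, I would first note that $\iota_\hh$, as defined in \eqref{eq:interior_product_of_h}, is a derivation of degree $0$. This follows by evaluating $(\alpha\wedge\beta)(V_1,\dots,V_{p+q})$ through the shuffle expansion of the wedge product and inserting $\hh$ in one slot at a time. The terms in which $\hh$ lands in one of the first $p$ arguments of a shuffle reassemble into $(\iota_\hh\alpha)\wedge\beta$. The remaining terms reassemble into $\alpha\wedge\iota_\hh\beta$. Since $\hh$ does not change the degree, no Koszul signs appear. Alternatively, it suffices to check the rule on decomposable $1$-forms $\alpha^1\wedge\dots\wedge\alpha^p$, where it reads $\iota_\hh(\alpha^1\wedge\dots\wedge\alpha^p)=\sum_i\alpha^1\wedge\dots\wedge\hh^*\alpha^i\wedge\dots\wedge\alpha^p$, and this is immediate from the determinant formula for the wedge of $1$-forms.

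Finally, writing $\eta=\dd x^0\wedge\dots\wedge\dd x^n$ and iterating the derivation rule gives $n+1$ terms. In each of them one factor $\dd x^\mu$ is replaced by $\iota_\hh\dd x^\mu=\dd x^\mu$, so each term equals $\eta$ and $\iota_\hh\eta=(n+1)\eta$. The only step requiring care is the derivation property, specifically the shuffle bookkeeping and its normalization conventions. I would handle it through the decomposable $1$-form check, which avoids combinatorial factors; everything else is direct substitution into \eqref{eq:ihf}.
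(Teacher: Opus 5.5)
Your proposal is correct and follows the same route as the paper. The paper obtains every item by substituting $x^\mu$, $A^a_\mu$, $A^a_{\mu\nu}$ and $p_a^{\mu\nu}$ into \eqref{eq:ihf}, and gets $\iota_\hh\eta=(n+1)\eta$ from the degree-zero derivation rule applied to $\eta=\dd x^0\wedge\dots\wedge\dd x^n$. You additionally justify that derivation rule, which the paper only asserts; and when substituting $p_a^{\mu\nu}$, rename the summed algebra index in \eqref{eq:ihf} to a dummy such as $b$, since the repeated $a$ in the printed statement is a typo.
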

\section{Useful identities for gauge symmetries computations}
\label{sub:useful_identities_for_gauge_symmetries_computations}
We begin computing the contractions $\iota_{j^1\xi_C}(\dd F^a_{\mu\nu})$ and
$\iota_{j^1\xi_C}(\dd p_a^{\mu\nu})$, which equal $j^1\xi_C(p_a^{\mu\nu})$ and
$j^1\xi_C(p_a^{\mu\nu})$ respectively.

For the field strength, using the local expression \eqref{eq:j1xi_C_local} of the jet
prolongation of our gauge symmetries together with the partial derivatives of the $F$ tensor
computed in Section \ref{sec:initial_clarifications}, we have
\begin{align*}
	j^1\xi_C(F^a_{\mu\nu})&=\left[\partial_\rho\xi^d(x)+C^d_{ef}A^e_\rho\xi^f(x)\right]
	\parder{F^a_{\mu\nu}}{A^d_\rho}\\
	&+\left[\partial_\sigma\partial_\rho\xi^d(x)+C^d_{ef}\left(A^e_{\rho\sigma}\xi^f(x)
	+A^e_\rho\partial_\sigma\xi^f(x)\right)\right]\parder{F^a_{\mu\nu}}{A^d_{\rho\sigma}}\\
	&=\left[\partial_\rho\xi^d(x)+C^d_{ef}A^e_\rho\xi^f(x)\right]
	C^a_{dg}\left(\delta^\rho_\mu A^g_\nu-\delta^\rho_\nu A^g_\mu\right)\\
	&+\left[\partial_\sigma\partial_\rho\xi^d(x)+C^d_{ef}\left(A^e_{\rho\sigma}\xi^f(x)
	+A^e_\rho\partial_\sigma\xi^f(x)\right)\right]
	\left(\delta^a_d\delta^\rho_\nu\delta^\sigma_\mu-\delta^a_d\delta^\rho_\mu
	\delta^\sigma_\nu\right)\\
	&=C^a_{dg}\Bigl(\partial_\mu\xi^d(x)A^g_\nu-\partial_\nu\xi^d(x)A^g_\mu
	+C^d_{ef}A^e_\mu A^g_\nu\xi^f(x)
	-C^d_{ef}A^e_\nu A^g_\mu\xi^f(x)\Bigr)\\
	&+\cancel{\left(\partial_\mu\partial_\nu\xi^a(x)
	-\partial_\nu\partial_\mu\xi^a(x)\right)}\\
	&+C^a_{ef}\Bigl(A^e_{\nu\mu}\xi^f(x)-A^e_{\mu\nu}\xi^f(x)
	+A^e_\nu\partial_\mu\xi^f(x)-A^e_\mu\partial_\nu\xi^f(x)\Bigr).
\end{align*}
Relabeling now the dummy indices $d\to f$, $g\to e$ on the two terms of the first parenthesis
containing $\partial\xi$, and using the antisymmetry $C^a_{fe}=-C^a_{ef}$ on the last two
terms of the last parenthesis, all of them cancel pairwise
\[
	\cancel{C^a_{fe}\,\partial_\mu\xi^f(x)A^e_\nu}
	-\bcancel{C^a_{fe}\,\partial_\nu\xi^f(x)A^e_\mu}
	-\cancel{C^a_{fe}A^e_\nu\partial_\mu\xi^f(x)}
	+\bcancel{C^a_{fe}A^e_\mu\partial_\nu\xi^f(x)}=0,
\]
and so we are left with
\begin{align*}
	j^1\xi_C(F^a_{\mu\nu})&=C^a_{dg}C^d_{ef}\left(A^e_\mu A^g_\nu-A^e_\nu A^g_\mu\right)\xi^f(x)
	+C^a_{ef}\left(A^e_{\nu\mu}-A^e_{\mu\nu}\right)\xi^f(x).
\end{align*}
In order to manipulate this expression, we will use the Jacobi identity for the
structure constants in the form
\[
	-C^a_{gd}C^d_{ef}=C^a_{ed}C^d_{fg}+C^a_{fd}C^d_{ge},
\]
and the definition of the field strength to write
\[
	A^e_{\nu\mu}-A^e_{\mu\nu}=F^e_{\mu\nu}-C^e_{bc}A^b_\mu A^c_\nu,
\]
obtaining (after some dummy index relabelling)
\begin{align*}
	j^1\xi_C(F^a_{\mu\nu})&\equiv C^a_{ef}F^e_{\mu\nu}\xi^f(x)
	+\Bigl(-C^e_{bc}C^a_{ef}A^b_\mu A^c_\nu
	+C^a_{dg}C^d_{ef}A^e_\mu A^g_\nu
	-C^a_{dg}C^d_{ef}A^e_\nu A^g_\mu\Bigr)\xi^f(x)\\
	&\equiv C^a_{ef}F^e_{\mu\nu}\xi^f(x)
	+\Bigl(-C^d_{eg}C^a_{df}
	+C^a_{dg}C^d_{ef}
	-C^a_{de}C^d_{gf}\Bigr)A^e_\mu A^g_\nu\,\xi^f(x),
\end{align*}
and after some index manipulation, it can been seen that the last big parenthesis equals
to zero by Jacobi identity on $\mathfrak{g}$.
Hence we obtain
\begin{equation}
	\label{eq:j1xi_C_of_F}
	\boxed{
	j^1\xi_C(F^a_{\mu\nu})=C^a_{bc}F^b_{\mu\nu}\xi^c(x)
	}.
\end{equation}
For the multimomenta $p_a^{\mu\nu}$ we use the explicit expression obtained in Section
\ref{eq:explicit_computation_of_multimomentum}, namely
$p_a^{\mu\nu}=\sqrt{|g|}\kappa_{ab}g^{\mu\rho}g^{\nu\sigma}F^b_{\rho\sigma}$, so that
\begin{align*}
	j^1\xi_C(p_a^{\mu\nu})&=\sqrt{|g|}\,\kappa_{ab}\,
	g^{\mu\rho}g^{\nu\sigma}\,j^1\xi_C(F^b_{\rho\sigma})\\
	&=\sqrt{|g|}\,\kappa_{ab}\,g^{\mu\rho}g^{\nu\sigma}\,C^b_{ef}F^e_{\rho\sigma}\xi^f(x),
\end{align*}
where in the first equality we used that neither $\kappa_{ab}$ nor $g^{\mu\nu}$ depend on
$A^a_\mu$ nor on $A^a_{\mu\nu}$ and that $j^1\xi_C$ is vertical.
Inverting now the multimomenta expression,
\[
	\frac{1}{\sqrt{|g|}}\kappa^{ca}g_{\alpha\mu}g_{\beta\nu}p_a^{\mu\nu}=F^c_{\alpha\beta}
	\implies F^e_{\rho\sigma}=\frac{1}{\sqrt{|g|}}\kappa^{e\theta}g_{\rho\alpha}g_{\sigma\beta}
	p_\theta^{\alpha\beta},
\]
we obtain
\begin{align*}
	j^1\xi_C(p_a^{\mu\nu})&=\frac{\cancel{\sqrt{|g|}}}{\cancel{\sqrt{|g|}}}\,
	\kappa_{ab}\,\kappa^{e\theta}\,
	\overbrace{g^{\mu\rho}g_{\rho\alpha}}^{=\delta^\mu_\alpha}
	\overbrace{g^{\nu\sigma}g_{\sigma\beta}}^{=\delta^\nu_\beta}\,
	C^b_{ef}\,p_\theta^{\alpha\beta}\,\xi^f(x)\\
	&=\kappa_{ab}\kappa^{e\theta}C^b_{ef}\,p_\theta^{\mu\nu}\xi^f(x)
	=\kappa^{e\theta}C_{aef}\,p_\theta^{\mu\nu}\xi^f(x)\\
	&\overset{(\dagger)}{=}-\kappa^{e\theta}C_{eaf}\,p_\theta^{\mu\nu}\xi^f(x)
	=-C^\theta_{af}\,\xi^f(x) p_\theta^{\mu\nu},
\end{align*}
where in $(\dagger)$ we have used that $C_{abc}=\kappa_{ad}C^d_{bc}$ is totally
antisymmetric, and hence
\begin{equation}
	\label{eq:j1xi_C_of_p}
	\boxed{
	j^1\xi_C(p_a^{\mu\nu})=-C^b_{ad}\,p_b^{\mu\nu}\,\xi^d(x)
	}.
\end{equation}
\subsection{Proof of superpotential definition}
\label{sub:proof_of_superpotential_definition}
We will now derive explicitly Equation \eqref{eq:superpotential},
in which the step of identifying the superpotential was omitted.
We begin proving the following identity
\begin{equation}
	\label{eq:crucial_identity_superpotential}
	\dd x^\alpha\wedge\dd^{n-1}x_{\mu\nu}=\delta^\alpha_\nu\dd^nx_\mu-
	\delta^\alpha_\mu\dd^nx_\nu.
\end{equation}
From our definitions of the forms $\dd^nx_\mu$ and $\dd^{n-1}x_{\mu\nu}$
at the beginning of Section \ref{sect:lagrangian_setting}, we know that
$\dd x^\alpha\wedge\dd^{n}x_\mu=\delta^\alpha_\mu\dd^{n+1}x$ and so
$\iota_{\partial_\nu}(\dd x^\alpha\wedge\dd^{n}x_\mu)=
\delta^\alpha_\mu\dd^nx_\nu$. If we instead use the derivation rule
for the interior product we have
\[
	\delta^\alpha_\mu\dd^nx_\nu=
	\iota_{\partial_\nu}(\dd x^\alpha\wedge\dd^{n}x_\mu)=
	\delta^\alpha_\nu\dd^nx_\mu-\dd x^\alpha\wedge\dd^{n-1}
	x_{\mu\nu}
\]
and solving for $\dd x^\alpha\wedge\dd^{n-1}x_{\mu\nu}$
yields Equation \eqref{eq:crucial_identity_superpotential}.

Now, consider any antisymmetric tensor with values
in $X$, namely, $T^{\mu\nu}(x)=T^{[\mu\nu]}(x)$
(in particular, we can choose $T^{\mu\nu}(x)
\equiv p_a^{\mu\nu}(x)\xi^a(x)$ since $p_a^{\mu\nu}(x)$
is antisymmetric in $(\mu,\nu)$), then
\begin{align*}
	d\left(T^{\mu\nu}(x)\dd^{n-1}x_{\mu\nu}\right)
	&=\partial_\alpha T^{\mu\nu}(x)\dd x^\alpha\wedge
	\dd^{n-1}x_{\mu\nu}
	\overset{\eqref{eq:superpotential}}{=}
	\partial_\alpha T^{\mu\nu}(x)
	\big[\delta^\alpha_\nu\dd^nx_\mu
	-\delta^\alpha_\mu\dd^nx_\nu\big]\\
	&=\partial_\nu T^{\mu\nu}(x)\dd^nx_\mu
	-\partial_\mu T^{\mu\nu}(x)\dd^nx_\nu\\
	&\overset{\mu\leftrightarrow\nu}{\equiv}
	\partial_\mu T^{\nu\mu}(x)\dd^nx_\nu
	-\partial_\mu T^{\mu\nu}(x)\dd^nx_\nu\\
	&=-\partial_\mu T^{\mu\nu}(x)\dd^nx_\nu
	-\partial_\mu T^{\mu\nu}(x)\dd^nx_\nu=
	-2\partial_\mu T^{\mu\nu}(x)\dd^nx_\nu
\end{align*}
where in the anterior to last equality
we used that $T^{\mu\nu}(x)$ is antisymmetric.
Then $\partial_\mu T^{\mu\nu}(x)\dd^nx_\nu=
\dd\left(-\frac{1}{2}T^{\mu\nu}(x)
\dd^{n-1}x_{\mu\nu}\right)$ and
Equation \eqref{eq:superpotential} follows
immediately.
\chapter[A dictionary between the physical and the geometric language][Physics--geometry dictionary]{A dictionary between the physical and the geometric language}
\label{app:dictionary}
 
The following table summarises some concepts covered in the thesis as a dictionary
between the vocabulary of theoretical physics and the objects of differential geometry
that implement it in the multisymplectic framework. The first column fixes
the concept, the second states it as a physicist would, and the third gives
its geometric incarnation.
\begin{landscape}
 
\footnotesize                          
\setlength{\tabcolsep}{6pt}
\renewcommand{\arraystretch}{1.25}
 
\begin{longtable}{@{}
    >{\raggedright\arraybackslash}p{0.18\linewidth}
    >{\raggedright\arraybackslash}p{0.38\linewidth}
    >{\raggedright\arraybackslash}p{0.38\linewidth}@{}}
 
\caption[Dictionary between the physical and the geometric language]{%
    Dictionary between the physical and the geometric language.}
\label{tab:dictionary}\\
\toprule
\textbf{Concept} & \textbf{Theoretical physics POV} & \textbf{Differential geometry POV} \\
\midrule
\endfirsthead
 
\caption[]{\textit{(continued)}}\\
\toprule
\textbf{Concept} & \textbf{Theoretical physics POV} & \textbf{Differential geometry POV} \\
\midrule
\endhead
 
\midrule
\multicolumn{3}{r@{}}{\footnotesize\itshape Continued on the next page}\\
\endfoot
 
\bottomrule
\endlastfoot
 
 
\multicolumn{3}{@{}l}{}\\*
Field & $\phi^i(x)$ as a spacetime collection of functions & $\phi\in\Gamma(\pi)$ as a section of the configuration bundle\\
Field variation & $\delta\phi^i=\phi^{\prime i}(x)-\phi^i(x)$& $\phi_t:=\Phi_t^Y\circ\phi\circ\Phi_{-t}^X$\\
Lagrangian density & $\mathcal{L}(x^\mu,\phi^i,\partial_\mu\phi^i)$ & $\mathcal{L}=L(x^\mu,\phi^i,\partial_\mu\phi^i)\dd^{n+1}x$\\
Principle of extremal action & $\frac{\delta S(\phi)}{\delta\phi}\overset{!}{=}0$ & $\der{S(\phi_t)}{t}\Big|_{t=0}\overset{!}{=}0$\\
Euler--Lagrange equations & $\parder{L}{\phi^i}-\partial_\mu
\left(\parder{L}{[\partial_\mu\phi^i]}\right)=0$ & $(j^2\phi)^*\left(\parder{L}{y^i}
-\der{}{x^\mu}\parder{L}{z^i_\mu}\right)=0$\\
On-shell vs off-shell & - & Pullback through $j^1\phi$ vs identity on the whole $Z$\\
Connection & $\nabla$, $\bm{A}$... & $T_zZ=\mathcal{V}_z\pi\oplus H_z$\\
Poincaré--Cartan forms & $p\,\dd q- H\,\dd t$ & $\Theta_L$ and $\Omega_L$\\
Observables & $f\in C^\infty(T^*Q)$, $\{f,g\}=\omega(X_f,X_g)$ (symplectic mechanics) & $\alpha\in\mathcal{H}^0(Z)$ and $X_\alpha\in\mathfrak{X}^{n+1}(Z)$ with multisymplectic bracket\\
Fixed time commutator & Equal time Poisson brackets (once quantized, equal time commutation relations) & $\{\tilde{\alpha}_1,\tilde{\alpha}_2\}=\iota_{\tilde{X}_1}\iota_{\tilde{X}_2}\widetilde{\Omega_L}$\\
Hamiltonian formulation & Canonical formalism (non covariant) & Hamilton--De Donder--Weyl theory (fully covariant)\\
Canonical Quantization & Second quantization / Path Integral & Geometric quantization\\
Symmetries & $\delta_\xi\mathcal{L}=0$ (modulo divergence terms) & $\pounds_{\xi_Y^{(1)}}\Theta_L-\dd\alpha\in\mathcal{I(C)}$\\
Internal gauge d.o.f. & $U_\alpha\times G$ & PFB and LTs\\
Lie algebra bracket & $[T_a,T_b]=i\,f_{abc}T_c$ (hermitian generators) & $[e_a,e_b]=C^c_{ab}e_c$\\
Choosing a gauge & Choosing internal basis on each $\{x\}\times\mathfrak{g}$& Choosing a section $s_\alpha:U_\alpha\to P$ or LT\\
Gauge transformations & $\delta_\xi$ & $\Phi\in\Gau(P)$, $\xi\in\gau(P)$\\
Vector potential and field strength & $A_\mu=A^a_\mu(x)\,e_a$, $F_{\mu\nu}=\frac{1}{2}F^a_{\mu\nu}(x)\,e_a$
& $A_\alpha=s_\alpha^*\omega$, $F_\alpha=s_\alpha^*\Omega^\omega$\\
Yang--Mills Lagrangian & $\mathcal{L}=-\frac{1}{4g^2}F^a_{\mu\nu}F_a^{\mu\nu}$ ($g$ coupling constant) &
$\mathcal{L}=-\frac{1}{4}F^a_{\mu\nu}F_a^{\mu\nu}\eta$\\
Covariant derivative & $D_\mu=\partial_\mu\pm igA_\mu$& 
$D_A\phi^a=d\phi^a+C^a_{bc}A^b\wedge\phi^c$, $D^\omega$\\
Vector potential change of trivialization & $A_\mu(x)^\prime=\Omega(x)A_\mu(x)\Omega^\dagger(x)
+\frac{i}{g}\Omega(x)\partial_\mu\Omega^\dagger(x)$& $A_\beta=\Ad_{g^{-1}}\circ A_\alpha+g^*\theta_G$\\
Gauge fixing & $\partial_\mu A^\mu\overset{!}{=}0$, $\nabla\cdot\vec{A}\overset{!}{=}0$...
into $\mathcal{L}_{GF}$ + $\mathcal{L}_{FP}$ (Faddeev--Popov term) & Choosing a unique representative on each $\Gamma(C(P))/\Gau(P)$\\
Yang--Mills equations & $\nabla_\nu F^{b\mu\nu}(x)+C^b_{cd}A^c_\nu(x)F^{d\mu\nu}(x)=0$ &
$D_A\star F=0$\\
Bianchi identity & $\partial_{[\alpha}F^a_{\mu\nu]}(x)+C^a_{bc}A^b_{[\alpha}(x)F^c_{\mu\nu]}(x)=0$&
$D_AF=0$\\
Yang--Mills charge & Color charge (for $\mathrm{SU}(3)$) & Integral of superpotential \\
\end{longtable}
\end{landscape}
\end{document}